\def\llncs{0}
\def\fullpage{1}
\def\anonymous{0}
\def\authnote{0}
\def\notxfont{0}
\def\submission{0}
\def\reply{0}
\def\cameraready{0}
\def\noaux{0}
\def\displaylabel{0}
\def\anonymous{1}
\def\llncs{1} 
\def\displaylabel{0}
\def\submission{1}
\def\llncs{1}
\def\anonymous{0}
\def\authnote{0}
\def\displaylabel{0}
\def\authnote{0}
\renewcommand{\emph}{\textit}
\newcommand{\strike}[1]{}
\newcommand{\strike}[1]{\sout{#1}}
\definecolor{darkblue}{rgb}{0,0,0.6}
\definecolor{darkgreen}{rgb}{0,0.5,0}
\definecolor{maroon}{rgb}{0.5,0.1,0.1}
\definecolor{dpurple}{rgb}{0.2,0,0.65}
\definecolor{chocolate}{rgb}{0.8,0.4,0.1}
\DeclareMathAlphabet{\mathpzc}{OT1}{pzc}{m}{it}
\renewcommand*{\backref}[1]{}
\def\notxfont{1}
\renewcommand{\subparagraph}{\paragraph}
\newtheoremstyle{thicktheorem}%
{\topsep}
{\topsep}
{\itshape}{}%
{\bfseries}%
{.}
{ }%
{\thmname{#1}\thmnumber{ #2}%
		\thmnote{ (#3)}%
}
\newtheoremstyle{remark}
{\topsep}
{\topsep}
	{}
	{}
	{}
	{.}
	{ }
	{\textit{\thmname{#1}}\thmnumber{ #2}
			\thmnote{ (#3)}%
	}
	\theoremstyle{thicktheorem}
	\newtheorem{theorem}{Theorem}[section]
	\newtheorem{lemma}[theorem]{Lemma}
	\newtheorem{proposition}[theorem]{Proposition}
	\newtheorem{definition}[theorem]{Definition}
	\theoremstyle{remark}
	\newtheorem{remark}[theorem]{Remark}
	\crefname{theorem}{Theorem}{Theorems}
	\crefname{assumption}{Assumption}{Assumptions}
	\crefname{construction}{Construction}{Constructions}
	\crefname{corollary}{Corollary}{Corollaries}
	\crefname{conjecture}{Conjecture}{Conjectures}
	\crefname{definition}{Definition}{Definitions}
	\crefname{exmaple}{Example}{Examples}
	\crefname{experiment}{Experiment}{Experiments}
	\crefname{counterexample}{Counterexample}{Counterexamples}
	\crefname{lemma}{Lemma}{Lemmata}
	\crefname{observation}{Observation}{Observations}
	\crefname{proposition}{Proposition}{Propositions}
	\crefname{remark}{Remark}{Remarks}
	\crefname{claim}{Claim}{Claims}
	\crefname{fact}{Fact}{Facts}
	\crefname{note}{Note}{Notes}
 \crefname{appendix}{App.}{Appendices}
 \crefname{section}{Sec.}{Sections}
\renewcommand*{\backref}[1]{}
	\renewcommand*{\backref}[1]{(Cited on page~#1.)}
\newcommand*{\keys}[1]{\mathsf{#1}}
\newcommand*{\algo}[1]{\ensuremath{\mathsf{#1}}}
\newcommand*{\qalgo}[1]{\ensuremath{\mathpzc{#1}}}
\newcommand*{\qstate}[1]{\mathpzc{#1}}
\newcounter{expitem}
\newcommand{\chosen}{\leftarrow}
\newcommand{\lrun}{\leftarrow}
\newcommand{\la}{\leftarrow}
\newcommand{\ra}{\rightarrow}
\renewcommand{\gets}{\leftarrow}
\newcommand{\seteq}{\coloneqq}
\newcommand{\concat}{\|}
\newcommand{\setbk}[1]{\{#1\}}
\newcommand{\cB}{\mathcal{B}}
\newcommand{\cC}{\mathcal{C}}
\newcommand{\cF}{\mathcal{F}}
\newcommand{\cK}{\mathcal{K}}
\newcommand{\cM}{\mathcal{M}}
\newcommand{\cO}{\mathcal{O}}
\newcommand{\cQ}{\mathcal{Q}}
\newcommand{\cU}{\mathcal{U}}
\newcommand{\qA}{\qalgo{A}}
\newcommand{\qB}{\qalgo{B}}
\newcommand{\qChal}{\qalgo{C}}
\newcommand{\qD}{\qalgo{D}}
\newcommand{\qL}{\qalgo{L}}
\newcommand{\wtl}[1]{\widetilde{#1}}
\def\makeuppercase#1{
\expandafter\newcommand\csname sf#1\endcsname{\mathsf{#1}}
\expandafter\newcommand\csname frak#1\endcsname{\mathfrak{#1}}
\expandafter\newcommand\csname bb#1\endcsname{\mathbb{#1}}
\expandafter\newcommand\csname bf#1\endcsname{\textbf{#1}}
}
\def\makelowercase#1{
\expandafter\newcommand\csname frak#1\endcsname{\mathfrak{#1}}
\expandafter\newcommand\csname bf#1\endcsname{\textbf{#1}}
}
\newcounter{char}
   \edef\letter{\alph{char}}
   \edef\Letter{\Alph{char}}
\def\makeuppercase#1{
\expandafter\newcommand\csname tl#1\endcsname{\widetilde{#1}}
}
\def\makelowercase#1{
\expandafter\newcommand\csname tl#1\endcsname{\widetilde{#1}}
}
\newcommand{\N}{\mathbb{N}}
\newcommand{\R}{\mathbb{R}}
\newcommand{\bit}{\{0,1\}}
\newcommand{\Ms}{\mathcal{M}}
\newcommand{\Cs}{\mathcal{C}}
\newcommand{\Ks}{\mathcal{K}}
\newcommand{\secp}{\lambda}
\newcommand{\sig}{\sigma}
\newcommand{\cert}{\keys{cert}}
\newcommand{\aux}{\mathsf{aux}}
\newcommand{\param}{\mathsf{param}}
\newcommand{\state}{\mathsf{st}}
\newcommand{\advB}{\mathcal{B}}
\newcommand{\ind}{\mathsf{ind}}
\newcommand{\cpa}{\mathsf{cpa}}
\newcommand{\sel}{\mathsf{sel}}
\newcommand{\Advt}[2]{\mathsf{Adv}_{#1}^{#2}}
\newcommand{\adva}[2]{\mathsf{Adv}_{#1}^{\mathsf{#2}}}
\newcommand{\advb}[3]{\mathsf{Adv}_{#1}^{\mathsf{#2} \mbox{-} \mathsf{#3}}}
\newcommand{\advc}[4]{\mathsf{Adv}_{#1}^{\mathsf{#2} \mbox{-} \mathsf{#3} \mbox{-} \mathsf{#4}}}
\newcommand{\advd}[5]{\mathsf{Adv}_{#1}^{\mathsf{#2} \mbox{-} \mathsf{#3} \mbox{-} \mathsf{#4} \mbox{-} \mathsf{#5}}}
\newcommand{\expt}[2]{\mathsf{Expt}_{#1}^{\mathsf{#2}}}
\newcommand{\tildeexpt}[2]{\widetilde{\mathsf{Expt}}_{#1}^{\mathsf{#2}}}
\newcommand{\expb}[3]{\mathsf{Exp}_{#1}^{ \mathsf{#2} \mbox{-} \mathsf{#3}}}
\newcommand{\expc}[4]{\mathsf{Exp}_{#1}^{ \mathsf{#2} \mbox{-} \mathsf{#3} \mbox{-} \mathsf{#4}}}
\newcommand{\expd}[5]{\mathsf{Exp}_{#1}^{\mathsf{#2} \mbox{-} \mathsf{#3} \mbox{-} \mathsf{#4} \mbox{-} \mathsf{#5}}}
\newcommand{\hybij}[2]{\mathsf{Hyb}_{#1}^{#2}}
\newcommand*{\pk}{\keys{pk}}
\newcommand*{\sk}{\keys{sk}}
\newcommand*{\dk}{\keys{dk}}
\newcommand*{\ek}{\keys{ek}}
\newcommand*{\vk}{\keys{vk}}
\newcommand*{\fsk}{\keys{fsk}}
\newcommand*{\msk}{\keys{msk}}
\newcommand*{\SK}{\keys{SK}}
\newcommand*{\EK}{\keys{EK}}
\newcommand*{\MPK}{\keys{MPK}}
\newcommand*{\MSK}{\keys{MSK}}
\newcommand*{\pp}{\keys{pp}}
\newcommand*{\ct}{\keys{ct}}
\newcommand*{\tldk}{\widetilde{\dk}}
\newcommand*{\tlct}{\widetilde{\ct}}
\newcommand{\CT}{\keys{CT}}
\newcommand*{\msg}{m}
\newcommand{\qct}{\qstate{ct}}
\newcommand{\qcert}{\qstate{cert}}
\newcommand{\qaux}{\qalgo{aux}}
\newenvironment{boxfig}[2]{\begin{figure}[#1]\fbox{\begin{minipage}{0.97\linewidth}
                        \vspace{0.2em}
                        \makebox[0.025\linewidth]{}
                        \begin{minipage}{0.95\linewidth}
            {{
                        #2 }}
                        \end{minipage}
                        \vspace{0.2em}
                        \end{minipage}}
                        }
                        {\end{figure}}
\newcommand{\pprotocol}[4]{
\begin{boxfig}{h!}{\footnotesize 
\centering{\textbf{#1}}
    #4
\vspace{0.2em} } \caption{\label{#3} #2}
\end{boxfig}
}
\newcommand{\protocol}[4]{
\pprotocol{#1}{#2}{#3}{#4} }
\newcommand{\prf}{\algo{F}}
\newcommand{\Setup}{\algo{Setup}}
\newcommand{\setup}{\algo{Setup}}
\newcommand{\gen}{\algo{Gen}}
\newcommand{\KeyGen}{\algo{KeyGen}}
\newcommand{\keygen}{\algo{KeyGen}}
\newcommand{\Enc}{\algo{Enc}}
\newcommand{\Dec}{\algo{Dec}}
\newcommand{\enc}{\algo{Enc}}
\newcommand{\dec}{\algo{Dec}}
\newcommand{\Vrfy}{\algo{Vrfy}}
\newcommand{\vrfy}{\algo{Vrfy}}
\newcommand{\Sampler}{\algo{Sampler}}
\newcommand{\qEnc}{\qalgo{Enc}}
\newcommand{\qDec}{\qalgo{Dec}}
\newcommand{\qDelete}{\qalgo{Del}}
\newcommand{\qVrfy}{\qalgo{Vrfy}}
\newcommand{\qFake}{\qalgo{Fake}}
\newcommand{\qGarble}{\qalgo{Garble}}
\newcommand\PKE{\algo{PKE}}
\newcommand\ABE{\algo{ABE}}
\newcommand\FE{\algo{FE}}
\newcommand\PKFE{\algo{PKFE}}
\newcommand\SKFE{\algo{SKFE}}
\newcommand{\SKE}{\algo{SKE}}
\newcommand{\ske}{\algo{ske}}
\newcommand{\GC}{\algo{GC}}
\newcommand{\gc}{\mathsf{gc}}
\newcommand{\Garble}{\algo{Grbl}}
\newcommand{\Sim}{\algo{Sim}}
\newcommand{\iO}{i\cO}
\newcommand{\eO}{e\cO}
\newcommand{\PRG}{\algo{PRG}}
\newcommand{\PRF}{\algo{PRF}}
\newcommand{\Eval}{\algo{Eval}}
\newcommand{\negl}{{\mathsf{negl}}}
\newcommand{\zo}[1]{\{0,1\}^{#1}}
\newcommand{\bin}{\{0,1\}}
\newcommand{\xor}{\oplus}
\newcommand{\class}[1]{\mathsf{#1}}
\newcommand{\IP}{\class{IP}}
\newcommand{\Ppoly}{\class{P/poly}}
\newcommand{\NCone}{\class{NC}^1}
\newcommand{\Samp}{\mathrm{Smp}}
\newcommand{\calA}{\mathcal{A}}
\newcommand{\calP}{\mathcal{P}}
\newcommand{\pub}{\mathsf{pub}}
\newcommand{\hyb}{\mathsf{Hyb}}
\newcommand{\msf}[1]{\mathsf{#1}}
\newcommand{\tsf}[1]{\textsf{#1}}
\newcommand{\mcl}[1]{\mathcal{#1}}
\newcommand{\mbb}[1]{\mathbb{#1}}
\newcommand{\mrm}[1]{\mathrm{#1}}
\newcommand{\wt}[1]{\widetilde{#1}}
\newcommand{\wh}[1]{\widehat{#1}}
\newcommand{\vast}{\bBigg@{3.5}}
\newcommand{\Vast}{\bBigg@{5}}
\newcommand{\cnc}[2]{\mathbf{CC}[#1,#2]}
\newcommand{\CCObf}{\mathsf{CC}.\mathsf{Obf}}
\newcommand{\tlP}{\widetilde{P}}
\newcommand{\qCCObf}{\qalgo{CCObf}}
\newcommand{\tlqP}{\widetilde{\qalgo{P}}}
\newcommand{\CCO}{\algo{CCO}}
\newcommand{\CECCO}{\algo{CECCO}}
\newcommand{\tlfDec}{\widetilde{\fDec}}
\newcommand{\tlI}{\widetilde{I}}
\newcommand{\sfCC}{\mathsf{CC}}
\newcommand{\qObf}{\qalgo{Obf}}
\newcommand{\tlaDec}{\widetilde{\qalgo{aDec}}}
\newcommand{\oneadaFEps}{\algo{1adaFE}}
\newcommand{\oneselFEps}{\algo{1selFE}}
\newcommand{\twoFE}{\algo{2FE}}
\newcommand{\twofe}{\algo{2fe}}
\newcommand{\Modify}{\mathsf{Recover}}
\newcommand{\nad}{\mathsf{nad}}
\newcommand{\NAD}{\mathsf{NAD}}
\newcommand{\Delete}{\algo{Del}}
\newcommand{\skcd}{\mathsf{skcd}}
\newcommand{\sfCD}{\mathsf{CD}}
\newcommand{\sfcd}{\mathsf{cd}}
\newcommand{\qGateGrbl}{\qalgo{GateGrbl}}
\newcommand{\qGateEval}{\qalgo{GateEval}}
\newcommand{\qGateDel}{\qalgo{GateDel}}
\newcommand{\qggate}{\widetilde{\qstate{g}}}
\newcommand{\one}{\mathsf{one}}
\newcommand{\ONE}{\mathsf{ONE}}
\newcommand{\qugate}{\widetilde{\qstate{U}}}
\newcommand{\sm}{\msf{Sim}}
\newcommand*{\qenc}{\qalgo{Enc}}
\newcommand*{\qdec}{\qalgo{Dec}}
\newcommand*{\qdel}{\qalgo{Del}}
\newcommand*{\nce}{\mathsf{nce}}
\newcommand*{\lnce}{\mathsf{nce}}
\newcommand*{\fake}{\mathsf{Fake}}
\newcommand{\Fake}{\algo{Fake}}
\newcommand\NCE{\algo{NCE}}
\newcommand*{\fe}{\mathsf{fe}}
\newcommand{\CED}{\algo{CED}}
\newcommand{\sfpre}{\mathsf{pre}}
\newcommand{\FakeSetup}{\fake\setup}
\newcommand{\FakeCT}{\fake\msf{CT}}
\newcommand{\FakeSK}{\fake\msf{SK}}
\newcommand{\Reveal}{\msf{Reveal}}
\newcommand*{\amsk}{\msf{amsk}}
\newcommand*{\apk}{\msf{apk}}
\newcommand*{\act}{\msf{act}}
\newcommand*{\FHE}{\msf{FHE}}
\newcommand*{\fhe}{\algo{fhe}}
\newcommand*{\fpk}{\msf{fpk}}
\newcommand*{\fct}{\msf{fct}}
\newcommand*{\fDec}{\msf{fDec}}
\newcommand*{\qEval}{\qalgo{Eval}}
\newcommand*{\qobf}{\qalgo{Obf}}
\newcommand*{\lock}{\msf{lock}}
\newcommand*{\qsim}{\qalgo{Sim}}
\newcommand*{\EV}{\msf{EV}\text{-}}
\newcommand*{\C}{\msf{C}\text{-}}
\newcommand*{\TD}{\msf{TD}}
\let\oldvec\vec
\let\vec\oldvec
\renewcommand*\l@author[2]{}
\renewcommand*\l@title[2]{}
\theoremstyle{remark}
\definecolor{darkblue}{rgb}{0,0,0.6}
\definecolor{darkgreen}{rgb}{0,0.5,0}
\definecolor{maroon}{rgb}{0.5,0.1,0.1}
\definecolor{dpurple}{rgb}{0.2,0,0.65}
\definecolor{darkviolet}{RGB}{130,95,141}
\definecolor{darkkhaki}{RGB}{189,183,107}
\newcommand{\ryo}[1]{\textcolor{brown}{[{\footnotesize {\bf RN:} { {#1}}}]}}
\newcommand{\rnnote}[1]{\textcolor{brown}{[{\footnotesize {\bf RN:} { {#1}}}]}}
\newcommand{\tpnote}[1]{\textcolor{red}{[{\footnotesize {\bf TP:} { {#1}}}]}}
\newcommand{\fuyuki}[1]{\textcolor{blue}{[{\footnotesize {\bf FK:} { {#1}}}]}}
\newcommand{\takashi}[1]{\textcolor{darkgreen}{[{\footnotesize {\bf TY:} { {#1}}}]}}
\newcommand{\tomoyuki}[1]{\textcolor{darkkhaki}{[{\footnotesize {\bf MT:} { {#1}}}]}}
\newcommand{\taiga}[1]{\textcolor{dpurple}{[{\footnotesize {\bf TH:} { {#1}}}]}}
\newcommand{\ryo}[1]{}
\newcommand{\rnnote}[1]{}
\newcommand{\tpnote}[1]{}
\newcommand{\fuyuki}[1]{}
\newcommand{\takashi}[1]{}
\newcommand{\tomoyuki}[1]{}
\newcommand{\taiga}[1]{}
\title{
\textbf{Certified Everlasting Secure\\Collusion-Resistant Functional Encryption, and More}
\ifnum\anonymous=1
\else
\fi
\ifnum\noaux=1
\else
\ifnum\submission=1
\thanks{{\emph{Technical details are omitted due to page limitations in this version. Please, see the full version~\cite{myEprint:HKMNPY23} for the technical details.}}}
\else
\fi
\fi
}
\begin{document}

\ifnum\anonymous=1
\author{\empty}
\ifnum\llncs=1
\institute{\empty}
\else
\fi
\else
%
%
\ifnum\llncs=1
\author{
Taiga Hiroka\inst{1} \and Fuyuki Kitagawa\inst{2,3} \and Tomoyuki Morimae\inst{1} \and Ryo Nishimaki\inst{2,3} \and Tapas Pal\inst{4}\footnote{The research was conducted while the author was a postdoc at NTT Social Informatics Laboratories} \and Takashi Yamakawa\inst{1,2,3}
}
\institute{
Yukawa Institute for Theoretical Physics, Kyoto University, Japan \and	NTT Social Informatics Laboratories, Tokyo, Japan
\and    NTT Research Center for Theoretical Quantum Information
\and    Karlsruhe Institute of Technology, KASTEL Security Research Labs, Germany
}
\else
%
%
\author[$\star$]{Taiga Hiroka}
\author[$\dagger$$\diamondsuit$]{Fuyuki Kitagawa}
\author[$\star$]{Tomoyuki Morimae}
\author[$\dagger$$\diamondsuit$]{\\Ryo Nishimaki}
\author[$\flat$]{Tapas Pal}
\author[$\dagger$$\diamondsuit$$\star$]{Takashi Yamakawa}
\affil[$\star$]{{\small Yukawa Institute for Theoretical Physics, Kyoto University, Japan}\authorcr{\small \{taiga.hiroka,tomoyuki.morimae\}@yukawa.kyoto-u.ac.jp}}
\affil[$\dagger$]{{\small NTT Social Informatics Laboratories, Tokyo, Japan}\authorcr{\small \{fuyuki.kitagawa,ryo.nishimaki,takashi.yamakawa\}@ntt.com}}
\affil[$\diamondsuit$]{{\small NTT Research Center for Theoretical Quantum Information}}
\affil[$\flat$\footnote{The research was conducted while the author was a postdoc at NTT Social Informatics Laboratories}]{{\small Karlsruhe Institute of Technology, KASTEL Security Research Labs, Germany}\authorcr{\small tapas.real@gmail.com}}
\renewcommand\Authands{, }
\fi 
\fi

\ifnum\llncs=1
\date{}
\else
\date{\today}
\fi

\maketitle

\begin{abstract}
We study certified everlasting secure functional encryption (FE) and many other cryptographic primitives in this work.
Certified everlasting security roughly means the following.
A receiver possessing a quantum cryptographic object (such as ciphertext) can issue a certificate showing that the receiver has deleted the cryptographic object and information included in the object (such as plaintext) was lost.
If the certificate is valid, the security is guaranteed even if the receiver becomes computationally unbounded after the deletion.
Many cryptographic primitives are known to be impossible (or unlikely) to have information-theoretical security even in the quantum world.
Hence, certified everlasting security is a nice compromise (intrinsic to quantum).

In this work, we define certified everlasting secure versions of FE, compute-and-compare obfuscation, predicate encryption (PE), secret-key encryption (SKE), public-key encryption (PKE), receiver non-committing encryption (RNCE), and garbled circuits.
We also present the following constructions:
\begin{itemize}
     \item Adaptively certified everlasting secure collusion-resistant public-key FE for all polynomial-size circuits from indistinguishability obfuscation and one-way functions.
     \item Adaptively certified everlasting secure bounded collusion-resistant public-key FE for $\mathsf{NC}^1$ circuits from standard PKE.
     \item Certified everlasting secure compute-and-compare obfuscation from standard fully homomorphic encryption and standard compute-and-compare obfuscation.
     \item Adaptively (resp., selectively) certified everlasting secure PE from standard adaptively (resp., selectively) secure attribute-based encryption and certified everlasting secure compute-and-compare obfuscation.
     \item Certified everlasting secure SKE and PKE from standard SKE and PKE, respectively.
     \item Cetified everlasting secure RNCE from standard PKE.
     \item Cetified everlasting secure garbled circuits from standard SKE.
\end{itemize}
\end{abstract}

\ifnum\llncs=1
\else
\newpage
\setcounter{tocdepth}{2}
\tableofcontents

\newpage
\fi


\section{Introduction}\label{sec:intro}

\subsection{Background}\label{sec:background}

Computational security in cryptography relies on assumptions that some problems are hard to solve.
However, such assumptions could be broken in the future
when revolutionary novel algorithms are discovered, or computing devices are drastically improved.
One solution to the problem of computational security is to construct information-theoretically-secure protocols.
However, many cryptographic primitives are known to be impossible (or unlikely) to satisfy information-theoretical security even in the quantum world~\cite{PRL:LC97,PRL:Mayers97,ARXIV:MW18}.

Good compromises (intrinsic to quantum!) have been studied recently~\allowbreak\cite{JACM:Unruh15,TCC:BroIsl20,ARXIV:KunTan20,AC:HMNY21,C:HMNY22,myITCS:Poremba23}. 
In particular, certified everlasting security, which was introduced in~\cite{C:HMNY22} based on~\cite{JACM:Unruh15,TCC:BroIsl20}, achieves the following security: 
After receiving quantum-encrypted data, a receiver can issue a certificate to prove that (s)he deleted its quantum-encrypted data.
If the certificate is valid, its security is guaranteed even if the receiver becomes computationally unbounded later. A (private or public) verification key for certificates is also generated along with quantum-encrypted data.
This security notion is weaker than information-theoretical security since a malicious receiver could refuse to issue a valid certificate.
However, it is still a useful security notion because, for example, a sender can penalize receivers who do not issue valid certificates.
In addition, certified everlasting security is an intrinsically quantum property because it implies information-theoretical security in the classical world.\footnote{This is because a malicious receiver can copy the encrypted data freely. Hence, the encrypted data must be secure against an unbounded malicious receiver at the point when the receiver obtains the encrypted data.
The same discussion does not go through in the quantum world because even a malicious receiver cannot copy the quantum-encrypted data due to the quantum no-cloning theorem.}

Certified everlasting security can bypass the impossibility of information-theoretical security.
In fact, several cryptographic primitives have been shown to have certified everlasting security, such as commitments and zero-knowledge~\cite{C:HMNY22}. 
An important open problem in this direction is
\begin{center}
\textit{
Which cryptographic primitives can have certified everlasting security?
}
\end{center}

Functional encryption (FE) is one of the most advanced cryptographic primitives and achieves considerable flexibility in controlling encrypted data~\cite{TCC:BonSahWat11}.
In FE, an owner of a master secret key $\MSK$ can generate a functional decryption key $\sk_f$ that hardwires a function $f$.
When a ciphertext $\ct_m$ of a message $m$ is decrypted by $\sk_f$, we can obtain the value $f(m)$, and no information beyond $f(m)$ is leaked.
Information-theoretically secure FE is impossible, and all known constructions are computationally secure~\cite{C:GorVaiWee12,SIAMCOMP:GGHRSW16,EC:AgrPel20,TCC:AnaVai19,STOC:JaiLinSah21,EC:JaiLinSah22}.
A motivating application of FE is analyzing sensitive data and computing new data from personal data without sacrificing data privacy.
In this example, users must store their encrypted data on a remote server since users delegate the computation. At some point, users might request the server to ``forget'' their data (even if they are encrypted). European Union~\cite{GDPR16} and California~\cite{CCPA18} adopted data deletion clauses in legal regulations for such users. Encryption with certified deletion could be useful for implementing the right to be forgotten. However, suppose that FE does not have \emph{certified everlasting security}. In that case, the rapid growth of computational power potentially breaks the privacy of sensitive personal data (such as DNA) in the future.
This risk (``recalling'' in the future) is great because descendants inherit DNA information.
Certified everlasting security is desirable for such practical applications of FE.

Hence, we have the following open problem:
\begin{center}
\textit{
Is it possible to construct certified everlasting secure FE?
}
\end{center}
We note that certified everlasting secure FE is particularly useful compared to certified everlasting secure public key encryption (PKE) (or more generally ``all-or-nothing encryption''\footnote{Such as identity-based encryption (IBE), attribute-based encryption (ABE), fully homomorphic encryption (FHE), or witness encryption (WE).}~\cite{C:GarMahMoh17}) because it ensures security even against an honest receiver who holds a decryption key. That is, we can ensure that a receiver who holds a decryption key $\sk_f$ for a function $f$ cannot learn more than $f(m)$ even if the receiver can run an unbounded-time computation after issuing a valid certificate. 
In contrast, certified everlasting PKE does not ensure any security against an honest receiver since the receiver can simply keep a copy of a plaintext after honestly decrypting a ciphertext.

Another useful advanced cryptographic primitive is obfuscation for compute-and-compare programs~\cite{FOCS:WicZir17} (a.k.a. lockable obfuscation~\cite{FOCS:GoyKopWat17}).
A compute-and-compare obfuscation scheme can obfuscate a compute-and-compare circuit parameterized by a polynomial-time computable circuit $P$ along with a lock value $\lock$ and a message $\msg$. The circuit takes an input $x$ and outputs $\msg$ if $P(x)=\lock$ and $\bot$ otherwise. Point functions, conjunction with wild cards, plaintext checkers, and affine testers are examples of such circuits~\cite{FOCS:GoyKopWat17,FOCS:WicZir17}. Hence, certified everlasting secure compute-and-compare obfuscation achieves certified deletion for obfuscated programs in the restricted class of functionalities. In addition, compute-and-compare obfuscation has many cryptographic applications~\cite{FOCS:GoyKopWat17,FOCS:WicZir17,TCC:CVWWW18,EC:FFMV23,C:AgrYadYam22,EC:AKYY23}. We can generically convert all-or-nothing encryption into anonymous one via compute-and-compare obfuscation. In particular, we can obtain predicate encryption (PE)~\cite{EC:KatSahWat08,C:GorVaiWee15} from ABE and compute-and-compare obfuscation. PE is an attribute-hiding variant of ABE and an intermediate primitive between ABE and FE.
If we can achieve certified everlasting secure compute-and-compare obfuscation, it is possible to achieve certified everlasting secure PE (and anonymous IBE and PKE).

Hence, we have the following second open problem:
\begin{center}
\textit{
Is it possible to construct certified everlasting secure compute-and-compare obfuscation?
}
\end{center}

\subsection{Our Results}\label{sec:result}
We solve the above questions in this work.
Our contributions are as follows.
\ifnum\llncs=0
    \begin{enumerate}
     \item We formally define certified everlasting versions of many cryptographic primitives: FE (\cref{sec:def_FE_CED}), compute-and-compare obfuscation (\cref{sec:def_CCObf_CED}), PE (\cref{sec:def_PE_CED}), secret-key encryption (SKE) (\cref{sec:def_cd_ske_pke}), PKE (\cref{sec:def_cd_ske_pke}), receiver non-committing encryption (RNCE) (\cref{sec:def_rnce}), and a garbling scheme (\cref{sec:def_garbled}).

     \item We construct adaptively certified everlasting secure collusion-resistant public-key FE for $\Ppoly$ from indistinguishability obfuscation (IO) and one-way functions (OWFs) (\cref{sec:CRFE_CED_const}). We also construct adaptively certified everlasting secure bounded collusion-resistant public-key FE for $\NCone$ from standard PKE (\cref{sec:const_multi_fe}).

     \item We construct certified everlasting secure compute-and-compare obfuscation from standard FHE and standard compute-and-compare obfuscation (\cref{sec:CCO_CED_const}). Both building blocks can be instantiated with the learning with errors (LWE) assumption. We also construct adaptively (resp., selectively) certified everlasting secure PE from standard adaptively (resp., selectively) secure ABE and certified everlasting secure compute-and-compare obfuscation (\cref{sec:PE_from_LObf_ABE}).

     \item To achieve adaptively certified everlasting secure bounded collusion-resistant FE, we construct many certified everlasting secure cryptographic primitives:
     \begin{itemize}
      \item Two constructions of certified everlasting secure SKE from standard SKE (\cref{sec:const_ske_rom,sec:const_ske_wo_rom}).
     An advantage of the first construction is that the certificate is classical, but a disadvantage is that the security proof relies on the quantum random oracle model (QROM)~\cite{AC:BDFLSZ11}.
     The security of the second construction holds without relying on the QROM, but the certificate is quantum.
     \item Two constructions of certified everlasting secure PKE with the same properties of the SKE constructions above from standard PKE (\cref{sec:const_pke_rom,sec:const_pke_wo_rom}).
     \item A construction of certified everlasting secure RNCE from certified everlasting PKE (\cref{sec:const_rnce_classic}).
     \item A construction of certified everlasting secure garbling scheme for $\Ppoly$ from certified everlasting SKE (\cref{sec:const_garbling}).
      \end{itemize}
     \end{enumerate}
\else 
 \begin{enumerate}
     \item We formally define certified everlasting versions of many cryptographic primitives: FE, compute-and-compare obfuscation, PE, secret-key encryption (SKE), PKE, receiver non-committing encryption (RNCE), and a garbling scheme.

     \item We construct adaptively certified everlasting secure collusion-resistant public-key FE for $\Ppoly$ from indistinguishability obfuscation (IO) and one-way functions (OWFs). We also construct adaptively certified everlasting secure bounded collusion-resistant public-key FE for $\NCone$ from standard PKE.

     \item We construct certified everlasting secure compute-and-compare obfuscation from standard FHE and standard compute-and-compare obfuscation. Both building blocks can be instantiated with the learning with errors (LWE) assumption. We also construct adaptively (resp., selectively) certified everlasting secure PE from standard adaptively (resp., selectively) secure ABE and certified everlasting secure compute-and-compare obfuscation.

     \item To achieve adaptively certified everlasting secure bounded collusion-resistant FE, we construct many certified everlasting secure cryptographic primitives:
     \begin{itemize}
      \item Two constructions of certified everlasting secure SKE from standard SKE.
     An advantage of the first construction is that the certificate is classical, but a disadvantage is that the security proof relies on the quantum random oracle model (QROM)~\cite{AC:BDFLSZ11}.
     The security of the second construction holds without relying on the QROM, but the certificate is quantum.
     \item Two constructions of certified everlasting secure PKE with the same properties of the SKE constructions above from standard PKE.
     \item A construction of certified everlasting secure RNCE from certified everlasting PKE.
     \item A construction of certified everlasting secure garbling scheme for $\Ppoly$ from certified everlasting SKE.
      \end{itemize}
     \end{enumerate}
\fi
All our constructions are privately verifiable, so we must keep verification keys (for deletion certificate) secret.
It is open to achieving certified everlasting secure bounded collusion-resistant FE for $\Ppoly$ from standard PKE.

We introduce fascinating techniques to achieve certified everlasting secure collusion-resistant FE and certified everlasting secure compute-and-compare obfuscation. We developed an authentication technique for BB84 state to satisfy both the functionality of FE and certified everlasting security. (See~\cref{sec:tech_overview_CRFE} for the detail.)
This authentication technique for BB84 states is of independent interest and we believe that it has further applications.\footnote{Indeed, an application was found by Kitagawa, Nishimaki, and Yamakawa \cite{myTCC:KitNisYam23}. See \Cref{sec:subsequent}.}
We also developed a deferred evaluation technique using dummy lock values to satisfy both the functionality of compute-and-compare obfuscation and certified everlasting security. (See~\cref{sec:tech_overview_cc_obfuscation} for the detail.)

\subsection{Concurrent and Independent Work}\label{sec:concurrent_independent_work}
\label{sec:concurrent}

\paragraph{Certified everlasting secure SKE and PKE.} Recently, Bartusek and Khurana concurrently and independently obtained similar results~\cite{myC:BarKhu23}.
They introduce a generic compiler that can convert several cryptographic primitives to certified everlasting secure ones,
such as PKE, ABE, FHE, WE, and timed-release encryption.
Their constructions via the generic compiler have the advantage that the certificates are classical \emph{and} no QROM is required.
Our constructions of certified everlasting SKE and PKE cannot achieve both: if the certificates are classical, QROM is required, and if QROM is not used, the certificates have to be quantum.
We note that their certified everlasting SKE and PKE can be used as building blocks of our RNCE, garbling, and bounded collusion-resistant FE constructions instead of our SKE and PKE schemes.

While their work focuses on all-or-nothing encryption, our work presents certified everlasting secure garbling and FE, which are not given in their work.
It is unclear how to apply their generic compiler to garbling and FE.

One might think that certified everlasting garbling can be constructed from certified everlasting SKE, which is constructed from their generic compiler.
However, it is non-trivial whether certified everlasting garbling can be immediately constructed from certified everlasting SKE because garbling needs double-encryption.
(For details, see \cref{sec:tech_overview_bounded_FE}.)

Moreover, a direct application of their generic compiler to FE does not work because of the following reason.
If we directly apply their generic compiler to FE, we have a ciphertext consisting of classical and quantum parts. 
The classical part is the original FE ciphertext whose plaintext is $m\oplus r$ with random $r$, and the quantum part is random BB84 states whose computational basis states encode $r$.
The decryption key of the function $f$ consists of functional decryption key $\sk_f$ and the basis of the BB84 states.
However, in this construction, a receiver with the ciphertext and the decryption key cannot obtain $f(m)$,
because what the receiver obtains is only $f(m\oplus r)$ and $r$, which cannot recover $f(m)$.

\paragraph{Bartusek-Khurana's results and our collusion-resistant FE, PE, and compute-and-compare obfuscation.}
While our certified everlasting secure bounded collusion-resistant FE (and its building block SKE, PKE, garbling, and RNCE) schemes are concurrent and independent work, our certified everlasting secure collusion-resistant FE, PE, and compute-and-compare obfuscation schemes use the certified everlasting lemma by Bartusek and Khurana\ifnum\llncs=1.\else (\cref{lem:ce}).\fi\ifnum\anonymous=1\footnote{This is because this paper is a major update version of a paper~\cite{ANO:HMNY22} that appeared right after the first version of~\cite{myC:BarKhu23}. The new additional results are collusion-resistant FE, PE, and compute-and-compare obfuscation. The content in the anonymous authors' paper~\cite{ANO:HMNY22} is a concurrent and independent work of the first version of Bartusek and Khurana's work~\cite{myC:BarKhu23}.}\else\footnote{This is because this paper is a major update version of the paper by Hiroka et al.~\cite{EPRINT:HMNY22} with new additional results (i.e., collusion-resistant FE, PE, and compute-and-compare obfuscation). The content in the work by Hiroka et al.~\cite{EPRINT:HMNY22} is a concurrent and independent work of the work by Bartusek and Khurana~\cite{myC:BarKhu23}.}\fi\
Those three schemes were added after the paper by Bartusek and Khurana was made public.
\emph{Their work does not consider FE, PE, and compute-and-compare obfuscation.}

If we directly apply their generic compiler to PE, we cannot hide the attribute part though we can hide the plaintext part.
Even if we apply the same technique to the attribute part, say, we also set the attribute to $a \oplus r^\prime$ with random $r^\prime$, and put random BB84 states whose computational basis states encode $r^\prime$ in a ciphertext, the idea does not work.
A receiver cannot obtain the plaintext even if $P(a)=1$ because the predicate computes $P(a\xor r^\prime)$ instead of $P(a)$, and the correctness does not hold.

It is non-trivial whether we can obtain certified everlasting compute-and-compare obfuscation by their framework for encryption with certified deletion because we need to hide information about circuits while preserving the functionality. Savvy readers might think it may be possible by applying the framework to the compute-and-compare obfuscation from circular \emph{insecure} FHE by Kluczniak~\cite{myPKC:Kluczniak22}. However, we need compute-and-compare obfuscation to instantiate circular insecure FHE. This is a circular argument.

\paragraph{Certified everlasting secure FE.} Bartusek, Garg, Goyal, Khurana, Malavolta, Raizes, and Roberts~\cite{EPRINT:BGGKMRR23} concurrently and independently obtained adaptively certified everlasting secure collusion-resistant FE for $\Ppoly$ from IO and OWFs. They use subspace coset states~\cite{C:CLLZ21}, while we use BB84 states (with one-time signatures). Hence, the techniques are different. Their scheme is publicly verifiable thanks to the subspace coset state approach. Another technical difference is that they directly rely on adaptively secure multi-input FE (MIFE)~\cite{EC:GGGJKL14,AC:GoyJaiONe16} while we do not. Hence, their scheme incurs an additional sub-exponential loss (from IO to adaptively secure MIFE~\cite{AC:GoyJaiONe16}). Our scheme uses selectively secure MIFE and does not incur sub-exponential loss. We note that selectively secure MIFE and IO are equivalent without any security loss~\cite{EC:GGGJKL14}. They also present several certified everlasting secure primitives that are not considered in our work. However, the results on RNCE, garbled circuits, compute-and-compare obfuscation, and PE are unique to our work.

\subsection{Subsequent Work}\label{sec:subsequent}
A subsequent work by Kitagawa, Nishimaki, and Yamakawa \cite{myTCC:KitNisYam23} shows another application of our authentication technique for BB84 states which we develop for the construction of certified everlasting secure collusion-resistant FE.  
Specifically, they use the technique to construct a generic compiler to add the publicly verifiable deletion property for various kinds of cryptographic primitives solely based on OWFs.   

\ifnum\cameraready=0


\subsection{Technical Overview: Collusion-Resistant FE}\label{sec:tech_overview_CRFE}
\paragraph{Certified everlasting lemma of Bartusek and Khurana.}
Our construction is based on a lemma which we call \emph{certified everlasting lemma} proven by 
Bartusek and Khurana \cite{myC:BarKhu23}, which is described as follows. 

Suppose that $\{\mathcal{Z}(m)\}_{m\in\bit^{\secp+1}}$ is a family of distributions over classical strings such that $\mathcal{Z}(m)$ is computatioally indistinguishable from $\mathcal{Z}(0^{\secp+1})$ for any $m\in \bit^{\secp+1}$. 
Intuitively, $\mathcal{Z}(m)$ can be regarded as an ``encryption'' of $m$. 
For $b\in \bit$ and a QPT adversary, let $\widetilde{\mathcal{Z}}(b)$ be the following experiment: 
\begin{itemize}
		\item The experiment samples $z, \theta \leftarrow \{0, 1\}^{\secp}$. 
		  \item The adversary takes  $\ket{z}_{\theta}$, and $\mathcal{Z}(\theta,b \oplus \bigoplus_{j: \theta_j = 0} z_j)$ as input where $z_j$ is the $j$-th bit of $z$ 
		  and outputs a classical string $z'\in \bit^\secp$ and a quantum state $\rho$.  
		\item The experiment outputs $\rho$ if $z'_j=z_j$ for all $j$ such that $\theta_j=1$ and otherwise outputs a special symbol $\bot$.
	\end{itemize}	
Then for any QPT adversary, the trace distance between $\widetilde{\mathcal{Z}}(0)$ and $\widetilde{\mathcal{Z}}(1)$ is $\negl(\secp)$.\footnote{In fact, we need an ``interactive version'' of the lemma. 
We believe that such an interactive version is implicitly proven and used in \cite{myC:BarKhu23}. 
\ifnum\llncs=1 See the full version for the formal statement of the lemma and a comparison with \cite{myC:BarKhu23}.
\else See \cref{lem:ce_int} for the formal statement of the lemma and \cref{rem:difference_from_BK} for a comparison with \cite{myC:BarKhu23}.\fi}

The above lemma can be regarded as a generic compiler that adds certified everlasting security. 
For example, we can construct a certified everlasting PKE scheme from any plain PKE scheme  
as follows. For encrypting a message $b\in \bit$,  a ciphertext is set to be 
 $\ket{z}_{\theta},\enc(\theta,b \oplus \bigoplus_{j: \theta_j = 0} z_j)$ 
where 
$z,\theta\gets \bit^\secp$ 
and $\enc$ is the encryption algorithm of the underlying PKE scheme. Here, we omit an encryption key for simplicity and keep using a similar convention throughout this subsection. 
The deletion algorithm simply measures $\ket{z}_{\theta}$ in the Hadamard basis to output a certificate $z'$ 
and the verification algorithm checks if $z'_j=z_j$ for all $j$ such that $\theta_j=1$. 
Then the above lemma implies that an adversary's internal state has no information about $b$ 
conditioned on the acceptance, which means
certified everlasting security. 


\paragraph{Public-slot FE.}
Unfortunately, their compiler does not directly work for FE in general. 
The problem is that for a function $f$, there may not exist a function $f'$ such that
$f(m)$ can be recovered from
$f'(m \oplus \bigoplus_{j: \theta_j = 0} z_j,\theta)$ and $z$.
To overcome this issue, we introduce an extension of FE which we call public-slot FE.  
In public-slot FE,
a decryption key is associated with a \emph{two-input} function where the first and second inputs are referred to as the secret and public inputs, respectively. 
Given a ciphertext of a message $m$ and a decryption key for a function $f$, one can compute $f(m,\pub)$ for all public inputs $\pub$. 
Its security is defined similarly to that of plain FE except that the challenge message pair $(m^{(0)},m^{(1)})$ must satisfy $f(m^{(0)},\pub)=f(m^{(1)},\pub)$ for all key queries $f$ and public inputs $\pub$. 

We observe that many existing constructions of FE based on IO (e.g., \cite{SIAMCOMP:GGHRSW16}) can be naturally extended to public-slot FE. In particular, we show that a simple modification of the FE scheme of Ananth and Sahai \cite{TCC:AnaSah16} yields an adaptively secure public-slot FE based on IO. 
\ifnum\llncs=0 See \cref{sec:adaptive_PKFE_public_slot} for details.\fi

\paragraph{First attempt.}
Our first attempt to construct a collusion-resistant FE scheme with certified everlasting security is as follows. 
Let $\enc$ be an encryption algorithm of a public-slot FE scheme. 
A ciphertext for a message $m=m_1\ldots m_n \in \bit^n$ consists of  $\{\ket{z_i}_{\theta_i}\}_{i\in [n]}$ and $\enc(\theta_1,\ldots,\theta_n,\beta_1,\ldots,\beta_n)$
where 
$z_i,\theta_i \gets \bit^{\secp}$ for $i\in [n]$,  
and $\beta_i\seteq m_i \oplus \bigoplus_{j: \theta_{i,j} = 0} z_{i,j}$ where $z_{i,j}$ is the $j$-th bit of $z_i$. 
A decryption key for a function $f$ is a decryption key of the underlying public-slot FE for a two-input function $g[f]$ defined as follows.
The function $g[f]$ takes a secret input  
$(\theta_1,\ldots,\theta_n,\beta_1,\ldots,\beta_n)$ and 
a public input
$(b_1,\ldots,b_n) \in \bit^{\secp\times n}$, 
computes $m_i\seteq \beta_i \oplus \bigoplus_{j:\theta_{i,j}=0}b_{i,j}$ for $i\in[n]$, and outputs $f(m_1,\ldots,m_n)$. To see decryption correctness, we first observe that if we first measure $\{\ket{z_i}_{\theta_i}\}_{i\in [n]}$ in the computational basis to get $(b_1,\ldots,b_n)$, then we have $b_{i,j}=z_{i,j}$ for all $i,j$ such that $\theta_{i,j}=0$. Thus, if we run the decryption algorithm of the public-slot FE scheme with the public input  $(b_1,\ldots,b_n)$, then this yields the correct output $f(m_1,\ldots,m_n)$. 
We remark that the decryption can actually be done without measuring $\{\ket{z_i}_{\theta_i}\}_{i\in [n]}$ by running the above procedure coherently. 
The deletion and verification algorithms can be defined similarly to those for the certified everlasting PKE scheme as explained above: The deletion algorithm simply measures $\{\ket{z_i}_{\theta_i}\}_{i\in [n]}$ in the Hadamard basis to get $\{z'_i\}_{i\in [n]}$ and the verification algorithm checks if $z'_{i,j}=z_{i,j}$ for all $i,j$ such that $\theta_{i,j}=1$.

However, the above scheme is insecure. The problem is that public-slot FE does not 
force an adversary to use a legitimate public input. 
By running the decryption algorithm with different public inputs many times, an adversary can learn more than $f(m_1,...,m_n)$, which would even break security as a plain FE scheme.  
For example, if the adversary uses a public input $(b_1,...,b^\prime_{i},...,b_n)$ such that $b^\prime_i$ is the same as $b_i$ except that $b^\prime_{i,j} \ne b_{i,j}$ for some $j$ such that $\theta_{i,j}=0$, then it can obtain $f(m_1,...,1-m_{i},...,m_n)$.

\paragraph{Certify the public input by one-time signatures.}
Our idea to resolve the above issue is to certify $\{z_i\}_{i\in [n]}$ in the quantum part of the ciphertext  by using one-time signatures. 
Specifically, the encryption algorithm first generates a pair of a verification key $\vk_{i,j}$ and a signing key $\sk_{i,j}$ of a deterministic one-time signature for $i\in [n]$ and $j\in [\secp]$. 
A ciphertext for a message $m=m_1\ldots m_n \in \bit^n$ consists of  $\{\ket{\psi_{i,j}}\}_{i\in [n],j\in[\secp]}$ and $\enc(\{\vk_{i,j}\}_{i\in[n],j\in[\secp]},\theta_1,\ldots,\theta_n,\beta_1,\ldots,\beta_n)$
where 
$z_i,\theta_i \gets \bit^n$ for $i\in [n]$,  
$\beta_i\seteq m_i \oplus \bigoplus_{j: \theta_{i,j} = 0} z_j$, 
and 
\begin{align}
\ket{\psi_{i,j}}\seteq \begin{cases}
    \ket{z_{i,j}}\ket{\sigma_{i,j,z_{i,j}}} &\text{~if~} \theta_{i,j}=0\\ 
    \ket{0}\ket{\sigma_{i,j,0}}+(-1)^{z_{i,j}} \ket{1}\ket{\sigma_{i,j,1}} &\text{~if~} \theta_{i,j}=1
    \end{cases} 
    \end{align}
where $\sigma_{i,j,b}$ is a signature generated by using the signing key $\sk_{i,j}$ on the message $b\in \bit$. 
Note that $\ket{\psi_{i,j}}$ is the state obtained by coherently running the signing algorithm with the signing key $\sk_{i,j}$ on $j$-th qubit of $\ket{z_i}_{\theta_i}$. 
We modify the function $g[f]$ associated with the decryption key of the public-slot FE 
to additionally check the validity of the signatures for $b_{i,j}$ for $i,j$ such that   $\theta_{i,j}=0$. That is,  $g[f]$ takes a secret input  
$(\{\vk_{i,j}\}_{i\in[n],j\in[\secp]},\theta_1,\ldots,\theta_n,\beta_1,\ldots,\beta_n)$ and 
a public input
$(b_1,\ldots,b_n,\sigma_1,\ldots,\sigma_n)$, 
parses $\sigma_{i}=(\sigma_{i,1},\ldots,\sigma_{i,\secp})$ for each $i\in[n]$, 
and checks if $\sigma_{i,j}$ is a valid signature for $b_{i,j}$ (i.e., if $\sigma_{i,j}=\sigma_{i,j,b_{i,j}}$) for all $i,j$ such that $\theta_{i,j}=0$. If it is not the case, it just outputs $\bot$. Otherwise, it
computes $m_i\seteq \beta_i \oplus \bigoplus_{j:\theta_{i,j}=0}b_{i,j}$ for $i\in[n]$ and outputs $f(m_1,\ldots,m_n)$.
Note that $\ket{\psi_{i,j}}$ contains the valid signature $\sigma_{i,j,z_{i,j}}$ on the message $z_{i,j}$ whenever $\theta_{i,j}=0$. 
Thus, the decryption correctness is unaffected. 
In addition, if we measure $\ket{\psi_{i,j}}$ in the Hadamard basis for $i,j$ such that $\theta_{i,j}=1$, then the outcome $(c_{i,j},d_{i,j})$ satisfies $z_{i,j}=c_{i,j}\oplus d_{i,j}(\sigma_{i,j,0}\oplus \sigma_{i,j,1})$. By modifying the verification algorithm to check the above equality, the verification correctness also holds. 
By the security of one-time signatures, an adversary cannot arbitrarily modify the public input when running the decryption algorithm of the underlying public-slot FE. 

While this authentication technique seems to prevent obvious attacks, we still do not know how to prove certified everlasting security of this scheme. 
In particular, we want to rely on the certified everlasting lemma of \cite{myC:BarKhu23}. However, the lemma only enables us to perform bit-wise game hops. For example, if $n=3$ and the challenge messages are $000$ and $111$, we would need to consider hybrid experiments where the challenge message evolves as 
$000\rightarrow 100 \rightarrow 110 \rightarrow 111$.\footnote{Note that an FE scheme with $3$-bit messages itself is trivial to construct from any PKE scheme. We are considering this toy example just to explain a technical difficulty.} However, the restriction on the adversary only ensures $f(000)=f(111)$ for decryption key queries $f$ and does not ensure, say, $f(000)=f(100)$. Without this condition, 
we cannot rely on the security of the underlying public-slot FE. Hence, it seems impossible to prove indistinguishability between neighboring intermediate hybrids.

\paragraph{Redundant encoding.}
Our idea for resolving the above issue is to encode the message in a redundant way so that there is a space for a ``spare message''. 
Specifically, we first encode a message $m=m_1\ldots m_n \in \bit^n$ into a $(2n+1)$-bit string $m_1\ldots m_n\concat 0^{n+1}$. The rest of the scheme is identical to that in the previous paragraph, except that 
$i$'s range is $[2n+1]$ instead of $[n]$ and  
$g[f]$ chooses which part to use for deriving the output depending on the value of the $(2n+1)$-th bit. 
Specifically, $g[f]$ takes a secret input  
$(\{\vk_{i,j}\}_{i\in[2n+1],j\in[\secp]},\theta_1,\ldots,\theta_{2n+1},\beta_1,\ldots,\beta_{2n+1})$ and 
a public input
$(b_1,\ldots,b_{2n+1},\sigma_1,\ldots,\sigma_{2n+1})$ and first checks the validity of the signatures on positions corresponding to $i,j$ such that $\theta_{i,j}=0$ as before.      
Then it computes $m_i\seteq \beta_i \oplus \bigoplus_{j:\theta_{i,j}=0}b_{i,j}$ for $i\in[2n+1]$, and outputs $F(m_1,\ldots,m_{2n+1})$ where $F$ is defined as 
\begin{align}
F(m_1,\ldots,m_{2n+1})\seteq 
\begin{cases}
     f(m_1,\ldots,m_n)
     &\text{~if~} m_{2n+1}=0\\ 
     f(m_{n+1},\ldots,m_{2n})
     &\text{~if~} m_{2n+1}=1
    \end{cases}. 
\end{align}

The decryption correctness is unaffected because we always have $m_{2n+1}=0$ when decrypting an honestly generated message. 
The verification correctness is also unaffected since the way of encoding messages is irrelevant. 
We explain why this enables us to avoid the issue mentioned in the previous paragraph. 
Intuitively, the advantage of such a redundant encoding is that we can ensure that the encoded challenge message contains either of two challenge messages in all intermediate hybrids. 
Let $m^{(0)}$ and $m^{(1)}$ be a pair of challenge messages. Note that they correspond to $m^{(0)}\concat 0^{n+1}$ and $m^{(1)}\concat 0^{n+1}$ after encoding. Then we consider intermediate hybrids where the corresponding challenge messages after the encoding evolves as follows:
\begin{enumerate}
    \item Starting from $m^{(0)}\concat 0^{n+1}$, we change the $(n+1)$-th to $2n$-th bits one-by-one toward  $m^{(0)}\concat m^{(1)}\concat 0$.
    \item Flip the $(2n+1)$-th bit, which results in $m^{(0)}\concat m^{(1)}\concat 1$.
    \item Change the first to $n$ bits one-by-one toward $m^{(1)}\concat m^{(1)}\concat 1$.
    \item Flip the $(2n+1)$-th bit, which results in $m^{(1)}\concat m^{(1)}\concat 0$.
    \item Change the $(n+1)$-th to $2n$-th bits one-by-one toward $m^{(1)}\concat 0^{2n+1}$.
\end{enumerate}

Importantly, the value of $F$ on the encoded challenge message is equal to $f(m^{(0)})=f(m^{(1)})$ at any point of the hybrids. This enables us to rely on the security of the underlying public-slot FE along with certified everlasting lemma in every hybrid. 

\paragraph{What one-time signatures to use?}
Finally, we remark that we have to choose an instantiation of one-time signatures carefully. 
Roughly speaking, the reason why we are using one-time signatures is to prevent an adversary from using ``unauthorized'' $b_{i,j}$, i.e., those for which the valid signature $\sigma_{i,j,b_{i,j}}$ is not given to the adversary.  However, by the correctness of one-time signatures, a valid signature must exist on every message. This means that a valid signature on an ``unauthorized'' $b_{i,j}$ must exist even if it is difficult for an adversary to find. This situation is not compatible with the security definition of public-slot FE. Recall that its security requires that the challenge message pair $(m^{(0)},m^{(1)})$ must satisfy $f(m^{(0)},\pub)=f(m^{(1)},\pub)$ for all key queries $f$ and \emph{all} public inputs $\pub$. That is, the security is not applicable if there is at least one $\pub$ such that $f(m^{(0)},\pub)\neq f(m^{(1)},\pub)$ even if such $\pub$ is difficult to find. 
To overcome this issue, we use Lamport signatures instantiated with a PRG. 
Let $\PRG:\bit^{\secp}\rightarrow \bit^{2\secp}$ be a PRG.  
When the message length is $1$, a signing key is set to be $(u_0,u_1)\in \bit^{\secp\times 2}$ and a verification key is set to be $(v_0=\PRG(u_0),v_1=\PRG(u_1))\in \bit^{2\secp\times 2}$.  
A signature for a bit $b$ is defined to be $u_b$. 
This scheme has a special property in that we can program a verification key so that it does not have a valid signature for a particular message. For example, if we want to ensure that a message $0$ does not have a valid signature, then we can set $v_0$ to be a uniformly random $2\secp$-bit string. Then, with probability $1-2^{-\secp}$, there is no preimage of $v_0$, which means that there is no valid signature on the massage $0$.  By using this property, whenever $b_{i,j}$ is unauthorized,    
we can switch to a hybrid where there is no valid signature for $b_{i,j}$. This effectively resolves the above issue. 

\subsection{Technical Overview: Bounded Collusion-Resistant FE}\label{sec:tech_overview_bounded_FE}
In this subsection, we give a high-level overview of our certified everlasting secure bounded collusion-resistant FE schemes.
It is known that the (bounded collusion-resistant) plain FE is constructed from (standard) PKE, RNCE, and garbling~\cite{C:GorVaiWee12}.
A natural strategy is constructing PKE, RNCE, and garbling with certified everlasting security and using them as building blocks. 
We show that PKE with certified everlasting security can be constructed using the techniques of \cite{JACM:Unruh15,C:HMNY22}.
RNCE with certified everlasting security for {\it classical messages} can be constructed from certified everlasting PKE in the same way as standard RNCE~\cite{C:KNTY19}.
However, such an RNCE scheme is insufficient for our purpose (constructing adaptively-secure FE) because it is not for \emph{quantum messages}.
We also need a new idea to construct garbling with certified everlasting security.
The following explains these ideas and how to construct FE with certified everlasting security.

\paragraph{Certified everlasting garbling for $\Ppoly$ circuits.}
In classical cryptography, it is known that we can construct plain garbling from plain SKE using double-encryption~\cite{FOCS:Yao86,JC:LinPin09}. Double-encryption means we generate a nested ciphertext $\ct_2\la\Enc(\sk',\ct_1)$,
where $\ct_1\la\Enc(\sk,m)$,
$m$ is the message, $\Enc$ is the encryption algorithm of SKE, and $\sk,\sk'$ are secret keys of SKE.
This double-encryption is an essential technique for garbling. However, it is an obstacle to our purpose.
First, we do not know SKE with certified everlasting security for {\it quantum} messages.
Second, even if the first problem is solved, we have another problem:
We can obtain a valid certificate showing that $\ct_1$ has been deleted by running the deletion algorithm on $\ct_2$. However, such a certificate does not necessarily mean the deletion of $m$.
We bypass the problem using XOR secret sharing instead of double-encryption.\footnote{
A similar technique was used by Gentry, Halevi, and Vaikuntanathan~\cite{C:GenHalVai10}.
} 
More precisely, we uniformly randomly sample $p$ and compute $(\vk',\ct')\la\Enc(\sk',p)$ and $(\vk,\ct)\la\Enc(\sk,p\oplus m)$ to encrypt message $m$. 
Here, $\Enc$ is the encryption algorithm of certified everlasting SKE, and $\vk',\vk$ are the verification keys that are used to verify the correctness of deletion certificates.
The receiver with $(\ct',\ct)$ can obtain $m$ only if it has both $\sk'$ and $\sk$, and nothing else otherwise, as in the case of double-encryption.
Furthermore, once the receiver issues the deletion certificate of $(\ct',\ct)$, it can no longer obtain the information of $m$ even if it becomes computationally unbounded.

It is easy to see that we can implement the well-known gate garbling~\cite{FOCS:Yao86,JC:LinPin09} by using the double encryption in the parallel way above instead of the sequential double encryption.
We can prove its computational security via a similar discussion as that in~\cite{JC:LinPin09}. (Although \cite{JC:LinPin09} uses double-encryption, we can show the security for the XOR secret sharing case similarly.)
Furthermore, we can prove its certified everlasting security by using the certified everlasting security of the SKE.
Hence, we can obtain certified everlasting garbling.
\ifnum\llncs=0 The formal construction of our certified everlasting garbling is given in \cref{sec:const_garbling}. For details, see that section.\fi

\paragraph{FE with non-adaptive security.}
Our next task is achieving certified everlasting FE using certified everlasting garbling.
It is known that plain FE with \emph{non-adaptive security} can be constructed by running the encryption algorithm of (plain) PKE on labels of a plain garbling scheme~\cite{CCS:SahSey10}.\footnote{The non-adaptive security means that the adversary can call the key queries only before the challenge encryption query.}
In our certified everlasting garbling scheme (explained in the previous paragraph), the labels are classical bit strings and the deletion algorithm does not take the labels as input.
Therefore, this classical construction for plain FE by Sahai and Seyalioglu~\cite{CCS:SahSey10} can be directly applied to 
the construction of our $1$-bounded certified everlasting FE for $\Ppoly$ circuits with \emph{non-adaptive security}.
\ifnum\llncs=0(The formal construction is given in \cref{sec:const_func_non_adapt}.)\fi

\paragraph{FE with adaptive security.}
Now, we want to convert non-adaptive security to adaptive one.\footnote{The adaptive security means that the adversary can call key queries before and after the challenge encryption query.}
However, the conversion is non-trivial.
Let us first review the conversion for plain FE.
In classical cryptography, we can convert non-adaptively secure FE into adaptively secure FE by using RNCE.
Roughly speaking, RNCE is the same as PKE except that we can generate a fake ciphertext $\widetilde{\ct}\la\Fake(\pk)$ without plaintext and we can generate a fake secret key 
$\widetilde{\sk}\la\Reveal(\pk,m)$ that decrypts $\widetilde{\ct}$ to $m$.
The security of RNCE guarantees that
$(\Enc(\pk,m),\sk)$ and $(\Fake(\pk),\Reveal(\pk,m))$ are computationally indistinguishable, where $\Enc$ is the real encryption algorithm, and $\sk$ is the real secret key.
Adaptively secure FE can be constructed by running the real encryption algorithm $\Enc$ of the RNCE on the ciphertext $\nad.\ct$ of the FE. 
We can prove its adaptive security as follows.
The adversary of adaptive security can send key queries after the challenge encryption query.
However, the sender can simulate the challenge encryption query without generating $\nad.\ct$.
This is because, from the security of RNCE, we can switch $(\Enc(\pk,\nad.\ct),(\sk,\nad.\sk_f))$ to the fake one $(\Fake(\pk),(\Reveal(\pk,\nad.\ct),\nad.\sk_f))$, where $\nad.\sk_f$ is the functional secret key of the non-adaptively secure FE.
Therefore, the sender needs not generate $\nad.\ct$ before generating $\nad.\sk_f$ for the simulation of the adversary's queries, 
which means that we can reduce the adaptive security to the non-adaptive security.

How can we adopt the above classical idea of the conversion to the certified everlasting case?
From the discussion above, a straightforward way is to encrypt the ciphertext $\nad.\ct$ of certified everlasting FE with non-adaptive security using certified everlasting RNCE as follows: $(\vk,\ct)\la\Enc(\pk,\nad.\ct)$, where $\vk$ is the verification key, $\pk$ is the public key, and $\Enc$ is the real encryption algorithm of the certified everlasting RNCE.
However, this idea fails for the following two reasons.
First, $\nad.\ct$ is a quantum state. Our certified everlasting RNCE scheme does not support quantum messages.
Second, even if we can construct RNCE for quantum messages, we have another problem:
A valid certificate of $\ct$ is issued by running the deletion algorithm on $\ct$. However, such a certificate does not necessarily mean the deletion of the plaintext of $\nad.\ct$. 
The first problem is about security, and the second problem is about correctness.

Our idea to resolve the first problem is to use quantum teleportation.
We construct RNCE for quantum messages from RNCE for classical messages by using quantum teleportation.\footnote{A similar technique was used in the context of multi-party quantum 
computation~\cite{C:BCKM21a}.}
(We believe that the idea of using quantum teleportation in the following way will be useful in many other applications beyond RNCE.) 
As the ciphertext and the secret key of adaptively secure FE, we take 
\begin{eqnarray*}
\frac{1}{2^{2N}}\sum_{a,b\in\bit^N}(Z^bX^a(\nad.\ct) X^aZ^b)_{C1} \otimes \Enc(\pk,(a,b))_{C2}\otimes
(\nad.\sk_f,\sk)_{S},
\end{eqnarray*}
where $\nad.\ct$ is an $N$-qubit state, the registers $C_1$ and $C_2$ are the ciphertext, and 
the register $S$ is the secret key. 
Here, $X^a\coloneqq\bigotimes_{j=1}^NX_j^{a_j}$,
$Z^b\coloneqq\bigotimes_{j=1}^N Z_j^{b_j}$, $a_j$ is the $j$th bit of $a$, and
$b_j$ is the $j$th bit of $b$.
Moreover, $\Enc$ is the real encryption algorithm of RNCE for classical messages,
$\nad.\sk_f$ is the secret key of non-adaptively secure FE, and $\sk$ is the real secret key of RNCE for classical messages.
We want to show the adaptive security of the construction by reducing it to the non-adaptive security of the building block FE.
In the first step of hybrids, we switch the state
to 
\begin{eqnarray*}
\frac{1}{2^{2N}}\sum_{a,b}(Z^bX^a(\nad.\ct) X^aZ^b)_{C1}\otimes \Fake(\pk)_{C2}
\otimes
(\nad.\sk_f,\Reveal(\pk,(a,b)))_S
\end{eqnarray*}
by using the property of RNCE for classical messages. 
In the second step of hybrids, we switch the state to 
\begin{eqnarray*}
\frac{1}{2^{2N}}\sum_{x,z\in\bit^N}
\mathcal{T}^{x,z}_{A',A}[\nad.\ct_{A'}\otimes|\Phi_N\rangle\langle\Phi_N|_{A,C1}]\otimes
\Fake(\pk)_{C2}\otimes
(\nad.\sk_f,\Reveal(\pk,(x,z)))_S,
\end{eqnarray*}
where $|\Phi_N\rangle$ is the $N$ Bell pairs between the registers $A$ and $C_1$.
$\mathcal{T}^{x,z}_{A',A}[\nad.\ct_{A'}\otimes|\Phi_N\rangle\langle\Phi_N|_{A,C1}]$ is the state on the register $C_1$ obtained in the following way:
the state $\nad.\ct_{A'}$ on the register $A'$ is coupled with the halves of $N$ Bell pairs on the register $A$,
and the teleportation measurement $\mathcal{T}^{x,z}_{A',A}$ with the result $(x,z)$ is applied on the registers $A$ and $A'$. 
Now, we can generate the states on the registers $C_1$ and $C_2$ without knowing $\nad.\ct$, 
which means that
the sender can simulate the challenge encryption query without $\nad.\ct$.
In other words, the sender does not need to generate $\nad.\ct$ before generating $\nad.\sk_f$ for the simulation of adversary queries. 

This idea solves the first problem.
However, the second problem remains. The receiver with $(Z^bX^a(\nad.\ct) X^aZ^b,\allowbreak\Enc(\pk,(a,b)))$ can issue a deletion certificate of $Z^bX^a(\nad.\ct) X^aZ^b$. The deletion certificate does not necessarily pass the verification algorithm for
the deletion of $\nad.\ct$.
This is an obstacle to achieving correctness.
We solve this problem by introducing an efficient algorithm that we call the modification algorithm.
Let $\nad.\cert^*$ be the deletion certificate of $Z^bX^a(\nad.\ct) X^aZ^b$. 
The modification algorithm takes $(a,b)$ and $\nad.\cert^*$ as input, 
and outputs $\nad.\cert$ that is the deletion certificate of $\nad.\ct$. 
Therefore, by using the modification algorithm, we can convert
the deletion certificate $\nad.\cert^*$ of $Z^bX^a(\nad.\ct) X^aZ^b$
to the deletion certificate $\nad.\cert$ of $\nad.\ct$.
We observe that the modification algorithm exists for many natural constructions, including our construction.\footnote{If the deletion algorithm is the computational-basis measurements 
followed by Clifford gates, the modification algorithm is just modifying the Pauli one-time pad, $X^aZ^b$.
In fact, all known constructions use only Hadamard basis measurements.}

\ifnum\llncs=0The formal explanation of the conversion from non-adaptive to adaptive FE is given in \cref{sec:const_fe_adapt}.\fi

\paragraph{$q$-bounded FE for $\NCone$ circuits.}
Finally, we explain how to convert 1-bounded one to the $q$-bounded one.\footnote{$q$-bounded means that the adversary can call key queries $q$ times
with an a priori bounded polynomial $q$.}
Unfortunately, 
we do not know how to obtain $q$-bounded certified everlasting FE for $\Ppoly$ circuits.
What we can construct in this paper is that only for $\NCone$ circuits. 
\ifnum\llncs=1 (It is an open problem to obtain $q$-bounded certified everlasting FE for $\Ppoly$ circuits.) \else
(It is an open problem to obtain $q$-bounded certified everlasting FE for $\Ppoly$ circuits. For more details, see \cref{sec:const_multi_fe}.)
\fi

Let us explain how to convert
$1$-bounded certified everlasting FE for $\Ppoly$ circuits
to
$q$-bounded certified everlasting FE for $\NCone$ circuits.
In classical cryptography, it is known that~\cite{C:GorVaiWee12}
multi-party computation (MPC) can convert
plain $1$-bounded FE for $\Ppoly$ circuits to plain $q$-bounded FE for $\NCone$ circuits.
The idea is, roughly speaking, the view of each party in the MPC protocol is encrypted using $1$-bounded FE scheme.
In this classical construction, no encryption is done on the ciphertexts of plain FE,
and therefore this classical construction can be directly applied to our certified everlasting case.
(It is an open problem to obtain $q$-bounded certified everlasting FE for $\Ppoly$ circuits.
 \ifnum\llncs=0 The formal construction is given in \cref{sec:const_multi_fe}. \fi


\subsection{Technical Overview: Compute-and-Compare Obfuscation}\label{sec:tech_overview_cc_obfuscation}

This section provides a high-level overview of our certified everlasting compute-and-compare obfuscation. Recall that a compute-and-compare obfuscation scheme obfuscates a circuit $P$ along with a lock value $\lock$ and a message $\msg$ and outputs an obfuscated circuit $\tlP$. In the evaluation phase, one can recover $\msg$ from $\tlP$ using an input $x$ to the circuit such that $P(x) = \lock$. A certified everlasting compute-and-compare obfuscation scheme additionally generates a verification key $\vk$ while obfuscating circuit $P$. A user can generate a deletion certificate $\cert$ from $\tlP$. If we have $\vk$, we can verify whether the certificate is valid or not. The certified everlasting security ensures that no information about $P, \lock$ and $\msg$ is available to the user after producing a valid certificate. This means that the user actually deleted the obfuscated circuit.


\paragraph{Compute-and-compare obfuscation without a message.} We first explain our idea to construct a certified everlasting compute-and-compare obfuscation without any message. That is, the evaluation returns $1$ if $P(x) = \lock$ holds. Let $\CCObf$ be the obfuscation algorithm of a standard compute-and-compare obfuscation scheme and $\enc, \dec$ be the encryption, and decryption algorithms of FHE. The main idea is to compute an FHE ciphertext $\ct_P$ encrypting the circuit $P$ and use $\CCObf$ to produce an obfuscated circuit $\widetilde{\dec}$ of the decryption circuit of FHE with lock value $\lock$ and message $1$. The obfuscated circuit $\tlP$ consists of $\ct_P$ and $\widetilde{\dec}$. Given an input $x$, we first apply the evaluation procedure of FHE to get a ciphertext $\ct_{P(x)}=\Enc(P(x))$ (we omit the encryption key) and then run the evaluation algorithm of the compute-and-compare obfuscation with input $\ct_{P(x)}$ to check whether $P(x) = \lock$. Note that we cannot use certified everlasting FHE \cite{myC:BarKhu23} in a black-box manner since $\CCObf$ is a classical algorithm that cannot obfuscate a quantum circuit, in particular, the decryption algorithm of the FHE. Instead, we use BB84 states along with classical FHE as follows. The obfuscated circuit $\tlqP$ consists of $\widetilde{\dec} \seteq \CCObf(\dec(\sk, \cdot), \lock, 1)$ and  $\{\ket{z_i}_{\theta_i}, \ct_i\}_{i\in [\ell_P]}$ where $\ct_i \seteq \enc(\theta_i \concat \wt{b}_i)$, $z_i,\theta_i \gets \bit^{\secp}$ for  $i\in [\ell_P]$, $\wt{b}_i\seteq b_i \oplus \bigoplus_{j: \theta_{i,j} = 0} z_{i, j}$ and $b_i$ is the $i$-th bit of the binary string of length $\ell_P$ representing the circuit $P$. The verification key is $\vk = (\{z_i, \theta_i\}_{i\in [\ell_P]})$. To evaluate the obfuscated circuit with an input $x$, we first coherently compute an evaluated FHE ciphertext $\ket{\ct_{U_x(P)}}$ where $U_x$ is a circuit that on input $(\{z_i, \theta_i, \wt{b}_i\}_{i\in [\ell_P]})$ first recovers $b_i$, the bits representing $P$, and then outputs $P(x)$. Then, we coherently evaluate the obfuscated circuit $\widetilde{\dec}$ with input $\ket{\ct_{U_x(P)}}$ and check that the measured outcome is $1$ to decide $P(x) = \lock$. The deletion and verification algorithm works similarly as in the certified everlasting PKE scheme described in \cref{sec:tech_overview_CRFE}. That is, we use the concrete certified everlasting secure FHE scheme by Bartusek and Khurana in a non-black-box way.\par 
However, the above scheme cannot guarantee certified everlasting security. The reason is that the classical compute-and-compare obfuscation cannot hide the lock value from an unbounded adversary. More precisely, the unbounded adversary is given a target circuit and an auxiliary input and can easily distinguish between the obfuscated circuit $\widetilde{\dec}\gets \CCObf(1^\secp,\dec,\lock,1)$ and the corresponding simulated circuit $\widetilde{\dec}\gets \sfCC.\Sim(1^\secp,\pp_{\dec},1^1)$ if the auxiliary input and $\lock$ are correlated, where $\pp_{\dec}$ consists of parameters of $\dec$ (input and output length and circuit size).

We solve this problem by masking the obfuscated circuit that encodes $\lock$ using the XOR function in combination with the BB84 states. In particular, we sample ``dummy'' lock value $R \gets \bit^{\secp}$ and set the obfuscated circuit $\qL_C$ as $(\widetilde{\dec} \seteq \CCObf(\dec(\sk, \cdot), R, 1), \{\ket{z_i}_{\theta_i}, \ct_i\}_{i\in [\ell]})$ where $\ell = \ell_P + \ell_{\tlI}$ and $\{\ct_i\}_{i \in [\ell]}$ encrypts the binary string representing the circuits $(P \concat \tlI)$ where $\tlI \seteq \CCObf(I, \lock, R)$. We denote $I$ by the identity circuit that is $I(x) = x$ for all $x$. The evaluation algorithm works as before except that the circuit $U_x$ on input $(\{z_i, \theta_i, \wt{b}_i\}_{i\in [\ell]})$ first reconstructs $(P \concat \tlI)$ and then outputs the result obtained in the evaluation of $\tlI$ with input $P(x)$. Hence, checking $P(x)=\lock$ is deferred until evaluating $\tlI$, which is hidden due to the certified everlasting security of FHE. The correctness follows from the fact that $U_x$ returns $R$ if $P(x) = \lock$ and evaluation of $\widetilde{\dec}$ outputs $1$ if $U_x(P\concat \tlI) = R$.\par 

The simulated circuit $\tlqP$ consists of $\widetilde{\dec} = \CCObf(\dec(\sk, \cdot), R, 1)$ and $ \{\ket{z_i}_{\theta_i}, \ct_i\}_{i\in [\ell]}$ where $\ct_i \seteq \enc(\theta_i\concat \wt{b}_i)$ and $\wt{b}_i\seteq 0 \oplus \bigoplus_{j: \theta_{i,j} = 0} z_{i, j}$ for  $i\in [\ell]$. Note that, $\tlqP$ does not contain any information about $P$ and $\lock$. We rely on the certified everlasting lemma of \cite{myC:BarKhu23} to show that the real obfuscated circuit is indistinguishable from the simulated circuit for any unbounded adversary who produces a valid certificate of deletion. Although an unbounded adversary can recover $\sk$ from $\widetilde{\dec}$, $\sk$ is useless for distinguishing after the deletion. Since the lemma only allows us to flip one bit at a time, we use a sequence of $\ell$ hybrid experiments. In the $i$-th hybrid, we change the bit $b_i$ from $1$ to $0$. If we can show that $\enc(\theta_i \concat \wt{b}_i)$ is computationally indistinguishable from $\enc(\bm{0}\concat \wt{b}_i)$ and then it is possible to apply the certified everlasting lemma to flip the bit $b_i$ without noticing the unbounded adversary. To establish the computational indistinguishability, we first replace the circuit $\tlI\gets \CCObf(I,\lock,R)$ with the simulated one $\tlI\gets\sfCC.\Sim(1^\secp,\pp_{I},1^{\abs{R}})$ and then change the circuit $\widetilde{\dec}\gets \CCObf(\dec,\lock,1)$ to the corresponding simulated circuit $\widetilde{\dec}\gets \sfCC.\Sim(1^\secp,\pp_{\dec},1^1)$ depending on the security of the underlying compute-and-compare obfuscation scheme. Since the FHE secret key $\sk$ is no longer required to simulate the adversary's view, we can change $\enc(\theta_i\concat \wt{b}_i)$ to $\enc(\bm{0}\concat \wt{b}_i)$ using the IND-CPA security of FHE. Hence, $b_i$ can be set to $0$ by employing the certified everlasting lemma.

\paragraph{Compute-and-compare obfuscation with a message.} Next, we discuss extending the above construction into a certified everlasting compute-and-compare obfuscation scheme that obfuscates a circuit $P$ along with $\lock$ and a message $m = m_1 \ldots m_n \in \bit^n$. Our idea is to encrypt the message using FHE in combination with the BB84 states and recover the message bits during evaluation depending on the outcome of the obfuscated circuit $\widetilde{\dec}$. The obfuscated circuit $\tlqP$ now additionally includes $\{\ket{z_{\ell+k}}_{\theta_{\ell+k}}, \ct_{\ell+k}\}_{k\in [n]}$ where $z_{\ell+k},\theta_{\ell+k} \gets \bit^{\secp}$ , $\ct_{\ell+k} \seteq \enc(\theta_{\ell+k}\concat \wt{b}_{\ell+k})$ and $\wt{b}_{\ell+k}\seteq m_k \oplus \bigoplus_{j: \theta_{\ell+k,j} = 0} z_{\ell+k, j}$ for  $k\in [n]$. The evaluation procedure works as before except the $U_x$ on input $((\{z_i, \theta_i, \wt{b}_i\}_{i\in [\ell]}), (z_{\ell+k}, \theta_{\ell+k}, \wt{b}_{\ell+k}))$ first reconstructs $(P \concat \tlI)$ from $\{z_i, \theta_i, \wt{b}_i\}_{i\in [\ell]}$ and $m_k$ from $(z_{\ell+k}, \theta_{\ell+k}, \wt{b}_{\ell+k})$, and then outputs $m_k \cdot \tlI(P(x))$. We can similarly define the deletion and verification algorithms as before. The scheme correctly recovers $m$ in a bit-by-bit manner. Let us consider $P(x) = \lock$ and $m_k = 1$. Then, by the definition of $U_x$ and the correctness of compute-and-compare obfuscation, we have $m_k \cdot \tlI(P(x)) = R$. Consequently, $\widetilde{\dec}$ evaluates to $1$ for an input $\ct_{m_k \cdot \tlI(P(x))}$. If the result of the evaluation is not $1$, then we set $m_k \seteq 0$. We prove the certified everlasting security of the scheme using the same idea as discussed for the compute-and-compare obfuscation scheme without a message. The only difference is that we additionally delete the information of $m$ using the IND-CPA security of FHE and certified everlasting lemma of \cite{myC:BarKhu23} after we erase the information about $P$ and $\lock$. 
\ifnum\llncs=0 The formal construction and its security analysis are provided in \cref{sec:ccobf}.
\fi

\paragraph{Certified everlasting predicate encryption.} Goyal, Koppula and Waters \cite{FOCS:GoyKopWat17} and Wichs and Zirdelis \cite{FOCS:WicZir17} showed a generic construction of PE\footnote{It satisfies one-sided attribute-hiding security, meaning that the attribute and message are both hidden to a user who does not have a secret key for successful decryption.} from compute-and-compare obfuscation and ABE. The construction works as follows. The setup and key generation algorithms are the same as the underlying ABE. Let $\enc$ and $\dec$ be the encryption and decryption algorithms of ABE. To encrypt a message $m$ with attribute $x$, the encryption algorithm samples a random lock $R \in \bit^{\ell}$ and computes $\ct \seteq \enc(x, R)$ and $\widetilde{\dec} \seteq \CCObf(\dec(\cdot, \ct), R, m)$. The ciphertext is the obfuscated circuit $\widetilde{\dec}$. Given a secret key $\sk_P$ for a policy $P$, a user simply evaluates $\widetilde{\dec}$ with input $\sk_P$ to recover the message $m$. Note that, if $P(x) = 1$ then by the correctness of ABE, $\dec(\sk_P, \ct) = R$ and hence $\widetilde{\dec}(\sk_P)$ returns $m$.\par 

One might hope that replacing the compute-and-compare obfuscation and ABE with their certified everlasting counterparts in the classical construction yields a certified everlasting PE. This would not work since our certified everlasting compute-and-compare obfuscation cannot obfuscate a quantum decryption circuit $\qdec(\cdot, \qct)$ of the certified everlasting ABE. However, we need to erase the information about $R$ from the ABE ciphertext $\ct$ in order to apply the certified everlasting security of the compute-and-compare obfuscation. In other words, after a valid certificate of deletion is produced, an unbounded adversary should not be able to distinguish between $\enc(x, R)$ and $\enc(x, \bm{0})$. A classical ABE alone can not guarantee such indistinguishability. We solve this problem by using a classical ABE coupled with BB84 states and a certified everlasting compute-and-compare obfuscation in the above construction. In particular, we first sample $z_i, \theta_i \gets \bit^{\ell}$, set $\wt{r}_i\seteq r_i \oplus \bigoplus_{j: \theta_{i,j} = 0} z_{i, j}$ and then compute $\ct \seteq \enc(x, (\theta_1, \ldots, \theta_{\ell}, \wt{r}_i, \ldots, \wt{r}_{\ell}))$ where $r_i$ denotes the $i$-th bit of $R$. The ciphertext consists of $\widetilde{\qdec} \seteq \qCCObf(\dec(\cdot, \ct), R, m)$ and $\{\ket{z_i}_{\theta_i}\}_{i \in [\ell]}$. The verification key associated with the ciphertext includes $\{z_i, \theta_i\}_{i \in [\ell]}$ and a verification key $\vk_{\dec}$ corresponding to the circuit $\dec$. The deletion and verification algorithms can be defined in a natural way. That is, we use the concrete certified everlasting secure ABE scheme by Bartusek and Khurana in a non-black-box way.\par
Suppose an adversary queries secret keys $\sk_P$ such that $P(x) = 0$ and becomes unbounded after delivering a valid certificate of deletion of the ciphertext. Our idea is to use the security of ABE and the certified everlasting lemma of \cite{myC:BarKhu23} to delete the information of $R$. Then, we utilize the certified everlasting security of compute-and-compare obfuscation for replacing $\widetilde{\qdec}$ with a simulated circuit that does not contain any information about $m, x$. 
\ifnum\llncs=0 The formal security analysis can be found in \cref{sec:PE_from_LObf_ABE}. 
\fi


\ifnum\submission=1
\subsection{More on Related Works}\label{sec:related_work}
\else
\subsection{More on Related Works}\label{sec:related_work}
\fi

\paragraph{Ciphertext certified deletion.}
Unruh~\cite{JACM:Unruh15} introduced the concept of revocable quantum time-released encryption.
In this primitive, a receiver possessing quantum encrypted data can obtain its plaintext after a predetermined time $T$.
The sender can revoke the quantum encrypted data before time $T$. 
If the revocation succeeds, the receiver cannot obtain the plaintext information even if its computing power becomes unbounded.

Broadbent and Islam~\cite{TCC:BroIsl20} constructed one-time SKE with certified deletion.
It is standard one-time SKE except that once the receiver issues a valid classical certificate, the receiver cannot obtain the plaintext information even if the receiver later becomes a computationally \emph{unbounded} adversary. (See also \cite{ARXIV:KunTan20}.)

Hiroka, Morimae, Nishimaki, and Yamakawa~\cite{AC:HMNY21} constructed reusable SKE, PKE, and ABE with certified deletion.
These reusable SKE, PKE, and ABE with certified deletion are standard reusable SKE, PKE, and ABE with additional properties, respectively. Once the receiver issues a valid classical certificate, the receiver cannot obtain the plaintext information even if it obtains some secret information (e.g., the master secret key of ABE). In these primitives, the security holds against computationally bounded adversaries, unlike in this work.
Poremba~\cite{myITCS:Poremba23} achieved FHE with certified deletion. In addition, certificates for deletion are publicly verifiable in his construction. The security holds against computationally bounded adversaries, unlike in this work.
However, the security of the construction relies on a strong conjecture that a particular hash function is “strong Gaussian-collapsing”.

Hiroka, Morimae, Nishimaki, and Yamakawa~\cite{C:HMNY22} constructed commitments with statistical binding and certified everlasting hiding. From it, they also constructed a certified everlasting zero-knowledge proof system for 
$\class{QMA}$ based on the zero-knowledge protocol of~\cite{FOCS:BroGri20}.


\paragraph{Key certified deletion.}
Kitagawa and Nishimaki~\cite{myAC:KitNis22} introduced the notion of FE with secure key leasing, where functional decryption keys are quantum states and we can generate certificates for deleting the keys. This can be seen as certified deletion of keys and the dual of certified deletion of ciphertexts. They achieved bounded collusion-resistant secret-key FE with secure key leasing for $\Ppoly$ from standard SKE.

\paragraph{Secure software leasing.}
Ananth and La Place introduced the notion of secure software leasing and achieved it for a sub-class of evasive functions from public-key quantum money (need IO and OWFs) and the LWE assumption~\cite{EC:AnaLaP21}.
Secure software leasing encode classical program into quantum program and has an explicit returning process. After a lessor verifies that a returned quantum program is valid, a lessee cannot run the leased program anymore.
Later, several secure software leasing schemes for a sub-class of evasive functions or cryptographic functionalities (or its variant) with various properties (such as classical communication, without assumptions) were presented~\cite{EPRINT:ColMajPor20,TCC:BJLPS21,TCC:KitNisYam21,C:ALLZZ21}. None of them are certified everlasting secure.

\paragraph{Compute-and-compare obfuscation, PE, and FE.}
There are tremendous amount of previous works on standard FE and PE for general circuits and standard compute-and-compare obfuscation. We focus on strongly related works. No previous work consider certified everlasting secure FE, PE, and compute-and-compare obfuscation.

Gorbunov, Vaikuntanathan, and Wee~\cite{C:GorVaiWee12} constructed bounded collusion-resistant adaptively secure PKFE for $\Ppoly$ from standard PKE (and either the DDH or LWE assumption).
Later, Ananth and Vaikuntanathan improved ciphertext size and assumptions. They constructed adaptively secure bounded collusion-resistant PKFE for $\Ppoly$ with optimal ciphertext size from standard PKE.
Garg, Gentry, Halevi, Raykova, and Sahai~\cite{SIAMCOMP:GGHRSW16} constructed selectively secure collusion-resistant PKFE for $\Ppoly$ from IO and OWFs.
Waters~\cite{C:Waters15} constructed adaptively secure PKFE collusion-resistant for $\Ppoly$ from IO and OWFs.
Ananth, Brakerski, Segev, and Vaikuntanathan~\cite{C:ABSV15} presented a transformation from selectively secure collusion-resistant FE for $\Ppoly$ to adaptively secure collusion-resistant FE for $\Ppoly$.
Jain, Lin, and Sahai constructed IO for $\Ppoly$ from well-founded assumptions~\cite{STOC:JaiLinSah21,EC:JaiLinSah22}. However, their constructions are not post-quantum secure.\footnote{There are a few candidate constructions of post-quantum secure IO~\cite{TCC:BGMZ18,TCC:CHVW19,EC:AgrPel20}.}

Gorbunov, Vaikuntanathan, and Wee~\cite{C:GorVaiWee15} constructed PE for $\Ppoly$ from the LWE assumption.
Goyal, Koppula, and Waters~\cite{FOCS:GoyKopWat17} and Wichs and Zirdelis~\cite{FOCS:WicZir17} presented the notion of compute-and-compare obfuscation (or lockable obfuscation) and achieved it from the LWE assumption. These two works also presented a general transformation from ABE to PE using compute-and-compare obfuscation. Kluczniak~\cite{myPKC:Kluczniak22} constructed compute-and-compare obfuscation from circular \emph{insecure} FHE. However, all known instantiations of circular insecure FHE rely on compute-and-compare obfuscation.

\ifnum\llncs=0
\paragraph{Organization.}
In~\cref{sec:prel}, we define the notation and preliminaries that we require in this work.
In~\cref{sec:fecd_everlasting}, we define the notion of certified everlasting secure collusion-resistant FE and provide a construction.
In~\cref{sec:bounded_FE}, we define the notion of certified everlasting secure bounded collusion-resistant FE and provide constructions.
In~\cref{sec:ccobf}, we define the notion of certified everlasting secure compute-and-compare obfuscation and provide a construction.
In~\cref{sec:pecd}, we define the notion of certified everlasting secure PE and provide a construction.

In~\cref{sec:adaptive_PKFE_public_slot}, we provide a construction of adaptively secure public-slot FE, which is a building block of the construction in~\cref{sec:fecd_everlasting}.
In~\cref{sec:cd_ske_pke}, we define the notion of certified everlasting secure SKE and PKE and provide constructions, which are building blocks of the constructions in~\cref{sec:bounded_FE,sec:RNCE,sec:garbling_scheme}.
In~\cref{sec:RNCE}, we define the notion of certified everlasting secure RNCE and provide a construction, which is a building block of the construction in~\cref{sec:bounded_FE}.
In~\cref{sec:garbling_scheme}, we define the notion of certified everlasting secure garbling schemes and provide a construction, which is a building block of the construction in~\cref{sec:bounded_FE}.
\else
\fi

\ifnum\submission=1
\subsection{Organization}
Although we have more space in this extended abstract version, if we put any self-contained part of the technical sections, it breaks the page limitations. We can arrange (non-self-contained) compressed contents in the main body. However, we need to put some auxiliary contents in a supplemental material in that case.
Hence, we omit all the technical details in this version. Please see the full version~\cite{myEprint:HKMNPY23} for the technical details.
\else
\fi

\else

\subsection{Organization}
If we put any self-contained part of the technical sections, it breaks the page limitations. Hence, we omit all the technical details in this version. Please see the full version~\cite{myEprint:HKMNPY23} for the technical details.
In \cref{sec:tech_overview_CRFE}, we provide a technical overview for constructing certified everlasting secure collusion-resistant FE.
In \cref{sec:tech_overview_bounded_FE}, we provide a technical overview for constructing certified everlasting secure bounded collusion-resistant FE.
In \cref{sec:tech_overview_cc_obfuscation}, we provide a technical overview for constructing compute-and-compare obfuscation.

\section{Technical Overview}\label{sec:technical_overview}

\fi

\ifnum\submission=1
\else

\section{Preliminaries}\label{sec:prel}
\subsection{Notations}\label{sec:notations}
Here we introduce basic notations we will use in this paper.

In this paper, standard math or sans serif font stands for classical algorithms (e.g., $C$ or $\algo{Gen}$) and classical variables (e.g., $x$ or $\keys{pk}$).
Calligraphic font stands for quantum algorithms (e.g., $\qalgo{Gen}$) and calligraphic font and/or the bracket notation for (mixed) quantum states (e.g., $\qstate{q}$ or $\ket{\psi}$).

Let $x\leftarrow X$ denote selecting an element $x$ from a finite set $X$ uniformly at random, and $y\leftarrow A(x)$ denote assigning to $y$ the output of a quantum or probabilistic or deterministic algorithm $A$ on an input $x$.
When we explicitly show that $A$ uses randomness $r$, we write $y\leftarrow A(x;r)$.
When $D$ is a distribution, $x\leftarrow D$ denotes sampling an element $x$ from $D$.
$y\seteq z$ denotes that $y$ is set, defined, or substituted by $z$.
Let $[n]\seteq \{1,\dots,n\}$. Let $\secp$ be a security parameter.
By $[N]_p$ we denote the set of all size-$p$ subsets of $\{1,2\cdots, N\}$.
For classical strings $x$ and $y$, $x\|y$ denotes the concatenation of $x$ and $y$.
For a bit string $s\in\{0,1\}^n$, $s_i$ and $s[i]$ denotes the $i$-th bit of $s$.
QPT stands for quantum polynomial time.
PPT stands for (classical) probabilistic polynomial time.
A function $f: \N \ra \R$ is a negligible function if for any constant $c$, there exists $\secp_0 \in \N$ such that for any $\secp>\secp_0$, $f(\secp) < \secp^{-c}$. We write $f(\secp) \leq \negl(\secp)$ to denote $f(\secp)$ being a negligible function. 


\subsection{Quantum Computations}\label{sec:quantum_computation}
We assume familiarity with the basics of quantum computation and use standard notations. 
Let $\cQ$ be the state space of a single qubit.
$I$ is the two-dimensional identity operator.
$X$ and $Z$ are the Pauli $X$ and $Z$ operators, respectively.
For an operator $A$ acting on a single qubit and a bit string $x\in\bit^n$, we write $A^x$ as $A^{x_1}\otimes A^{x_2}\otimes \cdots A^{x_n}$.
The trace distance between two states $\rho$ and $\sigma$ is given by 
$\frac{1}{2}\norm{\rho-\sigma}_{\tr}$, where $\norm{A}_{\tr}\seteq \tr \sqrt{{\it A}^{\dagger}{\it A}}$ is the trace norm. 
If $\frac{1}{2}\norm{\rho-\sigma}_{\tr}\leq \epsilon$, we say that $\rho$ and $\sigma$ are $\epsilon$-close.
If $\epsilon \le \negl(\lambda)$, then we say that $\rho$ and $\sigma$ are statistically indistinguishable.

\paragraph{Quantum Random Oracle.}
We use the quantum random oracle model (QROM)~\cite{AC:BDFLSZ11} to construct SKE and PKE with certified everlasting deletion in \cref{sec:const_ske_rom,sec:const_pke_rom}, respectively.
In the QROM, a uniformly random function with a certain domain and range is chosen at the beginning, and quantum access to this function is given to all parties including an adversary.
Zhandry showed that quantum access to random functions can be efficiently simulatable by using so-called compressed random oracle technique~\cite{C:Zhandry19}.

We review the one-way to hiding lemma~\cite{JACM:Unruh15,C:AmbHamUnr19}, which is useful when analyzing schemes in the QROM. The following form of the lemma is based on~\cite{C:AmbHamUnr19}.

\begin{lemma}[One-Way to Hiding Lemma \cite{C:AmbHamUnr19}]\label{lem:one-way_to_hiding}
Let $S\subseteq \mathcal{X}$ be a random subset of $\mathcal{X}$. Let $G,H:\mathcal{X}\rightarrow\mathcal{Y}$ be random functions satisfying $\forall x\notin S$ $[G(x)=H(x)]$. Let $z$ be a random classical bit string. 
($S,G,H,z$ may have an arbitrary joint distribution.)
Let $\qA$ be an oracle-aided quantum algorithm that makes at most $q$ quantum queries.
Let $\cB$ be an algorithm that on input $z$ chooses $i\leftarrow[q]$, runs $\qA^{H}(z)$, measures $\qA$'s $i$-th query, and outputs the measurement outcome.
Then we have
$
    \abs{\Pr[\qA^G(z)=1]-\Pr[\qA^H(z)=1]}\leq2q\sqrt{\Pr[\cB^H(z)\in S]}.
$
\end{lemma}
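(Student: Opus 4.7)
The plan is to follow the standard hybrid argument over the $q$ query positions, first conditioning on fixed values of $(S,G,H,z)$ and averaging at the end. Write $\qA$ canonically as $U_q O U_{q-1} O \cdots U_1 O U_0$ acting on $\ket{0}$ together with the input register holding $z$, where $O$ stands for the (fixed) oracle used. Let $\ket{\psi_i^H}$ denote the joint state of $\qA$'s registers immediately before the $i$-th query when running $\qA^H(z)$, and define the interpolating states
\begin{equation*}
\ket{\phi_i}\seteq U_q\,O_G\,U_{q-1}\,O_G \cdots U_{i+1}\,O_G\,U_i\,O_H\,U_{i-1}\,O_H \cdots U_1\,O_H\,U_0\ket{0,z},
\end{equation*}
so that $\ket{\phi_0}$ corresponds to $\qA^G(z)$ and $\ket{\phi_q}$ corresponds to $\qA^H(z)$.

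Next, I would bound the one-step distance $\norm{\ket{\phi_i}-\ket{\phi_{i-1}}}$. Since the two hybrids differ only in whether the $i$-th query applies $O_H$ or $O_G$ to the state $\ket{\psi_i^H}$, and since $O_H$ and $O_G$ act identically on basis vectors $\ket{x,y}$ with $x\notin S$, letting $P_S$ be the projection onto $\mathrm{span}\{\ket{x,y}:x\in S\}$ on the query register we obtain
\begin{equation*}
\norm{\ket{\phi_i}-\ket{\phi_{i-1}}} \;=\; \norm{(O_H-O_G)\ket{\psi_i^H}} \;\le\; 2\,\norm{P_S\ket{\psi_i^H}}.
\end{equation*}
Write $p_i \seteq \norm{P_S\ket{\psi_i^H}}^2$, the probability that measuring the query register of $\qA^H(z)$ just before its $i$-th query gives an element of $S$. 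Then by the triangle inequality and the Cauchy--Schwarz inequality,
\begin{equation*}
\norm{\ket{\phi_q}-\ket{\phi_0}} \;\le\; 2\sum_{i=1}^q \sqrt{p_i} \;\le\; 2\sqrt{q\sum_{i=1}^q p_i}.
\end{equation*}

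By definition of $\cB$ (which picks $i\gets[q]$ uniformly and measures the $i$-th query), conditioning on $(S,G,H,z)$ gives $\Pr[\cB^H(z)\in S\mid S,G,H,z]=\frac{1}{q}\sum_i p_i$, hence $\sum_i p_i = q\cdot \Pr[\cB^H(z)\in S\mid S,G,H,z]$. Substituting yields $\norm{\ket{\phi_q}-\ket{\phi_0}}\le 2q\sqrt{\Pr[\cB^H(z)\in S\mid S,G,H,z]}$. Since output probabilities of a projective measurement differ by at most the Euclidean distance of the pure states, this bounds $|\Pr[\qA^G(z)=1\mid S,G,H,z]-\Pr[\qA^H(z)=1\mid S,G,H,z]|$ by the same quantity. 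Finally I would take the expectation over the joint distribution of $(S,G,H,z)$ and apply Jensen's inequality (concavity of $\sqrt{\cdot}$) to pull the expectation inside the square root, yielding the stated bound $2q\sqrt{\Pr[\cB^H(z)\in S]}$.

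The main obstacle is not really a mathematical one but a bookkeeping one: getting the hybrid indexing right so that the ``switched'' query acts on exactly the state $\ket{\psi_i^H}$ that $\cB$'s measurement samples from, which is what makes the relation $\sum_i p_i = q\cdot \Pr[\cB^H(z)\in S]$ come out cleanly. Defining the hybrids as above (flipping queries $i+1,\dots,q$ from $H$ to $G$) ensures the intermediate state just before the differing query is exactly $\ket{\psi_i^H}$; any other interpolation order would force us to reason about the $\qA^G$-evolved state instead and break the link to $\cB$'s sampling distribution.
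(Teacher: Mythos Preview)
The paper does not prove this lemma; it is stated in the preliminaries as a known result cited from \cite{C:AmbHamUnr19}. Your proof is the standard BBBV/Unruh-style hybrid argument and is correct, yielding exactly the stated bound; the cited reference actually derives the same inequality via a different route (the ``semi-classical oracle'' technique), but the elementary hybrid argument you give is the original and perfectly adequate approach.
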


\paragraph{Quantum Teleportation.}
We use quantum teleportation to prove that our construction of the FE scheme in \cref{sec:const_fe_adapt} satisfies adaptive security.
\begin{lemma}[Quantum Teleportation]\label{lem:quantum_teleportation}
Suppose that we have $N$ Bell pairs between registers $A$ and $B$, i.e.,$\frac{1}{\sqrt{2^N}}\sum_{s\in\bit^N}\allowbreak \ket{s}_A \otimes \ket{s}_B$, and let $\rho$ be an arbitrary $N$-qubit quantum state in register $C$. Suppose that we measure $j$-th qubits of $C$ and $A$ in the Bell basis and let $(x_j,z_j)\in\bit\times\bit$ be the measurement outcome for all $j\in[N]$.
Let $x\seteq x_1||x_2||\cdots ||x_N$ and $z\seteq z_1||z_2||\cdots ||z_N$.
Then $(x,z)$ is uniformly distributed  over $\bit^N\times \bit^N$.
Moreover, conditioned on the measurement outcome $(x,z)$, the resulting state in $B$ is $X^xZ^z\rho Z^zX^x$.
\end{lemma}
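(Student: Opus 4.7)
The plan is to reduce the $N$-qubit teleportation claim to the single-qubit case and then verify the single-qubit case by a direct calculation. Since the $N$ Bell pairs factor as $\bigotimes_{j=1}^{N} \ket{\Phi^+}_{A_j B_j}$ (where $\ket{\Phi^+} = \frac{1}{\sqrt{2}}(\ket{00}+\ket{11})$) and the Bell-basis measurements on the pairs $(C_j, A_j)$ act on disjoint qubits, I would perform the measurements sequentially for $j=1,\dots,N$. At each step the other qubits of $\rho$ in register $C$, together with the qubits of $B$, act as an auxiliary system $D$ that may be entangled with $C_j$. So the key intermediate statement I need is the following single-qubit version: for \emph{any} state $\sigma$ on $C_jD$, the outcome $(x_j, z_j)$ occurs with probability exactly $1/4$ regardless of $\sigma$, and conditioned on this outcome the post-measurement state on $B_j D$ equals $(X^{x_j} Z^{z_j})_{B_j}\, \sigma \,(Z^{z_j} X^{x_j})_{B_j}$, where register $B_j$ takes over the role previously played by $C_j$.

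For the single-qubit case, the main tool is the ricochet identity $(U\otimes I)\ket{\Phi^+} = (I \otimes U^{T})\ket{\Phi^+}$, valid for any single-qubit operator $U$. Writing the four Bell states as $\ket{\Phi_{x,z}} = (I \otimes X^{x} Z^{z})\ket{\Phi^+}$ and using $X^{T}=X$, $Z^{T}=Z$, projecting the joint state $\sigma_{C_jD} \otimes \ket{\Phi^+}\bra{\Phi^+}_{A_j B_j}$ onto $\ket{\Phi_{x,z}}_{C_j A_j}$ can be rewritten as an action $(Z^{z} X^{x})_{B_j}$ on the qubit originally in $C_j$, now teleported to $B_j$, with a normalization factor of $1/2$. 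Squaring gives the outcome probability $1/4$ and the stated conjugation by $X^{x} Z^{z}$ on $B_j$. Since this calculation is completely oblivious to the structure of $D$, it applies verbatim when $\sigma$ is entangled with other qubits.

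Iterating the single-qubit result for $j=1,\dots,N$ yields the full claim. The joint probability of any outcome sequence $((x_1,z_1),\dots,(x_N,z_N))$ multiplies to $1/4^{N}$, so $(x,z)$ is uniform on $\bit^{N}\times \bit^{N}$, and the conditional state on $B=B_1\cdots B_N$ is obtained by conjugating each qubit of $\rho$ (after teleportation to $B$) by $X^{x_j} Z^{z_j}$, which gives $X^{x} Z^{z}\, \rho \, Z^{z} X^{x}$ as claimed. The main obstacle is the possibility that $\rho$ is entangled across its $N$ qubits; this is what forces the single-qubit statement to be formulated with an arbitrary auxiliary system $D$, after which the induction on $j$ is routine.
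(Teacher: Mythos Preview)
The paper states this lemma as a standard fact from quantum information theory and does not give a proof. Your proposal is a correct and standard argument: factor the $N$ Bell pairs as a tensor product, reduce to the single-qubit case with an auxiliary system to handle entanglement across the qubits of $\rho$, verify the single-qubit case via the ricochet identity, and iterate. There is nothing to compare against in the paper itself.
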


\paragraph{CSS code.}
We explain basics of CSS codes. CSS codes are used only in the constructions of SKE and PKE with certified everlasting deletion (\cref{sec:const_ske_wo_rom} and \cref{sec:const_pke_wo_rom}),
and therefore readers who are not interested in these constructions can skip this paragraph.
A CSS code with parameters $q,k_1,k_2, t$ consists of two classical linear binary codes. One is a $[q,k_1]$ code $C_1$
\footnote{A $[q,k]$ code is a code consisting of $2^k$ codewords, each of length $q$.
That is, a $k$-dimensional subspace of $\bit^q={\rm GF}(2)^q$.
}
and the other is a $[q,k_2]$ code. 
Both $C_1$ and $C_2^\bot$ can correct up to $t$ errors, 
and they satisfy $C_2\subseteq C_1$.
We require that the parity check matrices of $C_1,C_2$ are computable in polynomial time, and that error correction can be performed in polynomial time. 
Given two binary codes $C\subseteq D$, let $D/C\seteq\{x \mbox{ mod } C:x\in D\}$.
Here, mod $C$ is a linear polynomial-time operation on $\bit^q$ with the following three properties.
First, $x$ mod $C=x'$ mod $C$ if and only if $x-x'\in C$ for any $x,x'\in\bit^q$.
Second, for any binary code $D$ such that $C\subseteq D$,
$x$ mod $C\in D$ for any $x\in D$.
Third,  ($x$ mod $C$) mod $C$= $x$ mod $C$ for any $x\in\bit^q$.

\subsection{Cryptographic Tools}\label{sec:crypt_tool}
In this section, we review the cryptographic tools used in this paper.

\begin{lemma}[Difference Lemma~\cite{EPRINT:Shoup04}]\label{lem:defference}
Let $A,B,F$ be events defined in some probability distribution, and suppose 
$\Pr[A\wedge \overline{F}]=\Pr[B\wedge \overline{F}]$.
Then $\abs{\Pr[A]-\Pr[B]}\leq \Pr[F]$.
\end{lemma}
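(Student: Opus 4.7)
The plan is to decompose each of $\Pr[A]$ and $\Pr[B]$ using the law of total probability with respect to the event $F$ and its complement $\overline{F}$, then use the hypothesis to cancel the $\overline{F}$-parts and bound the remaining $F$-parts trivially.

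First I would write $\Pr[A] = \Pr[A \wedge F] + \Pr[A \wedge \overline{F}]$ and $\Pr[B] = \Pr[B \wedge F] + \Pr[B \wedge \overline{F}]$, which holds since $F$ and $\overline{F}$ partition the sample space. Subtracting these two equalities and using the hypothesis $\Pr[A \wedge \overline{F}] = \Pr[B \wedge \overline{F}]$, the $\overline{F}$-terms cancel, leaving $\Pr[A] - \Pr[B] = \Pr[A \wedge F] - \Pr[B \wedge F]$. Taking absolute values and applying the triangle inequality yields $|\Pr[A] - \Pr[B]| \le \max(\Pr[A \wedge F], \Pr[B \wedge F])$. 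Since $A \wedge F \subseteq F$ and $B \wedge F \subseteq F$, monotonicity of probability gives $\Pr[A \wedge F] \le \Pr[F]$ and $\Pr[B \wedge F] \le \Pr[F]$, which concludes the argument.

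There is no real obstacle here; the lemma is an elementary manipulation of probabilities, and the only "care" needed is to note that the difference of two nonnegative quantities each bounded by $\Pr[F]$ is itself bounded by $\Pr[F]$ in absolute value (which is immediate). The entire proof is a few lines of elementary probability.
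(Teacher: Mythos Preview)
Your proof is correct and is exactly the standard argument for this lemma. The paper does not actually prove it; it simply states the lemma with a citation to Shoup, so there is nothing further to compare.
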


\paragraph{Pseudorandom generators.}
\begin{definition}[Pseudorandom Generator]\label{def:prg}
A pseudorandom generator (PRG) $\PRG: \zo{\secp} \ra \zo{\secp + \ell(\secp)}$ with stretch $\ell(\secp)$ ($\ell$ is some polynomial function) is a polynomial-time computable function that satisfies the following.
For any QPT adversary $\qA$, it holds that
\[\abs{\Pr[\qA(\PRG(s))=1 \mid s \chosen \cU_\secp]- \Pr[\qA(r)\mid r \chosen \cU_{\secp+\ell(\secp)}]}\le \negl(\secp),\]
where $\cU_{m}$ denotes the uniform distribution over $\zo{m}$.
\end{definition}

\begin{theorem}[\cite{SIAMCOMP:HILL99}]\label{thm:owf_prg}
If there exists a OWF, there exists a PRG.
\end{theorem}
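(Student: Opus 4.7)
The plan is to invoke the HILL construction~\cite{SIAMCOMP:HILL99}, verifying that every ingredient carries over to the post-quantum setting. First, I would establish a hardcore predicate. Given a OWF $f:\bit^n\to\bit^*$, define $f'(x,r)\seteq(f(x),r)$ with $|r|=n$; this is still a OWF against QPT adversaries. By the quantum Goldreich--Levin theorem (Adcock--Cleve), the inner product $\langle x,r\rangle \bmod 2$ is computationally indistinguishable from a uniform bit given $f'(x,r)$, even against QPT distinguishers. This provides the single ``bit of pseudorandomness'' needed beyond the $n$ bits of the seed $(x,r)$.

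Second, to handle the fact that $f$ need not be a permutation, I would employ the HILL pseudoentropy framework. The idea is to apply a pairwise-independent hash function $h$ to $f(x)$, extracting a string of length close to the min-entropy of $f(X)$ that is statistically close to uniform by the Leftover Hash Lemma, and to combine this with sufficiently many independent Goldreich--Levin hardcore bits obtained from fresh vectors $r_1,\ldots,r_k$. A careful accounting shows that the combined output length exceeds the seed length, so that the resulting map $G:(x,r_1,\ldots,r_k,h)\mapsto(h(f(x)),r_1,\ldots,r_k,h,\langle x,r_1\rangle,\ldots,\langle x,r_k\rangle)$ is a PRG of one-bit stretch against QPT adversaries. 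Since $H_\infty(f(X))$ is unknown, one actually tries all polynomially many guesses and takes a direct product; a hybrid argument then gives security.

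Third, to reach arbitrary polynomial stretch $\ell(\secp)$, I would iterate the one-bit-stretch PRG in the standard way (Goldreich--Micali--Yao), feeding all but one output bit back as the next seed. A standard hybrid argument establishes security against QPT adversaries with only a polynomial loss.

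The main obstacle is the extraction step: matching entropy levels precisely and simulating the unknown $H_\infty(f(X))$ is the delicate technical core of HILL. Fortunately this part is almost entirely information-theoretic (Leftover Hash Lemma), hence unaffected by quantum adversaries; the only genuinely quantum step is the Goldreich--Levin reduction, and a post-quantum version of it is by now folklore. The hybrid arguments used throughout incur only polynomial losses, which is harmless against QPT distinguishers.
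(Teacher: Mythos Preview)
The paper does not give a proof of this theorem; it is stated as a preliminary with a bare citation to H\aa stad--Impagliazzo--Levin--Luby~\cite{SIAMCOMP:HILL99}. Your proposal is therefore not competing with any argument in the paper, but rather supplying an outline the authors chose to omit.

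As a sketch of the HILL construction your plan is broadly accurate, and you are right to flag that the paper needs PRGs secure against QPT adversaries (Definition~\ref{def:prg}), so the classical citation is being used in a post-quantum sense. The one genuinely quantum-sensitive step is Goldreich--Levin, and invoking Adcock--Cleve for that is appropriate; the Leftover Hash Lemma and the hybrid arguments are indeed unaffected. One caution: the actual HILL argument is more intricate than your second paragraph suggests---it does not quite reduce to ``hash $f(x)$ and append GL bits'' but goes through false-entropy generators, next-bit unpredictability, and a rather delicate construction to turn a pseudoentropy gap into a full PRG. Your description conflates several stages and would not literally work as written (for instance, $h(f(x))$ need not be close to uniform when $f$ is highly non-injective on part of the domain, and the direct product over entropy guesses requires more care than ``try all guesses''). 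That said, as a high-level plan pointing to the right references it is fine; for a complete proof one would simply defer to~\cite{SIAMCOMP:HILL99} together with the quantum Goldreich--Levin theorem, which is exactly what the paper does implicitly.
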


\paragraph{Pseudorandom Functions.}

\begin{definition}[Pseudorandom Function]\label{def:prf}
Let $\{\algo{F}_{K}: \bin^{\ell_1} \ra \allowbreak \bin^{\ell_2} \mid K \in \bin^\secp\}$ be a family of polynomially computable functions, where $\ell_1$ and $\ell_2$ are some polynomials of $\secp$.
We say that $\prf$ is a pseudorandom function (PRF) family if, for any QPT distinguisher $\qA$, there exists $\negl(\cdot)$ such that it holds that
\begin{align}
\abs{\Pr[\qA^{\algo{F}_{K}(\cdot)}(1^\secp) = 1 \mid K \chosen \bin^{\secp}]
-\Pr[\qA^{\algo{R}(\cdot)}(1^\secp) = 1 \mid \algo{R} \chosen \cU]
} \le\negl(\secp),
\end{align}
where $\cU$ is the set of all functions from $\bin^{\ell_1}$ to $\bin^{\ell_2}$.
\end{definition}

\begin{theorem}[\cite{JACM:GolGolMic86}]\label{thm:prf-owf} If one-way functions exist, then for all efficiently computable functions $n(\lambda)$ and $m(\lambda)$, there exists a PRF that maps $n(\lambda)$ bits to $m(\lambda)$ bits.
\end{theorem}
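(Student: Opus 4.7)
The plan is to invoke the classical Goldreich--Goldwasser--Micali (GGM) tree construction, using \cref{thm:owf_prg} to bootstrap a length-doubling PRG from the assumed one-way function. First, I would take the PRG $\PRG : \bit^\secp \to \bit^{2\secp}$ and split its output as $\PRG(s) = \PRG_0(s) \concat \PRG_1(s)$, where $\PRG_0, \PRG_1 : \bit^\secp \to \bit^\secp$. Then, on input a key $K \in \bit^\secp$ and an input $x = x_1 \ldots x_{n(\secp)} \in \bit^{n(\secp)}$, the PRF is defined by walking a binary tree of depth $n(\secp)$: set $s_0 \seteq K$ and iteratively compute $s_i \seteq \PRG_{x_i}(s_{i-1})$ for $i = 1, \ldots, n(\secp)$, obtaining a leaf value $s_{n(\secp)} \in \bit^\secp$. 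Since the output length $m(\secp)$ may differ from $\secp$, I would apply one more PRG-based stretching step at the leaf, expanding $s_{n(\secp)}$ to $m(\secp)$ bits using an appropriately instantiated length-expanding PRG (again obtainable from \cref{thm:owf_prg} together with standard sequential composition).

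The security proof proceeds by a standard hybrid argument over the depth of the tree. In the $i$-th hybrid, the labels of all nodes at level $i$ are replaced by truly random independent strings, while the tree below each such node is still expanded via $\PRG_0, \PRG_1$. Moving from hybrid $i$ to hybrid $i+1$ corresponds to replacing, for each adversarially queried path through the tree, a PRG output by a uniformly random string; indistinguishability of adjacent hybrids follows from the pseudorandomness of $\PRG$ against QPT distinguishers (as stated in \cref{def:prg}), combined with a union bound over the polynomially many distinct level-$i$ nodes visited by the adversary's queries. After $n(\secp)$ hybrids, the leaf values are independent uniform strings, and the final post-processing step is indistinguishable from uniform by one more PRG invocation, yielding the random function distribution on the queried points.

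The main technical point to be careful about is that our adversary $\qA$ in \cref{def:prf} is a \emph{quantum} polynomial-time algorithm with classical oracle access, and we need the PRG security of \cref{def:prg} to hold against QPT distinguishers, which is exactly what \cref{thm:owf_prg} provides (the HILL construction is black-box and its security reduction is classical, hence lifts to the quantum setting when the underlying one-way function is quantumly secure). Thus the hybrid argument goes through essentially unchanged from the classical GGM analysis, with each hybrid step reduced to distinguishing $\PRG(s)$ from uniform for a single fresh seed $s$. The main obstacle — really a bookkeeping point rather than a deep one — is ensuring that the reduction simulating hybrid $i+1$ from a PRG challenger can efficiently answer the adversary's adaptive classical queries using only polynomially many PRG outputs; this is handled by lazily instantiating tree nodes on demand, exactly as in the classical proof. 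Hence the theorem follows.
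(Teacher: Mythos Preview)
The paper does not prove this theorem; it is stated as a known result with a citation to \cite{JACM:GolGolMic86}, so there is no paper proof to compare against. Your sketch of the GGM tree construction with a hybrid argument over tree levels is exactly the argument from the cited reference, and your handling of the output-length adjustment and the lazy-instantiation bookkeeping for adaptive queries is correct.

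One small point worth flagging: you assert that the adversary in \cref{def:prf} has \emph{classical} oracle access, but the definition as written does not make this explicit. If quantum (superposition) oracle access were intended, the level-by-level hybrid argument you describe would not suffice as stated, and one would instead need Zhandry's ``small-range distribution'' technique. However, since the paper cites \cite{JACM:GolGolMic86} rather than the quantum-query analogue, and since the PRF is only used in the paper via classical evaluation inside circuits (in the Ananth--Sahai-style construction of \cref{sec:adaptive_PKFE_pub_slot}), classical-query security is what is needed, and your reading is consistent with the paper's usage.
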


\paragraph{Secret Key Encryption (SKE).}
\begin{definition}[Secret Key Encryption (Syntax)]\label{def:ske}
Let $\lambda$ be a security parameter and let $p$, $q$, $r$ and $s$ be some polynomials.
A secret key encryption scheme is a tuple of algorithms
$\Sigma=(\keygen,\Enc,\Dec)$ with plaintext space $\Ms\seteq\{0,1\}^n$, ciphertext space $\Cs\seteq \bit^{p(\lambda)}$,
and secret key space $\mathcal{SK}\seteq\{0,1\}^{q(\lambda)}$.
\begin{description}
    \item[$\keygen (1^\secp) \ra \sk$:]
    The key generation algorithm takes the security parameter $1^\secp$ as input and outputs a secret key $\sk \in \mathcal{SK}$.
    \item[$\Enc(\sk,m) \ra \ct$:]
    The encryption algorithm takes $\sk$ and a plaintext $m\in\Ms$ as input, and outputs a ciphertext $\ct\in \Cs$.
    \item[$\Dec(\sk,\ct) \ra m^\prime~or~\bot$:]
    The decryption algorithm takes $\sk$ and $\ct$ as input, and outputs a plaintext $m^\prime \in \Ms$ or $\bot$.
\end{description}
\end{definition}

We require that a SKE scheme satisfies correctness defined below.
\begin{definition}[Correctness for SKE]\label{def:correctness_ske}
There are two types of correctness, namely,
decryption correctness and special correctness.
\paragraph{Decryption Correctness:}
There exists a negligible function $\negl$ such that for any $\secp\in \N$ and $m\in\Ms$, 
\begin{align}
\Pr\left[
\Dec(\sk,\ct)\neq m
\ \middle |
\begin{array}{ll}
\sk\lrun \keygen(1^\secp)\\
\ct \lrun \Enc(\sk,m)
\end{array}
\right] 
\leq\negl(\secp).
\end{align}

\paragraph{Special Correctness:}
There exists a negligible function $\negl$ such that for any $\secp\in \N$ and $m\in\Ms$, 
\begin{align}
\Pr\left[
\Dec(\sk_2,\ct)\neq \bot
\ \middle |
\begin{array}{ll}
\sk_2,\sk_1\lrun \keygen(1^\secp)\\
\ct \lrun \Enc(\sk_1,m)
\end{array}
\right] 
\leq\negl(\secp).
\end{align}
\end{definition}
\begin{remark}
In the original definition of SKE schemes, only decryption correctness is required.
In this paper, however, we additionally require special correctness as Lindell and Pinkas~\cite{JC:LinPin09}.
This is because we need special correctness for the construction of garbling in \cref{sec:const_garbling}.
In fact, special correctness can be easily satisfied as well as shown by Lindell and Pinkas~\cite{JC:LinPin09}.
\end{remark}

As security of SKE schemes, we consider OW-CPA security or IND-CPA security defined below.
\begin{definition}[OW-CPA Security for SKE]\label{def:OW-CPA_security_ske}
Let $\ell$ be a polynomial of the security parameter $\secp$.
Let $\Sigma=(\keygen,\Enc,\Dec)$ be a SKE scheme.
We consider the following security experiment $\expb{\Sigma,\qA}{ow}{cpa}(\secp)$ against a QPT adversary $\qA$.
\begin{enumerate}
    \item The challenger computes $\sk \la \keygen(1^\secp)$.
    \item $\qA$ sends an encryption query $m$ to the challenger.
    The challenger computes $\ct\la\Enc(\sk,m)$ and returns $\ct$ to $\qA$.
    $\qA$ can repeat this process polynomially many times.
    \item The challenger samples $(m^1,\cdots, m^\ell)\la \Ms^\ell$, computes $\ct^i \la \Enc(\sk,m^i)$ for all $i\in[\ell]$ and sends $\{\ct^i\}_{i\in[\ell]}$ to $\qA$.
    \item $\qA$ sends an encryption query $m$ to the challenger.
    The challenger computes $\ct\la\Enc(\sk,m)$ and returns $\ct$ to $\qA$.
    $\qA$ can repeat this process polynomially many times.
    \item $\qA$ outputs $m'$.
    \item The output of the experiment is $1$ if $m'=m^i$ for some $i\in[\ell]$. Otherwise, the output of the experiment is $0$.
\end{enumerate}
We say that the $\Sigma$ is OW-CPA secure if, for any QPT $\qA$, it holds that
\begin{align}
\advb{\Sigma,\qA}{ow}{cpa}(\secp)
\seteq \Pr[ \expb{\Sigma,\qA}{ow}{cpa}(\secp)=1]\leq \negl(\secp).
\end{align}
Note that we assume $1/|\Ms|$ is negligible.
\end{definition}

\begin{definition}[IND-CPA Security for SKE]\label{def:IND-CPA_security_ske}
Let $\Sigma=(\keygen,\Enc,\Dec)$ be a SKE scheme.
We consider the following security experiment $\expb{\Sigma,\qA}{ind}{cpa}(\secp,b)$ against a QPT adversary $\qA$.
\begin{enumerate}
    \item The challenger computes $\sk \la \keygen(1^\secp)$.
    \item $\qA$ sends an encryption query $m$ to the challenger.
    The challenger computes $\ct\la\Enc(\sk,m)$ and returns $\ct$ to $\qA$.
    $\qA$ can repeat this process polynomially many times.
    \item $\qA$ sends $(m_0,m_1)\in\cM^2$ to the challenger.
    \item The challenger computes $\ct \la \Enc(\sk,m_b)$ and sends $\ct$ to $\qA$.
    \item $\qA$ sends an encryption query $m$ to the challenger.
    The challenger computes $\ct\la\Enc(\sk,m)$ and returns $\ct$ to $\qA$.
    $\qA$ can repeat this process polynomially many times.
    \item $\qA$ outputs $b'\in \bit$. This is the output of the experiment.
\end{enumerate}
We say that $\Sigma$ is IND-CPA secure if, for any QPT $\qA$, it holds that
\begin{align}
\advb{\Sigma,\qA}{ind}{cpa}(\secp)
\seteq \abs{\Pr[ \expb{\Sigma,\qA}{ind}{cpa}(\secp, 0)=1] - \Pr[ \expb{\Sigma,\qA}{ind}{cpa}(\secp, 1)=1] }\leq \negl(\secp).
\end{align}
\end{definition}
It is well-known that IND-CPA security implies OW-CPA security. 
A SKE scheme exists if there exists a pseudorandom function.

\begin{definition}[Ciphertext Pseudorandomness for SKE]\label{def:ske_pseudorandomct}
Let $\Sigma=(\keygen,\Enc,\Dec)$ be a SKE scheme whose ciphertext space is $\zo{\ell}$.
We consider the following security experiment $\expb{\Sigma,\qA}{ct}{pr}(\secp,b)$ against a QPT adversary $\qA$.
\begin{enumerate}
    \item The challenger computes $\sk \la \keygen(1^\secp)$.
    \item $\qA$ sends an encryption query $m_i$ to the challenger.
    If $b=0$, the challenger computes $\ct_i\la\Enc(\sk,m_i)$ and returns $\ct_i$ to $\qA$.
    If $b=1$, the challenger chooses $\ct_i \chosen \zo{\ell}$ and returns $\ct_i$ to $\qA$.
    $\qA$ can repeat this process polynomially many times.
    \item $\qA$ outputs $b'\in \bit$. This is the output of the experiment.
\end{enumerate}
We say that $\Sigma$ is ciphertext pseudorandom if, for any QPT $\qA$, it holds that
\begin{align}
\advb{\Sigma,\qA}{ct}{pr}(\secp)
\seteq \abs{\Pr[ \expb{\Sigma,\qA}{ct}{cr}(\secp, 0)=1] - \Pr[ \expb{\Sigma,\qA}{ct}{pr}(\secp, 1)=1] }\leq \negl(\secp).
\end{align}

\end{definition}

\begin{theorem}\label{thm:pseudorandom_ske}
If OWFs exist, there exists an SKE scheme that is ciphertext pseudorandom.
\end{theorem}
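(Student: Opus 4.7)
The plan is to instantiate the standard PRF-based symmetric encryption scheme and verify that it additionally satisfies ciphertext pseudorandomness as defined in Definition 2.12. By Theorem 2.3 and Theorem 2.8, OWFs yield a PRF family $\{F_K\}$ with $F_K:\{0,1\}^{\secp}\to\{0,1\}^{n}$, so I may take this as my starting primitive.

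Concretely, I would define $\keygen(1^\secp)$ to sample $K\gets\{0,1\}^\secp$, define $\Enc(K,m)$ to sample $r\gets\{0,1\}^\secp$ and output $\ct=(r,\,F_K(r)\oplus m)\in\{0,1\}^{2\secp}$ (padding $m$ or adjusting output length so that plaintext length $n=\secp$), and define $\Dec(K,(r,c))=c\oplus F_K(r)$. Decryption correctness is immediate. To establish ciphertext pseudorandomness, I would argue via a sequence of two hybrids. In the first hop, replace $F_K$ by a truly random function $R:\{0,1\}^\secp\to\{0,1\}^\secp$; any QPT distinguisher against this hop gives a PRF distinguisher contradicting Definition 2.7, so the hop costs at most a negligible quantity. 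In the second hop, argue that the resulting ciphertexts $(r_i,R(r_i)\oplus m_i)$ are statistically close to uniform strings on $\{0,1\}^{2\secp}$: the $r_i$'s are already uniform and independent by construction, and conditioned on the (overwhelmingly likely) event that all the queried $r_i$'s are distinct, each $R(r_i)$ is a fresh uniform $\secp$-bit string independent of everything else, making $R(r_i)\oplus m_i$ uniform and independent of $m_i$.

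The only subtle point is bounding the collision probability among the $r_i$'s. If $\qA$ makes $q=\poly(\secp)$ encryption queries, the probability that two of the $r_i$'s coincide is at most $\binom{q}{2}\cdot 2^{-\secp}=\negl(\secp)$ by a union bound (the classical birthday bound suffices since $r_i$ is sampled classically inside the challenger). Conditioning on the complementary event, the joint distribution of the $q$ ciphertexts in the $R$-hybrid is exactly uniform on $\{0,1\}^{2q\secp}$, which coincides with the distribution in the $b=1$ branch of $\expb{\Sigma,\qA}{ct}{pr}$. Combining the two hops through the triangle inequality (and Lemma 2.2 to split events) yields $\advb{\Sigma,\qA}{ct}{pr}(\secp)\le \negl(\secp)$.

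I do not anticipate a serious obstacle: the argument is a textbook PRF-hybrid reduction, and the main care is simply to make sure the reduction is well-defined in the quantum adversary setting. Since the adversary in Definition 2.12 is a QPT algorithm that interacts only classically with the challenger (encryption queries are classical), the PRF distinguisher built from $\qA$ is also a standard QPT distinguisher making only classical oracle queries, so Definition 2.7 applies directly without needing quantum-accessible PRFs. This yields Theorem 2.13.
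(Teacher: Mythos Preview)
Your proposal is correct and is exactly the standard PRF-based construction and hybrid argument one would expect; the paper itself does not actually prove this theorem but merely states it as a known fact, so there is no paper proof to compare against. Your observation that the adversary's oracle access is purely classical (so ordinary post-quantum PRFs suffice, without needing quantum-query security) is the only point requiring care, and you handle it correctly.
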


\paragraph{Public Key Encryption (PKE).}
\begin{definition}[Public Key Encryption (Syntax)]\label{def:pke}
Let $\lambda$ be a security parameter and let $p$, $q$ and $r$ be some polynomials.
A public key encryption scheme is a tuple of algorithms $\Sigma=(\keygen,\Enc,\Dec)$ with 
plaintext space $\Ms\seteq\{0,1\}^n$, ciphertext space $\Cs\seteq \bit^{p(\lambda)}$, public key space $\mathcal{PK}\seteq\{0,1\}^{q(\lambda)}$ 
and secret key space $\mathcal{SK}\seteq \bit^{r(\lambda)}$.
\begin{description}
    \item[$\keygen(1^\secp)\ra (\pk,\sk)$:] The key generation algorithm takes as input the security parameter $1^\secp$ and outputs a public key $\pk\in\mathcal{PK}$ and a secret key $\sk\in\mathcal{SK}$.
    \item[$\Enc(\pk,m) \ra \ct$:] The encryption algorithm takes as input $\pk$ and a plaintext $m \in \Ms$, and outputs a ciphertext $\ct\in\Cs$.
    \item[$\Dec(\sk,\ct) \ra m^\prime \mbox{ or } \bot$:] The decryption algorithm takes as input $\sk$ and $\ct$, and outputs a plaintext $m^\prime$ or $\bot$.
\end{description}
\end{definition}

We require that a PKE scheme satisfies decryption correctness defined below.
\begin{definition}[Decryption Correctness for PKE]\label{def:correctness_pke}
There exists a negligible function $\negl$ such that for any $\secp\in \N$, $m\in\Ms$,
\begin{align}
\Pr\left[
\Dec(\sk,\ct)\ne m
\ \middle |
\begin{array}{ll}
(\pk,\sk)\lrun \keygen(1^\secp)\\
\ct \lrun \Enc(\pk,m)
\end{array}
\right] 
\le\negl(\secp).
\end{align}
\end{definition}

As security, we consider OW-CPA security or IND-CPA security defined below.
\begin{definition}[OW-CPA Security for PKE]\label{def:OW-CPA_security_pke}
Let $\ell$ be a polynomial of the security parameter $\secp$.
Let $\Sigma=(\keygen,\Enc,\Dec)$ be a PKE scheme.
We consider the following security experiment $\expb{\Sigma,\qA}{ow}{cpa}(\secp)$ against a QPT adversary $\qA$.
\begin{enumerate}
    \item The challenger computes $(\pk,\sk) \la \keygen(1^\secp)$.
    \item The challenger samples $(m^1,\cdots, m^\ell)\la \Ms^\ell$, computes $\ct^i \la \Enc(\pk,m^i)$ for all $i\in[\ell]$ and sends $\{\ct^i\}_{i\in[\ell]}$ to $\qA$.
    \item $\qA$ outputs $m'$.
    \item The output of the experiment is $1$ if $m'=m^i$ for some $i\in[\ell]$. Otherwise, the output of the experiment is $0$.
\end{enumerate}
We say that $\Sigma$ is OW-CPA secure if, for any QPT $\qA$, it holds that
\begin{align}
\advb{\Sigma,\qA}{ow}{cpa}(\secp)
\seteq \Pr[ \expb{\Sigma,\qA}{ow}{cpa}(\secp)=1]\leq \negl(\secp).
\end{align}
Note that we assume $1/|\Ms|$ is negligible.
\end{definition}

\begin{definition}[IND-CPA Security for PKE]\label{def:IND-CPA_pke}
Let $\Sigma=(\keygen,\Enc,\Dec)$ be a PKE scheme.
We consider the following security experiment $\expb{\Sigma,\qA}{ind}{cpa}(\secp,b)$ against a QPT adversary $\qA$.

\begin{enumerate}
    \item The challenger generates $(\pk,\sk)\lrun \keygen(1^{\secp})$, and sends $\pk$ to $\qA$.
    \item $\qA$ sends $(m_0,m_1)\in\cM^2$ to the challenger.
    \item The challenger computes $\ct \lrun \Enc(\pk,m_b)$, and sends $\ct$ to $\qA$.
    \item $\qA$ outputs $b'\in\bit$. This is the output of the experiment.
\end{enumerate}
We say that $\Sigma$ is IND-CPA secure if, for any QPT $\qA$, it holds that
\begin{align}
\advb{\Sigma,\qA}{ind}{cpa}(\secp) \seteq \abs{\Pr[\expb{\Sigma,\qA}{ind}{cpa}(\secp,0)=1]  - \Pr[\expb{\Sigma,\qA}{ind}{cpa}(\secp,1)=1]} \leq \negl(\secp).
\end{align}
\end{definition}

It is well known that IND-CPA security implies OW-CPA security.
There are many IND-CPA secure PKE schemes against QPT adversaries under standard cryptographic assumptions.
A famous one is Regev PKE scheme, which is IND-CPA secure if the learning with errors (LWE) assumption holds against QPT adversaries~\cite{JACM:Regev09}.
See ~\cite{JACM:Regev09,STOC:GenPeiVai08} for the LWE assumption and constructions of post-quantum PKE.

\paragraph{Encryption with Certified Deletion.}
Broadbent and Islam~\cite{TCC:BroIsl20} introduced the notion of encryption with certified deletion.

\begin{definition}[One-Time SKE with Certified Deletion (Syntax)~\cite{TCC:BroIsl20,AC:HMNY21}]\label{def:sk_cert_del}
Let $\lambda$ be a security parameter and let $p$, $q$ and $r$ be some polynomials.
A one-time secret key encryption scheme with certified deletion is a tuple of algorithms $\Sigma=(\keygen,\qEnc,\qDec,\qDelete,\Vrfy)$ with 
plaintext space $\Ms\seteq \{0,1\}^n$, ciphertext space $\Cs\seteq \cQ^{\otimes p(\lambda)}$, key space $\Ks\seteq\{0,1\}^{q(\lambda)}$ and deletion certificate space $\mathcal{D}\seteq \{0,1\}^{r(\lambda)}$.
\begin{description}
    \item[$\keygen (1^\secp) \ra \sk$:] The key generation algorithm takes as input the security parameter $1^\secp$, and outputs a secret key $\sk \in \Ks$.
    \item[$\qEnc(\sk,m) \ra \qct$:] The encryption algorithm takes as input $\sk$ and a plaintext $m\in\Ms$, and outputs a ciphertext $\qct\in \Cs$.
    \item[$\qDec(\sk,\qct) \ra m^\prime~or~\bot$:] The decryption algorithm takes as input $\sk$ and $\qct$, and outputs a plaintext $m^\prime \in \Ms$ or $\bot$.
    \item[$\qDelete(\qct) \ra \cert$:] The deletion algorithm takes as input $\qct$, and outputs a certification $\cert\in\mathcal{D}$.
    \item[$\Vrfy(\sk,\cert)\ra \top$ or $\bot$:] The verification algorithm takes $\sk$ and $\cert$ as input, and outputs $\top$ or $\bot$.
\end{description}
\end{definition}

We require that a one-time SKE scheme with certified deletion satisfies correctness defined below.
\begin{definition}[Correctness for One-Time SKE with Certified Deletion]\label{def:correctness_sk_cert_del}
There are two types of correctness, namely,
decryption correctness and verification correctness.
\paragraph{Decryption Correctness:} There exists a negligible function $\negl$ such that for any $\secp\in \N$ and $m\in\Ms$, 
\begin{align}
\Pr\left[
m'\neq m
\ \middle |
\begin{array}{ll}
\sk\lrun \keygen(1^\secp)\\
\qct \lrun \qEnc(\sk,m)\\
 m'\la\qDec(\sk,\qct)
\end{array}
\right] 
\leq\negl(\secp).
\end{align}

\paragraph{Verification Correctness:} There exists a negligible function $\negl$ such that for any $\secp\in \N$ and $m\in\Ms$, 
\begin{align}
\Pr\left[
\Vrfy(\sk,\cert)=\bot
\ \middle |
\begin{array}{ll}
\sk\lrun \keygen(1^\secp)\\
\qct \lrun \qEnc(\sk,m)\\
\cert \lrun \qDelete(\qct)
\end{array}
\right] 
\leq
\negl(\secp).
\end{align}

\end{definition}

We additionally require verification correctness with QOTP in this work.
This is because we need it for the construction of FE in \cref{sec:const_fe_adapt}.
This notion means that even if we encrypt a ciphertext with quantum one-time pad (QOTP), we can run the original deletion algorithm $\qDelete$ and recover a valid certificate by using the QOTP key.
In fact, the construction of ~\cite{TCC:BroIsl20} satisfies verification correctness with QOTP as well.

\begin{definition}[Verification Correctness with QOTP]\label{def:ske_ver_correctness_QOTP}
There exists a negligible function $\negl$ and a PPT algorithm $\Modify$ such that for any $\secp\in \N$ and $m\in\Ms$, 
\begin{align}
\Pr\left[
\Vrfy(\sk,\cert^*)=\bot
\ \middle |
\begin{array}{ll}
\sk\lrun \keygen(1^\secp)\\
\qct \lrun \qEnc(\sk,m)\\
a,b\la\bit^{p(\lambda)}\\
\wtl{\cert} \lrun \qDelete(Z^bX^a\qct X^aZ^b)\\
\cert^*\lrun \Modify(a,b,\wtl{\cert})  
\end{array}
\right] 
\leq
\negl(\secp).
\end{align}
\end{definition}

We require that a one-time SKE with certified deletion satisfies certified deletion security defined below.
\begin{definition}[Certified Deletion Security for One-Time SKE with Certified Deletion]\label{def:security_sk_cert_del}
Let $\Sigma=(\keygen,\qEnc,\qDec,\allowbreak \qDelete,\Vrfy)$ be a one-time SKE scheme with certified deletion.
We consider the following security experiment $\expc{\Sigma,\qA}{otsk}{cert}{del}(\secp,b)$ against an unbounded adversary $\qA$.

\begin{enumerate}
    \item The challenger computes $\sk \la \keygen(1^\secp)$.
    \item $\qA$ sends $(m_0,m_1)\in\cM^2$ to the challenger.
    \item The challenger computes $\qct \la \qEnc(\sk,m_b)$ and sends $\qct$ to $\qA$.
    \item $\qA$ sends $\cert$ to the challenger.
    \item The challenger computes $\Vrfy(\sk,\cert)$. If the output is $\bot$, the challenger sends $\bot$ to $\qA$.
    If the output is $\top$, the challenger sends $\sk$ to $\qA$. 
    \item $\qA$ outputs $b'\in \bit$. This is the output of the experiment.
\end{enumerate}
We say that $\Sigma$ is OT-CD secure if, for any unbounded $\qA$, it holds that
\begin{align}
\advc{\Sigma,\qA}{otsk}{cert}{del}(\secp)
\seteq \abs{\Pr[ \expc{\Sigma,\qA}{otsk}{cert}{del}(\secp, 0)=1] - \Pr[ \expc{\Sigma,\qA}{otsk}{cert}{del}(\secp, 1)=1] }\leq \negl(\secp).
\end{align}
\end{definition}
\begin{theorem}[\cite{TCC:BroIsl20}]
There exists one-time SKE with certified deletion that satisfies~\cref{def:sk_cert_del,def:correctness_sk_cert_del,def:security_sk_cert_del,def:ske_ver_correctness_QOTP} exists unconditionally.
\end{theorem}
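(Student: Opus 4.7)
The plan is to use the Broadbent--Islam construction directly and check that, in addition to the three properties they establish, it satisfies the new verification correctness with QOTP condition. Recall that their ciphertext for a message $m$ consists of (i) a BB84 state $\ket{r}_\theta$ on $n$ qubits where $r,\theta \gets \bit^n$, and (ii) classical data that masks $m$ using the bits of $r$ at positions where $\theta_i = 0$ together with an authentication hash of the $\theta_i = 0$ bits. The secret key stores $\theta$, the masking key, and the authentication key. The deletion algorithm $\qDelete$ measures every qubit in the Hadamard basis, outputting the string $d \in \bit^n$ of outcomes; verification checks $d_i = r_i$ at every position with $\theta_i = 1$.

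For the syntactic and functional properties (\cref{def:sk_cert_del,def:correctness_sk_cert_del,def:security_sk_cert_del}), I would simply invoke Broadbent and Islam's theorem. Decryption correctness follows by measuring in basis $\theta$ to recover $r$ and then unmasking $m$; verification correctness follows because at positions with $\theta_i = 1$ the qubit is a Hadamard-basis eigenstate of $r_i$, so a Hadamard measurement returns $r_i$ deterministically; and the certified-deletion security proof rests on an entropic uncertainty relation showing that producing a passing certificate forces the adversary's residual state to be nearly independent of the $\theta_i = 0$ bits of $r$, hence independent of the one-time pad on $m$, even once the key is revealed.

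The only genuinely new piece is verification correctness with QOTP (\cref{def:ske_ver_correctness_QOTP}). The core observation is the conjugation identity $H X H = Z$ and $H Z H = X$, so applying $Z^b X^a$ and then measuring in the Hadamard basis is equivalent to measuring a computational-basis-rotated state and then XORing the outcome by $b$. Concretely, if $d$ is the Hadamard-basis measurement outcome of $\ket{r}_\theta$, then the Hadamard-basis outcome of $Z^b X^a \ket{r}_\theta X^a Z^b$ is distributed as $d \oplus b$. Therefore I would define $\Modify(a,b,\widetilde{\cert}) \seteq \widetilde{\cert} \oplus b$; this recovers exactly the distribution of $\qDelete(\qct)$, and verification then succeeds with the same probability as in the no-QOTP case.

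The main obstacle, if one did not wish to invoke Broadbent--Islam as a black box, would be re-deriving their certified-deletion security proof, which requires a careful uncertainty-principle argument to show that a valid deletion certificate forces the adversary's post-deletion state to be statistically close to one independent of the masking bits, even conditioned on later release of $\sk$. Since we cite that theorem, the only technical content here is the QOTP check, which reduces to the Pauli-Hadamard conjugation identity above.
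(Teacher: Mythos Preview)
Your proposal is correct and matches the paper's approach. The paper does not give a proof of this theorem at all: it cites Broadbent--Islam for \cref{def:sk_cert_del,def:correctness_sk_cert_del,def:security_sk_cert_del}, and for the new QOTP property it simply asserts in a preceding remark that ``the construction of~\cite{TCC:BroIsl20} satisfies verification correctness with QOTP as well,'' with the justification appearing only in a footnote elsewhere (``If the deletion algorithm is the computational-basis measurements followed by Clifford gates, the modification algorithm is just modifying the Pauli one-time pad''). Your Pauli--Hadamard conjugation argument and the choice $\Modify(a,b,\widetilde{\cert}) = \widetilde{\cert} \oplus b$ (on the BB84-register coordinates) is precisely the content behind that assertion.
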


\paragraph{Attribute-Based Encryption.}

\begin{definition}[KP-ABE (Syntax)]\label{def:kp-abe} A key-policy ABE (KP-ABE) scheme is a tuple of PPT algorithms $(\setup, \keygen,\allowbreak \enc, \dec)$ with a class of predicates $\mcl{P}_n$ (represented as class of boolean circuits with $n$ input bits), and a message space $\mcl{M}$.
\begin{description}
\item[$\setup(1^{\secp})$:] The setup algorithm takes as input a security parameter $1^{\secp}$ and outputs a public key $\pk$ and a master secret key $\msk$.

\item[$\keygen(\msk, P)$:] The key generation algorithm takes as input the master secret key $\msk$ and a predicate $P \in \mcl{P}_n$, and outputs a secret key $\sk_P$ corresponding to the predicate $P$.

\item[$\enc(\pk, a, m)$:] The encryption algorithm takes as input a public key $\pk$, an attribute $a \in \{0, 1\}^n$ and a message $m\in \mcl{M}$, and outputs a ciphertext $\ct$.

\item[$\dec(\sk_P, \ct)$:] The decryption algorithm takes as input a secret key $\sk_P$ and a ciphertext $\ct$, and outputs a classical message $m'$ or $\bot$.
\end{description}
\end{definition}

\begin{definition}[Correctness of KP-ABE] The correctness of KP-ABE for a class of predicates $\mcl{P}_n$ and a message space $\mcl{M}$ is defined as follows.
\begin{description}
	
\item[Decryption correctness:] For any $\secp \in \N, P \in \calP_n,a\in \zo{n}, m  \in \Ms$ such that $P(a)=1$,
		\begin{align}
		\mathrm{Pr}\left[ \dec(\sk_P, \qct) \ne m \ \middle | \begin{array}{l} (\pk, \msk) \leftarrow \setup(1^{\secp}) \\
			\sk_P \leftarrow \keygen(\msk, P) \\
			\ct \leftarrow \enc(\pk,a, m)  \end{array} \right] \le\negl(\secp).
			\end{align}
	\end{description}
\end{definition}

\begin{definition}[IND-CPA Security of KP-ABE] Let $\Sigma = (\setup, \keygen, \enc, \dec)$ be a KP-ABE scheme for a class predicates $\mcl{P}_n$ and message space $\mcl{M}$. We consider the following experiment $\expb{\Sigma,\qA}{ada}{ind}(\secp, b)$.
	\begin{enumerate}
		\item The challenger computes $(\pk, \msk) \leftarrow \setup(1^{\secp})$ and sends $\pk$ to $\qA$.
		\item $\qA$ sends a predicate $P_i \in \mcl{P}_n$, called key query, to the challenger and it returns $\sk_{P_i} \leftarrow \keygen(\msk, P)$. $\qA$ can send unbounded polynomially many key queries. Let $q$ be the total number of key queries.
		\item $\qA$ sends a challenge message pair $(m_0, m_1)$ and an attribute $a \in \{0, 1\}^n$ such that $\abs{m_0} = \abs{m_1}$.
		\item The challenger computes $\ct_b \leftarrow \enc(\pk, a, m_b)$. It sends $\ct_b$ to $\qA$.
		\item $\qA$ can send key queries again.
		\item $\qA$ outputs its guess $b' \in \zo{}$. If $P_i(a)=0$ for all $i\in [q]$, the experiments outputs $b^\prime$.
	\end{enumerate}
	We say that the $\Sigma$ is adaptively secure if for any QPT adversary $\qA$, it holds that 
	\[\advb{\Sigma,\qA}{ada}{ind}(\secp) \seteq \left|\Pr[\expb{\Sigma,\qA}{ada}{ind}(\secp, 0) = 1] - \Pr[\expb{\Sigma,\qA}{ada}{ind}(\secp, 1) = 1]\right|\leq \negl(\secp).\]
	We can define similar experiment $\expb{\Sigma,\qA}{sel}{ind}(\secp, b)$ where $\qA$ is restricted to submit the challenge attribute $a \in \{0, 1\}^n$ before it receives $\pk$ from the challenger. We say that the $\Sigma$ is selectively secure if for any QPT adversary $\qA$, it holds that 
		\[\advb{\Sigma,\qA}{sel}{ind}(\secp) \seteq \left|\Pr[\expb{\Sigma,\qA}{sel}{ind}(\secp, 0) = 1] - \Pr[\expb{\Sigma,\qA}{sel}{ind}(\secp, 1) = 1]\right|\leq \negl(\secp).\]
\end{definition}

\begin{theorem}[\cite{JACM:GorVaiWee15,EC:BGGHNS14}]\label{thm:ABE_from_LWE}
If the LWE assumption holds, there exists selectively secure KP-ABE for all boolean circuits. In addition, if the LWE assumption holds against sub-exponential time algorithms, there exists adaptively secure KP-ABE for all boolean circuits.
\end{theorem}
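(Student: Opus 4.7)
The plan is to sketch the Boneh--Gentry--Gorbunov--Halevi--Nikolaenko--Segev--Vaikuntanathan--Vinayagamurthy (BGG+) style construction, which gives the cleanest route to both statements. First I would fix lattice parameters $n,q,m$ with $m = \Theta(n \log q)$, and set up the following public key: a ``target'' matrix $A \in \Z_q^{n\times m}$ generated together with a trapdoor $T_A$ via a standard trapdoor sampler, a uniform matrix $B\in \Z_q^{n\times m}$, and for each input wire $i\in[n]$ a uniform matrix $A_i \in \Z_q^{n\times m}$. The master secret is $T_A$. To encrypt a message $\mu\in\{0,1\}$ under attribute $x\in\{0,1\}^n$, pick $s \gets \Z_q^n$ and small Gaussian noise, and output
\[
\ct = \bigl(s^\top A + e_0^\top,\ s^\top B + e'^\top + \mu\cdot\lfloor q/2\rfloor\cdot u^\top,\ \{s^\top(A_i + x_i G) + e_i^\top\}_{i\in[n]}\bigr),
\]
where $G$ is the gadget matrix. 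To derive a key for a circuit $f$, use the BGG+ homomorphic evaluation on matrices to compute $A_f$ from $(A_i)$ and $x$-independent side information, then use $T_A$ together with \textsf{SampleLeft}/\textsf{SampleRight} to produce a short vector $r$ satisfying $[A \mid B - A_f]\, r = u$ (conditioned on $f$). The decryption correctness follows from the key homomorphism of BGG+: given a ciphertext with attribute $x$ such that $f(x)=1$, the encryptor's $x$-dependent pieces can be transformed coherently into a ciphertext under $[A \mid B - A_f]$, from which the short $r$ recovers $\mu$ up to small noise.

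Second, for selective IND-CPA security, I would follow the standard ``puncturing'' reduction. Given the committed challenge attribute $x^*$, the reduction programs each $A_i = A R_i - x^*_i G$ for low-norm $R_i$ sampled by the reduction, embeds an LWE challenge into $A$ and $B$, and can still answer every key query for $f$ with $f(x^*)=0$ using \textsf{SampleRight} on $A_f - f(x^*) G = A_f + G$, because $G$ provides a public trapdoor. Pseudorandomness of the challenge ciphertext under the LWE assumption then follows from the noise-flooding or Regev-style noise analysis. The key obstacle here is the evaluation-on-matrices analysis, i.e., bounding the norm growth of the low-norm matrices $R_f$ constructed in parallel with $A_f$, but this is exactly the content of the BGG+ evaluation lemma and for depth-$d$ circuits yields $\poly(n,d)$ noise growth, which is tolerable by setting $q$ appropriately. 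This establishes the first clause.

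Third, for adaptive security, I would invoke complexity leveraging, as is standard when only selective security is proved directly. Concretely, given an adaptive adversary with advantage $\varepsilon(\lambda)$, the reduction guesses the challenge attribute $x^* \in \{0,1\}^{n(\lambda)}$ uniformly at random before the game begins, aborts if the adversary's later choice disagrees, and otherwise plays the selective game. This loses a factor $2^{n(\lambda)}$, so if the selective reduction breaks $\LWE$ with advantage $\varepsilon(\lambda)/2^{n(\lambda)}$ in time $\poly(\lambda)$, choosing the $\LWE$ parameters so that no $2^{n(\lambda)} \cdot \poly(\lambda)$-time adversary has non-negligible advantage---which is possible under sub-exponential $\LWE$ by scaling $n$ appropriately relative to $n(\lambda)$---yields the second clause. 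The main obstacle in writing the proof in full is the careful norm/parameter bookkeeping for the BGG+ matrix evaluation combined with the sub-exponential setting of $\LWE$ parameters; none of the steps is conceptually novel, but the accounting is the only delicate part.
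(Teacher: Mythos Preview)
The paper does not prove this theorem at all: it is stated in the preliminaries as a known result imported from \cite{JACM:GorVaiWee15,EC:BGGHNS14}, with no accompanying argument. So there is no ``paper's own proof'' to compare against; the authors simply cite the literature and move on.

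Your sketch is a reasonable summary of the BGG$^+$ approach from \cite{EC:BGGHNS14}, and complexity leveraging is indeed the standard route to the adaptive statement under sub-exponential LWE. One small slip worth fixing if you ever write this out in full: in the selective reduction you write ``$A_f - f(x^*)G = A_f + G$'' for the case $f(x^*)=0$, but of course $f(x^*)=0$ gives $A_f$, not $A_f+G$. The point is that the programming is arranged so that the \emph{forbidden} queries (those with $f(x^*)\neq 0$, or $=0$, depending on the paper's decrypt-on-zero versus decrypt-on-one convention) are precisely the ones where a $G$-term survives and \textsf{SampleRight} applies; you have the roles slightly tangled. This is a convention/sign issue rather than a conceptual gap, but it is worth getting straight since the whole reduction hinges on which side retains the gadget.
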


\paragraph{Classical Fully Homomorphic Encryption.}

\begin{definition}[Leveled Fully Homomorphic Encryption]\label{def:fhe} A leveled FHE is a tuple of PPT algorithms $(\keygen, \enc, \allowbreak \Eval, \dec)$ with a class of circuits $\cC=\setbk{\cC_d}_{d\in\bbN}$, where $\cC_d$ contains all Boolean circuits of depth up to $d$.
\begin{description}
\item[$\keygen(1^{\secp},1^d)$:] The key generation algorithm takes as input the security parameters $1^{\secp}$ and $1^d$ and outputs a public key $\pk$ and a secret key $\sk$.

\item[$\enc(\pk, x)$:]	The encryption algorithm takes as input a public key $\pk$ and a message $x \in \{0, 1\}$, and outputs a ciphertext $\ct$.

\item[$\Eval(\pk, C, \ct_1, \dots, \ct_n)$:] The evaluation algorithm takes as input a public key $\pk$, a circuit $C \in \cC$, ciphertexts $\ct_1, \dots, \ct_n$ where $n$ denotes the input length of the circuit $C$, and outputs a ciphertext $\ct_C$.

\item[$\dec(\sk, \ct)$:] The decryption algorithm takes as input a secret key $\sk$ and a ciphertext $\ct$, and outputs a message $x'$ or $\bot$.
\end{description}
\end{definition}	

\begin{definition}[Compactness]
 A classical FHE is compact if there exists a fixed polynomial bound $B(\cdot)$ so that, for all $\secp \in \bbN$, any $C\in\cC$, and plaintext $x \in \zo{n}$, it holds that
    \begin{align}
        \Pr\left[ \abs{\ct_C} \le B(\secp) \ \middle | \begin{array}{l} (\pk, \sk) \leftarrow \keygen(1^{\secp}) \\
            \ct_i \leftarrow \enc(\pk, x_i)\\
            \ct_C \leftarrow \Eval(\pk, C, \ct_1, \dots, \ct_{n}) 
        \end{array} \right] = 1.
    \end{align}
\end{definition}

\begin{definition}[Correctness] An FHE scheme is said to be correct for $\cC$ if for any $\secp, n \in \mbb{N}, C \in \cC, \bm{x} = (x_1, \dots, x_n) \in \{0, 1\}^n$,
	\begin{align}
		\Pr\left[ \dec(\sk, \ct_C) \ne C(\bm{x})\ \middle | \begin{array}{l} (\pk, \sk) \leftarrow \keygen(1^{\secp}) \\
			\ct_i \leftarrow \enc(\pk, x_i)\\
			\ct_C \leftarrow \Eval(\pk, C, \ct_1, \dots, \ct_n) 
		\end{array} \right] \le \negl(\secp).
	\end{align}
\end{definition}

\begin{definition}[Security of FHE] An FHE scheme $\Sigma = (\setup, \keygen, \enc, \dec)$ with a class of circuits $\cC$ is said to be IND-CPA secure if for any QPT adversary $\qA$, any $\secp, n \in \mbb{N}$, the following holds:
\begin{align}
\Pr\left[
 \qA(1^{\secp}, \pk, \ct)=1
\ \middle |
\begin{array}{ll}
(\pk, \sk) \leftarrow \keygen(1^{\secp}),\\
\ct \leftarrow \enc(\pk, 0)
\end{array}
\right] -
\Pr\left[
 \qA(1^{\secp}, \pk, \ct)=1
\ \middle |
\begin{array}{ll}
(\pk, \sk) \leftarrow \keygen(1^{\secp}),\\
\ct \leftarrow \enc(\pk, 1)
\end{array}
\right] 
=\negl(\secp).
\end{align}

\end{definition}

\begin{theorem}[\cite{SIAMCOMP:BraVai14,C:GenSahWat13}]
If the LWE assumption holds, there exists leveled FHE.
\end{theorem}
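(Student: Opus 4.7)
The plan is to prove this by following the GSW template \cite{C:GenSahWat13}, which gives the cleanest reduction from LWE to leveled FHE without needing key-switching or modulus reduction. First, I would fix LWE parameters $(n, q, \chi)$ depending on the target depth $d$: take $\chi$ to be a discrete Gaussian with support bounded by $B = B(\secp)$ and choose $q$ large enough that any induced noise-growth factor of $\poly(\secp, d, \log q)^d$ still leaves $q/B$ comfortably above the decryption threshold. The secret key will be a vector $\mathbf{s} \in \Z_q^{n+1}$ whose last coordinate is fixed to $-1$, and the public key will be a matrix $\mathbf{A} \in \Z_q^{(n+1) \times m}$ for which $\mathbf{s}^\top \mathbf{A}$ is statistically close to a row of fresh LWE samples.

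Next I would define encryption of a bit $x \in \{0,1\}$ as $\mathbf{C} = \mathbf{A} \mathbf{R} + x \mathbf{G} \in \Z_q^{(n+1) \times N}$, where $\mathbf{R} \in \{0,1\}^{m \times N}$ is uniform, $\mathbf{G}$ is the standard gadget matrix, and $N = (n+1) \lceil \log q \rceil$. Homomorphic addition is matrix addition, and homomorphic multiplication is $\mathbf{C}_1 \cdot \mathbf{G}^{-1}(\mathbf{C}_2)$, where $\mathbf{G}^{-1}(\cdot)$ is the bit-decomposition map whose image has $\{0,1\}$ entries. Using $\mathbf{s}^\top \mathbf{G} \cdot \mathbf{G}^{-1}(\mathbf{C}_2) = \mathbf{s}^\top \mathbf{C}_2$, a direct calculation will show the product encrypts $x_1 x_2$ with a controlled noise term. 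IND-CPA security reduces to the LWE assumption on $\mathbf{A}$ together with the leftover hash lemma, via a standard hybrid over the columns of $\mathbf{C}$.

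The crux of the argument, and the step I expect to be the main obstacle, is the noise-growth analysis across depth. The key asymmetry is that in the product $\mathbf{C}_1 \cdot \mathbf{G}^{-1}(\mathbf{C}_2)$, the error of $\mathbf{C}_1$ is multiplied only by $\|\mathbf{G}^{-1}(\mathbf{C}_2)\|_\infty \le 1$ while the error of $\mathbf{C}_2$ is scaled by the small plaintext bit $x_1 \in \{0,1\}$; evaluating any depth-$d$ circuit then yields noise bounded by $(N+1)^d \cdot B$, rather than doubly exponential as a naive analysis would suggest. Setting $q = 2^{\omega(d \log \secp)} \cdot B$ ensures correct rounding-based decryption for every depth-$d$ circuit, and compactness is immediate because every ciphertext has the fixed dimension $(n+1) \times N$ independent of the evaluated circuit. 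This instantiates \Cref{def:fhe} and establishes the theorem.
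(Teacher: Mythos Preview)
The paper does not prove this theorem; it is stated as a known result with citations to \cite{SIAMCOMP:BraVai14,C:GenSahWat13} in the preliminaries. Your sketch is a correct outline of the GSW construction from one of those cited works, so there is nothing to compare against and no gap to flag.
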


\paragraph{Indistinguishability Obfuscation}

\begin{definition}[Indistinguishability Obfuscator~\cite{JACM:BGIRSVY12}]\label{def:io}
A PPT algorithm $\iO$ is a secure IO for a classical circuit class $\{\cC_\secp\}_{\secp \in \bbN}$ if it satisfies the following two conditions.

\begin{description}
\item[Functionality:] For any security parameter $\secp \in \bbN$, circuit $C \in \cC_\secp$, and input $x$, we have that
\begin{align}
\Pr[C^\prime(x)=C(x) \mid C^\prime \lrun \iO(C)] = 1\enspace.
\end{align}

\item[Indistinguishability:] For any PPT $\Sampler$ and QPT distinguisher $\qD$, the following holds:

If $\Pr[\forall x\ C_0(x)=C_1(x) \land \abs{C_0}=\abs{C_1} \mid (C_0,C_1,\aux)\gets \Sampler(1^\secp)]> 1 - \negl(\secp)$, then we have
 \begin{align}
\adva{\iO,\qD}{io}(\secp) &\seteq \left|
 \Pr\left[\qD(\iO(C_0),\aux) = 1 \mid (C_0,C_1,\aux)\gets \Sampler(1^\secp) \right] \right.\\
&~~~~~~~ \left. - \Pr\left[\qD(\iO(C_1),\aux)= 1\mid (C_0,C_1,\aux)\gets \Sampler(1^\secp)  \right] \right|
  \leq \negl(\secp).
 \end{align}
\end{description}
\end{definition}
There are a few candidates of secure IO for polynomial-size classical circuits against quantum adversaries~\cite{TCC:BGMZ18,TCC:CHVW19,EC:AgrPel20}.

\paragraph{Obfuscation for compute-and-compare programs.}

\begin{definition}[Compute-and-Compare Circuits]
A compute-and-compare circuit $\cnc{P}{\lock,m}$ is of the form
\[
\cnc{P}{\lock,m}(x) = \left\{
\begin{array}{ll}
m&(P(x)=\lock)\\
0&(\text{otherwise})~,
\end{array}
\right.
\]
where $P$ is a circuit, $\lock$ is a string called lock value, and $m$ is a message.
\end{definition}

We assume that a program $P$ has an associated set of parameters $\pp_P$ (input size, output size, circuit size) which we do not need to hide.
\begin{definition}[Compute-and-Compare Obfuscation]\label{def:CCObf}
A PPT algorithm $\CCObf$ is a secure obfuscator for the family of distributions $D=\{D_\param\}_\param$ if the following holds:
\begin{description}
\item[Functionality Preserving:] There exists a negligible function $\negl$ such that for all program $P$, all lock value $\lock$, and all message $m$, it holds that
\begin{align}
\Pr[\forall x, \tlP(x)=\cnc{P}{\lock,m}(x) \mid \tlP\la\CCObf(1^\secp,P,\lock,m)] \ge 1-\negl(\secp).
\end{align}
\item[Distributional Indistinguishability:] There exists an efficient simulator $\Sim$ such that for all $\qD$, $\param$ and message $m$, we have
\begin{align}
\abs{\Pr[\qD(\CCObf(1^\secp,P,\lock,m),\qaux)=1] -\Pr[\qD(\Sim(1^\secp,\pp_P,1^\abs{m}),\qaux)=1]} \le \negl(\secp),
\end{align}
where $(P,\lock,\qaux)\la D_\param$.
\end{description}
\end{definition}
\begin{theorem}[\cite{FOCS:GoyKopWat17,FOCS:WicZir17}]
If the LWE assumption holds, there exists compute-and-compare obfuscation for all families of distributions $D=\{D_\param\}$, where each $D_\param$ outputs uniformly random lock value $\lock$ independent of $P$ and $\qaux$.
\end{theorem}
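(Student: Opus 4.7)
The plan is to sketch the constructions of \cite{FOCS:GoyKopWat17,FOCS:WicZir17}. The key idea I would exploit is that when $\lock$ is sampled uniformly at random and independently of $(P,\qaux)$, the lock value itself can play the role of (part of) an LWE secret, so that hiding reduces directly to LWE pseudorandomness.

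First, I would describe the obfuscator. To obfuscate $\cnc{P}{\lock,m}$, the scheme samples LWE matrices and treats $\lock$ (suitably padded with fresh randomness) as the LWE secret $\mm{s}$. It publishes an LWE-style encryption of $m$ under $\mm{s}$ together with gadget matrices that support GSW-style homomorphic evaluation of $P$ on inputs given in the clear. On input $x$, the obfuscated circuit homomorphically evaluates $P(x)$ and then ``plugs in'' the resulting vector as a candidate secret to attempt to decrypt the $m$-ciphertext: when $P(x)=\lock$ the candidate secret matches $\mm{s}$ and $m$ is recovered after rounding, while when $P(x)\ne\lock$ the decryption produces a value whose noise term is so large that the rounding step maps it to $0$.

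Second, for distributional indistinguishability, I would design the simulator $\Sim(1^\secp,\pp_P,1^{|m|})$ to output uniformly random LWE-like samples of the appropriate dimensions, independent of $P$ and $m$. Because $\lock$ is uniform and independent of $(P,\qaux)$ by hypothesis, the LWE secret derived from $\lock$ is a fresh random secret even conditioned on $\qaux$, so the reduction can embed its LWE challenge into the $\lock$-slot of $\mm{s}$. A hybrid argument invoking $\dLWE$ then replaces each LWE-like component of the real obfuscation with uniform randomness in turn, and the resulting end-hybrid is distributed exactly as $\Sim(1^\secp,\pp_P,1^{|m|})$.

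The main obstacle will be controlling noise growth during the homomorphic evaluation of $P$: the parameters (modulus $q$, dimension, error distribution) must be chosen so that after evaluating a circuit of polynomial size, the noise is small enough for correct rounding on matching inputs, yet large enough on non-matching inputs to guarantee that the candidate secret behaves as a uniformly random non-matching secret (so that the output rounds to $0$ with overwhelming probability). This is handled in \cite{FOCS:GoyKopWat17,FOCS:WicZir17} by using GSW-style asymmetric multiplication, which makes noise grow only polynomially in the depth of $P$ and hence supports all polynomial-size $P$ under standard $\dLWE$.
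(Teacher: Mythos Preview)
The paper does not prove this theorem. It is stated in the preliminaries as a known result and simply cited to \cite{FOCS:GoyKopWat17,FOCS:WicZir17}; no proof or sketch is given. So there is nothing in the paper to compare your proposal against, and for the purposes of this paper a sentence pointing to those references is all that is needed.

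As a separate matter, your sketch does not accurately reflect the constructions in the cited works. Neither \cite{FOCS:GoyKopWat17} nor \cite{FOCS:WicZir17} obfuscates by running GSW-style homomorphic evaluation of $P$ on a ciphertext and then ``plugging in'' the result as a candidate LWE secret to decrypt an encryption of $m$. Both constructions instead represent $P$ as a matrix branching program and build GGH15-style chained LWE encodings along the branching-program graph; evaluation on input $x$ multiplies the encoded matrices along the $x$-selected path, and only when the product equals the branching-program target (i.e., when $P(x)=\lock$) does the telescoping structure yield a low-noise value from which $m$ can be read off. The security argument uses the randomness of $\lock$ to invoke LWE along the chain (or a leftover-hash/trapdoor argument), not to embed $\lock$ directly as an LWE secret for a separate ciphertext. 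Your noise discussion is also off: on non-matching inputs the output is not ``large noise that rounds to $0$'' but rather a value that is (computationally or statistically) close to uniform, so a zero-test fails. If you want to include a sketch, it should follow the branching-program/GGH15 template actually used in those papers.
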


\fi
\ifnum\submission=1
\else

\newcommand{\pke}{\mathsf{pke}}
\newcommand{\coset}{\mathsf{coset}}
\renewcommand{\IP}[2]{\langle #1,#2\rangle}
\renewcommand{\sig}{\mathsf{sig}}
\newcommand{\shO}{\mathsf{shO}}
\newcommand{\CDSKE}{\mathsf{CDSKE}}
\newcommand{\CDABE}{\mathsf{CDABE}}
\newcommand{\cdabe}{\mathsf{cdabe}}
\newcommand{\cdske}{\mathsf{cdske}}
\newcommand{\tbectlen}{\ell}
\newcommand{\CDTBE}{\mathsf{CDTBE}}
\newcommand{\cdtbe}{\mathsf{cdtbe}}
\newcommand{\qAux}{\qalgo{Aux}}
\newcommand{\qInv}{\qalgo{Inv}}
\newcommand{\qSim}{\qalgo{Sim}}
\newcommand{\CDFE}{\mathsf{CDFE}}
\newcommand{\msglen}{n}
\newcommand{\qC}{\qalgo{C}}
\newcommand{\qSamp}{\qalgo{Samp}}
\renewcommand{\tag}{\mathsf{tag}}
\newcommand{\mat}[1]{\boldsymbol{#1}}
\newcommand{\mapaux}{\Phi_{\mathsf{aux}}}
\newcommand{\mapct}{\Phi_{\mathsf{ct}}}
\newcommand{\qinit}{\qalgo{init}}
\newcommand{\ctlen}{\ell_{\mathsf{ct}}}
\newcommand{\tbe}{\mathsf{tbe}}
\newcommand{\Select}{\mathsf{Sel}}

\section{Collusion-Resistant Functional Encryption with Certified Everlasting Deletion}\label{sec:fecd_everlasting}

In this section, we present the definitions of FE with certified everlasting deletion and a collusion-resistant construction.
\subsection{Definitions}\label{sec:def_FE_CED}
First, we introduce the syntax and security definitions of FE with certified everlasting deletion.
\begin{definition}[Functional Encryption with Certified Everlasting Deletion]
    A functional encryption with certified everlasting deletion for a class $\mathcal{F}$ of functions is a tuple of QPT algorithms $(\setup, \keygen, \qenc, \qdec, \qdel, \vrfy)$ with plaintext space $\Ms$, ciphertext space $\Cs$, master public key space $\mathcal{MPK}$, master secret key space $\mathcal{MSK}$, and secret key space $\mathcal{SK}$,
that works as follows.
\begin{description}
    \item[$\Setup(1^\secp)\ra(\MPK,\MSK)$:] The setup algorithm takes the security parameter as input, and outputs a master public key $\MPK\in\mathcal{MPK}$ and a master secret key $\MSK\in\mathcal{MSK}$.
    \item[$\keygen(\MSK,f)$:]The key generation algorithm takes $\MSK$ and $f\in\mathcal{F}$ as input, and outputs a secret key $\sk_f\in\mathcal{SK}$.
    \item[$\qenc(\MPK,m)\ra(\qct, \vk)$:]The encryption algorithm takes $\MPK$ and $m\in\Ms$ as input, and outputs a ciphertext $\qct\in\mathcal{C}$ and a verification key $\vk$.
    \item[$\qdec(\sk_f,\qct)\ra y$ or $\bot$:] The decryption algorithm takes $\sk_f$ and $\qct$ as input, and outputs $y$ or $\bot$.
    \item[$\qdel(\qct)\ra\cert$:] The deletion algorithm takes the ciphertext $\qct$ as input, and outputs a classical certificate $\cert$.
    \item[$\vrfy(\vk,\cert)\ra \top$ or $\bot$:]The verification algorithm takes $\vk$ and $\cert$ as input, and outputs $\top$ or $\bot$.
\end{description}
\end{definition}

\begin{definition}[Correctness of Functional Encryption with Certified Everlasting Deletion]\label{def:FE_CED_correctness}
    The correctness of FE with certified everlasting deletion for a class of functions $\mathcal{F}$ and plaintext space $\Ms$ is defined as follows.
\begin{description}
\item[Evaluation Correctness:] For any $\secp\in\N$, $m\in\Ms$, and $f\in\mathcal{F}$,
\begin{align}
\Pr\left[
y\ne f(m)
\ \middle |
\begin{array}{ll}
(\MPK,\MSK)\la\Setup(1^\secp)\\
\sk_f\la\keygen(\MSK,f)\\
(\qct, \vk )\lrun \qenc(\MPK,m)\\
y\la\qdec(\sk_f,\qct)
\end{array}
\right] 
\le\negl(\secp).
\end{align}

\item[Verification Correctness:] For any $\secp\in\N$, $m\in\Ms$, and $f\in\mathcal{F}$,
\begin{align}
\Pr\left[
\vrfy(\vk,\cert)\neq\top
\ \middle |
\begin{array}{ll}
(\MPK,\MSK)\la\Setup(1^\secp)\\
(\qct, \vk)\lrun \qenc(\MPK,m)\\
\cert\la\qdel(\qct)
\end{array}
\right] 
\le\negl(\secp).
\end{align}
\end{description}
\end{definition}

\begin{remark}
In FE, we should be able to run $\qDec$ algorithm for many different functions $f$ on the same ciphertext $\ct$.
One might think that the quantum $\qct$ is destroyed by $\qDec$ algorithm, and it can be used only once.
However, it is easy to see that $\qDec$ algorithm can be always modified so that it does not disturb the quantum state $\qct$
by using the gentle measurement lemma~\cite{Winter99} thanks to the evaluation correctness.
\end{remark}

\paragraph{Security notions.}
We define an indistinguishability-based security notion in the collusion-resistant setting in this section.
We extend the certified everlasting security notion of PKE by Bartusek and Khurana~\cite{myC:BarKhu23} to the FE setting to obtain our indistinguishability-based security notion.
\begin{definition}[Certified Everlasting Indistinguishable-Security of FE]\label{def:FE_CED_IND_sec}
Let $\Sigma=(\Setup,\keygen,\qenc,\qdec,\qdel,\vrfy)$ be a functional encryption with certified everlasting deletion for a class $\mathcal{F}$ of functions, plaintext space $\Ms$. We consider two experiments $\EV\expb{\Sigma,~\qA}{ada}{ind}(\secp,b)$ and $\C\expb{\Sigma,~\qA}{ada}{ind}(\lambda,b)$ played between a challenger and a non-uniform QPT adversary $\qA = \{\qA_{\secp}, \ket{\psi}_{\secp}\}_{\secp \in \mbb{N}}$. The experiments are defined as follows:
\begin{enumerate}
    \item The challenger computes $(\MPK,\MSK)\la\setup(1^\secp)$ and sends $\MPK$ to $\qA_\secp(\ket{\psi}_\secp)$.
    \item $\qA_\secp$ is allowed to make arbitrarily many key queries.
    For the $\ell$-th key query, the challenger receives $f_\ell\in\mathcal{F}$, 
    computes $\sk_{f_\ell}\la\keygen(\MSK,f_\ell)$,
    and sends $\sk_{f_\ell}$ to $\qA_\secp$.
    \item $\qA_\secp$ sends $(m_0,m_1)\in\Ms^2$ to the challenger.
    \item The challenger computes $(\qct_b,\vk_b)\la\qenc(\MPK,m_b)$, and sends $\ct_b$ to $\qA_\secp$.
    \item $\qA_\secp$ is allowed to make arbitrarily many key queries.
    For the $\ell$-th key query, the challenger receives $f_\ell\in\mathcal{F}$, 
    computes $\sk_{f_\ell}\la\keygen(\MSK,f_\ell)$,
    and sends $\sk_{f_\ell}$ to $\qA_\secp$.
    \item $\qA_\secp$ sends a certificate of deletion $\cert$ and its internal state $\rho$ to the challenger.
    \item The challenger computes $\vrfy(\vk_b, \cert)$. If the outcome is $\top$ and $f_\ell(m_0)=f_\ell(m_1)$ holds for all key queries $f_\ell$, the experiment $\EV\expb{\Sigma,~ \qA}{ada}{ind}(\secp, b)$ outputs $\rho$; otherwise $\EV\expb{\Sigma,~ \qA}{ada}{ind}(\secp, b)$ outputs $\bot$.
    \item The challenger sends the outcome of $\Vrfy(\vk_b,\cert)$ to $\qA_\secp$.
    \item Again, $\qA_\secp$ is allowed to make arbitrarily many key queries. For the $\ell$-th key query, the challenger receives $f_\ell\in\mathcal{F}$, 
    computes $\sk_{f_\ell}\la\keygen(\MSK,f_\ell)$,
    and sends $\sk_{f_\ell}$ to $\qA_\secp$.
    \item $\qA_\secp$ outputs its guess $b'$. If $f_\ell(m_0)=f_\ell(m_1)$ holds for all key queries $f_\ell$, the experiment $\C\expb{\Sigma,~ \qA}{ada}{ind}(\secp, b)$ outputs $b'$; otherwise $\C\expb{\Sigma,~ \qA}{ada}{ind}(\secp, b)$ outputs $\bot$.
\end{enumerate}
We say that the $\Sigma$ is adaptively certified everlasting indistinguishable-secure if for any non-uniform QPT adversary $\qA=\{\qA_\secp,\ket{\psi}_\secp\}_{\secp\in\N}$, it holds that
\ifnum\submission=1
\begin{align}
	\TD(\EV\expb{\Sigma,~ \qA}{ada}{ind}(\secp, 0), \EV\expb{\Sigma,~ \qA}{ada}{ind}(\secp, 1)) \leq \negl(\secp), \textrm{ and}\\
	\left|\Pr[\C\expb{\Sigma,~ \qA}{ada}{ind}(\secp, 0) = 1] - \Pr[\C\expb{\Sigma,~ \qA}{ada}{ind}(\secp, 1) = 1]\right| \leq \negl(\secp).
\end{align}
\else
\begin{align}
    \TD(\EV\expb{\Sigma,~ \qA}{ada}{ind}(\secp, 0), \EV\expb{\Sigma,~ \qA}{ada}{ind}(\secp, 1)) \leq \negl(\secp),
\end{align}
and
\begin{align}
    \left|\Pr[\C\expb{\Sigma,~ \qA}{ada}{ind}(\secp, 0) = 1] - \Pr[\C\expb{\Sigma,~ \qA}{ada}{ind}(\secp, 1) = 1]\right| \leq \negl(\secp).
\end{align}
\fi
\end{definition}

\subsection{Tools}\label{sec:tools_for_FE_CED}
We introduce a few tools for FE with certified everlasting deletion in this section.
\paragraph{(Interactive) certified everlasting lemma.}
First, we recall the certified everlasting lemma by Bartusek and Khurana~\cite{myC:BarKhu23}.
\begin{lemma}[Certified Everlasting Lemma~{\cite[ePrint Ver. 20221122:050839]{myC:BarKhu23}}]\label{lem:ce}
    Let $\{\mcl{Z}_{\secp}(\theta)\}_{\secp \in \mbb{N}, \theta \in \zo{\secp}}$ be a family of distributions over either classical bit strings or quantum states, and let $\mathfrak{A}$ be any class of adversaries such that for any $\{\qA_{
        \secp}\}_{\secp 
        \in \mbb{N}} \in \mathfrak{A}$, it holds that   
    \[
    \bigg| \Pr[\qA_{\secp}(\mcl{Z}_{\secp}(\theta)) = 1] - \Pr[\qA_{\secp}(\mcl{Z}_{\secp}(0^\secp)) = 1] \bigg| \le \negl(\secp).
    \]
    For any $\{\qA_{\secp}\}_{\secp \in \mbb{N}} \in \mathfrak{A}$, consider the following distribution $\{\wt{\mcl{Z}}_{\secp}^{\qA_{\secp}}(b)\}_{\secp \in \mbb{N}, b \in \{0, 1\}}$ over quantum states, obtained by running $\qA_{\secp}$ as follows:
    \begin{itemize}
        \item Sample $z, \theta \leftarrow \{0, 1\}^{\secp}$ and initialize $\qA_{\secp}$ with $(\ket{z}_{\theta}, b \oplus \bigoplus_{i: \theta_i = 0} z_i, \mcl{Z}_{\secp}(\theta))$.
        \item $\qA_{\secp}$'s output is parsed as bit string $z^\prime \in \{0, 1\}^{\secp}$ and a residual quantum state $\rho$.
        \item If $z_i = z_i^\prime$ for all $i$ such that $\theta_i = 1$ then output $\rho$, and otherwise output a special symbol $\bot$.
    \end{itemize}   
    Then, it holds that 
    \[
    \TD(\wt{\mcl{Z}}_{\secp}^{\qA_{\secp}}(0), \wt{\mcl{Z}}_{\secp}^{\qA_{\secp}}(1)) \le \negl(\secp).
    \]
\end{lemma}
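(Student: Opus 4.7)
The plan is to prove this via a standard hybrid argument that interpolates between $\wt{\mcl{Z}}_\secp^{\qA_\secp}(0)$ and $\wt{\mcl{Z}}_\secp^{\qA_\secp}(1)$ using an intermediate ``information-theoretic'' experiment in which $\mcl{Z}_\secp(\theta)$ has been stripped of its dependence on $\theta$. Concretely, I would consider four hybrids: $H_0 = \wt{\mcl{Z}}_\secp^{\qA_\secp}(0)$; $H_1$ which is identical except that the auxiliary input $\mcl{Z}_\secp(\theta)$ is replaced by $\mcl{Z}_\secp(0^\secp)$; $H_2$ which is identical to $H_1$ except that the masking bit $b \oplus \bigoplus_{i:\theta_i=0} z_i$ is computed with $b=1$ instead of $b=0$; and $H_3 = \wt{\mcl{Z}}_\secp^{\qA_\secp}(1)$ which restores $\mcl{Z}_\secp(\theta)$. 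The goal is to bound $\TD(H_0,H_1)$ and $\TD(H_2,H_3)$ using the given indistinguishability hypothesis, and $\TD(H_1,H_2)$ by a purely information-theoretic argument about BB84 states.

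For the outer transitions $H_0 \cind H_1$ and $H_2 \cind H_3$, I would construct a reduction $\qB_\secp$ in the class $\mathfrak{A}$ which, given a challenge sample drawn from either $\mcl{Z}_\secp(\theta)$ or $\mcl{Z}_\secp(0^\secp)$, internally samples $z,\theta \leftarrow \bit^\secp$, prepares $\ket{z}_\theta$, computes the masked bit $b \oplus \bigoplus_{i:\theta_i=0} z_i$, and simulates $\qA_\secp$. Any nontrivial trace-distance gap between $H_0$ and $H_1$ (resp.~$H_2$ and $H_3$) would translate into a distinguishing advantage for $\qB_\secp$, contradicting the hypothesis that $\mcl{Z}_\secp(\theta)$ and $\mcl{Z}_\secp(0^\secp)$ are indistinguishable within $\mathfrak{A}$. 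Here one has to be careful that $\mathfrak{A}$ is closed under the polynomial-time post-processing performed by $\qB_\secp$, which will be true for the intended instantiations (QPT adversaries, with or without quantum advice).

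The crux of the proof is the middle step $\TD(H_1,H_2) = \negl(\secp)$, which is the information-theoretic heart of the certified deletion paradigm. In $H_1$ and $H_2$, the auxiliary input $\mcl{Z}_\secp(0^\secp)$ is independent of $\theta$, so $\theta$ only leaks to $\qA_\secp$ through the BB84 register $\ket{z}_\theta$ and the single masked bit $b \oplus \bigoplus_{i:\theta_i=0} z_i$. Conditioning on the certificate-verification event $z'_i = z_i$ for all $i$ with $\theta_i=1$, one can view the experiment as: the adversary must correctly produce the Hadamard-basis outcomes on the $\theta_i=1$ coordinates, and we must argue the residual state carries no information about the parity $\bigoplus_{i:\theta_i=0} z_i$. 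I would invoke (or re-prove) the Broadbent--Islam style monogamy-of-entanglement bound for BB84 states, which shows that any adversary passing Hadamard-basis verification on the $\theta_i=1$ coordinates is exponentially close to one whose residual state is independent of the computational-basis parity on the complementary coordinates.

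The main obstacle will be executing this final step cleanly at the level of trace distance rather than distinguishing advantage, because the output of $\wt{\mcl{Z}}_\secp^{\qA_\secp}(b)$ is a quantum state and we want a statistical bound. The standard route is to pass through a purification/coset-state formulation: fix the adversary's Stinespring dilation, write the joint state of $(\theta, z, \text{adv})$ conditioned on acceptance, and bound the trace distance between the $b=0$ and $b=1$ reduced states by the expected bias with which the parity $\bigoplus_{i:\theta_i=0} z_i$ can be predicted. A Chernoff-type argument over the random choice of $\theta$ then shows that acceptance forces the adversary to have measured too many coordinates in the Hadamard basis to retain meaningful computational-basis information, delivering the desired negligible bound. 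Summing the three hybrid gaps via the triangle inequality completes the proof.
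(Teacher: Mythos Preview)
The paper does not itself prove this lemma---it is quoted from Bartusek--Khurana and used as a black box---so there is no in-paper proof to compare against, but your proposal has a real gap at the outer transitions $H_0\to H_1$ and $H_2\to H_3$. You want $\TD(H_0,H_1)\le\negl(\secp)$ via a reduction $\qB_\secp\in\mathfrak{A}$, but turning a trace-distance gap on the residual state $\rho$ into a single output bit requires applying the optimal distinguishing POVM, which is in general unbounded; so $\qB_\secp$ lands outside $\mathfrak{A}$ when $\mathfrak{A}$ is (non-uniform) QPT, the case that matters for every application in the paper. Concretely, take $\mcl{Z}_\secp(\theta)$ to be an IND-CPA encryption of $\theta$: the hypothesis holds, yet the adversary's residual register still contains that ciphertext, so $\TD(H_0,H_1)$ is essentially $1$ and the triangle inequality gives a vacuous bound on $\TD(H_0,H_3)$. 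From the hypothesis you only obtain $H_0\cind H_1$ and $H_2\cind H_3$, and one cannot add computational gaps to the statistical gap $\TD(H_1,H_2)$ and arrive at a statistical conclusion. This is precisely the difficulty the lemma is meant to overcome: converting a computational hypothesis into an everlasting (statistical) guarantee.

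The Bartusek--Khurana argument avoids this by never isolating a hybrid whose sole change is $\mcl{Z}_\secp(\theta)\mapsto\mcl{Z}_\secp(0^\secp)$ at the level of trace distance. Working in the deferred-measurement (EPR) picture, the computational hypothesis is applied only to the probability of an \emph{efficiently testable} bad event---roughly, that the certificate $z'$ agrees with the verifier's Hadamard outcomes on the checked positions $\theta_i=1$ while disagreeing on many of the unchecked positions $\theta_i=0$. That probability is the output of a QPT procedure (the verifier holds the other EPR halves and can measure all of them in the Hadamard basis and compare with $z'$), so it \emph{is} preserved under the hypothesis; once the bad event is excluded, the remaining claim that the computational-basis parity on the $\theta_i=0$ positions is hidden from $\rho$ is purely information-theoretic and yields the trace-distance bound directly. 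The conceptual shift you are missing is to apply the computational step to a scalar probability of an efficient predicate rather than to a trace distance between residual states.
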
 
\begin{remark}
Although the description of the lemma above is slightly different from the original version (see the first item of~\cref{rem:difference_from_BK}), it is essentially the same as what Bartusek and Khurana~\cite{myC:BarKhu23} proved. In addition, even if we put $b \oplus \bigoplus_{i: \theta_i = 0} z_i$ in $\mcl{Z}_{\secp}$ with $\theta$, the lemma holds.
\end{remark}

We can generalize~\cref{lem:ce} to a variant in the interactive game setting as follows.
\begin{lemma}[Interactive Certified Everlasting Lemma~\cite{myC:BarKhu23}]\label{lem:ce_int}
For interactive QPT algorithms $\qA$ and $\qChal$, $\theta\in \bin^\secp$, and $\beta\in \bit$, let $\expt{\qA,\qChal}{}(\secp,\theta,\beta)$ 
be an experiment that works as follows:
\begin{itemize}
    \item $\qA$ takes $1^\secp$ as input and $\qChal$ takes $(1^\secp,\theta,\beta)$ as input.  
    \item $\qA$ and $\qChal$ interact with each other through a quantum channel. 
    \item $\qA$ outputs a bit $b'$, which is treated as the output of the  experiment.
\end{itemize}
For interactive QPT algorithms $\qA'$  and $\qChal$ and $b\in\bin$, let $\tildeexpt{\qA',\qChal}{}(\secp,b)$ 
be an experiment that works as follows:
	\begin{itemize}
		\item Sample $z, \theta \leftarrow \{0, 1\}^{\secp}$. 
		  \item $\qA'$ takes $(1^\secp, \ket{z}_{\theta})$ as input and $\qChal$ takes $(1^\secp,\theta,b \oplus \bigoplus_{i: \theta_i = 0} z_i)$ as input.  
		\item $\qA'$ and $\qChal$ interact with each other through a quantum channel.
		\item $\qA'$ outputs a string $z' \in \{0, 1\}^{\secp}$ and a quantum state $\rho$.
		\item If $z_i = z_i'$ for all $i$ such that $\theta_i = 1$ then the experiment outputs $\rho$, and otherwise it outputs a special symbol $\bot$.
	\end{itemize}

For a QPT algorithm $\qChal$, if for any QPT algorithm $\qA$, $\theta\in\bin^\secp$, and $\beta\in\bit$, it holds that
	\[
	\bigg| \Pr[\expt{\qA,\qChal}{}(\secp,\theta,\beta) = 1] - \Pr[\expt{\qA,\qChal}{}(\secp,0^\secp,\beta) = 1] \bigg| \le \negl(\secp),
	\]
then for any QPT algorithm $\qA'$, it holds that
	\[
	\TD(\tildeexpt{\qA',\qChal}{}(\secp,0),\tildeexpt{\qA',\qChal}{}(\secp,1)) \le \negl(\secp).
	\]
\end{lemma}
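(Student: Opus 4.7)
The plan is to reduce the interactive version to the non-interactive Lemma~\ref{lem:ce} by treating the interactive challenger $\qChal$ as a ``quantum auxiliary input'' that a non-interactive adversary can drive. Given any QPT interactive adversary $\qA'$ for the interactive lemma, I would construct a non-interactive adversary $\qA^*_\secp$ and a family of distributions $\mcl{Z}_\secp(\theta)$ for Lemma~\ref{lem:ce} such that $\wt{\mcl{Z}}_\secp^{\qA^*_\secp}(b)$ coincides with $\tildeexpt{\qA',\qChal}{}(\secp, b)$, and then transfer the conclusion of Lemma~\ref{lem:ce} back to the interactive setting.

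Concretely, I would define $\mcl{Z}_\secp(\theta)$ to be a quantum register holding a purification of $\qChal$'s internal state initialized with the classical input $\theta$, while leaving the $\beta$-input register empty, to be written later. Since $\qChal$ is a fixed publicly known QPT algorithm, its gate sequence can be applied by any party that holds its state register. The non-interactive adversary $\qA^*_\secp$, on input $(\ket{z}_\theta,\, \beta,\, \mcl{Z}_\secp(\theta))$, writes $\beta$ into $\qChal$'s $\beta$-slot and then simulates the protocol between $\qA'(\ket{z}_\theta)$ and $\qChal$ coherently on $\mcl{Z}_\secp(\theta)$, outputting whatever $\qA'$ outputs at the end. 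By construction, $\wt{\mcl{Z}}_\secp^{\qA^*_\secp}(b)$ exactly equals $\tildeexpt{\qA',\qChal}{}(\secp, b)$.

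It remains to verify the indistinguishability hypothesis of Lemma~\ref{lem:ce}, namely that for every QPT $\qA_\secp$ the advantage of distinguishing $\mcl{Z}_\secp(\theta)$ from $\mcl{Z}_\secp(0^\secp)$ is negligible. Any such distinguisher $\qA_\secp$ can be converted into an interactive distinguisher $\qA$ for $\expt{\qA,\qChal}{}(\secp,\cdot,\beta)$: $\qA$ simulates $\qA_\secp$ using the real interaction with $\qChal$ in place of the simulated one on $\mcl{Z}_\secp(\theta)$. The hypothesis of the interactive lemma then gives $|\Pr[\expt{\qA,\qChal}{}(\secp,\theta,\beta)=1]-\Pr[\expt{\qA,\qChal}{}(\secp,0^\secp,\beta)=1]| \le \negl(\secp)$ for every $\beta$, which yields the needed computational indistinguishability of $\mcl{Z}_\secp(\theta)$ and $\mcl{Z}_\secp(0^\secp)$. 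Applying Lemma~\ref{lem:ce} to $\qA^*_\secp$ then gives $\TD(\wt{\mcl{Z}}_\secp^{\qA^*_\secp}(0),\wt{\mcl{Z}}_\secp^{\qA^*_\secp}(1))\le \negl(\secp)$, which by the construction above is exactly $\TD(\tildeexpt{\qA',\qChal}{}(\secp,0),\tildeexpt{\qA',\qChal}{}(\secp,1))\le \negl(\secp)$.

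The main obstacle is formalizing the sense in which $\qA^*_\secp$ can simulate the interactive protocol using only $\mcl{Z}_\secp(\theta)$ as quantum state for $\qChal$, in a way that is operationally indistinguishable from genuinely interacting with $\qChal$. The standard purification viewpoint should suffice: we regard $\qChal$'s execution as a fixed unitary circuit acting on registers $(\text{state}\otimes\text{input}\otimes\text{workspace}\otimes\text{message})$, and let $\qA^*_\secp$ apply the same gates to the registers it controls (including the $\beta$-slot that it populates itself). Once this ``quantization of the interaction'' is set up carefully, the reduction of the previous paragraph is immediate, and no new probabilistic arguments beyond Lemma~\ref{lem:ce} are required.
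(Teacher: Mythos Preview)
Your reduction has a real gap at the hypothesis-verification step. With your choice of $\mcl{Z}_\secp(\theta)$ as ``a purification of $\qChal$'s internal state initialized with the classical input $\theta$'', the string $\theta$ sits in the clear inside that register (before any interaction, $\qChal$'s state is essentially $\ket{\theta}\otimes\ket{0\cdots 0}$). Hence $\mcl{Z}_\secp(\theta)$ and $\mcl{Z}_\secp(0^\secp)$ are \emph{perfectly} distinguishable by the QPT adversary that simply measures the $\theta$-register, so the hypothesis of Lemma~\ref{lem:ce} fails and the lemma yields nothing.

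The conversion you sketch from a non-interactive distinguisher $\qA_\secp$ to an interactive $\qA$ does not save this: an arbitrary $\qA_\secp$ has \emph{white-box} access to $\qChal$'s state and may inspect it in ways that have no counterpart through the message interface, whereas the interactive hypothesis only guarantees that $\theta$ is hidden under \emph{black-box} interaction with $\qChal$. You cannot derive white-box hiding from black-box hiding, so the implication you need does not go through. (Your construction of $\qA^*_\secp$ is fine; it is only the hypothesis check that breaks.)

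The paper does not attempt such a reduction. It explicitly does \emph{not} prove Lemma~\ref{lem:ce_int} by invoking Lemma~\ref{lem:ce}; rather it remarks that the interactive variant ``can be proven in essentially the same way as their original proof'' of \cite{myC:BarKhu23}, i.e., by opening up the Bartusek--Khurana argument and rerunning it in the interactive setting. If you want a correct proof, that is the route to take: carry the hybrid argument and the quantum uncertainty-relation step of \cite{myC:BarKhu23} through while letting the adversary and challenger exchange messages, rather than trying to package $\qChal$ as static auxiliary input.
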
	
\begin{remark}\label{rem:difference_from_BK}
There are the following three differences from the certified everlasting lemma of \cite{myC:BarKhu23} besides notational adaptations.
\begin{enumerate}
     \item The challenger can use $\theta$ in an arbitrary manner whereas they require the challenger to use $\theta$ in a bit-by-bit manner.\footnote{In their notation, the challenger corresponds to $\mathcal{Z}(\theta)$.} 
    \item We consider an interactive setting whereas they consider a non-interactive setting.
    \item The challenger also takes $b \oplus \bigoplus_{i: \theta_i = 0} z_i$ as part of its input. 
\end{enumerate}
Indeed, we believe that the above variant is implicitly used in the security proof of their certified everlasting secure ABE in \cite{myC:BarKhu23}. 
We observe that the above variant can be proven in essentially the same way as their original proof. \takashi{We can write a proof sketch if we have a time.}
\end{remark}

\paragraph{Public-Slot functional encryption.}
We introduce a new primitive which we call public-slot functional encryption. In this primitive, a decryption key is associated with two-input function where the first and second inputs are referred to as secret and public inputs, respectively. Given a ciphertext of a message $m$ and a decryption key for a two-input function $f$, one can compute $f(m,\pub)$ for any public input $\pub$. In the security experiment, we require that a pair of challenge messages $(m_0,m_1)$ must satisfy $f(m_0,\pub)=f(m_1,\pub)$ for all key queries $f$ and public inputs $\pub$ to prevent trivial attacks. 
A formal definition is given below. 
\begin{definition}[Public-Slot FE (Syntax)]\label{def:fe_public_slot}
A public-slot functional encryption scheme for a class $\mathcal{F}$ of functions 
is a tuple of PPT algorithms
$\Sigma=(\Setup,\keygen,\Enc,\Dec)$ 
with plaintext space $\Ms$, ciphertext space $\Cs$, master public key space $\mathcal{MPK}$, master secret key space $\mathcal{MSK}$, secret key space $\mathcal{SK}$, and public input space $\mathcal{P}$, 
that work as follows. \takashi{I'm specifying spaces for everything following other definitions in HNMY22. But this may be redundant.}
\begin{description}
    \item[$\Setup(1^\secp)\ra(\MPK,\MSK)$:]
    The setup algorithm takes the security parameter $1^\secp$ as input, and outputs a master public key $\MPK\in\mathcal{MPK}$
    and a master secret key $\MSK\in\mathcal{MSK}$.
    \item[$\keygen(\MSK,f)\ra\sk_f$:]
    The key generation algorithm takes $\MSK$ and $f\in\mathcal{F}$ as input,
    and outputs a secret key $\sk_f\in\mathcal{SK}$.
    \item[$\Enc(\MPK,m)\ra \CT$:]
    The encryption algorithm takes $\MPK$ and $m\in\Ms$ as input,
     and outputs  a ciphertext $\CT\in \Cs$.
    \item[$\Dec(\sk_f,\CT,\pub)\ra y$ $\mbox{\bf  or }\bot$:]
    The decryption algorithm takes $\sk_f$, $\CT$, and a public input $\pub\in \mathcal{P}$ as input, and outputs $y$ or $\bot$.
\end{description}
\end{definition}

We require that an FE with certified everlasting deletion scheme satisfies correctness defined below.
\begin{definition}[Correctness of Public-Slot FE]\label{def:correctness_fe_public_slot}
There exists a negligible function $\negl$ such that for any $\secp\in \N$, $m\in\Ms$, $f\in\mathcal{F}$, and $\pub\in \mathcal{P}$, 
\begin{align}
\Pr\left[
y\ne f(m,\pub)
\ \middle |
\begin{array}{ll}
(\MPK,\MSK)\la\Setup(1^\secp)\\
\sk_f\la\keygen(\MSK,f)\\
\CT\lrun \Enc(\MPK,m)\\
y\la\Dec(\sk_f,\CT,\pub)
\end{array}
\right] 
\le\negl(\secp).
\end{align}
\end{definition}

\begin{definition}[Security of Public-Slot FE]\label{def:security_fe_public_slot}
Let $\Sigma=(\Setup,\keygen,\Enc,\Dec)$ be a public-slot FE scheme.
We consider the following security experiment $\expb{\Sigma,\qA}{ada}{ind}(\secp,b)$ against a QPT adversary $\qA$. 
\begin{enumerate}
    \item The challenger runs $(\MPK,\MSK)\la \Setup(1^\secp)$ and sends $\MPK$ to $\qA$.
    \item $\qA$ is allowed to make arbitrarily many key queries.
    For the $\ell$-th key query, the challenger receives $f_\ell\in\mathcal{F}$, 
    computes $\sk_{f_\ell}\la\keygen(\MSK,f_\ell)$,
    and sends $\sk_{f_\ell}$ to $\qA$.
    \item $\qA$ sends $(m_0,m_1)\in\Ms^2$ to the challenger.
    \item 
    The challenger computes $\CT\la \Enc(\MPK,m_b)$ and sends $\CT$ to $\qA$.
    \item Again, $\qA$ is allowed to make arbitrarily many key queries.
    \item $\qA$ outputs $b'\in\bit$. 
    If $f_\ell(m_0,\pub)=f_\ell(m_1,\pub)$ holds for all key queries $f_\ell$ and public inputs $\pub\in \mathcal{P}$, the experiment outputs $b'$. Otherwise, it outputs $\bot$. \takashi{The definition is a little bit tricky since the condition is not efficiently checkable and thus we cannot simply say that the adversary must comply with this restriction. I believe the current definition is fine even though this is not falsifiable.}
\end{enumerate}
We say that $\Sigma$ is adaptively indistinguishable-secure if for any $QPT$ adversary $\qA$ it holds that
\begin{align}
\advb{\Sigma,\qA}{ada}{ind}(\secp) \seteq \abs{\Pr[\expb{\Sigma,\qA}{ada}{ind}(\secp,0)=1]  - \Pr[\expb{\Sigma,\qA}{ada}{ind}(\secp,1)=1]} \leq \negl(\secp).
\end{align}
\end{definition}

It is easy to construct selectively single-ciphertext public-slot SKFE from multi-input FE by Goldwasser et al., which can be instantiated with IO~\cite{EC:GGGJKL14}.
We convert it to adaptively indistinguishable-secure public-slot PKFE via a few transformation.
We prove the following theorem in~\cref{sec:adaptive_PKFE_pub_slot}.
\begin{theorem}\label{thm:adaptive_PKFE_pub_slot_from_IO}
If there exist IO and OWFs, there exists an adaptively indistinguishable-secure public-slot PKFE for $\Ppoly$.
\end{theorem}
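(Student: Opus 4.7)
The plan is to prove the theorem modularly in two stages, following the hints given in the statement. First, I would build a selectively-secure single-ciphertext public-slot SKFE from IO and OWFs by leveraging the two-input SKFE of Goldwasser et al.~\cite{EC:GGGJKL14}. Second, I would boost this to an adaptively-secure public-slot PKFE via generic transformations. The key conceptual point throughout is that the public-slot functionality is a mild syntactic augmentation of standard FE: the public input can always be absorbed into the function itself via $g_f(m,\pub) \seteq f(m,\pub)$, and generic FE-to-FE compilers that use the underlying FE as a black box preserve the public-slot structure.

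For the first stage, given a 2-input SKFE $\Sigma_{2\FE}$, I would define the public-slot SKFE as follows. The master secret key bundles the 2-input MSK with a PRF key $K$, included so that slot-2 ciphertexts can be derived publicly from a public input. The functional decryption key for $f$ consists of $\Sigma_{2\FE}.\keygen(\MSK_{2\FE}, g_f)$ together with $K$. Encryption of $m$ outputs $\Sigma_{2\FE}.\Enc_1(\MSK_{2\FE}, m)$, while decryption on public input $\pub$ first derives the slot-2 ciphertext $\Sigma_{2\FE}.\Enc_2(\MSK_{2\FE}, \pub; \PRF(K,\pub))$ deterministically and then invokes 2-input decryption. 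Selective single-ciphertext indistinguishability would reduce to the selective security of $\Sigma_{2\FE}$ combined with PRF security, where the public-slot admissibility condition $f(m_0,\pub) = f(m_1,\pub)$ for all $\pub$ matches exactly the MIFE admissibility condition.

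For the second stage, I would chain a sequence of generic, functionality-preserving transformations: (i) a single-ciphertext-to-many-ciphertext boost; (ii) an SKFE-to-PKFE conversion via IO in the style of~\cite{SIAMCOMP:GGHRSW16}; and (iii) the selective-to-adaptive compiler of~\cite{C:ABSV15}, which uses IO and puncturable PRFs and treats the underlying FE as a black box. The main technical obstacle I anticipate is carefully verifying that the public-slot admissibility condition, which quantifies over \emph{all} public inputs (and is therefore not efficiently checkable), is correctly preserved by each hybrid in each compiler. I expect this to go through because every such hybrid either only evaluates the underlying functional keys on honestly-generated challenge messages, or invokes the underlying security game whose admissibility condition is implied by the public-slot admissibility condition applied to $g_f$. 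An alternative, more direct route—which the excerpt hints may be cleaner—is to observe that the Ananth--Sahai adaptively secure PKFE from IO and OWFs~\cite{TCC:AnaSah16} already supports a public input slot essentially unchanged, since the obfuscated circuits in that construction can be augmented to take $\pub$ as an additional input without affecting either the correctness or the indistinguishability-based hybrid argument.
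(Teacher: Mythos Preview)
Your alternative route—adapting Ananth--Sahai directly—is what the paper actually does, and your main route (chaining generic compilers including ABSV) is not pursued. The paper's key structural insight, which your proposal gestures at but does not pin down, is that in the Ananth--Sahai construction the public slot only needs to be threaded through the \emph{innermost} component, namely the adaptively secure single-key single-ciphertext SKFE; the outer layers (selectively secure PKFE, selectively single-key function-private SKFE, PRF, SKE) remain completely standard. This works because in Ananth--Sahai decryption the outer layers are unwrapped first to recover an inner SKFE key and ciphertext, and only then is the actual function evaluated—so the public input $\pub$ is fed in only at this last step. The paper therefore proceeds as: (i) $2$-input FE $\Rightarrow$ selectively single-ciphertext public-slot SKFE, by placing $\EK_2$ in the ciphertext (simpler than your PRF-derandomization, which as written leaves unclear where the decryptor gets $\EK_2$); (ii) add receiver non-committing encryption to upgrade this to \emph{adaptively} single-key single-ciphertext public-slot SKFE; (iii) plug this into Ananth--Sahai verbatim as the inner SKFE. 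Your generic-compiler route may well work, but the paper avoids it precisely because localizing the public-slot modification to one small primitive sidesteps the concern you yourself flag about propagating the non-efficiently-checkable admissibility condition through every hybrid of every compiler.
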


\subsection{Collusion-Resistant Construction}\label{sec:CRFE_CED_const}

\paragraph{Ingredients.} We need the following building blocks.
\begin{itemize}
    \item Public-slot FE $\FE = \FE.(\setup,\keygen, \enc, \dec)$ for all polynomial-size circuits. 
    \item PRG $\PRG:\bit^\secp\ra\bit^{2\secp}$.
\end{itemize}

\paragraph{Scheme description.} Our FE with certified everlasting deletion scheme $\CED= \CED.(\Setup,\keygen,\qEnc,\qDec,\qDelete,\Vrfy)$ is described below.
\begin{description}
\item[$\CED.\setup(1^{\secp})$:]$\vspace{0.01cm}$
\begin{enumerate}
    \item Generate $(\fe.\MPK,\fe.\MSK) \leftarrow \FE.\setup(1^{\secp})$.
    \item Output $\MPK \seteq \fe.\MPK$ and $\MSK \seteq \fe.\MSK$.
\end{enumerate}

\item[$\CED.\keygen(\msk, f)$:]$\vspace{0.01cm}$
\begin{enumerate}
    \item Parse $\MSK =\fe.\MSK$.
    \item Generate $\fe.\sk_{g[f]}\gets \FE.\keygen(\fe.\MSK,g[f])$ where $g[f]$ is a function described in \cref{fig:g}. 
    \item Output $\sk_f=\fe.\sk_{g[f]}$.
\end{enumerate}

\item[$\CED.\qenc(\MPK, m)$:]$\vspace{0.01cm}$
\begin{enumerate}
    \item Parse $\MPK = \fe.\MPK$.
    \item Generate $z_{i}, \theta_{i} \leftarrow \{0, 1\}^{\secp}$ for every $i\in[2\msglen+1]$.
    \item Generate $u_{i,j,b}\la\bit^\secp$ and compute $v_{i,j,b}\la\PRG(u_{i,j,b})$ for every $i\in[2\msglen+1]$, $j\in[\secp]$, and $b\in\bit$. 
    Set $U\seteq (u_{i,j,b})_{i\in[2\msglen+1],j\in[\secp],b\in\bit}$ and $V\seteq (v_{i,j,b})_{i\in[2\msglen+1],j\in[\secp],b\in\bit}$. 
    \item 
    Generate a state 
    \begin{align}
\ket{\psi_{i,j}}\seteq \begin{cases}
    \ket{z_{i,j}}\ket{u_{i,j,z_{i,j}}} &\text{~if~} \theta_{i,j}=0\\ 
    \ket{0}\ket{u_{i,j,0}}+(-1)^{z_{i,j}} \ket{1}\ket{u_{i,j,1}} &\text{~if~} \theta_{i,j}=1
    \end{cases} 
    \end{align}
    where $\theta_{i,j}$ (resp. $z_{i,j}$) is the $j$-th bit of $\theta_i$ (resp. $z_i$)
    for every $i\in [2\msglen+1]$ and $j\in [\secp]$. 
    \item 
    Generate 
    \begin{align}
        \beta_i\seteq \begin{cases}
        m_i\oplus \bigoplus_{j: \theta_{i,j} = 0} z_{i,j} &\text{if~}i\in[\msglen]\\
        0\oplus \bigoplus_{j: \theta_{i,j} = 0} z_{i,j} &\text{if~}i\in[\msglen+1, 2\msglen+1]
        \end{cases}.
    \end{align}
\item Generate $\fe.\ct\la\FE.\enc(\fe.\MPK,V\|\theta_1\|\ldots\|\theta_{2\msglen+1}\|\beta_1\|\ldots \|\beta_{2\msglen+1})$.
    \item Output $\qct = (\fe.\ct, \bigotimes_{i\in[2\msglen+1],j\in[\secp]}\ket{\psi_{i,j}})$ and $\vk = (U,(z_i,\theta_i)_{i\in[2\msglen+1]})$.
\end{enumerate}

\item[$\CED.\qdec(\sk_f, \qct)$:]$\vspace{0.01cm}$
\begin{enumerate}
\item Parse $\sk_f\la\fe.\sk_{g[f]}$ and $\qct = (\fe.\ct, \bigotimes_{i\in[2\msglen+1],j\in[\secp]}\ket{\psi_{i,j}})$.
    \item Coherently apply $\FE.\dec(\fe.\sk_{g[f]},\fe.\ct,\cdot)$ on $\bigotimes_{i\in[2\msglen+1],j\in[\secp]}\ket{\psi_{i,j}}$
 and measure the outcome $y$.
    \item Output $y$.
\end{enumerate}

\item[$\CED.\qdel(\qct)$:]$\vspace{0.01cm}$
\begin{enumerate}
    \item Parse $\qct = (\fe.\ct, \bigotimes_{i\in[2\msglen+1],j\in[\secp]}\ket{\psi_{i,j}})$. 
    \item Measure $\ket{\psi_{i,j}}$ in the Hadamard basis to get $c_{i,j}\|d_{i,j}\in\bit^{\secp+1}$ for every $i\in[2\msglen+1]$ and $j\in [\secp]$.
    \item Output $\cert=(c_{i,j},d_{i,j})_{i\in[2\msglen+1],j\in[\secp]}$.
\end{enumerate}

\item[$\CED.\vrfy(\vk,\cert)$:]$\vspace{0.01cm}$
\begin{enumerate}
    \item Parse $\vk = (U,(z_i,\theta_i)_{i\in [2\msglen+1]})$ and $\cert=(c_{i,j},d_{i,j})_{i\in[2\msglen+1],j\in[\secp]}$.
    \item Check if $z_{i,j}=c_{i,j} \oplus d_{i,j}\cdot(u_{i,j,0}\oplus u_{i,j,1})$ holds for every $i\in[2\msglen+1]$ and $j\in[\secp]$ such that $\theta_{i,j}=1$, where $z_{i,j}$  is the $j$-th bit of $z_i$. If so, output $\top$ and otherwise output $\bot$.
\end{enumerate}

\begin{figure}
    \begin{framed}
    \begin{center}
    \underline{$g[f]$}
    \end{center}
    \textbf{Secret Input:}~ $V,\theta_1,\ldots,\theta_{2\msglen+1},\beta_1,\ldots,\beta_{2\msglen+1}$\\
     \textbf{Public Input:}~ $(b_{i,j},u_{i,j})_{i\in [2\msglen+1],j\in [\secp]}$
    \begin{enumerate}
    \item Parse $V=(v_{i,j,b})_{i\in[2\msglen+1],j\in[\secp],b\in\bit}$. 
    \item Check if $\PRG(u_{i,j})=v_{i,j,b_{i,j}}$ holds for every $i\in[2\msglen+1]$ and $j\in[\secp]$. If so, go to the next step and otherwise output $\bot$.
    \item Compute  $m_i\seteq \beta_i\oplus \bigoplus_{j: \theta_{i,j} = 0} b_{i,j}$ for every $i\in[2\msglen+1]$.
    \item Output $f(m_1\|\cdots\| m_\msglen)$ if $m_{2\msglen+1}=0$ and output $f(m_{\msglen+1}\|\cdots\| m_{2\msglen})$ otherwise.
    \end{enumerate}
    \end{framed}
    \caption{The description of the function $g[f]$}
    \label{fig:g}
\end{figure}
\end{description}

\begin{theorem}\label{thm:FE_CED_adaptive}
If $\FE$ is adaptively indistinguishable-secure public-slot FE for $\Ppoly$ and $\PRG$ is a secure PRG, $\CED$ is adaptively certified everlasting indistinguishable-secure FE for $\Ppoly$.
\end{theorem}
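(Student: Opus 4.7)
The plan is a hybrid argument that proves both parts of \cref{def:FE_CED_IND_sec} simultaneously. We transform the $(2n+1)$-bit encoded message actually fed into the public-slot FE along the path
\[
m^{(0)}\|0^{n+1}\ \leadsto\ m^{(0)}\|m^{(1)}\|0\ \leadsto\ m^{(0)}\|m^{(1)}\|1\ \leadsto\ m^{(1)}\|m^{(1)}\|1\ \leadsto\ m^{(1)}\|m^{(1)}\|0\ \leadsto\ m^{(1)}\|0^{n+1}.
\]
Each $\leadsto$ decomposes into a sequence of single-bit flips. The selector bit $M_{2n+1}$ is arranged in every intermediate hybrid so that $g[f]$ always evaluates on a block of $M$ equal to either $m^{(0)}$ or $m^{(1)}$; hence $g[f]$ outputs $f(m^{(0)})=f(m^{(1)})$ throughout, which is exactly what is needed to invoke public-slot FE security at each step.

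For a single-bit flip at position $i^*$, I would proceed in three steps. First, use PRG security to replace $v_{i^*,j,1-z_{i^*,j}}$ by a uniformly random $2\secp$-bit string for every $j$ with $\theta_{i^*,j}=0$; with overwhelming probability no $\PRG$-preimage of such a random string exists, so any public input to $g[f]$ that passes signature verification must satisfy $b_{i^*,j}=z_{i^*,j}$ at every $j$ with $\theta_{i^*,j}=0$. Second, invoke public-slot FE security to rewrite the encrypted secret input of the public-slot FE into a form that is consistent with either value of the $i^*$-th flipped bit; the switch is valid because for every public input $\pub$ either signature verification fails, in which case $g[f]$ returns $\bot$ on both sides, or the components of $\pub$ on positions with $\theta_{i^*,j}=0$ are pinned down by Step~1, and the redundant encoding then forces $g[f]$'s output to be unchanged. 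Third, apply the interactive certified everlasting lemma (\cref{lem:ce_int}): the reduction challenger $\qChal(\theta_{i^*},\beta_{i^*})$ simulates every component of the FE experiment apart from the $i^*$-th BB84 state (i.e., all other BB84 states together with their PRG authenticators, the rewritten public-slot FE ciphertext, the key-query oracle, the verification algorithm, and the post-verification interaction), while the adversary $\qA'$ receives $\ket{z_{i^*}}_{\theta_{i^*}}$ as its setup input. The lemma then converts the computational indistinguishability of the challenger's output (established by Step~2) into the statistical closeness of residual states required for the $\EV$ variant, and gives the computational indistinguishability needed for the $\C$ variant.

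The hard part will be Step~2: verifying the hypothesis of \cref{lem:ce_int} requires that $g[f](M_0,\pub)=g[f](M_1,\pub)$ hold for \emph{every} public input $\pub$, not only those passing signature verification. Unauthenticated $\pub$ are handled automatically since $g[f]$ outputs $\bot$ in both hybrids; authenticated $\pub$ are controlled by the PRG switch of Step~1 together with the redundant encoding. Summing the negligible losses incurred across the $O(n)$ single-bit transitions in the hybrid chain yields the $\negl(\secp)$ bounds required by both parts of the security definition.
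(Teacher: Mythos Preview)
Your hybrid path matches the paper, but Step~1 performs the PRG switch at the wrong position, and this breaks Step~2.

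You randomize $v_{i^*,j,1-z_{i^*,j}}$ at the flipped position $i^*$, pinning down $b_{i^*,j}=z_{i^*,j}$ for $j$ with $\theta_{i^*,j}=0$. But to invoke public-slot FE security in Step~2 you must show $g[f](M_\theta,\pub)=g[f](M_{0^\secp},\pub)$ for \emph{every} $\pub$, where the two secret inputs differ only in $\theta_{i^*}$. The redundant encoding makes the output independent of $m_{i^*}$ only when the selector $m_{2n+1}$ and the active block take their intended values. Since you leave positions $i\neq i^*$ untouched, both $v_{i,j,0}$ and $v_{i,j,1}$ still have $\PRG$-preimages there, so there exists a $\pub$ that passes every signature check yet sets $m_{2n+1}=1$ (and sets the remaining $m_i$ arbitrarily). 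For such $\pub$ in the $\hyb_{1,k}$ transitions the output becomes $f(m_{n+1}\|\cdots\|m_{2n})$, which depends on $m_{i^*}$; and $m_{i^*}$ genuinely changes when $\theta_{i^*}$ is replaced by $0^\secp$ (your switch does not constrain $b_{i^*,j}$ at indices with $\theta_{i^*,j}=1$). Hence the public-slot FE hypothesis fails.

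The paper does the opposite: it randomizes $v_{i,j,1\oplus z_{i,j}}$ for \emph{all} $i\neq i^*$ and leaves position $i^*$ alone. This pins down $m_{2n+1}$ and the entire active block, so $g[f]$'s output is provably independent of $\theta_{i^*}$, and FE security then applies. Note also that a PRG switch at $i^*$ is not even implementable inside the lemma's challenger $\qChal$: $\qChal$ receives $\theta_{i^*}$ but not $z_{i^*}$, so it cannot decide which of $v_{i^*,j,0},v_{i^*,j,1}$ to randomize.
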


\paragraph{Decryption Correctness.}
Let $\qct = (\fe.\ct, \bigotimes_{i\in[2\msglen+1],j\in[\secp]}\ket{\psi_{i,j}})$ be an honestly generated ciphertext for a message $m$ and $\sk_f=\fe.\sk_{g[f]}$ be an honestly generated decryption key for a function $f$. Then we have 
$\fe.\ct\in \FE.\enc(\fe.\MPK,V\|\theta_1\|\ldots\|\theta_{2\msglen+1}\|\beta_1\|\ldots \|\beta_{2\msglen+1})$ and $\ket{\psi_{i,j}}=\ket{z_{i,j}}\ket{u_{i,j,z_{i,j}}}$ for all $i\in [2\msglen+1]$ and $j\in[\secp]$ such that $\theta_{i,j}=0$. Since we have $\PRG(u_{i,j,b})=v_{i,j,b}$ for all $i\in [2\msglen+1]$, $j\in[\secp]$, and $b\in \bit$,  and $m_{2\msglen+1}=\beta_{2\msglen+1}\oplus \bigoplus_{j: \theta_{2\msglen+1,j} = 0} z_{i,j}=0$,  if we coherently run $g[f](V\|\theta_1\|\ldots\|\theta_{2\msglen+1}\|\beta_1\|\ldots \|\beta_{2\msglen+1},\cdot)$ on $\bigotimes_{i\in[2\msglen+1],j\in[\secp]}\ket{\psi_{i,j}}$
and measure the output, then the resulting outcome is $f(\beta_{1}\oplus \bigoplus_{j: \theta_{1,j} = 0}z_{1,j}\|\ldots\|\beta_{\msglen}\oplus \bigoplus_{j: \theta_{\msglen,j}=0}z_{\msglen,j})=f(m)$.
Then the decryption correctness follows from that of $\FE$. 

\paragraph{Verification Correctness.}
Let $\qct = (\fe.\ct, \bigotimes_{i\in[2\msglen+1],j\in[\secp]}\ket{\psi_{i,j}})$ be an honestly generated ciphertext and $\vk = (U,(z_i,\theta_i)_{i\in[2\msglen+1]})$ be the corresponding verification key. 
For all $i\in [2\msglen+1]$ and $j\in [\secp]$ such that $\theta_{i,j}=1$,  
since we have $\ket{\psi_{i,j}}=\ket{0}\ket{u_{i,j,0}}+(-1)^{z_{i,j}} \ket{1}\ket{u_{i,j,1}}$, if we measure it in the Hadamard basis, then the outcome $c_i\|d_i$ satisfies  $z_{i,j}=c_{i,j} \oplus d_{i,j}\cdot(u_{i,j,0}\oplus u_{i,j,1})$.
This immediately implies the verification correctness. 

\paragraph{Security for $\C\expb{\CED,~ \qA}{ada}{ind}(\secp, b)$.}
We omit the proof in the main body. See~\cref{appsec:omitted_proof_crfecd}.

\paragraph{Security for $\EV\expb{\CED,~ \qA}{ada}{ind}(\secp, b)$.}
Let $\qA$ be an adversary against the adaptive certified everlasting indistinguishable-security.
We consider the following sequence of hybrids.

\begin{description}
\item[$\hyb_0$:]
This is the original everlasting adaptive security experiment where the challenge bit is set to be $0$. Specifically, it works as follows:
\begin{enumerate}
    \item The challenger generates $(\fe.\MPK,\fe.\MSK) \leftarrow \FE.\setup(1^{\secp})$, sets $\MPK \seteq \fe.\MPK$ and $\MSK \seteq \fe.\MSK$, and sends $\MPK$ to $\qA$.
    \item $\qA$ can make arbitrarily many key queries at any point of the experiment. When it makes a key query $f$, the challenger generates  $\fe.\sk_{g[f]}\gets \FE.\keygen(\fe.\MSK,g[f])$ and returns $\sk_f=\fe.\sk_{g[f]}$ to $\qA$. 
    \item $\qA$ sends $(m^{(0)},m^{(1)})$ to the challenger.\footnote{We use $(m^{(0)},m^{(1)})$ instead of $(m_0,m_1)$ to denote a pair of challenge messages to avoid a notational collision.} It must satisfy $f(m^{(0)})=f(m^{(1)})$ for all key queries $f$ that are made before or after sending $(m^{(0)},m^{(1)})$. 
    \item The challenger generates $(\qct,\vk)\gets \qenc(\MPK,m^{(0)})$. Specifically,
    \begin{enumerate}
    \item Generate $z_{i}, \theta_{i} \leftarrow \{0, 1\}^{\secp}$ for every $i\in[2\msglen+1]$.
    \item Generate $u_{i,j,b}\la\bit^\secp$ and compute $v_{i,j,b}\la\PRG(u_{i,j,b})$ for every $i\in[2\msglen+1]$, $j\in[\secp]$ and $b\in\bit$ and set $U=(u_{i,j,b})_{i\in[2\msglen+1],j\in[\secp],b\in\bit}$ and $V\seteq (v_{i,j,b})_{i\in[2\msglen+1],j\in[\secp],b\in\bit}$. 
    \item 
    Generate a state 
    \begin{align}
        \ket{\psi_{i,j}}\seteq \begin{cases}
    \ket{z_{i,j}}\ket{u_{i,j,z_{i,j}}} &\text{~if~} \theta_{i,j}=0\\ 
    \ket{0}\ket{u_{i,j,0}}+(-1)^{z_{i,j}} \ket{1}\ket{u_{i,j,1}} &\text{~if~} \theta_{i,j}=1
    \end{cases}
    \end{align}
    where $\theta_{i,j}$ (resp. $z_{i,j}$) is the $j$-th bit of $\theta_i$ (resp. $z_i$)
    for every $i\in [2\msglen+1]$ and $j\in [\secp]$. 
    \item  \label{step:generate_beta_ever}
    Generate
      \begin{align}
        \beta_i\seteq \begin{cases}
        m_i^{(0)}\oplus \bigoplus_{j: \theta_{i,j} = 0} z_{i,j} &\text{if~}i\in[\msglen]\\
        0\oplus \bigoplus_{j: \theta_{i,j} = 0} z_{i,j} &\text{if~}i\in[\msglen+1, 2\msglen+1]
        \end{cases}.
    \end{align}
\item Generate $\fe.\ct\la\FE.\enc(\fe.\MPK,V\|\theta_1\|\ldots\|\theta_{2\msglen+1}\|\beta_1\|\ldots \|\beta_{2\msglen+1})$.
    \item Set $\qct = (\fe.\ct, \bigotimes_{i\in[2\msglen+1],j\in[\secp]}\ket{\psi_{i,j}})$ and $\vk = (U,(z_i,\theta_i)_{i\in[2\msglen+1]})$.
    \end{enumerate}
    The challenger sends $\qct$ to $\qA$.
\item $\qA$ sends $\cert=(c_{i,j},d_{i,j})_{i\in [2\msglen+1],j\in[\secp]}$ and its internal state $\rho$ to the challenger.
\item The challenger checks if $z_{i,j}=c_{i,j} \oplus d_{i,j}\cdot(u_{i,j,0}\oplus u_{i,j,1})$ holds for every $i\in[2\msglen+1]$ and $j\in[\secp]$ such that $\theta_{i,j}=1$. If it does not hold, the challenger outputs $\bot$ as a final output of the experiment. Otherwise, go to the next step. \takashi{In the current definition in HNMY, we have to send $\MSK$ to $\qA_2$. However, I think this is redundant as pointed out by BK. I'm omitting this because this make the proof simpler. }
\item The experiment outputs $\rho$ as a final output.\ryo{I modified the final output to fit the security definition (\cref{def:FE_CED_IND_sec}.)}
\end{enumerate}

\item[$\hyb_{1,k}$:] For $k=0,1,\ldots,\msglen$, this hybrid is identical to $\hyb_0$ except that 
the way of generating $\beta_i$ is modified as follows:
   \begin{align}
        \beta_i\seteq \begin{cases}
        m_i^{(0)}\oplus \bigoplus_{j: \theta_{i,j} = 0} z_{i,j} &\text{if~}i\in[\msglen]\\
        m_{i-n}^{(1)}\oplus \bigoplus_{j: \theta_{i,j} = 0} z_{i,j} &\text{if~}i\in[\msglen+1,\msglen+k]\\
        0\oplus \bigoplus_{j: \theta_{i,j} = 0} z_{i,j} &\text{if~}i\in[\msglen+k+1, 2\msglen+1]
        \end{cases}.
    \end{align}
 Remark that $\hyb_{1,0}$ is identical to $\hyb_0$. 
 
 \item[$\hyb_{2}$:] This hybrid is identical to $\hyb_{1,\msglen}$ except that $\beta_{2\msglen+1}$ is flipped. That is, $\beta_i$ is generated as follows.
   \begin{align}
        \beta_i\seteq \begin{cases}
        m_i^{(0)}\oplus \bigoplus_{j: \theta_{i,j} = 0} z_{i,j} &\text{if~}i\in[\msglen]\\
        m_{i-n}^{(1)}\oplus \bigoplus_{j: \theta_{i,j} = 0} z_{i,j} &\text{if~}i\in[\msglen+1,2\msglen]\\
        1\oplus \bigoplus_{j: \theta_{i,j} = 0} z_{i,j} &\text{if~}i=2\msglen+1
        \end{cases}.
    \end{align}

\item[$\hyb_{3,k}$:] For $k=0,1,\ldots,\msglen$, this hybrid is identical to $\hyb_2$ except that the way of generating $\beta_i$ is modified as follows:
   \begin{align}
        \beta_i\seteq \begin{cases}
        m_i^{(1)}\oplus \bigoplus_{j: \theta_{i,j} = 0} z_{i,j} &\text{if~}i\in[k]\\
        m_i^{(0)}\oplus \bigoplus_{j: \theta_{i,j} = 0} z_{i,j} &\text{if~}i\in[k+1,\msglen]\\
        m_{i-n}^{(1)}\oplus \bigoplus_{j: \theta_{i,j} = 0} z_{i,j} &\text{if~}i\in[\msglen+1,2\msglen]\\
        1\oplus \bigoplus_{j: \theta_{i,j} = 0} z_{i,j} &\text{if~}i\in[\msglen+k+1, 2\msglen+1]
        \end{cases}.
    \end{align}
    Remark that $\hyb_{3,0}$ is identical to $\hyb_2$. 
    
\item[$\hyb_{4}$:] This hybrid is identical to $\hyb_{3,\msglen}$ except that $\beta_{2\msglen+1}$ is flipped. That is, $\beta_i$ is generated as follows.
   \begin{align}
        \beta_i\seteq \begin{cases}
        m_i^{(1)}\oplus \bigoplus_{j: \theta_{i,j} = 0} z_{i,j} &\text{if~}i\in[\msglen]\\
        m_{i-n}^{(1)}\oplus \bigoplus_{j: \theta_{i,j} = 0} z_{i,j} &\text{if~}i\in[\msglen+1,2\msglen]\\
        0\oplus \bigoplus_{j: \theta_{i,j} = 0} z_{i,j} &\text{if~}i=2\msglen+1
        \end{cases}.
    \end{align}    
    
\item[$\hyb_{5,k}$:] For $k=0,1,\ldots,\msglen$, this hybrid is identical to $\hyb_4$ except that the way of generating $\beta_i$ is modified as follows:
   \begin{align}
        \beta_i\seteq \begin{cases}
        m_i^{(1)}\oplus \bigoplus_{j: \theta_{i,j} = 0} z_{i,j} &\text{if~}i\in[\msglen]\\
        0\oplus \bigoplus_{j: \theta_{i,j} = 0} z_{i,j} &\text{if~}i\in[\msglen+1,\msglen+k]\\
        m_{i-n}^{(0)}\oplus \bigoplus_{j: \theta_{i,j} = 0} z_{i,j} &\text{if~}i\in[\msglen+k+1,2\msglen]\\
        0\oplus \bigoplus_{j: \theta_{i,j} = 0} z_{i,j} &\text{if~}i=2\msglen+1
        \end{cases}.
    \end{align} 
    Remark that $\hyb_{5,0}$ is identical to $\hyb_4$. 
\end{description}

Remark that $\hyb_{5,\msglen}$ is exactly the everlasting adaptive experiment where the challenge bit is set to be $1$. Thus, we have to prove 
\begin{align} \label{eq:everlasting_adaptive_conclusion}
    \TD(\hyb_0, \hyb_{5,\msglen})\le \negl(\secp).
\end{align}
We prove this by the following lemmata.  

\begin{lemma}\label{lem:hybrid_one_k}
If $\FE$ is adaptively secure and
$\PRG$ is a secure PRG, 
for any $k\in [\msglen]$, 
\begin{align}
        \TD(\hyb_{1,k-1}, \hyb_{1,k})\le \negl(\secp).
\end{align}
\end{lemma}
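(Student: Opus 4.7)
The plan is to reduce $\TD(\hyb_{1,k-1}, \hyb_{1,k}) \le \negl(\secp)$ to the interactive certified everlasting lemma (\cref{lem:ce_int}), with PRG security supplying a preprocessing step that ``locks'' the signatures at position $2n+1$ and with the adaptive security of public-slot FE supplying the computational precondition demanded by that lemma. Observe that the two hybrids differ only in the bit embedded at slot $n+k$ (either $0$ or $m_k^{(1)}$), so when $m_k^{(1)} = 0$ they are identical; by convexity of the trace distance it therefore suffices to bound the distance conditioned on the classical event $m_k^{(1)} = 1$, which is exactly the single-bit flip covered by \cref{lem:ce_int}.

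I would first insert twin sub-hybrids $\hyb_{1,k-1}^{\prime}$ and $\hyb_{1,k}^{\prime}$ that reprogram, at every $j$ with $\theta_{2n+1,j}=0$, the string $v_{2n+1, j, 1 - z_{2n+1, j}}$ to a uniformly random element of $\bit^{2\secp}$ instead of $\PRG(u_{2n+1, j, 1 - z_{2n+1, j}})$. A standard hybrid over the $\secp$ reprogrammed coordinates and the security of $\PRG$ give $\TD(\hyb_{1,k-1}, \hyb_{1,k-1}^{\prime}) \le \negl(\secp)$ and $\TD(\hyb_{1,k}, \hyb_{1,k}^{\prime}) \le \negl(\secp)$. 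After this reprogramming, with overwhelming probability the new strings lie outside the image of $\PRG$, so no signing-key preimage exists for the ``wrong'' bit at those positions; hence every public input $(b_{i,j}, u_{i,j})$ that passes the signature check inside $g[f]$ must satisfy $b_{2n+1, j} = z_{2n+1, j}$ whenever $\theta_{2n+1, j} = 0$, pinning the decoded $m_{2n+1}$ to $0$ and forcing $g[f]$ into the first-half branch $f(m_1, \dots, m_n)$, which is independent of $m_{n+k}$ (and therefore of $\theta_{n+k}$ and $\beta_{n+k}$).

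Next, I would instantiate $\qChal$ in \cref{lem:ce_int} to take $(\theta_{n+k}, \beta_{n+k})$ and simulate $\hyb_{1,k-1}^{\prime}$: it samples $(z_i, \theta_i)$ for $i \ne n+k$, all $u_{i,j,b}$, the reprogrammed $V$, and $(\fe.\MPK, \fe.\MSK) \gets \FE.\setup(1^\secp)$; it handles every key query $f$ via $\FE.\keygen(\fe.\MSK, g[f])$; upon receiving $(m^{(0)}, m^{(1)})$ it aborts with a fixed output independent of the lemma's internal bit whenever $m_k^{(1)} = 0$, and otherwise it assembles $\fe.\ct \gets \FE.\enc(\fe.\MPK, V \| \theta_1 \| \cdots \| \theta_{2n+1} \| \beta_1 \| \cdots \| \beta_{2n+1})$ using the received value as $\beta_{n+k}$, and sends $\fe.\ct$, the states $\{\ket{\psi_{i,j}}\}_{i \ne n+k}$, and the signing-key pair $(u_{n+k, j, 0}, u_{n+k, j, 1})_{j \in [\secp]}$ to $\qA$. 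The corresponding $\qA'$ in \cref{lem:ce_int} receives the lemma's BB84 state $\ket{z_{n+k}}_{\theta_{n+k}}$, coherently applies the Lamport signing unitary with keys $(u_{n+k,j,0}, u_{n+k,j,1})$ to produce $\{\ket{\psi_{n+k,j}}\}_j$, feeds the complete ciphertext to the inner FE adversary simulating $\qA$, and at the end derives the string $z' \in \bit^\secp$ required by the lemma from the Hadamard-basis outcomes $(c_{n+k,j}, d_{n+k,j})$ of the certificate via $z'_j \seteq c_{n+k,j} \oplus d_{n+k,j} \cdot (u_{n+k,j,0} \oplus u_{n+k,j,1})$, which is precisely the FE scheme's verification rule at every $j$ with $\theta_{n+k,j} = 1$.

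The main obstacle is verifying the computational precondition of \cref{lem:ce_int}, namely that for every QPT $\qA$ and all $\theta, \beta$, the output of $\expt{\qA, \qChal}{}(\secp, \theta, \beta)$ is computationally indistinguishable from that of $\expt{\qA, \qChal}{}(\secp, 0^\secp, \beta)$. The only component of $\qChal$'s transcript depending on $\theta$ is the plaintext of $\fe.\ct$, whose two candidates differ solely in the $\theta_{n+k}$ slot. By the PRG preprocessing every signature-valid public input forces $m_{2n+1} = 0$, so $g[f]$ outputs $f(m_1, \dots, m_n)$ on both plaintexts and these outputs coincide for every such public input; the adaptive indistinguishability of the public-slot FE then yields the desired computational indistinguishability, and \cref{lem:ce_int} gives $\TD(\tildeexpt{\qA', \qChal}{}(\secp, 0), \tildeexpt{\qA', \qChal}{}(\secp, 1)) \le \negl(\secp)$. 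Decomposing this trace distance over the classical event $m_k^{(1)}$ (identical outputs when the event is $0$, matching the conditional hybrids $\hyb_{1,k-1}^{\prime}$ and $\hyb_{1,k}^{\prime}$ when the event is $1$), then combining with the triangle inequality against the two PRG sub-hybrids, yields $\TD(\hyb_{1,k-1}, \hyb_{1,k}) \le \negl(\secp)$, as desired.
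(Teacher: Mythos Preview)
Your proposal has a genuine gap in where the PRG step is placed. You claim that PRG security yields $\TD(\hyb_{1,k-1}, \hyb_{1,k-1}') \le \negl(\secp)$ and $\TD(\hyb_{1,k}, \hyb_{1,k}') \le \negl(\secp)$, but PRG security only gives \emph{computational} indistinguishability, not a trace-distance bound. The output of $\hyb_{1,k-1}$ is the adversary's residual state $\rho$, which may contain (a copy of) the FE ciphertext $\fe.\ct$ and hence the encrypted $V$; an unbounded distinguisher could recover $V$ and test whether $v_{2n+1,j,1-z_{2n+1,j}}$ lies in the image of $\PRG$, so the trace distance between $\hyb_{1,k-1}$ and $\hyb_{1,k-1}'$ can be close to $1$. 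Your final triangle-inequality step therefore does not close.

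The paper avoids this by putting the PRG reprogramming \emph{inside} the proof of the computational precondition of \cref{lem:ce_int}, not as outer hybrids on the everlasting experiments. Concretely, the challenger $\qChal$ is defined using the \emph{real} $V$ (so that $\tildeexpt{\qB',\qChal}{1,k}(\secp,b)$ already matches $\hyb_{1,k-1+b}$ up to the weaker verification check at slot $n+k$), and the PRG step is invoked only to establish $\expt{\qB,\qChal}{1,k}(\secp,\theta,\beta) \approx_c \expt{\qB,\qChal}{1,k}(\secp,0^\secp,\beta)$ for QPT $\qB$, where computational indistinguishability suffices. Then \cref{lem:ce_int} delivers the trace-distance bound directly on $\tildeexpt{\qB',\qChal}{1,k}(\secp,0)$ versus $\tildeexpt{\qB',\qChal}{1,k}(\secp,1)$, with no further PRG invocations needed to reach $\hyb_{1,k-1}$ and $\hyb_{1,k}$. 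Your observation that reprogramming only position $2n+1$ (rather than all $i\ne n+k$ as the paper does) already forces $m_{2n+1}=0$ and makes the first-half output $f(m_1,\dots,m_n)$ independent of $\theta_{n+k}$ is correct and a nice simplification for this lemma---but it must be deployed inside the computational precondition, not as an outer trace-distance step.
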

\begin{lemma}\label{lem:hybrid_two}
If $\FE$ is adaptively secure and
$\PRG$ is a secure PRG, 
\begin{align}
        \TD(\hyb_{1,\msglen}, \hyb_{2})\le \negl(\secp).
\end{align}
\end{lemma}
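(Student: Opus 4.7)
The plan is to reduce the claim to the interactive certified everlasting lemma (\cref{lem:ce_int}) by embedding the $(2\msglen+1)$-th slot of the scheme into the roles of $z,\theta$ in that lemma. I design a challenger $\qChal$ that on input $(1^\secp,\theta^\ast,\beta^\ast)$ plays essentially the role of the $\hyb_{1,\msglen}$ challenger except for the $(2\msglen+1)$-th signed BB84 states $\{\ket{\psi_{2\msglen+1,j}}\}_j$: it plants $\theta^\ast,\beta^\ast$ into the $(2\msglen+1)$-th coordinates of the public-slot FE plaintext, honestly samples $\theta_i,z_i,u_{i,j,b}$ and produces $\ket{\psi_{i,j}}$ for $i\in[2\msglen]$ with the $\hyb_{1,\msglen}$-style $\beta_i$ (i.e.\ $m_i^{(0)}\oplus\bigoplus_{j:\theta_{i,j}=0}z_{i,j}$ for $i\in[\msglen]$ and $m_{i-\msglen}^{(1)}\oplus\bigoplus_{j:\theta_{i,j}=0}z_{i,j}$ for $i\in[\msglen+1,2\msglen]$), obtains $\{v_{2\msglen+1,j,b}\}$ from the other party, forms $\fe.\ct$, answers all key queries, and finally verifies the $[2\msglen]$-part of the certificate. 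Correspondingly, I design $\qA'$ that takes $(1^\secp,\ket{z_{2\msglen+1}}_{\theta_{2\msglen+1}})$, locally samples $u_{2\msglen+1,j,b}$, sends $v_{2\msglen+1,j,b}\seteq\PRG(u_{2\msglen+1,j,b})$ to $\qChal$, coherently signs $\ket{z_{2\msglen+1}}_{\theta_{2\msglen+1}}$ to build $\ket{\psi_{2\msglen+1,j}}$, and internally runs the FE adversary $\qA$, forwarding messages to and from $\qChal$. After receiving $\qA$'s certificate $\{(c_{i,j},d_{i,j})\}$ and residual state $\rho$, $\qA'$ outputs $(z',\rho')$ with $z'_{2\msglen+1,j}\seteq c_{2\msglen+1,j}\oplus d_{2\msglen+1,j}\cdot(u_{2\msglen+1,j,0}\oplus u_{2\msglen+1,j,1})$ and $\rho'\seteq\rho$ if $\qChal$ reported that the $[2\msglen]$-verification passed, and $\rho'\seteq\bot$ otherwise. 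By construction, the experiments $\tildeexpt{\qA',\qChal}{}(\secp,0)$ and $\tildeexpt{\qA',\qChal}{}(\secp,1)$ coincide with $\hyb_{1,\msglen}$ and $\hyb_2$, respectively.

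It then remains to verify the hypothesis of \cref{lem:ce_int}: for every QPT $\qA$, every $\theta^\ast\in\bit^\secp$, and every $\beta^\ast\in\bit$, the outputs of $\expt{\qA,\qChal}{}(\secp,\theta^\ast,\beta^\ast)$ and $\expt{\qA,\qChal}{}(\secp,0^\secp,\beta^\ast)$ are computationally indistinguishable. Since these two experiments differ only in the $\theta_{2\msglen+1}$ slot embedded inside the FE plaintext of $\fe.\ct$, I plan to prove indistinguishability by a short hybrid argument. First, using $\PRG$ security, for every $i\in[2\msglen]$ and every $j$ with $\theta_{i,j}=0$ I replace $v_{i,j,1-z_{i,j}}=\PRG(u_{i,j,1-z_{i,j}})$ by a uniformly random $2\secp$-bit string; this is sound because at these positions $\ket{\psi_{i,j}}=\ket{z_{i,j}}\ket{u_{i,j,z_{i,j}}}$ never exposes $u_{i,j,1-z_{i,j}}$ to $\qA$ (and $\qChal$ also does not need it). With overwhelming probability each such modified $v_{i,j,1-z_{i,j}}$ then falls outside the image of $\PRG$, so any public input $(b_{i,j},u_{i,j})$ that passes the PRG checks inside $g[f]$ must satisfy $b_{i,j}=z_{i,j}$ at those positions, forcing the recovered values to $m_i=m_i^{(0)}$ for $i\in[\msglen]$ and $m_i=m_{i-\msglen}^{(1)}$ for $i\in[\msglen+1,2\msglen]$. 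Changing $\theta_{2\msglen+1}$ may flip the branch selector $m_{2\msglen+1}$, but the two possible branch outputs are exactly $f(m^{(0)})$ and $f(m^{(1)})$, which coincide by the adversary's restriction; therefore $g[f](\mathrm{pt}_0,\pub)=g[f](\mathrm{pt}_1,\pub)$ for every key query $f$ and every $\pub$. Adaptive security of the public-slot $\FE$ then gives indistinguishability of the two FE ciphertexts, and reversing the $\PRG$ steps brings us to $\expt{\qA,\qChal}{}(\secp,0^\secp,\beta^\ast)$.

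The main obstacle is precisely this universal quantification: the public-slot FE validity condition requires $g[f](\mathrm{pt}_0,\pub)=g[f](\mathrm{pt}_1,\pub)$ for \emph{all} public inputs, not only those an efficient adversary can write down, so without first collapsing the admissible $b_{i,j}$'s at positions $i\in[2\msglen],\theta_{i,j}=0$ via the Lamport-with-$\PRG$ trick, the argument cannot be reduced to FE security. The redundant $(2\msglen+1)$-th slot is what makes this collapse compatible with the challenge restriction $f(m^{(0)})=f(m^{(1)})$ on the original messages. Once the hypothesis is established, \cref{lem:ce_int} immediately yields $\TD(\tildeexpt{\qA',\qChal}{}(\secp,0),\tildeexpt{\qA',\qChal}{}(\secp,1))\le\negl(\secp)$, i.e.\ $\TD(\hyb_{1,\msglen},\hyb_2)\le\negl(\secp)$, which is the claim of the lemma.
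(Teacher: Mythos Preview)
Your proposal is correct and follows essentially the same route as the paper: reduce to \cref{lem:ce_int} by isolating the $(2\msglen+1)$-th slot, and verify the lemma's hypothesis via the PRG-switch on positions $i\in[2\msglen]$ with $\theta_{i,j}=0$ followed by public-slot FE security, using that both branch outputs equal $f(m^{(0)})=f(m^{(1)})$. The only cosmetic differences are that the paper has $\qChal$ generate the $(2\msglen+1)$-slot signing keys and hand $u_{2\msglen+1,j,b}$ to the adversary (rather than having $\qA'$ generate them and send $v$'s to $\qChal$), and the paper concludes with the inequality $\TD(\hyb_{1,\msglen},\hyb_2)\le\TD(\tildeexpt{\qA',\qChal}{}(\secp,0),\tildeexpt{\qA',\qChal}{}(\secp,1))$ rather than arranging an exact equality via your $[2\msglen]$-verification feedback; neither difference is substantive.
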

\begin{lemma}\label{lem:hybrid_three_k}
If $\FE$ is adaptively secure and
$\PRG$ is a secure PRG,  
for any $k\in [\msglen]$, 
\begin{align}
        \TD(\hyb_{3,k-1}, \hyb_{3,k})\le \negl(\secp).
\end{align}
\end{lemma}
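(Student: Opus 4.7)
The plan is to reduce the transition from $\hyb_{3,k-1}$ to $\hyb_{3,k}$ to the interactive certified everlasting lemma (\cref{lem:ce_int}), proceeding in close analogy with the proof of \cref{lem:hybrid_one_k}. The two hybrids differ only in the value of $\beta_k$, which switches between the encodings of $m_k^{(0)}$ and $m_k^{(1)}$; if $m_k^{(0)}=m_k^{(1)}$ the hybrids are identically distributed, so we may assume they differ, in which case we must bridge a single-bit flip in the BB84-encoded plaintext at position $k$. The essential structural reason this flip is affordable is that in both hybrids $\beta_{2n+1}$ encodes $1$, so the function $g[f]$ always selects the second half $m_{n+1}\|\cdots\|m_{2n}$ of the redundant encoding, whose value is entirely controlled by positions $n+1,\ldots,2n$ and is therefore untouched by a change at position $k\in[1,n]$.

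To apply \cref{lem:ce_int}, I would designate $(z_k,\theta_k)$ as the ``secret'' randomness consumed by the lemma. The reduction $\qA'$ takes the state $\ket{z_k}_{\theta_k}$ supplied by the experiment, samples all other randomness internally (including the Lamport preimages $u_{k,j,b}$), coherently attaches the signatures to produce the states $\ket{\psi_{k,j}}$, and then emulates the CED adversary $\qA_\secp$ against the challenger messages relayed from $\qChal$. Upon receiving $\qA_\secp$'s deletion certificate, $\qA'$ extracts a classical string $z'_k$ from the position-$k$ Hadamard outcomes via $z'_{k,j}\seteq c_{k,j}\oplus d_{k,j}(u_{k,j,0}\oplus u_{k,j,1})$, verifies the remaining positions itself, and outputs $\qA_\secp$'s residual state together with $z'_k$. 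The challenger $\qChal$, initialized with $(\theta_k,\beta)$, generates $(\fe.\MPK,\fe.\MSK)$, answers all key queries with $\fe.\sk_{g[f]}$, and, upon receiving the challenge $(m^{(0)},m^{(1)})$, embeds $\beta_k\seteq m_k^{(0)}\oplus \beta$ (so that the lemma's bit $b=0$ reproduces $\hyb_{3,k-1}$ and $b=1$ reproduces $\hyb_{3,k}$), sets every other $\beta_i$ according to the hybrid, and computes $\fe.\ct$.

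It then remains to establish the premise of \cref{lem:ce_int}, namely that $\qChal$'s externally visible behaviour is computationally insensitive to whether its input is $\theta_k$ or $0^\secp$. Observe that $\theta_k$ enters $\qChal$'s messages only through the plaintext of $\fe.\ct$, so it suffices to invoke the adaptive security of the public-slot FE. Because that security demands equal outputs of $g[f]$ on \emph{every} public input (not merely the efficiently findable ones), I would first use $\PRG$ security to replace each $v_{i,j,1-z_{i,j}}$ (for $(i,j)$ with $\theta_{i,j}=0$) by a uniformly random $2\secp$-bit string, which with overwhelming probability has no $\PRG$-preimage. Conditioned on this, any public input $(b_{i,j},u_{i,j})_{i,j}$ passing the PRG check inside $g[f]$ must satisfy $b_{i,j}=z_{i,j}$ at every position with $\theta_{i,j}=0$, so $m_{2n+1}=\beta_{2n+1}\oplus \bigoplus_{j:\theta_{2n+1,j}=0}z_{2n+1,j}=1$ and $g[f]$ outputs $f(m_{n+1}\|\cdots\|m_{2n})$, which is independent of the value of $\theta_k$ at position $k\in[1,n]$. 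Public-slot FE adaptive security then allows the swap $\theta_k\leftrightarrow 0^\secp$ inside the plaintext, after which un-puncturing via $\PRG$ security restores the original verification keys. Chaining these sub-hybrids verifies the premise, and the lemma's conclusion delivers the desired bound $\TD(\hyb_{3,k-1},\hyb_{3,k})\le\negl(\secp)$. The main obstacle, exactly as in \cref{lem:hybrid_one_k}, is this PRG-puncturing manoeuvre: it is precisely what lifts the computational unforgeability of the Lamport signatures to the information-theoretic ``for all public inputs'' hypothesis required by the definition of public-slot FE.
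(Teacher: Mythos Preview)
Your proposal is correct and follows essentially the same approach as the paper, which explicitly defers this lemma to the proof of \cref{lem:hybrid_one_k}; you correctly pinpoint the one structural difference, namely that in $\hyb_{3,\cdot}$ the mode bit satisfies $m_{2\msglen+1}=1$, so after the PRG-puncturing step $g[f]$ necessarily outputs $f(m_{\msglen+1}\|\cdots\|m_{2\msglen})$, which is independent of position $k\in[\msglen]$ and hence of $\theta_k$. One minor organizational difference: in the paper's reduction it is $\qChal$ (not $\qA'$) who samples the Lamport preimages $u_{k,j,b}$ and passes them to the adversary, since $\qChal$ must embed $v_{k,j,b}=\PRG(u_{k,j,b})$ into the plaintext of $\fe.\ct$; your allocation also works given the interactive form of \cref{lem:ce_int}, but requires $\qA'$ to transmit these values to $\qChal$ before the ciphertext is formed.
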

\begin{lemma}\label{lem:hybrid_four}
If $\FE$ is adaptively secure and
$\PRG$ is a secure PRG, 
\begin{align}
        \TD(\hyb_{3,\msglen}, \hyb_{4})\le \negl(\secp).
\end{align}
\end{lemma}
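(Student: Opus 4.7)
The plan is to isolate the only difference between $\hyb_{3,\msglen}$ and $\hyb_4$, which lies in the ``payload bit'' encoded in the $(2\msglen+1)$-th coordinate, and then apply the interactive certified everlasting lemma (Lemma~\ref{lem:ce_int}) on the $(2\msglen+1)$-th register pair $\{\ket{\psi_{2\msglen+1,j}}\}_{j\in[\secp]}$ together with its deletion check. Crucially, in both hybrids the ``decryption values'' of $g[f]$ agree: in $\hyb_{3,\msglen}$ the $(2\msglen+1)$-th bit is $1$ so $g[f]$ reads the second half (which now encodes $m^{(1)}$), while in $\hyb_4$ the bit is $0$ so $g[f]$ reads the first half (also $m^{(1)}$); the admissibility constraint $f(m^{(0)})=f(m^{(1)})$ is therefore irrelevant at this step because both halves are already $m^{(1)}$.

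Concretely, I would cast everything except the preparation of the $(2\msglen+1)$-th quantum register and its verification as an interactive challenger $\qChal$ in the sense of Lemma~\ref{lem:ce_int}. The challenger takes $(\theta_{2\msglen+1},\beta)$ as input with $\beta\seteq b\oplus\bigoplus_{j:\theta_{2\msglen+1,j}=0} z_{2\msglen+1,j}$, samples all other randomness honestly, sets $\beta_{2\msglen+1}\seteq\beta$, and simulates the full view of the adversary (answering key queries, producing the FE ciphertext via $\FE.\enc$, receiving the certificate, checking verification on the other $2\msglen$ blocks, and finally producing the residual state). The corresponding $\qA'$ of the lemma is $\qA$ together with the deletion measurement on the $(2\msglen+1)$-th block, outputting the Hadamard-basis measurement outcomes as $z'$ and $\rho$ as the residual state. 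Under this reformulation, $\hyb_{3,\msglen}$ and $\hyb_4$ correspond exactly to $\widetilde{\mathsf{Expt}}_{\qA',\qChal}(\secp,1)$ and $\widetilde{\mathsf{Expt}}_{\qA',\qChal}(\secp,0)$.

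The hard part is verifying the premise of Lemma~\ref{lem:ce_int}: that for any QPT $\qA$, any fixed $\beta$, and any $\theta_{2\msglen+1}\in\bit^\secp$, the output of $\mathsf{Expt}_{\qA,\qChal}(\secp,\theta_{2\msglen+1},\beta)$ is indistinguishable from the output with $\theta_{2\msglen+1}$ replaced by $0^\secp$. This reduces to the adaptive indistinguishability of $\FE$ combined with PRG security, in complete analogy with Lemmata~\ref{lem:hybrid_one_k}, \ref{lem:hybrid_two}, and \ref{lem:hybrid_three_k}. I would first introduce sub-hybrids that, for every $j\in[\secp]$ with $\theta_{2\msglen+1,j}=0$, replace $v_{2\msglen+1,j,1-z_{2\msglen+1,j}}=\PRG(u_{2\msglen+1,j,1-z_{2\msglen+1,j}})$ by a uniformly random $2\secp$-bit string (invoking PRG security coordinate-by-coordinate, since $u_{2\msglen+1,j,1-z_{2\msglen+1,j}}$ is otherwise unused). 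With overwhelming probability no valid PRG preimage for the bit $1-z_{2\msglen+1,j}$ then exists, so in the function $g[f]$ the public input is forced to satisfy $b_{2\msglen+1,j}=z_{2\msglen+1,j}$ whenever $\theta_{2\msglen+1,j}=0$.

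Once this holds, the restriction on admissible public inputs in Definition~\ref{def:security_fe_public_slot} is satisfied: for any admissible $(b_{i,j},u_{i,j})$, the value $\bigoplus_{j:\theta_{2\msglen+1,j}=0} b_{2\msglen+1,j}$ equals $\bigoplus_{j:\theta_{2\msglen+1,j}=0} z_{2\msglen+1,j}$, so the $(2\msglen+1)$-th recovered bit inside $g[f]$ is exactly $\beta$ regardless of $\theta_{2\msglen+1}$; and $g[f]$ then selects between the two halves of the plaintext, both of which evaluate under $f$ to $f(m^{(1)})$. Thus the plaintext input to $\FE.\enc$ in the two settings is functionally equivalent under every query, and the adaptive indistinguishability of $\FE$ yields the required indistinguishability. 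Applying Lemma~\ref{lem:ce_int} then gives $\TD(\hyb_{3,\msglen},\hyb_4)\le\negl(\secp)$, concluding the proof.
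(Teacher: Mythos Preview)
Your overall plan---isolate block $2\msglen+1$, invoke Lemma~\ref{lem:ce_int}, and discharge its premise via PRG security plus the adaptive security of $\FE$---is exactly the paper's approach (which mirrors the proof of Lemma~\ref{lem:hybrid_two}). Your observation that both halves of the encoded message already equal $m^{(1)}$ is also correct and is precisely the point that makes the functional-equivalence step go through here.

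The gap is in \emph{which block} you apply the PRG substitution to. You propose replacing $v_{2\msglen+1,j,1-z_{2\msglen+1,j}}$ with random, but this cannot work. First, in the experiment $\expt{\qB,\qChal}{}(\secp,\theta,\beta)$ whose indistinguishability from $\expt{\qB,\qChal}{}(\secp,0^\secp,\beta)$ you must establish, there is no $z_{2\msglen+1}$ at all: $\qChal$ receives only $(\theta,\beta)$, so the expression ``$1-z_{2\msglen+1,j}$'' is undefined. Second, for the wrapper $\qB'$ to rebuild the states $\ket{\psi_{2\msglen+1,j}}$ from the BB84 input $\ket{z}_\theta$, the challenger must hand over all of $\{u_{2\msglen+1,j,b}\}_{j,b}$ (exactly as in the proof of Lemma~\ref{lem:hybrid_one_k}); these seeds are therefore visible to the adversary and PRG security does not apply to them. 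Third, even if the forcing succeeded, the recovered bit inside $g[f]$ would be $\beta\oplus\bigoplus_{j:\theta_{2\msglen+1,j}=0}b_{2\msglen+1,j}$, whose index set still depends on $\theta_{2\msglen+1}$; it does not collapse to the constant $\beta$.

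The fix (and what the paper does, following Lemma~\ref{lem:hybrid_two}) is to apply the PRG step to the \emph{other} blocks $i\in[2\msglen]$, replacing $v_{i,j,1\oplus z_{i,j}}$ with a uniformly random string for positions with $\theta_{i,j}=0$. Those seeds are genuinely unused by $\qChal$, so PRG security applies. This forces any public input passing the signature check to satisfy $b_{i,j}=z_{i,j}$ on blocks $1,\ldots,2\msglen$, so the recovered $(m_1,\ldots,m_{2\msglen})$ equals $m^{(1)}\concat m^{(1)}$. Now your own observation finishes the job: whatever value $m_{2\msglen+1}$ takes, $g[f]$ outputs $f(m^{(1)})$, which is independent of $\theta_{2\msglen+1}$, so the adaptive security of $\FE$ applies and the premise of Lemma~\ref{lem:ce_int} follows.
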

\begin{lemma}\label{lem:hybrid_five_k}
If $\FE$ is adaptively secure and
$\PRG$ is a secure PRG, 
for any $k\in [\msglen]$, 
\begin{align}
        \TD(\hyb_{5,k-1}, \hyb_{5,k})\le \negl(\secp).
\end{align}
\end{lemma}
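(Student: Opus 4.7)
The plan is to mirror the argument for \cref{lem:hybrid_one_k}, targeting position $i=n+k$ in the second block. The hybrids $\hyb_{5,k-1}$ and $\hyb_{5,k}$ differ only in the encoding of $\beta_{n+k}$: in the former it encodes $m_k^{(1)}$, in the latter it encodes $0$. Crucially, since $\beta_{2n+1}$ encodes $0$ in both hybrids, the second block is never used by $g[f]$: provided the adversary is confined to valid public inputs that force $m_{2n+1}=0$, the function evaluates to $f(m_1\|\cdots\|m_n)$, which is independent of $m_{n+k}$.

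To force $m_{2n+1}=0$ on every computationally reachable public input, I would first move to a sub-hybrid in which, for each $j\in[\secp]$ with $\theta_{2n+1,j}=0$, the value $v_{2n+1,j,1-z_{2n+1,j}}$ is replaced by a uniformly random $2\secp$-bit string. This step is indistinguishable by the pseudorandomness of $\PRG$, since the preimage $u_{2n+1,j,1-z_{2n+1,j}}$ never appears in the adversary's view (it is absent from $\ket{\psi_{2n+1,j}}$ when $\theta_{2n+1,j}=0$, and $U$ is kept by the challenger as part of $\vk$). With overwhelming probability these replaced strings have no $\PRG$-preimage at all, so $g[f]$ returns $\bot$ on any public input with $b_{2n+1,j}\neq z_{2n+1,j}$; on every non-$\bot$ evaluation, $m_{2n+1}=0$ is therefore forced.

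Next I would invoke the interactive certified everlasting lemma (\cref{lem:ce_int}) at index $n+k$. The protected quantum state is the bare BB84 register $\ket{z_{n+k}}_{\theta_{n+k}}$; the challenger $\qChal(1^\secp,\theta_{n+k},\beta_{n+k})$ performs everything else, including the coherent attachment of the one-time signatures (in a single round of quantum communication) to produce $\bigotimes_j\ket{\psi_{n+k,j}}$, generating $\fe.\ct$ with $\theta_{n+k}$ and $\beta_{n+k}$ placed in their correct slots, and answering all key queries. The hypothesis of the lemma---that for any fixed $\beta$ the computational view of any QPT $\qA$ is the same whether $\theta=\theta_{n+k}$ or $\theta=0^\secp$---reduces to the adaptive indistinguishability of the underlying public-slot FE: after the PRG-programming step, on every valid public input $g[f]$ yields the same output for any choice of $(\theta_{n+k},\beta_{n+k})$, since only $m_1,\ldots,m_n$ affect the output and they are unchanged. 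Applying the lemma then converts this computational indistinguishability into statistical closeness of the EV-experiment output for $b=m_k^{(1)}$ versus $b=0$. Finally, I would undo the PRG-programming sub-hybrid to land on $\hyb_{5,k}$.

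The main obstacle is specifying the interleaving of the BB84 register with the one-time signatures in a way that fits the template of \cref{lem:ce_int}: one must exhibit the signing step as a single coherent operation performed by $\qChal$ on the BB84 state initially held by the adversary, while ensuring that the Hadamard-basis measurement later performed by the adversary still yields a certificate whose verification test $z_{n+k,j}=c_{n+k,j}\oplus d_{n+k,j}(u_{n+k,j,0}\oplus u_{n+k,j,1})$ matches the lemma's check $z_i=z'_i$. Once this bookkeeping is in place, the analysis is essentially symmetric to that of the earlier lemmas.
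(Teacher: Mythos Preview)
Your overall plan mirrors the paper's proof of \cref{lem:hybrid_one_k}, and your observation that it suffices to program the PRG only at position $2n+1$ (rather than at every $i\neq n+k$ as the paper does) is a valid simplification: once $m_{2n+1}=0$ is forced on any public input passing the signature check, $g[f]$ ignores the second block entirely, so its output is independent of $\theta_{n+k}$, which is all the FE reduction needs.

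The ordering you describe, however, has a genuine gap. You structure the argument as: (i) pass from $\hyb_{5,k-1}$ to a PRG-programmed sub-hybrid; (ii) apply \cref{lem:ce_int} there to obtain trace-distance closeness between the programmed $b=m_k^{(1)}$ and $b=0$ experiments; (iii) undo the PRG step to reach $\hyb_{5,k}$. But steps (i) and (iii) are only \emph{computational} transitions between state-valued EV-experiments; PRG security does not give a trace-distance bound on the adversary's residual state (e.g., $\rho$ may contain $\fe.\ct$, and nothing prevents an unbounded distinguisher from extracting the relevant $v$ and testing membership in the image of $\PRG$). So your chain bounds $\TD$ between the programmed hybrids but not $\TD(\hyb_{5,k-1},\hyb_{5,k})$. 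The correct structure---the one the paper uses---keeps the PRG programming \emph{inside} the proof of the lemma's hypothesis: define $\qChal$ with the honest (unprogrammed) $V$, prove the bit-valued statement $\expt{\qB,\qChal}{5,k}(\secp,\theta,\beta)\approx_c\expt{\qB,\qChal}{5,k}(\secp,0^\secp,\beta)$ via the chain (program PRG at $2n+1$; invoke FE security to swap $\theta_{n+k}$ for $0^\secp$; unprogram), and only then apply \cref{lem:ce_int}. The resulting $\qB'$ simulates $\hyb_{5,k-1}$ or $\hyb_{5,k}$ perfectly, with no further hybrid needed. Your bookkeeping concern is handled as in the paper: $\qChal$ sends $\{u_{n+k,j,b}\}_{j,b}$ in the clear, $\qB'$ attaches the signatures to $\ket{z}_\theta$ itself, and later sets $z'_j=c_{n+k,j}\oplus d_{n+k,j}\cdot(u_{n+k,j,0}\oplus u_{n+k,j,1})$.
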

Noting that $\hyb_{1,0}$, $\hyb_{3,0}$, and $\hyb_{5,0}$ are identical to $\hyb_0$, $\hyb_2$, and $\hyb_4$, respectively, \cref{lem:hybrid_one_k,lem:hybrid_two,lem:hybrid_three_k,lem:hybrid_four,lem:hybrid_five_k} imply  \cref{eq:everlasting_adaptive_conclusion}. 

What is left is to prove these lemmata. Actually, the proofs of these lemmata are very similar. 
We give a full proof of \cref{lem:hybrid_one_k} below.
After that, we also explain how to modify it to prove \cref{lem:hybrid_two}.
The proofs of \cref{lem:hybrid_three_k,lem:hybrid_five_k} are almost identical to that of \cref{lem:hybrid_one_k} and the proof of \cref{lem:hybrid_four} is almost identical to that of \cref{lem:hybrid_two}, and thus we omit them. 

\paragraph{Proof of~\cref{lem:hybrid_one_k}.}
First, we prove~\cref{lem:hybrid_one_k} below.
\begin{proof}
For applying \cref{lem:ce_int}, we consider the following experiment $\expt{\qB,\qChal}{1,k}(\secp,\theta,\beta)$ between a QPT adversary $\qB$ and a challenger $\qChal$ for $\theta\in \bit^\secp$ and $\beta\in \bit$ as follows:
\begin{description}
\item[$\expt{\qB,\qChal}{1,k}(\secp,\theta,\beta)$:] In this experiment, $\qB$ and $\qChal$ play the roles of $\qA$ and the challenger of $\hyb_{1,k}$ with the differences that $\qChal$ sets $\beta_{\msglen+k}\seteq \beta$ and $\theta_{\msglen+k}\seteq \theta$, $\qC$ does not generate $\ket{\psi_{\msglen+k,j}}$ for $j\in [\secp]$ and thus not send it to $\qB$, $\qChal$ additionally sends $\{u_{\msglen+k,j,b}\}_{j\in[\secp],b\in \bit}$ to $\qB$, 
and $\qB$ finally outputs a bit $b'$ instead of a certificate. Specifically, it works as follows: 
\begin{enumerate}
    \item $\qChal$ generates $(\fe.\MPK,\fe.\MSK) \leftarrow \FE.\setup(1^{\secp})$, sets $\MPK \seteq \fe.\MPK$ and $\MSK \seteq \fe.\MSK$, and sends $\MPK$ to $\qB$.
    \item $\qB$ can make arbitrarily many key queries at any point of the experiment. When it makes a key query $f$, $\qChal$ generates  $\fe.\sk_{g[f]}\gets \FE.\keygen(\fe.\MSK,g[f])$ and returns $\sk_f=\fe.\sk_{g[f]}$ to $\qB$. 
    \item $\qB$ sends $(m^{(0)},m^{(1)})$ to $\qChal$.  It must satisfy $f(m^{(0)})=f(m^{(1)})$ for all key queries $f$ that are made before or after sending $(m^{(0)},m^{(1)})$. 
    \item $\qChal$ does the following:
    \begin{enumerate}
    \item Generate $z_{i}, \theta_{i} \leftarrow \{0, 1\}^{\secp}$ for every $i\in[2\msglen+1]\setminus \{\msglen+k\}$ and set $\theta_{\msglen+k}\seteq \theta$.
    \item Generate $u_{i,j,b}\la\bit^\secp$ and compute $v_{i,j,b}\la\PRG(u_{i,j,b})$ for every $i\in[2\msglen+1]$, $j\in[\secp]$ and $b\in\bit$ and set $U=(u_{i,j,b})_{i\in[2\msglen+1],j\in[\secp],b\in\bit}$ and $V\seteq (v_{i,j,b})_{i\in[2\msglen+1],j\in[\secp],b\in\bit}$. 
    \item 
    Generate a state 
    \begin{align}
    \ket{\psi_{i,j}}\seteq \begin{cases}
    \ket{z_{i,j}}\ket{u_{i,j,z_{i,j}}} &\text{~if~} \theta_{i,j}=0\\ 
    \ket{0}\ket{u_{i,j,0}}+(-1)^{z_{i,j}} \ket{1}\ket{u_{i,j,1}} &\text{~if~} \theta_{i,j}=1
    \end{cases}
    \end{align}
    where $\theta_{i,j}$ (resp. $z_{i,j}$) is the $j$-th bit of $\theta_i$ (resp. $z_i$)
    for every $i\in [2\msglen+1]\setminus \{\msglen+k\}$ and $j\in [\secp]$. 
    \item  
    Generate
      \begin{align}
        \beta_i\seteq \begin{cases}
        m_i^{(0)}\oplus \bigoplus_{j: \theta_{i,j} = 0} z_{i,j} &\text{if~}i\in[\msglen]\\
        m_{i-n}^{(1)}\oplus \bigoplus_{j: \theta_{i,j} = 0} z_{i,j} &\text{if~}i\in[\msglen+1,\msglen+k-1]\\
        \beta &\text{if~}i=\msglen+k\\
        0\oplus \bigoplus_{j: \theta_{i,j} = 0} z_{i,j} &\text{if~}i\in[\msglen+k+1, 2\msglen+1]
        \end{cases}.
    \end{align}
\item Generate $\fe.\ct\la\FE.\enc(\fe.\MPK,V\|\theta_1\|\ldots\|\theta_{2\msglen+1}\|\beta_1\|\ldots \|\beta_{2\msglen+1})$.
    \end{enumerate}
    $\qChal$ sends $(\fe.\ct, \bigotimes_{i\in[2\msglen+1]\setminus \{\msglen+k\},j\in[\secp]}\ket{\psi_{i,j}},\{u_{\msglen+k,j,b}\}_{j\in[\secp],b\in \bit})$ to $\qB$.
\item $\qB$ outputs a bit $b'$ as a final output of the experiment. 
\end{enumerate}
\end{description}

We prove the following lemma.
\begin{lemma}\label{lem:comp_ind_one_k}
For any QPT $\qB$, 
\begin{align}
\bigg| \mrm{Pr}[\expt{\qB,\qChal}{1,k}(\secp,\theta,\beta) = 1] - \mrm{Pr}[\expt{\qB,\qChal}{1,k}(\secp,0^n,\beta) = 1] \bigg| \le \negl(\secp).
\end{align}
\end{lemma}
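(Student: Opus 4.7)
The plan is a three-hop hybrid argument bridging the two experiments, whose sole difference is the value of $\theta_{\msglen+k}$ embedded in the FE plaintext. Observe that $\qB$ is handed both Lamport signing keys $\{u_{\msglen+k,j,b}\}_{j,b}$ in the clear, so it can feed $g[f]$ public inputs yielding any bit-string in position $\msglen+k$; a direct reduction to public-slot FE therefore breaks exactly on those $\pub$ that force $m_{2\msglen+1}=1$, because then $g[f]$ evaluates $f(m_{\msglen+1}\|\cdots\|m_{2\msglen})$ and $m_{\msglen+k}$ genuinely depends on $\theta_{\msglen+k}$. The first task is thus to force $m_{2\msglen+1}=0$ on every accepting $\pub$ by programming the Lamport verification keys of row $2\msglen+1$.

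Introduce hybrid $H_1$, identical to $\expt{\qB,\qChal}{1,k}(\secp,\theta,\beta)$ except that, for every $j\in[\secp]$ with $\theta_{2\msglen+1,j}=0$, the component $v_{2\msglen+1,j,1-z_{2\msglen+1,j}}$ is replaced by a uniformly random element of $\bit^{2\secp}$. A standard hybrid argument over those $j$'s (invoking $\PRG$ security once per $j$) gives $\expt{\qB,\qChal}{1,k}(\secp,\theta,\beta)\approx H_1$: for such $j$ the state $\ket{\psi_{2\msglen+1,j}}=\ket{z_{2\msglen+1,j}}\ket{u_{2\msglen+1,j,z_{2\msglen+1,j}}}$ contains only one of the two Lamport seeds, and $u_{2\msglen+1,j,1-z_{2\msglen+1,j}}$ appears nowhere else in $\qB$'s view. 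Let $H_2$ be $H_1$ with $\theta_{\msglen+k}$ switched to $0^\secp$, and define $H_3$ symmetrically as $H_2$ with the programmed $v$'s reverted to $\PRG$ outputs; the same PRG argument gives $H_3\approx\expt{\qB,\qChal}{1,k}(\secp,0^\secp,\beta)$, reducing the goal to $H_1\approx H_2$.

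Let $E$ denote the event that some random $v_{2\msglen+1,j,1-z_{2\msglen+1,j}}$ (with $\theta_{2\msglen+1,j}=0$) lies in the image of $\PRG$; since $\abs{\PRG(\bit^\secp)}\le 2^\secp$ inside $\bit^{2\secp}$, a union bound yields $\Pr[E]\le\secp\cdot 2^{-\secp}$. Conditioned on $\overline{E}$, no preimage of $v_{2\msglen+1,j,1-z_{2\msglen+1,j}}$ exists, so any $\pub$ passing $g[f]$'s signature check must satisfy $b_{2\msglen+1,j}=z_{2\msglen+1,j}$ for every $j$ with $\theta_{2\msglen+1,j}=0$, forcing $m_{2\msglen+1}=\beta_{2\msglen+1}\oplus\bigoplus_{j:\theta_{2\msglen+1,j}=0}z_{2\msglen+1,j}=0$; thus $g[f](\cdot,\pub)=f(m_1\|\cdots\|m_\msglen)$, which is independent of $\theta_{\msglen+k}$ and $\beta_{\msglen+k}$. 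Build a reduction $\qD$ to adaptive public-slot FE security: $\qD$ simulates $\qChal$ step-for-step internally (using the programmed $v$'s), forwards $\qB$'s key queries as $g[f]$ queries to the FE challenger, submits as its challenge pair $(M_0,M_1)$ the two secret inputs that differ only in the $\theta_{\msglen+k}$ slot, relays the received FE ciphertext together with the BB84 states and the plaintext signing keys of row $\msglen+k$ to $\qB$, and returns $\qB$'s guess. The observation above shows $g[f](M_0,\pub)=g[f](M_1,\pub)$ for all queried $f$ and all $\pub$ precisely on $\overline{E}$, so the FE game's ghost check fails only on $E$; adaptive security of $\FE$ then gives $|\Pr[H_1=1\wedge\overline{E}]-\Pr[H_2=1\wedge\overline{E}]|\le\negl(\secp)$, which combined with $\Pr[E]\le\negl(\secp)$ yields the desired bound on $|\Pr[H_1=1]-\Pr[H_2=1]|$.

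The main obstacle is precisely this third step: the public-slot FE definition's equality condition quantifies over \emph{all} $\pub$ (not just those the reduction expects to see), and the reduction cannot efficiently decide whether $E$ has occurred. The argument must therefore (i) isolate $E$ as a condition depending only on the randomness used to program the verification keys, so that $\Pr[E]$ is independent of which of $\theta$ or $0^\secp$ is plugged in and the two ``$H=1\wedge\overline{E}$'' probabilities can be cleanly compared, and (ii) exploit the fact that the Lamport-from-$\PRG$ construction admits a verification key with no valid signature on a prescribed message, whereas a signature scheme with unconditional existence of signatures on every message would make such a $\pub$-universal equality impossible to engineer.
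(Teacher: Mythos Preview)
Your argument is correct and follows the same three-hop template as the paper (PRG, FE, PRG), but you choose a strictly smaller set of rows to reprogram. The paper replaces $v_{i,j,1\oplus z_{i,j}}$ by a uniformly random $2\secp$-bit string for \emph{every} $i\in[2\msglen+1]\setminus\{\msglen+k\}$ and $j$ with $\theta_{i,j}=0$; conditioned on the resulting $\mathsf{Good}$ event, any accepting $\pub$ has $b_{i,j}=z_{i,j}$ in all those positions, so both sides of the FE equality literally evaluate to $f(m^{(0)})$. You instead reprogram only row $2\msglen+1$, force $m_{2\msglen+1}=0$ on $\overline{E}$, and observe that the resulting output $f(m_1\|\cdots\|m_\msglen)$ is the \emph{same function of $\pub$} under $M_0$ and $M_1$ because neither $\theta_i$ nor $\beta_i$ for $i\in[\msglen]$ changes. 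That is enough for the FE equality condition, and it is slightly cleaner for this particular lemma. The trade-off is that your argument is tailored to the case where the flipped coordinate sits in $[\msglen+1,2\msglen]$; for the companion lemma where the flip is in row $2\msglen+1$ itself one must reprogram rows $[2\msglen]$ and then invoke $f(m^{(0)})=f(m^{(1)})$, which is exactly the paper's uniform strategy. Your final paragraph correctly identifies the only delicate point---that the FE admissibility check is over all $\pub$ and is not efficiently testable---and handles it by making $E$ depend only on the reduction's own coin tosses for the programmed $v$'s.
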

Before proving \cref{lem:comp_ind_one_k}, we complete the proof of \cref{lem:hybrid_one_k} assuming that \cref{lem:comp_ind_one_k} is true. 
By \cref{lem:ce_int,lem:comp_ind_one_k}, for any QPT adversary $\qB'$, we have 
	\begin{align}\label{eq:tilde_bound_one_k}
	\TD(\tildeexpt{\qB',\qChal}{1,k}(\secp,0),\tildeexpt{\qB',\qChal}{1,k}(\secp,1)) \le \negl(\secp).
	\end{align}
where $\tildeexpt{\qB',\qChal}{1,k}(\secp,b)$ is an experiment that works as follows:
\begin{description}
\item[$\tildeexpt{\qB',\qChal}{1.k}(\secp,b)$:]~
\begin{enumerate}
\item Sample $z, \theta \leftarrow \{0, 1\}^{\secp}$. 
\item $\qB'$ takes $(1^\secp, \ket{z}_{\theta})$ as input. 
\item $\qB'$ interacts with $\qChal$ as in $\expt{\qB,\qChal}{1,k}(\secp,\theta,b\oplus \bigoplus_{j: \theta_{i,j} = 0} z_{i,j})$ where $\qB'$ plays the role of $\qB$. 
\item $\qB'$ outputs a string $z' \in \{0, 1\}^{\secp}$ and a quantum state $\rho$.
\item If $z_j = z_j'$ for all $j\in[\secp]$ such that $\theta_j = 1$ then the experiment outputs $\rho$, and otherwise it outputs a special symbol $\bot$.
\end{enumerate}
\end{description}
Note that the only difference between $\hyb_{1,k-1}$ and $\hyb_{1,k}$ is that $\beta_{\msglen+k}$ is set to be $0\oplus \bigoplus_{j: \theta_{i,j} = 0} z_{i,j}$ in $\hyb_{1,k-1}$ and $m_{k}^{(1)}\oplus \bigoplus_{j: \theta_{i,j} = 0} z_{i,j}$ in $\hyb_{1,k}$. If $m_{k}^{(1)}=0$, then there is no difference. Thus, we assume that $m_{k}^{(1)}=1$. 
Then we construct $\qB'$ that distinguishes $\tildeexpt{\qB',\qChal}{1.k}(\secp,0)$ and $\tildeexpt{\qB',\qChal}{1.k}(\secp,1)$ using $\qA$ that distinguishes $\hyb_{1,k-1}$ and $\hyb_{1,k}$ as follows. 
\begin{description}
\item[$\qB'(1^\secp,\ket{z}_\theta)$:]~
\begin{enumerate}
\item $\qB'$ plays the role of $\qA$ in $\hyb_{1,k}$ where the external challenger $\qChal$ of $\tildeexpt{\qB',\qChal}{1,k}(\secp,b)$ is used to simulate the challenger of $\hyb_{1,k}$. $\qChal$ provides everything that should be sent to $\qA$ except for $\ket{\psi_{\msglen+k,j}}$ for $j\in [\secp]$. $\qB'$ generates $\ket{\psi_{\msglen+k,j}}$ by applying the map $\ket{b}\ra\ket{b}\ket{u_{\msglen+k,j,b}}$ on the $j$-th qubit of $\ket{z}_\theta$ and uses it as part of $\qct$ sent to $\qA$. Note that this is possible since $(u_{\msglen+k,j,b})_{j\in[\secp],b\in\bit}$ is provided from $\qC$. 
\item Suppose that $\qA$ returns a certificate $(c_{i,j},d_{i,j})_{i\in [2\msglen+1],j\in[\secp]}$.  $\qB'$ sets $z'_{\msglen+k,j}=c_{\msglen+k,j} \oplus d_{\msglen+k,j}\cdot(u_{\msglen+k,j,0}\oplus u_{\msglen+k,j,1})$ for $j\in [\secp]$. Again, note that this is possible since $(u_{\msglen+k,j,b})_{j\in[\secp],b\in\bit}$ is provided from $\qC$. 
\item Output $z'\seteq z'_{\msglen+k,1}\concat\ldots\concat z'_{\msglen+k,\secp}$ and $\qA$'s internal state $\rho$.
\end{enumerate}
\end{description}
We can see that $\qB'$ perfectly simulates $\hyb_{1,k-1}$ if $b=0$ and $\hyb_{1,k}$ if $b=1$. (Recall that we are assuming $m_{k}^{(1)}=1$.) 
Moreover, we have $z_j = z_j'$ for all $j\in[\secp]$ (which is the condition to not output $\bot$ in $\tildeexpt{\qB',\qChal}{1.k}(\secp,b)$) whenever  $z_{i,j}=c_{i,j} \oplus d_{i,j}\cdot(u_{i,j,0}\oplus u_{i,j,1})$ holds for every $i\in[2\msglen+1]$ and $j\in[\secp]$ such that $\theta_{i,j}=1$ (which is the condition to not output $\bot$ in $\hyb_{1,k-1}$ and $\hyb_{1,k}$). 
Therefore, we must have 
\begin{align}
\TD(\hyb_{1,k-1}, \hyb_{1,k})
    \le \TD(\tildeexpt{\qB',\qChal}{1,k}(\secp,0),\tildeexpt{\qB',\qChal}{1,k}(\secp,1)).
\end{align}
\takashi{More explanation needed here?}
Combined with \cref{eq:tilde_bound_one_k}, this completes the proof of \cref{lem:hybrid_one_k}.
\end{proof}

Now, we are left to prove \cref{lem:comp_ind_one_k}. 
\begin{proof}[Proof of \cref{lem:comp_ind_one_k}]
We further consider the following sequence of hybrids:
\begin{description}
\item[$\expt{\qB,\qChal}{1,k,a}(\secp,\theta,\beta)$:] 
This is identical to $\expt{\qB,\qChal}{1,k}(\secp,\theta,\beta)$ except that $v_{i,j,1\oplus z_{i,j}}$ is uniformly chosen from $\bit^{2\secp}$ instead of being set to be $\PRG(u_{i,j,1\oplus z_{i,j}})$ for all $i\in [2\msglen+1]\setminus \{\msglen+k\}$ and $j\in [\secp]$ such that $\theta_{i,j}=0$. 
\item[$\expt{\qB,\qChal}{1,k,b}(\secp,\theta,\beta)$:] 
This is identical to $\expt{\qB,\qChal}{1,k,a}(\secp,\theta,\beta)$ except that $\theta_{\msglen+k}=\theta$ is replaced with $0^n$. Note that $\theta_{\msglen+k}$ only appears in the encrypted message for $\fe.\ct$ in $\expt{\qB,\qChal}{1,k,a}(\secp,\theta,\beta)$.
\end{description}
\begin{proposition}\label{prop_one_k_a}
If $\PRG$ is a secure PRG, 
\begin{align}
\bigg| \mrm{Pr}[\expt{\qB,\qChal}{1,k}(\secp,\theta,\beta) = 1] - \mrm{Pr}[\expt{\qB,\qChal}{1,k,a}(\secp,\theta,\beta) = 1] \bigg| \le \negl(\secp).
\end{align}
\end{proposition}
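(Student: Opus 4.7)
The plan is to prove this proposition via a straightforward hybrid argument, invoking the security of $\PRG$ at each step to replace one pseudorandom value with a truly uniform one. The key structural observation I would exploit is that in the experiment $\expt{\qB,\qChal}{1,k}(\secp,\theta,\beta)$, for every position $(i,j)$ with $i\in[2\msglen+1]\setminus\{\msglen+k\}$ and $\theta_{i,j}=0$, the preimage $u_{i,j,1\oplus z_{i,j}}$ appears nowhere in $\qB$'s view except through the single image $v_{i,j,1\oplus z_{i,j}}=\PRG(u_{i,j,1\oplus z_{i,j}})$. Indeed, when $\theta_{i,j}=0$ the state $\ket{\psi_{i,j}}=\ket{z_{i,j}}\ket{u_{i,j,z_{i,j}}}$ contains only $u_{i,j,z_{i,j}}$, the FE ciphertext $\fe.\ct$ encrypts the image vector $V$ rather than the preimage vector $U$, and the experiment's output is merely the bit $b'$ (no verification key is disclosed), so $U$ itself is never revealed.

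First I would enumerate the relevant positions $\mathcal{S}\seteq\setbk{(i,j): i\in[2\msglen+1]\setminus\{\msglen+k\},\ j\in[\secp],\ \theta_{i,j}=0}$ and, for $t=0,1,\ldots,|\mathcal{S}|$, define hybrid $\mathsf{H}_t$ as $\expt{\qB,\qChal}{1,k}(\secp,\theta,\beta)$ modified so that for the first $t$ positions $(i,j)\in\mathcal{S}$ (in the enumeration), $v_{i,j,1\oplus z_{i,j}}$ is sampled uniformly from $\bit^{2\secp}$ instead of being computed as $\PRG(u_{i,j,1\oplus z_{i,j}})$. Then $\mathsf{H}_0=\expt{\qB,\qChal}{1,k}(\secp,\theta,\beta)$ and $\mathsf{H}_{|\mathcal{S}|}=\expt{\qB,\qChal}{1,k,a}(\secp,\theta,\beta)$. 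Since $|\mathcal{S}|\le(2\msglen+1)\cdot\secp=\poly(\secp)$, it suffices to show $|\Pr[\mathsf{H}_{t-1}=1]-\Pr[\mathsf{H}_t=1]|\le\negl(\secp)$ for each $t$.

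For each such $t$, the reduction works as follows. Let $(i^*,j^*)$ be the $t$-th position in $\mathcal{S}$. A QPT PRG distinguisher $\mathcal{D}$, on input a challenge $v^*\in\bit^{2\secp}$ (which is either $\PRG(u^*)$ for uniform $u^*$ or uniform), simulates $\mathsf{H}_{t-1}$ honestly with the following modification: it sets $v_{i^*,j^*,1\oplus z_{i^*,j^*}}\seteq v^*$ and never samples the corresponding $u_{i^*,j^*,1\oplus z_{i^*,j^*}}$. By the structural observation above, this preimage is not needed anywhere else in the simulation, so the simulation is perfect; the distinguisher then outputs $\qB$'s final bit. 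If $v^*=\PRG(u^*)$, this produces exactly $\mathsf{H}_{t-1}$, while if $v^*$ is uniform, it produces exactly $\mathsf{H}_t$. Hence PRG security gives $|\Pr[\mathsf{H}_{t-1}=1]-\Pr[\mathsf{H}_t=1]|\le\negl(\secp)$, and summing over $t$ yields the claimed bound.

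The only mildly delicate point is bookkeeping, namely verifying that $u_{i^*,j^*,1\oplus z_{i^*,j^*}}$ really is unused elsewhere: one must check each component of $\qB$'s view (the master public key, the functional decryption keys $\fe.\sk_{g[f]}$ for adaptive queries, the FE ciphertext, and the quantum states $\ket{\psi_{i,j}}$) and confirm that none of them depends on $u_{i^*,j^*,1\oplus z_{i^*,j^*}}$ once $v_{i^*,j^*,1\oplus z_{i^*,j^*}}$ is fixed. This is immediate from the scheme description. I do not expect any genuine obstacle here; the argument is a direct hybrid reduction of polynomial length.
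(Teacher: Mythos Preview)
Your proposal is correct and follows essentially the same approach as the paper: the paper's proof is a one-liner observing that $u_{i,j,1\oplus z_{i,j}}$ (for $i\in[2\msglen+1]\setminus\{\msglen+k\}$, $\theta_{i,j}=0$) is used only to compute $v_{i,j,1\oplus z_{i,j}}$, and then invokes PRG security directly. You have simply spelled out the implicit hybrid argument and reduction in full detail, with the same structural observation at its core.
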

\begin{proof}
Noting that $u_{i,j,1\oplus z_{i,j}}$ for $i\in[2\msglen+1]\setminus \{\msglen+k\}$ and $j\in [\secp]$ such that $\theta_{i,j}=0$ is used only for generating $v_{i,j,1\oplus z_{i,j}}$ in $\expt{\qB,\qChal}{1,k}(\secp,\theta,\beta)$, \cref{prop_one_k_a} directly follows from the security of $\PRG$.
\end{proof}
\begin{proposition}\label{prop_one_k_b}
If 
$\FE$ is adaptively secure, 
\begin{align}
\bigg| \mrm{Pr}[\expt{\qB,\qChal}{1,k,a}(\secp,\theta,\beta) = 1] - \mrm{Pr}[\expt{\qB,\qChal}{1,k,b}(\secp,\theta,\beta) = 1] \bigg| \le \negl(\secp).
\end{align}
\end{proposition}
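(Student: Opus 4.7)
The plan is to reduce to the adaptive indistinguishable-security of $\FE$. Concretely, I would construct a QPT adversary $\advB$ against $\FE$ that, internally, simulates $\qB$ interacting with $\qChal$ and uses the external FE challenger to handle only the computation of $\fe.\ct$. The adversary $\advB$ samples everything that $\qChal$ would sample --- $\{z_i,\theta_i\}_{i\neq\msglen+k}$, all $u_{i,j,b}$, and sets $v_{i,j,1\oplus z_{i,j}}$ uniformly at random for $i\in[2\msglen+1]\setminus\{\msglen+k\}$ with $\theta_{i,j}=0$ and $v_{i,j,b}\seteq \PRG(u_{i,j,b})$ otherwise --- and then submits the pair of challenge messages
\begin{align}
M^{(a)}&\seteq V\|\theta_1\|\ldots\|\theta_{\msglen+k-1}\|\theta\|\theta_{\msglen+k+1}\|\ldots\|\theta_{2\msglen+1}\|\beta_1\|\ldots\|\beta_{2\msglen+1},\\
M^{(b)}&\seteq V\|\theta_1\|\ldots\|\theta_{\msglen+k-1}\|0^\secp\|\theta_{\msglen+k+1}\|\ldots\|\theta_{2\msglen+1}\|\beta_1\|\ldots\|\beta_{2\msglen+1}
\end{align}
to the FE challenger, receives $\fe.\ct^*$, and uses it to simulate the rest of $\expt{\qB,\qChal}{1,k,a}$ or $\expt{\qB,\qChal}{1,k,b}$. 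Key queries $f$ from $\qB$ are forwarded by $\advB$ as queries for $g[f]$.

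The key step is verifying that this reduction meets the public-slot constraint $g[f](M^{(a)},\pub)=g[f](M^{(b)},\pub)$ for every decryption key query $f$ and \emph{every} public input $\pub=(b_{i,j},u_{i,j})_{i,j}$. Let $E$ be the event that, for every $i\in[2\msglen+1]\setminus\{\msglen+k\}$ and $j\in[\secp]$ with $\theta_{i,j}=0$, the uniformly sampled string $v_{i,j,1\oplus z_{i,j}}\in\bit^{2\secp}$ lies outside the image of $\PRG$. Since $|\mathrm{Image}(\PRG)|\le 2^\secp$ out of $2^{2\secp}$, a union bound over at most $(2\msglen+1)\secp$ positions gives $\Pr[E]\ge 1-\negl(\secp)$. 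Conditioned on $E$, any $\pub$ that passes the PRG check in $g[f]$ must satisfy $b_{i,j}=z_{i,j}$ for all those $(i,j)$; in particular for $i=2\msglen+1$, one then has
\begin{align}
m_{2\msglen+1}=\beta_{2\msglen+1}\oplus\bigoplus_{j:\theta_{2\msglen+1,j}=0}b_{2\msglen+1,j}=0,
\end{align}
so $g[f]$ always takes its first branch and outputs $f(m_1\|\ldots\|m_\msglen)$. But for $i\in[\msglen]$ we also have $b_{i,j}=z_{i,j}$ whenever $\theta_{i,j}=0$, so $m_i=m_i^{(0)}$ regardless of $\theta_{\msglen+k}$, and the output is $f(m^{(0)})=f(m^{(1)})$. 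For $\pub$ that does not pass the PRG check, $g[f]$ outputs $\bot$ on both messages. Thus, conditioned on $E$, the public-slot constraint is satisfied for \emph{all} $\pub$, and $\advB$ is a valid FE adversary.

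The final step is to bookkeep the advantage. Writing $p_a\seteq \Pr[\expt{\qB,\qChal}{1,k,a}=1]$ and $p_b\seteq \Pr[\expt{\qB,\qChal}{1,k,b}=1]$, and noting that $\Pr[\neg E]=\negl(\secp)$, one has $|p_a-\Pr[\qB=1\text{ in (a)}\land E]|\le\negl$ and analogously for (b); the middle quantities coincide with $\Pr[\expb{\FE,\advB}{ada}{ind}(\secp,0)=1]$ and $\Pr[\expb{\FE,\advB}{ada}{ind}(\secp,1)=1]$ respectively (using that the FE experiment outputs $\bot$ outside $E$ but $E$ anyway has overwhelming probability). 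Applying $\FE$'s adaptive security then yields $|p_a-p_b|\le\negl(\secp)$. The main obstacle I anticipate is handling the public-slot condition cleanly --- it must be verified for \emph{every} public input, not only with high probability over $\pub$ --- which is exactly why one randomizes the ``wrong-side'' PRG outputs $v_{i,j,1\oplus z_{i,j}}$ in hybrid $\expt{\qB,\qChal}{1,k,a}$ so that event $E$ mechanically constrains the adversary's choice of $b_{i,j}$.
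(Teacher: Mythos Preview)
Your proposal is correct and follows essentially the same approach as the paper: your event $E$ is exactly the paper's event $\mathsf{Good}$, and your case analysis (PRG check passes $\Rightarrow$ $b_{i,j}=z_{i,j}$ on the relevant positions $\Rightarrow$ $m_{2\msglen+1}=0$ and both outputs equal $f(m^{(0)})$; PRG check fails $\Rightarrow$ both output $\bot$) matches the paper's verbatim. The paper is slightly terser---it states the functional equality \cref{eq:equivalence_function} and says ``if this is proven, the proposition follows directly from the adaptive security of $\FE$'' without spelling out the reduction $\advB$ or the advantage bookkeeping---but the substance is identical.
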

\begin{proof}
For each $i\in[2\msglen+1]\setminus \{\msglen+k\}$ and $j\in [\secp]$ such that $\theta_{i,j}=0$, there is no $u$ such that $\PRG(u)=v_{i,j,1\oplus z_{i,j}}$ except for probability $2^{-\secp}$. Let $\mathsf{Good}$ be the event that the above holds for all $i\in[2\msglen+1]\setminus \{\msglen+k\}$ and $j\in [\secp]$ such that $\theta_{i,j}=0$. We have $\Pr[\mathsf{Good}]\ge 1-2\msglen \secp 2^{-\secp}=1-\negl(\secp)$. 
We prove that whenever $\mathsf{Good}$ occurs, we have 
\begin{align} \label{eq:equivalence_function}
&g[f]((V,\theta_1,\ldots,\theta_{2\msglen+1},\beta_1,\ldots,\beta_{2\msglen+1}),(b_{i,j},u_{i,j})_{i\in [2\msglen+1],j\in [\secp]})\\
=
&g[f]((V,\theta_1,\ldots,\theta_{\msglen+k-1},0^n,\theta_{\msglen+k+1,\ldots}\theta_{2\msglen+1},\beta_1,\ldots,\beta_{2\msglen+1}),(b_{i,j},u_{i,j})_{i\in [2\msglen+1],j\in [\secp]})
\end{align}
for all key queries $f$ and $(b_{i,j},u_{i,j})_{i\in [2\msglen+1],j\in [\secp]}$. If this is proven, \cref{prop_one_k_b} directly follows from the adaptive security of $\FE$. 

Below, we prove \cref{eq:equivalence_function}. 
We consider the following two cases.
\begin{itemize}
    \item If $\PRG(u_{i,j})=v_{i,j,b_{i,j}}$ holds for every $i\in[2\msglen+1]$ and $j\in[\secp]$, then by the assumption that $\mathsf{Good}$ occurs, we have $b_{i,j}=z_{i,j}$ for all $i\in[2\msglen+1]\setminus \{\msglen+k\}$ and $j\in [\secp]$ such that $\theta_{i,j}=0$. Then we have $\beta_i\oplus \bigoplus_{j: \theta_{i,j} = 0} b_{i,j}=m^{(0)}_i$ for $i\in [\msglen]$ and  $\beta_{2\msglen+1}\oplus \bigoplus_{j: \theta_{2\msglen+1,j} = 0} b_{2\msglen+1,j}=0$. Then both sides of \cref{eq:equivalence_function} are equal to $f(m^{(0)})$. 
    \item Otherwise, both sides of \cref{eq:equivalence_function} are equal to $\bot$.
\end{itemize}
In either case, \cref{eq:equivalence_function} holds. This completes the proof of \cref{prop_one_k_b}. 
\end{proof}
\begin{proposition}\label{prop_one_k_c}
If $\PRG$ is a secure PRG,  
\begin{align}
\bigg| \mrm{Pr}[\expt{\qB,\qChal}{1,k,b}(\secp,\theta,\beta) = 1] - \mrm{Pr}[\expt{\qB,\qChal}{1,k}(\secp,0^n,\beta) = 1] \bigg| \le \negl(\secp).
\end{align}
\end{proposition}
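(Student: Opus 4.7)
\textbf{Proof plan for Proposition~\ref{prop_one_k_c}.}
The plan is to prove this via a direct reduction to the security of $\PRG$, essentially running the argument of Proposition~\ref{prop_one_k_a} in reverse. The only difference between $\expt{\qB,\qChal}{1,k,b}(\secp,\theta,\beta)$ and $\expt{\qB,\qChal}{1,k}(\secp,0^n,\beta)$ is the sampling rule for the strings $v_{i,j,1\oplus z_{i,j}}$ at the positions $(i,j)$ with $i\in[2\msglen+1]\setminus\{\msglen+k\}$ and $\theta_{i,j}=0$: in the former they are uniformly random elements of $\bit^{2\secp}$, while in the latter they are set to $\PRG(u_{i,j,1\oplus z_{i,j}})$. (In both experiments $\theta_{\msglen+k}=0^n$, the encrypted secret input contains $0^n$ in the $(\msglen+k)$-th slot, and no BB84 state is generated for $i=\msglen+k$; all other components are sampled identically.)

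The key observation that makes the reduction clean is that, for every $(i,j)$ with $i\ne \msglen+k$ and $\theta_{i,j}=0$, the preimage $u_{i,j,1\oplus z_{i,j}}$ is never exposed to $\qB$: the state $\ket{\psi_{i,j}}=\ket{z_{i,j}}\ket{u_{i,j,z_{i,j}}}$ only contains $u_{i,j,z_{i,j}}$, the leaked set $\{u_{\msglen+k,j,b}\}_{j,b}$ does not include this index, and the $v$ values themselves are the only other occurrence of $u_{i,j,1\oplus z_{i,j}}$ in the experiment. Hence $u_{i,j,1\oplus z_{i,j}}$ is used solely as a PRG seed whose image is $v_{i,j,1\oplus z_{i,j}}$.

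Given this, I would introduce an intermediate sequence of hybrids $\expt{\qB,\qChal}{1,k,b,t}(\secp,\theta,\beta)$ for $t=0,1,\dots,T$ that switch these $v$-values from uniform back to $\PRG$ outputs one position at a time, where $T$ is the total number of such $(i,j)$ pairs (at most $2\msglen\cdot \secp$). Neighboring hybrids differ only in whether a single $v_{i,j,1\oplus z_{i,j}}$ is uniform or a PRG image on a seed used nowhere else, so a standard reduction to the PRG security of \cref{def:prg} shows their indistinguishability with negligible advantage. Summing the $\poly(\secp)$ hybrid gaps yields
\begin{align}
\bigl|\Pr[\expt{\qB,\qChal}{1,k,b}(\secp,\theta,\beta) = 1] - \Pr[\expt{\qB,\qChal}{1,k}(\secp,0^n,\beta) = 1]\bigr| \le \negl(\secp),
\end{align}
as required. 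There is no genuine obstacle here; the only mild subtlety is ensuring that the reduction correctly simulates the remaining PRG-image $v$'s itself (by sampling its own seeds) while embedding the single challenge $v$-value from the PRG game, which is routine.
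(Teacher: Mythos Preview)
Your proposal is correct and matches the paper's own argument: the paper simply notes that $u_{i,j,1\oplus z_{i,j}}$ for $i\in[2\msglen+1]\setminus\{\msglen+k\}$ and $j$ with $\theta_{i,j}=0$ is used only to generate $v_{i,j,1\oplus z_{i,j}}$ in $\expt{\qB,\qChal}{1,k}(\secp,0^n,\beta)$, so the claim follows directly from the security of $\PRG$. Your explicit hybrid-by-hybrid reduction is a fully spelled-out version of the same one-line observation.
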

\begin{proof}
Noting that $u_{i,j,1\oplus z_{i,j}}$  for $i\in[2\msglen+1]\setminus \{\msglen+k\}$ and $j\in [\secp]$ such that $\theta_{i,j}=0$ is used only for generating $v_{i,j,1\oplus z_{i,j}}$ in $\expt{\qB,\qChal}{1,k}(\secp,0^n,\beta)$, \cref{prop_one_k_c} directly follows from the security of $\PRG$.
\end{proof}
\cref{lem:comp_ind_one_k} follows from the above propositions.
\end{proof}

\paragraph{Proof of~\cref{lem:hybrid_two}.}
Next, we prove~\cref{lem:hybrid_two}.
\begin{proof}
Since the proof of \cref{lem:hybrid_two} is very similar to that of \cref{lem:hybrid_one_k}, we only explain the difference. First, we define an experiment $\expt{\qB,\qChal}{2}(\secp,\theta,\beta)$ that is similar to $\expt{\qB,\qChal}{1,k}(\secp,\theta,\beta)$ except that $\msglen+k$ is replaced with $2\msglen+1$. Then by almost the same argument as that in the proof of \cref{lem:hybrid_one_k} using \cref{lem:ce_int}, we only have to prove 
\begin{align}
\bigg| \mrm{Pr}[\expt{\qB,\qChal}{2}(\secp,\theta,\beta) = 1] - \mrm{Pr}[\expt{\qB,\qChal}{2}(\secp,0^n,\beta) = 1] \bigg| \le \negl(\secp)
\end{align}
for all QPT $\qB$. 
Its proof is also similar to that of \cref{lem:comp_ind_one_k}. 
We define $\expt{\qB,\qChal}{2,a}(\secp,\theta,\beta)$ and $\expt{\qB,\qChal}{2,b}(\secp,\theta,\beta)$ similarly to $\expt{\qB,\qChal}{1,k,a}(\secp,\theta,\beta)$ and $\expt{\qB,\qChal}{1,k,b}(\secp,\theta,\beta)$ except that $\msglen+k$ is replaced with $2\msglen+1$.  
Then the computational indistinguishability between 
 $\expt{\qB,\qChal}{2}(\secp,\theta,\beta)$ and  $\expt{\qB,\qChal}{2,a}(\secp,\theta,\beta)$ and between  $\expt{\qB,\qChal}{2,b}(\secp,\theta,\beta)$ and  $\expt{\qB,\qChal}{2}(\secp,0^n,\beta)$ immediately follow from the security of $\PRG$. 
 We argue  the computational indistinguishability between  $\expt{\qB,\qChal}{2,a}(\secp,\theta,\beta)$ and  $\expt{\qB,\qChal}{2,b}(\secp,\theta,\beta)$ based on  the security of $\FE$ as follows. 

 Let $\mathsf{Good}$ be the event that there is no $u$ such that $\PRG(u)=v_{i,j,1\oplus z_{i,j}}$ for all $i\in[2\msglen]$ and $j\in [\secp]$ such that $\theta_{i,j}=0$.  
 We have $\Pr[\mathsf{Good}]\ge 1-\negl(\secp)$. 
 Similarly to the proof of  \cref{prop_one_k_b}, 
it suffices to prove that whenever $\mathsf{Good}$ occurs, we have 
\begin{align} \label{eq:equivalence_function_again}
&g[f]((V,\theta_1,\ldots,\theta_{2\msglen+1},\beta_1,\ldots,\beta_{2\msglen+1}),(b_{i,j},u_{i,j})_{i\in [2\msglen+1],j\in [\secp]})\\
=
&g[f]((V,\theta_1,\ldots,\theta_{2\msglen},0^\secp,\beta_1,\ldots,\beta_{2\msglen+1}),(b_{i,j},u_{i,j})_{i\in [2\msglen+1],j\in [\secp]})
\end{align}
for all key queries $f$ and $(b_{i,j},u_{i,j})_{i\in [2\msglen+1],j\in [\secp]}$.  
Below, we prove \cref{eq:equivalence_function_again}. 
We consider the following two cases.
\begin{itemize}
    \item If $\PRG(u_{i,j})=v_{i,j,b_{i,j}}$ holds for every $i\in[2\msglen+1]$ and $j\in[\secp]$, then by the assumption that $\mathsf{Good}$ occurs, we have $b_{i,j}=z_{i,j}$ for all $i\in[2\msglen]$ and $j\in [\secp]$ such that $\theta_{i,j}=0$. Then we have 
    \begin{align}
        \beta_i\oplus \bigoplus_{j: \theta_{i,j} = 0} b_{i,j}=
        \begin{cases}
            m^{(0)}_i   &\text{~if~} i\in [\msglen]\\
            m^{(1)}_{i-\msglen} &\text{~if~} i\in [\msglen+1,2\msglen]
        \end{cases}.
    \end{align}
Then the LHS of \cref{eq:equivalence_function_again} is equal to $f(m^{(\gamma)})$ where $\gamma=\beta_{2\msglen+1}\oplus \bigoplus_{j: \theta_{i,j} = 0} b_{2\msglen+1,j}$
and the RHS of \cref{eq:equivalence_function_again} is equal to $f(m^{(\gamma')})$ where $\gamma'=\beta_{2\msglen+1}\oplus \bigoplus_{j\in [\secp]} b_{2\msglen+1,j}$.
By the restriction on $\qB$, we have  $f(m^{(0)})=f(m^{(1)})$. Therefore, both sides of \cref{eq:equivalence_function_again} are equal to  $f(m^{(0)})=f(m^{(1)})$.
    \item Otherwise, both sides of \cref{eq:equivalence_function_again} are equal to $\bot$.
\end{itemize}
In either case, \cref{eq:equivalence_function_again} holds. This completes the proof of \cref{lem:hybrid_two}.
\end{proof}

\section{Bounded Collusion-Resistant Functional Encryption with Certified Everlasting Deletion}\label{sec:bounded_FE}

\subsection{Definitions}\label{sec:bounded_FE_definitions}

We also require verification correctness with QOTP for $q$-bounded certified everlasting simulation-secure FE because we need it for the construction of certified everlasting secure FE in \cref{sec:const_fe_adapt}.

\begin{definition}[Verification Correctness with QOTP]\label{def:ver_correctness_QOTP_cert_ever_func}
There exists a negligible function $\negl$ and a PPT algorithm $\Modify$ such that for any $\secp\in \N$ and $m\in\Ms$,
\begin{align}
\Pr\left[
\Vrfy(\vk,\cert^*)=\bot
\ \middle |
\begin{array}{ll}
(\MPK,\MSK)\la\Setup(1^\secp,1^q)\\
(\vk,\qct) \lrun \qEnc(\MPK,m)\\
a,b\la\bit^{p(\lambda)}\\
\wtl{\cert} \lrun \qDelete(Z^bX^a\ct X^aZ^b)\\
\cert^*\lrun \Modify(a,b,\wtl{\cert})  
\end{array}
\right] 
\leq
\negl(\secp).
\end{align}
\end{definition}

Another is an adaptively simulation-based security notion in the bounded collusion-resistant setting. The other is a non-adaptively simulation-based security notion in the bounded collusion-resistant setting.
We consider only bounded collusion-resistance in the simulation-based definitions because achieving simulation-based security is impossible in the collusion-resistant setting~\cite{C:AGVW13}.

Our simulation-based security notion is a natural extension of that in the classical FE setting~\cite{C:GorVaiWee12}.
Note that the setup algorithm additionally takes $1^q$ as input in the bounded collusion-resistant setting where $q$ is the total number of key queries.
\begin{definition}[$q$-Bounded Certified Everlasting Simulation-Security]\label{def:FE_CED_SIM_sec}
Let $q$ be a polynomial of $\lambda$.
Let $\Sigma=(\Setup,\keygen,\qEnc,\qDec,\qDelete,\Vrfy)$ be a $q$-bounded FE with certified everlasting deletion scheme.
We consider the following security experiment $\expd{\Sigma,\qA}{cert}{ever}{ada}{sim}(\secp,b)$ against a QPT adversary $\qA_1$ and an unbounded adversary $\qA_2$.
Let $\qSim_1$, $\qSim_2$, and $\qSim_3$ be a QPT algorithm.
\begin{enumerate}
    \item The challenger runs $(\MPK,\MSK)\la \Setup(1^\secp,1^q)$ and sends $\MPK$ to $\qA_1$.
    \item $\qA_1$ is allowed to make arbitrary key queries.
    For the $\ell$-th key query, the challenger receives $f_\ell\in\mathcal{F}$, 
    computes $\sk_{f_\ell}\la\keygen(\MSK,f_\ell)$
    and sends $\sk_{f_\ell}$ to $\qA_1$.
    Let $q_{\sfpre}$ be the number of times that $\qA_1$ makes key queries in this phase.
    Let $\mathcal{V}\seteq \{y_i\seteq f_i(m),f_i,\sk_{f_i}\}_{i\in[q_{\sfpre}]}$.
    \item $\qA_1$ chooses $m\in\Ms$ and sends $m$ to the challenger.
    \item The experiment works as follows:
    \begin{itemize}
        \item If $b=0$, the challenger computes $(\vk,\qct)\la \qEnc(\MPK,m)$, and sends $\qct$ to $\qA_1$.
        \item If $b=1$, the challenger computes $(\qct,\state_{q_{\sfpre}})\la \qSim_1(\MPK,\mathcal{V},1^{|m|})$, and sends $\qct$ to $\qA_1$, where $\state_{q_{\sfpre}}$ is a quantum state.
    \end{itemize}
    \item $\qA_1$ is allowed to make arbitrary key queries at most $q-q_{\sfpre}$ times.
    For the $\ell$-th key query, the challenger works as follows.
    \begin{itemize}
    \item If $b=0$, the challenger receives $f_\ell\in\mathcal{F}$, 
    computes $\sk_{f_\ell}\la\keygen(\MSK,f_\ell)$,
    and sends $\sk_{f_\ell}$ to $\qA_1$.
    \item If $b=1$, the challenger receives $f_\ell\in\mathcal{F}$, computes $(\sk_{f_{\ell}},\state_{\ell})\la\qSim_2(\MSK,f_\ell,f_\ell(m),\state_{\ell-1})$, and sends $\sk_{f_\ell}$ to $\qA_1$,
    where $\state_{\ell}$ is a quantum state.
    \end{itemize}
    \item If $b=1$, the challenger runs $\vk\la\qSim_3(\state_{q})$.
    \item At some point, $\qA_1$ sends $\cert$ to the challenger and its internal state to $\qA_2$.
    \item The challenger computes $\Vrfy(\vk,\cert)$. If the output is $\top$, then the challenger outputs $\top$, and sends $\MSK$ to $\qA_2$.
    Otherwise, the challenger outputs $\bot$, and sends $\bot$ to $\qA_2$. 
    \item $\qA_2$ outputs $b'\in\bit$. If the challenger outputs $\top$, the output of the experiment is $b'$. Otherwise, the output of the experiment is $\bot$.
\end{enumerate}
We say that $\Sigma$ is $q$-bounded adaptively certified everlasting simulation-secure if there exists a $QPT$ simulator $\qSim=(\qSim_1,\qSim_2,\qSim_3)$ such that for any $QPT$ adversary $\qA_1$ and any unbounded adversary $\qA_2$ it holds that
\begin{align}
\advd{\Sigma,\qA}{cert}{ever}{ada}{sim}(\secp) \seteq \abs{\Pr[\expd{\Sigma,\qA}{cert}{ever}{ada}{sim}(\secp,0)=1]  - \Pr[\expd{\Sigma,\qA}{cert}{ever}{ada}{sim}(\secp,1)=1]} \leq \negl(\secp).
\end{align}
\end{definition}

\begin{remark}
Note that~\cref{def:ever_non_ada_security_cert_ever_func_sim,def:FE_CED_SIM_sec} were presented before the work by Bartusek and Khurana~\cite{myC:BarKhu23} appeared. Although we can define simulation-based definitions based on the definitions by Bartusek and Khurana, we leave our original simulation-based definitions as a concurrent and independent work. We also note that the challenger can omit sending $\MSK$ to $\qA_2$ in~\cref{def:FE_CED_SIM_sec,def:ever_non_ada_security_cert_ever_func_sim} in the public-key setting based on the results by Bartusek and Khurana~\cite[Claim A.3 and A.4]{myC:BarKhu23}.
\end{remark}

We can consider a non-adaptive variant of the definition above.
\begin{definition}[$q$-Bounded Non-Adaptive Certified Everlasting Simulation-Security]\label{def:ever_non_ada_security_cert_ever_func_sim}
Let $q$ be a polynomial of $\lambda$.
Let $\Sigma=(\Setup,\keygen,\qEnc,\qDec,\qDelete,\Vrfy)$ be a $q$-bounded FE with certified everlasting deletion scheme.
We consider the following security experiment $\expd{\Sigma,\qA}{cert}{ever}{noada}{sim}(\secp,b)$ against a QPT adversary $\qA_1$ and an unbounded adversary $\qA_2$. Let $\qSim$ be a QPT algorithm.
\begin{enumerate}
    \item The challenger runs $(\MPK,\MSK)\la \Setup(1^\secp)$ and sends $\MPK$ to $\qA_1$.
    \item $\qA_1$ is allowed to make arbitrary key queries.
    For the $\ell$-th key query, the challenger receives $f_\ell\in\mathcal{F}$, 
    computes $\sk_{f_\ell}\la\keygen(\MSK,f_\ell)$
    and sends $\sk_{f_\ell}$ to $\qA_1$.
    Let $q$ be the total number of times that $\qA_1$ makes key queries.
    Let $\mathcal{V}\seteq \{y_i\seteq f_i(m),f_i,\sk_{f_i}\}_{i\in[q]}$.
    \item $\qA_1$ chooses $m\in\Ms$ and sends $m$ to the challenger.
    \item The experiment works as follows:
    \begin{itemize}
        \item If $b=0$, the challenger computes $(\vk,\qct)\la \qEnc(\MPK,m)$, and sends $\qct$ to $\qA_1$.
        \item If $b=1$, the challenger computes $(\vk,\qct)\la \qSim(\MPK,\mathcal{V},1^{|m|})$, and sends $\qct$ to $\qA_1$.
    \end{itemize}
    \item At some point, $\qA_1$ sends $\cert$ to the challenger and its internal state to $\qA_2$.
    \item The challenger computes $\Vrfy(\vk,\cert)$.
    If the output is $\top$, then the challenger outputs $\top$, and sends $\MSK$ to $\qA_2$.
    Otherwise, the challenger outputs $\bot$, and sends $\bot$ to $\qA_2$.
    \item $\qA_2$ outputs $b'\in\bit$. If the challenger outputs $\top$, the output of the experiment is $b'$. Otherwise, the output of the experiment is $\bot$.
\end{enumerate}
We say that $\Sigma$ is $q$-bounded non-adaptive certified everlasting simulation-secure if there exists a $QPT$ simulator $\qSim$ such that for any $QPT$ adversary $\qA_1$ and any unbounded adversary $\qA_2$ it holds that
\begin{align}
\advd{\Sigma,\qA}{cert}{ever}{noada}{sim}(\secp) \seteq \abs{\Pr[\expd{\Sigma,\qA}{cert}{ever}{noada}{sim}(\secp,0)=1]  - \Pr[\expd{\Sigma,\qA}{cert}{ever}{noada}{sim}(\secp,1)=1]} 
\leq \negl(\secp).
\end{align}

\end{definition}

We need to consider standard simulation-security notions for FE with certified everlasting deletion.
We note that the following two security definitions are simulation-based ones defined in \cite{C:GorVaiWee12}.

\begin{definition}[$q$-Bounded Non-Adaptive Simulation-Security for FE with Certified Everlasting Deletion~\cite{C:GorVaiWee12}]\label{def:non_ada_security_cert_ever_func_sim}
Let $q$ be a polynomial of $\lambda$.
Let $\Sigma=(\Setup,\keygen,\qEnc,\qDec,\qDelete,\Vrfy)$ be a $q$-bounded FE with certified everlasting deletion scheme.
We consider the following security experiment $\expc{\Sigma,\qA}{non}{ada}{sim}(\secp,b)$ against a QPT adversary $\qA$.
Let $\qSim$ be a QPT algorithm.
\begin{enumerate}
    \item The challenger runs $(\MPK,\MSK)\la \Setup(1^\secp,1^q)$ and sends $\MPK$ to $\qA$.
    \item $\qA$ is allowed to make arbitrary key queries.
    For the $\ell$-th key query, the challenger receives $f_\ell\in\mathcal{F}$, 
    computes $\sk_{f_\ell}\la\keygen(\MSK,f_\ell)$,
    and sends $\sk_{f_\ell}$ to $\qA$.
    Let $q$ be the total number of times that $\qA$ makes key queries.
    Let $\mathcal{V}\seteq \{y_i\seteq f_i(m),f_i,\sk_{f_i}\}_{i\in[q]}$.
    \item $\qA$ chooses $m\in\Ms$ and sends $m$ to the challenger.
    \item The experiment works as follows:
    \begin{itemize}
        \item If $b=0$, the challenger computes $(\vk,\qct)\la \qEnc(\MPK,m)$, and sends $\qct$ to $\qA$.
        \item If $b=1$, the challenger computes $\qct\la \qSim(\MPK,\mathcal{V},1^{|m|})$, and sends $\qct$ to $\qA$.
    \end{itemize}
    \item $\qA$ outputs $b'\in\bit$. The output of the experiment is $b'$.
\end{enumerate}
We say that $\Sigma$ is $q$-bounded non-adaptive secure if there exists a $QPT$ simulator $\qSim$ such that for any $QPT$ adversary $\qA$ it holds that
\begin{align}
\advc{\Sigma,\qA}{non}{ada}{sim}(\secp) \seteq \abs{\Pr[\expc{\Sigma,\qA}{non}{ada}{sim}(\secp,0)=1]  - \Pr[\expc{\Sigma,\qA}{non}{ada}{sim}(\secp,1)=1]} \leq \negl(\secp).
\end{align}
\end{definition}

\begin{definition}[$q$-Bounded Adaptive Simulation-Security for FE with Certified Everlasting Deletion~\cite{C:GorVaiWee12}]\label{def:ada_security_cert_ever_func_sim}
Let $q$ be a polynomial of $\lambda$.
Let $\Sigma=(\Setup,\keygen,\qEnc,\qDec,\qDelete,\Vrfy)$ be a $q$-bounded FE with certified everlasting deletion scheme.
We consider the following security experiment $\expb{\Sigma,\qA}{ada}{sim}(\secp,b)$ against a QPT adversary $\qA$.
Let $\qSim_1$ and $\qSim_2$ be a QPT algorithm.
\begin{enumerate}
    \item The challenger runs $(\MPK,\MSK)\la \Setup(1^\secp,1^q)$ and sends $\MPK$ to $\qA$.
    \item $\qA$ is allowed to make arbitrary key queries.
    For the $\ell$-th key query, the challenger receives $f_\ell\in\mathcal{F}$, 
    computes $\sk_{f_\ell}\la\keygen(\MSK,f_\ell)$,
    and sends $\sk_{f_\ell}$ to $\qA$.
    Let $q_{\sfpre}$ be the number of times that $\qA$ makes key queries in this phase.
    Let $\mathcal{V}\seteq \{y_i\seteq f_i(m),f_i,\sk_{f_i}\}_{i\in[q_{\sfpre}]}$.
    \item $\qA$ chooses $m\in\Ms$ and sends $m$ to the challenger.
    \item The experiment works as follows:
    \begin{itemize}
        \item If $b=0$, the challenger computes $(\vk,\qct)\la \qEnc(\MPK,m)$, and sends $\qct$ to $\qA$.
        \item If $b=1$, the challenger computes $(\qct,\state_{q_{\sfpre}})\la \qSim_1(\MPK,\mathcal{V},1^{|m|})$, and sends $\qct$ to $\qA$, where $\state_{q_{\sfpre}}$ is a quantum state.
    \end{itemize}
    \item $\qA$ is allowed to make arbitrary key queries at most $(q-q_{\sfpre})$ times.
    For the $\ell$-th key query, the challenger works as follows:
    \begin{itemize}
        \item If $b=0$, the challenger receives $f_\ell\in\mathcal{F}$, 
    computes $\sk_{f_\ell}\la\keygen(\MSK,f_\ell)$,
    and sends $\sk_{f_\ell}$ to $\qA$.
    \item If $b=1$, the challenger receives $f_\ell\in\mathcal{F}$, computes $(\sk_{f_{\ell}},\state_{\ell})\la\qSim_2(\MSK,f_{\ell},f_\ell(m),\state_{\ell-1})$, and sends $\sk_{f_\ell}$ to $\qA$.
    \end{itemize}
    \item $\qA$ outputs $b'\in\bit$. The output of the experiment is $b'$.
\end{enumerate}
We say that $\Sigma$ is $q$-bounded adaptive simulation-secure if there exists a $QPT$ simulator $\qSim=(\qSim_1,\qSim_2)$ such that for any $QPT$ adversary $\qA$ it holds that
\begin{align}
\advb{\Sigma,\qA}{ada}{sim}(\secp) \seteq \abs{\Pr[\expb{\Sigma,\qA}{ada}{sim}(\secp,0)=1]  - \Pr[\expb{\Sigma,\qA}{ada}{sim}(\secp,1)=1]} \leq \negl(\secp).
\end{align}
\end{definition}

\subsection{\texorpdfstring{$1$}{1}-Bounded Construction with Non-Adaptive Security}\label{sec:const_func_non_adapt}
To achieve $q$-bounded adaptive certified everlasting simulation-secure FE in~\cref{sec:const_multi_fe}, we prepare building blocks in this section and~\cref{sec:const_fe_adapt}.
In this section, we construct a $1$-bounded non-adaptive certified everlasting simulation-secure FE scheme from a certified everlasting secure garbling scheme~(\cref{def:cert_ever_garbled}) and a certified everlasting secure PKE scheme~(\cref{def:cert_ever_pke}). See~\cref{sec:const_garbling,sec:const_pke_rom,sec:const_pke_wo_rom} for how to achieve these building blocks. Regarding PKE, we can also use the construction by Bartusek and Khurana~\cite{myC:BarKhu23}.

\paragraph{Our $1$-bounded non-adaptive certified everlasting secure FE scheme.}
This construction is essentially the same as the $1$-bound FE by Sahai and Seyalioglu~\cite{CCS:SahSey10}. We use a universal circuit $U(\cdot,x)$ in which a plaintext $x$ is hard-wired. The universal circuit takes a function $f$ as input and outputs $f(x)$.
Let $s\seteq |f|$.
We construct a $1$-bounded non-adaptive certified everlasting secure FE scheme $\Sigma_{\mathsf{cefe}}=(\Setup,\keygen,\qEnc,\qDec,\qDelete,\Vrfy)$
from a certified everlasting secure garbling scheme $\Sigma_{\mathsf{cegc}}=\mathsf{GC}.(\Setup,\qGarble,\qEval,\qDelete,\Vrfy)$~(\cref{def:cert_ever_garbled})
and a certified everlasting secure PKE scheme $\Sigma_{\mathsf{cepk}}=\PKE.(\keygen,\qEnc,\qDec,\qDelete,\Vrfy)$~(\cref{def:cert_ever_pke}).

\begin{description}
\item[$\Setup(1^\secp)$:] $ $
\begin{itemize}
    \item Generate $(\pke.\pk_{i,\alpha},\pke.\sk_{i,\alpha})\la\PKE.\keygen(1^\secp)$ for every $i\in [s]$ and $\alpha\in\bit$.
    \item Output $\MPK\seteq \{\pke.\pk_{i,\alpha}\}_{i\in[s],\alpha\in\{0,1\}}$ and $\MSK\seteq \{\pke.\sk_{i,\alpha}\}_{i\in[s],\alpha\in\bit}$.
\end{itemize}
\item[$\keygen(\MSK,f)$:] $ $
\begin{itemize}
    \item Parse $\MSK=\{\pke.\sk_{i,\alpha}\}_{i\in[s],\alpha\in\bit}$ and $f=(f_1,\cdots,f_s)$.
    \item Output $\sk_f\seteq (f,\{\pke.\sk_{i,f[i]}\}_{i\in[s]})$.
\end{itemize}
\item[$\qEnc(\MPK,m)$:]$ $
\begin{itemize}
    \item Parse $\MPK=\{\pke.\pk_{i,\alpha}\}_{i\in[s],\alpha\in\bit}$.
    \item Compute $\{L_{i,\alpha}\}_{i\in[s],\alpha\in\bit}\la\mathsf{GC}.\Setup(1^\secp)$.
    \item Compute $(\qugate,\mathsf{gc}.\vk)\la\mathsf{GC}.\qGarble(1^\secp,U(\cdot,m),\{L_{i,\alpha}\}_{i\in[s],\alpha\in\bit})$.
    \item For every $i\in[s]$ and $\alpha\in\bit$, compute $(\pke.\vk_{i,\alpha},\pke.\qct_{i,\alpha})\la\PKE.\qEnc(\pke.\pk_{i,\alpha},L_{i,\alpha})$.
    \item Output $\vk\seteq (\mathsf{gc}.\vk,\{\pke.\vk_{i,\alpha}\}_{i\in[s],\alpha\in\bit})$
    and $\qct\seteq (\qugate,\{\pke.\qct_{i,\alpha}\}_{i\in[s],\alpha\in\bit})$.
\end{itemize}
\item[$\qDec(\sk_f,\qct)$:] $ $
\begin{itemize}
    \item Parse $\sk_f=(f,\{\pke.\sk_i\}_{i\in[s]})$ and $\qct=(\qugate,\{\pke.\qct_{i,\alpha}\}_{i\in[s],\alpha\in\bit})$.
    \item For every $i\in [s]$, compute $L_i,\la\PKE.\qDec(\pke.\sk_{i},\pke.\qct_{i,f[i]})$.
    \item Compute $y\la \mathsf{GC}.\qEval(\qugate,\{L_i\}_{i\in[s]})$.
    \item Output $y$.
\end{itemize}
\item[$\qDelete(\qct)$:] $ $
\begin{itemize}
    \item Parse $\qct=(\qugate,\{\pke.\qct_{i,\alpha}\}_{i\in[s],\alpha\in\bit})$.
    \item Compute $\mathsf{gc}.\cert\la\mathsf{GC}.\qDelete(\qugate)$.
    \item For every $i\in[s]$ and $\alpha\in\bit$, compute $\pke.\cert_{i,\alpha}\la\PKE.\qDelete(\pke.\ct_{i,\alpha})$.
    \item Output $\cert\seteq (\mathsf{gc}.\cert,\{\pke.\cert_{i,\alpha}\}_{i\in[s],\alpha\in\bit})$.
\end{itemize}
\item[$\Vrfy(\vk,\cert)$:] $ $
\begin{itemize}
    \item Parse $\vk=(\mathsf{gc}.\vk,\{\pke.\vk_{i,\alpha}\}_{i\in[s],\alpha\in\bit})$ and $\cert=(\mathsf{gc}.\cert,\{\pke.\cert_{i,\alpha}\}_{i\in[s],\alpha\in\bit})$.
    \item Output $\top$ if $\top\la\mathsf{GC}.\Vrfy(\mathsf{gc}.\vk,\mathsf{gc}.\cert)$ and $\top\la\PKE.\Vrfy(\pke.\vk_{i,\alpha},\pke.\cert_{i,\alpha})$ for every $i\in[s]$ and $\alpha\in\bit$.
    Otherwise, output $\bot$.
\end{itemize}
\end{description}

\paragraph{Correctness:}
Correctness easily follows from that of $\Sigma_{\mathsf{cegc}}$ and $\Sigma_{\mathsf{cepk}}$.

\paragraph{Security:}
The following two theorems hold.
\begin{theorem}\label{thm:func_non_ad_seucirty}
If $\Sigma_{\mathsf{cegc}}$ satisfies the selective security~(\cref{def:sel_sec_ever_garb}) and $\Sigma_{\mathsf{cepk}}$ satisfies the IND-CPA security~(\cref{def:IND-CPA_security_cert_ever_pke}),
$\Sigma_{\mathsf{cefe}}$ satisfies the $1$-bounded non-adaptive simulation-security~(\cref{def:non_ada_security_cert_ever_func_sim}).
\end{theorem}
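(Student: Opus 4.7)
The plan is to adapt the classical Sahai--Seyalioglu argument \cite{CCS:SahSey10} to the certified everlasting setting. Since the non-adaptive simulation experiment is against QPT adversaries (with no deletion phase in the winning condition), we can work with the IND-CPA security of $\Sigma_{\mathsf{cepk}}$ and the selective security of $\Sigma_{\mathsf{cegc}}$ in a blackbox manner. The simulator $\qSim$ receives $(\MPK,\mathcal{V}=\{f,\sk_f,y=f(m)\},1^{|m|})$, where $\sk_f=(f,\{\pke.\sk_{i,f[i]}\}_{i\in[s]})$. It invokes the garbling simulator of $\Sigma_{\mathsf{cegc}}$ on the circuit parameters of $U(\cdot,m)$ together with the output $y$ to obtain a pair $(\widetilde{\qugate},\{\tilde L_i\}_{i\in[s]})$. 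It then encrypts $\tilde L_i$ under $\pke.\pk_{i,f[i]}$ and the zero string $0^{|L|}$ under $\pke.\pk_{i,1-f[i]}$, and outputs the resulting tuple as $\qct$.

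I would prove indistinguishability via the following sequence of hybrids, where $f$ is the (unique) function queried before the challenge.

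\begin{description}
\item[$\mathsf{Hyb}_0$:] The real experiment $\expc{\Sigma_{\mathsf{cefe}},\qA}{non}{ada}{sim}(\secp,0)$.
\item[$\mathsf{Hyb}_{1,i}$ for $i=0,\dots,s$:] Identical to $\mathsf{Hyb}_0$, except that for each $i'\leq i$, the ciphertext component $\pke.\qct_{i',1-f[i']}$ encrypts $0^{|L|}$ instead of $L_{i',1-f[i']}$. Note that $\mathsf{Hyb}_{1,0}=\mathsf{Hyb}_0$.
\item[$\mathsf{Hyb}_2$:] Identical to $\mathsf{Hyb}_{1,s}$, except that $(\qugate,\{L_{i,f[i]}\}_{i\in [s]})$ is replaced by the simulated pair $(\widetilde{\qugate},\{\tilde L_i\}_{i\in[s]})$ obtained from the garbling simulator fed with the public parameters of $U(\cdot,m)$ and output $y=f(m)$. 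The ciphertexts under $\pke.\pk_{i,1-f[i]}$ remain encryptions of $0^{|L|}$.
\end{description}

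I would argue that $\mathsf{Hyb}_2$ is exactly the simulated experiment $\expc{\Sigma_{\mathsf{cefe}},\qA}{non}{ada}{sim}(\secp,1)$, since the ciphertext in $\mathsf{Hyb}_2$ depends on $m$ only through $f(m)$ and the public parameters of $U(\cdot,m)$ (which are a function of $|m|$). Indistinguishability between $\mathsf{Hyb}_{1,i-1}$ and $\mathsf{Hyb}_{1,i}$ follows from IND-CPA security of $\Sigma_{\mathsf{cepk}}$ under the key $\pke.\pk_{i,1-f[i]}$: the reduction embeds the IND-CPA challenge in the $i$-th slot using plaintexts $(L_{i,1-f[i]},0^{|L|})$, generates the remaining keys honestly (so it can answer the single key query for $f$ using $\{\pke.\sk_{i,f[i]}\}_{i\in[s]}$, none of which coincide with the challenge secret key), and simulates the rest of the experiment. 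Indistinguishability between $\mathsf{Hyb}_{1,s}$ and $\mathsf{Hyb}_2$ follows from the selective security of $\Sigma_{\mathsf{cegc}}$, whose challenger supplies $(\qugate,\{L_{i,f[i]}\}_{i\in[s]})$ in the real case and $(\widetilde{\qugate},\{\tilde L_i\}_{i\in[s]})$ in the simulated case; the reduction generates the PKE keys itself, uses $\{L_{i,f[i]}\}$ returned by the challenger to form the $\pke.\pk_{i,f[i]}$-ciphertexts, and places $0^{|L|}$ under the other public keys.

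The main obstacle I expect is purely bookkeeping around the syntax of the garbling simulator: the selective security of $\Sigma_{\mathsf{cegc}}$ (Definition \ref{def:sel_sec_ever_garb}) is phrased with respect to a circuit--input pair, so I need to frame the reduction so that the circuit is $U(\cdot,m)$ and the selected input is the queried function $f$, and ensure that the labels $\{L_{i,f[i]}\}$ returned by the garbling challenger are exactly the ones that should be encrypted under $\pke.\pk_{i,f[i]}$. This is unproblematic because the challenger is non-adaptive and $f$ is fixed before the challenge ciphertext is formed, so the reduction knows which labels it needs from the garbling challenger. Once the reductions are stated this way, the chain
\[
\mathsf{Hyb}_0 \approx_c \mathsf{Hyb}_{1,1} \approx_c \cdots \approx_c \mathsf{Hyb}_{1,s} \approx_c \mathsf{Hyb}_2
\]
yields the claimed $1$-bounded non-adaptive simulation security with a standard $s+1$-term hybrid loss.
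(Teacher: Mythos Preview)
Your proposal is essentially the same as the paper's (implied) proof: the paper's argument for the companion everlasting theorem uses the identical two-step hybrid (first replace the off-path PKE ciphertexts, then swap real garbling for simulated garbling), differing only cosmetically in that the paper replaces the off-path labels with fresh random labels $L^*_{i,f[i]\oplus 1}$ rather than $0^{|L|}$ and collapses your $s$ IND-CPA hybrids into a single step via a multi-challenge lemma.

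One small omission: your simulator description assumes $\mathcal{V}$ always contains a queried function $f$, but in the $1$-bounded experiment the adversary may make \emph{zero} key queries before the challenge, so $\mathcal{V}=\emptyset$; the paper's simulator handles this by sampling $f\leftarrow\{0,1\}^s$ uniformly in that case, and you should do the same (otherwise $\qSim$ is undefined on a legal input). Also note that in this paper's garbling syntax the simulator $\qSim$ \emph{receives} the active labels $\{L_{i,f[i]}\}$ as input (they are sampled by $\mathsf{GC}.\Setup$) rather than producing them, so your FE simulator should run $\mathsf{GC}.\Setup$ itself and feed the resulting labels into $\mathsf{GC}.\qSim$; your reduction to selective security already reflects this correctly, but the simulator description should match.
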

Its proof is similar to that of \cref{thm:func_ever_non_ad_security}, and therefore we omit it.
\begin{theorem}\label{thm:func_ever_non_ad_security}
If $\Sigma_{\mathsf{cegc}}$ satisfies the selective certified everlasting security~(\cref{def:ever_sec_ever_garb}) and $\Sigma_{\mathsf{cepk}}$ satisfies the certified everlasting IND-CPA security~(\cref{def:cert_ever_security_cert_ever_pke}),
$\Sigma_{\mathsf{cefe}}$ satisfies the $1$-bounded non-adaptive certified everlasting simulation-security~(\cref{def:ever_non_ada_security_cert_ever_func_sim}).
\end{theorem}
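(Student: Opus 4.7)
The plan is to prove this by a single hybrid transition from the real experiment ($b=0$) to the simulated experiment ($b=1$), reducing to the selective certified everlasting security of $\Sigma_{\mathsf{cegc}}$. The simulator $\qSim$, on input $(\MPK, \mathcal{V} = \{y, f, \sk_f\}, 1^{|m|})$, first invokes the garbling simulator of $\Sigma_{\mathsf{cegc}}$ to obtain $(\qugate_{\mathsf{sim}}, \mathsf{gc}.\vk_{\mathsf{sim}}, L'_1, \ldots, L'_s)$ for input $f$ with output $y = U(f, m) = f(m)$. It sets $L_{i, f[i]} := L'_i$, samples $L_{i, 1-f[i]}$ uniformly at random for each $i \in [s]$, encrypts each $L_{i,\alpha}$ under $\pke.\pk_{i,\alpha}$ using $\PKE.\qEnc$, and outputs $\qct = (\qugate_{\mathsf{sim}}, \{\pke.\qct_{i,\alpha}\})$ together with $\vk = (\mathsf{gc}.\vk_{\mathsf{sim}}, \{\pke.\vk_{i,\alpha}\})$. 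The crucial observation is that in the real game $\mathsf{GC}.\Setup$ samples all $2s$ labels uniformly and independently, so the ``unused'' labels $L_{i, 1-f[i]}$ already have the same distribution in both worlds; no separate hybrid step is needed to switch their PKE ciphertexts to encryptions of zero, which also sidesteps the awkward issue that such a PKE reduction would need to know $f$ before choosing $\MPK$.

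The reduction to a selective certified everlasting garbling distinguisher proceeds as follows. It generates every PKE key pair $\{(\pke.\pk_{i,\alpha}, \pke.\sk_{i,\alpha})\}$ itself, sends $\MPK$ to $\qA_1$, and responds to $\qA_1$'s single non-adaptive key query $f$ with $\sk_f = (f, \{\pke.\sk_{i, f[i]}\})$. Upon receiving the message $m$, it commits selectively to the input $f$ and queries the external garbling challenger on the circuit $U(\cdot, m)$, receiving either $(\qugate, \mathsf{gc}.\vk, \{L_{i, f[i]}\})$ or $(\qugate_{\mathsf{sim}}, \mathsf{gc}.\vk_{\mathsf{sim}}, \{L'_i\})$. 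It samples fresh $L_{i, 1-f[i]}$, encrypts all $2s$ labels locally with $\PKE.\qEnc$, and hands the resulting $\qct$ to $\qA_1$. When $\qA_1$ returns $\cert = (\mathsf{gc}.\cert, \{\pke.\cert_{i,\alpha}\})$, the reduction verifies the PKE portions locally using its own $\{\pke.\vk_{i,\alpha}\}$ and forwards $\mathsf{gc}.\cert$ to the external garbling challenger. If every check succeeds, it passes $\MSK = \{\pke.\sk_{i,\alpha}\}$ (together with whatever the external everlasting interface discloses) to $\qA_2$ and outputs $\qA_2$'s guess.

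The main obstacle will be tracking the certified everlasting bookkeeping across the reduction: we must confirm that (i) $\Vrfy$ accepts exactly when the external garbling verification accepts and every locally-checked $\pke.\cert_{i,\alpha}$ verifies, and (ii) the $\MSK$ delivered to $\qA_2$ in the reduction is distributed identically to the one delivered in the true experiment. Both hold by construction, because the FE verification factors as the conjunction of independent checks and $\MSK$ is assembled from the reduction's own PKE secret keys. Given these points, indistinguishability of the two experiments for both the QPT $\qA_1$ and the unbounded $\qA_2$ follows immediately from the selective certified everlasting security of $\Sigma_{\mathsf{cegc}}$; since the transformed experiment coincides with the $b=1$ experiment by inspection of $\qSim$, this completes the argument.
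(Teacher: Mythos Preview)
Your single-hybrid argument has a genuine gap. You claim that because $\mathsf{GC}.\Setup$ samples all $2s$ labels uniformly, the unused labels $L_{i,1-f[i]}$ have the same distribution in both worlds and so no PKE step is needed. Marginally this is true, but the \emph{joint} distribution of $(\qugate,\{L_{i,1-f[i]}\})$ is not the same: in the real experiment $\qugate$ is computed by $\mathsf{GC}.\qGarble$ using \emph{all} $2s$ labels, so it is correlated with the very labels that get encrypted under $\pke.\pk_{i,1-f[i]}$. In your reduction, the external garbling challenger hides its own $L_{i,1-f[i]}$ from you, so when its coin is $b=0$ you hand $\qA_1$ the real garbled circuit together with PKE encryptions of \emph{fresh} labels that played no role in garbling. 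That is not the real FE experiment; it is already an intermediate hybrid. Consequently your reduction only shows that this intermediate hybrid is close to the simulated experiment, and the gap between the real experiment and the intermediate hybrid---where an unbounded $\qA_2$ holding $\MSK$ can decrypt and compare the unused labels against what is baked into $\qugate$---remains unaccounted for.

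The paper closes this gap with an explicit first hop ($\hyb_0\to\hyb_1$) that uses the certified everlasting IND-CPA security of $\Sigma_{\mathsf{cepk}}$ to replace the encryptions of the genuine unused labels by encryptions of fresh random strings $L^*_{i,f[i]\oplus 1}$; only after this decoupling does the garbling reduction (your argument, essentially) go through. The ``awkward issue'' you flag---that the PKE reduction must learn $f$ only after $\MPK$ is fixed---is handled by a tailored multi-key lemma (\cref{lem:cut_and_choose_pke}) in which the adversary first sees all $2s$ public keys, then announces $f$, receives $\{\pke.\sk_{i,f[i]}\}$, and is challenged on the $f[i]\oplus 1$ slots. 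So the PKE hybrid is both necessary and feasible; your proposal needs to reinstate it.
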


\begin{proof}[Proof of \cref{thm:func_ever_non_ad_security}]
Let us describe how the simulator $\qSim$ works.
\begin{description}
\item[$\qSim(\MPK,\mathcal{V},1^{|m|})$:]$ $
\begin{enumerate}
    \item Parse $\MPK=\{\pke.\pk_{i,\alpha}\}_{i\in[s],\alpha\in\bit}$ and $\mathcal{V}=\{f(m),f,(f,\{\pke.\sk_{i,f[i]}\}_{i\in[s]})\} \,\, or \,\,\emptyset$.
    \item If $\mathcal{V}=\emptyset$, generate $f\la\bit^s$.
    \item Generate $\{L_{i,\alpha}\}_{i\in[s],\alpha\in\bit}\la\mathsf{GC}.\Setup(1^\secp)$ and $L_{i,f[i]\oplus1}^*\la\mathcal{L}$ for every $i\in[s]$.
    \item Compute $(\qugate,\mathsf{gc}.\vk)\la\mathsf{GC}.\qSim(1^\secp,1^{|f|},U(f,m),\{L_{i,f[i]}\}_{i\in[s]})$.
    \item Compute $(\pke.\vk_{i,f[i]},\pke.\qct_{i,f[i]})\la \PKE.\qEnc(\pke.\pk_{i,f[i]},L_{i,f[i]})$ and $(\pke.\vk_{i,f[i]\oplus 1},\pke.\qct_{i,f[i]\oplus 1})\la \PKE.\qEnc(\pke.\pk_{i,f[i]\oplus 1},L^*_{i,f[i]\oplus 1})$ for every $i\in[s]$.
    \item Output $\vk\seteq (\mathsf{gc}.\vk,\{\pke.\vk_{i,\alpha}\}_{i\in[s],\alpha\in\bit})$ and $\qct\seteq (\qugate,\{\pke.\qct_{i,\alpha}\}_{i\in[s],\alpha\in\bit})$.
\end{enumerate}
\end{description}
Let us define the sequence of hybrids as follows.

\begin{description}
\item [$\hyb_{0}$:] This is identical to $\expd{\Sigma_{\mathsf{cefe}},\qA}{cert}{ever}{non}{adapt}(\secp,0)$.
\begin{enumerate}
    \item The challenger generates $(\pke.\pk_{i,\alpha},\pke.\sk_{i,\alpha})\la\PKE.\keygen(1^\secp)$ for every $i\in [s]$ and $\alpha\in\bit$, and sends
    $\{\pke.\pk_{i,\alpha}\}_{i\in[s],\alpha\in\bit}$ to $\qA_1$.
    \item\label{step:function_query_single_fe}
    $\qA_1$ is allowed to call a key query at most one time.
    If a key query is called,
    the challenger receives an function $f$ from $\qA_1$, and sends $(f,\{\pke.\sk_{i,f[i]}\}_{i\in[s]})$ to $\qA_1$.
    \item $\qA_1$ chooses $m\in\Ms$, and sends $m$ to the challenger.
    \item \label{step:encryption_single_non_ad_fe}
    The challenger computes $\{L_{i,\alpha}\}_{i\in[s],\alpha\in\bit}\la\mathsf{GC}.\Setup(1^\secp)$, $(\qugate,\mathsf{gc}.\vk)\la\mathsf{GC}.\qGarble(1^\secp,U(\cdot,m),\allowbreak \{L_{i,\alpha}\}_{i\in[s],\alpha\in\bit})$, and
    $(\pke.\vk_{i,\alpha},\pke.\qct_{i,\alpha})\la\PKE.\qEnc(\pke.\pk_{i,\alpha},L_{i,\alpha})$ for every $i\in[s]$ and $\alpha\in\bit$,
    and sends $(\qugate,\{\pke.\qct_{i,\alpha}\}_{i\in[s],\alpha\in\bit})$ to $\qA_1$.
    \item $\qA_1$ sends $(\mathsf{gc}.\cert,\{\pke.\cert_{i,\alpha}\}_{i\in[s],\alpha\in\bit})$ to the challenger,
    and sends its internal state to $\qA_2$.
    \item If $\top\la \mathsf{GC}.\Vrfy(\mathsf{gc}.\vk,\mathsf{gc}.\cert)$, and $\top\la\PKE.\Vrfy(\pke.\vk_{i,\alpha},\pke.\cert_{i,\alpha})$ for every $i\in[s]$ and $\alpha\in\bit$, the challenger outputs $\top$, and  sends $\{\pke.\sk_{i,\alpha}\}_{i\in[s],\alpha\in\bit}$ to $\qA_2$.
    Otherwise, the challenger outputs $\bot$, and sends $\bot$ to $\qA_2$.
    \item $\qA_2$ outputs $b'$. If the challenger outputs $\top$, the output of the experiment is $b'$.
    Otherwise, the output of the experiment is $\bot$.
\end{enumerate}
\item[$\hyb_{1}$:]This is identical to $\hyb_{0}$ except for the following four points.
First, the challenger generates $f\in\bit^s$ if a key query is not called in step~\ref{step:function_query_single_fe}.
Second, the challenger randomly generates $L^*_{i,f[i]\oplus 1}\la \mathcal{L}$ for every $i\in[s]$
and $\{L_{i,\alpha}\}_{i\in[s],\alpha\in\bit}\la\mathsf{GC}.\Setup(1^\secp)$ in step~\ref{step:function_query_single_fe} regardless of whether a key query is called or not.
Third, the challenger does not compute $\{L_{i,\alpha}\}_{i\in[s],\alpha\in\bit}\la\mathsf{GC}.\Setup(1^\secp)$ in step~\ref{step:encryption_single_non_ad_fe}.
Fourth, 
the challenger computes 
$(\pke.\vk_{i,f[i]\oplus 1},\pke.\qct_{i,f[i]\oplus 1})\la\PKE.\qEnc(\pke.\pk_{i,f[i]\oplus 1},L^*_{i,f[i]\oplus 1})$ for every $i\in[s]$ instead of computing 
$(\pke.\vk_{i,f[i]\oplus 1},\allowbreak\pke.\qct_{i,f[i]\oplus 1})\la\PKE.\qEnc(\pke.\pk_{i,f[i]\oplus 1},L_{i,f[i]\oplus 1})$ for every $i\in[s]$.

\item[$\hyb_{2}$:]This is identical to $\hyb_{1}$ except for the following point.
The challenger computes $(\qugate,\mathsf{gc}.\vk)\la\GC.\qSim(1^\secp,1^{|f|},\allowbreak U(f,m), \{L_{i,f[i]}\}_{i\in[s]})$
instead of computing $(\qugate,\mathsf{gc}.\vk)\la\mathsf{GC}.\qGarble(1^\secp,U(\cdot,m),\{L_{i,\alpha}\}_{i\in[s],\alpha\in\bit})$.
\end{description}
From the definition of $\expd{\Sigma_{\mathsf{cefe}},\qA}{cert}{ever}{non}{adapt}(\secp, b)$ and $\qSim$,
it is clear that $\Pr[\hyb_{0}=1]=\Pr[\expd{\Sigma_{\mathsf{cefe}},\qA}{cert}{ever}{non}{adapt}(\secp,0)=1]$ and
$\Pr[\hyb_{2}=1]=\Pr[\expd{\Sigma_{\mathsf{cefe}},\qA}{cert}{ever}{non}{adapt}(\secp,1)=1]$.
Therefore, \cref{thm:func_ever_non_ad_security} easily follows from the following \cref{prop:Exp_Hyb_1_non_ad_fe,prop:Hyb_1_Hyb_2_non_ad_fe}. (whose proof is given later.)
\end{proof}

\begin{proposition}\label{prop:Exp_Hyb_1_non_ad_fe}
If $\Sigma_{\mathsf{cepk}}$ satisfies the certified everlasting IND-CPA security,
\begin{align}
    \abs{\Pr[\hyb_{0}=1]-\Pr[\hyb_{1}=1]}\leq\negl(\lambda).
\end{align}
\end{proposition}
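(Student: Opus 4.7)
The plan is to establish the claim via a standard hybrid argument over the $s$ indices, switching one PKE-encrypted label at a time, using the certified everlasting IND-CPA security of $\Sigma_{\mathsf{cepk}}$ as the pivot. Concretely, I would define intermediate hybrids $\hyb_{0,k}$ for $k \in \{0,1,\dots,s\}$, where $\hyb_{0,0} = \hyb_0$ and $\hyb_{0,s} = \hyb_1$: in $\hyb_{0,k}$, the challenger samples $f \gets \bit^s$ as in $\hyb_1$ (generating it freshly if no key query was made), samples $L^*_{i,f[i]\oplus 1} \gets \calL$ for all $i \in [k]$, and computes $(\pke.\vk_{i,f[i]\oplus 1}, \pke.\qct_{i,f[i]\oplus 1}) \la \PKE.\qEnc(\pke.\pk_{i,f[i]\oplus 1}, L^*_{i,f[i]\oplus 1})$ for $i \in [k]$, while for $i \in [k+1, s]$ the challenger continues to use the honest label $L_{i,f[i]\oplus 1}$ sampled by $\mathsf{GC}.\Setup$. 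It suffices to show $|\Pr[\hyb_{0,k-1}=1] - \Pr[\hyb_{0,k}=1]| \le \negl(\secp)$ for each $k \in [s]$.

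For the single-step indistinguishability, I would construct a reduction $\cB$ against the certified everlasting IND-CPA security of $\Sigma_{\mathsf{cepk}}$. The reduction receives $\pke.\pk^*$ from its PKE challenger and embeds it as $\pke.\pk_{k, f[k]\oplus 1}$; it generates all remaining PKE keys itself, runs $\mathsf{GC}.\Setup$ to obtain the genuine labels $\{L_{i,\alpha}\}$, samples fresh $L^*_{i,f[i]\oplus 1}$ for $i < k$, and submits the challenge pair $(L_{k,f[k]\oplus 1}, L^*_{k,f[k]\oplus 1})$ to its PKE challenger to obtain the target ciphertext. All other ciphertexts and the garbled circuit are produced honestly, and the key query $(f,\{\pke.\sk_{i,f[i]}\}_{i\in[s]})$ is answerable because $\cB$ knows every secret key except $\pke.\sk_{k, f[k]\oplus 1}$, which is never required for that query. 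The deletion certificate forwarded by $\qA$ is likewise decomposable, with the $(k, f[k]\oplus 1)$-component forwarded to the PKE challenger; if that challenger returns $\top$ and the remaining verifications succeed locally, $\cB$ returns the internal states to $\qA_2$ and, crucially, sends $\qA_2$ the full $\MSK$ by pairing the simulated $\pke.\sk_{k,f[k]\oplus 1}$ delivered by its own PKE challenger with the locally known secret keys.

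The main subtlety, and what forces us to rely on the certified everlasting flavor rather than plain IND-CPA, is precisely this last step: in the FE experiment $\qA_2$ receives $\MSK$ after a valid certificate is verified, and $\MSK$ contains $\pke.\sk_{k, f[k]\oplus 1}$, the very key whose semantic security we are invoking. Certified everlasting IND-CPA guarantees that the adversary's residual state remains statistically close to the two challenges even once the PKE secret key is revealed after a valid certificate, so the reduction faithfully translates its advantage into a certified everlasting IND-CPA distinguishing advantage. Summing over the $s$ hybrid steps then yields $|\Pr[\hyb_0=1]-\Pr[\hyb_1=1]| \le s\cdot \negl(\secp) = \negl(\secp)$, completing the proof.
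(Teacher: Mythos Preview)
Your approach is essentially the one the paper takes: the paper packages the $s$-step hybrid into a multi-instance lemma (\cref{lem:cut_and_choose_pke}) and invokes it once, while you unroll that lemma inline. The overall structure and the use of certified everlasting security to justify handing $\qA_2$ the full $\MSK$ are both right.

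There is, however, one concrete gap in your reduction. You write that $\cB$ ``embeds $\pke.\pk^*$ as $\pke.\pk_{k,f[k]\oplus 1}$,'' but in the FE experiment $\qA_1$ chooses $f$ only \emph{after} receiving $\MPK=\{\pke.\pk_{i,\alpha}\}_{i,\alpha}$. Your reduction must commit to the embedding slot before $f[k]$ is known, and it cannot answer the key query if the adversary happens to ask for $\pke.\sk$ at exactly the embedded slot. The fix (which is what the paper's lemma does internally) is to have $\cB$ guess $\beta\gets\bit$, place $\pke.\pk^*$ at slot $(k,\beta)$, and abort if $\qA_1$'s query has $f[k]=\beta$; since the two public keys at index $k$ are identically distributed, the guess is correct with probability $1/2$, and the factor-of-$2$ loss is absorbed into $\negl$. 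With that adjustment your argument is complete.
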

\begin{proposition}\label{prop:Hyb_1_Hyb_2_non_ad_fe}
If $\Sigma_{\mathsf{cegc}}$ satisfies the certified everlasting selective security,
\begin{align}
    \abs{\Pr[\hyb_{1}=1]-\Pr[\hyb_{2}=1]}\leq\negl(\lambda).
\end{align}
\end{proposition}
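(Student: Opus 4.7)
\textbf{Proof plan for Proposition \ref{prop:Hyb_1_Hyb_2_non_ad_fe}.} The only difference between $\hyb_1$ and $\hyb_2$ is whether $\qugate$ is produced by the real garbling algorithm $\mathsf{GC}.\qGarble(1^\secp, U(\cdot,m), \{L_{i,\alpha}\}_{i\in[s],\alpha\in\bit})$ or by the simulator $\mathsf{GC}.\qSim(1^\secp, 1^{\abs{f}}, U(f,m), \{L_{i,f[i]}\}_{i\in[s]})$. Crucially, in both hybrids the labels on the "wrong" side $f[i]\oplus 1$ are replaced by independently sampled $L^*_{i,f[i]\oplus 1}$, so the ciphertexts $\pke.\qct_{i,f[i]\oplus 1}$ never depend on the actual labels $L_{i,f[i]\oplus 1}$ generated by $\mathsf{GC}.\Setup$. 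The plan is therefore a direct reduction to the certified everlasting selective security of $\Sigma_{\mathsf{cegc}}$ (\cref{def:ev_sec_ever_garb}).

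I would build a two-stage adversary $\qB=(\qB_1,\qB_2)$ against $\Sigma_{\mathsf{cegc}}$ that internally runs the FE adversary $(\qA_1,\qA_2)$. First, $\qB_1$ honestly samples all PKE key pairs $(\pke.\pk_{i,\alpha},\pke.\sk_{i,\alpha})$, forwards the public keys to $\qA_1$, and answers the (at most one) FE key query: if $\qA_1$ submits $f$, use it; otherwise, sample $f\gets \bit^s$ as in both hybrids. When $\qA_1$ outputs the challenge message $m$, $\qB_1$ commits to $(U(\cdot,m), f)$ in its own selective garbling challenge. It receives from its challenger the pair $(\qugate,\mathsf{gc}.\vk)$ together with the authorized labels $\{L_{i,f[i]}\}_{i\in[s]}$. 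Then $\qB_1$ samples $L^*_{i,f[i]\oplus 1}\gets \mcl{L}$ on its own, computes $(\pke.\vk_{i,f[i]},\pke.\qct_{i,f[i]})\la \PKE.\qEnc(\pke.\pk_{i,f[i]},L_{i,f[i]})$ and $(\pke.\vk_{i,f[i]\oplus 1},\pke.\qct_{i,f[i]\oplus 1})\la \PKE.\qEnc(\pke.\pk_{i,f[i]\oplus 1},L^*_{i,f[i]\oplus 1})$, and hands $(\qugate,\{\pke.\qct_{i,\alpha}\}_{i,\alpha})$ to $\qA_1$. This is a perfect simulation of $\hyb_1$ (resp.\ $\hyb_2$) when the garbling challenger answers with the real (resp.\ simulated) $\qugate$.

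When $\qA_1$ outputs the certificate $(\mathsf{gc}.\cert,\{\pke.\cert_{i,\alpha}\}_{i,\alpha})$, $\qB_1$ verifies the PKE parts locally using the $\pke.\vk_{i,\alpha}$ it knows, and submits $\mathsf{gc}.\cert$ as its own certificate to the garbling challenger. If all PKE verifications pass and the garbling challenger declares its verification successful, then the joint verification $\Vrfy(\vk,\cert)$ in the FE game succeeds; $\qB_1$ passes $\qA_1$'s internal state together with the classical values $(\{\pke.\sk_{i,\alpha}\}_{i,\alpha}, f, \{L^*_{i,f[i]\oplus 1}\}_i)$ to $\qB_2$. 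The unbounded $\qB_2$ then receives from its challenger whatever side information the certified everlasting garbling experiment reveals (e.g.\ the labels it is entitled to after successful deletion), assembles $\MSK=\{\pke.\sk_{i,\alpha}\}_{i,\alpha}$, forwards it to $\qA_2$, and finally outputs $\qA_2$'s guess $b'$.

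By construction, the distribution produced by $\qB$ is identical to $\hyb_1$ (real branch of the garbling game) or $\hyb_2$ (simulated branch), including the conditional output of $\bot$ when verification fails. Hence
\begin{equation*}
\abs{\Pr[\hyb_1=1]-\Pr[\hyb_2=1]} \le \advb{\Sigma_{\mathsf{cegc}},\qB}{cert}{ever}{sel}(\secp) \le \negl(\secp).
\end{equation*}
The only slightly delicate step is the second stage: the $\qA_2$ attacker in the FE experiment is unbounded and receives $\MSK$, so one must ensure that $\qB_2$ can likewise be unbounded and that the information it reveals to $\qA_2$ (in particular the PKE secret keys and the labels on both sides) is simulatable without contradicting the garbling game's restrictions. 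This works because the "unauthorized" ciphertext slots encrypt freshly sampled $L^*_{i,f[i]\oplus 1}$ rather than the genuine garbling labels, so $\qB_2$ can produce $\MSK$ entirely from values $\qB_1$ already chose itself, and no information about $L_{i,f[i]\oplus 1}$ needs to leak from the external garbling challenger. This is where I expect the main bookkeeping effort in the formal proof.
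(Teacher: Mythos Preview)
Your proposal is correct and follows essentially the same reduction as the paper: simulate the FE hybrids using the garbling challenger for $(\qugate,\{L_{i,f[i]}\})$, while producing all PKE material and the independent dummy labels $L^*_{i,f[i]\oplus 1}$ yourself, then forward $\mathsf{gc}.\cert$ to the challenger and handle the PKE verifications locally. Two small inaccuracies to clean up: in the certified everlasting garbling experiment (\cref{def:ever_sec_ever_garb}) the challenger does \emph{not} hand you $\mathsf{gc}.\vk$, only $(\qugate,\{L_{i,f[i]}\})$, and after a successful deletion it sends merely $\top$ (no extra labels); neither point affects your argument, since you already rely on the challenger to verify $\mathsf{gc}.\cert$ and since $\qB$ builds $\MSK$ entirely from keys it sampled itself.
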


\begin{proof}[Proof of \cref{prop:Exp_Hyb_1_non_ad_fe}]

For the proof, we use \cref{lem:cut_and_choose_pke} whose statement and proof is given in \cref{sec:const_rnce_classic}.
We assume that $\abs{\Pr[\hyb_{0}=1]-\Pr[\hyb_{1}=1]}$ is non-negligible, and construct an adversary $\qB$ that breaks the security experiment of  $\expc{\Sigma_{\mathsf{cepk}},\qB}{multi}{cert}{ever}(\secp,b)$ defined in \cref{lem:cut_and_choose_pke}.
This contradicts the certified everlasting IND-CPA of $\Sigma_{\mathsf{cepk}}$ from \cref{lem:cut_and_choose_pke}.
Let us describe how $\qB$ works below.
\begin{enumerate}
    \item $\qB$ receives $\{\pke.\pk_{i,\alpha}\}_{i\in[s],\alpha\in\bit}$ from the challenger of $\expc{\Sigma_{\mathsf{cepk}},\qB}{multi}{cert}{ever}(\secp,b)$,
    and sends $\{\pke.\pk_{i,\alpha}\}_{i\in[s],\alpha\in\bit}$ to $\qA_1$.
    \item $\qA_1$ is allowed to call a key query at most one time.
    If a key query is called,
    $\qB$ receives an function $f$ from $\qA_1$, generates $L^*_{i,f[i]\oplus 1}\la \mathcal{L}$ for every $i\in[s]$ and  $\{L_{i,\alpha}\}_{i\in[s],\alpha\in\bit}\la\mathsf{GC}.\Setup(1^\secp)$.
    If a key query is not called, $\qB$ generates $f\la\bit^s$, $L^*_{i,f[i]\oplus 1}\la \mathcal{L}$ for every $i\in[s]$ and  $\{L_{i,\alpha}\}_{i\in[s],\alpha\in\bit}\la\mathsf{GC}.\Setup(1^\secp)$. 
    \item $\qB$ sends $(f,L_{1,f[1]\oplus 1},L_{2,f[2]\oplus 1},\cdots ,L_{s,f[s]\oplus 1},L^*_{1,f[1]\oplus 1},L^*_{2,f[2]\oplus 1},\cdots, L^{*}_{s,f[s]\oplus 1})$ to the challenger of $\expc{\Sigma_{\mathsf{cepk}},\qB}{multi}{cert}{ever}(\secp,b)$.
    \item $\qB$ receives $(\{\pke.\sk_{i,f[i]}\}_{i\in[s]},\{\pke.\qct_{i,f[i]\oplus 1}\}_{i\in[s]})$ from the challenger.
    If a key query is called, $\qB$ sends $(f,\{\pke.\sk_{i,f[i]}\}_{i\in[s]})$ to $\qA_1$.
    \item $\qA_1$ chooses $m\in\Ms$, and sends $m$ to $\qB$.
    \item $\qB$ computes
    $(\qugate,\mathsf{gc}.\vk)\la\mathsf{GC}.\qGarble(1^\secp,U(\cdot,m),\{L_{i,\alpha}\}_{i\in[s],\alpha\in\bit})$ and
    $(\pke.\vk_{i,f[i]},\pke.\qct_{i,f[i]})\la\PKE.\qEnc(\allowbreak \pke.\pk_{i,f[i]},\allowbreak L_{i,f[i]})$ for every $i\in[s]$, and sends $(\qugate,\{\pke.\qct_{i,\alpha}\}_{i\in[s],\alpha\in\bit})$ to $\qA_1$.
    \item $\qA_1$ sends $(\mathsf{gc}.\cert,\{\pke.\cert_{i,\alpha}\}_{i\in[s],\alpha\in\bit})$ to $\qB$, and sends its internal state to $\qA_2$.
    \item $\qB$ sends $\{\pke.\cert_{i,f[i]\oplus1}\}_{i\in[s]}$ to the challenger, and receives $\{\pke.\sk_{i,f[i]\oplus 1}\}_{i\in[s]}$ or $\bot$ from the challenger. If $\qB$ receives $\bot$ from the challenger, it outputs $\bot$ and aborts. 
    \item $\qB$ sends $\{\pke.\sk_{i,\alpha}\}_{i\in[s],\alpha\in\bit}$ to $\qA_2$.
    \item $\qA_2$ outputs $b'$.
    \item $\qB$ computes $\mathsf{GC}.\Vrfy$ for $\gc.\cert$ and $\PKE.\Vrfy$ for all $\{\pke.\cert_{i,f[i]}\}_{i\in[s]}$, and outputs $b'$ if all results are $\top$.
    Otherwise, $\qB$ outputs $\bot$. 
\end{enumerate}
It is clear that $\Pr[1\la\qB \mid b=0]=\Pr[\hyb_{0}=1]$ and 
$\Pr[1\la\qB \mid b=1]=\Pr[\hyb_{1}=1]$.
By assumption, $\allowbreak \abs{\Pr[\hyb_{0}=1]- \Pr[\hyb_{1}=1]}$ is non-negligible, and therefore $\abs{\Pr[1\la\qB \mid b=0]-\Pr[1\la\qB \mid b=1]}$ is non-negligible, which contradicts the certified everlasting IND-CPA security of $\Sigma_{\mathsf{cepk}}$ from \cref{lem:cut_and_choose_pke}.
\end{proof}

\begin{proof}[Proof of \cref{prop:Hyb_1_Hyb_2_non_ad_fe}]
We assume that $\abs{\Pr[\hyb_{1}=1]-\Pr[\hyb_{2}=1]}$ is non-negligible, and construct an adversary $\qB$ that breaks the selective certified everlasting security of $\Sigma_{\mathsf{cegc}}$.
Let us describe how $\qB$ works below.
\begin{enumerate}
    \item $\qB$ generates $(\pke.\pk_{i,\alpha},\pke.\sk_{i,\alpha})\la\PKE.\keygen(1^\secp)$ for every $i\in [s]$ and $\alpha\in\bit$, and sends
    $\{\pke.\pk_{i,\alpha}\}_{i\in[s],\alpha\in\bit}$ to $\qA_1$.
    \item $\qA_1$ is allowed to call a key query at most one time.
    If a key query is called, $\qB$ receives an function $f$ from $\qA_1$, generates $L^*_{i,f[i]\oplus 1}\la \mathcal{L}$ for every $i\in[s]$, and sends $(f,\{\pke.\sk_{i,f[i]}\}_{i\in[s]})$ to $\qA_1$.
    If a key query is not called, $\qB$ generates $f\la\bit^s$ and $L^*_{i,f[i]\oplus 1}\la \mathcal{L}$ for every $i\in[s]$.
    \item $\qA_1$ chooses $m\in\Ms$, and sends $m$ to $\qB$.
    \item $\qB$ sends a circuit $U(\cdot,m)$ and an input $f\in\bit^s$ to the challenger of $\expd{\qB,\Sigma_{\mathsf{cegc}}}{cert}{ever}{sel}{gbl}(1^\secp,b)$.
    \item The challenger computes $\{L_{i,\alpha}\}_{i\in[s],\alpha\in\bit}\la\mathsf{GC}.\Setup(1^\secp)$ and does the following:
    \begin{itemize}
        \item If $b=0$, the challenger computes $(\qugate,\mathsf{gc}.\vk)\la\mathsf{GC}.\qGarble(1^\secp,U(\cdot,m),\{L_{i,\alpha}\}_{i\in[s],\alpha\in\bit})$,
        and sends $(\qugate,\{L_{i,f[i]}\}_{i\in[s]})$ to $\qB$.
        \item If $b=1$, the challenger computes $(\qugate,\mathsf{gc}.\vk)\la\GC.\qSim(1^\secp,1^{|f|},U(f,m),\{L_{i,f[i]}\}_{i\in[s]})$,
        and sends $(\qugate,\{L_{i,f[i]}\}_{i\in[s]})$ to $\qB$.
    \end{itemize}
    \item $\qB$ computes $(\pke.\vk_{i,f[i]},\pke.\qct_{i,f[i]})\la\PKE.\qEnc(\pke.\pk_{i,f[i]},L_{i,f[i]})$ and \allowbreak$(\pke.\vk_{i,f[i]\oplus 1},\pke.\qct_{i,f[i]\oplus 1})\la\PKE.\qEnc(\pke.\pk_{i,f[i]\oplus 1},L^*_{i,f[i]\oplus 1}) $ for every $i\in[s]$.
    \item $\qB$ sends $(\qugate,\{\pke.\qct_{i,\alpha}\}_{i\in[s],\alpha\in\bit})$ to $\qA_1$.
    \item $\qA_1$ sends $(\mathsf{gc}.\cert,\{\pke.\cert_{i,\alpha}\}_{i\in[s],\alpha\in\bit})$ to the challenger, and sends its internal state to $\qA_2$.
    \item $\qB$ sends $\mathsf{gc}.\cert$ to the challenger, and receives $\top$ or $\bot$ from the challenger.
    If $\qB$ receives $\bot$ from the challenger, it outputs $\bot$ and aborts.
    \item $\qB$ sends $\{\pke.\sk_{i,\alpha}\}_{i\in[s],\alpha\in\bit}$ to $\qA_2$.
    \item $\qA_2$ outputs $b'$.
    \item $\qB$ computes $\PKE.\Vrfy$ for all $\pke.\cert_{i,\alpha}$, and outputs $b'$ if all results are $\top$.
    Otherwise, $\qB$ outputs $\bot$.
\end{enumerate}
It is clear that $\Pr[1\la\qB \mid b=0]=\Pr[\hyb_{1}=1]$ and $\Pr[1\la\qB \mid b=1]=\Pr[\hyb_{2}=1]$.
By assumption, $\abs{\Pr[\hyb_{1}=1]-\Pr[\hyb_{2}=1]}$ is non-negligible, and therefore $\abs{\Pr[1\la\qB \mid b=0]-\Pr[1\la\qB \mid b=1]}$ is non-negligible,
which contradicts the selective certified everlasting security of $\Sigma_{\mathsf{cegc}}$. 
\end{proof}

\subsection{\texorpdfstring{$1$}{1}-Bounded Construction with Adaptive Security}\label{sec:const_fe_adapt}
In this section, we convert the non-adaptive scheme constructed in the previous subsection to the adaptive one by using
a certified everlasting secure RNC scheme~(\cref{def:cert_ever_rnce_classic}).
See~\cref{sec:const_rnce_classic} for how to achieve this building block.

\paragraph{Our $1$-bounded adaptive certified everlasting secure FE scheme.}
We construct a $1$-bounded adaptive certified everlasting secure FE scheme $\Sigma_{\mathsf{cefe}}=(\Setup,\keygen,\qEnc,\qDec,\qDelete,\Vrfy)$ 
from a $1$-bounded non-adaptive certified everlasting secure FE scheme $\Sigma_{\mathsf{nad}}=\mathsf{NAD}.(\Setup,\keygen,\qEnc,\qDec,\qDelete,\Vrfy)$,
where the ciphertext space is $\Cs\seteq\cQ^{\otimes n}$,
and a certified everlasting secure RNCE scheme $\Sigma_{\mathsf{cence}}=\NCE.(\Setup,\keygen,\qEnc,\qDec,\qFake,\Reveal,\allowbreak \qDelete,\Vrfy)$ (\cref{def:cert_ever_rnce_classic}).
Let $\NAD.\Modify$ be a QPT algorithm such that
\begin{align}
\Pr\left[
\NAD.\Vrfy(\nad.\vk,\nad.\cert^*)\neq\top
\ \middle |
\begin{array}{ll}
(\nad.\MPK,\nad.\MSK)\lrun \NAD.\Setup(1^\secp)\\
(\nad.\vk,\nad.\qct) \lrun \NAD.\qEnc(\nad.\MPK,m)\\
a,c\la\bit^{n}\\
\nad.\wtl{\cert} \lrun \NAD.\qDelete(Z^cX^a\nad.\qct X^aZ^c)\\
\nad.\cert^*\lrun \NAD.\Modify(a,c,\nad.\wtl{\cert}) 
\end{array}
\right] 
\leq
\negl(\secp).
\end{align}
for any $m$.

Our construction is as follows.
\begin{description}
\item[$\Setup(1^\secp)$:]$ $
\begin{itemize}
    \item Run $(\mathsf{nad}.\MPK,\mathsf{nad}.\MSK)\la\mathsf{NAD}.\Setup(1^\secp)$.
    \item Run $(\nce.\pk,\nce.\MSK)\la\NCE.\Setup(1^{\secp})$.
    \item Output $\MPK\seteq(\mathsf{nad}.\MPK,\nce.\pk)$ and $\MSK\seteq (\nad.\MSK,\nce.\MSK)$.
\end{itemize}
\item[$\keygen(\MSK,f)$:]$ $
\begin{itemize}
    \item Parse $\MSK=(\nad.\MSK,\nce.\MSK)$.
    \item Compute $\mathsf{nad}.\sk_f\la\mathsf{NAD}.\keygen(\mathsf{nad}.\MSK,f)$.
    \item Compute $\nce.\sk\la\NCE.\keygen(\nce.\MSK)$.
    \item Output $\sk_f\seteq (\nad.\sk_f,\nce.\sk)$.
\end{itemize}
\item[$\qEnc(\MPK,m)$:]$ $
\begin{itemize}
    \item Parse $\MPK=(\nad.\MPK,\nce.\pk)$.
    \item Compute $(\mathsf{nad}.\vk,\mathsf{nad}.\qct)\la\mathsf{NAD}.\qEnc(\mathsf{nad}.\MPK,m)$.
    \item Generate $a,c\la\bit^n$. Let $\Psi\seteq Z^cX^a\mathsf{nad}.\qct X^aZ^c$.
    \item Compute $(\nce.\vk,\nce.\qct)\la\NCE.\qEnc(\nce.\pk,(a,c))$.
    \item Output $\vk\seteq(\mathsf{nad}.\vk,\nce.\vk,a,c)$ and $\qct\seteq (\Psi,\nce.\qct)$.
\end{itemize}
\item[$\qDec(\sk_f,\qct)$:]$ $
\begin{itemize}
    \item Parse $\sk_f=(\nad.\sk_f,\nce.\sk)$ and $\qct=(\Psi,\nce.\qct)$.
    \item Compute $(a',c')\la\NCE.\qDec(\nce.\sk,\nce.\qct)$.
    \item Compute $\mathsf{nad}.\qct'\seteq X^{a'}Z^{c'}\Psi Z^{c'}X^{a'}$.
    \item Compute $y\la\mathsf{NAD}.\qDec(\mathsf{nad}.\sk_f,\mathsf{nad}.\qct')$.
    \item Output $y$.
\end{itemize}
\item[$\qDelete(\qct)$:]$ $
\begin{itemize}
    \item Parse $\qct=(\Psi,\nce.\qct)$.
    \item Compute $\nad.\wtl{\cert}\la \NAD.\qDelete(\Psi)$.
    \item Compute $\nce.\cert\la\NCE.\qDelete(\nce.\qct)$.
    \item Output $\cert\seteq(\nad.\wtl{\cert},\nce.\cert)$.
\end{itemize}
\item[$\Vrfy(\vk,\cert)$:]$ $
\begin{itemize}
    \item Parse $\vk=(\mathsf{nad}.\vk,\nce.\vk,a,c)$ and $\cert=(\nad.\wtl{\cert},\nce.\cert)$.
    \item Compute $\mathsf{nad}.\cert^*\la\NAD.\Modify (a,c,\nad.\wtl{\cert})$.
    \item Output $\top$ if $\top\la\NCE.\Vrfy(\nce.\vk,\nce.\cert)$ and $\top\la\mathsf{NAD}.\Vrfy(\mathsf{nad}.\vk,\mathsf{nad}.\cert^*)$.
    Otherwise, output $\bot$.
\end{itemize}
\end{description}
\paragraph{Correctness:}
Correctness easily follows from that of $\Sigma_{\mathsf{nad}}$ and $\Sigma_{\mathsf{cence}}$.

\paragraph{Security:}
The following two theorems hold.
\begin{theorem}\label{thm:comp_security_single_ad}
If $\Sigma_{\mathsf{nad}}$ satisfies the $1$-bounded non-adaptive simulation-security~(\cref{def:non_ada_security_cert_ever_func_sim}) and $\Sigma_{\mathsf{cence}}$ satisfies the RNC security~(\cref{def:rec_nc_security_classic}), 
$\Sigma_{\mathsf{cefe}}$ satisfies the $1$-bounded adaptive simulation-security~(\cref{def:ada_security_cert_ever_func_sim}).
\end{theorem}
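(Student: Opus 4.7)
The plan is to prove \cref{thm:comp_security_single_ad} by exhibiting a QPT simulator $\qSim = (\qSim_1, \qSim_2)$ and arguing indistinguishability from the real experiment via a sequence of four hybrid experiments $\hyb_0, \hyb_1, \hyb_2, \hyb_3$, where $\hyb_0$ is the real experiment with $b = 0$ and $\hyb_3$ coincides with the simulated experiment with $b = 1$.

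In $\hyb_1$, I would replace every RNCE operation with its non-committing counterpart: the challenger generates $\nce.\qct \gets \NCE.\qFake(\nce.\pk)$ in place of honest encryption, and each queried $\nce.\sk$ becomes $\NCE.\Reveal(\nce.\MSK, (a,c))$ in place of $\NCE.\keygen(\nce.\MSK)$, where $(a,c)$ is the QOTP key fixed at encryption time. Indistinguishability between $\hyb_0$ and $\hyb_1$ reduces to the (multi-key) RNC security of $\Sigma_{\mathsf{cence}}$; a cut-and-choose argument analogous to \cref{lem:cut_and_choose_pke} handles the fact that there are multiple $\nce.\sk$ responses (one per key query, both pre- and post-encryption) that must all be consistent with the same $(a,c)$.

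In $\hyb_2$, I would replace the QOTP $\Psi = Z^c X^a \nad.\qct X^a Z^c$ with quantum teleportation: the challenger prepares $N$ Bell pairs in registers $(A, C_1)$, sends $C_1$ as $\Psi$, and \emph{defers} the determination of $(a,c)$ until $\nad.\qct$ is at hand; at that point, a Bell-basis measurement on $(A,\nad.\qct)$ produces outcomes $(x,z)$, which are used in place of $(a,c)$ in every $\NCE.\Reveal$ call. By \cref{lem:quantum_teleportation}, the joint distribution of $(\Psi, (x,z))$ in $\hyb_2$ is identical to that of $(Z^c X^a \nad.\qct X^a Z^c, (a,c))$ in $\hyb_1$, so $\hyb_2$ is statistically indistinguishable from $\hyb_1$. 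In $\hyb_3$, I then invoke the non-adaptive simulator of $\Sigma_{\mathsf{nad}}$: since in $\hyb_2$ the only remaining use of $\nad.\qct$ is for the Bell-basis measurement, that measurement (and hence the generation of $\nad.\qct$) can be postponed until the \emph{end} of the experiment, at which point the full list $\mathcal{V}_{\nad} = \{(f_\ell(m), f_\ell, \nad.\sk_{f_\ell})\}_{\ell}$ is available and can be fed to $\NAD.\qSim$; indistinguishability from $\hyb_2$ then reduces to the $1$-bounded non-adaptive simulation security of $\Sigma_{\mathsf{nad}}$ (\cref{def:non_ada_security_cert_ever_func_sim}). The final hybrid $\hyb_3$ then matches a simulator in which $\qSim_1$ outputs $(C_1, \NCE.\qFake(\nce.\pk))$ together with internal state storing $(A, \nad.\MSK, \nce.\MSK, \mathcal{V})$, and $\qSim_2$ updates this state by appending each new $(f_\ell, f_\ell(m), \nad.\sk_{f_\ell})$ and lazily producing $(\nad.\sk_{f_\ell}, \NCE.\Reveal(\nce.\MSK, (x,z)))$ once the deferred teleportation on $\NAD.\qSim$'s output has fixed $(x,z)$.

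The principal obstacle will be \emph{scheduling} the deferred teleportation measurement against the adaptive key queries. Naively each $\NCE.\Reveal$ call demands a fixed $(a,c)$ before that query is answered, yet the non-adaptive simulator of $\Sigma_{\mathsf{nad}}$ needs the complete list of key queries upfront. The teleportation trick resolves this by decoupling the $\Psi$ register (sent at encryption time) from the value $(a,c)=(x,z)$ (determined by a measurement that can be performed arbitrarily later), while the multi-key non-committing property of the RNCE — which will itself need to be stated and proved to allow both pre- and post-encryption key queries against a single ciphertext — ensures that all revealed keys can be produced consistently with a single commitment to $(a,c)$ that is only fixed at the very end of the experiment.
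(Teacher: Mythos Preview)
Your high-level structure (fake the RNCE, then swap QOTP for teleportation, then invoke the non-adaptive simulator) matches the paper's proof of the everlasting analogue, to which this theorem's proof is explicitly deferred. However, you have over-complicated matters by forgetting that the scheme is \emph{$1$-bounded}: the adversary makes at most one key query total, either before or after the challenge. There is no ``full list $\mathcal{V}_{\nad} = \{(f_\ell(m), f_\ell, \nad.\sk_{f_\ell})\}_\ell$'' with multiple entries, no need for a multi-key RNC property, and no cut-and-choose argument. Only a single invocation of plain RNC security is required.

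Relatedly, the paper performs a case split that you miss and that dissolves your ``principal obstacle'': if the (sole) key query arrives \emph{before} the challenge, then $f$ is known when the ciphertext is produced, so $\qSim_1$ can call $\NAD.\qSim$ directly on $(f,f(m),\nad.\sk_f)$ and wrap the result with real QOTP and real RNCE---no teleportation, no fake encryption. Only in the post-challenge case does the simulator send half of a Bell state together with $\NCE.\qFake(\nce.\pk)$, and then $\qSim_2$ performs the Bell measurement against $\NAD.\qSim$'s output (now that $f$ is known) and feeds the outcome into a single $\NCE.\Reveal$. Your unified ``always teleport'' description runs into the very scheduling problem you flag: for a pre-challenge key query you would need to output $\NCE.\Reveal(\cdot,(a,c))$ before $(a,c)$ exists. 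The paper avoids this simply by not using the trick when it is not needed.
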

Its proof is similar to that of \cref{thm:ever_security_single_ad}, and therefore we omit it.

\begin{theorem}\label{thm:ever_security_single_ad}
If $\Sigma_{\mathsf{nad}}$ satisfies the $1$-bounded non-adaptive certified everlasting simulation-security(~\cref{def:ever_non_ada_security_cert_ever_func_sim}) and $\Sigma_{\mathsf{cence}}$ satisfies the certified everlasting RNC security~(\cref{def:cert_ever_rec_nc_security_classic}), 
$\Sigma_{\mathsf{cefe}}$ satisfies the $1$-bounded adaptive certified everlasting simulation-security~(\cref{def:FE_CED_SIM_sec}).
\end{theorem}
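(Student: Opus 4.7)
The plan is to mirror the high-level proof strategy sketched in the technical overview: construct a three-phase simulator $\qSim=(\qSim_1,\qSim_2,\qSim_3)$ for $\Sigma_{\mathsf{cefe}}$ that on the challenge query outputs a ``delayed'' ciphertext built from a Bell-pair half and a fake RNCE ciphertext, and on each post-challenge key query $f$ teleports a freshly simulated $\nad.\qct$ through the stored Bell pair to fix the one-time-pad string $(a,c)$, then uses $\NCE.\Reveal$ to explain $(a,c)$ as the plaintext of the RNCE ciphertext. The final $\qSim_3$ applies $\NAD.\Modify(a,c,\cdot)$ to the $\nad$-part of the certificate received from the adversary to finish assembling $\vk$.

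To prove indistinguishability I will use the hybrid sequence $\hyb_0\to\hyb_1\to\hyb_2\to\hyb_3$. $\hyb_0$ is the real game (bit $0$ of Definition~\ref{def:FE_CED_SIM_sec}). In $\hyb_1$ the pair $(\nce.\qct,\nce.\sk)=(\NCE.\qEnc(\nce.\pk,(a,c)),\NCE.\keygen(\nce.\MSK))$ is replaced by $(\NCE.\qFake(\nce.\pk),\NCE.\Reveal(\nce.\MSK,\nce.\qct,(a,c)))$; indistinguishability against both $\qA_1$ and the unbounded $\qA_2$ (after a valid certificate) follows directly from the certified everlasting RNC security of $\Sigma_{\mathsf{cence}}$ (Definition~\ref{def:cert_ever_rec_nc_security_classic}). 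In $\hyb_2$ I defer the choice of $(a,c)$ by replacing the quantum-one-time-padded state $Z^c X^a\,\nad.\qct\,X^a Z^c$ with the register $C_1$ of $n$ fresh Bell pairs $\ket{\Phi_n}_{A,C_1}$, and postponing the Bell-basis measurement $\mathcal{T}_{A',A}^{a,c}$ on $\nad.\qct_{A'}$ and register $A$ to the moment the first post-challenge key query arrives. By Lemma~\ref{lem:quantum_teleportation} this is a perfectly identical distribution for the plaintext part; moreover, by the verification-correctness-with-QOTP property of $\Sigma_{\mathsf{nad}}$ the certificate $\nad.\wtl{\cert}$ that $\qA_1$ returns, together with $\NAD.\Modify(a,c,\nad.\wtl{\cert})$, verifies against $\nad.\vk$ with exactly the same probability whatever $(a,c)$ is later sampled, so the event ``certificate accepted'' and the final output $\rho$ (or $\bot$) are statistically unchanged. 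In $\hyb_3$ the triple $(\nad.\vk,\nad.\qct,\nad.\sk_{f_\ell})$ is produced by the non-adaptive simulator of $\Sigma_{\mathsf{nad}}$ rather than by $\NAD.\qEnc$ and $\NAD.\keygen$; this is now a purely classical reduction because in $\hyb_2$ the construction no longer needs $\nad.\qct$ until post-challenge key queries are answered, i.e.\ exactly the moment the non-adaptive simulator can be invoked on the full key list. Indistinguishability follows from the $1$-bounded non-adaptive certified everlasting simulation-security of $\Sigma_{\mathsf{nad}}$ (Definition~\ref{def:ever_non_ada_security_cert_ever_func_sim}), where the reduction $\qB$ internally emulates the RNCE, the teleportation measurement, and $\NAD.\Modify$, forwarding $\qA_1$'s certificate-verification step to the external NAD challenger and letting $\qA_2$'s unbounded guess be $\qB$'s output. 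Since $\hyb_3$ is exactly bit $1$ of Definition~\ref{def:FE_CED_SIM_sec} with the simulator described above, a triangle inequality gives the theorem.

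The main obstacle is the transition $\hyb_1\to\hyb_2$, because it must be justified simultaneously against the QPT pre-challenge adversary $\qA_1$ and the unbounded post-verification adversary $\qA_2$. The argument relies on two slightly delicate observations. First, the deferred-measurement principle together with Lemma~\ref{lem:quantum_teleportation} tells us that whether we apply $X^a Z^c$ to $\nad.\qct$ \emph{before} handing it to $\qA_1$ or whether we give $\qA_1$ half of a Bell pair and apply the teleportation measurement $\mathcal{T}^{a,c}$ later is indistinguishable even to an unbounded adversary, conditioned on any classical outcome observed. Second, the $\NAD.\Modify$ algorithm is exactly what makes the verification probability invariant under the deferred sampling of $(a,c)$: verification correctness with QOTP ensures that for every honest $\nad.\qct$ and every $(a,c)\in\bit^n\times\bit^n$, applying $\qDelete$ to $Z^c X^a\nad.\qct X^a Z^c$ and then $\NAD.\Modify(a,c,\cdot)$ yields a certificate accepted by $\NAD.\Vrfy(\nad.\vk,\cdot)$ except with negligible probability, so the reduction's simulated $\mathsf{Vrfy}$ agrees with the real one up to negligible error no matter when $(a,c)$ is drawn. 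Once these two points are combined with the definition of certified everlasting RNC security, which already supplies the $\qFake$/$\Reveal$ switch against an unbounded post-verification adversary, the full hybrid chain goes through and establishes the theorem.
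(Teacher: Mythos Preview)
Your proposal is essentially the paper's proof: the same simulator design and the same hybrid chain (you merge the paper's $\hyb_2$ and $\hyb_3$ into a single teleportation step, which is harmless). Two minor points: the paper's simulator branches on whether the single key query arrives \emph{before} or \emph{after} the challenge (in the former case no Bell pairs or $\qFake$ are needed---$\qSim_1$ just runs the NAD simulator directly and wraps the result with a real QOTP and real $\NCE.\qEnc$), and your write-up should make this case split explicit; and your second ``delicate observation'' invoking verification-correctness-with-QOTP is unnecessary and slightly misleading---that property concerns the \emph{honest} deleter, whereas the $\hyb_1\to\hyb_2$ transition is a \emph{perfect} equality of the joint distribution of (adversary's view, $(a,c)$) by the teleportation lemma alone, so the adversary's (possibly malicious) certificate together with $\NAD.\Modify(a,c,\cdot)$ is identically distributed in both hybrids without any appeal to correctness.
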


\begin{proof}[Proof of \cref{thm:ever_security_single_ad}]
For a given $2n$-qubit, let $A$ be the $n$-qubit of the first half of the $2n$-qubit, and let $B$ be the $n$-qubit of the second half of the $2n$-qubit.
Let $\NAD.\qSim$ be the simulating algorithm of the ciphertext $\nad.\qct$ .
Let us describe how the simulator $\qSim=(\qSim_1,\qSim_2,\qSim_3)$ works below.
\begin{description}
\item[$\qSim_1(\MPK,\mathcal{V},1^{|m|})$:]$ $
\begin{enumerate}
\item Parse $\MPK=(\nad.\MPK,\nce.\pk)$ and $\mathcal{V}=(f,f(m),(\nad.\sk_f,\nce.\sk))\,\, or\,\, \emptyset$.
\footnote{
If an adversary calls a key query before the adversary receives a challenge ciphertext,
then $\mathcal{V}=(f,f(m),(\nad.\sk_f,\nce.\sk))$.
Otherwise, $\mathcal{V}=\emptyset$.
}
\item $\qSim_1$ does the following:
\begin{itemize}
\item If $\mathcal{V}=\emptyset$, generate $\ket{\widetilde{0^n0^n}}$ and $(\nce.\vk,\widetilde{\nce.\qct},\nce.\aux)\la\NCE.\qFake(\nce.\pk)$.
Let $\Psi_A\seteq\Tr_B(\ket{\widetilde{0^n0^n}}\bra{\widetilde{0^n0^n}})$ and
$\Psi_B\seteq\Tr_A(\ket{\widetilde{0^n0^n}}\bra{\widetilde{0^n0^n}})$.
Output $\qct\seteq (\Psi_A,\widetilde{\nce.\qct})$ and $\state\seteq (\nce.\aux,\nce.\pk,\nad.\MPK,\Psi_B,1^{|m|},\nce.\vk,0)$.
\item If $\mathcal{V}=(f,f(m),(\nad.\sk_f,\nce.\sk))$, generate $a,c\la\bit^n$, $(\nce.\vk,\nce.\qct)\la\NCE.\qEnc(\nce.\pk,(a,c))$, $(\nad.\vk,\nad.\qct)\la\NAD.\qSim(\nad.\MPK,(f,f(m),\nad.\sk_f),1^{|m|})$ and $\Psi\seteq Z^cX^a\nad.\qct X^a Z^c$.
Output $\qct\seteq (\Psi,\nce.\qct)$ and $\state\seteq(\nad.\vk,\nce.\vk,a,c,1)$.
\end{itemize}

\end{enumerate}
\item[$\qSim_2(\MSK,f,f(m),\state)$:]$ $
\begin{enumerate}
\item Parse $\MSK\seteq (\nad.\MSK,\nce.\MSK)$ and $\state =(\nce.\aux,\nce.\pk,\nad.\MPK,\Psi_B,1^{|m|},\nce.\vk,0)$.
\item Compute $\nad.\sk_f\la\NAD.\keygen(\nad.\MSK,f)$.
\item Compute $(\nad.\vk,\nad.\qct)\la\NAD.\qSim(\nad.\MPK,(f,f(m),\nad.\sk_f),1^{|m|})$. 
Measure the $i$-th qubit of $\mathsf{nad}.\qct$ and $\Psi_B$ in the Bell basis and let $(x_i,z_i)$ be the measurement outcome for all $i\in[N]$.
\item Compute $\widetilde{\nce.\sk}\la\NCE.\Reveal(\nce.\pk,\nce.\MSK,\nce.\aux,(x,z))$.
\item Output $\sk_f\seteq (\nad.\sk_f,\widetilde{\nce.\sk})$ and $\state'\seteq(\nad.\vk,\nce.\vk,x,z,1)$.
\end{enumerate}
\item[$\qSim_3(\state^*)$:]$ $
\begin{enumerate}
\item Parse  $\state^* = (\nad.\vk,\nce.\vk,x^*,z^*,1)$ or $\state^* =(\nce.\aux,\nce.\pk,\nad.\MPK,\Psi_B,1^{|m|},\nce.\vk,0)$.
\item $\qSim_3$ does the following:
\begin{itemize}
    \item If the final bit of $\state^*$ is $0$,
    compute $(\nad.\vk,\nad.\qct)\la\NAD.\qSim(\nad.\MPK,\emptyset,1^{|m|})$. 
    Measure the $i$-th qubit of $\mathsf{nad}.\qct$ and $\Psi_B$ in the Bell basis and let $(x_i,z_i)$ be the measurement outcome for all $i\in[N]$.
     Output $\vk\seteq (\nad.\vk,\nce.\vk,x,z)$.
     \item If the final bit of $\state^*$ is $1$, output $\vk\seteq(\nad.\vk,\nce.\vk,x^*,z^*)$.
\end{itemize}
\end{enumerate}
Let us define the sequence of hybrids as follows.
\item[$\hyb_{0}$:]This is identical to $\expd{\Sigma_{\mathsf{cefe},\qA}}{cert}{ever}{ada}{sim}(0)$.
\begin{enumerate}
    \item The challenger generates $(\mathsf{nad}.\MPK,\mathsf{nad}.\MSK)\la\mathsf{NAD}.\Setup(1^\secp)$ and $(\nce.\pk,\nce.\MSK)\la\NCE.\Setup(1^{\secp})$, and sends $(\nad.\MPK,\nce.\pk)$ to $\qA_1$.
    \item\label{step:key_query_adapt_fe_1} $\qA_1$ is allowed to make an arbitrary key query at most one time.
    For a key query, the challenger receives $f\in\mathcal{F}$,  
    computes $\mathsf{nad}.\sk_{f}\la\mathsf{NAD}.\keygen(\mathsf{nad}.\MSK,f)$
    and $\nce.\sk\la\NCE.\keygen(\nce.\MSK)$,
    and sends $(\nad.\sk_f,\nce.\sk)$ to $\qA_1$.
    \item $\qA_1$ chooses $m\in\Ms$, and sends $m$ to the challenger.
    \item\label{step:encryption_single_fe} The challenger generates $a,c\la\bit^n$,
    computes $(\mathsf{nad}.\vk,\mathsf{nad}.\qct)\la \mathsf{NAD}.\qEnc(\mathsf{nad}.\MPK,m)$,
    $\Psi\seteq Z^{c}X^{a}\mathsf{nad}.\qct X^{a}Z^{c}$
    and $(\nce.\vk,\nce.\qct)\la\NCE.\qEnc(\nce.\pk,(a,c))$,
    and sends $(\Psi,\nce.\qct)$ to $\qA_1$.
    \item\label{step:key_query_adapt_fe_2}
    If a key query is not called in step~\ref{step:key_query_adapt_fe_1},
    $\qA_1$ is allowed to make an arbitrary key query at most one time.
    For a key query, the challenger receives $f\in\mathcal{F}$,  
    computes $\mathsf{nad}.\sk_{f}\la\mathsf{NAD}.\keygen(\mathsf{nad}.\MSK,f)$
    and $\nce.\sk\la\NCE.\keygen(\nce.\MSK)$,
    and sends $(\nad.\sk_f,\nce.\sk)$ to $\qA_1$.
    \item $\qA_1$ sends $(\nad.\cert,\nce.\cert)$ to the challenger and its internal state to $\qA_2$.
    \item\label{step:vrfy_fe} The challenger computes $\mathsf{nad}.\cert^*\la\NAD.\Modify(a,c,\nad.\cert)$.
        The challenger computes $\NCE.\Vrfy(\allowbreak \nce.\vk,\nce.\cert)$ and $\mathsf{NAD}.\Vrfy(\mathsf{nad}.\vk,\mathsf{nad}.\cert^*)$.
        If the results are $\top$, the challenger outputs $\top$ and sends $(\nad.\MSK,\nce.\MSK)$ to $\qA_2$.
        Otherwise, the challenger outputs $\bot$ and sends $\bot$ to $\qA_2$.
    \item $\qA_2$ outputs $b'$. The output of the experiment is $b'$ if the challenger outputs $\top$.
    Otherwise, the output of the experiment is $\bot$.
\end{enumerate}
\item[$\hyb_{1}$:]$ $
This is different from $\hyb_{0}$ in the following second points.
First, when a key query is not called in step~\ref{step:key_query_adapt_fe_1}, the challenger computes $(\nce.\vk,\widetilde{\nce.\qct},\nce.\aux)\la\NCE.\qFake(\nce.\pk)$ and sends $(\Psi,\widetilde{\nce.\qct})$ to $\qA_1$
instead of computing $(\nce.\vk,\nce.\qct)\la\NCE.\qEnc(\nce.\pk,(a,c))$
and sending $(\Psi,\nce.\qct)$ to $\qA_1$.
Second, in step~\ref{step:key_query_adapt_fe_2}, the challenger computes $\widetilde{\nce.\sk}\la\NCE.\Reveal(\nce.\pk,\nce.\MSK,\nce.\aux,(a,c))$ and sends $(\nad.\sk_f,\nce.\sk)$ to $\qA_1$ instead of computing $\nce.\sk\la\NCE.\keygen(\nce.\MSK)$ and sending $(\nad.\sk_f,\nce.\sk)$ to $\qA_1$.

\item[$\hyb_{2}$:]$ $
This is different from $\hyb_{1}$ in the following three points.
First, when a key query is not called in step \ref{step:key_query_adapt_fe_1}, the challenger generates $\ket{\widetilde{0^n0^n}}$ instead of generating $a,c\la\bit^n$ and $\Psi=Z^{c}X^{a}\mathsf{nad}.\qct X^{a}Z^{c}$.
Let $\Psi_A\seteq \Tr_B(\ket{\widetilde{0^n0^n}}\bra{\widetilde{0^n0^n}})$ and $\Psi_B\seteq \Tr_A(\ket{\widetilde{0^n0^n}}\bra{\widetilde{0^n0^n}})$.
Second, when a key query is not called in step \ref{step:key_query_adapt_fe_1}, the challenger sends $(\Psi_A,\widetilde{\nce.\qct})$ to $\qA_1$ instead of sending $(\Psi,\widetilde{\nce.\qct})$ to $\qA_1$ and then that measures the $i$-th qubit of $\nad.\qct$ and $\Psi_B$ in the Bell basis for all $i\in[n]$.
Let $(x_i,z_i)$ be the measurement outcome for all $i\in[n]$.
Third, the challenger computes $\widetilde{\nce.\sk}\la\NCE.\Reveal(\nce.\pk,\nce.\MSK,\nce.\aux,(x,z))$ instead of computing $\widetilde{\nce.\sk}\la\NCE.\Reveal(\nce.\pk,\nce.\MSK,\nce.\aux,(a,c))$ in step~\ref{step:key_query_adapt_fe_2} and computes $\mathsf{nad}.\cert^*\la\NAD.\Modify(x,z,\nad.\cert)$ instead of computing
$\mathsf{nad}.\cert^*\la\NAD.\Modify(a,c,\nad.\cert)$ in step~\ref{step:vrfy_fe}.

\item[$\hyb_{3}$:]$ $
This is different from $\hyb_{2}$ in the following three points.
First, when a key query is not called in step~\ref{step:key_query_adapt_fe_1}, the challenger does not generate $(\mathsf{nad}.\vk,\mathsf{nad}.\qct)\la \mathsf{NAD}.\qEnc(\mathsf{nad}.\MPK,m)$ and measure the $i$-th qubit of $\nad.\qct$ and $\Psi_B$ in the Bell basis in step~\ref{step:encryption_single_fe}.
Second, if a key query is called in step~\ref{step:key_query_adapt_fe_2}, the challenger computes $(\mathsf{nad}.\vk,\mathsf{nad}.\qct)\la \mathsf{NAD}.\qEnc(\mathsf{nad}.\MPK,m)$
and measures the $i$-th qubit of $\nad.\qct$ and $\Psi_B$ in the Bell basis for all $i\in[n]$ after it computes $\mathsf{nad}.\sk_{f}\la\mathsf{NAD}.\keygen(\mathsf{nad}.\MSK,f)$.
Third, if a key query is not called throughout the experiment,
the challenger computes $(\mathsf{nad}.\vk,\mathsf{nad}.\qct)\la \mathsf{NAD}.\qEnc(\mathsf{nad}.\MPK,m)$, measures the $i$-th qubit of $\nad.\qct$ and $\Psi_B$ in the Bell basis after step~\ref{step:key_query_adapt_fe_2}.

\item[$\hyb_{4}$:]$ $
This is identical to $\hyb_{3}$ except that the challenger computes $(\mathsf{nad}.\vk,\mathsf{nad}.\qct)\la \mathsf{NAD}.\qSim(\mathsf{nad}.\MPK,\mathcal{V},1^{|m|})$ instead of computing $(\mathsf{nad}.\vk,\mathsf{nad}.\qct)\la \mathsf{NAD}.\qEnc(\mathsf{nad}.\MPK,m)$, where $\mathcal{V}=(f,f(m),\nad.\sk_f)$ if a key query is called and $\mathcal{V}=\emptyset$ if a key query is not called.

\end{description}
From the definition of $\expd{\Sigma_{\mathsf{cefe}},\qA}{cert}{ever}{ada}{sim}(\secp, b)$ and $\qSim=(\qSim_1,\qSim_2,\qSim_3)$,
it is clear that $\Pr[\hyb_{0}=1]=\Pr[\expd{\Sigma_{\mathsf{cefe}},\qA}{cert}{ever}{ada}{sim}(\secp,0)=1]$ and
$\Pr[\hyb_{4}=1]=\Pr[\expd{\Sigma_{\mathsf{cefe}},\qA}{cert}{ever}{ada}{sim}(\secp,1)=1]$.
Therefore, \cref{thm:ever_security_single_ad} easily follows from \cref{prop:exp_hyb_1_cefe_si_ad,prop:hyb_1_hyb_2_cefe_si_ad,prop:hyb_2_hyb_3_cefe_si_ad,prop:hyb_3_hyb_4_cefe_si_ad}.
(Whose proof is given later.)
\end{proof}

\begin{proposition}\label{prop:exp_hyb_1_cefe_si_ad}
If $\Sigma_{\mathsf{cence}}$ is certified everlasting RNC secure, 
it holds that
\begin{align}
\abs{\Pr[\hyb_{0}=1]-\Pr[\hyb_{1}=1]}\leq\negl(\secp).
\end{align}
\end{proposition}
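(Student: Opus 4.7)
The plan is to construct a QPT reduction $\qB$ that breaks the certified everlasting RNC security of $\Sigma_{\mathsf{cence}}$ (\cref{def:cert_ever_rec_nc_security_classic}) by using a distinguisher for $\hyb_0$ and $\hyb_1$. The only syntactic difference between the two hybrids occurs when the pre-encryption key query in step~\ref{step:key_query_adapt_fe_1} is \emph{not} made: in $\hyb_0$ the challenger gives $(\Psi,\nce.\qct)$ with $\nce.\qct\la\NCE.\qEnc(\nce.\pk,(a,c))$ and, if a post-encryption key query arrives, returns a real $\nce.\sk\la\NCE.\keygen(\nce.\MSK)$, whereas $\hyb_1$ uses the fake pair $(\widetilde{\nce.\qct},\nce.\aux)\la\NCE.\qFake(\nce.\pk)$ and $\widetilde{\nce.\sk}\la\NCE.\Reveal(\nce.\pk,\nce.\MSK,\nce.\aux,(a,c))$. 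Since the string $(a,c)$ is the RNCE ``message'' being switched from real to fake, this is exactly the advantage measured by the RNC experiment.

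The reduction proceeds as follows. First, $\qB$ receives $\nce.\pk$ from its external RNC challenger, runs $(\nad.\MPK,\nad.\MSK)\la\NAD.\Setup(1^\secp)$ itself, and hands $(\nad.\MPK,\nce.\pk)$ to $\qA_1$. Key queries whose $\nad$ part is needed are answered by $\qB$ using $\nad.\MSK$; the $\nce.\sk$ component is obtained from the RNC challenger (the definition provides a real $\nce.\sk$ alongside the challenge, as used by $\qSim$ in the proof of \cref{thm:ever_security_single_ad}). When $\qA_1$ outputs $m$, $\qB$ picks $a,c\la\bit^n$, forwards $(a,c)$ to the RNC challenger as its challenge message and receives back $(\nce.\sk^\star,\nce.\qct^\star)$; it then computes $(\nad.\vk,\nad.\qct)\la\NAD.\qEnc(\nad.\MPK,m)$, sets $\Psi\seteq Z^cX^a\nad.\qct X^aZ^c$, and sends $(\Psi,\nce.\qct^\star)$ to $\qA_1$. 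Post-encryption key queries are answered with $(\nad.\sk_f,\nce.\sk^\star)$. Upon receiving $(\nad.\cert,\nce.\cert)$ and $\qA_1$'s internal state, $\qB$ forwards $\nce.\cert$ to the RNC challenger, computes $\nad.\cert^\star\la\NAD.\Modify(a,c,\nad.\cert)$, checks $\NAD.\Vrfy(\nad.\vk,\nad.\cert^\star)$, and when the RNC challenger in turn returns $\nce.\MSK$ (upon valid deletion), forwards $(\nad.\MSK,\nce.\MSK)$ to $\qA_2$. Finally, $\qB$ outputs whatever $\qA_2$ outputs, conditioned on the verification check.

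By inspection, if the RNC challenger is in the real mode then $\qB$ perfectly simulates $\hyb_0$ (real $\nce.\qct$ and real $\nce.\sk$), and if it is in the fake mode then $\qB$ perfectly simulates $\hyb_1$ (fake $\widetilde{\nce.\qct}$ and revealed $\widetilde{\nce.\sk}$); moreover the $\NAD.\Modify$ step ensures that the verification correctness of $\Sigma_{\mathsf{nad}}$ (\cref{def:ver_correctness_QOTP_cert_ever_func}) transports the certificate generated from $\Psi$ into a valid certificate for $\nad.\qct$, so the conditioning on $\top$ carries over between $\qB$'s experiment and the hybrids up to negligible error. Thus $|\Pr[\hyb_0=1]-\Pr[\hyb_1=1]|$ is bounded by $\qB$'s RNC distinguishing advantage plus $\negl(\secp)$, contradicting certified everlasting RNC security.

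The main obstacle is the consistency of $\nce.\sk^\star$ across phases: because $\hyb_0$ vs.\ $\hyb_1$ already distinguish real vs.\ revealed keys, the reduction must be careful that the very same $\nce.\sk^\star$ returned by the RNC challenger is used for \emph{every} key query (both pre- and post-encryption). This is compatible with the syntax because, in the bounded $q{=}1$ setting, there is at most one key query to answer, so $\qB$ can always defer or reuse the single $\nce.\sk^\star$ received from the RNC game; the case where a pre-encryption query occurs is absorbed because $\hyb_0=\hyb_1$ on that branch, so the entire distinguishing gap lives on the ``no pre-encryption query'' branch that the above reduction handles faithfully.
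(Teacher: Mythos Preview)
Your proposal is correct and follows essentially the same reduction as the paper: split off the branch where a pre-encryption key query is made (on which $\hyb_0=\hyb_1$ trivially), and on the remaining branch embed the RNC challenge as the pair $(\nce.\qct^\star,\nce.\sk^\star)$ for the message $(a,c)$, simulating everything else with $\nad.\MSK$. One minor remark: your aside that $\NAD.\Modify$ makes the conditioning match ``up to negligible error'' is unnecessary---$\Modify$ is already part of the verification step in both hybrids (see step~\ref{step:vrfy_fe}), so the simulation is in fact perfect and the paper states $\Pr[1\la\qB\mid b=0]=\Pr[\hyb_0=1]$ and $\Pr[1\la\qB\mid b=1]=\Pr[\hyb_1=1]$ exactly.
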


\begin{proposition}\label{prop:hyb_1_hyb_2_cefe_si_ad}
\begin{align}
\Pr[\hyb_{1}=1]=\Pr[\hyb_{2}=1].
\end{align}
\end{proposition}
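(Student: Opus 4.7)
The plan is to show that the two distributions are identical by invoking quantum teleportation (\cref{lem:quantum_teleportation}) together with the principle of deferred measurement. The only differences between $\hyb_1$ and $\hyb_2$ occur in the branch where no key query is made before the challenge encryption: in $\hyb_1$ the challenger samples $a,c \la \bit^n$ uniformly, computes $\Psi = Z^c X^a \nad.\qct X^a Z^c$, hands $\Psi$ to $\qA_1$, and later plugs $(a,c)$ into $\NCE.\Reveal$ and $\NAD.\Modify$; in $\hyb_2$ the challenger instead prepares $n$ Bell pairs $\ket{\widetilde{0^n0^n}}_{A,B}$, hands the register $A$ (whose reduced state is $\Psi_A$) to $\qA_1$, and only later performs a transversal Bell-basis measurement on $\nad.\qct$ and $\Psi_B$ to obtain $(x,z)$, which is then plugged into $\NCE.\Reveal$ and $\NAD.\Modify$ in place of $(a,c)$.

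The first step will be to argue that in $\hyb_2$ we may, without changing any distribution observable to $\qA_1$ or the final experiment output, defer the preparation of $\nad.\qct$ and the Bell measurement on $(\nad.\qct,\Psi_B)$ to the very beginning (before register $A$ is passed to $\qA_1$): this is a standard application of the deferred-measurement principle, using the fact that the Bell measurement and subsequent classical processing act only on registers that are never touched by $\qA_1$ and therefore commute with every action of the adversary. Equivalently, we can move the Bell measurement to just after $\nad.\qct$ is sampled and before $\Psi_A$ leaves the challenger. The second step is to invoke \cref{lem:quantum_teleportation} for this equivalent process: the outcome $(x,z)$ is then uniformly distributed over $\bit^n \times \bit^n$, and conditioned on $(x,z)$, the state in register $A$ that is about to be sent to $\qA_1$ is exactly $X^x Z^z\, \nad.\qct\, Z^z X^x$.

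The third step is to observe that $X^a Z^c \,\nad.\qct\, Z^c X^a = Z^c X^a \,\nad.\qct\, X^a Z^c$, since for each qubit the anticommutation sign $(-1)^{a_i c_i}$ appears twice and cancels. Hence, after renaming $(x,z) \mapsto (a,c)$, the joint distribution in the deferred-measurement version of $\hyb_2$ over $(a,c, \Psi, \nce.\aux, \nce.\pk, \nce.\vk, \widetilde{\nce.\qct}, \nad.\vk, \nad.\qct, m, \MSK)$ is \emph{identical} to the one in $\hyb_1$: the pair $(a,c)$ is uniform and independent of everything else that has been sampled so far, the quantum one-time pad is the same expression, and every subsequent use of $(a,c)$, namely $\NCE.\Reveal(\nce.\pk,\nce.\MSK,\nce.\aux,(a,c))$ in step~\ref{step:key_query_adapt_fe_2} and $\NAD.\Modify(a,c,\nad.\cert)$ in step~\ref{step:vrfy_fe}, is carried out in exactly the same way in both hybrids.

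I do not anticipate a real obstacle: the entire proof is information-theoretic and reduces to a careful bookkeeping that (i) deferred measurement does not change the joint distribution of the experiment output when the measured registers are disjoint from the adversary's workspace, and (ii) the teleportation identity reproduces the uniform Pauli one-time pad. The only care needed is to check that \emph{every} later use of $(a,c)$ in $\hyb_1$ is replaced by the \emph{same} $(x,z)$ obtained from the single Bell measurement in $\hyb_2$ (so that consistency between the ciphertext, the revealed secret key, and the modified certificate is preserved), which is exactly what the description of $\hyb_2$ already guarantees.
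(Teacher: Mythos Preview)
Your proposal is correct and follows the same approach as the paper: both reduce the claim to \cref{lem:quantum_teleportation}. The paper's proof is a two-line sketch that simply observes it suffices to show $(x,z)$ is uniform and $\Psi_A$ equals $Z^z X^x\,\nad.\qct\,X^x Z^z$, declaring both ``obvious from \cref{lem:quantum_teleportation}''; you spell out the two points the paper leaves implicit, namely the commutation (deferred-measurement) argument needed to pull the Bell measurement in front of handing $\Psi_A$ to $\qA_1$, and the cancellation of the global phase so that $X^xZ^z\rho Z^zX^x=Z^zX^x\rho X^xZ^z$.
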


\begin{proposition}\label{prop:hyb_2_hyb_3_cefe_si_ad}
\begin{align}
    \Pr[\hyb_{2}=1]=\Pr[\hyb_{3}=1].
\end{align}
\end{proposition}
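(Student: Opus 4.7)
The plan is to argue that $\hyb_2$ and $\hyb_3$ are statistically identical by invoking the standard principle of deferred measurement, together with a direct inspection that the rearranged operations act on registers the adversary cannot touch. The only syntactic difference between the two hybrids is the \emph{timing} of two challenger operations performed in the case where no pre-challenge key query is made in step~\ref{step:key_query_adapt_fe_1}: (i) sampling $(\nad.\vk,\nad.\qct)\la \NAD.\qEnc(\nad.\MPK,m)$, and (ii) Bell-measuring the freshly generated $\nad.\qct$ register against the challenger-held register $B$ containing $\Psi_B$ to obtain the classical outcomes $(x,z)$. The rest of the experiment is unchanged.

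First, I would observe that in both hybrids the state $\Psi_A$ delivered to the adversary in step~\ref{step:encryption_single_fe} is the reduced state on register $A$ of $n$ EPR pairs, which is generated before (i) and (ii) take place and is independent of $\nad.\qct$. Hence the adversary's view through the end of step~\ref{step:encryption_single_fe} is identically distributed in the two hybrids. Moreover, the companion register $B$ (carrying $\Psi_B$) is held internally by the challenger and is never touched by the adversary, nor is it acted upon by any operation between the ``early'' location in step~\ref{step:encryption_single_fe} and the ``late'' location (either after $\nad.\sk_f$ is computed in step~\ref{step:key_query_adapt_fe_2}, or after step~\ref{step:key_query_adapt_fe_2} if no key query is ever made).

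Next, I would invoke the principle of deferred measurement to conclude that the joint distribution of (the adversary's view, the values $(\nad.\vk,\nad.\qct,x,z)$) is unchanged by the deferral: operations (i) and (ii) act entirely on challenger-internal registers that are decoupled from the rest of the computation between the two candidate locations, so the timing is immaterial. I would then verify that the outputs of (i) and (ii) are consumed only after the late location and in the same way in both hybrids: $(x,z)$ feeds into $\NCE.\Reveal$ to generate $\widetilde{\nce.\sk}$ in step~\ref{step:key_query_adapt_fe_2} and into $\NAD.\Modify$ during verification in step~\ref{step:vrfy_fe}, while $\nad.\vk$ is used only by $\NAD.\Vrfy$ in step~\ref{step:vrfy_fe}. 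Since the downstream consumption is identical, the final output distributions of $\hyb_2$ and $\hyb_3$ coincide exactly, giving $\Pr[\hyb_2=1]=\Pr[\hyb_3=1]$. I expect no genuine obstacle here: this is a purely syntactic rearrangement whose correctness follows from deferred measurement, in contrast to the neighboring \cref{prop:exp_hyb_1_cefe_si_ad,prop:hyb_3_hyb_4_cefe_si_ad}, which require cryptographic reductions.
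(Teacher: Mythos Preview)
Your proposal is correct and matches the paper's approach: the paper's proof simply notes that the only difference between $\hyb_2$ and $\hyb_3$ is the order in which $\NAD.\qEnc$ and the Bell measurement on $(\nad.\qct,\Psi_B)$ are performed, and concludes that the adversary's view (ciphertext and decryption key) is identically distributed. Your write-up is a more explicit version of the same argument, spelling out the deferred-measurement justification and the register-disjointness that the paper leaves implicit.
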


\begin{proposition}\label{prop:hyb_3_hyb_4_cefe_si_ad}
If $\Sigma_{\mathsf{nad}}$ is $1$-bounded non-adaptive certified everlasting simulation-secure,
it holds that
\begin{align}
    \abs{\Pr[\hyb_{3}=1]-\Pr[\hyb_{4}=1]}\leq\negl(\secp).
\end{align}
\end{proposition}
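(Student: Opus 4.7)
The plan is to reduce distinguishing $\hyb_{3}$ from $\hyb_{4}$ to the $1$-bounded non-adaptive certified everlasting simulation-security of $\Sigma_{\mathsf{nad}}$. The only change between the two hybrids is replacing $\NAD.\qEnc(\nad.\MPK,m)$ by $\NAD.\qSim(\nad.\MPK,\mathcal{V},1^{|m|})$, and $\hyb_{3}$ is engineered precisely so that $\Sigma_{\mathsf{nad}}$ is invoked in a non-adaptive order: when $\qA_1$ makes its key query before the challenge, the real $\nad.\qct$ is produced afterwards (matching the non-adaptive flow); otherwise, the call to $\NAD.\qEnc$ is deferred until after any late key query, and in the meantime $\qA_1$ receives only the first half of a Bell pair together with a $\NCE.\qFake$ ciphertext, so its view before the deferred call is independent of $\nad.\qct$.

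Concretely, I would build a reduction $\qB$ against $\expd{\Sigma_{\mathsf{nad}},\cdot}{cert}{ever}{non}{adapt}$ as follows. $\qB$ gets $\nad.\MPK$ from its external challenger, runs $\NCE.\Setup$ itself to produce $(\nce.\pk,\nce.\MSK)$, and hands $(\nad.\MPK,\nce.\pk)$ to $\qA_1$. If $\qA_1$ submits a key query $f$ before sending $m$, $\qB$ forwards $f$ to the external challenger to obtain $\nad.\sk_f$, pairs it with a fresh $\nce.\sk\la\NCE.\keygen(\nce.\MSK)$, and answers $\qA_1$; once $m$ arrives, $\qB$ forwards $m$ externally, receives $\nad.\qct$, picks $a,c\la\bit^n$, and delivers $(Z^cX^a\,\nad.\qct\,X^aZ^c,\;\NCE.\qEnc(\nce.\pk,(a,c)))$ as the challenge. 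Otherwise, upon seeing $m$, $\qB$ instead prepares Bell pairs and $(\nce.\vk,\widetilde{\nce.\qct},\nce.\aux)\la\NCE.\qFake(\nce.\pk)$, and sends the first-half register together with $\widetilde{\nce.\qct}$; only after the late key-query opportunity closes does $\qB$ forward the (at most one) key query and then $m$ externally, Bell-measure the returned $\nad.\qct$ against the retained half to get outcomes $(x,z)$, and produce $\widetilde{\nce.\sk}\la\NCE.\Reveal(\nce.\pk,\nce.\MSK,\nce.\aux,(x,z))$ to accompany $\nad.\sk_f$. In both branches $\qB$ applies $\NAD.\Modify$ with the Pauli/teleportation labels it controls to the certificate from $\qA_1$ before relaying to the external challenger, verifies $\nce.\cert$ itself, and upon success combines the $\nad.\MSK$ returned externally with its own $\nce.\MSK$ to pass to $\qA_2$; on failure it forwards $\bot$.

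By a case analysis matching exactly the three subcases of $\hyb_{3}$ (key query in step~\ref{step:key_query_adapt_fe_1}, in step~\ref{step:key_query_adapt_fe_2}, or none at all), the simulated view of $(\qA_1,\qA_2)$ is distributed identically to $\hyb_{3}$ when the external bit is $0$ and identically to $\hyb_{4}$ when it is $1$, so a non-negligible gap between $\Pr[\hyb_{3}=1]$ and $\Pr[\hyb_{4}=1]$ lifts directly into a non-negligible distinguishing advantage of $\qB$ against $\Sigma_{\mathsf{nad}}$, contradicting the assumption.

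The main obstacle I expect is showing that $\qB$ always respects the rigid order of the non-adaptive game — setup, then the (single) key query, then $m$, then the certificate — even though $\qA_1$ may interleave its moves arbitrarily and may query the key after the challenge; this is exactly what $\hyb_{3}$'s deferral of $\NAD.\qEnc$ makes possible. A secondary point to check is that the $\NAD.\Modify$ step on the certificate is accepted by the external challenger with essentially the same probability as the untransformed certificate would be, which follows from \cref{def:ver_correctness_QOTP_cert_ever_func} together with the fact that, in the teleportation branch, the Bell-measurement outcomes $(x,z)$ play the role of the Pauli one-time pad keys by \cref{lem:quantum_teleportation}, so the state $\nad.\qct$ that is effectively certified coincides in distribution with the one that would have been delivered by an honest encryption followed by a Pauli one-time pad.
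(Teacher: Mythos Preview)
Your proposal is correct and follows essentially the same reduction as the paper: build $\qB$ against the non-adaptive game by simulating $\NCE$ locally, forwarding the single key query and then $m$ to the external challenger (using the Bell-pair/teleportation deferral when the key query comes after the challenge), applying $\NAD.\Modify$ to the certificate, and combining the returned $\nad.\MSK$ with the locally held $\nce.\MSK$. One small remark: your ``secondary point'' about $\NAD.\Modify$ and \cref{def:ver_correctness_QOTP_cert_ever_func} is unnecessary here, since the hybrids themselves already run $\NAD.\Modify$ followed by $\NAD.\Vrfy$, so the reduction's verification step is \emph{identical} to that of the hybrids rather than merely close to it.
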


\begin{proof}[Proof of \cref{prop:exp_hyb_1_cefe_si_ad}]
When an adversary makes key queries in step \ref{step:key_query_adapt_fe_1}, it is clear that $\Pr[\hyb_{0}=1]=\Pr[\hyb_{1}=1]$.
Hence, we consider the case where the adversary does not make a key query in step $\ref{step:key_query_adapt_fe_1}$ below.

We assume that $\abs{\Pr[\hyb_{0}=1]-\Pr[\hyb_{1}=1]}$ is non-negligible, and construct an adversary $\qB$ that breaks the certified everlasting RNC security of $\Sigma_{\mathsf{cence}}$.
Let us describe how $\qB$ works below.
\begin{enumerate}
    \item $\qB$ receives $\nce.\pk$ from the challenger of $\expd{\Sigma_{\mathsf{cence}},\qB}{cert}{ever}{rec}{nc}(\secp,b)$,
    generates $(\mathsf{nad}.\MPK,\mathsf{nad}.\MSK)\la \mathsf{NAD}.\keygen(1^\secp)$,
    and sends $(\nad.\MPK,\nce.\pk)$ to $\qA_1$.
    \item $\qB$ receives a message $m\in\Ms$, computes $(\mathsf{nad}.\vk,\mathsf{nad}.\qct)\la\mathsf{NAD}.\qEnc(\mathsf{nad}.\MPK,m)$, generates $a,c\la\bit^n$, computes $\Psi\seteq Z^cX^a\mathsf{nad}.\qct X^aZ^c$, sends $(a,c)$ to the challenger, receives $(\nce.\qct^*,\nce.\sk^*)$ from the challenger, and sends $(\Psi,\nce.\qct^*)$ to $\qA_1$.
    \item $\qA_1$ is allowed to send a key query at most one time.
    For a key query, $\qB$ receives an function $f$, generates $\mathsf{nad}.\sk_f\la\mathsf{NAD}.\keygen(\mathsf{nad}.\MSK,f)$,
    and sends $(\nad.\sk_f,\nce.\sk^*)$ to $\qA_1$.
    \item $\qA_1$ sends $(\nad.\cert,\nce.\cert)$ to $\qB$ and its internal state to $\qA_2$.
    \item $\qB$ sends $\nce.\cert$ to the challenger, and receives $\nce.\MSK$ or $\bot$ from the challenger.
    $\qB$ computes $\mathsf{nad}.\cert^*\la\NAD.\Modify(a,c,\nad.\cert)$ and $\mathsf{NAD}.\Vrfy(\mathsf{nad}.\vk,\mathsf{nad}.\cert^*)$.
    If the result is $\top$ and $\qB$ receives $\nce.\MSK$ from the challenger,
    $\qB$ sends $(\nad.\MSK,\nce.\MSK)$ to $\qA_2$.
    Otherwise, $\qB$ outputs $\bot$, sends $\bot$ to $\qA_2$, and aborts.
    \item $\qA_2$ outputs $b'$.
    \item $\qB$ outputs $b'$. 
\end{enumerate}
It is clear that $\Pr[1\la\qB \mid b=0]=\Pr[\hyb_{0}=1]$ and $\Pr[1\la\qB \mid b=1]=\Pr[\hyb_{1}=1]$.
By assumption, $\abs{\Pr[\hyb_{0}=1]-\Pr[\hyb_{1}=1]}$ is non-negligible, and therefore $\allowbreak\abs{\Pr[1\la\qB \mid b=0]-\allowbreak\Pr[1\la\qB \mid b=1]}$ is non-negligible,
which contradicts the certified everlasting RNC security of $\Sigma_{\mathsf{cence}}$. 
\end{proof}

\begin{proof}[Proof of \cref{prop:hyb_1_hyb_2_cefe_si_ad}]
We clarify the difference between $\hyb_{1}$ and $\hyb_{2}$.
First, in $\hyb_{2}$, the challenger uses $(x,z)$ instead of using $(a,c)$ as in $\hyb_{1}$.
Second, in $\hyb_{2}$, the challenger sends $\Psi_A$ to $\qA_1$ instead of sending $Z^{c}X^{a}\mathsf{nad}.\qct X^{a}Z^{c}$ to $\qA_1$ as in $\hyb_{1}$.
Hence,
it is sufficient to prove that $x$ and $z$ are uniformly randomly distributed and
$\Psi_A$ is identical to $Z^{z}X^{x}\mathsf{nad}.\qct X^{x}Z^{z}$.
These two things are obvious from \cref{lem:quantum_teleportation}.
\end{proof}

\begin{proof}[Proof of \cref{prop:hyb_2_hyb_3_cefe_si_ad}]
The difference between $\hyb_{2}$ and $\hyb_{3}$ is only the order of operating the algorithm $\NAD.\qEnc$ and the Bell measurement on $\nad.\qct$ and $\Psi_B$.
Therefore, it is clear that the probability distribution of the ciphertext and the decryption key given to the adversary in $\hyb_{2}$ is identical to that the ciphertext and the decryption key given to the adversary in $\hyb_{3}$.
\end{proof}

\begin{proof}[Proof of \cref{prop:hyb_3_hyb_4_cefe_si_ad}]
We assume that $\abs{\Pr[\hyb_{3}=1]-\Pr[\hyb_{4}=1]}$ is non-negligible, and construct an adversary $\qB$ that breaks the $1$-bounded non-adaptive certified everlasting simulation-security of $\Sigma_{\mathsf{nad}}$.
Let us describe how $\qB$ works below.
\begin{enumerate}
    \item $\qB$ receives $\mathsf{nad}.\MPK$ from the challenger of $\expd{\Sigma_{\mathsf{nad}},\qB}{cert}{ever}{noada}{sim}(\secp,b)$, 
    generates $(\nce.\pk,\nce.\MSK)\la\NCE.\Setup(1^\secp)$, and sends $(\nad.\MPK,\nce.\pk)$ to $\qA_1$.
    \item\label{step:key_query_adapt_fe_hyb_3_hyb_4}  $\qA_1$ is allowed to call a key query at most one time.
    For a key query, $\qB$ receives $f$ from $\qA_1$, sends $f$ to the challenger as a key query, receives $\mathsf{nad}.\sk_f$ from the challenger, computes $\nce.\sk\la\NCE.\keygen(\nce.\MSK)$, and sends $(\nad.\sk_f,\nce.\sk)$ to $\qA_1$.
    \item $\qA_1$ chooses $m\in\Ms$ and sends $m$ to $\qB$.
    \item $\qB$ does the following.
    \begin{itemize}
        \item If a key query is called in step~\ref{step:key_query_adapt_fe_hyb_3_hyb_4}, $\qB$ sends a challenge query $m$ to the challenger, receives $\nad.\qct$ from the challenger, generates $a,c\la\bit^n$,
        $\Psi\seteq Z^{c}X^{a}\mathsf{nad}.\qct X^{a}Z^{c}$ and $(\nce.\vk,\nce.\qct)\la\NCE.\qEnc(\nce.\pk,(a,c))$,
        and sends $(\Psi,\nce.\qct)$ to $\qA_1$.
        \item If a key query is not called in step~\ref{step:key_query_adapt_fe_hyb_3_hyb_4}, $\qB$ generates $\ket{\widetilde{0^n0^n}}$.
        Let $\Psi_A\seteq \Tr_B(\ket{\widetilde{0^n0^n}}\bra{\widetilde{0^n0^n}})$ and $\Psi_B\seteq \Tr_A(\ket{\widetilde{0^n0^n}}\bra{\widetilde{0^n0^n}})$.
        $\qB$ computes $(\nce.\vk,\widetilde{\nce.\qct},\nce.\aux)\la\NCE.\qFake(\nce.\pk)$ and sends $(\Psi_A,\widetilde{\nce.\qct})$ to $\qA_1$.
    \end{itemize}
    \item
    If a key query is not called in step~\ref{step:key_query_adapt_fe_hyb_3_hyb_4}, 
    $\qA_1$ is allowed to make a key query at most one time.
    If $\qB$ receives an function $f$ as key query, $\qB$ sends $f$ to the challenger as key query, and receives $\mathsf{nad}.\sk_f$ from the challenger.
    $\qB$ sends a challenge query $m$ to the challenger, receives $\nad.\qct$, measures the $i$-th qubit of $\nad.\qct$ and $\Psi_B$ in the Bell basis, and let $(x_i,z_i)$ be the measurement outcome for all $i\in[n]$. 
    $\qB$ computes $\widetilde{\nce.\sk}\la\NCE.\Reveal(\nce.\pk,\nce.\MSK,\nce.\aux,(x,z))$ and sends $(\nad.\sk_f,\widetilde{\nce.\sk})$ to $\qA_1$.
    \item If $\qB$ does not receive a key query throughout the experiment,
    $\qB$ sends a challenge query $m$ to the challenger, receives $\mathsf{nad}.\qct$, and measures the $i$-th qubit of $\mathsf{nad}.\qct$ and $\Psi_B$ in the Bell basis and let $(x_i,z_i)$ be the measurement outcome for all $i\in[n]$.
    \item $\qA_1$ sends $(\nad.\cert,\nce.\cert)$ to $\qB$ and its internal state to $\qA_2$.
    \item $\qB$ computes $\mathsf{nad}.\cert^*\la\NAD.\Modify(x^*,z^*,\nad.\cert)$, where $(x^*,z^*)=(a,c)$ if a key query is called in step~\ref{step:key_query_adapt_fe_hyb_3_hyb_4} and $(x^*,z^*)=(x,z)$ if a key query is not called in step~\ref{step:key_query_adapt_fe_hyb_3_hyb_4}.
        $\qB$ sends $\mathsf{nad}.\cert$ to the challenger, and receives $\mathsf{nad}.\MSK$ or $\bot$ from the challenger.
        $\qB$ computes $\NCE.\Vrfy(\nce.\vk,\nce.\cert)$.
        If the result is $\top$ and $\qB$ receives $\mathsf{nad}.\MSK$ from the challenger, $\qB$ sends $(\nad.\MSK,\nce.\MSK)$ to $\qA_2$.
        Otherwise, $\qB$ outputs $\bot$, sends $\bot$ to $\qA_2$, and aborts.
    \item $\qA_2$ outputs $b'$.
    \item $\qB$ outputs $b'$. 
\end{enumerate}

It is clear that $\Pr[1\la\qB \mid b=0]=\Pr[\hyb_{3}=1]$ and $\Pr[1\la\qB \mid b=1]=\Pr[\hyb_{4}=1]$.
By assumption, $\abs{\Pr[\hyb_{3}=1]-\Pr[\hyb_{4}=1]}$ is non-negligible, and therefore $\abs{\Pr[1\la\qB \mid b=0]-\Pr[1\la\qB \mid b=1]}$ is non-negligible, which contradicts the $1$-bounded non-adaptive certified everlasting simulation-security of $\Sigma_{\mathsf{nad}}$. 
\end{proof}

\subsection{\texorpdfstring{$q$}{q}-Bounded Construction with Adaptive Security for \texorpdfstring{$\NCone$}{NC1} circuits}\label{sec:const_multi_fe}
In this section, we construct a $q$-bounded FE with certified everlasting deletion scheme for all $\NCone$ circuits from $1$-bounded certified everlasting secure FE constructed in the previous subsection
and Shamir's secret sharing~(\cite{cacm.Shamir79}).
Our construction is similar to that of standard FE for all $\NCone$ circuits in \cite{C:GorVaiWee12}
except that we use $1$-bounded certified everlasting secure FE instead of standard $1$-bounded FE.


\paragraph{Our $q$-bounded adaptive certified everlasting secure FE scheme for $\NCone$ circuits.}
We consider the polynomial representation of circuits $C$ in $\NCone$.
The input message space is $\Ms\seteq\mathbb{F}^{\ell}$, and for each $\NCone$ circuit $C$, $C(\cdot)$ is an $\ell$-variate polynomial over $\mathbb{F}$ of total degree at most $D$.
Let $q=q(\lambda)$ be a polynomial of $\lambda$. Our scheme is associated with additional parameters $S=S(\lambda)$, $N=N(\lambda)$, $t=t(\lambda)$ and $v=v(\lambda)$
that satisfy
\begin{align}
    t(\lambda)=\Theta(q^2\lambda), N(\lambda)=\Theta(D^2q^2t), v(\lambda)=\Theta(\lambda), S(\lambda)=\Theta(vq^2).
\end{align}

Let us define a family $\mathcal{G}\seteq\{G_{C,\Delta}\}_{C\in\NCone,\Delta\subseteq[S]}$, where
\begin{align}
G_{C,\Delta}(x,Z_1,Z_2,\cdots,Z_S)\seteq C(x)+\sum_{i\in\Delta}Z_i    
\end{align}
is a function and $Z_1,\cdots,Z_S\in\mathbb{F}$.

We construct a $q$-bounded certified everlasting secure FE scheme for all $\NCone$ circuits  $\Sigma_{\mathsf{cefe}}=(\Setup,\keygen,\qEnc,\allowbreak \qDec,\qDelete,\allowbreak \Vrfy)$
from a $1$-bounded certified everlasting secure FE scheme $\Sigma_{\mathsf{one}}=\mathsf{ONE}.(\Setup,\keygen,\qEnc,\qDec,\qDelete,\allowbreak \Vrfy)$.

\begin{description}
\item[$\Setup(1^\lambda)$:]$ $
\begin{itemize}
    \item For $i\in[N]$, generate $(\mathsf{one}.\MPK_i,\mathsf{one}.\MSK_i)\la\mathsf{ONE}.\Setup(1^\secp)$.
    \item Output $\MPK\seteq\{\mathsf{one}.\MPK_i\}_{i\in[N]}$ and $\MSK\seteq \{\mathsf{one}.\MSK_i\}_{i\in[N]}$.
\end{itemize}
\item[$\keygen(\MSK,C)$:]$ $
\begin{itemize}
    \item Parse $\MSK= \{\mathsf{one}.\MSK_i\}_{i\in[N]}$.
    \item Chooses a uniformly random set $\Gamma\subseteq[N]$ of size $tD+1$.
    \item Chooses a uniformly random set $\Delta\subseteq[S]$ of size $v$.
    \item For $i\in\Gamma$, compute $\mathsf{one}.\sk_{C,\Delta,i}\la\mathsf{ONE}.\keygen(\one.\MSK_i,G_{C,\Delta})$.
    \item Output $\sk_{C}\seteq (\Gamma,\Delta,\{\one.\sk_{C,\Delta,i}\}_{i\in\Gamma})$.
\end{itemize}
\item[$\qEnc(\MPK,x)$:]$ $
\begin{itemize}
    \item Parse $\MPK=\{\mathsf{one}.\MPK_i\}_{i\in[N]}$.
    \item For $i\in[\ell]$, pick a random degree $t$ polynomial $\mu_i(\cdot)$ whose constant term is $x[i]$.
    \item For $i\in[S]$, pick a random degree $Dt$ polynomial $\xi_i(\cdot)$ whose constant term is $0$.
    \item For $i\in[N]$, compute $(\one.\vk_i,\one.\qct_i)\la\ONE.\qEnc(\one.\MPK_i,(\mu_1(i),\cdots,\mu_{\ell}(i),\allowbreak\xi_1(i),\cdots,\xi_S(i) ))$.
    \item Output $\vk=\{\one.\vk_i\}_{i\in[N]}$ and $\qct\seteq\{\one.\qct_i\}_{i\in[N]}$.
    \end{itemize}
\item[$\qDec(\sk_C,\qct)$:]$ $
\begin{itemize}
    \item Parse $\sk_{C}= (\Gamma,\Delta,\{\one.\sk_{C,\Delta,i}\}_{i\in\Gamma})$ and $\qct=\{\one.\qct_i\}_{i\in[N]}$.
    \item For $i\in \Gamma$, compute $\eta(i)\la\ONE.\qDec(\one.\sk_{C,\Delta,i},\one.\qct_i)$.
    \item Output $\eta(0)$.
\end{itemize}
\item[$\qDelete(\qct)$:]$ $
\begin{itemize}
    \item Parse $\qct=\{\one.\qct_i\}_{i\in[N]} $.
    \item For $i\in[N]$, compute $\one.\cert_i\la\ONE.\qDelete(\one.\qct_i)$.
    \item Output $\cert\seteq \{\one.\cert_i\}_{i\in[N]}$.
\end{itemize}
\item[$\Vrfy(\vk,\cert)$:]$ $
\begin{itemize}
    \item Parse $\vk=\{\one.\vk_i\}_{i\in[N]}$ and $\cert= \{\one.\cert_i\}_{i\in[N]}$.
    \item For $i\in[N]$, compute $\top/\bot\la\ONE.\Vrfy(\one.\vk_i,\one.\cert_i)$.
    If all results are $\top$, output $\top$. Otherwise, output $\bot$.
\end{itemize}
\end{description}

\paragraph{Correctness:}
Verification correctness easily follows from verification correctness of $\Sigma_{\one}$.
Let us show evaluation correctness.
By decryption correctness of $\Sigma_{\one}$, for all $i\in\Gamma$ we have 
\begin{align}
\eta(i)&=G_{C,\Delta} (\mu_1(i),\cdots,\mu_\ell(i),\xi_1(i),\cdots,\xi_S(i))\\
&=C(\mu_1(i),\cdots,\mu_\ell(i))+\Sigma_{a\in\Delta}\xi_a(i).
\end{align}
Since $|\Gamma|\geq Dt+1$, this means that $\eta$ is equal to the degree $Dt$ polynomial 
\begin{align}
    \eta(\cdot)=C(\mu_1(\cdot),\cdots,\mu_\ell(\cdot) )+\Sigma_{a\in\Delta}\xi_a(\cdot)
\end{align}
Hence $\eta(0)=C(x_1,\cdots,x_\ell)=C(x)$, which means that our construction satisfies evaluation correctness.

\paragraph{Security:}
The following two theorems hold.
\begin{theorem}\label{thm:compt_bounded_fe}
If $\Sigma_{\one}$ satisfies the $1$-bounded adaptive simulation-security, $\Sigma_{\mathsf{cefe}}$ satisfies the $q$-bounded adaptive simulation-security.
\end{theorem}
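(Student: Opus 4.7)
The plan is to mirror the classical proof of Gorbunov--Vaikuntanathan--Wee~\cite{C:GorVaiWee12}, reducing $q$-bounded adaptive simulation-security of $\Sigma_{\mathsf{cefe}}$ to the $1$-bounded adaptive simulation-security of $\Sigma_{\mathsf{one}}$ via a sequence of $N$ hybrids, one per underlying slot. First I would define the $q$-bounded simulator $\qSim=(\qSim_1,\qSim_2)$. On input $(\MPK,\mathcal{V},1^{|x|})$, $\qSim_1$ identifies the \emph{covered} set $\mathcal{C}\seteq\bigcup_{i\leq q_{\sfpre}}\Gamma_i\subseteq [N]$ arising from pre-challenge queries and the \emph{uncovered} set $[N]\setminus\mathcal{C}$. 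For each $j\in[N]\setminus\mathcal{C}$ it invokes $\one.\qSim_1$ on the empty view, and for each $j\in\mathcal{C}$ it samples, via standard Shamir-style interpolation, polynomials $\mu_1,\dots,\mu_\ell$ (degree $t$) and $\xi_1,\dots,\xi_S$ (degree $Dt$) consistent with the outputs $C_i(x)+\sum_{a\in\Delta_i}\xi_a(\cdot)$ required at each pre-challenge slot, and then runs the real $\one.\qEnc$ at those slots. The post-challenge simulator $\qSim_2$, on a new key query for $C'$, samples fresh $(\Gamma',\Delta')$ and routes each $j\in\Gamma'\setminus\mathcal{C}$ through $\one.\qSim_2$ programmed so that the recovered degree-$Dt$ polynomial $\eta(\cdot)$ evaluates to $C'(x)$ at $0$; updates $\mathcal{C}\leftarrow\mathcal{C}\cup\Gamma'$.

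Next I would introduce hybrids $\hyb_0,\hyb_1,\dots,\hyb_N$ where $\hyb_j$ replaces the first $j$ slots' encryption and associated key derivation with their simulated counterparts from $\Sigma_{\mathsf{one}}$. The indistinguishability of $\hyb_{j-1}$ and $\hyb_j$ is to be reduced to $1$-bounded adaptive simulation-security of $\Sigma_{\mathsf{one}}$ on slot $j$, which is sound provided that slot $j$ receives \emph{at most one} $\ONE.\keygen$ query over the entire execution. This is enforced by the cover-freeness argument of~\cite{C:GorVaiWee12}: with the parameter setting $t=\Theta(q^2\lambda)$, $N=\Theta(D^2q^2t)$, $v=\Theta(\lambda)$, $S=\Theta(vq^2)$, over the random draws of $\{\Gamma_i,\Delta_i\}_{i\in[q]}$ the event $\mathsf{Good}$ that (i) $|\Gamma_i\cap\Gamma_{i'}|\leq t$ for all $i\neq i'$ and (ii) the $\Delta_i$'s form a cover-free family holds with probability $1-\negl(\secp)$. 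Conditioned on $\mathsf{Good}$, the reduction against slot $j$ samples all other slots itself, and whenever $j\in\Gamma_i\cap\Gamma_{i'}$ with $i\neq i'$ it aborts (this happens only with negligible probability by (i) and the parameter choice $|\Gamma_i|=Dt+1<N/q^2$).

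Third I would argue that the marginal distribution produced by $\qSim$ matches the honest one, by two combinatorial facts: degree-$t$ (resp.\ degree-$Dt$) polynomials fixed at $\leq t$ (resp.\ $\leq Dt$) interpolation points and an independent constant term are statistically identical to uniform such polynomials with that constant term, so the $\mu_i$'s sampled to be consistent with pre-challenge key outputs look uniform; and by the cover-freeness of the $\Delta_i$'s, each key query $i$ contains at least one ``private'' index $a\in\Delta_i\setminus\bigcup_{i'\neq i}\Delta_{i'}$, so $\xi_a$ acts as a fresh one-time pad that lets us program the required output $C_i(x)$ at slot $j$ without disturbing the other keys. These two facts together allow $\qSim_2$ to invoke $\one.\qSim_2$ at the single new slot to which each post-challenge query is isolated on the event $\mathsf{Good}$.

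The main obstacle will be the adaptive handling of post-challenge key queries in a way compatible with the $1$-bounded adaptive game, because the reduction must commit, at the challenge phase, to all simulated ciphertexts at slot $j$ before seeing which post-challenge $C'$ will involve $j$. The cover-freeness lemma bounds the joint probability that any slot is queried twice across both phases, so conditioning on $\mathsf{Good}$ we can assign each slot $j$ a single key query (if any) and defer the call to $\one.\qSim_2$ until that query arrives; the subtlety is verifying that the ``programming'' of $\xi_a$ at the private index of $\Delta'$ precisely hits the required output $C'(x)$ while remaining consistent with all previously revealed evaluations, which follows from Lagrange interpolation and the fact that the number of free coefficients of each $\xi_a$ always strictly exceeds the number of constraints imposed on it, again by the parameter choice. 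Once this invariant is verified, $\hyb_{j-1}\approx_c\hyb_j$ follows directly from $1$-bounded adaptive simulation-security of $\Sigma_{\mathsf{one}}$, and a union bound over $j\in[N]$ plus the bound on $\Pr[\neg\mathsf{Good}]$ completes the proof.
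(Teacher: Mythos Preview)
Your plan correctly identifies that the argument should mirror \cite{C:GorVaiWee12} and hinge on the combinatorial lemmas about $\{\Gamma_i\}$ and $\{\Delta_i\}$, but the central reduction step has a genuine gap. You assert that in the hybrid at slot $j$ the event $j\in\Gamma_i\cap\Gamma_{i'}$ for some $i\neq i'$ occurs ``only with negligible probability.'' This is false: with $|\Gamma_i|=Dt+1$ and $N=\Theta(D^2q^2t)$, the probability that a \emph{fixed} slot $j$ lies in at least two of the $\Gamma_i$'s is $\Theta(1/(D^2q^2))$, which is inverse-polynomial, not negligible. Summing this loss over the $N$ hybrid steps gives a total degradation of order $N/(D^2q^2)=\Theta(t)=\Theta(q^2\lambda)$, which is useless. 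The bound $|\mathcal{L}|\le t$ (for $\mathcal{L}\seteq\bigcup_{i\neq i'}(\Gamma_i\cap\Gamma_{i'})$) that you invoke as condition (i) says only that \emph{at most $t$} slots suffer a collision, not that any given slot avoids collision with overwhelming probability.

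The paper (following \cite{C:GorVaiWee12}) handles this by a different decomposition: it samples \emph{all} $\Gamma_i,\Delta_i$ (including those for post-challenge queries) up front, defines $\mathcal{L}$ as above, and partitions the slots into $\mathcal{L}$ versus $[N]\setminus\mathcal{L}$ rather than your $\mathcal{C}$ versus $[N]\setminus\mathcal{C}$. At slots $i\in\mathcal{L}$ the simulator runs \emph{real} $\ONE.\qEnc$ on share values $(\mu_1(i),\ldots,\xi_S(i))$ computed from polynomials with constant term $0$; indistinguishability from the honest world there is information-theoretic, because $|\mathcal{L}|\le t$ and at most $t$ evaluations of a degree-$t$ polynomial are independent of its constant term. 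At every slot $i\notin\mathcal{L}$ the simulator invokes $\ONE.\qSim_1$---with empty view if no $\Gamma_j$ contains $i$, and with the single pair $(G_{C_j,\Delta_j},\eta_j(i))$ if the unique such $j$ is a pre-challenge query---and the slot-by-slot hybrid then ranges only over $[N]\setminus\mathcal{L}$, reducing cleanly to $1$-bounded adaptive simulation-security since by construction each such slot receives at most one key query. Your simulator, which runs real $\ONE.\qEnc$ on all of $\mathcal{C}=\bigcup_{i\le q_{\sfpre}}\Gamma_i$, cannot be argued statistically close to the honest world once $|\mathcal{C}|>t$ (the typical case), because that many shares of the $\mu_k$'s determine their constant terms; and your final hybrid $\hyb_N$ (all $N$ slots simulated) does not even match your own simulator's output.
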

Its proof is similar to that of \cref{thm:ever_bounded_fe}, and therefore we omit it.

\begin{theorem}\label{thm:ever_bounded_fe}
If $\Sigma_{\one}$ satisfies the $1$-bounded adaptive certified everlasting simulation-security, $\Sigma_{\mathsf{cefe}}$ the $q$-bounded adaptive certified everlasting simulation-security.
\end{theorem}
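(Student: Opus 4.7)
The plan is to mimic the classical proof of Gorbunov--Vaikuntanathan--Wee~\cite{C:GorVaiWee12} for the transformation from $1$-bounded to $q$-bounded FE for $\NCone$, while layering in the certified everlasting machinery on top of it. First, I will construct the required simulator $\qSim=(\qSim_1,\qSim_2,\qSim_3)$ of $\Sigma_{\mathsf{cefe}}$ by invoking the simulator $(\one.\qSim_1,\one.\qSim_2,\one.\qSim_3)$ of the underlying $1$-bounded scheme $\Sigma_{\one}$ on each of the $N$ slots in parallel. More precisely, $\qSim_1$ is given $\MPK$, the (possibly empty) view $\mathcal{V}$ of pre-challenge key queries, and $1^{|x|}$; it samples random $\Gamma_j,\Delta_j$ for each pre-challenge query $C_j$ and random polynomials $\mu_i(\cdot),\xi_a(\cdot)$ consistent with the leaked values $\eta_j(i)=C_j(x)+\sum_{a\in\Delta_j}\xi_a(i)$ for $i\in\Gamma_j$, then runs $\one.\qSim_1$ on each of the $N$ slots. $\qSim_2$ handles post-challenge keys by extending the partial polynomial information using fresh $\Gamma,\Delta$ and invoking $\one.\qSim_2$ on the appropriate slots; $\qSim_3$ collects the per-slot simulated verification keys.

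The core analysis proceeds via a sequence of hybrids indexed by the $N$ slot positions: $\hybi{i}$ replaces the real generation of $(\one.\vk_i,\one.\qct_i)$ and of the per-slot decryption keys $\one.\sk_{C_j,\Delta_j,i}$ for every $j$ with $i\in\Gamma_j$ by the output of $\one.\qSim$, feeding it only the per-slot plaintexts $(\mu_1(i),\ldots,\xi_S(i))$ and the per-slot leakage values $\eta_j(i)$. Adjacent hybrids are indistinguishable by a straightforward reduction to the $1$-bounded adaptive certified everlasting simulation-security of $\Sigma_{\one}$: the reduction forwards the $i$-th slot to its external challenger, the adversary's combined certificate $\cert=\{\one.\cert_i\}_{i\in[N]}$ passes verification iff every component does, so $\one.\cert_i$ is valid whenever $\cert$ is, and upon receiving $\one.\MSK_i$ the reduction can hand the full $\MSK$ to $\qA_2$. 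After all $N$ hybrids, the view is generated entirely by $\qSim$, yielding the target distribution of $\expd{\Sigma_{\mathsf{cefe}},\qA}{cert}{ever}{ada}{sim}(\secp,1)$.

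For this reduction to go through, I must verify that, in each intermediate hybrid, the leaked per-slot values $\eta_j(i)$ on which $\one.\qSim$ is invoked are correctly distributed and consistent with what the reduction can produce from $\qA$'s queries. This is the combinatorial heart of the argument, imported from~\cite{C:GorVaiWee12}: with the choice $t=\Theta(q^2\secp)$, $N=\Theta(D^2q^2 t)$, $v=\Theta(\secp)$, $S=\Theta(vq^2)$, a cover-freeness argument over the random sets $\{\Gamma_j\}$ and $\{\Delta_j\}$ ensures that, except with negligible probability over the setup randomness, (i) for every pair of queries $C_j,C_{j'}$ the intersection $\Gamma_j\cap\Gamma_{j'}$ has size at most $t$ so that the Shamir shares $\mu_i(i\in\Gamma_j)$ of $x$ remain jointly information-theoretically independent of the shares revealed to other keys, and (ii) each set $\Delta_j$ contains at least one index not covered by any $\Delta_{j'}$ with $j'\ne j$, so the one-time pad $\sum_{a\in\Delta_j}\xi_a(\cdot)$ masks $C_j(\mu_1(\cdot),\ldots)$ information-theoretically at slots outside $\bigcup_{j'\ne j}\Gamma_{j'}$. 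These combinatorial facts let me define the leakage values consumed by $\one.\qSim$ in each hybrid so that they match the real joint distribution of the unsimulated slots, and in particular they let $\qSim_2$ answer post-challenge queries consistently.

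The main obstacle I expect is adaptivity: unlike the selective case, $\qSim_2$ must choose $\Gamma_j,\Delta_j$ on the fly while the shares $\mu_i,\xi_a$ have already been partly revealed by earlier hybrids, so one has to argue that the cover-freeness-plus-secret-sharing invariant continues to hold conditioned on all past transcripts; this is exactly where the probabilistic analysis from~\cite{C:GorVaiWee12} (union-bounding over the $\binom{q}{2}$ pairs with parameters as above) is needed and where the $\Theta(q^2\secp)$ blow-up enters. Once this combinatorial claim is in place, the certified everlasting part is almost free: the everlasting distinguisher $\qA_2$ only ever sees $\MSK=\{\one.\MSK_i\}_{i\in[N]}$ conditioned on all $N$ per-slot certificates $\one.\cert_i$ being valid, so by a standard hybrid over the $N$ slots and the everlasting simulation-security of $\Sigma_{\one}$, the joint post-verification view of $(\qA_1,\qA_2)$ in $\hybi{0}$ and $\hybi{N}$ differ by at most $N\cdot\negl(\secp)=\negl(\secp)$, completing the proof.
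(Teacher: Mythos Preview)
Your high-level plan—follow GVW and layer the certified-everlasting machinery on top via a slot-by-slot hybrid over the $N$ instances of $\Sigma_{\one}$—is the same as the paper's, and the per-slot reduction (forward slot $i$ to the $1$-bounded challenger, pass $\one.\cert_i$ through, receive $\one.\MSK_i$) is exactly right. But your description of how the simulator handles the sets $\Gamma_j,\Delta_j$ is inverted in a way that would make the argument fail as written.

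You have $\qSim_1$ sample $\Gamma_j,\Delta_j$ for pre-challenge queries and $\qSim_2$ choose them ``on the fly'' for post-challenge queries, and you name the latter as the main obstacle. Both are backwards. Pre-challenge $(\Gamma_j,\Delta_j)$ are already contained in $\sk_{C_j}\in\mathcal{V}$ and are read off, not sampled. For post-challenge queries the simulator must \emph{pre-sample} all $\Gamma_j,\Delta_j$ for $j>q^*$ already inside $\qSim_1$, before emitting any ciphertexts: it needs the collision set $\mathcal{L}=\bigcup_{j\neq j'}(\Gamma_j\cap\Gamma_{j'})$ up front, because a slot $i\in\mathcal{L}$ may receive keys for two different queries and therefore cannot be handled by the $1$-bounded simulator at all—those slots are encrypted honestly (with the simulator's own polynomial evaluations), and only slots $i\notin\mathcal{L}$ are simulated. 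So ``run $\one.\qSim_1$ on each of the $N$ slots'' is too strong, and if $\Gamma_j$ were chosen on the fly this $\mathcal{L}$-versus-non-$\mathcal{L}$ partition could not be made at ciphertext time; your slot-by-slot hybrid would then break the moment it reaches a slot that later lands in two $\Gamma_j$'s. The combinatorial lemma you invoke is not a tool that rescues on-the-fly sampling; it bounds the probability that the \emph{pre-sampled} sets violate $|\mathcal{L}|\le t$ or $\Delta$-cover-freeness, in which case the simulator simply aborts. The paper also inserts two perfect-indistinguishability steps around the slot-by-slot computational hybrid—resampling the order in which $(\xi_a,\eta_j)$ are drawn, and then replacing the constant terms of the $\mu_i$ by $0$ using $|\mathcal{L}|\le t$—and these, not the computational step, are what actually remove the dependence on $x$ from the simulator's view; they should be made explicit in your plan.
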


\begin{proof}[Proof of \cref{thm:ever_bounded_fe}]
Let us denote the simulating algorithm of $\Sigma_{\one}$ as $\ONE.\qSim=\ONE.(\qSim_1,\qSim_2,\qSim_3)$.
Let us describe how the simulator $\qSim=(\qSim_1,\qSim_2,\qSim_3)$ works below.
\begin{description}
\item[$\qSim_1(\MPK,\mathcal{V},1^{|x|})$:] Let $q^*$ be the number of times that $\qA_1$ has made key queries before it sends a challenge query. 
\begin{enumerate}
    \item Parse $\MPK\seteq \{\one.\MPK_{i}\}_{i\in[N]}$ and $\mathcal{V}\seteq \{C_j,C_j(x),(\Gamma_j,\Delta_j,\{\one.\sk_{C_j,\Delta_j,i}\}_{i\in[\Gamma_j]})\}_{j\in[q^*]}$.
    \item Generate a uniformly random set $\Gamma_i\subseteq[N]$ of size $Dt+1$ and a uniformly random set $\Delta_i\subseteq[S]$ of size $v$ for all $i\in\{q^*+1,\cdots, q\}$.
    Let $\Delta_0\seteq \emptyset$. Let $\mathcal{L}\seteq \bigcup_{i\neq i'}(\Gamma_i\cap \Gamma_{i'})$.
    $\qSim_1$ aborts if $\left|\mathcal{L}\right|> t$ or
    there exists some $i\in[q]$ such that $\Delta_i \setminus (\bigcup_{j\neq i}\Delta_j)=\emptyset$.
    \item $\qSim_1$ uniformly and independently samples $\ell$ random degree $t$ polynomials $\mu_1,\cdots, \mu_\ell$ whose constant terms are all $0$. 
    \item $\qSim_1$ samples the polynomials $\xi_1,\cdots, \xi_S$ as follows for $j\in[q]$:
    \begin{itemize}
        \item fix $a^*\in\Delta_j \setminus (\Delta_0\cup\cdots\cup\Delta_{j-1})$;
        \item for all $a\in(\Delta_j \setminus (\Delta_0\cup\cdots\cup \Delta_{j-1})) \setminus \{a^*\}$, set $\xi_a$ to be a uniformly random degree $Dt$ polynomial whose constant term is $0$;
        \item if $j\leq q^*$, pick a random degree $Dt$ polynomial $\eta_j(\cdot)$ whose constant term is $C_j(x)$; if $j>q^*$, pick random values for $\eta_j(i)$ for all $i\in\mathcal{L}$;
        \item the evaluation of $\xi_{a^*}$ on the points in $\mathcal{L}$ is defined by the relation:
        \begin{align}
            \eta_j(\cdot)=C_j(\mu_1(\cdot),\cdots, \mu_\ell(\cdot))+\sum_{a\in\Delta_j}\xi_a(\cdot).
        \end{align}
        \item Finally, for all $a\notin (\Delta_1\cup\cdots \cup \Delta_q)$, set $\xi_a$ to be a uniformly random degree $Dt$ polynomial whose constant term is $0$.
    \end{itemize}
    \item For each $i\in\mathcal{L}$, $\qSim_1$ computes 
    \begin{align}
        (\one.\vk_i,\one.\qct_i)\la\ONE.\qEnc(\one.\MPK_i,(\mu_1(i),\cdots,\mu_\ell(i),\xi_1(i),\cdots, \xi_S(i) )).
    \end{align}
    \item For each $i\notin \mathcal{L}$, $\qSim_1$ does the following:
    \begin{itemize}
        \item If $i\in \Gamma_j$ for some $j\in[q^*]$
        \footnote
        { Note that $j$ is uniquely determined since $i\notin \mathcal{L}$ .
        },
        computes
        \begin{align}
        (\one.\qct_i,\one.\state_i)\la\ONE.\qSim_1(\one.\MPK_i,(G_{C_j,\Delta_j,i},\eta_j(i),\one.\sk_{C_j,\Delta_j,i}),1^{|m|}).
        \end{align}
        \item If $i\notin\Gamma_j$ for all $j\in[q^*]$, 
        computes 
        \begin{align}
            (\one.\qct_i,\one.\state_i)\la\ONE.\qSim_1(\one.\MPK_i,\emptyset,1^{|m|}).
        \end{align}
    \end{itemize}
    \item Output $\qct\seteq \{\one.\qct_i\}_{i\in[N]}$ and $\state\seteq (\{\Gamma_i\}_{i\in[q]},\{\Delta_i\}_{i\in[q]},\{\eta_j(i)\}_{j\in\{q^*+1,\cdots,q\},i\in\mathcal{L}},\allowbreak\{\one.\state_i\}_{i\in[N] \setminus \mathcal{L}},\allowbreak \{\one.\vk_i\}_{i\in\mathcal{L}})$.
\end{enumerate}
\item[$\qSim_2(\MSK,C_j,C_j(x),\state)$:]The simulator simulates the $j$-th key query for $j>q^*$.
\begin{enumerate}
    \item Parse $\MSK\seteq\{\one.\MSK_i\}_{i\in[N]}$ and $\state_{j-1} \seteq  (\{\Gamma_i\}_{i\in[q]},\{\Delta_i\}_{i\in[q]},\{\eta_s(i)\}_{s\in\{q^*+1,\cdots,q\},i\in\mathcal{L}},\allowbreak\{\one.\state_i\}_{i\in[N] \setminus \mathcal{L}},\allowbreak \{\one.\vk_i\}_{i\in\mathcal{L}})$.
    \item For each $i\in\Gamma_j\cap \mathcal{L}$, generate $\one.\sk_{C_j,\Delta_j,i}\la\ONE.\keygen(\one.\MSK_i,G_{C_j,\Delta_j})$.
    \item For each $i\in\Gamma_j \setminus \mathcal{L}$, 
    generate a random degree $Dt$ polynomial $\eta_j(\cdot)$ whose constant term is $C_j(x)$ and subject to the constraints on the values in $\mathcal{L}$ chosen earlier,
    and generate 
    \begin{align}
    (\one.\sk_{C_j,\Delta_j,i},\one.\state_i^*)\la\ONE.\qSim_2(\one.\MSK_i,\eta_j(i),G_{C_j,\Delta_j},\one.\state_i).
    \end{align}
    For simplicity, let us denote $\one.\state_i^*$ as $\one.\state_i$ for $i\in\Gamma_j \setminus \mathcal{L}$.
    \item Output $\sk_{C_j}\seteq(\Gamma_j,\Delta_j,\{\one.\sk_{C_j,\Delta_j,i}\}_{i\in\Gamma_j})$ and
    $\state_j\seteq (\{\Gamma_i\}_{i\in[q]},\{\Delta_i\}_{i\in[q]},\allowbreak\{\eta_j(i)\}_{j\in\{q^*+1,\cdots,q\},i\in\mathcal{L}},\allowbreak \{\one.\state_i\}_{i\in[N]\setminus \mathcal{L}},\{\one.\vk_i\}_{i\in\mathcal{L}})$.
\end{enumerate}
\item[$\qSim_3(\state^*)$:]The simulator simulates a verification key.
\begin{enumerate}
    \item Parse $\state^*\seteq (\{\Gamma_i\}_{i\in[q]},\{\Delta_i\}_{i\in[q]},\{\eta_j(i)\}_{j\in\{q^*+1,\cdots,q\},i\in\mathcal{L}}, \{\one.\state_i\}_{i\in[N] \setminus \mathcal{L}},\{\one.\vk_i\}_{i\in\mathcal{L}})$.
    \item For each $i\in [N] \setminus \mathcal{L}$, compute $\one.\vk_i\la\ONE.\qSim_3(\one.\state_i)$.
    \item Output $\vk\seteq\{\one.\vk_i\}_{i\in[N]}$.
\end{enumerate}
Let us define the sequence of hybrids as follows.

\item [$\hyb_{0}$:] This is identical to $\expd{\Sigma_{\mathsf{cefe}},\qA}{cert}{ever}{ada}{sim}(\secp, 0)$.
\begin{enumerate}
    \item\label{step:keygen_bounded_fe} The challenger generates $(\one.\MPK_i,\one.\MSK_i)\la\ONE.\Setup(1^\secp)$ for $i\in[N]$.
    \item $\qA_1$ is allowed to call key queries at most $q$ times.
    For the $j$-th key query, the challenger receives an function $C_{j}$ from $\qA_1$, generates a uniformly random set $\Gamma_{j}\in[N]$ of size $Dt+1$ and $\Delta_{j}\in [S]$ of size $v$.
    For $i\in \Gamma_{j}$, the challenger generates $\mathsf{one}.\sk_{C_{j},\Delta_{j},i}\la\mathsf{ONE}.\keygen(\one.\MSK_i,G_{C_{j},\Delta_{j}})$, and sends $(\Gamma_{j},\Delta_{j},\{\one.\sk_{C_{j},\Delta_{j},i}\}_{i\in\Gamma_{j}})$ to $\qA_1$.
    Let $q^*$ be the number of times that $\qA_1$ has called key queries in this step.
    \item $\qA_1$ chooses $x\in\Ms$ and sends $x$ to the challenger.
    \item\label{step:enc_bounded_fe} The challenger generates a random degree $t$ polynomial $\mu_i(\cdot)$ whose constant term is $x[i]$ for $i\in[\ell]$
    and a random degree $Dt$ polynomial $\xi_i(\cdot)$ whose constant term is $0$.
    For $i\in[N]$, the challenger computes $(\one.\vk_i,\one.\qct_i)\la\ONE.\qEnc(\one.\MPK_i,(\mu_1(i),\cdots,\mu_{\ell}(i),\xi_1(i),\cdots,\xi_S(i)))$,
    and sends $\{\one.\qct_i\}_{i\in[N]}$ to $\qA_1$.
    \item\label{step:key_bounded_fe_2} $\qA_1$ is allowed to call a key query at most $q-q^*$ times.
    For the $j$-th key query, the challenger receives an function $C_{j}$ from $\qA_1$, generates a uniformly random set $\Gamma_{j}\in[N]$ of size $Dt+1$ and $\Delta_{j}\in [S]$ of size $v$.
    For $i\in \Gamma_{j}$, the challenger generates $\mathsf{one}.\sk_{C_{j},\Delta_{j},i}\la\mathsf{ONE}.\keygen(\one.\MSK_i,G_{C_{j},\Delta_{j}})$, and sends $(\Gamma_{j},\Delta_{j},\{\one.\sk_{C_{j},\Delta_{j},i}\}_{i\in\Gamma_{j}})$ to $\qA_1$.
    \item $\qA_1$ sends $\{\one.\cert_i\}_{i\in[N]}$ to the challenger and its internal state to $\qA_2$.
    \item If $\top\la \ONE.\Vrfy(\one.\vk_i,\one.\cert_i)$ for all $i\in[N]$, the challenger outputs $\top$ and sends $\{\one.\MSK_i\}_{i\in[N]}$ to $\qA_2$.
    Otherwise, the challenger outputs $\bot$ and sends $\bot$ to $\qA_2$.
    \item $\qA_2$ outputs $b$.
    \item The experiment outputs $b$ if the challenger outputs $\top$.
    Otherwise, the experiment outputs $\bot$.
\end{enumerate}
\item[$\hyb_{1}$:] This is identical to $\hyb_{0}$ except for the following three points.
First, the challenger generates uniformly random set $\Gamma_i\in[N]$ of size $Dt+1$ and $\Delta_{i}\in [S]$ of size $v$ for $i\in\{q^*+1,\cdots,q\}$ in step \ref{step:enc_bounded_fe} instead of generating them when a key query is called.
Second, if $\left|\mathcal{L}\right|> t$, the challenger aborts and the experiment outputs $\bot$.
Third, if there exists some $i\in[q]$ such that $\Delta_i \setminus (\bigcup_{j\neq i}\Delta_j)=\emptyset$, the challenger aborts and the experiment outputs $\bot$.

\item[$\hyb_{2}$:]This is identical to $\hyb_{1}$ except that the challenger samples $\xi_1,\cdots,\xi_S, \eta_1,\cdots,\eta_q$ as in the simulator $\qSim_1$ described above.

\item[$\hyb_{3}$:]
This is identical to $\hyb_{2}$ except that the challenger generates $\{\one.\qct_i\}_{i\in[N] \setminus \{\mathcal{L}\}}$,
$\{\one.\sk_{C_j,\Delta_j,i}\}_{i\in\Gamma_j}$ for $j\in\{q^*+1,\cdots ,q'\}$,
and $\vk\seteq\{\one.\vk_i\}_{i\in[N] \setminus \{\mathcal{L}\}}$ as in the simulator $\qSim=(\qSim_1,\qSim_2,\qSim_3)$ described above,
where $q'$ is the number of key queries that the adversary makes in total.

\item[$\hyb_{4}$:]
This is identical to $\hyb_{3}$ except that the challenger generates $\mu_1,\cdots, \mu_\ell$ as in the simulator $\qSim_1$ described above.
\end{description}

From the definition of $\expd{\Sigma_{\mathsf{cefe}},\qA}{cert}{ever}{ada}{sim}(\secp, b)$ and $\qSim=(\qSim_1,\qSim_2,\qSim_3)$,
it is clear that $\Pr[\hyb_{0}=1]=\Pr[\expd{\Sigma_{\mathsf{cefe}},\qA}{cert}{ever}{ada}{sim}(\secp,0)=1]$ and
$\Pr[\hyb_{4}=1]=\Pr[\expd{\Sigma_{\mathsf{cefe}},\qA}{cert}{ever}{ada}{sim}(\secp,1)=1]$.
Therefore, \cref{thm:ever_bounded_fe} easily follows from \cref{prop:hyb_0_hyb_1_bounded_fe,prop:hyb_1_hyb_2_bounded_fe,prop:hyb_2_hyb_3_bounded_fe,prop:hyb_3_hyb_4_bounded_fe} (whose proofs are given later).
\end{proof}

\begin{proposition}\label{prop:hyb_0_hyb_1_bounded_fe}
 $   \abs{\Pr[\hyb_{0}=1]-\Pr[\hyb_{1}=1]}\leq \negl(\lambda).$
\end{proposition}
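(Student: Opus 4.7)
The plan is to argue that the three modifications from $\hyb_{0}$ to $\hyb_{1}$ are all distribution-preserving up to a negligible statistical gap. The first modification, sampling the sets $\Gamma_i$ and $\Delta_i$ for $i\in\{q^\ast+1,\ldots,q\}$ eagerly in step~\ref{step:enc_bounded_fe} instead of lazily when the corresponding key query is made, does not change the joint distribution of $\qA$'s view at all, because these sets are sampled from distributions that are independent of everything $\qA$ sees before the relevant key query. Hence the only thing to bound is the probability of the two new abort events introduced in $\hyb_{1}$:
\begin{align}
\events{E}_1 &\seteq [\,\abs{\mathcal{L}}>t\,], \\
\events{E}_2 &\seteq \bigl[\,\exists\, i\in[q]\text{ such that } \Delta_i\setminus\bigcup_{j\neq i}\Delta_j=\emptyset\,\bigr],
\end{align}
where $\mathcal{L}=\bigcup_{i\neq i'}(\Gamma_i\cap\Gamma_{i'})$.

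For $\events{E}_1$, I would use a linearity-of-expectation plus concentration argument. Since each $\Gamma_i$ is a uniformly random size-$(Dt+1)$ subset of $[N]$ and the $\Gamma_i$'s are independent, for any fixed index $r\in[N]$ the probability that $r$ lies in both $\Gamma_i$ and $\Gamma_{i'}$ is at most $((Dt+1)/N)^2$. By a union bound over the $\binom{q}{2}$ pairs $(i,i')$, we get $\Pr[r\in\mathcal{L}]\le\binom{q}{2}((Dt+1)/N)^2$, and therefore $\Exp[\abs{\mathcal{L}}]\le N\cdot\binom{q}{2}((Dt+1)/N)^2 = \binom{q}{2}(Dt+1)^2/N$. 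Plugging in $N=\Theta(D^2q^2t)$ gives $\Exp[\abs{\mathcal{L}}]=O(t)$; by tuning the hidden constants in $N$, this expectation can be made at most $t/2$. Then a Chernoff-type tail bound on the sum of the independent indicator variables $\mathds{1}[r\in\mathcal{L}]$ (which are negatively correlated across $r$, but each individual indicator itself is a monotone function of the independent $\Gamma_i$'s, so one can either bound via McDiarmid or, more straightforwardly, upper-bound by a sum of independent binomials) yields $\Pr[\abs{\mathcal{L}}>t]\le 2^{-\Omega(t)}=2^{-\Omega(q^2\lambda)}=\negl(\lambda)$.

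For $\events{E}_2$, fix any $i\in[q]$. Since each $\Delta_j$ is a uniformly random size-$v$ subset of $[S]$, the union $\bigcup_{j\neq i}\Delta_j$ has size at most $(q-1)v$ regardless of the randomness. The probability that a fresh uniformly random size-$v$ subset $\Delta_i$ is entirely contained in a fixed set of size at most $(q-1)v$ is at most $\binom{(q-1)v}{v}/\binom{S}{v}\le ((q-1)v/S)^v$ (using $\binom{a}{v}/\binom{S}{v}\le(a/S)^v$ for $a\le S$). With $S=\Theta(vq^2)$ and constants chosen so that $(q-1)v/S\le 1/2$, this is at most $2^{-v}=2^{-\Omega(\lambda)}$. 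A union bound over $i\in[q]$ then gives $\Pr[\events{E}_2]\le q\cdot 2^{-\Omega(\lambda)}=\negl(\lambda)$. Combining, $\abs{\Pr[\hyb_{0}=1]-\Pr[\hyb_{1}=1]}\le\Pr[\events{E}_1]+\Pr[\events{E}_2]\le\negl(\lambda)$.

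The main obstacle, such as it is, is the concentration argument for $\events{E}_1$: the indicators $\mathds{1}[r\in\mathcal{L}]$ are not independent across $r$, so a plain Chernoff bound is not immediate. The cleanest fix is to bound $\abs{\mathcal{L}}$ by $\sum_{i<i'}\abs{\Gamma_i\cap\Gamma_{i'}}$, observe that each $\abs{\Gamma_i\cap\Gamma_{i'}}$ (with $i,i'$ fixed) is a hypergeometric random variable with mean $(Dt+1)^2/N$ and independent of the intersections $\abs{\Gamma_j\cap\Gamma_{j'}}$ whenever $\{i,i'\}\cap\{j,j'\}=\emptyset$, and apply standard hypergeometric/McDiarmid-type concentration to this sum. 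This is essentially the same tail estimate used in the analogous proof of Gorbunov--Vaikuntanathan--Wee~\cite{C:GorVaiWee12}, which is the template our construction follows; I would simply invoke (or inline) that concentration lemma.
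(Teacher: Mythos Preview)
Your proposal is correct and follows essentially the same approach as the paper: observe that eager versus lazy sampling of the $\Gamma_i,\Delta_i$ does not change the distribution, then bound the two abort probabilities by the small-pairwise-intersection and cover-free lemmas from \cite{C:GorVaiWee12}. The paper simply cites those two lemmas as black boxes, whereas you sketch their proofs; the decomposition and the final bound $\Pr[\events{E}_1]+\Pr[\events{E}_2]$ are identical.
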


\begin{proposition}\label{prop:hyb_1_hyb_2_bounded_fe}
    $\Pr[\hyb_{1}=1]=\Pr[\hyb_{2}=1].$
\end{proposition}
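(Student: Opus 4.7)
The plan is to argue that the two hybrids are statistically identical (not merely indistinguishable) by exhibiting an information-theoretic change of variables. The only difference between $\hyb_1$ and $\hyb_2$ is the way the polynomials $\mu_i$ and $\xi_a$ are sampled (the sampling of $\Gamma_i$, $\Delta_i$, and the abort conditions on $|\mathcal{L}|$ and $\Delta_i\setminus(\bigcup_{j\neq i}\Delta_j)$ are common to both hybrids). Conditioned on not aborting, each key-query index $j\in[q]$ has a distinguished element $a^*_j \in \Delta_j\setminus(\Delta_0\cup\cdots\cup\Delta_{j-1})$, and by construction the indices $a^*_1,\ldots,a^*_q$ are pairwise distinct. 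This is the key combinatorial fact that makes the change of variables well-defined.

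I would then describe the sampling in $\hyb_2$ equivalently as follows (this is just a reorganization of when to sample $\eta_j$'s full values). First sample $\mu_1,\ldots,\mu_\ell$ and $\{\xi_a\}_{a\notin\{a^*_1,\ldots,a^*_q\}}$ exactly as in $\hyb_1$. Next, for each $j\in[q]$, sample an independent uniformly random degree-$Dt$ polynomial $\eta_j$ whose constant term is $C_j(x)$, and define
\[
\xi_{a^*_j}(\cdot) \;\seteq\; \eta_j(\cdot) - C_j(\mu_1(\cdot),\ldots,\mu_\ell(\cdot)) - \sum_{a\in\Delta_j\setminus\{a^*_j\}}\xi_a(\cdot).
\]
For $j>q^*$ the lazy sampling of $\eta_j$ in $\qSim_1/\qSim_2$ (only values on $\mathcal{L}$ upfront, full polynomial at key-query time) gives the same joint distribution on everything the adversary ever sees, so it suffices to analyze this upfront version.

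The core calculation is then to show that in this reorganized $\hyb_2$, the marginal distribution of $\xi_{a^*_j}$ (conditioned on all other $\mu$'s and $\xi$'s) is exactly the uniform distribution on degree-$Dt$ polynomials with constant term $0$, which matches $\hyb_1$. Since $\eta_j$ is a uniformly random degree-$Dt$ polynomial with constant term $C_j(x)$, and since $C_j(\mu_1(\cdot),\ldots,\mu_\ell(\cdot))$ and $\sum_{a\in\Delta_j\setminus\{a^*_j\}}\xi_a(\cdot)$ are fixed (once we condition) degree-$Dt$ polynomials, their difference $\xi_{a^*_j}$ is a uniformly random degree-$Dt$ polynomial, shifted by a fixed polynomial. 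Its constant term equals $C_j(x)-C_j(\mu_1(0),\ldots,\mu_\ell(0))-\sum_{a\in\Delta_j\setminus\{a^*_j\}}\xi_a(0)=C_j(x)-C_j(x)-0=0$, using the facts that $\mu_i(0)=x[i]$ and that all $\xi_a$ for $a\neq a^*_j$ have constant term $0$. Hence the conditional distribution of $\xi_{a^*_j}$ matches $\hyb_1$ exactly, and independence across $j$ follows since the $a^*_j$ are distinct so $\xi_{a^*_{j'}}$ for $j'\neq j$ sits in the ``other $\xi$'' conditioning for $j$.

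The only subtle point I expect to be the main obstacle is to carefully verify that the lazy sampling of $\eta_j$ for $j>q^*$ in $\qSim_1/\qSim_2$ really is distributionally equivalent to the upfront sampling: one must check that the values $\eta_j(i)$ sampled at points $i\in\mathcal{L}$ in $\qSim_1$, together with the later completion in $\qSim_2$ (``random degree $Dt$ polynomial with constant term $C_j(x)$ subject to the constraints on the values in $\mathcal{L}$''), produces exactly a uniform degree-$Dt$ polynomial with constant term $C_j(x)$. This is a standard polynomial interpolation argument: since $|\mathcal{L}|\le t < Dt+1$, the space of degree-$Dt$ polynomials with prescribed constant term and prescribed values at $\mathcal{L}$ is an affine subspace, and uniform sampling from it after uniformly sampling the $\mathcal{L}$-values yields the uniform distribution on all degree-$Dt$ polynomials with that constant term. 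Once this is established, the change of variables above shows $\Pr[\hyb_1=1]=\Pr[\hyb_2=1]$.
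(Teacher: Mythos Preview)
Your proposal is correct and takes essentially the same approach as the paper: both argue that the map swapping the roles of $\xi_{a^*_j}$ and $\eta_j$ (sample one uniformly, define the other via the relation $\eta_j(\cdot)=C_j(\mu_1(\cdot),\ldots,\mu_\ell(\cdot))+\sum_{a\in\Delta_j}\xi_a(\cdot)$) is a measure-preserving bijection, so the two hybrids are identically distributed. The paper's proof is a two-sentence sketch (``reversing the order of how the polynomials are chosen produces the same distribution''), whereas you spell out the constant-term check $\xi_{a^*_j}(0)=0$, the independence across $j$ via distinctness of the $a^*_j$, and the equivalence of lazy versus upfront sampling of $\eta_j$ for $j>q^*$; these are exactly the details the paper suppresses.
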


\begin{proposition}\label{prop:hyb_2_hyb_3_bounded_fe}
If $\Sigma_{\one}$ is $1$-bounded adaptive certified everlasting simulation-secure,
\begin{align}
    \abs{\Pr[\hyb_{2}=1]-\Pr[\hyb_{3}=1]}\leq \negl(\lambda).
\end{align}
\end{proposition}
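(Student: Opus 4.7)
The plan is a standard hybrid argument over the indices $i \in [N]\setminus\mathcal{L}$, reducing each single switch to the $1$-bounded adaptive certified everlasting simulation-security of $\Sigma_{\one}$. Fix an ordering $i_1, i_2, \ldots, i_M$ of $[N]\setminus \mathcal{L}$, where $M = |[N]\setminus \mathcal{L}|$. Define intermediate hybrids $\hyb_{2,0}, \hyb_{2,1}, \ldots, \hyb_{2,M}$ so that in $\hyb_{2,k}$ the outputs attached to indices $i_1, \ldots, i_k$ are generated using $\ONE.\qSim = \ONE.(\qSim_1, \qSim_2, \qSim_3)$ (as in the description of $\qSim$ in the proof of \cref{thm:ever_bounded_fe}), while the outputs for $i_{k+1}, \ldots, i_M$ are still produced by the honest $\ONE.\qEnc$ and $\ONE.\keygen$. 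All outputs for $i \in \mathcal{L}$ are generated honestly in every hybrid. By construction $\hyb_{2,0} = \hyb_2$ and $\hyb_{2,M} = \hyb_3$, so it suffices to show $|\Pr[\hyb_{2,k-1} = 1] - \Pr[\hyb_{2,k} = 1]| \le \negl(\secp)$ for every $k\in[M]$.

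The crucial structural observation is that every index $i \notin \mathcal{L}$ lies in at most one set $\Gamma_j$: this is exactly what $\mathcal{L} = \bigcup_{j\neq j'}(\Gamma_j\cap \Gamma_{j'})$ encodes. Hence, across the whole experiment at most one functional key $\one.\sk_{C_j,\Delta_j,i_k}$ is ever issued for the coordinate $i_k$, which is compatible with the single-key challenger of $\Sigma_{\one}$. Moreover, in $\hyb_{2,k-1}$ the outputs at coordinate $i_k$ are the ``real'' outputs of $\Sigma_{\one}$ and in $\hyb_{2,k}$ they are the simulated ones, with every other coordinate being prepared identically (and efficiently, given the values of $\{\Gamma_j\}, \{\Delta_j\}, \{\mu_\ell\}, \{\xi_a\}, \{\eta_j\}$ that the reduction samples on its own as the challenger of $\hyb_2$ does).

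From these ingredients one builds a reduction $\qB$ to $\expd{\Sigma_{\one},\qB}{cert}{ever}{ada}{sim}(\secp,b)$ as follows. $\qB$ receives $\one.\MPK$ from its challenger, plants it as $\one.\MPK_{i_k}$, generates all other $(\one.\MPK_i,\one.\MSK_i)$ itself, and runs $\qA_1$. Whenever $\qA_1$ asks a key query $C_j$, $\qB$ samples $\Gamma_j,\Delta_j$ as in $\hyb_2$; if $i_k\in\Gamma_j$, it forwards $G_{C_j,\Delta_j}$ as its unique key query to the external challenger and uses the returned $\one.\sk_{C_j,\Delta_j,i_k}$ in $\sk_{C_j}$, while all other $\one.\sk_{C_j,\Delta_j,i}$ are generated with the honest $\one.\MSK_i$. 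On the challenge $x$, $\qB$ samples $\mu_1,\ldots,\mu_\ell,\xi_1,\ldots,\xi_S$ as prescribed in $\hyb_2$, submits $(\mu_1(i_k),\ldots,\mu_\ell(i_k),\xi_1(i_k),\ldots,\xi_S(i_k))$ as its challenge, and uses the returned $\one.\qct_{i_k}$ together with honestly generated $\{\one.\qct_i\}_{i\neq i_k}$ to form $\qct$; post-challenge key queries are handled analogously. When $\qA_1$ returns $\{\one.\cert_i\}_{i\in[N]}$, $\qB$ forwards $\one.\cert_{i_k}$ to the external verifier and checks the remaining $\one.\vk_i$ itself; if both pass, $\qB$ combines its own $\{\one.\MSK_i\}_{i\neq i_k}$ with the $\one.\MSK_{i_k}$ sent by its challenger and forwards $\{\one.\MSK_i\}_{i\in[N]}$ to $\qA_2$, finally outputting $\qA_2$'s bit $b'$. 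When $b=0$, $\qB$ perfectly simulates $\hyb_{2,k-1}$; when $b=1$, it perfectly simulates $\hyb_{2,k}$.

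The only thing to verify carefully is the matching of the timing structure: pre-challenge indices $i_k\in\Gamma_j$ for $j\le q^*$ lead to a pre-challenge key query to the external challenger (handled by $\ONE.\qSim_1$ in the $b=1$ case), post-challenge indices $j>q^*$ lead to a post-challenge key query (handled by $\ONE.\qSim_2$), and indices not in any $\Gamma_j$ lead to no key query (handled by $\ONE.\qSim_1$ on the empty view plus $\ONE.\qSim_3$ to produce the verification key). This exactly matches the three cases in the definition of $\qSim$, so the simulated distribution produced by $\qB$ under $b=1$ is identical to $\hyb_{2,k}$, as desired. The main subtlety (and the one potential obstacle) is precisely this bookkeeping showing that the simulator's interface for $\Sigma_{\one}$ is compatible with $\qB$'s view at coordinate $i_k$ regardless of whether $i_k$ appears in a pre-challenge, post-challenge, or no $\Gamma_j$; once this is checked, applying the $1$-bounded adaptive certified everlasting simulation-security of $\Sigma_{\one}$ to each of the $M$ hops and summing the negligible gaps yields the claim.
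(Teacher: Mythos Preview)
Your proposal is correct and follows essentially the same approach as the paper: a hybrid over the coordinates $i\in[N]\setminus\mathcal{L}$, using that each such $i$ lies in at most one $\Gamma_j$ so a single key query suffices, and a reduction that plants the external $\one.\MPK$ at the target coordinate, forwards the corresponding $G_{C_j,\Delta_j}$ and certificate, and simulates all other coordinates itself. The only cosmetic difference is that the paper indexes its sub-hybrids by $s\in[N]$ (declaring the step trivial when $s\in\mathcal{L}$) rather than enumerating $[N]\setminus\mathcal{L}$ directly; also, in your reduction sketch the phrase ``honestly generated $\{\one.\qct_i\}_{i\ne i_k}$'' should read ``generated as in $\hyb_{2,k-1}$'' (i.e., simulated for $i_1,\ldots,i_{k-1}$ and honest for $i_{k+1},\ldots,i_M$), but your earlier paragraph already makes this clear.
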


\begin{proposition}\label{prop:hyb_3_hyb_4_bounded_fe}
    $\Pr[\hyb_{3}=1]=\Pr[\hyb_{4}=1].$
\end{proposition}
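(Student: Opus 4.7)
The only change from $\hyb_{3}$ to $\hyb_{4}$ is that the constant terms of the polynomials $\mu_1,\ldots,\mu_\ell$ are switched from $(x[1],\ldots,x[\ell])$ to $(0,\ldots,0)$; every other random choice made by the challenger (and every invocation of the simulators of $\Sigma_\one$) is generated by the same procedure. Because the claim asserts \emph{exact} equality of probabilities rather than computational indistinguishability, the natural strategy is to show that the entire view of the adversary is identically distributed in the two hybrids, which will follow from the Shamir secret-sharing property combined with the abort condition $|\mathcal{L}|\le t$ introduced in $\hyb_{1}$.

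My first step would be to enumerate every place in which the polynomials $\mu_i$ enter the adversary's view. Inspection of $\qSim_1$, $\qSim_2$ and $\qSim_3$ shows that $\mu_i$ is used in only two ways: (i) the evaluations $(\mu_i(j))_{i\in[\ell],\,j\in\mathcal{L}}$ that are encrypted in the real ciphertexts $\one.\qct_j$ for $j\in\mathcal{L}$ and become visible once the adversary obtains $\{\one.\MSK_i\}_{i\in[N]}$ at the end; and (ii) the values $\xi_{a^*}(i)$ for $i\in\mathcal{L}$, which are defined implicitly through $\xi_{a^*}(i) = \eta_j(i) - C_j(\mu_1(i),\ldots,\mu_\ell(i)) - \sum_{a\in\Delta_j\setminus\{a^*\}}\xi_a(i)$ for each $j\in[q]$. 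Every other source of randomness used by the simulators (the polynomials $\xi_a$ for $a\ne a^*$, the $\eta_j$ values, the coins consumed by $\ONE.\qEnc$ and the $\ONE.\qSim$-algorithms) is drawn independently of $\mu$, so it suffices to argue that these two $\mu$-dependent families have the same joint distribution in $\hyb_{3}$ and $\hyb_{4}$.

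For family (i), the Shamir secret-sharing property yields the claim directly: a uniformly random degree-$t$ polynomial with a prescribed constant term, evaluated at $|\mathcal{L}|\le t$ distinct nonzero points of $[N]$, is uniformly distributed over $\mathbb{F}^{|\mathcal{L}|}$ regardless of the constant term (by invertibility of the associated Vandermonde system), and the $\mu_i$ are independent across $i\in[\ell]$. For family (ii), I would condition on family (i) and show that $\xi_{a^*}(i)$ is still uniform. The values $\eta_j(i)$ for $i\in\mathcal{L}$ are uniform and independent of $\mu$: for $j\le q^*$ this follows from the same Vandermonde argument applied to a uniformly random degree-$Dt$ polynomial with constant term $C_j(x)$ (using $|\mathcal{L}|\le t\le Dt$), while for $j>q^*$ the values are explicitly sampled uniformly inside $\qSim_1$; likewise the $\xi_a(i)$ for $a\ne a^*$ are uniform and independent of $\mu$. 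Hence the additive shift by $C_j(\mu_1(i),\ldots,\mu_\ell(i))$ is absorbed into the uniform randomness of $\eta_j(i)$, and the induced distribution of $\xi_{a^*}(i)$ is the same in both hybrids.

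The main obstacle I anticipate is bookkeeping: verifying that no other component of $\qSim_2$ or $\qSim_3$ secretly depends on $\mu$. The subtle case is the $\eta_j(i)$ values produced by $\qSim_2$ at $i\in\Gamma_j\setminus\mathcal{L}$ for post-challenge queries $j>q^*$; these arise from a random degree-$Dt$ polynomial constrained to have constant term $C_j(x)$ and to agree with $\eta_j|_{\mathcal{L}}$, but both the constraint data (the value $C_j(x)$ together with the already-sampled $\eta_j|_{\mathcal{L}}$) and the additional fresh randomness are independent of $\mu$, so these values are identically distributed across the two hybrids. Once this verification is complete, the joint distribution of the adversary's entire view coincides in $\hyb_{3}$ and $\hyb_{4}$, giving $\Pr[\hyb_{3}=1]=\Pr[\hyb_{4}=1]$.
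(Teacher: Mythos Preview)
Your proposal is correct and follows essentially the same approach as the paper: both arguments hinge on the observation that only the evaluations $\{\mu_i(j)\}_{j\in\mathcal{L}}$ (and the $\xi$-values determined from them) enter the adversary's view, and that these evaluations are identically distributed in the two hybrids by the Shamir secret-sharing property once $|\mathcal{L}|\le t$. Your treatment is more detailed---you explicitly trace the dependence through the $\eta_j$ masking and verify that $\qSim_2,\qSim_3$ introduce no additional $\mu$-dependence---whereas the paper compresses this into the single remark that the $\xi$-values on $\mathcal{L}$ depend only on the $\mu$-values on $\mathcal{L}$.
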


\begin{proof}[Proof of \cref{prop:hyb_0_hyb_1_bounded_fe}]
Let $\hybij{0}{\prime}$ be the experiment identical to $\hyb_{0}$ except that the challenger generates a set $\Gamma_i\in[N]$ and $\Delta_{i}\in [S]$ for $i\in\{q^*+1,\cdots ,q\}$
in step~\ref{step:enc_bounded_fe}.
It is clear that $\Pr[\hyb_{0}=1]=\Pr[\hybij{0}{\prime}=1]$. 

Let $\hybij{0}{*}$ be the experiment identical to $\hybij{0}{\prime}$ except that it outputs $\bot$ if $|\mathcal{L}|>t$.
It is clear that $\Pr[\hybij{0}{\prime}=1\wedge (|\mathcal{L}|\leq t)]=\Pr[\hybij{0}{*}=1\wedge (|\mathcal{L}|\leq t)]$.
Hence, it holds that
\begin{align}
    \abs{\Pr[\hybij{0}{\prime}=1]-\Pr[\hybij{0}{*}=1]}\leq \Pr[|\mathcal{L}|>t]
\end{align}
from \cref{lem:defference}.

Let $\mathsf{Collide}$ be the event that
there exists some $i\in[q]$ such that $\Delta_i \setminus (\bigcup_{j\neq i}\Delta_j)=\emptyset$.
$\hybij{0}{*}$ is identical to $\hyb_{1}$ when $\mathsf{Collide}$ does not occur.
Hence, it is clear that $\Pr[\hybij{0}{*}=1\wedge\overline{\mathsf{Collide}}]=\Pr[\hyb_{1}=1\wedge \overline{\mathsf{Collide}}]$.
Therefore, it holds that 
\begin{align}
    \abs{\Pr[\hybij{0}{*}=1]-\Pr[\hyb_{1}=1]}\leq \Pr[\mathsf{Collide}]
\end{align}
from \cref{lem:defference}.

From the discussion above, we have 
\begin{align}
    \abs{\Pr[\hyb_{0}=1]-\Pr[\hyb_{1}=1]}\leq \Pr[|\mathcal{L}|>t]+\Pr[\mathsf{Collide}].
\end{align}
The following \cref{lem:small_pair,lem:coverfree} shows that 
$\Pr[|\mathcal{L}|>t]\leq 2^{-\Omega(\lambda)}$
and $\Pr[\mathsf{Collide}]\leq q2^{-\Omega(\lambda)}$, which completes the proof.
\end{proof}

\begin{lemma}[\cite{C:GorVaiWee12}]\label{lem:small_pair}
Let $\Gamma_1,\cdots ,\Gamma_q\subseteq [N]$ be randomly chosen subsets of size $tD+1$.
Let $t=\Theta(q^2\lambda)$ and $N=\Theta(D^2q^2 t)$.
Then, 
\begin{align}
    \Pr\left[\left|\bigcup_{i\neq i'}(\Gamma_i\cap \Gamma_i)\right|>t\right]\leq2^{-\Omega(\lambda)}
\end{align}
where the probability is over the random choice of the subsets $\Gamma_1,\cdots,\Gamma_q$.
\end{lemma}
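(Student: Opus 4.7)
\textbf{Proposal for the proof of \cref{lem:small_pair}.}
The plan is to bound $|\mathcal L|$ elementwise via linearity of expectation and then deduce concentration from the independence of the subsets $\Gamma_1,\ldots,\Gamma_q$. First I would rewrite the event as a sum of indicators: for each $x\in[N]$ let $Y_x=\mathds{1}[x\in\Gamma_i\cap\Gamma_{i'}\text{ for some }i\neq i']$, so that $|\mathcal L|=\sum_{x\in[N]}Y_x$. Since the subsets are drawn independently and each is a uniformly random $(tD{+}1)$-subset of $[N]$, for any fixed $x$ and any $i$ we have $\Pr[x\in\Gamma_i]=(tD{+}1)/N$, and by a union bound over the $\binom{q}{2}$ pairs,
\[
\Pr[Y_x=1]\ \le\ \binom{q}{2}\!\left(\frac{tD+1}{N}\right)^{\!2}\ \le\ \frac{q^2(tD+1)^2}{2N^2}.
\]
Plugging in $N=C\cdot D^2q^2 t$ for a sufficiently large constant $C$ gives $\mathbb E[|\mathcal L|]\le N\Pr[Y_x=1]\le t/(2C)$, which is already far below the threshold $t$.

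Next I would turn this expectation bound into a tail bound. The natural obstacle is that the $Y_x$'s are not independent across $x$ because each $\Gamma_i$ has a fixed cardinality. My plan to get around this is to exploit that indicators of membership in a fixed-size uniform random subset are \emph{negatively associated} (a standard fact, e.g., from Dubhashi--Ranjan), and negative association is closed under monotone coordinate-wise functions and independent products. Hence the family $\{Y_x\}_{x\in[N]}$ (each a monotone function of the negatively associated membership indicators of the independent $\Gamma_i$'s) is negatively associated, which allows applying the multiplicative Chernoff bound as in the independent case. With $\mu\seteq\mathbb E[|\mathcal L|]\le t/(2C)$ and threshold $t=(2C)\mu$, Chernoff yields $\Pr[|\mathcal L|>t]\le \exp(-\Omega(t))=2^{-\Omega(t)}$, and since $t=\Theta(q^2\secp)\ge\Omega(\secp)$, this is $2^{-\Omega(\secp)}$ as required.

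As a backup in case the negative-association step is deemed too slick, I would fall back on a direct union bound: for any fixed $S\subseteq[N]$ with $|S|=t+1$, assign to each $x\in S$ a pair $(i_x,i'_x)\in\binom{[q]}{2}$ with $x\in\Gamma_{i_x}\cap\Gamma_{i'_x}$; by independence of the $\Gamma_i$'s and the simple bound $\Pr[x\in\Gamma_i]=(tD{+}1)/N$, one gets
\[
\Pr[|\mathcal L|>t]\ \le\ \binom{N}{t+1}\binom{q}{2}^{\!t+1}\!\left(\frac{tD+1}{N}\right)^{\!2(t+1)}\!\!,
\]
and standard manipulations using $\binom{N}{t+1}\le(eN/(t{+}1))^{t+1}$ together with $N=\Theta(D^2q^2 t)$ show the right-hand side is $2^{-\Omega(t)}=2^{-\Omega(\secp)}$. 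The main subtlety in either route is checking the parameter arithmetic carefully so that the base of the exponent is strictly less than $1$; the choice of a large enough absolute constant $C$ in $N=CD^2q^2 t$ provides the slack needed.
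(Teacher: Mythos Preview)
The paper does not prove this lemma: it is quoted verbatim as a result from \cite{C:GorVaiWee12} and used as a black box in the proof of \cref{prop:hyb_0_hyb_1_bounded_fe}. So there is no in-paper argument to compare against.

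Your proposal is a sound way to prove it. The expectation computation is correct, and the negative-association route (membership indicators of a uniform fixed-size subset are NA; NA is preserved under independent concatenation and under monotone functions of disjoint coordinate blocks; Chernoff then applies) is exactly the standard tool here and gives $2^{-\Omega(t)}\le 2^{-\Omega(\secp)}$.

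One small remark on your backup union-bound route: the justification ``by independence of the $\Gamma_i$'s and the simple bound $\Pr[x\in\Gamma_i]=(tD{+}1)/N$'' is not quite complete, since for a fixed $i$ the events $\{x\in\Gamma_i\}_{x\in S}$ are \emph{not} independent. What saves you is that they are negatively correlated: $\Pr[T\subseteq\Gamma_i]\le((tD{+}1)/N)^{|T|}$ whenever $tD{+}1\le N$. Grouping the constraints by which $\Gamma_i$ they involve, using independence across $i$, and this negative-correlation bound within each $\Gamma_i$, yields the claimed factor $((tD{+}1)/N)^{2(t+1)}$. After that your parameter arithmetic goes through with any constant $C>e/2$ in $N=CD^2q^2t$. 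This is a one-line fix, not a real gap.
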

\begin{lemma}[\cite{C:GorVaiWee12}]\label{lem:coverfree}
Let $\Delta_1,\cdots,\Delta_q\subseteq[S]$ be randomly chosen subsets of size $v$.
Let $v(\lambda)=\Theta(\lambda)$ and $S(\lambda)=\Theta(vq^2)$.
Let $\mathsf{Collide}$ be the event that there exists some $i\in[q]$ such that $\Delta_i\backslash(\bigcup_{j\neq i}\Delta_j)=\emptyset$.
Then, we have
\begin{align}
    \Pr\left[\mathsf{Collide}\right]\leq q2^{-\Omega(\lambda)}
\end{align}
where the probability is over the random choice of subsets $\Delta_1,\cdots ,\Delta_q$.
\end{lemma}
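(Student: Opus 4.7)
The plan is to prove the bound by a union bound over $i \in [q]$, combined with a counting estimate on the probability that a single random size-$v$ subset is contained in a fixed set of size at most $(q-1)v$. Concretely, for each fixed $i$, let $U_i \seteq \bigcup_{j \neq i} \Delta_j$. Then the event ``$\Delta_i \setminus (\bigcup_{j \neq i}\Delta_j) = \emptyset$'' is exactly the event $\Delta_i \subseteq U_i$, and $\mathsf{Collide} = \bigcup_{i \in [q]} \{\Delta_i \subseteq U_i\}$.

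First I would fix an index $i \in [q]$ and condition on the values of $\{\Delta_j\}_{j\neq i}$. Under this conditioning, $U_i$ is a fixed subset of $[S]$ with $|U_i| \leq (q-1)v$, while $\Delta_i$ remains a uniformly random size-$v$ subset of $[S]$. Therefore
\begin{align}
\Pr[\Delta_i \subseteq U_i \mid \{\Delta_j\}_{j\neq i}] \;\leq\; \frac{\binom{(q-1)v}{v}}{\binom{S}{v}} \;\leq\; \left(\frac{(q-1)v}{S-v}\right)^{v},
\end{align}
using the standard inequality $\binom{a}{v}/\binom{b}{v} \leq (a/(b-v))^v$ for $a \leq b$. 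Plugging in $S(\lambda) = \Theta(vq^2)$, the ratio $(q-1)v/(S-v)$ is at most $c/q$ for some constant $c < 1$ once $\lambda$ is large enough, so this conditional probability is at most $(c/q)^v$.

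Next I would take expectation over $\{\Delta_j\}_{j\neq i}$ (the bound is uniform in that conditioning, so it carries through), and apply the union bound:
\begin{align}
\Pr[\mathsf{Collide}] \;\leq\; \sum_{i=1}^{q} \Pr[\Delta_i \subseteq U_i] \;\leq\; q \cdot (c/q)^{v}.
\end{align}
Since $v(\lambda) = \Theta(\lambda)$, we have $(c/q)^v \leq 2^{-v \log(q/c)} \leq 2^{-\Omega(\lambda)}$ (the $\log(q/c)$ factor only helps, and even the trivial bound $c^v = 2^{-\Omega(\lambda)}$ suffices). This yields $\Pr[\mathsf{Collide}] \leq q \cdot 2^{-\Omega(\lambda)}$, as required.

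The only mildly subtle step is the counting inequality $\binom{(q-1)v}{v}/\binom{S}{v} \leq ((q-1)v/(S-v))^v$ and making sure the chosen constants in $S = \Theta(vq^2)$ give a ratio strictly less than $1$ so that the exponential decay in $v$ kicks in; this is routine and follows directly from the hypothesis $S = \Theta(vq^2)$. No deeper combinatorial argument is needed, and the whole proof is essentially a one-line union bound once the containment event is correctly identified.
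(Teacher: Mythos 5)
Your proof is correct, and it is essentially the standard argument for this cover-freeness lemma: the paper itself gives no proof and simply cites Gorbunov--Vaikuntanathan--Wee, whose argument is exactly your union bound over $i$ combined with the counting estimate $\Pr[\Delta_i \subseteq U_i] \le \binom{(q-1)v}{v}/\binom{S}{v}$. The only caveat, which you already flag, is that the $\Theta$-notation in the hypothesis must be read as ``the constants are chosen so that $(q-1)v/(S-v)$ is bounded away from $1$,'' after which the $2^{-\Omega(\lambda)}$ decay follows from $v=\Theta(\lambda)$.
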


\begin{proof}[Proof of \cref{prop:hyb_1_hyb_2_bounded_fe}]
In the encryption in $\hyb_{1}$, $\xi_{a^*}$ is chosen at random and $\eta_j(\cdot)$ is defined by the relation.
$\qSim$ essentially chooses $\eta_j(\cdot)$ at random which defines $\xi_{a^*}$.
It is easy to see that reversing the order of how the polynomials are chosen produces the same distribution.
\end{proof}

\begin{proof}[Proof of \cref{prop:hyb_2_hyb_3_bounded_fe}]
To prove the proposition, let us define a hybrid experiment $\hyb_{2}{s}$ for each $s\in[N]$ as follows.
\begin{description}
\item[$\hybij{2}{s}$:] This is identical to $\hyb_{2}$ except for the following three points.
First, the challenger generates $\{\one.\qct_i\}_{i\in[s]\setminus\mathcal{L}}$ as in the simulator $\qSim_1$.
Second, the challenger generates $\{\one.\sk_{C_j,\Delta_j,i}\}_{i\in\Gamma_j\cap [s]}$ for $j\in\{q^*+1,\cdots ,q'\}$ as in the simulator $\qSim_2$,
where $q'$ is the number of key queries that the adversary makes in total.
Third, the challenger generates $\{\one.\vk_i\}_{i\in[s]\setminus\mathcal{L}}$ as in the simulator $\qSim_3$.
\end{description}
Let us denote $\hyb_{2}$ as $\hybij{2}{0}$.
It is clear that $\Pr[\hybij{2}{N}=1]=\Pr[\hyb_{3}=1]$.
Furthermore, we can show that
\begin{align}
    \abs{\Pr[\hybij{2}{s-1}=1]-\Pr[\hybij{2}{s}=1]}\leq \negl(\lambda)
\end{align}
for $s\in[N]$. (Its proof is given later.)
From these facts, we obtain \cref{prop:hyb_2_hyb_3_bounded_fe}.

Let us show the remaining one.
In the case $s\in\mathcal{L}$, it is clear that $\hybij{2}{s-1}$ is identical to $\hybij{2}{s}$.
Hence, we consider the case $s\notin\mathcal{L}$.
To show the inequality above, let us assume that $\abs{\Pr[\hybij{2}{s-1}=1]-\Pr[\hybij{2}{s}=1]}$ is non-negligible.
Then, we can construct an adversary $\qB$ that can break the $1$-bounded adaptive certified everlasting simulation-security of $\Sigma_{\one}$ as follows.

\begin{enumerate}
    \item $\qB$ receives $\one.\MPK$ from the challenger of $\expd{\Sigma_{\mathsf{one}},\qA}{cert}{ever}{ada}{sim}(\secp,b)$.
    $\qB$ sets $\one.\MPK_s\seteq \one.\MPK$.
    \item $\qB$ generates $(\one.\MPK_i,\one.\MSK_i)\la\ONE.\Setup(1^\secp)$ for all $i\in[N] \setminus s$, and sends $\{\one.\MPK_i\}_{i\in[N]}$ to $\qA_1$.
    \item $\qA_1$ is allowed to call key queries at most $q$ times.
    For the $j$-th key query, $\qB$ receives an function $C_{j}$ from $\qA_1$, generates a uniformly random set $\Gamma_{j}\in[N]$ of size $Dt+1$ and $\Delta_{j}\in [S]$ of size $v$.
    For $i\in \Gamma_{j} \setminus s$, $\qB$ generates $\mathsf{one}.\sk_{C_{j},\Delta_{j},i}\la\mathsf{ONE}.\keygen(\one.\MSK_i,G_{C_{j},\Delta_{j}})$.
    If $s\in\Gamma_j$, $\qB$ sends $G_{C_j,\Delta_j}$ to the challenger,
    receives $\one.\sk_{C_j,\Delta_j,s}$ from the challenger,
    and sends $(\Gamma_{j},\Delta_{j},\{\one.\sk_{C_{j},\Delta_{j},i}\}_{i\in\Gamma_{j}})$ to $\qA_1$.
    Let $q^*$ be the number of times that $\qA_1$ has called key queries in this step.
    \item $\qA_1$ chooses $x\in\Ms$, and sends $x$ to $\qB$.
    \item\label{step:enc_bounded_fe_hyb_2} $\qB$ generates uniformly random set $\Gamma_i\in[N]$ of size $Dt+1$ and $\Delta_{i}\in [S]$ of size $v$ for $i\in\{q^*+1,\cdots,q\}$.
    $\qB$ generates a random degree $t$ polynomial $\mu_i(\cdot)$ whose constant term is $x[i]$ for $i\in[\ell]$, and $\xi_1,\cdots,\xi_S, \eta_1,\cdots,\eta_q$ as in the simulator $\qSim_1$.
    For $i\in[s-1] \setminus \mathcal{L}$, $\qB$ generates $\one.\qct_i$ as in the simulator $\qSim_1$.
    For $i\in\{s+1,\cdots N\}\cup \mathcal{L}$, $\qB$ generates $(\one.\vk_i,\one.\qct_i)\la \ONE.\qEnc(\one.\MPK_i, (\mu_1(i),\cdots,\mu_\ell(i),\xi_1(i),\cdots, \xi_S(i)))$.
    $\qB$ sends $\mu_1(s),\cdots,\mu_\ell(s),\allowbreak\xi_1(s),\cdots ,\xi_S(s)$ to the challenger, and receives $\one.\qct_s$ from the challenger. 
    $\qB$ sends $\{\one.\qct_i\}_{i\in[N]}$ to $\qA_1$.
    \item $\qA_1$ is allowed to call key queries at most $q-q^*$ times.
    For the $j$-th key query, $\qB$ receives an function $C_j$ from $\qA_1$.
    For $i\in\Gamma_j \setminus [s]$, $\qB$ generates $\one.\sk_{C_j,\Delta_j,i}\la\ONE.\keygen(\one.\MSK_i,G_{C_j,\Delta_j})$.
    For $i\in \Gamma_j\wedge[s-1]$, $\qB$ generates $\one.\sk_{C_j,\Delta_j,i}$ as in the simulator $\qSim_2$.
    If $s\in\Gamma_j$, $\qB$ sends $G_{C_j,\Delta_j}$ to the challenger, and receives $\one.\sk_{C_j,\Delta_j,s}$ from the challenger.
    $\qB$ sends $(\Gamma_j,\Delta_j,\{\one.\sk_{C_j,\Delta_j,i}\}_{i\in \Gamma_j})$ to $\qA_1$.
    \item For $i\in[s-1] \setminus \mathcal{L}$, $\qB$ generates $\one.\vk_i$ as in the simulator $\qSim_3$
    \footnote{
    For $i\in\{s+1,\cdots N\}\cup \mathcal{L}$, $\qB$ generated $\one.\vk_i$ in step~\ref{step:enc_bounded_fe_hyb_2}.
    }.
    \item $\qA_1$ sends $\{\one.\cert_i\}_{i\in[N]}$ to $\qB$ and its internal state to $\qA_2$.
    \item $\qB$ sends $\one.\cert_s$ to the challenger, and receives $\one.\MSK_s$ or $\bot$ from the challenger.
    $\qB$ computes $\ONE.\Vrfy(\one.\vk_i,\allowbreak \one.\cert_i)$ for all $i\in[N] \setminus s$.
    If the results are $\top$ and $\qB$ receives $\one.\MSK_s$ from the challenger, $\qB$ sends $\{\one.\MSK_i\}_{i\in[N]}$ to $\qA_2$.
    Otherwise, $\qB$ aborts.
    \item $\qA_2$ outputs $b'$.
    \item $\qB$ outputs $b'$.
\end{enumerate}
It is clear that $\Pr[1\la\qB \mid b=0]=\Pr[\hybij{2}{s-1}=1]$ and $\Pr[1\la\qB\mid b=1]=\Pr[\hybij{2}{s}=1]$.
By assumption, $\abs{\Pr[\hybij{2}{s-1}=1]-\Pr[\hybij{2}{s}=1]}$ is non-negligible, and therefore $\abs{\Pr[1\la\qB \mid b=0]-\Pr[1\la\qB \mid b=1]}$ is non-negligible, which contradicts the $1$-bounded adaptive certified everlasting simulation-security of $\Sigma_{\one}$.
\end{proof}

\begin{proof}[Proof of \cref{prop:hyb_3_hyb_4_bounded_fe}]
In $\hyb_{3}$, the polynomials $\mu_1,\cdots ,\mu_\ell$ are chosen with constant terms $x_1,\cdots ,x_\ell$, respectively.
In $\hyb_{4}$, these polynomials are now chosen with $0$ constant terms.
This only affects the distribution of $\mu_1,\cdots, \mu_\ell$ themselves and polynomials $\xi_1,\cdots,\xi_S$.
Moreover, only the evaluations of these polynomials on the points in $\mathcal{L}$ affect the outputs of the experiments.
Now observe that:
\begin{itemize}
    \item The distribution of the values $\{\mu_1(i),\cdots, \mu_\ell(i)\}_{i\in\mathcal{L}}$ are identical to both $\hyb_{3}$ and $\hyb_{4}$.
    This is because in both experiments, we choose these polynomials to be random degree $t$ polynomials (with different constraints in the constant term),
    so their evaluation on the points in $\mathcal{L}$ are identically distributed, since $|\mathcal{L}|\leq t$.
    \item The values $\{\xi_1(i),\cdots ,\xi_S(i)\}_{i\in\mathcal{L}}$ depend only on the values $\{\mu_1(i),\cdots,\mu_{\ell}(i)\}_{i\in\mathcal{L}}$.
\end{itemize}
\cref{prop:hyb_3_hyb_4_bounded_fe} follows from these observations.
\end{proof}

\subsection{Discussion on \texorpdfstring{$q$}{q}-Bounded Consturction for All Circuits}\label{sec:discussion_bounded_FE_all_circuits}
We discuss technical hurdles to achieve certified everlasting secure bounded collusion-resistant FE for $\Ppoly$ from standard PKE.

Gorbunov, Vaikuntanathan, and Wee~\cite{C:GorVaiWee12} presented a conversion from FE for $\NCone$ to FE for $\Ppoly$ by using randomized encoding or FHE. However, we cannot directly apply their techniques in the certified everlasting setting.
When we use randomized encoding, we use a functional decryption key for circuit $G_{f}$ that takes $m$ as an input and outputs a randomized encoding $\wtl{f(m)}$.\footnote{For simplicity, we ignore how to set randomness for randomized encoding here since it is not an essential issue.}
That is, we can obtain $\wtl{f(m)}$ (and $f(m)$ via a decoding algorithm) from the functional decryption key and ciphertext of $m$ since randomized encoding is computable in a constant-depth circuit~\cite{CC:AppIshKus06}.

The first problem is that even if we use \emph{certified everlasting secure} FE for $\NCone$, information about $m$ remains in $\wtl{f(m)}$ since the decryption result does not directly provide $f(m)$. More specifically, adversaries can keep $\wtl{f(m)}$ (this is classical information) before deletion and an unbounded adversary could recover $m$ from $\wtl{f(m)}$ even after $\Enc(m)$ was deleted.

The second problem is that we cannot use certified everlasting secure randomized encoding to solve the first problem since we use FE for \emph{classical} circuits here. In certified everlasting secure randomized encoding, $\wtl{f(m)}$ must be quantum state, which cannot be supported by FE for classical circuits. We do not have certified everlasting secure FE that supports quantum circuits computing quantum state. Moreover, we do not know how to achieve certified everlasting secure randomized encoding. Thus, the approach using randomized encoding does not work.

The approach using FHE also has problems. In this approach, we consider a functional decryption key for circuit $G_f$ that takes an FHE ciphertext $\fhe.\ct$ and an FHE decryption key $\fhe.\sk$ and outputs $\fhe.\ct$ and $\FHE.\Dec(\fhe.\sk,\FHE.\Eval(f,\fhe.\ct))$. Here, we must output $\fhe.\ct$ as the public part because we use FE for $\NCone$ and need to apply $f\in \Ppoly$ by $\FHE.\Eval$ in the public part (though the FHE decryption part is in $\NCone$).\footnote{See~\cite{C:GorVaiWee12} for the detail.} That is, the FHE part must be also certified everlasting secure.

First, we cannot use certified everlasting secure FHE in a black-box way. We need to encrypt an FHE ciphertext by FE for $\NCone$ in this approach. However, if FHE is certified everlasting secure, a ciphertext is quantum state, which is not supported by our certified everlasting secure FE for $\NCone$.

Second, even if we use certified everlasting secure FHE in a non-black-box way like our compute-and-compare obfuscation construction in~\cref{sec:CCO_CED_const} (by separating the classical FHE part from the BB84 state), the approach does not work due to the following reason. To achieve certified everlasting security, $\fhe.\ct$ is an encryption of $m \xor \bigoplus_{i}\theta_i$ where $\theta$ is a basis choice as in~\cref{sec:CCO_CED_const}. To unmask $\bigoplus_i \theta_i$, we need to coherently apply $f$ to $\fhe.\ct$ and BB84 state as the certified everlasting secure FHE by Bartusek and Khurana~\cite{myC:BarKhu23}. However, we cannot execute the coherent evaluation in the FE decryption mechanism (cannot take BB84 state as input). Hence, we obtain $f(m \xor \bigoplus_i \theta_i)$ and the correctness does not hold. Thus, the approach using FHE does not work too.

Another plausible (but failed) approach is using the framework by Ananth and Vaikuntanathan~\cite{TCC:AnaVai19}. They constructed bounded collusion-resistant FE for $\Ppoly$ \emph{without the bootstrapping method} by Gorbunov et al.~\cite{C:GorVaiWee12}. However, their construction heavily relies on a secure multi-party computation protocol based on \emph{PRG}. It is hard to define certified everlasting security for PRG because there is nothing to delete. Thus, it is unclear how to use their framework in the certified everlasting setting.

Therefore, previous approaches for converting FE for $\NCone$ to FE for $\Ppoly$ do not work in the certified everlasting setting.

\section{Compute-and-Compare Obfuscation with Certified Everlasting Deletion}\label{sec:ccobf}

\subsection{Definition}\label{sec:def_CCObf_CED}
In this section, we introduce the notion of compute-and-compare obfuscation with certified everlasting security.

\begin{definition}[Compute-and-Compare Obfuscation with Certified Everlasting Deletion (Syntax)]\label{def:CCObf_CED_syntax}
A compute-and-compare obfuscation with certified everlasting deletion is a tuple of algorithms $(\qCCObf,\qdel,\vrfy)$ for the family of distributions $D=\{D_\param\}_\param$ and message space $\cM$.
\begin{description}
	\item[$\qCCObf(1^{\secp}, P, \lock, m)$:] The obfuscation algorithm takes as input a security parameter $1^{\secp}$, a circuit $P$, a lock string $\lock \in \{0, 1\}^{p(\secp)}$ and a message $m \in \cM$, and outputs an obfuscated circuit $\tlqP$ and a verification key $\vk$.

	\item[$\qdel(\tlqP) \rightarrow \cert$:] The deletion algorithm takes as input an obfuscated circuit $\tlqP$ and outputs a classical certificate $\cert$.
	
	\item[$\vrfy(\vk, \cert) \rightarrow \top \textbf{ or }\bot$:] The verification algorithm takes as input the verification key $\vk$ and a certificate $\cert$, and outputs $\top$ or $\bot$.
\end{description}
\end{definition}

\begin{definition}[Correctness of Compute-and-Compare Obfuscation with Certified Everlasting Deletion]\label{def:CCObf_CED_correctness}
The correctness of compute-and-compare obfuscation with certified everlasting deletion for the family of distributions $D=\setbk{D_\param}_{\param}$ and message space $\cM$ is defined as follows.
\begin{description}
\item[Functionality Preserving:] There exists a negligible function $\negl$ such that for all circuit $P$, all lock value $\lock$, and all message $\msg \in \cM$, it holds that
\begin{align}
\Pr[\forall x, \tlqP(x)=\cnc{P}{\lock,\msg}(x) \mid \tlqP\gets \qCCObf(1^\secp,P,\lock,\msg)]\ge1-\negl(\secp).
\end{align}

\item[Verification Correctness:] There exists a negligible function $\negl$ such that for all circuit $P$, all lock value $\lock$, and all message $\msg\in\cM$, it holds that
	\begin{align}
		\Pr\left[ \vrfy(\vk, \cert) \ne \top \ \middle| \begin{array}{l} (\tlqP, \vk)\leftarrow \qCCObf(1^{\secp}, P, \lock, \msg)\\
			\cert \leftarrow \qdel(\tlqP)\\
		 \end{array} \right] \le \negl(\secp).
	\end{align}
\end{description}
\end{definition}

\begin{definition}[Certified Everlasting Security of Compute-and-Compare Obfuscation]\label{def:CCObf_CED_security}
Let $\Sigma_\CCO=(\qCCObf,\qdel, \vrfy)$ be a compute-and-compare obfuscation with certified everlasting deletion for the family of distributions $D=\setbk{D_\param}_{\param}$ and a message space $\cM$. We consider experiments $\EV\expb{\Sigma_{\CCO},~ \qA}{sim}{ccobf}(\secp, b)$ and $\C\expb{\Sigma_{\CCO},~ \qA}{sim}{ccobf}(\secp, b)$ played between a challenger and a non-uniform QPT adversary $\qA = \{\qA_{\secp}, \ket{\psi}_{\secp}\}_{\secp \in \mbb{N}}$. Let $\qsim$ be a QPT algorithm. The experiments are defined as follows:
	\begin{enumerate}
		\item $\qA_{\secp}(\ket{\psi}_{\secp})$ submits a message $\msg \in \cM$ to the challenger.
		\item The challenger chooses $(P,\lock,\qaux) \chosen D_\param$.
		\item The challenger computes $(\tlqP^{(0)}, \vk^{(0)}) \leftarrow \qCCObf(1^{\secp}, P, \lock, \msg)$ or $(\tlqP^{(1)}, \vk^{(1)}) \leftarrow \qsim(1^{\secp}, \pp_P, 1^{|\msg|})$ and sends $(\tlqP^{(b)},\qaux)$ to $\qA_{\secp}$ according to the bit $b$. Recall that a program $P$ has an associated set of parameters $\pp_P$ (input size, output size, circuit size) which we do not need to hide.
		\item $\qA_{\secp}$ submits a certificate of deletion $\cert$ and its internal state $\rho$ to the challenger.
		\item The challenger computes $\vrfy(\vk^{(b)}, \cert)$. If the outcome is $\top$, the experiment $\EV\expb{\Sigma_{\CCO},~ \qA}{sim}{ccobf}(\secp, b)$ outputs $\rho$; otherwise if the outcome is $\bot$ then $\EV\expb{\Sigma_{\CCO},~ \qA}{sim}{ccobf}(\secp, b)$ outputs $\bot$ and ends.
		\item The challenger sends the outcome of $\vrfy(\vk^{(b)}, \cert)$ to $\qA_{\secp}$.
		\item $\qA_{\secp}$ outputs its guess $b' \in \{0, 1\}$ which is the output of the experiment $\C\expb{\Sigma_{\CCO},~ \qA}{sim}{ccobf}(\secp, b)$.
	\end{enumerate}
	We say that the $\Sigma_{\CCO}$ is certified everlasting  secure if for any non-uniform QPT adversary $\qA = \{\qA_{\secp}, \ket{\psi}_{\secp}\}_{\secp \in \mbb{N}}$, it holds that 
	\begin{align}
		\TD(\EV\expb{\Sigma_{\CCO},~ \qA}{sim}{ccobf}(\secp, 0), \EV\expb{\Sigma_{\CCO},~ \qA}{sim}{ccobf}(\secp, 1)) \le \negl(\secp),
	\end{align}	
and 	
	\begin{align}
		\left|\Pr[\C\expb{\Sigma_{\CCO},~ \qA}{sim}{ccobf}(\secp, 0) = 1] - \Pr[\C\expb{\Sigma_{\CCO},~ \qA}{sim}{ccobf}(\secp, 1) = 1]\right| \le \negl(\secp).
	\end{align}
\end{definition}


\subsection{Construction}\label{sec:CCO_CED_const}
In this section, we construct a compute-and-compare obfuscation with certified everlasting deletion from classical compute-and-compare obfuscation and FHE.

\paragraph{Ingredients.} We use the following building blocks.
\begin{enumerate}
	\item $\Sigma_{\msf{fhe}} = \FHE.(\keygen, \enc, \Eval, \dec)$ be a classical FHE scheme.
	\item $\Sigma_{\CCO} = \CCObf$ be a classical compute-and-compare obfuscation scheme. 
\end{enumerate}

\paragraph{Certified everlasting compute-and-compare obfuscation for multi-bit message.} We construct $\Sigma_{\CECCO} = (\qCCObf, \allowbreak \qdel, \vrfy)$ for the family of distribution $D=\setbk{D_\param}_\param$ and message space $\cM$. We let the message space $\cM \seteq  \{0, 1\}^n $.

\begin{description}
\item[$\qCCObf(1^{\secp}, P, \lock, \msg)$:]$\vspace{0.01cm}$
\begin{enumerate}
	\item Sample $R \leftarrow \{0, 1\}^{\secp}$.
	\item Sample $(\fpk, \fsk) \leftarrow \FHE.\keygen(1^{\secp})$.
	\item Compute $\tlfDec \leftarrow \CCObf(1^{\secp}, \fDec, R, 1)$ where $\fDec(\cdot) = \FHE.\dec(\fsk, \cdot)$.
	\item Compute $\tlI \leftarrow \CCObf(1^{\secp}, I, \lock, R)$ where $I(X) = X$ for every $X$.
	\item Represent $(P \concat \tlI) = (b_1, \ldots, b_{\ell})  \in \{0, 1\}^{\ell}$.
	\item Sample $\bm{\theta}_{i}, \bm{z}_{i} \leftarrow \{0, 1\}^{\secp}$ for all $i \in [\ell]$.
	\item Set $\wt{b}_{i} \seteq b_{i} \oplus \bigoplus_{j: \theta_{i, j} = 0}z_{i, j}$ for all $i \in [\ell]$.
	\item Denote $\msg = (\msg_1, \dots, \msg_n) \in \{0, 1\}^n$.
	\item Sample $\bm{\theta}_{\ell+k}, \bm{z}_{\ell+k} \leftarrow \{0, 1\}^{\secp}$ for all $k \in [n]$.
	\item Set $\wt{b}_{\ell+k} \seteq \msg_{k} \oplus \bigoplus_{j: \theta_{\ell+k, j} = 0}z_{\ell+k, j}$ for all $k \in [n]$.
	\item Compute $\fct_{i} \leftarrow \FHE.\enc(\fpk, (\bm{\theta}_{i}, \wt{b}_{i}))$ for all $i \in [\ell+n]$.
	\item Output $\tlqP \seteq (\tlfDec, \{(\ket{\bm{z}_{i}}_{\bm{\theta}_{i}}, \fct_{i})\}_{ i\in [\ell+n]}, \fpk)$ and $\vk\seteq (\{\bm{z}_{i}, \bm{\theta}_{i}\}_{i \in [\ell+n]})$. 
\end{enumerate}

\item[How to evaluate $\tlqP(x)$:]$\vspace{0.01cm}$
\begin{enumerate}
	\item Parse $\tlqP \seteq (\tlfDec, \{(\ket{\bm{z}_{i}}_{\bm{\theta}_{i}}, \fct_{i})\}_{ i\in [\ell+n]}, \fpk)$.
	\item  Define a circuit $\wh{U}_{x}$ as in Figure \ref{fig:lobf2}.
	\item  To compute an evaluated ciphertext for $\FHE.\Eval(\fpk, \wh{U}_{x}, \cdot)$, apply $\wh{U}_{x}$ homomorphically in superposition with the input $(\{(\ket{\bm{z}_i}_{\bm{\theta}_i}, \fct_i)\}_{i\in [\ell]}, (\ket{\bm{z}_{\ell+k}}_{\bm{\theta}_{\ell+k}}, \fct_{\ell+k}))$ and obtain a ciphertext $\ket{\fct_{\ell+k, P}}$ for each $k \in [n]$.
	\item Apply $\tlfDec(\cdot)$ in superposition with the input $\ket{\fct_{\ell+k, P}}$ and measure the output register in the standard basis to get a classical outcome $\beta_k$ for each $k \in [n]$.
	\item Set $\msg_k = 1$ if $\beta_k = 1$, else set $\msg_k = 0$, for each $k \in [n]$. 
	\item Output $\msg = (\msg_1, \dots, \msg_n)$.
\end{enumerate}

\begin{figure}
	\begin{framed}
		   \begin{center}
			   \underline{Circuit $\wh{U}_{x}$}
			   \end{center}

		\textbf{Hardwire:}~ $x$\newline
		\textbf{Input:}~ $(\{(\bm{z}_{i}, \bm{\theta}_{i}, \wt{b}_i)\}_{i \in [\ell]}, (\bm{z}_{\ell+k}, \bm{\theta}_{\ell+k}, \wt{b}_{\ell+k}))$ 
		\begin{enumerate}
			\item Compute $b_i \seteq \wt{b}_i \oplus \bigoplus_{j: \theta_{i, j} = 0}z_{i, j}$ for all $i \in [\ell]$.
			\item Reconstruct $(C \concat \tlI)$ from $(b_1, \ldots, b_{\ell})$.
			\item Compute $m_{k} \seteq \wt{b}_{\ell+k} \oplus \bigoplus_{j: \theta_{\ell+k, j} = 0}z_{\ell+k, j}$.
			\item Output $m_{k} \cdot \tlI(C(x))$
		\end{enumerate}
	\end{framed}
	\caption{The description of the circuit $\wh{U}_{x}$}
	\label{fig:lobf2}
\end{figure}

\item[$\qdel(\tlqP)$:]$\vspace{0.01cm}$
\begin{enumerate}
	\item Parse $\tlqP \seteq (\tlfDec, \{(\ket{\bm{z}_{i}}_{\bm{\theta}_{i}}, \fct_{i})\}_{ i\in [\ell+n]}, \fpk)$.
	\item Measure $\ket{\bm{z}_i}_{\bm{\theta}_i}$ in the Hadamard basis for all $i \in [\ell+n]$, and obtain $(\bm{z}_1', \ldots, \bm{z}_{\ell+n}')$.
	\item Output $\cert \seteq (\bm{z}_1', \ldots, \bm{z}_{\ell+n}')$.
\end{enumerate}

\item[$\vrfy(\vk, \cert)$:]$\vspace{0.01cm}$
\begin{enumerate}
	\item Parse $\vk = (\{(\bm{z}_i, \bm{\theta}_i)\}_{i \in [\ell+n]})$ and $\cert = (\bm{z}_1', \ldots, \bm{z}_{\ell+n}')$.
	\item If $\left((z_{i, j} = z_{i, j}') \wedge (\theta_{i, j} = 1)\right)$ holds for all $i \in [\ell+n]$ and $j \in [\lambda]$, then output $\top$; otherwise output $\bot$. 
\end{enumerate}
\end{description}

\paragraph{Security.}
We use \cref{lem:ce} by Bartusek and Khurana~\cite{myC:BarKhu23} to prove the security of our construction.

\begin{theorem}\label{thm:LObf_CED}
	If $\Sigma_{\CCO}$ is a secure compute-and-compare obfuscation and $\Sigma_{\tsf{fhe}}$ is an IND-CPA secure fully homomorphic encryption then $\Sigma_{\CECCO}$ is a certified everlasting secure compute-and-compare obfuscation scheme for the family of distribution $D=\setbk{D_\param}_\param$.
\end{theorem}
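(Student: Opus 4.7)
The plan is to define the simulator $\qsim$ and then exhibit a sequence of hybrid experiments transforming the real obfuscation into the simulated one, where the final hybrid is $\EV\expb{\Sigma_{\CECCO},~ \qA}{sim}{ccobf}(\secp, 1)$. The simulator $\qsim(1^\secp, \pp_P, 1^n)$ samples $R \gets \bit^\secp$, $(\fpk,\fsk)\gets \FHE.\keygen(1^\secp)$, generates $\tlfDec \gets \sfCC.\Sim(1^\secp, \pp_\fDec, 1^1)$, $\tlI \gets \sfCC.\Sim(1^\secp, \pp_I, 1^\abs{R})$, samples BB84 bases $\bm{\theta}_i, \bm{z}_i \gets \bit^\secp$ for $i\in[\ell+n]$, sets $\wt{b}_i \seteq \bigoplus_{j:\theta_{i,j}=0} z_{i,j}$ (i.e., the underlying message is $0$ everywhere), encrypts them with FHE, and outputs the corresponding $\tlqP$ and $\vk$.

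The sequence of hybrids proceeds as follows. In $\hyb_0$, we run the real experiment. In $\hyb_1$, we replace $\tlI \gets \CCObf(1^\secp, I, \lock, R)$ with $\tlI \gets \sfCC.\Sim(1^\secp, \pp_I, 1^\abs{R})$; since $R$ is uniformly random and independent of the auxiliary input, indistinguishability follows from the distributional indistinguishability of $\Sigma_\CCO$. In $\hyb_2$, we replace $\tlfDec \gets \CCObf(1^\secp,\fDec,R,1)$ with $\tlfDec \gets \sfCC.\Sim(1^\secp,\pp_\fDec,1^1)$. Since by the previous step $R$ appears only inside $\tlfDec$, distributional indistinguishability of $\Sigma_\CCO$ applies. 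Crucially, after $\hyb_2$ the FHE secret key $\fsk$ is no longer needed to generate the adversary's view before the challenger sends the deletion response. We then define $\hyb_{3,i}$ for $i = 0, 1, \ldots, \ell+n$: in $\hyb_{3,i}$, for every index $k \le i$ we replace $b_k$ with $0$ inside the BB84-encoded FHE ciphertext. $\hyb_{3,0} = \hyb_2$, and $\hyb_{3,\ell+n}$ is exactly the simulated experiment.

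The step from $\hyb_{3,i-1}$ to $\hyb_{3,i}$ is handled by the certified everlasting lemma (\cref{lem:ce}) applied at position $i$. The distribution family plays the role of $\mcl{Z}_\secp(\theta)$ is the FHE ciphertext $\fct_i \gets \FHE.\enc(\fpk, (\theta, \wt{b}_i))$ together with all other components of $\tlqP$ and the auxiliary input. To apply the lemma, we need computational indistinguishability $\fct_i^{(\theta)} \approx \fct_i^{(0^\secp)}$ conditioned on all other distributions, which follows from IND-CPA security of $\Sigma_{\sfpke}$—and this is precisely why we needed to first remove the FHE secret key from the view via the $\sfCC.\Sim$ steps. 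Once the lemma applies, we may freely replace $b_i$ with $0$ in the definition of $\wt{b}_i$ up to negligible trace distance.

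The main obstacle is the careful bookkeeping in two respects. First, we must verify that in $\hyb_2$ onward the adversary's view is efficiently simulable \emph{without} $\fsk$, even though the obfuscated circuit $\tlfDec$ has $\fsk$ hardwired inside it—this is precisely what the transition to $\sfCC.\Sim$ buys us, since the simulated $\tlfDec$ no longer contains $\fsk$. Second, for the $\EV$ variant of the experiment (which outputs the adversary's post-deletion residual state), the interactive version of the certified everlasting lemma must be invoked per bit position, treating the verification of the BB84 certificate at position $i$ as the acceptance check, while the deletions at the other positions are folded into the adversary $\qA'$. This per-bit hybrid structure is unavoidable because \cref{lem:ce} flips only one bit at a time, but since $\ell + n = \poly(\secp)$, the total loss remains negligible, yielding both the trace-distance bound for $\EV$ and the distinguishing bound for $\C$.
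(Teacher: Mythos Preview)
Your argument has a genuine gap in the everlasting ($\EV$) part. The transitions $\hyb_0 \to \hyb_1$ and $\hyb_1 \to \hyb_2$ are justified only by the \emph{computational} distributional indistinguishability of $\Sigma_\CCO$, but the $\EV$ experiment outputs the adversary's residual quantum state $\rho$ and requires a \emph{trace-distance} bound. Computational indistinguishability of two quantum-state-valued distributions does not imply small trace distance. Concretely, $\tlfDec$ is a classical string handed directly to $\qA$ and survives deletion (deletion only touches the BB84 registers). In the real game $\tlfDec = \CCObf(\fDec(\fsk,\cdot), R, 1)$ information-theoretically determines $\fsk$, whereas your simulated $\tlfDec = \sfCC.\Sim(1^\secp,\pp_\fDec,1^1)$ is independent of $\fsk$. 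An unbounded distinguisher of $\rho$ can recover $\fsk$ from the real $\tlfDec$ and check it against $\fpk$, so $\TD(\hyb_1,\hyb_2)$ is not negligible. The same objection shows that your choice of simulator cannot satisfy the $\EV$ requirement at all.

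The paper avoids this by defining the simulator with the \emph{real} $\tlfDec \gets \CCObf(1^\secp,\fDec,R,1)$ and only zeroing out the FHE-encrypted bits. All hybrid transitions on the $\EV$ experiment are then bit-flips handled by the certified everlasting lemma, which converts a computational hypothesis into a trace-distance conclusion. The $\sfCC.\Sim$ substitutions for both $\tlI$ and $\tlfDec$ are performed, but only as sub-hybrids \emph{inside} the proof of the lemma's computational hypothesis $\mcl{Z}^k(\bm\theta)\approx_c\mcl{Z}^k(\bm 0)$ at each position $k$: replace $\tlI$ by $\sfCC.\Sim$, then $\tlfDec$ by $\sfCC.\Sim$ (now $\fsk$ is absent), invoke FHE IND-CPA to erase $\bm\theta$ from $\fct_k$, then undo both substitutions. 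Because the lemma's hypothesis is computational, these sub-steps are allowed to be computational; the lemma itself then supplies the trace-distance bound for the outer hybrid. Your ordering---simulate first, then invoke the lemma---loses exactly this bridge from computational to statistical.
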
	
\begin{proof}[Proof of~\cref{thm:LObf_CED}]
	We first describe the simulator of $\qCCObf$, denoted as $\qsim$, before we go to the formal security analysis. Let $\CCO.\sm$ be the simulator the classical compute-and-compare obfuscation employed as a building block in the above construction. For $(P,\lock,\qaux)\chosen D_\param$, the algorithm $\qsim$ works as follows:
	\begin{description}
	\item[$\qsim(1^{\secp}, \pp_P, 1^n)$:]$ $
	\begin{enumerate}
		\item Sample $R \leftarrow \{0, 1\}^{\secp}$.
		\item Sample $(\fpk, \fsk) \leftarrow \FHE.\keygen(1^{\secp})$.
		\item Compute $\tlfDec \leftarrow \CCObf(1^{\secp}, \fDec, R, 1)$.
		\item Sample $\bm{\theta}_i, \bm{z}_i \leftarrow \{0, 1\}^{\secp}$ for all $i \in [\ell+n]$.
		\item Set $\wt{b}_i \seteq 0 \oplus \bigoplus_{j: \theta_{i, j} = 0}z_{i, j}$ for all $i \in [\ell+n]$.
		\item Compute $\fct_{i} \leftarrow \FHE.\enc(\fpk, (\bm{\theta}_{i}, \wt{b}_{i}))$ for all $i \in [\ell+n]$.
		\item Output $\tlqP \seteq (\tlfDec, \{(\ket{\bm{z}_{i}}_{\bm{\theta}_{i}}, \fct_{i})\}_{ i\in [\ell+n]}, \fpk)$ and $\vk\seteq (\{\bm{z}_{i}, \bm{\theta}_{i}\}_{i \in [\ell+n]})$. 
	\end{enumerate}	
	\end{description}
	\noindent Note that $\qsim$ does not need information about $(P,\lock,\qaux)$ except $\pp_P$. We show that 
	\begin{equation}
		\TD(\EV\expb{\Sigma_{\CECCO},~ \qA}{sim}{ccobf}(\secp, 0), \EV\expb{\Sigma_{\CECCO},~ \qA}{sim}{ccobf}(\secp, 1))
		\le \negl(\secp).
	\end{equation} 
	using Lemma \ref{lem:ce} and the post-quantum security of $\Sigma_{\CCO}$ and $\Sigma_{\msf{fhe}}$. We use the following sequence of hybrids to prove this. 
	\begin{description}
		\item[$\hyb_0$:] This is the same as $\EV\expb{\Sigma_{\CECCO},~ \qA}{sim}{ccobf}(\secp, 0)$. Let $\tlqP^{(0)} \seteq (\tlfDec, \{(\ket{\bm{z}_{i}}_{\bm{\theta}_{i}}, \fct_{i})\}_{ i\in [\ell+n]}, \fpk)$ be the compute-and-compare obfuscated circuit computed using the honest $\qCCObf$ algorithm.
		
		\item[$\hyb_1$:] This hybrid works as follows:
		 \begin{enumerate}
		 	\item $\qA$ submits a message $\msg \in \zo{n}$ to the challenger.
		 	\item The challenger chooses $(P,\lock,\qaux ) \chosen D_\param$.
		 	\item The challenger computes the obfuscated circuit as follows:
		 	\begin{enumerate}
		 		\item Sample $(\fpk, \fsk) \leftarrow \FHE.\keygen(1^{\secp})$ and $R \leftarrow \{0, 1\}^{\secp}$.
		 		\item Compute $\tlfDec \leftarrow \CCObf(1^{\secp}, \fDec, R, 1)$.
		 		\item Sample $\bm{\theta}_i, \bm{z}_i \leftarrow \{0, 1\}^{\secp}$ for all $i \in [\ell+n]$.
		 		\item Set $\wt{b}_{i} \seteq 0 \oplus \bigoplus_{j: \theta_{i,j} = 0} z_{i,j}$ for $i \in [\ell]$.
		 		\item Set $\wt{b}_{\ell+k} \seteq \msg_k \oplus \bigoplus_{j: \theta_{\ell+k, j} = 0}z_{\ell+k, j}$ for all $k \in [n]$.
		 		\item Compute $\fct_{i} \leftarrow \FHE.\enc(\fpk, (\bm{\theta}_{i}, \wt{b}_{i}))$ for all $i \in [\ell+n]$.
		 		\item Set $\tlqP \seteq (\tlfDec, \{(\ket{\bm{z}_{i}}_{\bm{\theta}_{i}}, \fct_{i})\}_{ i\in [\ell+n]}, \fpk)$.
		 	\end{enumerate}
		 	The challenge sends $\tlqP$ to $\qA$.
		 	\item $\qA$ sends a deletion certificate $\cert \seteq (\bm{z}_1', \ldots, \bm{z}_{\ell+n}')$ and its internal state $\rho$ to the challenger. 
		 	\item The challenger checks if $\left((z_{i, j} = z_{i, j}') \wedge (\theta_{i, j} = 1)\right)$ holds for all $i \in [\ell+n]$ and $j \in [\lambda]$. If the check fails, the experiment halts and returns $\bot$; otherwise, go to the next step.
		 	\item The experiment outputs $\rho$ as a final output.
		 \end{enumerate}	
		Note that, the FHE ciphertexts $\{\fct_{i}\}_{i \in [\ell]}$ contain no information about the lock value $\lock$, the random string $R$ and the circuit $P$. To prove the indistinguishability between $\hyb_0$ and $\hyb_1$, we consider a sequence of intermediate hybrids $\hyb_{1, k}$ for $k \in [0, \ell]$ where $\hyb_{1, 0}$ is identical to $\hyb_0$ and the only difference between $\hyb_{1, k-1}$ and $\hyb_{1, k}$ is that $\fct_k$ is an encryption of $(\bm{\theta}_k, b_{k} \oplus \bigoplus_{j: \theta_{k, j} = 0}z_{k, j})$ where $b_k$ in $\hyb_{1, k-1}$ is the same as $b_k$ in $\hyb_0$ and $b_k$ in $\hyb_{1, k}$ is set to zero for $k \in [\ell]$.\par
		
		Now, we consider a sequence of experiments $\expt{\qB,\qChal}{1, k}(\secp,\bm{\theta},\beta)$ for $k \in [\ell]$ between a QPT adversary $\qB$ and a challenger $\qChal$ for $\bm{\theta}  \in \{0, 1\}^{\secp}$ and $\beta \in \{0, 1\}$. The experiment $\expt{\qB,\qChal}{1, k}(\secp,\bm{\theta},\beta)$ is basically the same as $\hyb_{1, k}$ where we take $\bm{\theta}_k = \bm{\theta}$, $\wt{b}_k = \beta$ and $\qB$ plays the role of $\qA$, $\qChal$ plays the role of the challenger. In particular, it works as follows:
		
		\begin{enumerate}
			\item[$\expt{\qB,\qChal}{1, k}(\secp,\bm{\theta},\beta)$:] 
			\item $\qB$ submits a message $\msg \in \cM$ to the challenger.
			\item The challenger chooses $(P,\lock,\qaux) \chosen D_\param$.
			\item The challenger computes the obfuscated circuit as follows:
			\begin{enumerate}
				\item Sample $(\fpk, \fsk) \leftarrow \FHE.\keygen(1^{\secp})$ and $R \leftarrow \{0, 1\}^{\secp}$.
				\item Compute $\tlfDec \leftarrow \CCObf(1^{\secp}, \fDec, R, 1)$.
				\item Compute $\tlI \leftarrow \CCObf(1^{\secp}, I, \lock, R)$.
				\item Represent $(P \concat \tlI) = (b_1, \ldots, b_{\ell}) \in \{0, 1\}^{\ell}$.
				\item Sample $\bm{\theta}_i, \bm{z}_i \leftarrow \{0, 1\}^{\secp}$ for all $i \in [\ell+n] \setminus \{k\}$.
				\item Set $\wt{b}_{i}$ for $i \in [\ell]$ as follows:
				\begin{align}
					\widetilde{b}_{i} \seteq \begin{cases}
						0 \oplus \bigoplus_{j: \theta_{i,j} = 0} z_{i,j}
						&\text{if~}i\in [1, k-1]\\
						\beta & \text{if~}i=k\\
						b_{i} \oplus \bigoplus_{j: \theta_{i, j} = 0}z_{i, j} 
						&\text{if~}i\in[k+1, \ell]
					\end{cases}.
				\end{align}
				\item Set $\wt{b}_{\ell+k} \seteq \msg_k \oplus \bigoplus_{j: \theta_{\ell+k, j} = 0}z_{\ell+k, j}$ for all $k \in [n]$.
				\item Compute $\fct_{i} \leftarrow \FHE.\enc(\fpk, (\bm{\theta}_{i}, \wt{b}_{i}))$ for all $i \in [\ell+n]$ where $\bm{\theta}_k \seteq \bm{\theta}$.
			\end{enumerate}
			The challenge sends $(\tlfDec, \{(\ket{\bm{z}_{i}}_{\bm{\theta}_{i}}, \fct_{i})\}_{ i\in [\ell+n]\setminus \{k\}}, \fct_k, \fpk)$ to $\qB$.
			\item $\qB$ outputs a bit $b'$ as the final output of the experiment.
		\end{enumerate}	
		Let us define $\mcl{Z}_{\secp}^k(\bm{\theta}) = \expt{\qB,\qChal}{1, k}(\secp,\bm{\theta},\beta)$. We first show that 
		\begin{align}
			\left|\Pr[\mcl{Z}_{\secp}^k(\bm{\theta})=1]-\Pr[\mcl{Z}_{\secp}^k(\bm{0}_{\secp})=1]\right|\le \negl(\secp).\label{eq7}
		\end{align}
		
		\begin{description}
			\item[$\mcl{Z}^{k, 1}_{\secp}$:] This is exactly the same as $\mcl{Z}_{\secp}^k(\bm{\theta})$ except the challenger uses the bits of $\tlI \leftarrow \sfCC.\sm(1^{\lambda}, \pp_{I},1^{\abs{R}})$ instead of $\tlI \leftarrow \CCObf(1^{\secp}, I, \lock, R)$ to set $\wt{b}_i$ for all $i \in [k+1, \ell]$. The indistinguishability between the distributions $\mcl{Z}_{\secp}^k(\bm{\theta})$ and $\mcl{Z}^{k, 1}_{\secp}$ follows from the post-quantum security of the classical compute-and-compare obfuscation scheme $\Sigma_\CCO$.
			\item[$\mcl{Z}^{k, 2}_{\secp}$:] This is exactly the same as $\mcl{Z}_{\secp}^{k, 1}$ except the challenger replaces $\tlfDec \leftarrow \CCObf(1^{\secp}, \fDec, R, 1)$ with the simulated obfuscated circuit $\tlfDec \leftarrow \sfCC.\sm(1^{\secp}, \pp_{\fDec},1^1)$. The indistinguishability between the distributions $\mcl{Z}_{\secp}^{k, 1}$ and $\mcl{Z}^{ k, 2}_{\secp}$ follows from the post-quantum security of the classical compute-and-compare scheme $\Sigma_\CCO$.
			\item[$\mcl{Z}^{k, 3}_{\secp}$:] This is exactly the same as $\mcl{Z}_{\secp}^{k, 2}$ except the challenger computes $\fct_{k} \leftarrow \FHE.\enc(\fpk, (\bm{0}_{\secp}, \wt{b}_{k}))$ instead of encrypting $(\bm{\theta}, \wt{b}_k)$. The indistinguishability between the distributions $\mcl{Z}_{\secp}^{k, 2}$ and $\mcl{Z}^{ k, 3}_{\secp}$ follows from the post-quantum security of $\Sigma_{\msf{fhe}}$.
			\item[$\mcl{Z}^{k, 4}_{\secp}$:] This is exactly the same as $\mcl{Z}_{\secp}^{k, 3}$ except the challenger replaces $\tlfDec \leftarrow \sfCC.\sm(1^{\secp}, \pp_{\fDec},1^1)$ with the real obfuscated circuit $\tlfDec \leftarrow \CCObf(1^{\secp}, \fDec, R, 1)$. The indistinguishability between the distributions $\mcl{Z}_{\secp}^{k, 3}$ and $\mcl{Z}^{ k, 4}_{\secp}$ follows from the post-quantum security of the classical compute-and-compare obfuscation scheme $\Sigma_\CCO$.
			\item[$\mcl{Z}^{k, 5}_{\secp}$:] This is exactly the same as $\mcl{Z}_{\secp}^{ k, 4}$ except the challenger uses the bits of $\tlI \leftarrow \CCObf(1^{\secp}, I, \lock, R)$ instead of $\tlI \leftarrow \sfCC.\sm(1^{\lambda}, \pp_{I},1^1)$ to set $\wt{b}_i$ for all $i \in [k+1, \ell]$. The indistinguishability between the distributions $\mcl{Z}_{\secp}^{k, 4}$ and $\mcl{Z}^{ k, 5}_{\secp}$ follows from the post-quantum security of the classical compute-and-compare obfuscation scheme $\Sigma_\CCO$.
		\end{description}	
		Observe that, the distributions $\mcl{Z}^{ k, 5}_{\secp}$ and $\mcl{Z}_{\secp}^k(\bm{0}_{\secp})$ are identical. Hence,~\cref{eq7} holds for all $k \in [\ell]$. Therefore, by Lemma \ref{lem:ce}, for any (unbounded) adversary $\qB'$ we have
		
		\begin{align}
			\TD(\tildeexpt{\qB',\qChal}{1, k}(\secp,0),\tildeexpt{\qB',\qChal}{1, k}(\secp,1)) \le \negl(\secp)\label{eql:td1}
		\end{align}
		where the experiment $\wt{\mcl{Z}}_{\secp}^k(b) = \tildeexpt{\qB',\qChal}{1, k}(\secp,b)$ works as follows:
		\begin{enumerate}
			\item[$\tildeexpt{\qB',\qChal}{1, k}(\secp,b):$] 
			\item Sample $\bm{z}, \bm{\theta} \leftarrow \{0, 1\}^{\secp}$. 
			\item $\qB'$ receives $(1^\secp, \ket{\bm{z}}_{\bm{\theta}})$ as input. 
			\item $\qB'$ interacts with $\qChal$ as in $\expt{\qB,\qChal}{1, k}(\secp,\bm{\theta}, b\oplus \bigoplus_{j: \theta_{i,j} = 0} z_{i,j})$ where $\qB'$ plays the role of $\qB$.
			\item $\qB'$ outputs a string $\bm{z}' \in \{0, 1\}^{ \secp}$ and a quantum state $\rho$.
			\item If $z_j = z_j'$ for all $j\in[\secp]$ such that $\theta_j = 1$ then the experiment outputs $\rho$, and otherwise it outputs a special symbol $\bot$.
		\end{enumerate}	
		Note that the only difference between $\hyb_{1,k-1}$ and $\hyb_{1,k}$ is that $\wt{b}_{k}$ is set to be $b_k \oplus \bigoplus_{j: \theta_{i,j} = 0} z_{i,j}$ in $\hyb_{1,k-1}$ and $0 \oplus \bigoplus_{j: \theta_{i,j} = 0} z_{i,j}$ in $\hyb_{1,k}$. Let us assume $b_k = 1$, since otherwise $\hyb_{1,k-1}$ and $\hyb_{1,k}$ are identical. We construct $\qB'$ that distinguishes $\tildeexpt{\qB',\qChal}{1, k}(\secp,0)$ and $\tildeexpt{\qB',\qChal}{1, k}(\secp,1)$ if $\qA$ distinguishes between the hybrids $\hyb_{1,k-1}$ and $\hyb_{1,k}$. 
		\begin{enumerate}
			\item[$\qB'(1^{\secp}, \ket{\bm{z}}_{\bm{\theta}})$:]
			\item $\qB'$ plays the role of $\qA$ in $\hyb_{1,k}$ where the external challenger $\qChal$ of $\tildeexpt{\qB',\qChal}{1,k}(\secp,b)$ is used to simulate the challenger of $\hyb_{1,k}$. $\qChal$ sends the obfuscated circuit to $\qA$.
			\item Suppose $\qA$ sends a certificate $\cert = (\bm{z}_1', \ldots, \bm{z}_{\ell}')$ to the challenger where $\bm{z}_i' = (z_{i, j}')_{j \in [\secp]}$ for all $i \in [\ell]$. Then, $\qB'$ sets $\bm{z}' \seteq \bm{z}_k'$.
			\item Outputs $\bm{z}'$ and the internal state $\rho$ of $\qA$.
		\end{enumerate}	
		We observe that $\qB'$ perfectly simulates $\hyb_{1, k-1}$ if $b = 1$ and $\hyb_{1, k}$ if $b = 0$ (since we are assuming $b_k = 1$). Therefore, we can write
		\begin{align}
			\TD(\hyb_{1,k-1}, \hyb_{1,k})
			\le \TD(\wt{\mcl{Z}}_{\secp}^k(0), \wt{\mcl{Z}}_{\secp}^k(1)).\label{eql:hyb1}
		\end{align}
		Combining~\cref{eql:td1,eql:hyb1}, we have
		\begin{align}
			\TD(\hyb_{1,k-1}, \hyb_{1,k})
			\le \negl(\secp).\label{eql:hyb01}
		\end{align}
		Recall that $\hyb_{1, 0} \equiv \hyb_0$ and $\hyb_{1, \ell} \equiv \hyb_1$. Therefore, combining the advantages of $\qA$ in the sequence of intermediate hybrids as obtained in Equation \ref{eql:hyb01}, we have
 		\begin{align}
			\TD(\hyb_{0}, \hyb_{1})
			\le \negl(\secp).
		\end{align}
		
		\item[$\hyb_2$:]   This is exactly the same as $\hyb_1$ except the fact that instead of encrypting the challenge message $\msg \in \{0, 1\}^n$ the FHE ciphertexts $\{\fct_{\ell+k}\}_{k\in [n]}$ are encrypted to the message $\bm{0}_n$. More precisely, the challenger samples $\bm{\theta}_{\ell+k}, \bm{z}_{\ell+k} \leftarrow \{0, 1\}^{\secp}$ and sets $\wt{b}_{\ell+k} \seteq 0 \oplus \bigoplus_{j: \theta_{\ell+k, j} = 0}z_{\ell+k, j}$ for all $k \in [n]$ instead of setting $\wt{b}_{\ell+k} \seteq \msg_k \oplus \bigoplus_{j: \theta_{\ell+k, j} = 0}z_{\ell+k, j}$. Finally, it obtains $\fct_{\ell+k} \leftarrow \FHE.\enc(\fpk, (\bm{\theta}_{\ell+k}, \wt{b}_{\ell+k}))$ for all $k \in [n]$ where the encrypted bits $\{\wt{b}_{\ell+k}\}_{k \in [n]}$ contain no information about the message $m$. Since the FHE master secret key $\fsk$ is not required to simulate the hybrids, the indistinguishability between $\hyb_1$ and $\hyb_2$ is guaranteed by the post-quantum semantic security of FHE. We can follow a similar argument as in the previous hybrid and show that
		\begin{align}
			\TD(\hyb_{1}, \hyb_{2})
			\le \negl(\secp).
		\end{align}
\end{description}
We observe that $\hyb_2$ is equivalent to $\EV\expb{\Sigma_{\CECCO},~ \qA}{sim}{ccobf}(\secp, 1)$. Therefore, by combing the advantages of $\qA$ in the consecutive hybrids and applying the triangular inequality, we have
\begin{align}
		\TD(\EV\expb{\Sigma_{\CECCO},~ \qA}{sim}{ccobf}(\secp, 0), \EV\expb{\Sigma_{\CECCO},~ \qA}{sim}{ccobf}(\secp, 1))
	\le \negl(\secp).
\end{align} 
Finally, it is easy to show the computational indistinguishability, i.e.,
\begin{align}
\left|\Pr[\C\expb{\Sigma_{\CECCO},~ \qA}{sim}{ccobf}(\secp, 0) = 1] - \Pr[\C\expb{\Sigma_{\CECCO},~ \qA}{sim}{ccobf}(\secp, 1) = 1]\right| \le \negl(\secp)
\end{align}
using the security of FHE and the post-quantum security of $\CCO$.
We skip the formal description as it follows from the similar sequence of hybrids that we used to establish~\cref{eq7} except that $\fct_k$ is changed from encryption of $(\bm{\theta}, \wt{b}_k)$ to $(\bm{\theta}, 0 \oplus \bigoplus_{j: \theta_{k, j} = 0}z_{k, j})$ (instead of changing it from $(\bm{\theta}, \wt{b}_k)$ to $(\bm{0}_{\secp}, \wt{b}_k)$ in $\mcl{Z}_{\secp}^{k, 3}$). This completes the proof.
\end{proof}


\section{Predicate Encryption with Certified Everlastng Deletion}\label{sec:pecd}

\subsection{Definition}\label{sec:def_PE_CED}
We describe the notion of PE with certified everlasting deletion which generates a quantum ciphertext that can be deleted when required and the deletion is verified using a classical certificate of deletion.

\begin{definition}[PE with Ceritifed Everlasting Deletion (Syntax)]\label{def:PE_CED_syntax}
A certified everlasting PE is tuple of QPT algorithms $(\setup, \keygen, \qenc,\allowbreak \qdec, \qdel, \vrfy)$ with a class predicates  $\calP$, a class of attributes $\mcl{X}$ and a message space $\Ms$.
\begin{description}
\item[$\setup(1^{\secp}) \rightarrow (\pk, \msk)$:] The parameter setup algorithm takes as input the security parameter $1^{\secp}$ and outputs a public key $\pk$ and a master secret key $\msk$.  

\item[$\keygen(\msk, \msf{P})$:] The key generation algorithm takes as input the master secret key $\msk$ and a predicate $\msf{P} \in \calP$, and outputs a secret key $\sk_{\msf{P}}$ corresponding to the predicate $\msf{P}$.

\item[$\qenc(\pk, x, m) \rightarrow (\qct, \vk)$:] The encryption algorithm takes as input the public key $\pk$, an attribute $x \in \mcl{X}$ and a message $m \in \Ms$, and outputs a quantum ciphertext $\qct$ and a classical verification key $\vk$.

\item[$\qdec(\sk_{\msf{P}}, \qct) \rightarrow m' \textbf{ or } \bot$:] The decryption algorithm takes as input a secret key $\sk_{\tsf{P}}$ and a quantum ciphertext $\qct$, and outputs a classical plaintext $m'$ or $\bot$.

\item[$\qdel(\qct) \rightarrow \cert$:] The deletion algorithm takes as input the ciphertext $\qct$ and outputs a classical certificate $\cert$.

\item[$\vrfy(\vk, \cert) \rightarrow \top \textbf{ or }\bot$:] The verification algorithm takes as input the verification key $\vk$ and a certificate $\cert$, and outputs $\top$ or $\bot$.
\end{description}
\end{definition}

\begin{definition}[Correctness of PE with Certified Everlasting Deletion]\label{def:PE_CED_correctness}
The correctness of PE with certified deletion for a class of predicates $\calP$ is defined as follows.
\begin{description}
\item[Decryption correctness:] For any $\secp \in \N, \msf{P} \in \calP, x \in \mcl{X}, m \in \Ms$ such that $\msf{P}(x) = 1$, 
\begin{align}
    \Pr\left[ \qdec(\sk_{\msf{P}}, \qct) \ne m \ \middle| \begin{array}{l} (\pk, \msk) \leftarrow \setup(1^{\secp}) \\
    \sk_{\msf{P}} \leftarrow \keygen(\msk, \msf{P}) \\
    (\qct, \vk) \leftarrow \qenc(\pk, m)  \end{array} \right] \le \negl(\secp).
\end{align}

\item[Verification correctness:] For any $\secp \in \N, x \in \mcl{X}, m \in \Ms$, 
\begin{align}
    \Pr\left[ \vrfy(\vk, \cert) \ne \top \ \middle| \begin{array}{l} (\pk, \msk) \leftarrow \setup(1^{\secp}) \\ (\qct, \vk) \leftarrow \qenc(\pp, x, m) \\ \cert \leftarrow \qdel(\qct) \end{array} \right] \le\negl(\secp).
\end{align}
\end{description}
\end{definition}

\begin{definition}[Certified Everlasting Security of PE]\label{def:PE_CED_security}
Let $\Sigma = (\setup, \keygen, \qenc, \qdec, \qdel, \vrfy)$ be a PE with certified everlasting deletion for a class of predicates $\mcl{P}$, a class of attributes $\mcl{X}$ and a message space $\Ms$. We consider two experiments $\EV\expb{\Sigma,~ \qA}{ada}{ind}(\secp, b)$ and $\C\expb{\Sigma,~ \qA}{ada}{ind}(\secp, b)$ played between a challenger and and a non-uniform QPT adversary $\qA = \{\qA_{\secp}, \ket{\psi}_{\secp}\}_{\secp \in \mbb{N}}$. The experiments are defined as follows:
\begin{enumerate}
    \item The challenger computes $(\pk, \msk) \leftarrow \setup(1^{\secp})$ and sends $\pk$ to $\qA_{\secp}(\ket{\psi}_{\secp})$.
    \item $\qA_{\secp}$ sends $\msf{P} \in \mcl{P}$ to the challenger and receives $\sk_{\tsf{P}} \leftarrow \keygen(\msk, \msf{P})$ from the challenger.
    \item $\qA_{\secp}$ sends a pair of challenge attributes $(x_0, x_1)$ and a pair of challenge messages $(m_0, m_1)$ satisfying the fact that $\msf{P}(x_0) = \msf{P}(x_1) = 0$ for all $\msf{P}$ queried so far in the key query phase.
    \item The challenger computes $(\qct_b, \vk_b) \leftarrow \qenc(\pk, x_b, m_b)$ and sends $\qct_b$ to $\qA_{\secp}$.
    \item $\qA_{\secp}$ can make further key queries with $\tsf{P}$ satisfying $\msf{P}(x_0) = \msf{P}(x_1) = 0$.
    \item $\qA_{\secp}$ sends a certificate of deletion $\cert$ and its internal state $\rho$ to the challenger.
    \item The challenger computes $\vrfy(\vk_b, \cert)$. If the outcome is $\top$, the experiment $\EV\expb{\Sigma,~ \qA}{ada}{ind}(\secp, b)$ outputs $\rho$; otherwise if the outcome is $\bot$ then $\EV\expb{\Sigma,~ \qA}{ada}{ind}(\secp, b)$ output $\bot$ and ends.
    \item The challenger sends the outcome of $\vrfy(\vk^{(b)}, \cert)$ to $\qA_{\secp}$. 
    \item Again, $\qA_{\secp}$ can make key queries with polynomial number of policies $\tsf{P}$ satisfying $\msf{P}(x_0) = \msf{P}(x_1) = 0$.
    \item $\qA_{\secp}$ outputs its guess $b' \in \{0, 1\}$ which is the output of the experiment $\C\expb{\Sigma,~ \qA}{ada}{ind}(\secp, b)$.
\end{enumerate}
We say that the $\Sigma$ is adaptively certified everlasting secure if for any non-uniform QPT adversary $\qA = \{\qA_{\secp}, \ket{\psi}_{\secp}\}_{\secp \in \mbb{N}}$, it holds that 
\begin{align}
	\TD(\EV\expb{\Sigma,~ \qA}{ada}{ind}(\secp, 0), \EV\expb{\Sigma,~ \qA}{ada}{ind}(\secp, 1)) \le \negl(\secp),
\end{align}	
and 	
\begin{align}
	\left|\Pr[\C\expb{\Sigma,~ \qA}{ada}{ind}(\secp, 0) = 1] - \Pr[\C\expb{\Sigma,~ \qA}{ada}{ind}(\secp, 1) = 1]\right| \le \negl(\secp).
\end{align}
\end{definition}

We can define similar experiment $\EV\expb{\Sigma,~ \qA}{sel}{ind}(\secp, b)$ and $\C\expb{\Sigma,~ \qA}{ada}{ind}(\secp, b)$ where $\qA_{\secp}$ is restricted to submit the challenge attributes $x_0, x_1$ before it receives $\pk$ from the challenger. We say that the $\Sigma$ is selectively certified everlasting secure if for any non-uniform QPT adversary $\qA = \{\qA_{\secp}, \ket{\psi}_{\secp}\}_{\secp \in \mbb{N}}$, it holds that 
\begin{align}
	\TD(\EV\expb{\Sigma,~ \qA}{sel}{ind}(\secp, 0), \EV\expb{\Sigma,~ \qA}{sel}{ind}(\secp, 1)) \le \negl(\secp),
\end{align}	
and 	
\begin{align}
	\left|\Pr[\C\expb{\Sigma,~ \qA}{sel}{ind}(\secp, 0) = 1] - \Pr[\C\expb{\Sigma,~ \qA}{sel}{ind}(\secp, 1) = 1]\right| \le \negl(\secp).
\end{align}

\subsection{Construction}\label{sec:PE_from_LObf_ABE}

In this section, we construct a PE with certified everlasting deletion from a compute-and-compare obfuscation wiht certified everlasting deletion introduced in~\cref{sec:ccobf} and a classical ABE.

\paragraph{Ingredients.} We use the following building blocks.
\begin{enumerate}
    \item $\Sigma_{\tsf{abe}} = \ABE.(\setup, \keygen, \enc, \dec)$ be a classical ABE scheme for a class of predicates $\mcl{P}$ and message space $\mcl{M}_{\tsf{abe}} = \{0, 1\}^{\secp+1}$.
 \item $\Sigma_{\CECCO} = \CCO.(\qObf, \qdel, \vrfy)$ be a compute-and-compare obfuscation with certified everlasting deletion for a message space $\mcl{M}_{\tsf{pe}}$ and the family of distributions $D=\setbk{D_{\apk,x,\setbk{\bm{\theta}_i}_i,\setbk{\bm{z}_i}_i}}_{{\apk,x,\setbk{\bm{\theta}_i}_i,\setbk{\bm{z}_i}_i}}$.
\end{enumerate}

Let $D=\setbk{D_{\apk,x,\setbk{\bm{\theta}_i}_i,\setbk{\bm{z}_i}_i}}_{{\apk,x,\setbk{\bm{\theta}_i}_i,\setbk{\bm{z}_i}_i}}$ be a family of distributions where $D_{\apk,x,\setbk{\bm{\theta}_i}_i,\setbk{\bm{z}_i}_i}$ outputs $(\msf{aDec},\lock,\qaux)$ generated as follows.
\begin{itemize}
\item Generate $\act_i\gets \ABE.\enc(\apk,x,(\bm{\theta}_i, \bigoplus_{j: \theta_{i, j} = 0}z_{i, j}))$ for all $i\in[\ell]$.
\item Construct $\msf{aDec}$ described in~\cref{fig:pe1}.
\item Choose $\lock \chosen \zo{\ell} = \cK$.
\item Output $(\msf{aDec},\lock,\qaux \seteq \bot)$.
\end{itemize}

\paragraph{PE with certified everlasting deletion.} We construct $\Sigma_{\tsf{pe-ce}} = (\setup, \keygen, \qenc, \qdec, \qdel, \vrfy)$ for a class of predicates $\mcl{P}$, a class of attributes $\mcl{X}$ and a message space $\mcl{M}_{\tsf{pe}}$.
\paragraph{$\setup(1^{\lambda})$}$\vspace{0.1cm}$
\begin{enumerate}
	\item Sample $(\apk, \amsk) \leftarrow \ABE.\setup(1^{\secp})$.
	\item Output $(\pk \seteq \apk, \msk\seteq \amsk)$.
\end{enumerate}	

\paragraph{$\keygen(\msk, P)$}$\vspace{0.1cm}$
\begin{enumerate}
	\item Parse $\msk = \amsk$.
	\item Compute $\sk_P \leftarrow \ABE.\keygen(\amsk, P)$.
	\item Output $\sk_P$.
\end{enumerate}	

\paragraph{$\qenc(\apk, x, m)$}$\vspace{0.1cm}$
\begin{enumerate}
	\item Parse $\pk =\apk$.
	\item Sample $R \leftarrow \{0, 1\}^{\ell}$ and denote $R = (r_1, \ldots, r_{\ell})$.
	\item Sample $\bm{\theta}_{i}, \bm{z}_{i} \leftarrow \{0, 1\}^{\secp}$ for all $i \in [\ell]$.
	\item Set $\wt{r}_{i} \seteq r_{i} \oplus \bigoplus_{j: \theta_{i, j} = 0}z_{i, j}$ for all $i \in [\ell]$.
	\item Compute $\act_{i} \leftarrow \ABE.\enc(\apk, x,  (\bm{\theta}_i, \wt{r}_i))$ for all $i \in [\ell]$.
	\item $(\tlaDec, \vk_{\msf{aDec}}) \leftarrow \CCO.\qobf(1^{\secp}, \msf{aDec}, R, m)$ where $\msf{aDec}$ is defined in Figure \ref{fig:pe1}.
	\item Output $\qct \seteq (\tlaDec, \{\ket{\bm{z}_{i}}_{\bm{\theta}_i}\}_{i \in [\ell]})$ and $\vk \seteq (\{(\bm{z}_i, \bm{\theta}_i)\}_{i \in [\ell]}, \vk_{\msf{aDec}})$.
\end{enumerate}

\begin{figure}
	\begin{framed}
		\textbf{Hardwire:}~ $\{\act_i\}_{i \in [\ell]}$\newline
		\textbf{Input:}~ $(\{\bm{z}_{i}\}_{i \in [\ell]}, \sk_P)$ 
		\begin{enumerate}
			\item Compute $(\bm{\theta}_i, \wt{r}_i) \leftarrow \ABE.\dec(\sk_P, \act_i)$ for all $i \in [\ell]$.
			\item Compute $r_{i} =  \wt{r}_{i}  \oplus \bigoplus_{j: \theta_{i, j} = 0}z_{i, j}$ for all $i \in [\ell]$.
			\item Set $R \seteq (r_1, \ldots, r_{\ell})$
			\item Output $R$
		\end{enumerate}
	\end{framed}
	\caption{The description of the circuit $\msf{aDec}$}
	\label{fig:pe1}
\end{figure}

\paragraph{$\qdec(\sk_P, \qct)$}$\vspace{0.1cm}$
\begin{enumerate}
	\item Parse $\qct =(\tlaDec, \{\ket{\bm{z}_{i}}_{\bm{\theta}_i}\}_{i \in [\ell]})$.
	\item Apply $\tlaDec$ in superposition to the input $(\{\ket{\bm{z}_{i}}_{\bm{\theta}_i}\}_{i \in [\ell]}, \sk_P)$ and measure the output register to obtain $m'$.
	\item Output $m'$.
\end{enumerate}

\paragraph{$\qdel(\qct)$}$\vspace{0.1cm}$
\begin{enumerate}
		\item Parse $\qct =(\tlaDec, \{\ket{\bm{z}_{i}}_{\bm{\theta}_i}\}_{i \in [\ell]})$.
	\item Measure $\ket{\bm{z}_i}_{\bm{\theta}_i}$ in the Hadamard basis for all $i \in [\ell]$ and obtain $\bm{z}' \seteq (\bm{z}_1', \ldots, \bm{z}_{\ell}')$.
		\item Compute $\cert_{\msf{aDec}} \leftarrow \CCO.\qdel(\tlaDec)$.
	\item Output $\cert \seteq (\bm{z}', \cert_{\msf{aDec}})$.
\end{enumerate}	

\paragraph{$\vrfy(\vk, \cert)$}$\vspace{0.1cm}$
\begin{enumerate}
	\item Parse $\vk \seteq (\{(\bm{z}_i, \bm{\theta}_i)\}_{i \in [\ell]}, \vk_{\msf{aDec}})$ and $\cert \seteq (\bm{z}', \cert_{\msf{aDec}})$.
	\item If $(z_{i, j} = z_{i, j}') \wedge (\theta_{i, j} = 1)$ holds for all $i \in [\ell]$ and $j \in [\lambda]$ and $\CCO.\vrfy(\vk_{\msf{aDec}}, \cert_{\msf{aDec}}) = \top$, then output $\top$; otherwise output $\bot$.
\end{enumerate}

\begin{theorem}
	If $\Sigma_{\CECCO}$ is a certified everlasting secure compute-and-compare obfuscation for a message space $\mcl{M}_{\tsf{pe}}$ and the family of distributions $D=\setbk{D_{\apk,x,\setbk{\bm{\theta}_i}_i,\setbk{\bm{z}_i}_i}}_{{\apk,x,\setbk{\bm{\theta}_i}_i,\setbk{\bm{z}_i}_i}}$ and $\Sigma_{\tsf{abe}}$ is an adaptively (resp. selectively) secure ABE for a class of predicates $\mcl{P}$, then $\Sigma_{\tsf{pe-ce}}$ is an adaptively (resp. selectively) certified everlasting secure predicate encryption scheme for the class of predicates $\mcl{P}$, message space $\mcl{M}_{\tsf{pe}}$.
\end{theorem}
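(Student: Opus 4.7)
The plan is to adapt the classical GKW/WZ security proof of PE from ABE and compute-and-compare obfuscation~\cite{FOCS:GoyKopWat17,FOCS:WicZir17} to the certified everlasting setting, via a common hybrid chain that establishes both the trace-distance bound on $\EV\expb{\Sigma_{\tsf{pe-ce}}, \qA}{ada}{ind}$ and the computational bound on $\C\expb{\Sigma_{\tsf{pe-ce}}, \qA}{ada}{ind}$. The chain walks from the real experiment with $(x_0, m_0)$ to the real experiment with $(x_1, m_1)$ through an intermediate hybrid whose view is independent of the challenge; the adaptive and selective cases differ only in when the reduction commits to the challenge attributes.

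Concretely, $\hyb_0$ is the real $b = 0$ experiment. In a sub-chain $\hyb_{1, 0} = \hyb_0, \hyb_{1, 1}, \ldots, \hyb_{1, \ell}$, the $i$-th transition effectively flips the $i$-th bit $r_i$ of the lock $R$ via the BB84 XOR encoding $\wt{r}_i = r_i \oplus \bigoplus_{j : \theta_{i,j}=0} z_{i,j}$ inside $\act_i$; each step is justified by \cref{lem:ce_int} instantiated with $b = r_i$, $\theta = \bm{\theta}_i$, $z = \bm{z}_i$, $\beta = \wt{r}_i$, whose computational hypothesis reduces to the IND-CPA security of $\Sigma_{\tsf{abe}}$---the only view component depending on $\bm{\theta}_i$ is the message inside $\act_i$ and every queried key satisfies $P(x_0) = 0$. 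After the $\ell$ applications, $\msf{aDec}$ is, conditionally on deletion verification passing, statistically decoupled from $R$, so the distributional entropy condition of $\Sigma_{\CECCO}$ is satisfied and $\hyb_2$ replaces $\tlaDec$ by $\qsim(1^{\secp}, \pp_{\msf{aDec}}, 1^{|m_0|})$ via $\Sigma_{\CECCO}$'s certified everlasting security. Since the view in $\hyb_2$ depends on $(x_0, m_0)$ only through $\pp_{\msf{aDec}}$ and $|m_0| = |m_1|$, it coincides with its analogue for challenge $(x_1, m_1)$, and reversing the chain yields the $b = 1$ real experiment.

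The main obstacle is that in each step $\hyb_{1, i-1} \to \hyb_{1, i}$, the challenger $\qChal$ in the non-tilde experiment of \cref{lem:ce_int} must construct the entire view---including $\tlaDec$ whose lock $R$ contains the bit $r_i$ that $\qChal$ does not know (it only knows $\wt{r}_i$). I plan to handle this by inserting a short interior sub-hybrid just before the bit flip in which the $i$-th coordinate of the lock of $\tlaDec$ is first replaced by a freshly and independently sampled auxiliary bit; this substitution is justified by two applications of $\Sigma_{\CECCO}$'s certified everlasting security routed through $\qsim$, after which $\qChal$ can construct $\tlaDec$ from its local randomness. Verifying that at every invocation of $\Sigma_{\CECCO}$'s security the distributional entropy condition is satisfied---which holds because at those points either $\msf{aDec}$ contains no hardwired ciphertext encoding the flipped lock coordinate or the lock coordinate has been freshly resampled independently of $\msf{aDec}$---is where the sequencing of the hybrids must be done carefully. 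The $\C$-indistinguishability part then follows by the same hybrid structure, with \cref{lem:ce_int} replaced by direct computational reductions to the IND-CPA security of $\Sigma_{\tsf{abe}}$ and the computational security of $\Sigma_{\CECCO}$.
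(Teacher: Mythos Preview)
Your overall hybrid skeleton---zero out the bits encoded in the $\act_i$'s one at a time via \cref{lem:ce_int} with ABE security as the computational hypothesis, then swap $\tlaDec$ for $\qsim$ via $\Sigma_{\CECCO}$, then reverse towards the $b=1$ side---is exactly the paper's route. The issue is your ``main obstacle'' and its proposed fix.

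The obstacle is not real. In the non-tilde experiment $\expt{\qB,\qChal}{1,i}(\secp,\bm\theta,\beta)$ of \cref{lem:ce_int}, $\qChal$ does not need to recover $r_i$ from $\beta$: it simply samples the entire lock $R^*=(r_1,\dots,r_\ell)$ itself and uses $\beta$ \emph{only} as the value $\wt r_i$ inside $\act_i$. The transition $\hyb_{1,i-1}\to\hyb_{1,i}$ does not touch the lock at all; it changes what $\act_i$ encodes from $r_i$ to $0$. Conditioned on $r_i=0$ the two hybrids are identical, and conditioned on $r_i=1$ (equivalently, let $\qChal$ hardwire $r_i=1$ in $R^*$) the tilde experiments with $b=1$ and $b=0$ perfectly simulate $\hyb_{1,i-1}$ and $\hyb_{1,i}$ respectively. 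The ABE reduction for the computational hypothesis of \cref{lem:ce_int} is unaffected by fixing $r_i=1$, since the only dependence on $\bm\theta$ is inside the single plaintext of $\act_i$.

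Your proposed workaround---swapping the $i$-th lock coordinate for a fresh bit via two applications of $\Sigma_{\CECCO}$'s security before the bit flip---is not only unnecessary but also not justified by the theorem's hypothesis. At that interior point, $\act_i,\dots,\act_\ell$ still encode $r_i,\dots,r_\ell$, so the hardwired circuit $\msf{aDec}$ is correlated with the lock $R^*$ through those bits; the pair $(\msf{aDec},R^*)$ is therefore \emph{not} drawn from the family $D$ specified in the theorem (where every $\act_j$ encodes $0$ and the lock is independent), and the assumed security of $\Sigma_{\CECCO}$ for $D$ gives you nothing there. Drop the workaround, let $\qChal$ sample $R^*$ directly, and the paper's argument goes through.
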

We focus on the case of adaptive security.
\begin{proof}

To prove the theorem we consider an adversary $\qA$ against the certified everlasting security of $\Sigma_{\tsf{pe-ce}}$. We consider the following sequence of hybrids.

\begin{description}
	\item[$\hyb_0:$] This is the original certified everlasting security experiment where the challenge bit is set to 0 ($\EV\expb{\Sigma_{\tsf{pe-ce}},~ \qA}{ada}{ind}(\secp, 0)$). More precisely, it works as follows:
	\begin{enumerate}
		\item The challenger computes $(\apk, \amsk) \leftarrow \ABE.\setup(1^{\secp})$, sets $\pk \seteq \apk$, and sends $\pk$ to $\qA$.
		\item The adversary $\qA$ sends any polynomial number of secret key queries for $P \in \mcl{P}$ at any point of the experiment. The challenger generates $\sk_P \leftarrow \ABE.\keygen(\amsk, P)$ and sends $\sk_P$ to $\qA$.
		\item $\qA$ sends a pair of challenge attributes $(x_0, x_1)$ and a pair of challenge messages $(m_0, m_1)$ satisfying the fact that $P(x_0) = P(x_1) = 0$ for all $P$ queried so far in the key query phase.
		\item The challenger computes the challenge ciphertext as follows:
		\begin{enumerate}
			\item Sample $R^* = (r_1, \ldots, r_{\ell}) \leftarrow \{0, 1\}^{\ell}$.
			\item Sample $\bm{\theta}_{i}, \bm{z}_{i} \leftarrow \{0, 1\}^{\secp}$ for all $i \in [\ell]$ where $\bm{\theta}_i = (\theta_{i, j})_{j \in [\secp]}$ and $\bm{z}_i = (z_{i, j})_{j \in [\secp]}$.
			\item Set $\wt{r}_{i} \seteq r_{i} \oplus \bigoplus_{j: \theta_{i, j} = 0}z_{i, j}$ for all $i \in [\ell]$.
			\item Compute $\act_{i} \leftarrow \ABE.\enc(\apk, x_0,  (\bm{\theta}_i, \wt{r}_i))$ for all $i \in [\ell]$.
		\item $(\tlaDec, \vk_{\msf{aDec}}) \leftarrow \CCO.\qobf(1^{\secp}, \msf{aDec}, R^*, m_0)$ where $\msf{aDec}$ is defined in Figure \ref{fig:pe1}.
			\item Set $\qct^* \seteq (\tlaDec, \{\ket{\bm{z}_{i}}_{\bm{\theta}_i}\}_{i \in [\ell]})$ and $\vk \seteq (\{(\bm{z}_i, \bm{\theta}_i)\}_{i \in [\ell]}, \vk_{\msf{aDec}})$.
		\end{enumerate}	
	The challenger sends $\qct^*$ to $\qA$.
	\item $\qA$ sends a certificate $\cert = (\bm{z}' = (\bm{z}_1', \ldots, \bm{z}_{\ell}'), \cert_{\msf{aDec}})$ and its internal state $\rho$ to the challenger where $\bm{z}_i' = (z_{i, j}')_{j \in [\secp]}$ for all $i \in [\ell]$.
\item The challenger checks if $(z_{i, j} = z_{i, j}') \wedge (\theta_{i, j} = 1)$ holds for all $i \in [\ell]$ and $j \in [\lambda]$ and $\CCO.\vrfy(\vk_{\msf{aDec}}, \cert_{\msf{aDec}}) = \top$. If it does not hold, the challenger outputs $\bot$ as the final output of the experiment. Otherwise, go to the next step.
	\item The experiment outputs $\rho$ as a final output.
	\end{enumerate}	
	\item[$\hyb_1:$] This hybrid proceeds exactly similar to $\tsf{Hybd}_0$ except that the ABE ciphertexts $\act_i$ is now replaced with encryption of zero string. In particular, the hardwired values of $\msf{aDec}$ are computed as $\act_i \leftarrow \ABE.\enc(\apk, x_0, (\bm{\theta}_i, 0 \oplus \bigoplus_{j: \theta_{i, j} = 0}z_{i, j}))$ for all $i \in [\ell]$. \par
	To prove the indistinguishability between $\hyb_0$ and $\hyb_1$, we consider a sequence of intermediate hybrids $\hyb_{1, k}$ for $k \in [\ell]$ where we take $\hyb_{1, 0}$ is identical to $\hyb_0$ and the only difference between $\hyb_{1, k-1}$ and $\hyb_{1, k}$ is that $\act_k$ is an encryption of $(\bm{\theta}_k, r_{k} \oplus \bigoplus_{j: \theta_{k, j} = 0}z_{k, j})$ in $\hyb_{1, k-1}$ whereas it is an encryption of $(\bm{\theta}_k, 0 \oplus \bigoplus_{j: \theta_{k, j} = 0}z_{k, j})$ in $\hyb_{1, k}$.\par

	Now, we consider a sequence of experiments $\expt{\qB,\qChal}{1, k}(\secp,\bm{\theta},\beta)$ for $k \in [\ell]$ between a QPT adversary $\qB$ and a challenger $\qChal$ for $\bm{\theta}  \in \{0, 1\}^{\secp}$ and $\beta \in \{0, 1\}$. The experiment $\expt{\qB,\qChal}{1, 0}(\secp,\bm{\theta},\beta)$ is basically the same as $\hyb_0$ where $\qB$ plays the role of $\qA$ and $\qChal$ plays the role of the challenger. In particular, it works as follows:
	 \begin{enumerate}
	 	\item[$\expt{\qB,\qChal}{1, k}(\secp,\bm{\theta},\beta):$]
	 	\item $\qC$ computes $(\apk, \amsk) \leftarrow \ABE.\setup(1^{\secp})$, sets $\pk \seteq \apk$, and sends $\pk$ to $\qB$.
	 	\item $\qB$ sends any polynomial number of secret key queries for $P \in \mcl{P}$ at any point of the experiment and $\qChal$ generates $\sk_P \leftarrow \ABE.\keygen(\amsk, P)$ and sends $\sk_P$ to $\qB$.
	 	\item $\qB$ sends a pair of challenge attributes $(x_0, x_1)$ and a pair of challenge messages $(m_0, m_1)$ satisfying the fact that $P(x_0) = P(x_1) = 0$ for all $P$ queried so far in the key query phase.
	 	\item $\qChal$ computes the challenge ciphertext as follows:
	 	\begin{enumerate}
	 		\item Sample $R^* = (r_1, \ldots, r_{\ell}) \leftarrow \{0, 1\}^{\ell}$.
	 		\item Sample $\bm{z}_{i}, \bm{\theta}_i \leftarrow \{0, 1\}^{\secp}$ for all $i \in [\ell]\setminus \{k\}$ where $\bm{\theta}_i = (\theta_{i, j})_{j \in [\secp]}$ and $\bm{z}_i = (z_{i, j})_{j \in [\secp]}$.
	 		\item Set $\wt{r}_{i}$ as follows:
	 		\begin{align}
	 			 			\widetilde{r}_{i} \seteq \begin{cases}
	 				0\oplus \bigoplus_{j: \theta_{i,j} = 0} z_{i,j}
	 				&\text{if~}i\in [1, k-1]\\
	 				\beta & \text{if~}i=k\\
	 				r_{i} \oplus \bigoplus_{j: \theta_{i, j} = 0}z_{i, j} 
	 				 &\text{if~}i\in[k+1, \ell]
	 			\end{cases}.
	 		\end{align}
	 		\item Compute $\act_{i} $ as follows:
	 		\begin{align}
	 			\act_{i} \leftarrow \begin{cases}
	 				\ABE.\enc(\apk, x_0,  (\bm{\theta}, \wt{r}_k)) &\text{if~}i = k\\
	 				 \ABE.\enc(\apk, x_0,  (\bm{\theta}_i, \wt{r}_i)) &\text{if~}i\in [\ell]\setminus \{k\}\\
	 			\end{cases}.
	 		\end{align}
	 		\item $(\tlaDec, \vk_{\msf{aDec}}) \leftarrow \CCO.\qobf(1^{\secp}, \msf{aDec}, R^*, m_0)$ where $\msf{aDec}$ is defined in Figure \ref{fig:pe1}.
	 	\end{enumerate}	
	 	The challenger sends $(\tlaDec, \{\ket{\bm{z}_{i}}_{\bm{\theta}_i}\}_{i \in [\ell]\setminus\{k\}})$ to $\qB$.
	 	\item $\qB$ outputs a bit $b'$ as the final output of the experiment.
	 	\end{enumerate}
	Since all the secret keys $\sk_P$ corresponding to predicates $P$ queried by the adversary satisfy the condition that $P(x_0) = 0$, the semantic security of ABE ensures that
	$$
	\begin{array}{c}
		\left|\Pr[\expt{\qB,\qChal}{1, k}(\secp,\bm{\theta},\beta)=1]-\Pr[\expt{\qB,\qChal}{1, k}(\secp,\bm{0}_{\secp},\beta)=1]\right|\le \negl(\secp).
	\end{array} $$
	Therefore, by Lemma \ref{lem:ce_int}, for any QPT (unbounded) adversary $\qB'$, we have
	
\begin{equation}
	\TD(\tildeexpt{\qB',\qChal}{1, k}(\secp,0),\tildeexpt{\qB',\qChal}{1, k}(\secp,1)) \le \negl(\secp)\label{eq:td1}
\end{equation}

where the experiment $\tildeexpt{\qB',\qChal}{1, k}(\secp,b)$ works as follows:
\begin{enumerate}
	\item[$\tildeexpt{\qB',\qChal}{1, k}(\secp,b):$] 
	\item Sample $\bm{z}, \bm{\theta} \leftarrow \{0, 1\}^{\secp}$. 
	\item $\qB'$ takes $(1^\secp, \ket{\bm{z}}_{\bm{\theta}})$ as input. 
	\item $\qB'$ interacts with $\qChal$ as in $\expt{\qB,\qChal}{1, k}(\secp,\bm{\theta}, b\oplus \bigoplus_{j: \theta_{i,j} = 0} z_{i,j})$ where $\qB'$ plays the role of $\qB$. 
	\item $\qB'$ outputs a string $\bm{z}' \in \{0, 1\}^{ \secp}$ and a quantum state $\rho$.
	\item If $z_j = z_j'$ for all $j\in[\secp]$ such that $\theta_j = 1$ then the experiment outputs $\rho$, and otherwise it outputs a special symbol $\bot$.
\end{enumerate}	
Note that the only difference between $\hyb_{1,k-1}$ and $\hyb_{1,k}$ is that $\wt{r}_{k}$ is set to be $r_k \oplus \bigoplus_{j: \theta_{i,j} = 0} z_{i,j}$ in $\hyb_{1,k-1}$ and $0 \oplus \bigoplus_{j: \theta_{i,j} = 0} z_{i,j}$ in $\hyb_{1,k}$. Let us assume $r_k = 1$, since if $r_k$ is $0$ then $\hyb_{1,k-1}$ and $\hyb_{1,k}$ are identical. We construct $\qB'$ that distinguishes $\tildeexpt{\qB',\qChal}{1, k}(\secp,0)$ and $\tildeexpt{\qB',\qChal}{1, k}(\secp,1)$ if $\qA$ distinguishes between the hybrids $\hyb_{1,k-1}$ and $\hyb_{1,k}$.
\begin{enumerate}
	\item[$\qB'(1^{\secp}, \ket{\bm{z}}_{\bm{\theta}}):$]
	\item $\qB'$ plays the role of $\qA$ in $\hyb_{1,k}$ where the external challenger $\qChal$ of $\tildeexpt{\qB',\qChal}{1,k}(\secp,b)$ is used to simulate the challenger of $\hyb_{1,k}$. $\qChal$ provides everything that should be sent to $\qA$ (as in $\hyb_0$).
	\item Suppose $\qA$ sends a certificate $\cert = ((\bm{z}_1', \ldots, \bm{z}_{\ell}'), \cert_{\msf{aDec}})$ to the challenger where $\bm{z}_i' = (z_{i, j}')_{j \in [\secp]}$ for all $i \in [\ell]$. Then, $\qB'$ sets $\bm{z}' = \bm{z}_k'$.
	\item Outputs $\bm{z}'$ and the internal state $\rho$ of $\qA$ which it sends to $\qA_2$.
\end{enumerate}	
We observe that $\qB'$ perfectly simulates $\hyb_{1, k}$ if $b = 0$ and $\hyb_{1, k-1}$ if $b = 1$ (since we are assuming $r_k = 1$). Therefore, we can write
\begin{equation}
\TD(\hyb_{1,k-1}, \hyb_{1,k})
\le \TD(\tildeexpt{\qB',\qChal}{1,k}(\secp,0),\tildeexpt{\qB',\qChal}{1,k}(\secp,1)).\label{eq:hyb1}
\end{equation}
Combining Equations \ref{eq:td1} and \ref{eq:hyb1}, we have
\begin{equation}
\TD(\hyb_{1,k-1}, \hyb_{1,k})
\le \negl(\secp).\label{eq:hyb01}
\end{equation}
Recall that $\hyb_{1, 0} \equiv \hyb_0$ and $\hyb_{1, \ell} \equiv \hyb_1$. Therefore, combining the advantages of $\qA$ in the sequence of intermediate hybrids as obtained in Equation \ref{eq:hyb01}, we have
\begin{equation}
		\TD(\hyb_{0}, \hyb_{1})
	\le \negl(\secp).
\end{equation}

\item[$\hyb_2:$] This hybrid proceeds exactly similar to $\hyb_1$ except that the obfuscated circuit is now replaced with a simulated version of it. In particular, $\tlaDec \leftarrow \CCO.\qobf(1^\secp,\msf{aDec},R^\ast,m_0)$ is replaced with the circuit $\tlaDec \leftarrow \CCO.\qsim(1^{\secp}, \pp_{\msf{aDec}}, 1^{|m_b|})$. In particular, the hybrid works as follows:

\begin{enumerate}
	\item The challenger computes $(\apk, \amsk) \leftarrow \ABE.\setup(1^{\secp})$, sets $\pk \seteq \apk$, and sends $\pk$ to $\qA$.
	\item The adversary $\qA$ sends any polynomial number of secret key queries for $P \in \mcl{P}$ at any point of the experiment. The challenger generates $\sk_P \leftarrow \ABE.\keygen(\amsk, P)$ and sends $\sk_P$ to $\qA$.
	\item $\qA$ sends a pair of challenge attributes $(x_0, x_1)$ and a pair of challenge messages $(m_0, m_1)$ satisfying the fact that $P(x_0) = P(x_1) = 0$ for all $P$ queried so far in the key query phase.
	\item The challenger computes the challenge ciphertext as follows:
	\begin{enumerate}
		\item Sample $\bm{\theta}_{i}, \bm{z}_{i} \leftarrow \{0, 1\}^{\secp}$ for all $i \in [\ell]$ where $\bm{\theta}_i = (\theta_{i, j})_{j \in [\secp]}$ and $\bm{z}_i = (z_{i, j})_{j \in [\secp]}$.
\item $\tlaDec \leftarrow \CCO.\qsim(1^{\secp}, \pp_{\msf{aDec}}, 1^{|m_b|})$ where $\msf{aDec}$ is defined in Figure \ref{fig:pe1}. (Note that, we do not need to compute ABE ciphertexts since we only require the lengths of a ABE ciphertext in order to calculate $\pp_{\msf{aDec}}$.)
		\item Set $\qct^* \seteq (\tlaDec, \{\ket{\bm{z}_{i}}_{\bm{\theta}_i}\}_{i \in [\ell]})$ and $\vk \seteq (\{(\bm{z}_i, \bm{\theta}_i)\}_{i \in [\ell]}, \vk_{\msf{aDec}})$.
	\end{enumerate}	
	The challenger sends $\qct^*$ to $\qA$.
	\item $\qA$ sends a certificate $\cert = (\bm{z}' = (\bm{z}_1', \ldots, \bm{z}_{\ell}'), \cert_{\msf{aDec}})$ and its internal state $\rho$ to the challenger where $\bm{z}_i' = (z_{i, j}')_{j \in [\secp]}$ for all $i \in [\ell]$.
\item The challenger checks if $(z_{i, j} = z_{i, j}') \wedge (\theta_{i, j} = 1)$ holds for all $i \in [\ell]$ and $j \in [\lambda]$ and $\CCO.\vrfy(\vk_{\msf{aDec}},\allowbreak \cert_{\msf{aDec}}) = \top$. If it does not hold, the challenger outputs $\bot$ as the final output of the experiment. Otherwise, go to the next step.
    \item The experiment outputs $\rho$ as a final output.
\end{enumerate}
Since the information of lock string $R^*$ is not used in generating the ABE ciphertexts $\act_i$, the certified everlasting security of compute-and-compare obfuscation guarantees that $\hyb_1$ and $\hyb_2$ are indistinguishable to $\qA$. In other words, we have
\begin{equation}
		\TD(\hyb_{1}, \hyb_{2})
	\le \negl(\secp).
\end{equation}

\item[$\hyb_3:$] This hybrid proceeds exactly similar to $\hyb_2$ except that the simulated circuit is now replaced with a honestly obfuscated version of it. In particular, the obfuscated circuit is computed as $\tlaDec \leftarrow \CCO.\qobf(1^{\secp}, \msf{aDec}, R^*, m_1)$ where the circuit $\msf{aDec}$ is defined using the hardwired values $\act_i \leftarrow \ABE.\enc(\apk, x_1, (\bm{\theta}_i, 0 \oplus \bigoplus_{j: \theta_{i, j} = 0}z_{i, j}))$. In particular, the hybrid works as follows:

\begin{enumerate}
	\item The challenger computes $(\apk, \amsk) \leftarrow \ABE.\setup(1^{\secp})$, sets $\pk \seteq \apk$, and sends $\pk$ to $\qA$.
	\item The adversary $\qA$ sends any polynomial number of secret key queries for $P \in \mcl{P}$ at any point of the experiment. The challenger generates $\sk_P \leftarrow \ABE.\keygen(\amsk, P)$ and sends $\sk_P$ to $\qA$.
	\item $\qA$ sends a pair of challenge attributes $(x_0, x_1)$ and a pair of challenge messages $(m_0, m_1)$ satisfying the fact that $P(x_0) = P(x_1) = 0$ for all $P$ queried so far in the key query phase.
	\item The challenger computes the challenge ciphertext as follows:
	\begin{enumerate}
		\item Sample $R^* = (r_1, \ldots, r_{\ell}) \leftarrow \{0, 1\}^{\ell}$.
		\item Sample $\bm{\theta}_{i}, \bm{z}_{i} \leftarrow \{0, 1\}^{\secp}$ for all $i \in [\ell]$ where $\bm{\theta}_i = (\theta_{i, j})_{j \in [\secp]}$ and $\bm{z}_i = (z_{i, j})_{j \in [\secp]}$.
		\item Set $\wt{r}_{i} \seteq 0 \oplus \bigoplus_{j: \theta_{i, j} = 0}z_{i, j}$ for all $i \in [\ell]$.
		\item Compute $\act_{i} \leftarrow \ABE.\enc(\apk, x_1,  (\bm{\theta}_i, \wt{r}_i))$ for all $i \in [\ell]$.
\item $\tlaDec \leftarrow \CCO.\qobf(1^{\secp}, \msf{aDec}, R^*, m_1)$ where $\msf{aDec}$ is defined in Figure \ref{fig:pe1}.
		\item Set $\qct^* \seteq (\tlaDec, \{\ket{\bm{z}_{i}}_{\bm{\theta}_i}\}_{i \in [\ell]})$ and $\vk \seteq (\{(\bm{z}_i, \bm{\theta}_i)\}_{i \in [\ell]}, \vk_{\msf{aDec}})$.
	\end{enumerate}	
	The challenger sends $\qct^*$ to $\qA$.
	\item $\qA$ sends a certificate $\cert = (\bm{z}' = (\bm{z}_1', \ldots, \bm{z}_{\ell}'), \cert_{\msf{aDec}})$ and its internal state $\rho$ to the challenger where $\bm{z}_i' = (z_{i, j}')_{j \in [\secp]}$ for all $i \in [\ell]$.
	\item The challenger checks if $(z_{i, j} = z_{i, j}') \wedge (\theta_{i, j} = 1)$ holds for all $i \in [\ell]$ and $j \in [\lambda]$ and $\CCO.\vrfy(\vk_{\msf{aDec}}, \allowbreak \cert_{\msf{aDec}}) = \top$. If it does not hold, the challenger outputs $\bot$ as the final output of the experiment. Otherwise, go to the next step.
		\item The experiment outputs $\rho$ as a final output.
\end{enumerate}
By similar argument as in the previous hybrid, the hybrids $\hyb_2$ and $\hyb_3$ are indistinguishable by the certified everlasting security of compute-and-compare obfuscation. In other words, we have
\begin{equation}
		\TD(\hyb_{2}, \hyb_{3})
	\le \negl(\secp).
\end{equation}

\item[$\hyb_4:$] This hybrid proceeds exactly similar to $\hyb_3$ except that the ABE ciphertexts $\act_i$ is now replaced with encryption of $(\bm{\theta}_i, \wt{r}_i)$ where $\wt{r}_{i} \seteq r_i \oplus \bigoplus_{j: \theta_{i, j} = 0}z_{i, j}$ for all $i \in [\ell]$. In particular, the hybrid works as follows:

\begin{enumerate}
	\item The challenger computes $(\apk, \amsk) \leftarrow \ABE.\setup(1^{\secp})$, sets $\pk \seteq \apk$, and sends $\pk$ to $\qA$.
	\item The adversary $\qA$ sends any polynomial number of secret key queries for $P \in \mcl{P}$ at any point of the experiment. The challenger generates $\sk_P \leftarrow \ABE.\keygen(\amsk, P)$ and sends $\sk_P$ to $\qA$.
	\item $\qA$ sends a pair of challenge attributes $(x_0, x_1)$ and a pair of challenge messages $(m_0, m_1)$ satisfying the fact that $P(x_0) = P(x_1) = 0$ for all $P$ queried so far in the key query phase.
	\item The challenger computes the challenge ciphertext as follows:
	\begin{enumerate}
		\item Sample $R^* = (r_1, \ldots, r_{\ell}) \leftarrow \{0, 1\}^{\ell}$.
		\item Sample $\bm{\theta}_{i}, \bm{z}_{i} \leftarrow \{0, 1\}^{\secp}$ for all $i \in [\ell]$ where $\bm{\theta}_i = (\theta_{i, j})_{j \in [\secp]}$ and $\bm{z}_i = (z_{i, j})_{j \in [\secp]}$.
		\item Set $\wt{r}_{i} \seteq r_i \oplus \bigoplus_{j: \theta_{i, j} = 0}z_{i, j}$ for all $i \in [\ell]$.
		\item Compute $\act_{i} \leftarrow \ABE.\enc(\apk, x_1,  (\bm{\theta}_i, \wt{r}_i))$ for all $i \in [\ell]$.
\item $\tlaDec \leftarrow \CCO.\qobf(1^{\secp}, \msf{aDec}, R^*, m_1)$ where $\msf{aDec}$ is defined in Figure \ref{fig:pe1}.
		\item Set $\qct^* \seteq (\tlaDec, \{\ket{\bm{z}_{i}}_{\bm{\theta}_i}\}_{i \in [\ell]})$ and $\vk \seteq (\{(\bm{z}_i, \bm{\theta}_i)\}_{i \in [\ell]}, \vk_{\msf{aDec}})$.
	\end{enumerate}	
	The challenger sends $\qct^*$ to $\qA$.
	\item $\qA$ sends a certificate $\cert = (\bm{z}' = (\bm{z}_1', \ldots, \bm{z}_{\ell}'), \cert_{\msf{aDec}})$ and its internal state $\rho$ to the challenger where $\bm{z}_i' = (z_{i, j}')_{j \in [\secp]}$ for all $i \in [\ell]$.
\item The challenger checks if $(z_{i, j} = z_{i, j}') \wedge (\theta_{i, j} = 1)$ holds for all $i \in [\ell]$ and $j \in [\lambda]$ and $\CCO.\vrfy(\vk_{\msf{aDec}},\allowbreak \cert_{\msf{aDec}}) = \top$. If it does not hold, the challenger outputs $\bot$ as the final output of the experiment. Otherwise, go to the next step.
		\item The experiment outputs $\rho$ as a final output.
\end{enumerate}
Since all the secret keys $\sk_P$ corresponding to predicates $P$ queried by the adversary satisy the condition that $P(x_1) = 0$, we can depend on the semantic security of ABE and show that the hybrids $\hyb_3$ and $\hyb_4$ are indistinguishable from $\qA$'s point of view using the similar argument that we used while establishing the indistinguishability between the hybrids $\hyb_0$ and $\hyb_1$. In other words, we have
\begin{equation}
		\TD(\hyb_{3}, \hyb_{4})
	\le \negl(\secp).
\end{equation}

\end{description}	
Finally, we note that $\hyb_4$ is the original certified everlasting experiment of $\Sigma_{\tsf{pe-ce}}$ where the challenge bit is set to 1. Therefore, combing the advantages of $\qA$ in the consecutive hybrids and applying the triangular inequality, we have
\begin{equation}
		\TD(\hyb_{0}, \hyb_{4})
	\le \negl(\secp).
\end{equation}

Finally, it is easy to show the computational indistinguishability, i.e.,
\begin{align}
\left|\Pr[\C\expb{\Sigma_{\tsf{pe-ce}},~ \qA}{ada}{ind}(\secp, 0) = 1] - \Pr[\C\expb{\Sigma_{\tsf{pe-ce}},~ \qA}{ada}{ind}(\secp, 1) = 1]\right| \le \negl(\secp).
\end{align}
We skip the formal description as it follows from the security of ABE and the security of $\CECCO$.
We can erase information about $R=(r_,\ldots,r_\ell)$ by the security of ABE. Then, we can apply the security of $\CECCO$.
This completes the proof.
\end{proof}

\fi

\ifnum\anonymous=1
\else
\section*{Acknowledgement}
TM is supported by JST Moonshot JPMJMS2061-5-1-1, JST FOREST, MEXT QLEAP, the Grant-in-Aid for Scientific Research (B) No.JP19H04066,
the Grant-in Aid for Transformative Research Areas (A) 21H05183, and the Grant-in-Aid for Scientific Research (A)
No.22H00522.
TH is supported by JSPS research fellowship and by JSPS KAKENHI No. JP22J21864.
\fi

\newcommand{\etalchar}[1]{$^{#1}$}

\ifnum\cameraready=0
	\ifnum\llncs=0
	\appendix

\section{Omitted Proofs for Collusion-Resistant FE}\label{appsec:omitted_proof_crfecd}

We prove the adaptive security of our collusion-resistant scheme $\CED$ in~\cref{sec:CRFE_CED_const}.
That is, we show
\begin{align}
    \left|\Pr[\C\expb{\CED,~ \qA}{ada}{ind}(\secp, 0) = 1] - \Pr[\C\expb{\CED,~ \qA}{ada}{ind}(\secp, 1) = 1]\right| \leq \negl(\secp).
\end{align}

Let $\qA$ be a QPT adversary against the adaptive security.
We consider the following sequence of hybrids.

\begin{description}
\item[$\hyb_0$:]
This is the original adaptive security experiment where the challenge bit is set to be $0$. Specifically, it works as follows:
\begin{enumerate}
    \item The challenger generates $(\fe.\MPK,\fe.\MSK) \leftarrow \FE.\setup(1^{\secp})$, sets $\MPK \seteq \fe.\MPK$ and $\MSK \seteq \fe.\MSK$, and sends $\MPK$ to $\qA$.
    \item $\qA$ can make arbitrarily many key queries at any point of the experiment. When it makes a key query $f$, the challenger generates  $\fe.\sk_{g[f]}\gets \FE.\keygen(\fe.\MSK,g[f])$ and returns $\sk_f=\fe.\sk_{g[f]}$ to $\qA$. 
    \item $\qA$ sends $(m^{(0)},m^{(1)})$ to the challenger.\footnote{We use $(m^{(0)},m^{(1)})$ instead of $(m_0,m_1)$ to denote a pair of challenge messages to avoid a notational collision.} It must satisfy $f(m^{(0)})=f(m^{(1)})$ for all key queries $f$ that are made before or after sending $(m^{(0)},m^{(1)})$. 
    \item The challenger generates $(\qct,\vk)\gets \qenc(\MPK,m^{(0)})$. Specifically,
    \begin{enumerate}
     \item Generate $z_{i}, \theta_{i} \leftarrow \{0, 1\}^{\secp}$ for every $i\in[2\msglen+1]$.
    \item Generate $u_{i,j,b}\la\bit^\secp$ and compute $v_{i,j,b}\la\PRG(u_{i,j,b})$ for every $i\in[2\msglen+1]$, $j\in[\secp]$ and $b\in\bit$ and set $U=(u_{i,j,b})_{i\in[2\msglen+1],j\in[\secp],b\in\bit}$ and $V\seteq (v_{i,j,b})_{i\in[2\msglen+1],j\in[\secp],b\in\bit}$. 
    \item 
    Generate a state 
    \begin{align}
        \ket{\psi_{i,j}}\seteq \begin{cases}
    \ket{z_{i,j}}\ket{u_{i,j,z_{i,j}}} &\text{~if~} \theta_{i,j}=0\\ 
    \ket{0}\ket{u_{i,j,0}}+(-1)^{z_{i,j}} \ket{1}\ket{u_{i,j,1}} &\text{~if~} \theta_{i,j}=1
    \end{cases}
    \end{align}
    where $\theta_{i,j}$ (resp. $z_{i,j}$) is the $j$-th bit of $\theta_i$ (resp. $z_i$)
    for every $i\in [2\msglen+1]$ and $j\in [\secp]$. 
    \item  \label{step:generate_beta}
    Generate
      \begin{align}
        \beta_i\seteq \begin{cases}
        m_i^{(0)}\oplus \bigoplus_{j: \theta_{i,j} = 0} z_{i,j} &\text{if~}i\in[\msglen]\\
        0\oplus \bigoplus_{j: \theta_{i,j} = 0} z_{i,j} &\text{if~}i\in[\msglen+1, 2\msglen+1]
        \end{cases}.
    \end{align}
\item Generate $\fe.\ct\la\FE.\enc(\fe.\MPK,V\|\theta_1\|\ldots\|\theta_{2\msglen+1}\|\beta_1\|\ldots \|\beta_{2\msglen+1})$.
    \item Set $\qct = (\fe.\ct, \bigotimes_{i\in[2\msglen+1],j\in[\secp]}\ket{\psi_{i,j}})$ and $\vk = (U,(z_i,\theta_i)_{i\in[2\msglen+1]})$.
    \end{enumerate}
    The challenger sends $\qct$ to $\qA$.
    \item If $\qA$ sends a certificate of deletion $\cert$, the challenger computes $\vrfy(\vk_0, \cert)$ and sends the result to $\qA$.
    \item Again, the challenger answers key queries from $\qA$.
\item  When $\qA$ outputs a bit $b'$, the experiment outputs $b'$ if $f_\ell(m_0)=f_\ell(m_1)$ holds for all key queries $f_\ell$.
\end{enumerate}
\item[$\hyb_1$:] This is identical to $\hyb_0$  except that $v_{i,j,1\oplus z_{i,j}}$ is uniformly chosen from $\bit^{2\secp}$ instead of being set to be $\PRG(u_{i,j,1\oplus z_{i,j}})$ for all $i\in [2\msglen+1]$ and $j\in [\secp]$ such that $\theta_{i,j}=0$. 
\item[$\hyb_2$:] This is identical to $\hyb_1$  except that $(\beta_i)_{i\in [2\msglen+1]}$ is generated as       \begin{align}
        \beta_i\seteq \begin{cases}
        m_i^{(1)}\oplus \bigoplus_{j: \theta_{i,j} = 0} z_{i,j} &\text{if~}i\in[\msglen]\\
        0\oplus \bigoplus_{j: \theta_{i,j} = 0} z_{i,j} &\text{if~}i\in[\msglen+1, 2\msglen+1]
        \end{cases}.
    \end{align}
\item[$\hyb_3$:] 
This is identical to $\hyb_2$ except that $v_{i,j,b}$ is set to be $\PRG(u_{i,j,b})$ for all $i\in [2\msglen+1]$, $j\in [\secp]$, and $b\in \bit$. 
\end{description}
Note that $\hyb_3$ is identical to the original adaptive security experiment where the challenge bit is set to be $1$. Thus, we only have to prove 
\begin{align} \label{eq:goal_fe_adaptive}
    |\Pr[\hyb_0=1]-\Pr[\hyb_3=1]|\le \negl(\secp).
\end{align}
We prove \cref{eq:goal_fe_adaptive} by the following lemmata.
\begin{lemma}\label{lem:fe_adaptive_one}
If $\PRG$ is a secure PRG, 
\begin{align}
    |\Pr[\hyb_0=1]-\Pr[\hyb_1=1]|\le \negl(\secp).
\end{align}
\begin{proof}
Noting that $u_{i,j,1\oplus z_{i,j}}$ for $i\in[2\msglen+1]$ and $j\in [\secp]$ such that $\theta_{i,j}=0$ is used only for generating $v_{i,j,1\oplus z_{i,j}}$ in $\hyb_0$, \cref{lem:fe_adaptive_one} directly follows from the security of $\PRG$.
Note that we can simulate $\Vrfy(\vk_0,\cert)$ where $\cert=(c_{i,j},d_{i,j})_{i,j}$ since we need $\setbk{z_{i,j}}_{i,j}$ and $\setbk{u_{i,j,b}}_{i,j,b}$ such that $\theta_{i,j}=1$ for verification.
\end{proof}
\begin{lemma}\label{lem:fe_adaptive_two}
If $\FE$ is adaptively secure, 
\begin{align}
    |\Pr[\hyb_1=1]-\Pr[\hyb_2=1]|\le \negl(\secp).
\end{align}
\end{lemma}
\begin{proof}
For each $i\in[2\msglen+1]$ and $j\in [\secp]$ such that $\theta_{i,j}=0$, there is no $u$ such that $\PRG(u)=v_{i,j,1\oplus z_{i,j}}$ except for probability $2^{-\secp}$. Let $\mathsf{Good}$ be the event that the above holds for all $i\in[2\msglen+1]$ and $j\in [\secp]$. We have $\Pr[\mathsf{Good}]\ge 1-(2\msglen+1) \secp 2^{-\secp}=1-\negl(\secp)$. 
We prove that whenever $\mathsf{Good}$ occurs, we have 
\begin{align} \label{eq:equivalence_function_normal_security}
&g[f]((V,\theta_1,\ldots,\theta_{2\msglen+1},\beta_1^{(0)},\ldots,\beta_{2\msglen+1}^{(0)}),(b_{i,j},u_{i,j})_{i\in [2\msglen+1],j\in [\secp]})\\
=
&g[f]((V,\theta_1,\ldots,\theta_{2\msglen+1},\beta_1^{(1)},\ldots,\beta_{2\msglen+1}^{(1)}),(b_{i,j},u_{i,j})_{i\in [2\msglen+1],j\in [\secp]})
\end{align}
for all key queries $f$ and $(b_{i,j},u_{i,j})_{i\in [2\msglen+1],j\in [\secp]}$
where 
  \begin{align}
        \beta_i^{(a)}\seteq \begin{cases}
        m_i^{(a)}\oplus \bigoplus_{j: \theta_{i,j} = 0} z_{i,j} &\text{if~}i\in[\msglen]\\
        0\oplus \bigoplus_{j: \theta_{i,j} = 0} z_{i,j} &\text{if~}i\in[\msglen+1, 2\msglen+1]
        \end{cases}
    \end{align}
for $a\in \bit$. 
If this is proven, \cref{lem:fe_adaptive_two} directly follows from the adaptive security of $\FE$. 

Below, we prove \cref{eq:equivalence_function_normal_security}. 
We consider the following two cases.
\begin{itemize}
    \item If $\PRG(u_{i,j})=v_{i,j,b_{i,j}}$ holds for every $i\in[2\msglen+1]$ and $j\in[\secp]$, then by the assumption that $\mathsf{Good}$ occurs, we have $b_{i,j}=z_{i,j}$ for all $i\in[2\msglen+1]$ and $j\in [\secp]$ such that $\theta_{i,j}=0$. Then we have $\beta_i^{(a)}\oplus \bigoplus_{j: \theta_{i,j} = 0} b_{i,j}=m^{(a)}_i$ for $i\in [\msglen]$ and  $\beta_{2\msglen+1}^{(a)}\oplus \bigoplus_{j: \theta_{2\msglen+1,j} = 0} b_{2\msglen+1,j}=0$ for $a\in \bit$. Then the LHS of  \cref{eq:equivalence_function_normal_security} is equal to $f(m^{(0)})$ and the RHS of \cref{eq:equivalence_function_normal_security} is equal to $f(m^{(1)})$. By the restriction on $\qA$ in the adaptive security experiment, we have  $f(m^{(0)})=f(m^{(1)})$. Therefore, both sides of \cref{eq:equivalence_function_normal_security} are equal to  $f(m^{(0)})=f(m^{(1)})$. 
    \item Otherwise, both sides of \cref{eq:equivalence_function_normal_security} are equal to $\bot$.
\end{itemize}
In either case, \cref{eq:equivalence_function_normal_security} holds.
Note that we can simulate $\Vrfy(\vk_b,\cert)$ where $\cert=(c_{i,j},d_{i,j})_{i,j}$ since we need $\setbk{z_{i,j}}_{i,j}$ and $\setbk{u_{i,j,b}}_{i,j,b}$ such that $\theta_{i,j}=1$ for verification.
This completes the proof of \cref{lem:fe_adaptive_two}.
\end{proof}
\begin{lemma}
If $\PRG$ is a secure PRG, 
\begin{align}
    |\Pr[\hyb_2=1]-\Pr[\hyb_3=1]|\le \negl(\secp).
\end{align}
\end{lemma}
\begin{proof}
Noting that $u_{i,j,1\oplus z_{i,j}}$ for $i\in[2\msglen+1]$ and $j\in [\secp]$ such that $\theta_{i,j}=0$ is used only for generating $v_{i,j,1\oplus z_{i,j}}$ in $\hyb_3$, \cref{lem:fe_adaptive_one} directly follows from the security of $\PRG$.
Note that we can simulate $\Vrfy(\vk_1,\cert)$ where $\cert=(c_{i,j},d_{i,j})_{i,j}$ since we need $\setbk{z_{i,j}}_{i,j}$ and $\setbk{u_{i,j,b}}_{i,j,b}$ such that $\theta_{i,j}=1$ for verification.
\end{proof}
\end{lemma}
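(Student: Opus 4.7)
My plan is to give a direct reduction to $\PRG$ security, mirroring \cref{lem:fe_adaptive_one}. The only syntactic difference between $\hyb_2$ and $\hyb_3$ is how the strings $v_{i,j,1\oplus z_{i,j}}$ are sampled at coordinates $(i,j)$ with $\theta_{i,j}=0$: they are drawn uniformly from $\bit^{2\secp}$ in $\hyb_2$, and set to $\PRG(u_{i,j,1\oplus z_{i,j}})$ for a freshly sampled seed in $\hyb_3$. Crucially, for exactly those coordinates the preimage $u_{i,j,1\oplus z_{i,j}}$ is never used anywhere else in the adversary's view: the BB84 register $\ket{\psi_{i,j}}$ for $\theta_{i,j}=0$ carries only $\ket{u_{i,j,z_{i,j}}}$, and the FE secret plaintext contains only $V$, $\theta$, and $\beta$, which depend on $u$ solely through $v$.

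The reduction $\cB$ against $\PRG$ then proceeds as follows. $\cB$ samples $\theta_i$, $z_i$ honestly, and samples $u_{i,j,b}$ honestly except that for each $(i,j)$ with $\theta_{i,j}=0$ it plants its PRG challenge into $v_{i,j,1\oplus z_{i,j}}$ and simply does not generate the corresponding preimage. It then honestly runs $\FE.\Setup$, answers key queries with $\FE.\keygen(\fe.\MSK,g[f])$, prepares the states $\ket{\psi_{i,j}}$ (which do not require the omitted preimages), computes $\beta$ using $m^{(1)}$ exactly as in $\hyb_2$ and $\hyb_3$, encrypts $(V,\theta_1,\ldots,\theta_{2\msglen+1},\beta_1,\ldots,\beta_{2\msglen+1})$, and forwards the resulting ciphertext to $\qA$. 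When the PRG challenges are pseudorandom, $\cB$'s simulation is distributed exactly as $\hyb_3$; when they are uniform, it is distributed exactly as $\hyb_2$.

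The one point to double-check, which I expect to be the only real subtlety, is that $\cB$ can still faithfully answer the certificate verification query despite not knowing some of the $u_{i,j,1\oplus z_{i,j}}$ values. Inspecting $\Vrfy$, the check $z_{i,j}=c_{i,j}\oplus d_{i,j}\cdot(u_{i,j,0}\oplus u_{i,j,1})$ is performed only for $(i,j)$ with $\theta_{i,j}=1$, at which coordinates $\cB$ has sampled both preimages itself; the coordinates whose preimages $\cB$ omitted are exactly those where $\Vrfy$ does nothing. Hence the simulation is perfect, and any distinguishing advantage of $\qA$ between $\hyb_2$ and $\hyb_3$ translates into an equal $\PRG$-distinguishing advantage for $\cB$, which is negligible. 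No other technical obstacle arises, since the hop is a clean syntactic rewrite and the verification-simulation subtlety is precisely what restricts the rewrite to the $1\oplus z_{i,j}$ slot.
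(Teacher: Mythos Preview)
Your proposal is correct and follows essentially the same approach as the paper: the paper's proof also rests on the observation that $u_{i,j,1\oplus z_{i,j}}$ for $\theta_{i,j}=0$ is used only to generate $v_{i,j,1\oplus z_{i,j}}$, and explicitly notes that $\Vrfy$ can still be simulated because verification touches only the $\theta_{i,j}=1$ coordinates. You have simply spelled out the reduction in more detail than the paper does.
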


\newcommand{\okskfe}[0]{\algo{1KeySKFE}}
\newcommand{\okskfes}[0]{\keys{1keyskfe}}
\newcommand{\okcskfe}[0]{\algo{SKFE}}
\newcommand{\pkfe}[0]{\keys{pkfe}}
\newcommand{\okcskfes}[0]{\keys{skfe}}

\section{Adaptively Secure Public-Slot PKFE}\label{sec:adaptive_PKFE_public_slot}
In this section, we present an adaptively secure public-slot PKFE scheme based on
\begin{itemize}
\item Selectively secure PKFE,
\item Selectively single-key function private SKFE, and
\item Adaptively single-key single-ciphertext public-slot SKFE.
\end{itemize}

We need to show how to achieve adaptively single-key single-ciphertext public-slot SKFE since it is an essentail building block.
Our adaptively secure public-slot PKFE scheme is presented in~\cref{sec:adaptive_PKFE_pub_slot}.

We present an adaptively single-key single-ciphertext public-slot SKFE scheme based on
\begin{itemize}
\item Selectively single-key single-ciphertext public-slot SKFE and
\item Receiver non-committing encryption
\end{itemize}
in~\cref{sec:constructions_ada_one_one_SKFE_public_slot}.

We also present a selectively secure single-ciphertext SKFE with public scheme based on IO and OWFs. This construction uses an MIFE scheme whose arity is $2$ (i.e., $2$-input FE) by Goldwasser et al.~\cite{EC:GGGJKL14}. We introduce necessary tools and definitions in~\cref{sec:tools_for_adaptive_single_key_ct_SKFE_public_slot,sec:SKFE_public_slot_selective_security}.

\subsection{Building Blocks}\label{sec:tools_for_adaptive_single_key_ct_SKFE_public_slot}
We introduce building blocks for our adaptively single-key single-ciphertext public-slot SKFE scheme.

\paragraph{Non-committing encryption.}
We recall the notion of (secret-key) receiver non-committing encryption (NCE)~\cite{STOC:CFGN96,EC:JarLys00,TCC:CanHalKat05}.

\begin{definition}[Secret-Key RNCE (Syntax)]\label{def:sk_nce_syntax}
A secret-key NCE scheme is a tuple of PPT algorithms $(\keygen,\Enc,\Dec,\allowbreak\Fake,\Reveal)$ with plaintext space $\Ms$.
\begin{description}
    \item [$\keygen(1^\secp)\ra (\ek,\dk,\aux)$:] The key generation algorithm takes as input the security 
    parameter $1^\secp$ and outputs a key pair $(\ek,\dk)$ and an auxiliary information $\aux$.
    \item [$\Enc(\ek,m)\ra \ct$:] The encryption algorithm takes as input $\ek$ and a plaintext $m\in\cM$ and outputs a ciphertext $\ct$.
    \item [$\Dec(\dk,\ct)\ra m^\prime \mbox{ or }\bot$:] The decryption algorithm takes as input $\dk$ and $\ct$ and outputs a plaintext $m^\prime$ or $\bot$.
    \item [$\Fake(\ek,\aux)\ra \tlct$:] The fake encryption algorithm takes $\dk$ and $\aux$, and outputs a fake ciphertext $\tlct$.
    \item [$\Reveal(\ek,\aux,\tlct,m)\ra \tldk $:] The reveal algorithm takes $\ek,\aux,\tlct$ and $m$, and outputs a fake secret key $\tldk$.
\end{description}  
\end{definition}

\begin{definition}[Correctness of secret-key NCE]\label{def:correctness_sk_nce}
There exists a negligible function $\negl$ such that for any $\secp\in \N$, $m\in\Ms$,
\begin{align}
\Pr\left[
m^\prime \ne m
\ \middle |
\begin{array}{ll}
(\ek,\dk,\aux)\la\keygen(1^\secp)\\
\ct\lrun \Enc(\ek,m)\\
m^\prime\la\Dec(\dk,\ct)
\end{array}
\right] 
\le\negl(\secp).
\end{align}
\end{definition}

\begin{definition}[Receiver Non-Committing (RNC) Security for SKE]\label{def:sk_nce_security}
A secret-key NCE scheme is RNC secure if it satisfies the following.
Let $\Sigma=(\keygen, \Enc, \Dec, \Fake,\Reveal)$ be a secret-key NCE scheme.
We consider the following security experiment $\expb{\Sigma,\qA}{sk}{rec}{nc}(\secp,b)$.

\begin{enumerate}
    \item The challenger computes $(\ek,\dk,\aux) \lrun \keygen(1^\secp)$ and sends $1^\secp$ to the adversary $\qA$.
    \item $\qA$ sends an encryption query $m$ to the challenger. The challenger computes and returns $\ct\lrun \Enc(\ek,m)$ to $\qA$. This process can be repeated polynomially many times.
    \item $\qA$ sends a query $m \in \Ms$ to the challenger.
    \item The challenger does the following.
    \begin{itemize}
    \item If $b =0$, the challenger generates $\ct \lrun \Enc(\ek,m)$ and returns $(\ct,\dk)$ to $\qA$.
    \item If $b=1$, the challenger generates $\tlct \lrun \Fake(\ek,\aux)$ and\\
    $\tldk \lrun \Reveal(\ek,\aux,\tlct,m)$ and returns $(\tlct,\tldk)$ to $\qA$.
    \end{itemize}
    \item Again $\qA$ can send encryption queries.
    \item $\qA$ outputs $b'\in \bit$.
\end{enumerate}
Let $\advc{\Sigma,\qA}{sk}{rec}{nc}(\secp)$ be the advantage of the experiment above.
We say that the $\Sigma$ is RNC secure if for any QPT adversary, it holds that
\begin{align}
\advc{\Sigma,\qA}{sk}{rec}{nc}(\secp)\seteq \abs{\Pr[ \expc{\Sigma,\qA}{sk}{rec}{nc}(\secp, 0)=1] - \Pr[ \expc{\Sigma,\qA}{sk}{rec}{nc}(\secp, 1)=1] }\leq \negl(\secp).
\end{align}
\end{definition}

\begin{theorem}[{\cite[Section 7.2 in the eprint version]{C:KNTY19}}]\label{thm:indcpa-pke_to_rnc-pke}
If there exists an IND-CPA secure SKE scheme (against QPT adversaries), there exists an RNC secure secret-key NCE scheme (against QPT adversaries) with plaintext space $\bit^{\ell}$, where $\ell$ is some polynomial, respectively.
\end{theorem}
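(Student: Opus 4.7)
The plan is to adapt the Canetti--Halevi--Katz double-key equivocation construction of receiver non-committing encryption to the secret-key setting, which is the approach taken in the cited work of Kitagawa et al. For a repetition parameter $n = \ell + \omega(\log \secp)$, $\keygen$ would sample $n$ pairs of fresh SKE keys $(\sk_i^0, \sk_i^1)$ together with a uniformly random selector $r \in \bit^n$; the encryption key $\ek$ would store both keys of every pair and $r$, the decryption key $\dk$ would reveal only the chosen key $\sk_i^{r_i}$ in each pair, and $\aux$ would retain the unrevealed twins $\sk_i^{1-r_i}$ for later equivocation.

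To encrypt $m \in \bit^\ell$, I would first XOR-secret-share $m$ into shares $m_1, \ldots, m_n$ with $\bigoplus_i m_i = m$, encrypt each $m_i$ under the live key $\sk_i^{r_i}$, and encrypt a fixed dummy string $0^\ell$ under the dead key $\sk_i^{1-r_i}$. The $\Fake$ algorithm encrypts independent uniform strings $\widetilde{m}_i^0, \widetilde{m}_i^1$ under both slots of every pair; given a target message $m$, $\Reveal$ would select $r' \in \bit^n$ uniformly from the solutions to $\bigoplus_i \widetilde{m}_i^{r'_i} = m$ and output the corresponding keys. The expected number of solutions is $2^{n-\ell}$, so by a Chernoff-style bound this set is non-empty except with probability $\negl(\secp)$.

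For RNC security I would proceed with a two-stage hybrid argument. Since pre-challenge encryption queries are answered identically in the real and ideal experiments, they drop out of the analysis, and only the joint distribution of the challenge ciphertext and the revealed key matters. In the first hybrid, each encryption of $0^\ell$ under a dead key $\sk_i^{1-r_i}$ inside the challenge ciphertext is replaced by an encryption of a freshly sampled uniform $\widetilde{m}_i^{1-r_i}$; because $\sk_i^{1-r_i}$ never leaves the challenger, this change is indistinguishable slot-by-slot by IND-CPA security of the underlying SKE via a sub-hybrid over $i \in [n]$. After this transition, the challenge view in the real experiment can be re-expressed as sampling $r$ uniformly, sampling each $\widetilde{m}_i^{1-r_i}$ independently uniformly, and sampling $(\widetilde{m}_i^{r_i})_i$ as uniform XOR-shares of $m^*$. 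The ideal experiment produces the same joint distribution by sampling all $\widetilde{m}_i^b$'s independently uniformly and then sampling $r'$ uniformly from the XOR-constrained set of selectors.

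The main obstacle will be the statistical step equating these two sampling orders. This reduces to checking that the conditional distribution of the selector given the full tuple $\{\widetilde{m}_i^b\}_{i,b}$ is uniform on the XOR-constrained set under both orderings, which holds by a direct counting argument; combined with the overwhelming-probability success of $\Reveal$ guaranteed by the parameter choice above, this yields statistical indistinguishability after the hybrid. Composing the IND-CPA hybrid with this statistical equivalence then completes the proof.
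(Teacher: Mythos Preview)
There is a genuine gap: your construction does not meet the RNC security definition in \cref{def:sk_nce_security}, which grants the adversary encryption-oracle access both before and after the challenge. In your scheme $\Enc(\ek,\cdot)$ places the real share in the slot determined by the fixed selector $r$ stored in $\ek$ and a dummy $0^\ell$ in the other slot, while $\Reveal$ outputs a fake key carrying a \emph{different} selector $r'$ chosen to satisfy the XOR constraint on the fake ciphertext. An adversary that first queries $\Enc$ on some $m'$, then after the challenge decrypts that query ciphertext with the revealed key, recovers $\bigoplus_{i:\,r'_i=r_i} m'_i$ in the ideal world instead of $m'$; since $r'\ne r$ with overwhelming probability this is an immediate distinguisher. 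Your sentence that pre-challenge encryption queries ``are answered identically in the real and ideal experiments [so] they drop out of the analysis'' is precisely where the argument fails: the query ciphertexts and the revealed key are correlated through the selector, and you do not address post-challenge queries at all.

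The cited construction (mirrored in this paper's certified-everlasting RNCE in \cref{sec:const_rnce_classic}) avoids the issue by a bit-by-bit encoding with $n=\ell$: $\Enc$ encrypts the \emph{same} bit $m[i]$ under \emph{both} keys $\sk_i^0,\sk_i^1$, the selector is kept out of $\ek$ entirely and resides only in $\dk$, $\Fake$ encrypts $0$ in one slot and $1$ in the other with a random assignment $x^*$, and $\Reveal(m)$ outputs the selector $x^*\oplus m$. Because every encryption query puts the same value in both slots, any revealed selector decrypts it correctly, so the oracle queries really are decoupled from the equivocation. No XOR-sharing, no slack $n>\ell$, and no Chernoff argument are needed; the IND-CPA hybrid then runs position by position on the slot not named by the revealed selector.
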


\paragraph{Functional Encryption.}
\begin{definition}[Public-Key FE (Syntax)]\label{def:pkfe}
A public-key functional encryption (PKFE) scheme for a class $\cF$ of functions is a tuple of PPT algorithms
$\Sigma=(\Setup,\keygen,\Enc,\Dec)$ with plaintext space $\Ms$, ciphertext space $\Cs$, master public key space $\mathcal{MPK}$, master secret key space $\mathcal{MSK}$, and secret key space $\mathcal{SK}$,
that work as follows.
\begin{description}
    \item[$\Setup(1^\secp)\ra(\MPK,\MSK)$:]
    The setup algorithm takes the security parameter $1^\secp$ as input, and outputs a master public key $\MPK\in\mathcal{MPK}$
    and a master secret key $\MSK\in\mathcal{MSK}$.
    \item[$\keygen(\MSK,f)\ra\sk_f$:]
    The key generation algorithm takes $\MSK$ and $f\in\mathcal{F}$ as input,
    and outputs a secret key $\sk_f\in\mathcal{SK}$.
    \item[$\Enc(\MPK,m)\ra \CT$:]
    The encryption algorithm takes $\MPK$ and $m\in\Ms$ as input,
     and outputs  a ciphertext $\CT\in \Cs$.
    \item[$\Dec(\sk_f,\CT)\ra y$ $\mbox{\bf  or }\bot$:]
    The decryption algorithm takes $\sk_f$ and $\CT$ as input, and outputs $y$ or $\bot$.
\end{description}
\end{definition}

We require that a PKFE scheme satisfies correctness defined below.
\begin{definition}[Correctness of PKFE]\label{def:correctness_pkfe}
There exists a negligible function $\negl$ such that for any $\secp\in \N$, $m\in\Ms$ and $f\in\mathcal{F}$
\begin{align}
\Pr\left[
y\ne f(m)
\ \middle |
\begin{array}{ll}
(\MPK,\MSK)\la\Setup(1^\secp)\\
\sk_f\la\keygen(\MSK,f)\\
\CT\lrun \Enc(\MPK,m)\\
y\la\Dec(\sk_f,\ct)
\end{array}
\right] 
\le\negl(\secp).
\end{align}
\end{definition}

\begin{definition}[Selective Security of PKFE]\label{def:selective_security_pkfe}
Let $\Sigma=(\Setup,\keygen,\Enc,\Dec)$ be a PKFE scheme.
We consider the following security experiment $\expb{\Sigma,\qA}{sel}{ind}(\secp,b)$ against a QPT adversary $\qA$. 
\begin{enumerate}
    \item $\qA$ sends $(m_0,m_1)\in\Ms^2$ to the challenger.
    \item The challenger runs $(\MPK,\MSK)\la \Setup(1^\secp)$, computes $\CT\la \Enc(\MPK,m_b)$, and sends $(\MPK,\CT)$ to $\qA$.
    \item $\qA$ is allowed to make arbitrarily many key queries.
    For the $\ell$-th key query, the challenger receives $f_\ell\in\mathcal{F}$, 
    computes $\sk_{f_\ell}\la\keygen(\MSK,f_\ell)$,
    and sends $\sk_{f_\ell}$ to $\qA$.
    \item $\qA$ outputs $b'\in\bit$. 
    If $f_\ell(m_0)=f_\ell(m_1)$ holds for all key queries $f_\ell$, the experiment outputs $b'$. Otherwise, it outputs $\bot$.
\end{enumerate}
We say that $\Sigma$ is adaptively secure if for any $QPT$ adversary $\qA$ it holds that
\begin{align}
\advb{\Sigma,\qA}{sel}{ind}(\secp) \seteq \abs{\Pr[\expb{\Sigma,\qA}{sel}{ind}(\secp,0)=1]  - \Pr[\expb{\Sigma,\qA}{sel}{ind}(\secp,1)=1]} \leq \negl(\secp).
\end{align}
\end{definition}

\begin{definition}[Secret-Key FE (Syntax)]\label{def:skfe}
A secret-key functional encryption (SKFE) scheme for a class $\cF$ of functions is a tuple of PPT algorithms
$\Sigma=(\Setup,\keygen,\Enc,\Dec)$ with plaintext space $\Ms$, ciphertext space $\Cs$, master secret key space $\mathcal{MSK}$, and secret key space $\mathcal{SK}$,
that work as follows.
\begin{description}
    \item[$\Setup(1^\secp)\ra \MSK$:]
    The setup algorithm takes the security parameter $1^\secp$ as input, and outputs a master secret key $\MSK\in\mathcal{MSK}$.
    \item[$\keygen(\MSK,f)\ra\sk_f$:]
    The key generation algorithm takes $\MSK$ and $f\in\mathcal{F}$ as input,
    and outputs a secret key $\sk_f\in\mathcal{SK}$.
    \item[$\Enc(\MSK,m)\ra \CT$:]
    The encryption algorithm takes $\MSK$ and $m\in\Ms$ as input,
     and outputs  a ciphertext $\CT\in \Cs$.
    \item[$\Dec(\sk_f,\CT)\ra y$ $\mbox{\bf  or }\bot$:]
    The decryption algorithm takes $\sk_f$ and $\CT$ as input, and outputs $y$ or $\bot$.
\end{description}
\end{definition}

We require that an SKFE scheme satisfies correctness defined below.
\begin{definition}[Correctness of SKFE]\label{def:correctness_skfe}
There exists a negligible function $\negl$ such that for any $\secp\in \N$, $m\in\Ms$ and $f\in\mathcal{F}$
\begin{align}
\Pr\left[
y\ne f(m)
\ \middle |
\begin{array}{ll}
\MSK\la\Setup(1^\secp)\\
\sk_f\la\keygen(\MSK,f)\\
\CT\lrun \Enc(\MSK,m)\\
y\la\Dec(\sk_f,\ct)
\end{array}
\right] 
\le\negl(\secp).
\end{align}
\end{definition}

\begin{definition}[Adaptive Single-Key Single-Ciphertext Security of SKFE]\label{def:adaptive_single-key-ct_security_skfe}
Let $\Sigma=(\Setup,\keygen,\Enc,\Dec)$ be an SKFE scheme.
We consider the following security experiment $\expc{\Sigma,\qA}{ada}{1key}{1ct}(\secp,b)$ against a QPT adversary $\qA$. 
\begin{enumerate}
    \item The challenger runs $\MSK\la \Setup(1^\secp)$.
    \item The adversary makes the following encryption query and key query in no particular order.
\begin{itemize}
    \item $\qA$ sends $f\in\mathcal{F}$ to the challenger. The challenger computes $\sk_{f}\la\keygen(\MSK,f)$, and returns $\sk_{f}$ to $\qA$. $\qA$ can do this process one once.
    \item $\qA$ sends $(m_0,m_1)\in\mathcal{M}^2$ to the challenger. The challenger computes $\CT\la\enc(\MSK,m_{b})$, and returns $\CT$ to $\qA$. $\qA$ can do this process one once.
\end{itemize}
    \item $\qA$ outputs $b'\in\bit$. 
    If $f(m_{0})=f(m_{1})$ holds, the experiment outputs $b'$. Otherwise, it outputs $\bot$.
\end{enumerate}
We say that $\Sigma$ is adaptively single-key single-ciphertext secure if for any $QPT$ adversary $\qA$ it holds that
\begin{align}
\advc{\Sigma,\qA}{ada}{1key}{1ct}(\secp) \seteq \abs{\Pr[\expc{\Sigma,\qA}{ada}{1key}{1ct}(\secp,0)=1]  - \Pr[\expc{\Sigma,\qA}{ada}{1key}{1ct}(\secp,1)=1]} \leq \negl(\secp).
\end{align}
\end{definition}

\begin{definition}[Selective Single-Key Function Privacy of SKFE]\label{def:selective_single-key_function-privacy_skfe}
Let $\Sigma=(\Setup,\keygen,\Enc,\Dec)$ be an SKFE scheme.
We consider the following security experiment $\expc{\Sigma,\qA}{sel}{1key}{fp}(\secp,b)$ against a QPT adversary $\qA$. 
\begin{enumerate}
    \item The challenger sends $(m_{1,0},m_{1,1}),\ldots, (m_{q,0},m_{q,1}) \in \cM^{2q}$ to the challenger.
    \item The challenger runs $\MSK\la \Setup(1^\secp)$, computes $\CT_i \la \enc(\MSK, m_{i,b})$ for all $i \in [q]$, and returns $(\CT_1,\ldots,\CT_q)$ to $\qA$.
    \item $\qA$ sends $(f_0,f_1)\in\mathcal{F}^2$ to the challenger. The challenger computes $\sk_{f_b}\la\keygen(\MSK,f_b)$, and returns $\sk_{f_b}$ to $\qA$. $\qA$ can do this process only once.
    \item $\qA$ outputs $b'\in\bit$.
    If $f_0(m_{i,0})=f_1(m_{i,1})$ holds for all $i \in [q]$, the experiment outputs $b'$. Otherwise, it outputs $\bot$.
\end{enumerate}
We say that $\Sigma$ is selectively single-key function private if for any $QPT$ adversary $\qA$ it holds that
\begin{align}
\advc{\Sigma,\qA}{sel}{1key}{fp}(\secp) \seteq \abs{\Pr[\expc{\Sigma,\qA}{sel}{1key}{fp}(\secp,0)=1]  - \Pr[\expc{\Sigma,\qA}{sel}{1key}{fp}(\secp,1)=1]} \leq \negl(\secp).
\end{align}
\end{definition}

\paragraph{2-input FE.}
We recall the notion of $2$-input FE. The following definitions are special cases of multi-input functional encryption (MIFE) by Goldwasser et al.~\cite{EC:GGGJKL14}.

\begin{definition}[$2$-input FE (Syntax)]\label{def:two-input_fe}
A $2$-input FE scheme for a class $\cF$ of functions is a tuple of PPT algorithms
$\Sigma=(\Setup,\keygen,\Enc,\Dec)$ with plaintext space $\Ms$, ciphertext space $\Cs$, master secret key space $\mathcal{MSK}$, and secret key space $\mathcal{SK}$,
that work as follows.
\begin{description}
    \item[$\Setup(1^\secp)\ra \MSK$:]
    The setup algorithm takes the security parameter $1^\secp$ as input, and outputs a master secret key $\MSK\in\mathcal{MSK}$ and two encryption keys $\EK_1$ and $\EK_2$.
    \item[$\keygen(\MSK,f)\ra\sk_f$:]
    The key generation algorithm takes $\MSK$ and $f\in\mathcal{F}$ as input,
    and outputs a secret key $\sk_f\in\mathcal{SK}$.
    \item[$\Enc(\EK_i,x)\ra \CT_i$:]
    The encryption algorithm takes $\EK_i$ and $x\in\Ms$ as input,
     and outputs a ciphertext $\CT_i\in \Cs$.
    \item[$\Dec(\sk_f,\CT_1,\CT_2)\ra z$ $\mbox{\bf  or }\bot$:]
    The decryption algorithm takes $\sk_f$ and $(\CT_1,\CT_2)$ as input, and outputs $y$ or $\bot$.
\end{description}
\end{definition}

We require that a $2$-input FE scheme satisfies correctness defined below.
\begin{definition}[Correctness of $2$-input FE]\label{def:correctness_two-input_fe}
There exists a negligible function $\negl$ such that for any $\secp\in \N$, $(x,y)\in\Ms^2$ and $f\in\mathcal{F}$
\begin{align}
\Pr\left[
z\ne f(x,y)
\ \middle |
\begin{array}{ll}
(\MSK,\EK_1,\EK_2)\la\Setup(1^\secp)\\
\sk_f\la\keygen(\MSK,f)\\
\CT_1\lrun \Enc(\EK_1,x),\CT_2\lrun \Enc(\EK_2,y) \\
z\la\Dec(\sk_f,\CT_1,\CT_2)
\end{array}
\right] 
\le\negl(\secp).
\end{align}
\end{definition}

\begin{definition}[$(1,1)$-$\sel$-$\ind$ Security of $2$-FE]\label{def:selective_single-key-ct_security_two_fe}
Let $\Sigma=(\Setup,\keygen,\Enc,\Dec)$ be a $2$-input FE scheme.
We consider the following security experiment $\expb{\Sigma,\qA}{sel}{1ct}(\secp,b)$ against a QPT adversary $\qA$. 
\begin{enumerate}
	\item The adversary sends $((x_0,y_0),(x_1,y_1))$ to the challenger.
    \item The challenger runs $(\MSK,\EK_1,\EK_2)\la \Setup(1^\secp)$, computes $\CT_1 \gets \Enc(\EK_1,x_b)$ and $\CT_2\gets \Enc(\EK_2,y_b)$, and sends $(\EK_{2},\CT_1,\CT_2)$ to $\qA$.
    \item $\qA$ can send a key query $f_i\in\mathcal{F}$ to the challenger. The challenger computes $\sk_{f_i}\la\keygen(\MSK,f_i)$, and returns $\sk_{f_i}$ to $\qA$. $\qA$ can send unbounded polynomially many key queries. Let $q_{k}$ be the total number of the key queries.
    \item $\qA$ outputs $b'\in\bit$.
    If $f_i(x_{0},y)=f_i(x_{1},y)$ holds for all $y\in\Ms$ and $i\in [q_{k}]$ (we call $\qA$ is valid), the experiment outputs $b'$. Otherwise, it outputs $\bot$.
\end{enumerate}
We say that $\Sigma$ is $(1,1)$-$\sel$-$\ind$ secure if for any $QPT$ adversary $\qA$ it holds that
\begin{align}
\advb{\Sigma,\qA}{sel}{1ct}(\secp) \seteq \abs{\Pr[\expb{\Sigma,\qA}{sel}{1ct}(\secp,0)=1]  - \Pr[\expb{\Sigma,\qA}{sel}{1ct}(\secp,1)=1]} \leq \negl(\secp).
\end{align}
Here, $(1,1)$ means that $\qA$ is given \emph{one} encryption key $\EK_2$ and \emph{one} challenge ciphertext vector $(\CT_1,\CT_2)$.
\end{definition}
The security definition above is a special case of $(t,q)$-$\sel$-$\ind$ security by Goldwasser et al.~\cite{EC:GGGJKL14}, where $t$ is the number of corrupted encryption keys and $q$ is the number of challenge ciphertext vectors.

\begin{theorem}[\cite{EC:GGGJKL14}]
If there exist IO and OWFs, there exists $(1,1)$-$\sel$-$\ind$ secure $2$-input FE for all polynomial-size circuits.
\end{theorem}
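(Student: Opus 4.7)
The plan is to adapt the MIFE construction of Goldwasser et al.\ to the simpler $(n{=}2, t{=}1, q{=}1)$ setting, where I will build the scheme from a puncturable PRF (known from OWFs, see \cref{thm:prf-owf}) and IO, and prove security by a sequence of IO-based hybrids that crucially exploit the ``for all $y$'' validity condition in \cref{def:selective_single-key-ct_security_two_fe}.

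Concretely, the construction I have in mind works as follows. The setup samples two puncturable PRF keys $K_1, K_2$ and one-time MAC keys $\mathsf{mk}_1, \mathsf{mk}_2$. The master secret key is $(K_1, K_2, \mathsf{mk}_1, \mathsf{mk}_2)$, and each encryption key $\EK_i$ consists of $K_i$ together with $\mathsf{mk}_i$. A ciphertext for $x$ under $\EK_i$ is $(r, \mathsf{PRF}(K_i, r)\oplus x, \mathsf{MAC}_{\mathsf{mk}_i}(r))$ for a fresh random $r$. A functional secret key for $f$ is $\iO(G_f)$, where $G_f$ hardcodes $K_1, K_2, \mathsf{mk}_1, \mathsf{mk}_2$, takes two ciphertexts $((r_1, c_1, \tau_1), (r_2, c_2, \tau_2))$, verifies both MAC tags, recomputes $x_i = c_i \oplus \mathsf{PRF}(K_i, r_i)$, and outputs $f(x_1, x_2)$. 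Decryption simply evaluates the obfuscated circuit. Correctness is immediate; compactness holds since ciphertexts are independent of $\abs{f}$.

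For security I would argue indistinguishability between the $b{=}0$ and $b{=}1$ experiments via a hybrid over the (polynomially many) function-key queries. For each query $f_i$, I introduce an intermediate hybrid in which $G_{f_i}$ is replaced by $\iO(\widetilde{G}_{f_i})$, where $\widetilde{G}_{f_i}$ punctures $K_1$ at the challenge randomness $r_1^*$ and hardcodes the slot-1 message, outputting $f_i(x_0, c_2 \oplus \mathsf{PRF}(K_2, r_2))$ whenever $r_1 = r_1^*$ and the slot-1 MAC verifies. One-time MAC security on slot~1 rules out any other slot-1 input reusing $r_1^*$, so the punctured circuit is functionally equivalent to the original, and IO security applies. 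Because the adversary holds $\EK_2$, slot-2 inputs to the obfuscated circuit can be \emph{any} well-formed ciphertext decrypting to an arbitrary $y$; here the validity condition $f_i(x_0, y) = f_i(x_1, y)$ for all $y$ ensures that after swapping $x_0 \to x_1$ in the hardcoded slot-1 value the circuit outputs agree on every input. Finally, I change the slot-1 challenge ciphertext using puncturable PRF security at $r_1^*$, swap the hardcoded value, and then undo the puncturing hybrids in reverse, concluding the proof.

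The main obstacle will be dealing with slot~2, whose encryption key is in the adversary's hands. This rules out the naive strategy of hardcoding a single pair of challenge ciphertexts into each function key, since the adversary can freely generate fresh slot-2 ciphertexts that the hardcoded IO must still evaluate correctly. The argument closes precisely because of the quantifier in the validity condition: we only need to swap the slot-1 message, and $f_i(x_0, y) = f_i(x_1, y)$ for \emph{every} $y$ makes the two punctured-and-hardcoded circuits functionally equivalent as programs, so IO indistinguishability applies uniformly across all adversary-chosen slot-2 ciphertexts. Getting the order of puncturings right so that MAC, PRF, and IO security can each be invoked on a pair of programs computing the identical function is the delicate part of the proof, and the reason the simpler $(1,1)$ setting avoids the much heavier machinery needed for the general $(t, q)$ setting of Goldwasser et al.
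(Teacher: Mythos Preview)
The paper does not give its own proof of this theorem; it simply cites \cite{EC:GGGJKL14} and notes that the $(1,1)$ case is a special case of their general $(t,q)$-$\sel$-$\ind$ result. So there is no ``paper's proof'' to compare against beyond the reference.

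Evaluating your proposal on its own merits, there is a genuine gap in the IO hybrid. You argue that after puncturing $K_1$ at $r_1^*$ and hardcoding $x_0$, the modified circuit $\widetilde G_{f_i}$ is functionally equivalent to $G_{f_i}$ because ``one-time MAC security on slot~1 rules out any other slot-1 input reusing $r_1^*$.'' But your MAC authenticates only $r_1$, not the pair $(r_1,c_1)$. Hence the input $(r_1^*,\,c_1^*\oplus\delta,\,\tau_1^*)$ for any $\delta\neq 0$ still passes slot-1 verification, and on this input $G_{f_i}$ outputs $f_i(x_0\oplus\delta,\cdot)$ while $\widetilde G_{f_i}$ outputs $f_i(x_0,\cdot)$. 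The two circuits are therefore \emph{not} functionally equivalent, and standard IO security (\cref{def:io}) cannot be invoked. Even if you MAC $(r_1,c_1)$ jointly, the MAC key is hardcoded in the obfuscated circuit, so for \emph{every} $(r_1,c_1)$ a valid tag \emph{exists}; MAC unforgeability only says the adversary cannot \emph{find} it. You are implicitly relying on differing-inputs obfuscation, not IO.

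The actual \cite{EC:GGGJKL14} construction avoids this by the two-slot (trapdoor) technique: each ciphertext carries two encryptions of the message, and the obfuscated circuit contains a flag selecting which slot to decrypt. The hybrids then (i) modify the inactive slot of the slot-1 ciphertext, (ii) use IO to flip the flag---here functional equivalence holds on \emph{all} inputs precisely because of the validity condition $f_i(x_0,y)=f_i(x_1,y)$ for every $y$---and (iii) clean up. Your instinct that the ``for all $y$'' condition is what makes the corrupted-slot case go through is exactly right; what is missing is a mechanism that lets you invoke IO on circuits that genuinely compute the same function, rather than on circuits that merely agree on inputs the adversary can produce.
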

Although Goldwasser et al. proved a more general theorem ($(t,q)$-$\sel$-$\ind$ secure $n$-input FE where $t\le n$ and $q$ is an a-priori bounded polynomial), the simplified version above is sufficient for our purpose.

\subsection{Variants of Security Definitions}\label{sec:SKFE_public_slot_selective_security}

We can consider the secret-key variant of~\cref{def:fe_public_slot}, where $\setup$ generates only a master secret key $\MSK$ and $\enc$ uses $\MSK$ instead of $\MPK$. Correctness of public-slot SKFE is a natural extension of~\cref{def:correctness_fe_public_slot}.
We omit syntax and correctness for public-slot SKFE.

Below, we introduce variants of security definitions for public-slot SKFE.
\begin{definition}[Single-Key Security of Public-Slot SKFE]\label{def:selective_security_skfe_public_slot}
Let $\Sigma=(\Setup,\keygen,\Enc,\Dec)$ be a public-slot SKFE scheme.
We consider the following security experiment $\expb{\Sigma,\qA}{ada}{1ct}(\secp,b)$ against a QPT adversary $\qA$. 
\begin{enumerate}
    \item The challenger runs $\MSK\la \Setup(1^\secp)$.
    \item $\qA$ is allowed to make arbitrarily many key queries.
    For the $\ell$-th key query, the challenger receives $f_\ell\in\mathcal{F}$, 
    computes $\sk_{f_\ell}\la\keygen(\MSK,f_\ell)$,
    and sends $\sk_{f_\ell}$ to $\qA$.
    \item $\qA$ can send a challenge plaintext pair $(m_0,m_1)\in\Ms^2$ to the challenger.
    \item The challenger computes $\CT\la \Enc(\MSK,m_b)$ and sends $\CT$ to $\qA$.
    \item Again, $\qA$ is allowed to make arbitrarily many key queries.
    \item $\qA$ outputs $b'\in\bit$. 
    If $f_\ell(m_0,\pub)=f_\ell(m_1,\pub)$ holds for all key queries $f_\ell$ and public inputs $\pub\in \mathcal{P}$, the experiment outputs $b'$. Otherwise, it outputs $\bot$.
\end{enumerate}
We say that $\Sigma$ is adaptively single-ciphertext secure if for any $QPT$ adversary $\qA$ it holds that
\begin{align}
\advb{\Sigma,\qA}{ada}{1ct}(\secp) \seteq \abs{\Pr[\expb{\Sigma,\qA}{ada}{1ct}(\secp,0)=1]  - \Pr[\expb{\Sigma,\qA}{ada}{1ct}(\secp,1)=1]} \leq \negl(\secp).
\end{align}
If the adversary must declare the challenge plaintext pair $(m_0,m_1)$ at the very beginning of the experiment, we say that $\Sigma$ is selectively single-ciphertext secure and denote the advantage and experiment by $\advb{\Sigma,\qA}{sel}{1ct}(\secp)$ and $\expb{\Sigma,\qA}{sel}{1ct}(\secp)$, respectively.

If the adversary is allowed to make only one key query, we say $\Sigma$ is adaptively/selectively single-key single-ciphertext secure and denote the advantage and experiment by $\advc{\Sigma,\qA}{xxx}{1key}{1ct}(\secp)$ and $\expc{\Sigma,\qA}{xx}{1key}{1ct}(\secp)$, respectively, where $\mathsf{xxx} \in \setbk{\mathsf{sel},\mathsf{ada}}$.
\end{definition}

\subsection{Adaptively Single-Key Single-Ciphertext Public-Slot SKFE Scheme}\label{sec:constructions_ada_one_one_SKFE_public_slot}

\paragraph{Selectively single-ciphertext public-slot SKFE.}
First, we present our selectively single-ciphertext public-slot SKFE scheme $\oneselFEps$.
\paragraph{Ingredients.}
\begin{itemize}
    \item $(1,1)$-$\sel$-$\ind$ secure $2$-input FE $\twoFE = \twoFE.(\setup,\keygen, \enc, \dec)$ for all polynomial-size circuits.
\end{itemize}
\paragraph{Scheme description.} Our scheme $\oneselFEps=\oneselFEps.(\setup,\keygen,\enc,\dec)$ is as follows.
\begin{description}
\item[$\oneselFEps.\setup(1^{\secp})$:]$ $
\begin{enumerate}
    \item Generate $(\twofe.\msk,\twofe.\ek_1,\twofe.\ek_2) \leftarrow \twoFE.\setup(1^{\secp})$.
    \item Output $\MSK \seteq (\twofe.\msk,\twofe.\ek_1,\twofe.\ek_2)$.
\end{enumerate}

\item[$\oneselFEps.\keygen(\MSK, f)$:]$ $
\begin{enumerate}
    \item Parse $\MSK =(\twofe.\msk,\twofe.\ek_1,\twofe.\ek_2)$.
    \item Generate $\twofe.\SK_f \gets \twoFE.\keygen(\twofe.\msk,f)$.
    \item Output $\SK_f \seteq \twofe.\sk_f$.
\end{enumerate}

\item[$\oneselFEps.\enc(\MSK, m)$:] $ $
\begin{enumerate}
    \item Parse $\MSK =(\twofe.\msk,\twofe.\ek_1,\twofe.\ek_2)$.
    \item Generate $\twofe.\ct_1 \gets \twoFE.\Enc(\twofe.\ek_1,m)$.
    \item Output $\CT \seteq (\twofe.\ct_1,\twofe.\ek_2)$.
\end{enumerate}

\item[$\oneselFEps.\dec(\SK_f, \CT,\pub)$:] $ $
\begin{enumerate}
\item Parse $\SK_f= \twofe.\sk_f$ and $\CT = (\twofe.\ct_1, \twofe.\ek_2)$.
    \item Compute $\twofe.\ct_2 \gets \twoFE.\Enc(\twofe.\ek_2,\pub)$.
    \item Compute and output $y\seteq \twoFE.\Dec(\twofe.\sk_f,\twofe.\ct_1,\twofe.\ct_2)$.
\end{enumerate}
\end{description}

\paragraph{Correctness.} It is easy to see correctness holds due to correctness of $\twoFE$.

\begin{theorem}\label{thm:selective_single-key-ct_SKFE_public_slot}
If $\twoFE$ is $(1,1)$-$\sel$-$\ind$ secure $2$-input FE for all polynomial-size circuits, $\oneselFEps$ is selectively single-ciphertext public-slot SKFE for all polynomial-size circuits.
\end{theorem}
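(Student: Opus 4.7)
The plan is to prove this via a direct, single-step reduction to the $(1,1)$-$\sel$-$\ind$ security of the underlying two-input FE scheme $\twoFE$. Given a selective single-ciphertext adversary $\qA$ against $\oneselFEps$ with challenge messages $(m_0, m_1)$, I will construct an adversary $\qB$ against $\twoFE$ that submits $((m_0, y^{\ast}), (m_1, y^{\ast}))$ as its challenge pair for an arbitrary fixed $y^{\ast}$ (say $0$) in the $\twoFE$ public-slot space. Upon receiving $(\twofe.\ek_2, \twofe.\ct_1^{\ast}, \twofe.\ct_2^{\ast})$ from its own challenger, $\qB$ discards $\twofe.\ct_2^{\ast}$ and hands $\CT \seteq (\twofe.\ct_1^{\ast}, \twofe.\ek_2)$ to $\qA$. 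This is a perfect simulation of $\oneselFEps.\enc(\MSK, m_b)$ since the $\oneselFEps$ ciphertext consists precisely of a $\twoFE$ encryption of the plaintext under $\twofe.\ek_1$, bundled with the public encryption key $\twofe.\ek_2$.

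Key queries are forwarded transparently: whenever $\qA$ asks for a functional key for $f$, $\qB$ forwards $f$ to its $\twoFE$ challenger, receives $\twofe.\sk_f$, and returns it as $\SK_f$. Finally $\qB$ echoes $\qA$'s output bit. The distribution of everything $\qA$ sees when $\qB$'s challenge bit is $b$ is identical to $\expb{\oneselFEps,\qA}{sel}{1ct}(\secp, b)$, so $\qB$'s advantage equals $\qA$'s.

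The only nontrivial check is that $\qB$ remains a valid adversary in the $(1,1)$-$\sel$-$\ind$ game. The validity condition for $\qB$ requires $f(m_0, y) = f(m_1, y)$ for all $y$ in the public-input space and every queried $f$. But this is exactly the admissibility condition imposed on $\qA$: $f(m_0, \pub) = f(m_1, \pub)$ for all key queries $f$ and all public inputs $\pub$. Thus whenever $\qA$ is admissible, so is $\qB$, and the reduction preserves advantage.

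I do not anticipate any real obstacle here; the construction is tailored so that the public slot of $\oneselFEps$ is realized by letting the decryptor self-generate $\twofe.\ct_2$ using the published $\twofe.\ek_2$, and the universal-quantifier style of the admissibility condition on the public slot lines up perfectly with the quantifier in the $(1,1)$-$\sel$-$\ind$ game. The one subtlety worth stating explicitly in the write-up is that giving $\twofe.\ek_2$ to the adversary in $\oneselFEps$ does not weaken security beyond what $(1,1)$-$\sel$-$\ind$ already tolerates, since the corresponding game precisely hands $\EK_2$ (and even $\CT_2$) to the adversary.
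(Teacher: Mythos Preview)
Your proposal is correct and matches the paper's proof essentially step for step: the paper also builds a direct reduction where $\qB$ sends $((m_0,y),(m_1,y))$ to the $\twoFE$ challenger (with $y$ chosen at random rather than fixed to $0$, an immaterial difference), forwards $(\twofe.\ct_1,\twofe.\ek_2)$ as the $\oneselFEps$ ciphertext while discarding $\twofe.\ct_2$, relays key queries transparently, and invokes exactly the same observation that $\qA$'s public-slot admissibility condition coincides with $\qB$'s validity condition in the $(1,1)$-$\sel$-$\ind$ game.
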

This theorem immediately yields a selectively single-key single-ciphertext public-slot SKFE for all polynomial-size circuits.

\begin{proof}
Let $\expb{\oneselFEps,\qA}{sel}{1ct}(\secp,b)$ denote the selective single-ciphertext security of public-slot SKFE.
We construct an algorithm $\qB$ that breaks $(1,1)$-$\sel$-$\ind$ security of $\twoFE$ by using an adversary $\qA$ that breaks selectively single-ciphertext security of $\oneselFEps$. $\qB$ does the following.
\begin{enumerate}
\item First, $\qA$ sends $(m_0,m_1)$. Then, $\qB$ chooses a random $y\chosen \Ms$, sets $(x_0,x_1)\seteq (m_0,m_1)$ and $(y_0,y_1)\seteq (y,y)$, and passes $((x_0,y_0),(x_1,y_1))$ to its challenger.
\item When $\qB$ receives $(\twofe.\ek_2,\twofe.\ct_1,\twofe.\ct_2)$ from its challenger, $\qB$ sets $\CT \seteq (\twofe.\ct_1,\twofe.\ek_2)$ and passes $\CT$ to $\qA$
\item When $\qA$ sends a key query $f_i$, $\qB$ passes $f_i$ to its challenger, receives $\twofe.\sk_{f_i}\gets \twoFE.\keygen(\twofe.\msk,f_i)$, and passes $\SK_{f_i} \seteq \twofe.\sk_{f_i}$ to $\qA$.
\item When $\qA$ outputs $b^\prime$, $\qB$ outputs $b^\prime$.
\end{enumerate}
If $\qA$ is valid in the experiment of selective single-ciphertext security for public-slot SKFE $\oneselFEps$, it holds $f_i(x_0,y^\prime)=f_i(x_1,y^\prime)$ for all $i \in [q]$ and $y^\prime \in \Ms$. Then, $\qB$ is also a valid adversary in the experiment of $(1,1)$-$\sel$-$\ind$ security for $\twoFE$ since $\qB$ received $\twofe.\ek_2$.
It is easy to see the following.
\begin{itemize}
\item If $\twofe.\ct_1 \gets \twoFE.\Enc(\twofe.\msk,x_0)$ and $\twofe.\ct_2\gets \twoFE.\Enc(\twofe.\msk,y)$, $\qB$ perfectly simulates $\expb{\oneselFEps,\qA}{sel}{1ct}(\secp,0)$.
\item If $\twofe.\ct_1 \gets \twoFE.\Enc(\twofe.\msk,x_1)$ and $\twofe.\ct_2\gets \twoFE.\Enc(\twofe.\msk,y)$, $\qB$ perfectly simulates $\expb{\oneselFEps,\qA}{sel}{1ct}(\secp,1)$.
\end{itemize}
Thus, if $\qA$ distinguishes $\expb{\oneselFEps,\qA}{sel}{1ct}(\secp,0)$ from $\expb{\oneselFEps,\qA}{sel}{1ct}(\secp,1)$, $\qB$ distinguishes $\expb{\twoFE,\qB}{sel}{1ct}(\secp,0)$ from $\expb{\twoFE,\qB}{sel}{1ct}(\secp,1)$.
This completes the proof.
\end{proof}

\paragraph{Adaptively single-key single-ciphertext public-slot SKFE.}
Next, we present our adaptively single-key single-ciphertext public-slot SKFE scheme $\oneadaFEps$.
\paragraph{Ingredients.}
\begin{itemize}
    \item Selectively single-key single-ciphertext public-slot SKFE $\oneselFEps = \oneselFEps.(\setup,\keygen, \enc, \dec)$ for all polynomial-size circuits.
    \item Receiver non-committing encryption $\NCE = \NCE.(\keygen,\Enc,\Dec,\Fake,\Reveal)$.
\end{itemize}

\paragraph{Scheme description.} Our scheme $\oneadaFEps=\oneadaFEps.(\setup,\keygen,\enc,\dec)$ is as follows.
\begin{description}
\item[$\oneadaFEps.\setup(1^{\secp})$:]$ $
\begin{enumerate}
    \item Generate $\sel.\msk \leftarrow \oneselFEps.\setup(1^{\secp})$.
    \item Generate $(\nce.\ek,\nce.\dk,\nce.\aux)\gets \NCE.\keygen(1^\secp)$.
    \item Output $\MSK \seteq (\sel.\msk,\nce.\ek,\nce.\dk,\nce.\aux)$.
\end{enumerate}

\item[$\oneadaFEps.\keygen(\MSK, f)$:]$ $
\begin{enumerate}
    \item Parse $\MSK =(\sel.\MSK,\nce.\ek,\nce.\dk,\nce.\aux)$.
    \item Generate $\sel.\sk_f \gets \oneselFEps.\keygen(\sel.\msk,f)$.
    \item Generate $\nce.\ct \gets \NCE.\Enc(\nce.\ek,\fe.\sk_f)$.
    \item Output $\SK_f \seteq \nce.\ct$.
\end{enumerate}

\item[$\oneadaFEps.\enc(\MSK, m)$:] $ $
\begin{enumerate}
    \item Parse $\MSK =(\sel.\msk,\nce.\ek,\nce.\dk,\nce.\aux)$.
    \item Generate $\sel.\ct \gets \oneselFEps.\Enc(\sel.\msk,m)$.
    \item Output $\CT \seteq (\sel.\ct,\nce.\dk)$.
\end{enumerate}

\item[$\oneadaFEps.\dec(\SK_f, \CT,\pub)$:] $ $
\begin{enumerate}
\item Parse $\SK_f= \nce.\ct$ and $\CT = (\sel.\ct, \nce.\dk)$.
    \item Compute $\sk_f^\prime \gets \NCE.\Dec(\nce.\dk,\nce.\ct)$.
    \item Compute and output $y\seteq \oneselFEps.\Dec(\sk_f^\prime,\sel.\ct,\pub)$.
\end{enumerate}
\end{description}

\paragraph{Correctness.} It is easy to see correctness holds due to correctness of $\oneselFEps$ and $\NCE$.

\begin{theorem}\label{thm:adaptive_single-key-ct_SKFE_public_slot}
If $\oneselFEps$ is selectively single-key single-ciphertext secure public-slot SKFE for all polynomial-size circuits and $\NCE$ is RNC secure, $\oneadaFEps$ is adaptively single-key single-ciphertext public-slot SKFE for all polynomial-size circuits.
\end{theorem}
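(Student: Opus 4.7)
The plan is a hybrid argument with four games $\Hyb_0, \Hyb_1, \Hyb_2, \Hyb_3$, where $\Hyb_0$ and $\Hyb_3$ equal the adaptive experiments $\expc{\oneadaFEps,\qA}{ada}{1key}{1ct}(\secp,0)$ and $\expc{\oneadaFEps,\qA}{ada}{1key}{1ct}(\secp,1)$, respectively. In $\Hyb_1$, the NCE ciphertext returned at the key query and the NCE decryption key returned at the encryption query are jointly replaced by a fake pair produced via $\NCE.\Fake$ (on the encryption key and auxiliary information generated at setup) and $\NCE.\Reveal$ (with revealed plaintext $\sel.\sk_f$). In $\Hyb_2$, the inner $\oneselFEps$ ciphertext is switched from an encryption of $m_0$ to an encryption of $m_1$. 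In $\Hyb_3$, the NCE fake is undone. I would show $\Hyb_0\approx \Hyb_1$ and $\Hyb_2\approx \Hyb_3$ by reducing to the RNC security of $\NCE$, and $\Hyb_1\approx \Hyb_2$ by reducing to the selective single-key single-ciphertext security of $\oneselFEps$.

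For the $\Hyb_0\approx\Hyb_1$ reduction, the RNC adversary simulates the $\oneadaFEps$ challenger for $\qA$. If $\qA$'s first query is a key query on $f$, the reduction computes $\sel.\sk_f$, forwards it as the RNC challenge message, obtains a pair $(\ct^*,\dk^*)$, returns $\ct^*$ as the NCE ciphertext immediately, and buffers $\dk^*$ for release as the NCE decryption key alongside $\sel.\ct$ in the encryption-query response. If instead $\qA$'s first query is the encryption query, the reduction never invokes its RNC challenge and simulates everything honestly; in that branch $\Hyb_0$ and $\Hyb_1$ coincide by definition and contribute no distinguishing advantage. The $\Hyb_2\approx \Hyb_3$ step is entirely symmetric.

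For $\Hyb_1\approx\Hyb_2$, the reduction to selective security of $\oneselFEps$ works as follows. In the key-first branch, at the key query the reduction computes a fake NCE ciphertext via $\NCE.\Fake$ (which needs no knowledge of $\sel.\sk_f$) and returns it; upon the encryption query it commits $(m_0,m_1)$ to the selective challenger, adaptively queries $f$ to obtain $\sel.\sk_f$, and then computes the equivocated NCE decryption key via $\NCE.\Reveal$ before returning it alongside $\sel.\ct$. In the encryption-first branch, the reduction commits $(m_0,m_1)$ immediately, returns $\sel.\ct$ together with the real NCE decryption key from setup, and later issues a real $\NCE.\Enc$ on $\sel.\sk_f$ when $f$ arrives. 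The public-slot validity constraint $f(m_0,\pub)=f(m_1,\pub)$ for all $\pub$ inherited from $\qA$ is exactly the restriction imposed by the selective public-slot game, so this reduction is a legitimate selective adversary.

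The main obstacle is the asymmetry between the RNC game, whose single challenge atomically releases the ciphertext/decryption-key pair, and the adaptive FE game, where these two components are delivered at different, adversarially chosen query times, and where the RNC challenge message $\sel.\sk_f$ must already be known when the RNC challenge is issued. The resolution hinges on the observation that only the key-first ordering actually requires a fake NCE substitution: in that ordering $\sel.\sk_f$ is computable at key-query time, so the RNC challenge can be issued there and $\dk^*$ buffered until the encryption response; in the encryption-first ordering the selective reduction can commit $(m_0,m_1)$ immediately and no NCE equivocation is needed.
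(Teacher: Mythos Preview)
Your proposal is correct and matches the paper's proof essentially step for step. The paper uses the same three transitions, parametrizing the outer two by the challenge bit (writing $\hyb_0(b)\approx\hyb_1(b)$ via RNC and $\hyb_1(0)\approx\hyb_1(1)$ via selective public-slot SKFE) rather than numbering four hybrids, but the content---faking the NCE pair only in the key-first branch, buffering the returned decryption key until the encryption response, and deferring the $\oneselFEps$ key query until after committing $(m_0,m_1)$---is identical to what you describe.
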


\begin{proof}
Let $\hyb_0(b)$ denote $\expc{\oneadaFEps,\qA}{ada}{1key}{1ct}(\secp,b)$.
We define a hybrid game $\hyb_1(b)$ as follows.
\begin{description}
\item[$\hyb_1(b)$:] This is the same as $\expc{\oneadaFEps,\qA}{ada}{1key}{1ct}(\secp,b)$ except that:
\begin{enumerate}
 \item if $\qA$ sends a key query $f$ before an encryption query $(m_0,m_1)$, we generate $\nce.\tlct \gets \NCE.\Fake(\nce.\ek,\nce.\aux)$ instead of $\nce.\ct \gets \NCE.\Enc(\nce.\ek,\sel.\sk_f)$ and return $\SK_f \seteq \nce.\tlct$ for the key query.
 \item when $\qA$ sends an encryption query $(m_0,m_1)$ after the key query $f$ above, we generate $\sel.\ct\gets \oneselFEps.\Enc(\sel.\msk,m_b)$, $\sel.\sk_f \gets \oneselFEps.\keygen(\sel.\msk,f)$, and $\nce.\tldk \gets \NCE.\Reveal(\nce.\ek,\allowbreak \nce.\aux,\nce.\tlct,\sel.\sk_f)$, and return $\CT \seteq (\sel.\ct,\nce.\tldk)$ for the encryption query.
 \end{enumerate} 
\end{description}

First, we show the following.
\begin{proposition}\label{prop:hyb_one_ada_single_ct_public_slot}
It holds $\abs{\Pr[\hyb_0(b)=1] -\Pr[\hyb_1(b)=1]}\le \negl(\secp)$ if $\NCE$ is RNC secure.
\end{proposition}
We construct an algorithm $\qB$ that breaks RNC security of $\NCE$ by using an adversary $\qA$ that breaks adaptive single-key single-ciphertext security of $\oneadaFEps$. Note that if $\qA$ sends an encryption query $(m_0,m_1)$ before a key query $f$, these two games are the same. We focus on the case where $\qA$ sends a key query $f$ before an encryption query $(m_0,m_1)$. $\qB$ does the following.
\begin{enumerate}
\item First, $\qB$ generates $\sel.\msk \gets \oneselFEps.\Setup(1^\secp)$.
\item When $\qA$ sends a key query $f$, $\qB$ generates $\sel.\sk_f \gets \oneselFEps.\keygen(\sel.\msk,f)$, sends $\sel.\sk_f$ to its challenger, and receives $(\nce.\ct^\ast,\nce.\dk^\ast)$. $\qB$ passes $\SK_f \seteq \nce.\ct^\ast$ to $\qA$. 
\item After the key query $f$ above, for an encryption query $(m_0,m_1)$, $\qB$ generates $\sel.\ct\gets \oneselFEps.\Enc(\sel.\msk,m_b)$ and returns $\CT \seteq (\sel.\ct,\nce.\dk^\ast)$ to $\qA$.
\item When $\qA$ outputs $b^\prime$, $\qB$ outputs $b^\prime$.
\end{enumerate}
It is easy to see the following.
\begin{itemize}
\item If $\nce.\ct^\ast \gets \NCE.\Enc(\nce.\ek,\sel.\sk_f)$ and $\nce.\dk^\ast = \nce.\dk$ where $(\nce.\ek.\nce.\dk,\nce.\aux)\gets \NCE.\keygen(1^\secp)$, $\qB$ perfectly simulates $\hyb_0(b)$.
\item If $\nce.\ct^\ast \seteq \nce.\tlct \gets \NCE.\Fake(\nce.\ek,\nce.\aux)$ and $\nce.\dk^\ast \gets \NCE.\Reveal(\nce.\ek,\nce.\aux,\nce.\tlct,\sel.\sk_f)$, $\qB$ perfectly simulates $\hyb_1(b)$.
\end{itemize}
Thus, if $\qA$ distinguishes $\hyb_0(b)$ from $\hyb_1(b)$, $\qB$ distinguishes $\expc{\NCE.\qB}{sk}{rec}{nc}(\secp,b)$.

Next, we show the following.
\begin{proposition}\label{prop:prop:hyb_two_ada_single_ct_public_slot}
It holds $\abs{\Pr[\hyb_1(0)=1] -\Pr[\hyb_1(1)=1]}\le \negl(\secp)$ if $\oneselFEps$ is selectively single-key single-ciphertext secure.
 \end{proposition} 
We construct an algorithm $\qB$ that breaks selective single-key single-ciphertext security of $\oneselFEps$ by using an adversary $\qA$ that breaks adaptive single-key single-ciphertext security of $\oneadaFEps$.
$\qB$ does the following.
\begin{enumerate}
\item First, $\qB$ generates $(\nce.\ek,\nce.\dk,\nce.\aux)\gets \NCE.\keygen(1^\secp)$.
\item There are the following two cases:
\begin{itemize}
\item When $\qA$ sends a key query $f$ before an encryption query $(m_0,m_1)$, $\qB$ generates $\nce.\tlct \gets \NCE.\Fake(\nce.\ek,\allowbreak \nce.\aux)$ passes $\SK_f \seteq \nce.\tlct$ to $\qA$. After the key query $f$ above, for an encryption query $(m_0,m_1)$, $\qB$ passes $(m_0,m_1)$ to its challenger and receives $\sel.\ct^\ast$. Then, $\qB$ sends $f$ to its challenger and receives $\sel.\sk_f \gets \oneselFEps(\sel.\msk,f)$. Finally, $\qB$ generates $\nce.\tldk \gets \NCE.\Reveal(\nce.\ek,\nce.\dk,\nce.\tlct,\sel.\sk_f)$ and sends $\CT\seteq (\sel.\ct^\ast,\nce.\tldk)$ to $\qA$.
\item When $\qA$ sends an encryption query $(m_0,m_1)$ before a key query $f$, $\qB$ passes $(m_0,m_1)$ to its challenger, receives $\sel.\ct^\ast$, and returns $\CT\seteq (\sel.\ct^\ast,\nce.\dk)$ to $\qA$. After the encryption query $(m_0,m_1)$ above, for a key query $f$, $\qB$ passes $f$ to its challenger and receives $\sel.\sk_f \gets \oneselFEps.\keygen(\sel.\msk,f)$. $\qB$ returns $\SK_f \seteq \nce.\ct \gets \NCE.\Enc(\nce.\ek,\sel.\sk_f)$ to $\qA$.
\end{itemize}
\item When $\qA$ outputs $b^\prime$, $\qB$ outputs $b^\prime$.
\end{enumerate}
Note that $\qB$ sends $(m_0,m_1)$ to its challenger before it sends $f$ as a key query in the both cases above.
If $\qA$ is a valid adversary in the experiment of adaptive single-key single-ciphertext security for public-lot SKFE $\oneadaFEps$, it holds $f(m_0,y^\prime)=f(m_1,y^\prime)$ for all $y^\prime\in\Ms$.
Then, $\qB$ is also a valid adversary in the experiment of selective single-key single-ciphertext security for public-slot SKFE $\oneselFEps$.
In addition, it is easy to see the following.
\begin{itemize}
\item If $\sel.\ct^\ast \gets \oneselFEps.\Enc(\sel.\msk,m_0)$, $\qB$ perfectly simulates $\hyb_1(0)$.
\item If $\sel.\ct^\ast \gets \oneselFEps.\Enc(\sel.\msk,m_1)$, $\qB$ perfectly simulates $\hyb_1(1)$.
\end{itemize}
Thus, if $\qA$ distinguishes $\hyb_1(0)$ from $\hyb_1(1)$, $\qB$ distinguishes $\expc{\oneselFEps,\qB}{sel}{1key}{1ct}(\secp,0)$ from $\expc{\oneselFEps,\qB}{sel}{1key}{1ct}(\secp,1)$.

Therefore, we obtain $\abs{\Pr[\hyb_0(0)=1]-\Pr[\hyb_0(1)=1]}\le \negl(\secp)$, which is our goal.
\end{proof}

\subsection{Adaptively Secure Public-Slot PKFE Scheme}\label{sec:adaptive_PKFE_pub_slot}

Note that the construction in this section is bassically the same as that by Ananth and Sahai~\cite{TCC:AnaSah16} except that we use single-key single-ciphertext public-slot SKFE as a building block istead of single-key single-ciphertext standard SKFE.
\paragraph{Ingredients.}
\begin{itemize}
	\item Selectively secure PKFE $\PKFE = \PKFE.(\setup,\keygen, \enc, \dec)$ for all polynonial-size circuits. 
	\item Selectively single-key function private SKFE $\okskfe = \okskfe.(\setup, \keygen, \enc, \dec)$ for polynomial-size circuits.
	\item  Adaptively single-key single-ciphertext public-slot SKFE $\okcskfe =  \okcskfe.(\setup, \keygen, \enc, \dec)$ for polynomial-size circuits.
	\item A PRF $\PRF: \mcl{K} \times \{0, 1\}^{\secp} \rightarrow \{0, 1\}^{\secp}$.
	\item SKE with pseudorandom ciphertext $\SKE = \SKE.(\setup, \enc, \dec)$.
\end{itemize}

\paragraph{Scheme description.} The adaptively secure public-slot PKFE scheme $\FE = \FE.(\setup,\keygen,\enc,\dec)$ is as follows.
\begin{description}
\item[$\setup(1^{\secp})$:]$\vspace{0.01cm}$
\begin{enumerate}
	\item Generate $(\pkfe.\MPK,\pkfe.\MSK) \leftarrow \PKFE.\setup(1^{\secp})$.
	\item Output $\MPK \seteq \pkfe.\MPK$ and $\MSK \seteq \pkfe.\MSK$.
\end{enumerate}

\item[$\keygen(\msk, f)$:]$\vspace{0.01cm}$
\begin{enumerate}
	\item Parse $\MSK =\pkfe.\MSK$.
	\item Sample $C_{\ske} \leftarrow \{0, 1\}^{\ell_{\ske}(\secp)}$ where $\ell_{\ske}(\secp)$ is the length of a SKE ciphertext that encrypts a string of length $\ell_{\okcskfes}(\secp) + \ell_{\okskfes}(\secp)$. We denote $\ell_{\okcskfes}(\secp)$ by the length of a SKFE secret key and $\ell_{\okskfes}(\secp)$ by the length of a $\okskfe$ ciphertext.
	\item Sample $\tau \leftarrow \{0, 1\}^{4\secp}$
	\item Generate $\pkfe.\sk_{g[f, C_{\ske}, \tau]}\gets \PKFE.\keygen(\pkfe.\MSK,g[f, C_{\ske}, \tau])$ where $g[f, C_{\ske}, \tau]$ is a function described in \cref{fig:gf}. 
	\item Output $\sk_f=\pkfe.\sk_{g[f, C_{\ske}, \tau]}$.
\end{enumerate}
	
\begin{figure}
	\begin{framed}
		\begin{center}
			\underline{$g[f, C_{\ske}, \tau]$}
		\end{center}
		\textbf{Input:}~ $\okskfes.\MSK, \tsf{K}, \ske.\SK, \beta$
		\begin{enumerate}
			\item Parse $\tau = (\tau_0\concat \tau_1\concat \tau_2\concat \tau_3)$. 
			\item If $\beta = 0$ then
			\begin{itemize}
				\item Compute $R_i \leftarrow \PRF(\tsf{K}, \tau_i)$ for $i \in \{0, 1, 2, 3\}$.
				\item Generate $\okcskfes.\MSK \leftarrow \okcskfe.\setup(1^{\secp}; R_0)$.
				\item Compute $\okcskfes.\sk_f \leftarrow \okcskfe.\KeyGen(\okcskfes.\MSK, f; R_1)$.
				\item Compute $\okskfes.\ct \leftarrow \okskfe.\Enc(\okskfes.\MSK, (\okcskfes.\MSK, R_2, 0); R_3)$.
				\item Output $(\okcskfes.\sk_f, \okskfes.\ct)$.
			\end{itemize}	
            \item Else,
            \begin{itemize}		
            \item Compute $(\okcskfes.\sk_f, \okskfes.\ct) \leftarrow \SKE.\Dec(\ske.\SK, C_{\ske})$.
			\item Output $(\okcskfes.\sk_f, \okskfes.\ct)$.
			\end{itemize}
		\end{enumerate}
	\end{framed}
	\caption{The description of the function $g[f, C_E, \tau]$}
	\label{fig:gf}
\end{figure}

\item[$\enc(\MPK, m)$:]$\vspace{0.01cm}$
\begin{enumerate}
	\item Parse $\MPK = \pkfe.\MPK$.
	\item Sample $\tsf{K} \leftarrow \mcl{K}$.
	\item Generate $\okskfes.\MSK \leftarrow \okskfe.\Setup(1^{\secp})$.
	\item Generate $\okskfes.\sk_{h[m]} \gets \okskfe.\keygen(\okskfes.\MSK,h[m])$ where $h[m]$ is a function described in \cref{fig:hm}. 
	\item Compute $\pkfe.\ct \leftarrow \PKFE.\Enc(\pkfe.\MPK, (\okskfes.\MSK, \tsf{K}, \bot, 0))$.
	\item Output $\ct = (\okskfes.\sk_{h[m]} ,\pkfe.\ct)$.
\end{enumerate}

\begin{figure}
	\begin{framed}
		\begin{center}
			\underline{$h[m]$}
		\end{center}
		\textbf{Input:}~ $\okcskfes.\MSK, R, \alpha$
		\begin{enumerate}
			\item If $\alpha = 0$ then
			\begin{itemize}
				\item Compute $\okcskfes.\ct \leftarrow \okcskfe.\Enc(\okcskfes.\MSK, m; R)$.
				\item Output $\okcskfes.\ct$.
			\end{itemize}	
			\item Else, output $\bot$.
		\end{enumerate}
	\end{framed}
	\caption{The description of the function $h[m]$}
	\label{fig:hm}
\end{figure}

\item[$\dec(\sk_f, \ct, y)$:]$\vspace{0.01cm}$
\begin{enumerate}
	\item Parse $\sk_f = \pkfe.\sk_{g[f]}$ and $\ct =(\okskfes.\sk_{h[m]} ,\pkfe.\ct)$.
	\item Compute $(\okcskfes.\sk_f, \okskfes.\ct) \leftarrow \PKFE.\dec(\pkfe.\sk_{g[f]}, \pkfe.\ct)$.
	\item Compute $\okcskfes.\ct \leftarrow \okskfe.\Dec( \okskfes.\sk_{h[m]}, \okskfes.\ct)$.
	\item Compute $m' \leftarrow \okcskfe.\Dec(\okcskfes.\sk_f, \okcskfes.\ct , y)$
	\item Output $m'$.
\end{enumerate}		
\end{description}

The security proof is almost the same as that of Ananth and Sahai~\cite{TCC:AnaSah16}. We provide the proof for confirmation since we use adaptively single-key single-ciphertext public-slot SKFE.

\begin{theorem}\label{thm:AS_adaptive_FE_public_slot}
If $\PKFE$ is selectively secure PKFE for $\Ppoly$, $\okskfe$ is selectively single-key function private SKFE for $\Ppoly$, $\okcskfe$ is adaptively single-key single-ciphertext public-slot SKFE for $\Ppoly$, $\PRF$ is a secure PRF, and $\SKE$ is ciphertext pseudorandom, $FE$ is adaptively indistinguishable-secure public-slot PKFE for $\Ppoly$.
\end{theorem}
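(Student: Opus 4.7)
The plan is to adapt the Ananth--Sahai ``trapdoor circuit'' technique to bootstrap adaptive security from the selective security of $\PKFE$, the selective single-key function privacy of $\okskfe$, and the adaptive single-key single-ciphertext public-slot security of $\okcskfe$. As a preliminary conceptual rewrite, I would have the challenger presample $\okskfes.\MSK$, $\tsf{K}$, and a fresh SKE key $\ske.\SK$ at the very start of the experiment, before generating $\pkfe.\MPK$ and before any $\FE$ key queries. This is a perfect rewrite of $\expb{\FE,\qA}{ada}{1ct}(\secp,b)$ and is crucial for the reduction to selective PKFE security later.

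First, I would move through a sub-hybrid sequence, one per $\FE$ key query, replacing each uniformly random $C_{\ske,i}$ with an $\SKE$ encryption under $\ske.\SK$ of the pair $(\okcskfes.\sk_{f_i},\okskfes.\ct_i)$ that the $\beta=0$ branch of $g[f_i,C_{\ske,i},\tau_i]$ would produce on input $(\okskfes.\MSK,\tsf{K},\bot,0)$. This is computable online because the challenger knows $\tsf{K}$ and $\okskfes.\MSK$, and it is indistinguishable from the real game by the ciphertext-pseudorandomness of $\SKE$, since $\ske.\SK$ does not appear in the PKFE ciphertext (whose $\ske.\SK$ slot is $\bot$). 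Next, I would switch the PKFE challenge ciphertext to encrypt $(\okskfes.\MSK,\tsf{K},\ske.\SK,1)$. This is where selective $\PKFE$ security is invoked, and it is the main obstacle: the $\FE$ adversary's key queries are adaptive while the underlying PKFE is only selectively secure. The key observation is that the pair of PKFE challenge messages depends only on the presampled randomness, so the reduction can commit to it before obtaining $\pkfe.\MPK$; and for each adaptively arriving $\FE$ key query $f_i$, the programming of $C_{\ske,i}$ in the previous step ensures $g[f_i,C_{\ske,i},\tau_i]$ agrees on both PKFE challenge messages (by $\SKE$ correctness), so every PKFE key query made by the reduction is a valid selective-PKFE query.

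After this switch, only the $\beta=1$ branch of $g$ is ever evaluated, so $\tsf{K}$ is functionally unused and hence hidden from $\qA$; I can therefore replace every PRF evaluation at $\tsf{K}$ with fresh randomness via PRF security. At this point, each $\okskfes.\ct_i$ encrypts $(\okcskfes.\MSK_i,R_{2,i},0)$ with truly random $\okcskfes.\MSK_i$ and $R_{2,i}$, and $\okskfes.\sk_{h[m_0]}$ is the only $\okskfe$ key issued. I would now use the selective single-key function privacy of $\okskfe$ to reshape the pair $\bigl(\okskfes.\sk_{h[m_0]},\{\okskfes.\ct_i\}_i\bigr)$ into an equivalent form in which each $\okcskfes.\ct_i$ obtained on the decryption path becomes a freshly generated ciphertext that can be embedded from an external $\okcskfe$ challenger; the function-privacy constraint holds because the recomputed final outputs $\okcskfes.\ct_i$ are identical. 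Then, for each $i$ independently, I would invoke the adaptive single-key single-ciphertext public-slot security of $\okcskfe$ to swap the encrypted message from $m_0$ to $m_1$; the required condition $f_i(m_0,\pub)=f_i(m_1,\pub)$ for all public inputs $\pub$ is exactly the restriction placed on $\qA$ by Definition~\ref{def:security_fe_public_slot}. Finally, a symmetric reversal of the previous hybrids (PRF$\to$PRG, $\PKFE$ switch back, and $\SKE$ rerandomization) transitions to the real experiment with $b=1$, completing the proof.
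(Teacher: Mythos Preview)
Your outline follows the same Ananth--Sahai ``trapdoor circuit'' route as the paper, and the first four moves (presample, program $C_{\ske,i}$ via $\SKE$, switch the PKFE plaintext, derandomize the PRF) are right in spirit. Two points need fixing.

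First, a small but real error: you switch the PKFE ciphertext to $(\okskfes.\MSK,\tsf{K},\ske.\SK,1)$ and then argue ``$\tsf{K}$ is functionally unused and hence hidden.'' That is not enough for the PRF step. The PRF reduction must generate the entire view of $\qA$ without knowing $\tsf{K}$, and your PKFE ciphertext still contains $\tsf{K}$ as data. The paper switches to $(\bot,\bot,\ske.\SK,1)$ in one PKFE move; since the $\beta=1$ branch ignores the first two slots, $g[f_i,C_{\ske,i},\tau_i]$ still agrees on both plaintexts, and now $\tsf{K}$ is genuinely absent.

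Second, and more substantively, your step 5 is the heart of the argument and is where your sketch becomes hand-wavy. You write that function privacy ``reshapes'' things so that each $\okcskfes.\ct_i$ ``can be embedded from an external $\okcskfe$ challenger,'' but you do not say how. The obstruction is that $\okcskfes.\MSK_i$ sits inside $\okskfes.\ct_i$, so a reduction to $\okcskfe$ security for index $i$ cannot simply hand off that ciphertext. The paper resolves this with a three-mode function $h[m_0,m_1,v]$: mode $0$ encrypts $m_0$, mode $1$ encrypts $m_1$, and mode $2$ outputs the hardwired value $v$. For each query $q$ in turn one uses $\okskfe$ function privacy to (i) set $v\seteq\okcskfe.\Enc(\okcskfes.\MSK_q,m_0;R_{q,2})$, (ii) change the $q$-th $\okskfe$ message to $(0,0,2)$ so the output is the hardcoded $v$ and $\okcskfes.\MSK_q$ is no longer encrypted, (iii) invoke adaptive single-key single-ciphertext public-slot $\okcskfe$ security to swap $v$ to an encryption of $m_1$ (the reduction obtains $\okcskfes.\sk_{f_q}$ as its single key query), and (iv) undo the hardcoding by moving the $q$-th message to mode $1$. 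Your phrase ``the recomputed final outputs $\okcskfes.\ct_i$ are identical'' is the right invariant for each function-privacy hop, but without the explicit third mode there is no way to remove $\okcskfes.\MSK_q$ from the $\okskfe$ ciphertext while keeping the decryption output unchanged. Once you spell out this per-query sub-hybrid, the rest of your plan (and the symmetric reversal) goes through exactly as in the paper.
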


We immediately obtain~\cref{thm:adaptive_PKFE_pub_slot_from_IO} from the thereom above.

\paragraph{Correctness.} Let $\ct =(\okskfes.\sk_{h[m]} ,\pkfe.\ct)$ be an honestly generated ciphertext encrypting a message $m$ and $\sk_f = \pkfe.\sk_{g[f]}$ be an honestly generated secret key corresponding to a function $f$. Firstly, we note that $\pkfe.\ct$ is an encryption of the message $(\okskfes.\MSK, \tsf{K}, \bot, 0)$ and $g[f]$ is a function that takes $(\okskfes.\MSK, \tsf{K}, \bot, 0)$ as input and outputs a SKFE secret key $\okcskfes.\sk_f$ corresponding to the function $f$  and a single key SKFE ciphertext $\okskfes.\ct$. Therefore, by the correctness of PKFE, the decryption algorithm $\PKFE.\dec(\pkfe.\sk_{g[f]}, \pkfe.\ct)$ yields $g[f](\okskfes.\MSK, \tsf{K}, \bot, 0) = (\okcskfes.\sk_f, \okskfes.\ct)$. Secondly, we observe that $\okskfes.\ct$ encrypts a message $(\okcskfe.\MSK, R_2, 0)$ and $h[m]$ is a function that takes $(\okcskfe.\MSK, R_2, 0)$ an input and outputs a SKFE ciphertext $\okcskfes.\ct$. Therefore, by the correctness of SKFE, the decryption algorithm $\okskfe.\dec(\okskfes.\sk_{h[m]}, \okskfes.\ct)$ yields $h[m](\okcskfe.\MSK, R_2, 0) = \okcskfes.\ct$. Finally, we note that $\okcskfes.\ct$ encrypts the message $m$ and $f$ is a function that takes $(m, y)$ as input and outputs $f(m, y)$ where $y$ is an input to the public slot. Thus, by the correctness of public-slot SKFE, we obtain $\okcskfe.\Dec(\okcskfes.\sk_f, \okcskfes.\ct , y) = m' = f(m, y)$.

\paragraph{Adaptive Security.} 
We prove~\cref{thm:AS_adaptive_FE_public_slot}.
\begin{proof}[Proof of~\cref{thm:AS_adaptive_FE_public_slot}]
Let $\mcl{A}$ be a PPT adversary against the adaptive security of the public-slot PKFE. We use the following sequence of hybrids to prove the security. Let $\Pr[\hyb_{i}=1]$ be the winning probability of $\mcl{A}$ in $\hyb_i$ for all $i$. 

\begin{description}
	\item[$\hyb_0$:] This is the original adaptive security experiment where the challenge bit set to 0. Specifically, it works as follows:
	\begin{enumerate}
		\item The challenger generates $(\pkfe.\MPK,\pkfe.\MSK) \leftarrow \PKFE.\setup(1^{\secp})$, sets $\MPK \seteq \pkfe.\MPK$ and $\MSK \seteq \pkfe.\MSK$, and sends $\MPK$ to $\mcl{A}$.
		\item The challenger samples a PRF key $\tsf{K}^* \leftarrow \mcl{K}$ and generate a master secret key $\okskfes.\MSK^* \leftarrow \okskfe.\Setup(1^{\secp})$.
		\item The challenger computes $\pkfe.\ct^* \leftarrow \PKFE.\Enc(\pkfe.\MPK, (\okskfes.\MSK^*, \tsf{K}^*, \bot, 0))$.
		\item $\mcl{A}$ can make arbitrarily many key queries at any point of the experiment. When it makes the $j$-th key query for a function $f_j$, the challenger works as follows:
		\begin{enumerate}
			\item Sample $C_{j, \ske}, \tau_j = (\tau_{j, 0} \concat \tau_{j, 1} \concat \tau_{j, 2} \concat \tau_{j, 3})$ uniformly at random.
			\item Generate $\pkfe.\sk_{g[f_j, C_{j, \ske}, \tau_j]}\gets \PKFE.\keygen(\pkfe.\MSK,g[f_j, C_{j, \ske}, \tau_j])$ where $g[f_j, C_{j, \ske}, \tau_j]$ is a function described in \cref{fig:gf}.
			\item Set $\sk_{f_j}\seteq \pkfe.\sk_{g[f_j,C_{j, \ske}, \tau_j]}$.
		\end{enumerate}	
	The challenger sends $\sk_{f_j}$ to $\mcl{A}$.
	\item $\mcl{A}$ sends $(m_0, m_1)$ to the challenger. It must satisfy $f(m_0, y) = f(m_1, y)$ for any public input $y$ and for all key queries $f$ that are made before or after sending $(m_0, m_1)$.
	\item The challenger computes the ciphertext as follows:
	\begin{enumerate}
		\item Generate $\okskfes.\sk_{h[m_0]}^* \gets \okskfe.\keygen(\okskfes.\MSK^*,h[m_0])$ where $h[m_0]$ is a function described in \cref{fig:hm}. 
		\item Set $\ct^* \seteq (\okskfes.\sk_{h[m_0]}^* ,\pkfe.\ct^*)$ where $\pkfe.\ct^*$ is computed in Step 3.
	\end{enumerate}	
	The challenger sends $\ct^*$ to $\mcl{A}$.
	\item $\mcl{A}$ outputs a bit $b'$ which is the final output of the experiment.
	\end{enumerate}	
Note that, the challenger can sample a PRF key $K^*$, a master secret key for the single key function-private SKFE $\okskfes.\MSK^*$ 	before it answers any secret key query. Moreover, the challenger can also compute the part of the challenge ciphertext $\pkfe.\ct^*$ before the key query phase.
	
\item[$\hyb_1$:] This hybrid is identical to $\hyb_0$ except the challenger samples a SKE key $\ske.\SK^*$ before it answers any key query and sets $C_{j, \ske}$ to be the ciphertext of SKE which corresponds to the challenge ciphertext. More specifically, the challenger answers to the $j$-th key query for a function $f_j$ as follows:    	
\begin{enumerate} 
	\item[(a)] Sample $ \tau_j = (\tau_{j, 0} \concat \tau_{j, 1} \concat \tau_{j, 2} \concat \tau_{j, 3})$ uniformly at random.
	\item[(b)] Compute $R_{j, i} = \PRF(\tsf{K}^*, \tau_{j, i})$ for all $i \in \{0, 1, 2, 3\}$. 
	\item[(c)] Generate $\okcskfes.\MSK_j \leftarrow \okcskfe.\setup(1^{\secp}; R_{j, 0})$.
	\item[(d)] Compute $\okcskfes.\sk_{f_j} \leftarrow \okcskfe.\KeyGen(\okcskfes.\MSK_j, f_j; R_{j, 1})$.
	\item[(e)]  Compute $\okskfes.\ct_j \leftarrow \okskfe.\Enc(\okskfes.\MSK^*, (\okcskfes.\MSK_j, R_{j, 2}, 0); R_{j, 3})$.
	\item[(f)] Compute $C_{j, \ske} \leftarrow \SKE.\Enc(\ske.\SK^*, u_j)$ where $u_j = (\okcskfes.\sk_{f_j}, \okskfes.\ct_j)$.
	\item[(g)] Generate $\pkfe.\sk_{g[f_j, C_{j, \ske}, \tau_j]}\gets \PKFE.\keygen(\pkfe.\MSK,g[f_j, C_{j, \ske}, \tau_j])$ where $g[f_j, C_{j, \ske}, \tau_j]$ is a function described in \cref{fig:gf}.
	\item[(h)] Set $\sk_f\seteq \pkfe.\sk_{g[f_j,C_{j, \ske}, \tau_j]}$.
\end{enumerate}		
The challenger sends $\sk_{f_j}$ to $\mcl{A}$. The indistinguishability between $\hyb_0$ and $\hyb_1$ follows from the security of $\SKE$ since the view of the adversary $\mcl{A}$ can be simulated without the knowledge of $\ske.\SK^*$ and using the challenger of the security experiment of SKE. In particular, consider $\mcl{B}_1$ to be an adversary against the security of SKE. When $\mcl{A}$ queries a secret key for a function $f_j$, $\mcl{B}_1$ proceeds as in Step (a) to (f) and sends the message $u_j = (\okcskfes.\sk_{f_j}, \okskfes.\ct_j)$ to it's challenger. Upon receiving a ciphertext $C_{j, \ske}$ from the challenger, $\mcl{B}_1$ computes $\pkfe.\sk_{g[f_j, C_{j, \ske}, \tau_j]}$ and sends it to $\mcl{A}$. If $\mcl{B}_1$ receives a random string then it simulates $\hyb_0$, otherwise, if $\mcl{B}_1$ is sent an  encryption of $u_j$ then it simulates $\hyb_1$. Therefore, the wining probability of $\mcl{B}_1$ is the same as  $\left|\Pr[\hyb_{0}=1]-\Pr[\hyb_{1}=1]\right|\le \negl$. Hence, by the security of SKE, it holds that   
	\begin{align}
	\left|\Pr[\hyb_{0}=1]-\Pr[\hyb_{1}=1]\right|\le \negl(\secp).
\end{align}

\item[$\hyb_2$:] This hybrid is identical to $\hyb_1$ except the challenger computes 
$$\pkfe.\ct^* \gets \PKFE.\Enc(\pkfe.\MPK, (\bot, \bot, \ske.\SK^*, 1)) $$	
instead of computing $\pkfe.\ct^* \leftarrow \PKFE.\Enc(\pkfe.\MPK, (\okskfes.\MSK^*, \tsf{K}^*, \bot, 0))$. In particular, the mode of decryption is changed to $\beta = 1$ from $\beta = 0$ meaning that $\ske.\SK^*$ is used to decrypt $C_{j, \ske}$ to get an output of $(\okcskfes.\sk_{f_j}, \okskfes.\ct_j)$ while decryption of $\ct^*$ is performed by the $j$-th secret key $\sk_f\seteq \pkfe.\sk_{g[f_j,C_{j, \ske}, \tau_j]}$. The indistinguishability between $\hyb_1$ and $\hyb_2$ follows from the security of $\PKFE$ since the view of the adversary $\mcl{A}$ can be simulated without the knowledge of $\pkfe.\MSK$ and using the challenger of the security experiment of PKFE. Let us consider an adversary $\mcl{B}_2$ against the security of PKFE. Firstly, $\mcl{B}_2$ sends a pair of challenge message $((\okskfes.\MSK^*, \tsf{K}^*, \bot, 0), (\bot, \bot, \ske.\SK^*, 1))$ to it's challenger and receives the public key $\pkfe.\MPK$ and a ciphertext $\pkfe.\ct^*$. Note that $\mcl{B}_2$ can choose the challenge message independent of all the queries of $\mcl{A}$. Therefore, a selectively secure PKFE is sufficient for arguing the indistinguishability between the hybrids. Whenever $\mcl{B}_2$ receives a secret key query from $\mcl{A}$ for a function $f_j$, it queries for a secret key to it's challenger for a function $g[f_j, C_{j, \ske}, \tau_j]$ and returns the output to $\mcl{A}$. Firstly, $\mcl{B}_2$ is an admissible adversary as 
$$
g[f_j, C_{j, \ske}, \tau_j](\okskfes.\MSK^*, \tsf{K}^*, \bot, 0) = g[f_j, C_{j, \ske}, \tau_j](\bot, \bot, \ske.\SK^*, 1)
$$
holds for all $j$. If $\mcl{B}_2$ receives an encryption of $(\okskfes.\MSK^*, \tsf{K}^*, \bot, 0)$ then it simulates $\hyb_1$, otherwise, if $\mcl{B}_2$ receives an encryption of $(\bot, \bot, \ske.\SK^*, 1)$ then it simulates $\hyb_2$. Therefore, the winning probability of $\mcl{B}_2$ is essentially the same as $\left|\Pr[\hyb_{1}=1]-\Pr[\hyb_{2}=1]\right|\le \negl$. Hence, by the security of PKFE, it holds that   
\begin{align}
	\left|\Pr[\hyb_{1}=1]-\Pr[\hyb_{2}=1]\right|\le \negl(\secp).
\end{align}

\item[$\hyb_3$:] This hybrid is identical to $\hyb_2$ except the challenger samples $R_{j, i}$ uniformly at random for all $j, i$, while answering the secret key queries instead of computing these values using the PRF key $\tsf{K}^*$. The indistinguishability between $\hyb_2$ and $\hyb_3$ follows from the security of $\PRF$ since the view of the adversary $\mcl{A}$ can be simulated without the knowledge of $\tsf{K}^*$ and using the challenger of the security experiment of PRF. In other words, if $\mcl{B}_3$ is an adversary against the security of PRF then the winning probability of $\mcl{B}_3$ is the same as $\left|\Pr[\hyb_{2}=1]-\Pr[\hyb_{3}=1]\right|\le \negl$. Hence, by the security of PRF, it holds that   
\begin{align}
	\left|\Pr[\hyb_{2}=1]-\Pr[\hyb_{3}=1]\right|\le \negl(\secp).
\end{align}   

\item[$\hyb_4$:]  This hybrid is identical to $\hyb_3$ except the challenger generates
$$\okskfes.\sk_{h[m_0,m_1, v]}^* \gets \okskfe.\keygen(\okskfes.\MSK^*,h[m_0, m_1, v])$$ 
and sets the challenge ciphertext as $\ct^* \seteq (\okskfes.\sk_{h[m_0,m_1, v]}^*, \pkfe.\ct^*)$ where $h[m_0, m_1, v]$ is a function described in \cref{fig:hmv} and $v$ is a random string. The indistinguishability between $\hyb_3$ and $\hyb_4$ follows from the security of $\okskfe$ since the view of the adversary $\mcl{A}$ can be simulated without the knowledge of $\okskfes.\MSK^*$ and using the challenger of the security experiment of $\okskfe$. Let us consider an adversary $\mcl{B}_4$ against the security of $\okskfe$. We assume that $Q$ be the total number of secret key queries the adversary $\mcl{A}$ makes in the experiment. At first, $\mcl{B}_4$ prepares a list of $Q$ challenge messages $(M_1, \ldots, M_Q)$ where $M_j = (\okcskfes.\MSK_j, R_{j, 2}, 0)$ for $j \in [Q]$. More precisely, $\mcl{B}$ sends $(M_j, M_j)$ for all $j \in [Q]$ and receives the ciphertexts as $\{\okskfes.\ct_j\}_{j \in [Q]}$ which is used in answering $\mcl{A}$'s secret key queries as in $\hyb_3$ or $\hyb_4$. When $\mcl{A}$ sends the challenge message tuple $(m_0, m_1)$, $\mcl{B}_4$ queries for a secret key with the pair of functions $(h[m_0], h[m_0, m_1, v])$ to it's challenger. Then, $\mcl{B}_4$ uses the output from it's challenger to create the challenge ciphertext for $\mcl{A}$. It is easy to see that $\mcl{B}_4$ is an admissible adversary for the security experiment of $\okskfe$ since 
$$
h[m_0](\okcskfes.\MSK_j, R_{j, 2}, 0) = h[m_0, m_1, v](\okcskfes.\MSK_j, R_{j, 2}, 0)
$$
holds for all $j \in [Q]$. If $\mcl{B}_4$ receives a secret key $\okskfes.\sk_{h[m_0]}^*$ then it simulates $\hyb_3$, otherwise, if $\mcl{B}_4$ receives a secret key $\okskfes.\sk_{h[m_0,m_1, v]}^*$ then it simulates $\hyb_4$. Therefore, the winning probability of $\mcl{B}_4$ is essentially the same as $\left|\Pr[\hyb_{3}=1]-\Pr[\hyb_{4}=1]\right|\le \negl$. Hence, by the security of $\okskfe$, it holds that   
\begin{align}
	\left|\Pr[\hyb_{3}=1]-\Pr[\hyb_{4}=1]\right|\le \negl(\secp).
\end{align}

\begin{figure}
	\begin{framed}
		\begin{center}
			\underline{$h[m, m', v]$}
		\end{center}
		\textbf{Input:}~ $\okcskfes.\MSK, R, \alpha$
		\begin{enumerate}
			\item If $\alpha = 0$ then
			\begin{itemize}
				\item Compute $\okcskfes.\ct \leftarrow \okcskfe.\Enc(\okcskfes.\MSK, m; R)$.
				\item Output $\okcskfes.\ct$.
			\end{itemize}	
		    \item If $\alpha = 1$ then
		    \begin{itemize}
		    	\item Compute $\okcskfes.\ct \leftarrow \okcskfe.\Enc(\okcskfes.\MSK, m'; R)$.
		    	\item Output $\okcskfes.\ct$.
		    \end{itemize}	 
			\item Else, output $v$.
		\end{enumerate}
	\end{framed}
	\caption{The description of the function $h[m,m',v]$}
	\label{fig:hmv}
\end{figure}

\item[$\hyb_5$:]  This hybrid is identical to $\hyb_4$ except the challenger computes
$$\okskfes.\ct_j \leftarrow \okskfe.\Enc(\okskfes.\MSK^*, (\okcskfes.\MSK_j, R_{j, 2}, 1); R_{j, 3})$$	
while generating the $j$-th secret key corresponding to a function $f_j $ for all $j$. In Lemma \ref{lem:hybr_five}, we show that 
\begin{align}
	\left|\Pr[\hyb_{4}=1]-\Pr[\hyb_{5}=1]\right|\le \negl(\secp).
\end{align}  

\item[$\hyb_6$:]  This hybrid is identical to $\hyb_5$ except the challenger generates
$$\okskfes.\sk_{h[m_1,m_1, v]}^* \gets \okskfe.\keygen(\okskfes.\MSK^*,h[m_1, m_1, v])$$ 
and sets the challenge ciphertext as $\ct^* \seteq (\okskfes.\sk_{h[m_1,m_1, v]}^*, \pkfe.\ct^*)$ where $h[m_1, m_1, v]$ is a function as described in \cref{fig:hmv}. The indistinguishability between $\hyb_5$ and $\hyb_6$ follows from the security of $\okskfe$ since the view of the adversary $\mcl{A}$ can be simulated without the knowledge of $\okskfes.\MSK^*$ and using the challenger of the security experiment of $\okskfe$. The simulation strategy is similar to $\hyb_4$. By the security of $\okskfe$, it holds that   
\begin{align}
	\left|\Pr[\hyb_{5}=1]-\Pr[\hyb_{6}=1]\right|\le \negl(\secp).
\end{align}

\item[$\hyb_7$:]  This hybrid is identical to $\hyb_6$ except the challenger computes
$$\okskfes.\ct_j \leftarrow \okskfe.\Enc(\okskfes.\MSK^*, (\okcskfes.\MSK_j, R_{j, 2}, 0); R_{j, 3})$$	
while generating the $j$-th secret key corresponding to a function $f_j $ for all $j$. Moreover,  the challenger generates
$$\okskfes.\sk_{h[m_1]}^* \gets \okskfe.\keygen(\okskfes.\MSK^*,h[m_1])$$ 
and sets the challenge ciphertext as $\ct^* \seteq (\okskfes.\sk_{h[m_1]}^*, \pkfe.\ct^*)$ where $h[m_1]$ is a function as described in \cref{fig:hm}. The indistinguishability between $\hyb_6$ and $\hyb_7$ follows from the security of $\okskfe$ since the view of the adversary $\mcl{A}$ can be simulated without the knowledge of $\okskfes.\MSK^*$ and using the challenger of the security experiment of $\okskfe$. Let us consider an adversary $\mcl{B}_7$ against the security of $\okskfe$.

At first, $\mcl{B}_7$ prepares a list of $Q$ challenge message pairs $( (M_1^{(0)}, M_1^{(1)}), \ldots, (M_Q^{(0)}, M_Q^{(1)}))$ where 
$$M_j^{(0)} = (\okcskfes.\MSK_j, R_{j, 2}, 1)~~ \text{ and }~~M_j^{(1)} = (\okcskfes.\MSK_j, R_{j, 2}, 0)$$ 

for $j \in [Q]$. More precisely, $\mcl{B}_7$ sends $(M_j^{(0)}, M_j^{(1)})$ for all $j \in [Q]$ and receives the ciphertexts as $\{\okskfes.\ct_j\}_{j \in [Q]}$ which is used in answering $\mcl{A}$'s secret key queries. When $\mcl{A}$ sends the challenge message tuple $(m_0, m_1)$, $\mcl{B}_7$ queries for a secret key with the pair of functions $(h[m_1, m_1, v], h[ m_1])$ to it's challenger. Then, $\mcl{B}_7$ uses the output from it's challenger to create the challenge ciphertext for $\mcl{A}$. It is easy to see that $\mcl{B}_7$ is an admissible adversary for the security experiment of $\okskfe$ since 
$$
h[m_1, m_1, v](\okcskfes.\MSK_j, R_{j, 2}, 1) = h[m_1](\okcskfes.\MSK_j, R_{j, 2}, 0)
$$
holds for all $j \in [Q]$. If $\mcl{B}_7$ receives ciphertexts for the messages $M_j^{(0)}$ and a secret key $\okskfes.\sk_{h[m_1, m_1, v]}^*$ then it simulates $\hyb_6$, otherwise, if $\mcl{B}_7$ receives ciphertexts for the messages $M_j^{(1)}$  and a secret key $\okskfes.\sk_{h[m_1]}^*$ then it simulates $\hyb_7$. Therefore, the winning probability of $\mcl{B}_7$ is essentially the same as $\left|\Pr[\hyb_{6}=1]-\Pr[\hyb_{7}=1]\right|\le \negl$. Hence, by the security of $\okskfe$, it holds that   
\begin{align}
	\left|\Pr[\hyb_{6}=1]-\Pr[\hyb_{7}=1]\right|\le \negl(\secp).
\end{align}


\item[$\hyb_8$:] This hybrid is identical to $\hyb_7$ except the challenger samples 
$$R_{j, i} = \PRF(\tsf{K}^*, \tau_{j, i}) ~\text{ for all } i \in \{0, 1, 2, 3\}$$ 
instead of sampling these values uniformly at random for all $j, i$, while answering the secret key queries.	 The indistinguishability between $\hyb_7$ and $\hyb_8$ follows from the security of $\PRF$ since the view of the adversary $\mcl{A}$ can be simulated without the knowledge of $\tsf{K}^*$ and using the challenger of the security experiment of PRF. In other words, it holds that   
\begin{align}
	\left|\Pr[\hyb_{7}=1]-\Pr[\hyb_{8}=1]\right|\le \negl(\secp).
\end{align}

\item[$\hyb_{9}$:] This hybrid is identical to $\hyb_8$ except the challenger computes 
$$\pkfe.\ct^* \leftarrow \PKFE.\Enc(\pkfe.\MPK, (\okskfes.\MSK^*, \tsf{K}^*, \bot, 0)) $$	
instead of computing $\pkfe.\ct^* \gets \PKFE.\Enc(\pkfe.\MPK, (\bot, \bot, \ske.\SK^*, 1))$. The indistinguishability between $\hyb_8$ and $\hyb_9$ follows from the security of $\PKFE$ since the view of the adversary $\mcl{A}$ can be simulated without the knowledge of $\pkfe.\MSK$ and using the challenger of the security experiment of PKFE. In other words, it holds that   
\begin{align}
	\left|\Pr[\hyb_{8}=1]-\Pr[\hyb_{9}=1]\right|\le \negl(\secp).
\end{align}

\item[$\hyb_{10}$:] This hybrid is identical to $\hyb_{9}$ except the challenger does not sample $\ske.\SK^*$ and chooses $C_{j, \ske}$ uniformly at random while answering to the $j$-th secret key query of $\mcl{A}$ for all $j$.  More specifically, the challenger answers to the $j$-th key query for a function $f_j$ as follows:    	
\begin{enumerate}
	\item[(a)] Sample $C_{j, \ske}, \tau_j = (\tau_{j, 0} \concat \tau_{j, 1} \concat \tau_{j, 2} \concat \tau_{j, 3})$ uniformly at random.
	\item[(b)] Generate $\pkfe.\sk_{g[f_j, C_{j, \ske}, \tau_j]}\gets \PKFE.\keygen(\pkfe.\MSK,g[f_j, C_{j, \ske}, \tau_j])$ where $g[f_j, C_{j, \ske}, \tau_j]$ is a function described in \cref{fig:gf}.
	\item[(c)] Set $\sk_{f_j}\seteq \pkfe.\sk_{g[f_j,C_{j, \ske}, \tau_j]}$.
\end{enumerate}	
The challenger sends $\sk_{f_j}$ to $\mcl{A}$. The indistinguishability between $\hyb_9$ and $\hyb_{10}$ follows from the security of $\SKE$ since the view of the adversary $\mcl{A}$ can be simulated without the knowledge of $\ske.\SK$ and using the challenger of the security experiment of SKE. In other words, it holds that   
\begin{align}
	\left|\Pr[\hyb_{9}=1]-\Pr[\hyb_{10}=1]\right|\le \negl(\secp).
\end{align}   
We observe that $\hyb_{10}$ is identical to original adaptive security experiment where the challenge bit set to $1$. Combining the advantage of $\mcl{A}$ in all the consecutive hybrids and applying the triangular inequality, we have $\left|\Pr[\hyb_{0}=1]-\Pr[\hyb_{10}=1]\right|\le \negl(\secp)$.
\end{description}
 This completes the proof of~\cref{thm:AS_adaptive_FE_public_slot} if we prove~\cref{lem:hybr_five}.
\end{proof}

\begin{lemma}\label{lem:hybr_five}
	If $\okskfe$ is selectively single key function-private secure and public-slot SKFE $\okcskfe$ is adaptively single-key single-ciphertext secure then for any $\secp \in [\msglen]$, 
	\begin{align}
	\left|\Pr[\hyb_{4}=1]-\Pr[\hyb_{5}=1]\right|\le \negl(\secp).
	\end{align}
\end{lemma}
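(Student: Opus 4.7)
The only syntactic difference between $\hyb_4$ and $\hyb_5$ is that, for every key query $j$, the inner SKFE ciphertext $\okskfes.\ct_j$ encrypts $(\okcskfes.\MSK_j, R_{j,2}, 0)$ in $\hyb_4$ and $(\okcskfes.\MSK_j, R_{j,2}, 1)$ in $\hyb_5$. Since the secret key $\okskfes.\sk^*_{h[m_0,m_1,v]}$ decrypts these to $\okcskfe.\Enc(\okcskfes.\MSK_j, m_0; R_{j,2})$ or $\okcskfe.\Enc(\okcskfes.\MSK_j, m_1; R_{j,2})$ respectively, the hop is essentially asking the adversary to distinguish $Q$ single-key single-ciphertext public-slot $\okcskfe$ encryptions of $m_0$ from those of $m_1$, where $Q$ is the total number of key queries. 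The plan is therefore to perform a hybrid over $k \in \{0,1,\ldots,Q\}$ via games $\hyb_{4,k}$ in which the first $k$ key queries already use $\alpha=1$ and the remaining $Q-k$ still use $\alpha=0$, so that $\hyb_{4,0}=\hyb_4$ and $\hyb_{4,Q}=\hyb_5$, and to bound each consecutive step by $\negl(\secp)$.

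For a single step $\hyb_{4,k-1}\to\hyb_{4,k}$, the natural reduction to adaptive single-key single-ciphertext security of $\okcskfe$ runs into the issue that the public-slot SKFE ciphertext of $m_b$ produced by the external challenger lives ``outside'' the view of $\mcl{A}$: $\mcl{A}$ only sees the SKFE secret key $\okskfes.\sk^*_{h[m_0,m_1,v]}$ applied to $\okskfes.\ct_k$. I will therefore thread the external ciphertext through the third branch of $h$ via a three-step sub-sequence. Concretely, between $\hyb_{4,k-1}$ and $\hyb_{4,k}$ I insert:
\begin{itemize}
    \item[(A)] an intermediate game in which $\okskfes.\ct_k$ instead encrypts $(\bot,\bot,2)$ and the challenge SKFE secret key is given for $h[m_0,m_1,v]$ with $v\seteq\okcskfe.\Enc(\okcskfes.\MSK_k,m_0;R_{k,2})$;
    \item[(B)] a second intermediate game identical to (A) except that $v$ is replaced by $\okcskfe.\Enc(\okcskfes.\MSK_k,m_1;R_{k,2})$;
    \item[(C)] the game $\hyb_{4,k}$, obtained from (B) by restoring $\okskfes.\ct_k$ to the honest encryption of $(\okcskfes.\MSK_k,R_{k,2},1)$ and resampling $v$ uniformly.
\end{itemize}
Steps (A) and (C) are bounded by selective single-key function privacy of $\okskfe$: the pair of messages $\bigl((\okcskfes.\MSK_k,R_{k,2},0),(\bot,\bot,2)\bigr)$ and the pair of functions $\bigl(h[m_0,m_1,v_\mathrm{rand}], h[m_0,m_1,C^*_0]\bigr)$ (and analogously for Step~(C) with $C^*_1$) produce matching outputs on every $j$: for $j\neq k$ both functions evaluate to $\okcskfe.\Enc(\okcskfes.\MSK_j,m_{\alpha_j};R_{j,2})$, and for $j=k$ the first function's output in the $\alpha=0$ branch coincides by construction with $v=C^*_0$ (respectively $C^*_1$) in the $\alpha=2$ branch of the second function.

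The main obstacle is making Step (B) work, namely building a reduction $\mcl{B}_5$ to the adaptive single-key single-ciphertext security of public-slot $\okcskfe$ that never needs to know $\okcskfes.\MSK_k$ or the randomness $R_{k,0},R_{k,1},R_{k,2}$. This is possible precisely because after Step (A) these values no longer appear in the adversary's view: $\okskfes.\ct_k$ encrypts $(\bot,\bot,2)$ and $\okcskfes.\sk_{f_k}$ (which is embedded in $C_{k,\ske}$ and decrypted from $\pkfe.\ct^*$ in mode $\beta=1$) can be obtained by forwarding $f_k$ to $\mcl{B}_5$'s external challenger when $\mcl{A}$ makes the $k$-th key query, while $C^*_b=\okcskfe.\Enc(\okcskfes.\MSK_k,m_b;R_{k,2})$ is obtained by forwarding $(m_0,m_1)$ to the external challenger. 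The admissibility condition $f_k(m_0,y)=f_k(m_1,y)$ required by $\okcskfe$'s security game is inherited from $\mcl{A}$'s admissibility in the public-slot PKFE game. Summing over the $Q$ key queries, each sub-sequence contributes $\negl(\secp)$, and so $\lvert\Pr[\hyb_4=1]-\Pr[\hyb_5=1]\rvert\le\negl(\secp)$, as desired.
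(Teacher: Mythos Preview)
Your proof sketch is correct and follows essentially the same approach as the paper. Both arguments perform a hybrid over the $Q$ key queries and, for each index $k$, use $\okskfe$ function privacy to route the $k$-th inner ciphertext through the hard-coded value $v$ (mode $\alpha=2$), then invoke the adaptive single-key single-ciphertext security of the public-slot $\okcskfe$ to switch $v$ from an encryption of $m_0$ to one of $m_1$, and finally undo the rerouting. The only cosmetic difference is that the paper splits your steps (A) and (C) into two separate $\okskfe$ hops each (first changing $v$, then changing the $k$-th message), yielding four sub-hybrids per index rather than three; your merged hops are equally valid because the selective single-key function-privacy game allows simultaneous message and function changes under the same admissibility condition you verify.
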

\begin{proof}[Proof of~\cref{lem:hybr_five}]
We prove this lemma using a sequence of hybrids $\hyb_{4,q,1}, \hyb_{4,q,2}, \hyb_{4,q,3}, \hyb_{4,q,4}$ for $q \in [Q]$ where $Q$ denotes the total number of secret key queried made by the adversary $\mcl{A}$. Let us denote $\hyb_{4, Q+1, 1} = \hyb_{5}$.
\begin{description}
	\item[$\hyb_{4,q, 1}:$] This is exactly the same as $\hyb_{4}$ except the challenger sets $v$ to be the output of $\okcskfe.\enc(\okcskfes.\MSK_q, m_0; R_{q, 2})$. More precisely, the hybrid works as follows:
	\begin{enumerate}
		\item The challenger generates $(\pkfe.\MPK,\pkfe.\MSK) \leftarrow \PKFE.\setup(1^{\secp})$, sets $\MPK \seteq \pkfe.\MPK$ and $\MSK \seteq \pkfe.\MSK$, and sends $\MPK$ to $\mcl{A}$.
		\item The challenger generates a master secret key $\okskfes.\MSK^* \leftarrow \okskfe.\Setup(1^{\secp})$.
		\item The challenger generates a secret key $\ske.\SK^* \leftarrow \SKE.\Setup(1^{\secp})$.
		\item The challenger computes $\pkfe.\ct^* \leftarrow \PKFE.\Enc(\pkfe.\MPK, (\bot, \bot, \ske.\SK^*, 1))$.
		\item The challenger sets $u_q$ as follows:
		\begin{enumerate}
		\item[(a)] Sample $\tau_q = (\tau_{q, 0} \concat \tau_{q, 1} \concat \tau_{q, 2} \concat \tau_{q, 3})$ and $R_{q, i}$ uniformly at random for all $i \in \{0, 1, 2, 3\}$. 
		\item[(b)] Generate $\okcskfes.\MSK_q \leftarrow \okcskfe.\setup(1^{\secp}; R_{q, 0})$.
		\item[(c)] Compute $\okcskfes.\sk_{f_q} \leftarrow \okcskfe.\KeyGen(\okcskfes.\MSK_q, f_q; R_{q, 1})$.
		\item[(d)]  Compute $\okskfes.\ct_q \leftarrow \okskfe.\Enc(\okskfes.\MSK^*, (\okcskfes.\MSK_q, R_{q, 2}, 0); R_{q, 3})$.
		\item[(e)] Set $u_q = (\okcskfes.\sk_{f_q}, \okskfes.\ct_q)$.
		\end{enumerate}
		\item $\mcl{A}$ can make arbitrarily many key queries at any point of the experiment. When it makes the $j$-th key query for a function $f_j$, the challenger works as follows:
		\begin{enumerate}
			\item[If $j\neq q:$]
			\item[(a)] Sample $\tau_j = (\tau_{j, 0} \concat \tau_{j, 1} \concat \tau_{j, 2} \concat \tau_{j, 3})$ and $R_{j, i}$ uniformly at random for all $i \in \{0, 1, 2, 3\}$. 
			\item[(b)] Generate $\okcskfes.\MSK_j \leftarrow \okcskfe.\setup(1^{\secp}; R_{j, 0})$.
			\item[(c)] Compute $\okcskfes.\sk_{f_j} \leftarrow \okcskfe.\KeyGen(\okcskfes.\MSK_j, f_j; R_{j, 1})$.
			\item[(d)]  Compute 
			$$\begin{array}{l l}
			\okskfes.\ct_j \leftarrow & \okskfe.\Enc(\okskfes.\MSK^*, (\okcskfes.\MSK_j, R_{j, 2}, 1); R_{j, 3}) ~~\text{ if }~j < q\\
			\okskfes.\ct_j \leftarrow & \okskfe.\Enc(\okskfes.\MSK^*, (\okcskfes.\MSK_j, R_{j, 2}, 0); R_{j, 3}) ~~\text{ if }~j > q\\
			\end{array}$$
			\item[(e)] Compute $C_{j, \ske} \leftarrow \SKE.\Enc(\ske.\SK^*, u_j)$ where $u_j = (\okcskfes.\sk_{f_j}, \okskfes.\ct_j)$.
			\item[(f)] Generate $\pkfe.\sk_{g[f_j, C_{j, \ske}, \tau_j]}\gets \PKFE.\keygen(\pkfe.\MSK,g[f_j, C_{j, \ske}, \tau_j])$ where $g[f_j, C_{j, \ske}, \tau_j]$ is a function described in \cref{fig:gf}.
			\item[(g)] Set $\sk_f\seteq \pkfe.\sk_{g[f_j,C_{j, \ske}, \tau_j]}$.
			\item[If $j = q:$]
			\item[(a)] Set $C_{q, \ske} \leftarrow \SKE.\enc(\ske.\SK^*, u_{q}).$
			\item[(b)] Generate $\pkfe.\sk_{g[f_q, C_{q, \ske}, \tau_q]}\gets \PKFE.\keygen(\pkfe.\MSK,g[f_q, C_{q, \ske}, \tau_q])$ where $g[f_q, C_{q, \ske}, \tau_q]$ is a function described in \cref{fig:gf}.
		\end{enumerate}	
		The challenger sends $\sk_{f_j}$ to $\mcl{A}$.
		\item $\mcl{A}$ sends $(m_0, m_1)$ to the challenger. It must satisfy $f(m_0, y) = f(m_1, y)$ for any public input $y$ and for all key queries $f$ that are made before or after sending $(m_0, m_1)$.
		\item The challenger computes the ciphertext as follows:
		\begin{enumerate}
			\item Set $v \seteq \okcskfe.\Enc(\okcskfes.\MSK_q, m_0; R_{q, 2})$
			\item Generate $\okskfes.\sk_{h[m_0,m_1, v]}^* \gets \okskfe.\keygen(\okskfes.\MSK^*,h[m_0,m_1, v])$ where $h[m_0, m_1, v]$ is a function described in \cref{fig:hmv}. 
			\item Set $\ct^* \seteq (\okskfes.\sk_{h[m_0,m_1, v]}^* ,\pkfe.\ct^*)$ where $\pkfe.\ct^*$ is computed in Step 3.
		\end{enumerate}	
		The challenger sends $\ct^*$ to $\mcl{A}$.
		\item $\mcl{A}$ outputs a bit $b'$ which is the final output of the experiment.
	\end{enumerate}	
	The indistinguishability between $\hyb_{4}$ and $\hyb_{4, 1, 1}$ follows from the security of $\okskfe$ since the view of the adversary $\mcl{A}$ can be simulated without the knowledge of $\okskfes.\MSK^*$ and using the challenger of the security experiment of $\okskfe$. Let us consider an adversary $\mcl{B}_{4, 1}$ against the security of $\okskfe$.\par 
	
	At first, $\mcl{B}_{4, 1}$ prepares a list of $Q$ challenge message pairs $(M_1, \ldots,  M_Q)$ where 
	$$\begin{array}{l l l}
		M_j = & (\okcskfes.\MSK_j, R_{j, 2}, 1) & \text{ if } 1\leq j<q\\
		M_j = & (\okcskfes.\MSK_j, R_{j, 2}, 0) & \text{ if } q\leq j \leq Q.\\
	\end{array}$$
	for $j \in [Q]$. More precisely, $\mcl{B}_{4, 1}$ sends $(M_j, M_j)$ for all $j \in [Q]$ and receives the ciphertexts as $\{\okskfes.\ct_j\}_{j \in [Q]}$ which is used in answering $\mcl{A}$'s secret key queries. When $\mcl{A}$ sends the challenge message tuple $(m_0, m_1)$, $\mcl{B}_{4, 1}$ queries for a secret key with the pair of functions $(h[m_0, m_1, v], h[m_0, m_1, v'])$ to it's challenger where $v$ is a random string of appropriate length and $v' = \okcskfe.\enc(\okcskfes.\MSK_q, m_0; R_{q, 2})$. Then, $\mcl{B}_{4, 1}$ uses the output from it's challenger to create the challenge ciphertext for $\mcl{A}$. It is easy to see that $\mcl{B}_{4, 1}$ is an admissible adversary for the security experiment of $\okskfe$ since $ h[m_0, m_1, v](M_j) = h[m_0, m_1, v'](M_j)$ holds for all $j \in [Q]$. This is because $ h[m_0, m_1, v](*, *, k) = h[m_0, m_1, v'](*, *, k)$ if $k \neq 2$. If $\mcl{B}_{4, 1}$ receives a secret key $\okskfes.\sk_{h[m_0, m_1, v]}^*$ then it simulates $\hyb_{4}$, otherwise, if $\mcl{B}_{4, 1}$ receives a secret key $\okskfes.\sk_{h[m_0, m_1, v']}^*$ then it simulates $\hyb_{4, 1, 1}$. Therefore, the winning probability of $\mcl{B}_{4, 1}$ is essentially the same as $\left|\Pr[\hyb_{4}=1]-\Pr[\hyb_{4, 1, 1}=1]\right|\le \negl$. Hence, by the security of $\okskfe$, it holds that   
	\begin{align}
		\left|\Pr[\hyb_{4}=1]-\Pr[\hyb_{4, 1, 1}=1]\right|\le \negl(\secp).
	\end{align}

	\item[$\hyb_{4,q, 2}:$] This is exactly the same as $\hyb_{4, q, 1}$ except the challenger changes the mode from $\alpha = 0$ to $\alpha  = 2$ while decrypting the challenge ciphertext using the $q$-th secret key. More precisely, the challenger computes $\okskfes.\ct_q$ as follows:
	$$
	\okskfes.\ct_q \leftarrow \okskfe.\Enc(\okskfes.\MSK^*, (0, 0, 2); R_{q, 3}).
	$$
	The indistinguishability between $\hyb_{4,q, 1}$ and $\hyb_{4, q, 2}$ follows from the security of $\okskfe$ since the view of the adversary $\mcl{A}$ can be simulated without the knowledge of $\okskfes.\MSK^*$ and using the challenger of the security experiment of $\okskfe$. This can be shown similarly as we discussed in the previous hybrid since 
	$$
	h[m_0, m_1, v](\okcskfes.\MSK_q, R_{q, 2}, 0) = h[m_0, m_1, v](0, 0, 2)
	$$
	holds where $v = \okcskfe.\enc(\okcskfes.\MSK_q, m_0; R_{q, 2})$. Hence, by the security of $\okskfe$, it holds that   
	\begin{align}
		\left|\Pr[\hyb_{4,q, 1}=1]-\Pr[\hyb_{4, q, 2}=1]\right|\le \negl(\secp).
	\end{align}

	\item[$\hyb_{4,q, 3}:$] This is exactly the same as $\hyb_{4, q, 2}$ except the challenger changes $v$ to be the encryption of $m_1$, that is, it sets $v$ as follows:
	$$
	v \seteq \okcskfe.\Enc(\okcskfes.\MSK_q, m_1; R_{q, 2})
	$$ 
	The indistinguishability between $\hyb_{4, q, 2}$ and $\hyb_{4, q, 3}$ follows from the security of $\SKFE$ since the view of the adversary $\mcl{A}$ can be simulated without the knowledge of $\okcskfes.\MSK_q$ and using the challenger of the security experiment of SKFE. Let us consider an adversary $\mcl{B}_{4, 3}$ against the security of adaptively single-key single-ciphertext secure public-slot SKFE. Note that, $\mcl{B}_{4, 3}$ queries only a single secret key $\okcskfes.\sk_{f_q}$ corresponding to the function $f_q$ and a single ciphertext $v$ corresponding to the challenge message pair $(m_0, m_1)$. In particular, $\mcl{B}_{4, 3}$ can adaptively query the secret key $\okcskfes.\sk_{f_q}$ at any point whenever $\mcl{A}$ asks for a secret key for $f_q$ and sets $u_q = (\okcskfes.\sk_{f_q}, \okskfes.\ct_q)$. We observe that $\mcl{B}_{4, 3}$ is an admissible adversary since $\mcl{A}$ is only allowed to query for a secret key for $f_q$ and challenge message pair $(m_0, m_1)$ such that $f_q(m_0, y) = f_q(m_1, y)$ holds for is an arbitrary input $y$ to the public slot of $f$. If $\mcl{B}_{4, 3}$ receives a ciphertext $v = \okcskfe.\Enc(\okcskfes.\MSK_q, m_0)$ then it simulates $\hyb_{4, q, 2}$, otherwise, if $\mcl{B}_{4, 3}$ receives a ciphertext $v = \okcskfe.\Enc(\okcskfes.\MSK_q, m_1)$ then it simulates $\hyb_{4, q, 3}$. Therefore, the winning probability of $\mcl{B}_{4, 3}$ is essentially the same as $\left|\Pr[\hyb_{4,q, 2}=1]-\Pr[\hyb_{4, q, 3}=1]\right|\le \negl$. Hence, by the security of SKFE, it holds that   
	\begin{align}
		\left|\Pr[\hyb_{4, q, 2}=1]-\Pr[\hyb_{4, q, 3}=1]\right|\le \negl(\secp).
	\end{align}

	\item[$\hyb_{4,q, 4}:$] This is exactly the same as $\hyb_{4, q, 3}$ except the challenger changes the mode from $\alpha = 2$ to $\alpha  = 1$ while decrypting the challenge ciphertext using the $q$-th secret key. More precisely, the challenger computes $\okskfes.\ct_q$ as follows:
	$$
	\okskfes.\ct_q \leftarrow \okskfe.\Enc(\okskfes.\MSK^*, (\okcskfes.\ct_q, R_{q, 2}; 1); R_{q, 3}).
	$$
	The indistinguishability between $\hyb_{4,q, 3}$ and $\hyb_{4, q, 4}$ follows from the security of $\okskfe$ since the view of the adversary $\mcl{A}$ can be simulated without the knowledge of $\okskfes.\MSK^*$ and using the challenger of the security experiment of $\okskfe$. This can be shown similarly as we discussed in the previous hybrid since 
	$$
	h[m_0, m_1, v](0, 0, 2) = h[m_0, m_1, v](\okcskfes.\MSK_q, R_{q, 2}, 1)
	$$
	holds where $v = \okcskfe.\enc(\okcskfes.\MSK_q, m_1; R_{q, 2})$. Hence, by the security of $\okskfe$, it holds that   
	\begin{align}
		\left|\Pr[\hyb_{4,q, 3}=1]-\Pr[\hyb_{4, q, 4}=1]\right|\le \negl(\secp).
	\end{align} 
	
\end{description}	
Combining the advantage of $\mcl{A}$ in all the consecutive hybrids and applying the triangular inequality, we have $\left|\Pr[\hyb_{4}=1]-\Pr[\hyb_{5}=1]\right|\le \negl(\secp)$.
 This completes the proof of~\cref{lem:hybr_five}.
 \end{proof}	

\section{Secret and Public Key Encryption with Certified Everlasting Deletion}\label{sec:cd_ske_pke}
In \cref{sec:def_cd_ske_pke}, we define SKE and PKE with certified everlasting deletion.
In \cref{sec:const_ske_rom} and \cref{sec:const_ske_wo_rom}, 
we construct a certified everlasting secure SKE scheme with and without QROM, respectively.
In \cref{sec:const_pke_rom} and \cref{sec:const_pke_wo_rom}, we construct a certified everlasting secure PKE scheme with and without QROM, respectively.

\subsection{Definition}\label{sec:def_cd_ske_pke}
\begin{definition}[SKE with Certified Everlasting Deletion (Syntax)]\label{def:cert_ever_ske}
Let $\lambda$ be a security parameter and let $p$, $q$, $r$ and $s$ be some polynomials.
An SKE with certified everlasting deletion scheme is a tuple of algorithms $\Sigma=(\keygen,\qEnc,\qDec,\qDelete,\Vrfy)$ with plaintext space $\Ms\seteq\{0,1\}^n$, ciphertext space $\Cs\seteq \cQ^{\otimes p(\lambda)}$,
secret key space $\mathcal{SK}\seteq\{0,1\}^{q(\lambda)}$,
verification key space $\mathcal{VK}\seteq\bit^{r(\secp)}$,
and deletion certificate space $\mathcal{D}\seteq \cQ^{\otimes s(\lambda)}$.
\begin{description}
    \item[$\keygen (1^\secp) \ra \sk$:]
    The key generation algorithm takes the security parameter $1^\secp$ as input and outputs a secret key $\sk \in \mathcal{SK}$.
    \item[$\qEnc(\sk,m) \ra (\vk,\qct)$:]
    The encryption algorithm takes $\sk$ and a plaintext $m\in\Ms$ as input, and outputs a verification key $\vk\in\mathcal{VK}$ and a ciphertext $\qct\in \Cs$.
    \item[$\qDec(\sk,\qct) \ra m^\prime~or~\bot$:]
    The decryption algorithm takes $\sk$ and $\qct$ as input, and outputs a plaintext $m^\prime \in \Ms$ or $\bot$.
    \item[$\qDelete(\ct) \ra \cert$:] The deletion algorithm takes $\qct$ as input, and outputs a certification $\cert\in\mathcal{D}$.
    \item[$\Vrfy(\vk,\cert)\ra \top~\mbox{\bf or}~\bot$:] The verification algorithm takes $\vk$ and $\cert$ as input, and outputs $\top$ or $\bot$.
\end{description}
\end{definition}
\begin{remark}
Although we consider quantum certificates in~\cref{sec:const_ske_wo_rom}, we consider classical certificates by default. In the quantum certificate case, we need to use $\qcert$ and $\qVrfy$ in the syntax.
\end{remark}

We require that an SKE with certified everlasting deletion scheme satisfies correctness defined below.
\begin{definition}[Correctness for SKE with Certified Everlasting Deletion]\label{def:correctness_cert_ever_ske}
There are three types of correctness, namely,
decryption correctness,
verification correctness, and special correctness.

\paragraph{Decryption Correctness:} There exists a negligible function $\negl$ such that for any $\secp\in \N$ and $m\in\Ms$, 
\begin{align}
\Pr\left[
m'\neq m
\ \middle |
\begin{array}{ll}
\sk\lrun \keygen(1^\secp)\\
(\vk,\qct) \lrun \qEnc(\sk,m)\\
m'\la\qDec(\sk,\qct)
\end{array}
\right] 
\leq\negl(\secp).
\end{align}

\paragraph{Verification Correctness:} There exists a negligible function $\negl$ such that for any $\secp\in \N$ and $m\in\Ms$, 
\begin{align}
\Pr\left[
\Vrfy(\vk,\cert)=\bot
\ \middle |
\begin{array}{ll}
\sk\lrun \keygen(1^\secp)\\
(\vk,\qct) \lrun \qEnc(\sk,m)\\
\cert \lrun \qDelete(\qct)
\end{array}
\right] 
\leq
\negl(\secp).
\end{align}
\end{definition}

Minimum requirements for correctness are decryption correctness and verification correctness.
However, we also require special correctness and verification correctness with QOTP in this work
because we need special correctness for the construction of the garbling scheme in \cref{sec:const_garbling},
and verification correctness with QOTP for the construction of FE in \cref{sec:const_fe_adapt}.

\begin{definition}[Special Correctness]\label{def:cd_ske_special_correctness}
There exists a negligible function $\negl$ such that for any $\secp\in \N$ and $m\in\Ms$, 
\begin{align}
\Pr\left[
\Dec(\sk_2,\qct)\neq \bot
\ \middle |
\begin{array}{ll}
\sk_2,\sk_1\lrun \keygen(1^\secp)\\
(\vk,\qct) \lrun \qEnc(\sk_1,m)
\end{array}
\right] 
\leq\negl(\secp).
\end{align}

\end{definition}

\begin{definition}[Verification Correctness with QOTP]\label{def:cd_ske_ver_correctness_QOTP}
There exists a negligible function $\negl$ and a PPT algorithm $\Modify$ such that for any $\secp\in \N$ and $m\in\Ms$, 
\begin{align}
\Pr\left[
\Vrfy(\vk,\cert^*)=\bot
\ \middle |
\begin{array}{ll}
\sk\lrun \keygen(1^\secp)\\
(\vk,\qct) \lrun \qEnc(\sk,m)\\
a,b\la\bit^{p(\lambda)}\\
\wtl{\cert} \lrun \qDelete(Z^bX^a\qct X^aZ^b)\\
\cert^*\lrun \Modify(a,b,\wtl{\cert})  
\end{array}
\right] 
\leq
\negl(\secp).
\end{align}
\end{definition}

As security, we consider two definitions, 
\cref{def:IND-CPA_security_cert_ever_ske} 
and
\cref{def:cert_ever_security_cert_ever_ske} given below.
The former is just the standard IND-CPA security 
and the latter is the certified everlasting security
that we newly define in this paper.
Roughly, the everlasting security guarantees that any QPT adversary cannot obtain plaintext information even if it becomes computationally
unbounded and obtains the secret key after it issues a valid certificate.

\begin{definition}[IND-CPA Security for SKE with Certified Everlasting Deletion]\label{def:IND-CPA_security_cert_ever_ske}
Let $\Sigma=(\keygen,\qEnc,\qDec,\qDelete,\allowbreak \Vrfy)$ be an SKE with certified everlasting deletion scheme.
We consider the following security experiment $\expb{\Sigma,\qA}{ind}{cpa}(\secp,b)$ against a QPT adversary $\qA$.
\begin{enumerate}
    \item The challenger computes $\sk \la \keygen(1^\secp)$.
    \item $\qA$ sends an encryption query $m$ to the challenger.
    The challenger computes $(\vk,\qct)\la\qEnc(\sk,m)$, and returns $(\vk,\qct)$ to $\qA$.
    $\qA$ can repeat this process polynomially many times.
    \item $\qA$ sends $(m_0,m_1)\in\cM^2$ to the challenger.
    \item The challenger computes $(\vk,\qct) \la \qEnc(\sk,m_b)$, and sends $\qct$ to $\qA$.
    \item $\qA$ sends an encryption query $m$ to the challenger.
    The challenger computes $(\vk,\qct)\la\qEnc(\sk,m)$, and returns $(\vk,\qct)$ to $\qA$.
    $\qA$ can repeat this process polynomially many times.
    \item $\qA$ outputs $b'\in \bit$. This is the output of the experiment.
\end{enumerate}
We say that $\Sigma$ is IND-CPA secure if, for any QPT $\qA$, it holds that
\begin{align}
\advb{\Sigma,\qA}{ind}{cpa}(\secp)
\seteq \abs{\Pr[ \expb{\Sigma,\qA}{ind}{cpa}(\secp, 0)=1] - \Pr[ \expb{\Sigma,\qA}{ind}{cpa}(\secp, 1)=1] }\leq \negl(\secp).
\end{align}
\end{definition}

\begin{definition}[Certified Everlasting IND-CPA Security for SKE]\label{def:cert_ever_security_cert_ever_ske}
Let $\Sigma=(\keygen,\qEnc,\qDec,\allowbreak \qDelete,\Vrfy)$ be a certified everlasting SKE scheme.
We consider the following security experiment $\expd{\Sigma,\qA}{cert}{ever}{ind}{cpa}(\secp,b)$ against a QPT adversary $\qA_1$ and an unbounded adversary $\qA_2$.
\begin{enumerate}
    \item The challenger computes $\sk \la \keygen(1^\secp)$.
    \item $\qA_1$ sends an encryption query $m_i$ to the challenger.
    The challenger computes $(\vk_i,\qct_i)\la\qEnc(\sk,m_i)$, and returns $(\vk_i,\qct_i)$ to $\qA_1$.
    $\qA_1$ can repeat this process polynomially many times.
    \item $\qA_1$ sends $(m_0,m_1)\in\cM^2$ to the challenger.
    \item The challenger computes $(\vk,\qct) \la \qEnc(\sk,m_b)$, and sends $\qct$ to $\qA_1$.
    \item $\qA_1$ sends an encryption query $m_i$ to the challenger.
    The challenger computes $(\vk_i,\qct_i)\la\Enc(\sk,m_i)$, and returns $(\vk_i,\qct_i)$ to $\qA_1$.
    $\qA_1$ can repeat this process polynomially many times.
    \item At some point, $\qA_1$ sends $\cert$ to the challenger and sends the internal state to $\qA_2$.
    \item The challenger computes $\Vrfy(\vk,\cert)$.
    If the output is $\bot$, the challenger outputs $\bot$, and sends $\bot$ to $\qA_2$.
    Otherwise, the challenger outputs $\top$, and sends $\sk$ to $\qA_2$.
    \item $\qA_2$ outputs $b'\in\{0,1\}$.
    \item If the challenger outputs $\top$, then the output of the experiment is $b'$. Otherwise, the output of the experiment is $\bot$.
\end{enumerate}
We say that $\Sigma$ is certified everlasting IND-CPA secure if, for any QPT $\qA_1$ and any unbounded $\qA_2$, it holds that
\begin{align}
\advd{\Sigma,\qA}{cert}{ever}{ind}{cpa}(\secp)
\seteq \abs{\Pr[ \expd{\Sigma,\qA}{cert}{ever}{ind}{cpa}(\secp, 0)=1] - \Pr[ \expd{\Sigma,\qA}{cert}{ever}{ind}{cpa}(\secp, 1)=1] }\leq \negl(\secp).
\end{align}
\end{definition}

\begin{definition}[PKE with Certified Everlasting Deletion (Syntax)]\label{def:cert_ever_pke}
Let $\secp$ be a security parameter and let $p$, $q$, $r$, $s$ and $t$ be polynomials. 
A PKE with certified everlasting deletion scheme is a tuple of algorithms $\Sigma=(\keygen,\qEnc,\qDec,\qDelete,\Vrfy)$
with plaintext space $\Ms\seteq\bit^n$,
ciphertext space $\Cs\seteq \cQ^{\otimes p(\lambda)}$,
public key space $\mathcal{PK}\seteq\{0,1\}^{q(\lambda)}$, secret key space $\mathcal{SK}\seteq \bit^{r(\secp)}$,
verification key space $\mathcal{VK}\seteq \bit^{s(\secp)}$
and deletion certificate space $\mathcal{D}\seteq \cQ^{\otimes t(\lambda)}$.
\begin{description}
    \item[$\keygen (1^\secp) \ra (\pk,\sk)$:]
    The key generation algorithm takes the security parameter $1^\secp$ as input and outputs a public key $\pk\in\mathcal{PK}$ and a secret key $\sk \in \mathcal{SK}$.
    \item[$\qEnc(\pk,m) \ra (\vk,\qct)$:]
    The encryption algorithm takes $\pk$ and a plaintext $m\in\Ms$ as input,
    and outputs a verification key $\vk\in\mathcal{VK}$ and a ciphertext $\qct\in \Cs$.
    \item[$\qDec(\sk,\qct) \ra m^\prime~or~\bot$:]
    The decryption algorithm takes $\sk$ and $\qct$ as input, and outputs a plaintext $m^\prime \in \Ms$ or $\bot$.
    \item[$\qDelete(\qct) \ra \cert$:]
    The deletion algorithm takes $\qct$ as input and outputs a certification $\cert\in\mathcal{D}$.
    \item[$\Vrfy(\vk,\cert)\ra \top$ or $\bot$:]
    The verification algorithm takes $\vk$ and $\cert$ as input, and outputs $\top$ or $\bot$.
    \end{description}
\end{definition}
\begin{remark}
Although we consider quantum certificates in~\cref{sec:const_pke_wo_rom}, we consider classical certificates by default. In the quantum certificate case, we need to use $\qcert$ and $\qVrfy$ in the syntax.
\end{remark}

We require that a PKE with certified everlasting deletion scheme satisfies correctness defined below.
\begin{definition}[Correctness for PKE with Certified Everlasting Deletion]\label{def:correctness_cert_ever_pke}
There are two types of correctness, namely, decryption correctness and verification correctness.

\paragraph{Decryption Correctness:}
There exists a negligible function $\negl$ such that for any $\secp\in \N$ and $m\in\Ms$,
\begin{align}
\Pr\left[
m'\neq m
\ \middle |
\begin{array}{ll}
(\pk,\sk)\lrun \keygen(1^\secp)\\
(\vk,\qct) \lrun \qEnc(\pk,m)\\
m'\la\qDec(\sk,\qct)
\end{array}
\right] 
\le\negl(\secp).
\end{align}

\paragraph{Verification Correctness:}
There exists a negligible function $\negl$ such that for any $\secp\in \N$ and $m\in\Ms$,
\begin{align}
\Pr\left[
\Vrfy(\vk,\cert)=\bot
\ \middle |
\begin{array}{ll}
(\pk,\sk)\lrun \keygen(1^\secp)\\
(\vk,\qct) \lrun \qEnc(\pk,m)\\
\cert \lrun \qDelete(\qct)
\end{array}
\right] 
\leq
\negl(\secp).
\end{align}
\end{definition}

Minimum requirements for correctness are decryption correctness and verification correctness.
However, we also require verification correctness with QOTP in this work because we need it for the construction of FE in~\cref{sec:const_fe_adapt}.

\begin{definition}[Verification Correctness with QOTP]\label{def:cd_pke_ver_correctness_QOTP}
There exists a negligible function $\negl$ and a PPT algorithm $\Modify$ such that for any $\secp\in \N$ and $m\in\Ms$, 
\begin{align}
\Pr\left[
\Vrfy(\vk,\cert^*)=\bot
\ \middle |
\begin{array}{ll}
(\pk,\sk)\lrun \keygen(1^\secp)\\
(\vk,\qct) \lrun \qEnc(\pk,m)\\
a,b\la\bit^{p(\secp)}\\
\wtl{\cert} \lrun \qDelete(Z^bX^a\qct X^aZ^b)\\
\cert^*\lrun \Modify(a,b,\wtl{\cert})  
\end{array}
\right] 
\leq
\negl(\secp).
\end{align}
\end{definition}

As security, we consider two definitions, 
\cref{def:IND-CPA_security_cert_ever_pke} 
and
\cref{def:cert_ever_security_cert_ever_pke} given below.
The former is just the standard IND-CPA security 
and the latter is the certified everlasting security
that we newly define in this paper.
Roughly, the everlasting security guarantees that any QPT adversary cannot obtain plaintext information even if it becomes computationally unbounded and obtains the secret key after it issues a valid certificate.

\begin{definition}[IND-CPA Security for PKE with Certified Everlasting Deletion]\label{def:IND-CPA_security_cert_ever_pke}
Let $\Sigma=(\keygen,\qEnc,\qDec,\qDelete,\allowbreak \Vrfy)$ be a PKE with certified everlasting deletion scheme.
We consider the following security experiment $\expb{\Sigma,\qA}{ind}{cpa}(\secp,b)$ against a QPT adversary $\qA$.
\begin{enumerate}
    \item The challenger generates $(\pk,\sk)\lrun \keygen(1^{\secp})$, and sends $\pk$ to $\qA$.
    \item $\qA$ sends $(m_0,m_1)\in\cM^2$ to the challenger.
    \item The challenger computes $(\vk,\qct) \lrun \qEnc(\pk,m_b)$, and sends $\qct$ to $\qA$.
    \item $\qA$ outputs $b'\in\bit$. This is the output of the experiment.
\end{enumerate}
We say that the $\Sigma$ is IND-CPA secure if, for any QPT $\qA$, it holds that
\begin{align}
\advb{\Sigma,\qA}{ind}{cpa}(\secp) \seteq \abs{\Pr[\expb{\Sigma,\qA}{ind}{cpa}(\secp,0)=1]  - \Pr[\expb{\Sigma,\qA}{ind}{cpa}(\secp,1)=1]} \leq \negl(\secp).
\end{align}
\end{definition}

\begin{definition}[Certified Everlasting IND-CPA Security for PKE]\label{def:cert_ever_security_cert_ever_pke}
Let $\Sigma=(\keygen,\qEnc,\qDec,\allowbreak \qDelete,\Vrfy)$ be a PKE with certified everlasting deletion scheme.
We consider the following security experiment $\expd{\Sigma,\qA}{cert}{ever}{ind}{cpa}(\secp,b)$ against a QPT adversary $\qA_1$ and an unbounded adversary $\qA_2$.
\begin{enumerate}
    \item The challenger computes $(\pk,\sk) \la \keygen(1^\secp)$, and sends $\pk$ to $\qA_1$.
    \item $\qA_1$ sends $(m_0,m_1)\in\cM^2$ to the challenger.
    \item The challenger computes $(\vk,\qct) \la \qEnc(\pk,m_b)$, and sends $\qct$ to $\qA_1$.
    \item At some point, $\qA_1$ sends $\cert$ to the challenger, and sends the internal state to $\qA_2$.
    \item The challenger computes $\Vrfy(\vk,\cert)$.
    If the output is $\bot$, the challenger outputs $\bot$, and sends $\bot$ to $\qA_2$.
    Otherwise, the challenger outputs $\top$, and sends $\sk$ to $\qA_2$.
    \item $\qA_2$ outputs $b'\in\{0,1\}$.
    \item If the challenger outputs $\top$, then the output of the experiment is $b'$.
    Otherwise, the output of the experiment is $\bot$.
\end{enumerate}
We say that the $\Sigma$ is certified everlasting IND-CPA secure if for any QPT $\qA_1$ and any unbounded $\qA_2$, it holds that
\begin{align}
\advd{\Sigma,\qA}{cert}{ever}{ind}{cpa}(\secp)
\seteq \abs{\Pr[ \expd{\Sigma,\qA}{cert}{ever}{ind}{cpa}(\secp, 0)=1] - \Pr[ \expd{\Sigma,\qA}{cert}{ever}{ind}{cpa}(\secp, 1)=1] }\leq \negl(\secp).
\end{align}
\end{definition}

\subsection{SKE Scheme with QROM}\label{sec:const_ske_rom}
In this section, we construct an SKE with certified everlasting deletion scheme with QROM.
Our construction is similar to that of the certified everlasting commitment scheme in \cite{C:HMNY22}.
The difference is that we use SKE instead of commitment.

\paragraph{Our certified everlasting secure SKE scheme.}
We construct a certified everlasting secure SKE scheme $\Sigma_{\mathsf{cesk}}=(\keygen,\qEnc,\qDec,\allowbreak \qDelete,\Vrfy)$ from the following primitives.
\begin{itemize}
    \item A one-time SKE with certified deletion scheme~(\cref{def:sk_cert_del}) $\Sigma_{\skcd}=\sfCD.(\keygen,\qEnc,\qDec,\qDelete,\Vrfy)$. 
    \item A SKE scheme~(\cref{def:ske}) $\Sigma_{\mathsf{sk}}=\mathsf{SKE}.(\keygen,\Enc,\Dec)$ with plaintext space $\bit^\secp$.
    \item A hash function $H$ 
    modeled as a quantum random oracle.
\end{itemize}

\begin{description}
    \item[$\keygen(1^{\secp})$:] $ $
    \begin{itemize}
        \item Generate $\ske.\sk\la\SKE.\keygen(1^\secp)$.
        \item Output $\sk\seteq\ske.\sk$.
    \end{itemize}
    \item[$\qEnc(\sk,m)$:] $ $
    \begin{itemize}
        \item Parse $\sk=\ske.\sk$.
        \item Generate $\sfcd.\sk\la\sfCD.\keygen(1^\secp)$ and $R\la\bit^\lambda$.
        \item Compute $\ske.\ct\la \SKE.\Enc(\ske.\sk,R)$.
        \item Compute $h\seteq H(R)\oplus \sfcd.\sk$ and $\sfcd.\qct\la \sfCD.\qEnc(\sfcd.\sk,m)$.
        \item Output $\qct\seteq (h,\ske.\ct,\sfcd.\qct)$ and $\vk\seteq \sfcd.\sk$.
    \end{itemize}
    \item[$\Dec(\sk,\qct)$:] $ $
    \begin{itemize}
        \item Parse $\sk=\ske.\sk$ and $\qct= (h,\ske.\ct,\sfcd.\qct)$.
        \item Compute $R' {\bf\,\, or \,\,}\bot\la\SKE.\Dec(\ske.\sk,\ske.\ct)$.
        If it outputs $\bot$, $\Dec(\sk,\ct)$ outputs $\bot$.
        \item Compute $\sfcd.\sk'\seteq H(R')\oplus h$.
        \item Compute $m' \la\sfCD.\qDec(\sfcd.\sk',\sfcd.\qct)$.
        \item Output $m'$.
    \end{itemize}
    \item[$\Delete(\qct)$:] $ $
    \begin{itemize}
        \item Parse $\qct=(h,\ske.\ct,\sfcd.\qct)$.
        \item Compute $\sfcd.\cert\la\sfCD.\qDelete(\sfcd.\qct)$.
        \item Output $\cert\seteq\sfcd.\cert$.
    \end{itemize}
    \item[$\Vrfy(\vk,\cert)$:] $ $
    \begin{itemize}
        \item Parse $\vk=\sfcd.\sk$ and $\cert=\sfcd.\cert$.
        \item Compute $b\la\sfCD.\Vrfy(\sfcd.\sk,\sfcd.\cert)$.
        \item Output $b$.
    \end{itemize}
\end{description}

\paragraph{Correctness:}
It is easy to see that correctness of $\Sigma_{\mathsf{cesk}}$ comes from those of $\Sigma_{\mathsf{sk}}$ and $\Sigma_{\skcd}$.
Special correctness holds due to that of $\Sigma_{\sk}$. Verifcation correctness with QOTP holds due to that of $\Sigma_{\skcd}$.

\paragraph{Security:}
The following two theorems hold.
\begin{theorem}\label{thm:sk_comp_security}
If $\Sigma_{\mathsf{sk}}$ satisfies the OW-CPA security~(\cref{def:OW-CPA_security_ske}) and $\Sigma_{\skcd}$ satisfies the OT-CD security~(\cref{def:sk_cert_del}), 
$\Sigma_{\mathsf{cesk}}$ satisfies the IND-CPA security~(\cref{def:IND-CPA_security_cert_ever_ske}).
\end{theorem}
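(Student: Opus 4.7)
The plan is to prove IND-CPA security via a hybrid argument that walks from the challenge bit $b=0$ to $b=1$ in three steps, using the quantum random oracle to decouple $\sfcd.\sk$ from the rest of the ciphertext and then invoking the one-time security of $\Sigma_{\skcd}$. Concretely, I would define the following hybrids. $\hybi{0}^{(b)}$ is the standard $\expb{\Sigma_{\mathsf{cesk}},\qA}{ind}{cpa}(\secp,b)$ game. $\hybi{1}^{(b)}$ is identical to $\hybi{0}^{(b)}$ except that in the challenge ciphertext the component $h=H(R^*)\oplus \sfcd.\sk$ is replaced by $h = r\oplus \sfcd.\sk$ for a freshly sampled uniform $r\gets\bit^\secp$ (equivalently, $H$ is reprogrammed at $R^*$ to a fresh uniform value independent of everything else). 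It then suffices to show $\hybi{0}^{(b)}\approx_c \hybi{1}^{(b)}$ for each $b$, and $\hybi{1}^{(0)}\approx_c \hybi{1}^{(1)}$.

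For $\hybi{0}^{(b)}\approx_c \hybi{1}^{(b)}$, the key tool is the one-way to hiding lemma (\cref{lem:one-way_to_hiding}). I would let $H$ be the honest QROM and $G$ be the oracle reprogrammed at $S=\{R^*\}$, so that the adversary's view in $\hybi{0}^{(b)}$ is the QROM-adversary $\qA^H$ and its view in $\hybi{1}^{(b)}$ is $\qA^G$ (the component $h$ looks uniform in both worlds, the only difference being the oracle's value at $R^*$). By the lemma, the distinguishing advantage is bounded by $2q\sqrt{\Pr[\cB^H\text{ outputs }R^*]}$ where $\cB$ picks a random query of $\qA$ and measures it. I would then construct a OW-CPA adversary $\cD$ against $\Sigma_{\sk}$ from $\cB$: $\cD$ receives a single OW-CPA challenge ciphertext $\ske.\ct^*=\SKE.\Enc(\ske.\sk,R^*)$ (with $\ell=1$), uses its own encryption oracle to answer each of $\qA$'s encryption queries (for each query $m_i$, freshly sample $\sfcd.\sk_i,R_i$, query the encryption oracle on $R_i$ to get $\ske.\ct_i$, and build the ciphertext as in the real scheme), and uses $\ske.\ct^*$ in place of $\ske.\ct$ in the challenge ciphertext together with $h$ sampled uniformly (simulating the reprogramming). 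It outputs the measurement result of $\cB$. By OW-CPA security, $\Pr[\cB^H\text{ outputs }R^*]\le \negl(\secp)$, hence the hybrid gap is negligible.

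For $\hybi{1}^{(0)}\approx_c \hybi{1}^{(1)}$, the challenge component $h$ is now information-theoretically independent of $\sfcd.\sk$ (it is a one-time pad under $r$), so the adversary's only dependence on the bit $b$ is through $\sfcd.\qct \la \sfCD.\qEnc(\sfcd.\sk,m_b)$. I would reduce this to OT-CD security of $\Sigma_{\skcd}$ (whose experiment, when the adversary never issues a certificate, reduces to one-time IND): the reduction samples $\ske.\sk$ itself and simulates all of $\qA$'s encryption queries internally (with freshly generated $\sfcd.\sk_i$), forwards $(m_0,m_1)$ to the external OT-CD challenger, embeds the returned ciphertext as $\sfcd.\qct$, and builds $h$ and $\ske.\ct$ as in $\hybi{1}^{(b)}$; the reduction never sends a deletion certificate (so it receives $\bot$ instead of the key, which is irrelevant here) and simply outputs $\qA$'s guess.

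The main obstacle, and the step that needs the most care, is the first reduction: I have to ensure that the simulation of $\qA$'s view in the one-way to hiding game is faithful despite not knowing $R^*$, and that the many auxiliary SKE encryptions (one per encryption query) do not damage the OW-CPA reduction. This is handled by the fact that the adversary may make polynomially many OW-CPA encryption-oracle queries before and after the challenge (and each $R_i$ is generated by the reduction itself, independently of $R^*$), so the simulation is perfect. Stringing the three gaps together via the triangle inequality gives the theorem.
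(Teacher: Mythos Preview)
Your proposal is correct and follows essentially the same approach the paper intends: the paper omits the proof here, pointing to \cref{thm:sk_ever_security}, which in turn defers to \cite[Theorem~5.8]{C:HMNY22}; that argument is precisely the one-way-to-hiding plus OW-CPA reduction to decouple $H(R^*)$ from the ciphertext, followed by the one-time indistinguishability implied by OT-CD security. One small point worth tightening: in your O2H setup you let $H$ be the honest oracle and $G$ the reprogrammed one, so the lemma bounds via $\Pr[\cB^{H}(z)\in S]$, yet your reduction $\cD$ simulates the reprogrammed world (uniform $h$); swapping the roles of $G$ and $H$ (so that the honest game is $\qA^{G}$ and the extractor runs $\cB^{H}$ with $h$ independent of $H$) makes the OW-CPA reduction go through cleanly without $\cD$ needing to know $R^*$.
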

Its proof is similar to that of \cref{thm:sk_ever_security}, and therefore we omit it.

\begin{theorem}\label{thm:sk_ever_security}
If $\Sigma_{\mathsf{sk}}$ satisfies the OW-CPA security~(\cref{def:OW-CPA_security_ske})
and $\Sigma_{\skcd}$ satisfies the OT-CD security~(\cref{def:sk_cert_del}),
$\Sigma_{\mathsf{cesk}}$ satisfies the certified everlasting IND-CPA security~(\cref{def:cert_ever_security_cert_ever_ske}).
\end{theorem}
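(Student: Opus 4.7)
The plan is to reduce certified everlasting IND-CPA security of $\Sigma_{\mathsf{cesk}}$ directly to the OT-CD security of $\Sigma_{\skcd}$, using a programming argument in the QROM combined with the one-way-to-hiding (O2H) lemma and OW-CPA security of $\Sigma_{\mathsf{sk}}$ to handle the random oracle. The intuition is that the ciphertext component $h = H(R)\oplus \sfcd.\sk$ lets the post-deletion party recover the one-time key $\sfcd.\sk$ from $\sk = \ske.\sk$ (by decrypting $\ske.\ct$ to $R$ and querying $H$), so the entire security must come from the OT-CD security of $\Sigma_{\skcd}$, provided we can arrange that the right $\sfcd.\sk$ ends up being revealed only after a valid certificate.

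First I would construct a reduction $\qB$ that plays the OT-CD game with $\Sigma_{\skcd}$ while simulating the certified everlasting IND-CPA game for $(\qA_1,\qA_2)$. $\qB$ itself samples $\ske.\sk \la \SKE.\keygen(1^\secp)$ and answers every non-challenge encryption query of $\qA_1$ by drawing a fresh $\sfcd.\sk$ and a fresh $R$ and computing $(h,\ske.\ct,\sfcd.\qct)$ exactly as the real $\qEnc$. For the challenge $(m_0,m_1)$, $\qB$ forwards $(m_0,m_1)$ to its OT-CD challenger, receives $\sfcd.\qct^*$, samples $R^* \la \bit^\secp$ and $h^* \la \bit^\secp$ uniformly, computes $\ske.\ct^* \la \SKE.\Enc(\ske.\sk, R^*)$, and returns $\qct^* = (h^*,\ske.\ct^*,\sfcd.\qct^*)$ to $\qA_1$. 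When $\qA_1$ outputs a certificate $\cert^* = \sfcd.\cert^*$, $\qB$ forwards it; if verification passes, the OT-CD challenger reveals the underlying $\sfcd.\sk^*$. At this moment, before invoking $\qA_2$, $\qB$ reprograms the random oracle at $R^*$ by setting $H(R^*) := h^* \oplus \sfcd.\sk^*$, and then hands $\qA_2$ both $\qA_1$'s internal state and $\sk = \ske.\sk$. Finally $\qB$ outputs whatever $\qA_2$ outputs.

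Next I would argue that this simulation is statistically indistinguishable from the real certified everlasting IND-CPA experiment of $\Sigma_{\mathsf{cesk}}$. After the programming, one has $h^* = H(R^*) \oplus \sfcd.\sk^*$ identically, so the joint distribution of $(h^*,\ske.\ct^*,\sfcd.\qct^*,\sfcd.\sk^*)$ together with the oracle $H$ seen by $\qA_2$ is exactly the real distribution; and since $\qA_2$ interacts with $H$ only after the single-point reprogramming has been performed, its unbounded computational power is harmless and cannot detect the programming. The only place where the simulation can deviate from the real game is during $\qA_1$'s run, where in the real game $H(R^*)$ is already correlated with $h$ via $\sfcd.\sk$, while in $\qB$'s simulation $h^*$ is drawn independently of $H(R^*)$ until reprogramming. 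By the one-way-to-hiding lemma (\cref{lem:one-way_to_hiding}), the distinguishing advantage attributable to $\qA_1$ is bounded (up to a factor $2q$) by the probability that $\qA_1$ ever queries $H$ on $R^*$; and an algorithm that runs $\qA_1$ and measures a uniformly chosen query produces $R^*$, which is the plaintext of the honestly generated $\SKE$ ciphertext $\ske.\ct^*$, contradicting OW-CPA of $\Sigma_{\mathsf{sk}}$. Hence this distinguishing gap is negligible.

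Combining both pieces, any QPT $\qA_1$/unbounded $\qA_2$ succeed in the certified everlasting IND-CPA game with advantage at most $\qB$'s advantage in the OT-CD game of $\Sigma_{\skcd}$ plus the O2H extraction term, both negligible. The main obstacle I expect is handling the QROM programming rigorously: I must ensure the classical reprogramming at $R^*$ happens strictly after $\qA_1$ terminates and before $\qA_2$ is invoked, and I must bound $\qA_1$'s quantum queries to $H$ at $R^*$ via O2H rather than a naive classical lazy-sampling argument. A secondary technical point is that $\Sigma_{\skcd}$ is one-time, so $\qB$ may use its OT-CD challenger only for the challenge ciphertext and must simulate all other encryption queries itself; this is straightforward since those queries use independent fresh $\sfcd.\sk$'s that $\qB$ controls.
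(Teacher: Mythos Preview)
Your proposal is correct and follows essentially the same approach the paper points to (it defers to \cite[Theorem~5.8]{C:HMNY22}, whose argument for certified everlasting hiding of commitments is precisely this: reduce to OT-CD security by having the reduction sample $h^*$ uniformly during $\qA_1$'s run, reprogram $H(R^*)\seteq h^*\oplus\sfcd.\sk^*$ once the OT-CD challenger reveals $\sfcd.\sk^*$, and bound the distinguishing gap in $\qA_1$'s phase via O2H plus OW-CPA of the underlying encryption). The only cosmetic point is that the O2H bound carries a square root, i.e.\ $2q\sqrt{\Pr[\cB\in\{R^*\}]}$ rather than a linear factor, but this does not affect negligibility.
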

Its proof is similar to that of \cite[Theorem~5.8]{C:HMNY22}.

\subsection{SKE Scheme without QROM}\label{sec:const_ske_wo_rom}
In this section, we construct an SKE with certified everlasting deletion scheme without QROM.
Note that unlike the construction with QROM (\cref{sec:const_ske_rom}), in this construction
the plaintext space is of constant size.
However, the size can be extended to the polynomial size via the standard hybrid argument.
Our construction is similar to that of revocable quantum timed-release encryption in \cite{JACM:Unruh15}.
The difference is that we use SKE instead of timed-release encryption.

\paragraph{Our certified everlasting secure SKE scheme without QROM.}
Let $k_1$ and $k_2$ be constants such that $k_1>k_2$.
Let $p$, $q$, $r$, $s$ and $t$ be polynomials.
Let $(C_1,C_2)$ be a CSS code with parameters $q,k_1,k_2,t$.
We construct a certified everlasting secure SKE scheme
$\Sigma_{\mathsf{cesk}}=(\keygen,\qEnc,\qDec,\qDelete,\Vrfy)$ with plaintext space $\Ms=C_1/C_2$ (isomorphic to $\bit^{k_1-k_2}$), 
ciphertext space $\Cs=\cQ^{\otimes \left(p(\lambda)+q(\lambda)\right)}\times \{0,1\}^{r(\lambda)}\times \bit^{q(\lambda)}/C_1\times C_1/C_2$,
secret key space $\mathcal{SK}= \bit^{s(\lambda)}$,
verification key space $\mathcal{VK}=\{0,1\}^{p(\lambda)}\times [p(\lambda)+q(\lambda)]_{p(\lambda)}\times \bit^{p(\lambda)}$ 
and deletion certificate space $\mathcal{D}=\mathcal{Q}^{\otimes \left(p(\lambda)+q(\lambda)\right)}$
from the following primitive.
\begin{itemize}
    \item An SKE scheme~(\cref{def:ske}) $\Sigma_{\mathsf{sk}}=\SKE.(\keygen,\Enc,\Dec)$ with plaintext space $\Ms=\{0,1\}^{p(\lambda)}\times[p(\lambda)+q(\lambda)]_{p(\lambda)}\times \bit^{p(\lambda)} \times C_1/C_2$, secret key space $\mathcal{SK}=\{0,1\}^{s(\secp)}$ and ciphertext space $\Cs=\bit^{r(\lambda)}$. 
\end{itemize}
The construction is as follows. (We will omit the security parameter below.)
\begin{description}
\item[$\keygen(1^\secp)$:]$ $
\begin{itemize}
    \item Generate $\ske.\sk\la\SKE.\keygen(1^\secp)$.
    \item Output $\sk\seteq\ske.\sk$.
\end{itemize}
\item[$\Enc(\sk,m)$:] $ $
\begin{itemize}
    \item Parse $\sk=\ske.\sk$.
    \item Generate $B\la\bit^{p}$, $Q\la[p+q]_{p}$, $y\la C_1/C_2$, $u\la\bit^q/C_1$, $r\la\bit^p$,
    $x\la C_1/C_2$, $w\la C_2$.
    \item Compute $\ske.\ct\la\SKE.\Enc\left(\ske.\sk,(B,Q,r,y)\right)$.
    \item Let $U_Q$ be the unitary that permutes the qubits in $Q$ into the first half of the system.
    (I.e., $U_Q\ket{x_1x_2\cdots x_{p+q}}=\ket{x_{a_1}x_{a_2}\cdots x_{a_p}x_{b_1}x_{b_2}\cdots x_{b_q}}$ with $Q\seteq \{a_1,a_2,\cdots, a_p\}$ and $\{1,2,\cdots, p+q\}\backslash Q\seteq \{b_1,b_2,\cdots,b_q\}$.)
    \item Construct a quantum state $\ket{\Psi} \seteq U_{Q}^{\dagger}(H^B\otimes I^{\otimes q})(\ket{r}\otimes \ket{x\oplus w\oplus u})$.
    \item Compute $h\seteq m\oplus x\oplus y$.
    \item Output $\qct\seteq (\ket{\Psi},\ske.\ct,u,h)$ and $\vk\seteq (B,Q,r)$.
\end{itemize}
\item[$\Dec(\sk,\qct)$:] $ $
\begin{itemize}
    \item Parse $\sk=\ske.\sk$, $\qct=(\ket{\Psi},\ske.\ct,u,h)$.
    \item Compute $(B,Q,r,y)/\bot\la\SKE.\Dec(\ske.\sk,\ske.\ct)$.
    If $\bot\la\SKE.\Dec(\ske.\sk,\ske.\ct)$, $\Dec(\sk,\ct)$ outputs $\bot$ and aborts.
    \item Apply $U_Q$ to $\ket{\Psi}$, measure the last $q$-qubits in the computational basis and obtain the measurement outcome $\gamma\in\bit^q$. 
    \item Compute $x\seteq \gamma\oplus u$ mod $C_2$.
    \item Output $m'\seteq h\oplus x\oplus y$.
\end{itemize}
\item[$\Delete(\qct)$:] $ $
\begin{itemize}
    \item Parse $\qct=(\ket{\Psi},\ske.\ct,u,h)$.
    \item Output $\qcert\seteq\ket{\Psi}$.
\end{itemize}
\item[$\qVrfy(\vk,\qcert)$:] $ $
\begin{itemize}
    \item Parse $\vk=(B,Q,r)$ and $\qcert=\ket{\Psi}$.
    \item Apply $(H^B\otimes I^{\otimes q})U_Q$ to $\ket{\Psi}$, measure the first $p$-qubits in the computational basis and obtain the measurement outcome $r'\in\bit^p$.
    \item Output $\top$ if $r=r'$ and output $\bot$ otherwise.
\end{itemize}
\end{description}

\paragraph{Correctness.}
Correctness easily follows from that of
$\Sigma_{\sk}$.
Special correctness holds due to that of $\Sigma_{\sk}$. Verifcation correctness with QOTP holds since $\Modify$ is the decryption algorithm of QOTP.

\paragraph{Security.}
The following two theorems hold.

\begin{theorem}\label{thm:sk_comp_security_wo_rom}
If $\Sigma_{\mathsf{sk}}$ is IND-CPA secure~(\cref{def:IND-CPA_security_ske}), then $\Sigma_{\mathsf{cesk}}$ is IND-CPA secure~(\cref{def:IND-CPA_security_cert_ever_ske}).
\end{theorem}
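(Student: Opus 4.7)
The plan is to prove Theorem~\ref{thm:sk_comp_security_wo_rom} by a short hybrid argument that isolates the role of the inner secret value $y$: since $y$ appears only (i) inside the plaintext of $\ske.\ct$ and (ii) as a one-time pad on $m$ via $h = m \oplus x \oplus y$, removing $y$ from $\ske.\ct$ makes $h$ information-theoretically uniform, thereby hiding the message. All other components of the ciphertext ($|\Psi\rangle$, $u$, $\ske.\ct$) can be generated without knowing $m$, so the message only enters through $h$.

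Concretely, I would define four hybrids. $\hyb_0$ is $\expb{\Sigma_{\mathsf{cesk}},\qA}{ind}{cpa}(\secp,0)$: the challenger computes $\ske.\ct \leftarrow \SKE.\Enc(\ske.\sk,(B,Q,r,y))$ and $h = m_0 \oplus x \oplus y$. $\hyb_1$ is identical to $\hyb_0$ except that the challenge $\ske.\ct$ is replaced by $\ske.\ct \leftarrow \SKE.\Enc(\ske.\sk,(B,Q,r,0^{k_1-k_2}))$ (so $y$ no longer appears inside the SKE encryption, only in $h$). $\hyb_2$ is identical to $\hyb_1$ except that $h = m_1 \oplus x \oplus y$. $\hyb_3$ puts $y$ back into the SKE ciphertext and equals $\expb{\Sigma_{\mathsf{cesk}},\qA}{ind}{cpa}(\secp,1)$.

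I expect the following transitions. $\hyb_0 \approx_c \hyb_1$ and $\hyb_2 \approx_c \hyb_3$ follow from the IND-CPA security of $\Sigma_{\mathsf{sk}}$: a reduction $\qB$ runs $\qA$, answers $\qA$'s polynomially many pre- and post-challenge encryption queries by sampling $(B,Q,r,y,u,x,w)$ fresh for each query, calling its own SKE encryption oracle on $(B,Q,r,y)$, and assembling the quantum ciphertext locally; for the challenge ciphertext, $\qB$ submits the pair $((B^*,Q^*,r^*,y^*),(B^*,Q^*,r^*,0^{k_1-k_2}))$ to its SKE challenger, receives one ciphertext, and completes the challenge quantum ciphertext using its locally-sampled $(|\Psi^*\rangle,u^*,h^*)$. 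The indistinguishability $\hyb_1 \equiv \hyb_2$ is purely statistical: in both hybrids $y$ is a uniformly random element of $C_1/C_2$ that is not used anywhere except as a one-time pad on $m_b \oplus x$, so $h$ is uniformly distributed and carries no information about which of $m_0,m_1$ was chosen; every other component (the state $|\Psi\rangle$, the string $u$, and the SKE ciphertext of $(B,Q,r,0)$) is independent of $m_b$.

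The main obstacle I foresee is a bookkeeping one rather than a conceptual one: ensuring that the reduction to IND-CPA of $\Sigma_{\mathsf{sk}}$ can handle the many-query interaction of Definition~\ref{def:IND-CPA_security_cert_ever_ske} using only one challenge query to the single-challenge IND-CPA game of $\Sigma_{\mathsf{sk}}$. This works because $\Sigma_{\mathsf{sk}}$ in Definition~\ref{def:IND-CPA_security_ske} provides an encryption oracle both before and after the challenge, so all of $\qA$'s encryption queries can be faithfully answered by forwarding freshly sampled tuples $(B,Q,r,y)$ to that oracle; no additional many-time SKE primitive is required. Once the hybrid chain is in place, the total advantage of $\qA$ is bounded by twice the IND-CPA advantage of $\Sigma_{\mathsf{sk}}$, which is negligible.
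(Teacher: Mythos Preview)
Your proposal is correct and is precisely the kind of argument the paper has in mind: the paper simply says ``Its proof is straightforward, so we omit it,'' and your four-hybrid chain isolating the one-time-pad role of $y$ (remove $y$ from $\ske.\ct$ by IND-CPA of $\Sigma_{\mathsf{sk}}$, observe $h$ becomes statistically uniform, then undo) is the natural straightforward proof. Your handling of the multi-query interface via the encryption oracle of Definition~\ref{def:IND-CPA_security_ske} is also correct.
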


Its proof is straightforward, so we omit it.

\begin{theorem}\label{thm:sk_ever_security_wo_rom}
If $\Sigma_{\mathsf{sk}}$ is IND-CPA secure~(\cref{def:IND-CPA_security_ske}) and $tp/(p+q)-4(k_1-k_2){\rm ln}2$ is superlogarithmic, then $\Sigma_{\mathsf{cesk}}$ is certified everlasting IND-CPA secure~(\cref{def:cert_ever_security_cert_ever_ske}).
\end{theorem}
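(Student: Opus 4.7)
My plan is to follow the structure of Unruh's security proof for revocable quantum timed-release encryption~\cite{JACM:Unruh15}, adapted to the present SKE setting, since $\Sigma_{\mathsf{cesk}}$ is essentially the Unruh construction with an IND-CPA SKE replacing the timed-release encryption as the computational hiding layer. The proof decomposes naturally into two parts: a computational reduction that uses IND-CPA of $\Sigma_{\mathsf{sk}}$ to remove the dependence of $\qA_1$'s view on $(B,Q,r,y)$, followed by an information-theoretic argument based on the CSS-code structure of $\ket{\Psi}$ to show that, conditioned on passing $\qVrfy$, the residual state handed to $\qA_2$ is essentially independent of the CSS-coset $x$ and hence of the message $m_b$.

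Concretely, I would introduce the following hybrid sequence. In $\hyb_0$ I run the real certified everlasting IND-CPA experiment with $b=0$; in $\hyb_1$ the challenger's reveal step to $\qA_2$ is syntactically rewritten so that, instead of handing over $\ske.\sk$, the challenger gives $(B,Q,r,y)$ directly (together with a freshly simulated $\ske.\sk$). Since $(B,Q,r,y) = \SKE.\Dec(\ske.\sk,\ske.\ct)$ is the only information $\qA_2$ can extract from $\ske.\sk$ that is not already derivable from public data, $\hyb_0$ and $\hyb_1$ are equivalent up to a re-sampling of $\ske.\sk$ conditioned on $\ske.\ct$ decrypting to $(B,Q,r,y)$. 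In $\hyb_2$ I replace $\ske.\ct$ by an encryption of a fixed dummy plaintext of the same length, and bound the distance $\abs{\Pr[\hyb_1=1]-\Pr[\hyb_2=1]}$ by the IND-CPA advantage of $\Sigma_{\mathsf{sk}}$: the reduction runs only the QPT $\qA_1$ together with the efficient $\qVrfy$ procedure, and because $\qA_2$ is fed $(B,Q,r,y)$ directly in $\hyb_1$, the SKE key $\ske.\sk$ is never needed to close out the simulation. After this switch, $(B,Q,r,y)$ are information-theoretically hidden from $\qA_1$ at the moment it commits to $\cert$; moreover, $y$ acts as a one-time pad on $m_b$ inside $h = m_b\oplus x\oplus y$, so the task of bounding $\qA_2$'s distinguishing advantage in $\hyb_2$ reduces to bounding the residual $x$-information after verification.

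Finally, in $\hyb_2$ I would invoke an Unruh-style quantum privacy lemma for CSS-encoded BB84 states: for any QPT $\qA_1$ that returns a certificate passing the verification $(H^B\otimes I^{\otimes q})U_Q$-measure-to-$r$ test with non-negligible probability, the joint distribution of the residual state and the classical reveal $(B,Q,r,y,u,h)$ is within trace distance $2^{-\Omega(tp/(p+q)-4(k_1-k_2)\ln 2)}$ of a distribution that is independent of the encoded CSS coset $x\in C_1/C_2$. Combining this with the $y$-masking argument yields the desired negligible distinguishing bound. The main obstacle is precisely this quantum privacy lemma: it requires combining an entropic uncertainty relation over a uniformly random size-$p$ subset $Q$ of positions, the error-correction properties of $C_1$ and $C_2^{\perp}$, and a CSS-privacy-amplification argument extracting a near-uniform $x\in C_1/C_2$; the parameter condition $tp/(p+q)-4(k_1-k_2)\ln 2 = \omega(\log\secp)$ stated in the theorem is exactly what is needed for the resulting bound to be negligible in $\secp$. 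A secondary, more technical obstacle is making the $\hyb_0\to\hyb_1$ change rigorous without imposing extra structure on $\Sigma_{\mathsf{sk}}$ beyond IND-CPA, which I would handle by noting that the marginal of $\ske.\sk$ conditioned on fixed $(\ske.\ct,B,Q,r,y)$ is a statistical object defined entirely by $\SKE.\keygen$ and $\SKE.\Enc$ and does not require any adaptive interaction with $\qA_2$.
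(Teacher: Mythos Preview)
Your plan matches the paper's approach exactly: the paper provides no detailed proof of its own and simply defers to \cite[Theorem~3]{JACM:Unruh15}, and your proposal is precisely to adapt Unruh's argument with IND-CPA security of $\Sigma_{\mathsf{sk}}$ replacing the timed-release hiding step. The two-stage structure you outline—an IND-CPA hybrid to decouple $\qA_1$'s view from $(B,Q,r,y)$, followed by the information-theoretic CSS-code privacy lemma that produces the stated parameter condition on $tp/(p+q)-4(k_1-k_2)\ln 2$—is exactly Unruh's structure.
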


Its proof is similar to that of \cite[Theorem~3]{JACM:Unruh15}.

Note that the plaintext space is of constant size in our construction. However,
via the standard hybrid argument 
, we can extend it to the polynomial size.

\subsection{PKE Scheme with QROM}\label{sec:const_pke_rom}
In this section, we construct a certified everlasting secure PKE scheme with QROM.
Our construction is similar to that of the certified everlasting commitment scheme in \cite{C:HMNY22}.
The difference is that we use PKE instead of commitment.

\paragraph{Our certified everlasting secure PKE scheme.}
We construct a certified everlasting secure PKE scheme $\Sigma_{\mathsf{cepk}}=(\keygen,\qEnc,\qDec,\allowbreak\qDelete,\Vrfy)$
from a one-time SKE with certified deletion scheme $\Sigma_{\skcd}=\SKE.(\keygen,\qEnc,\qDec,\allowbreak \qDelete,\Vrfy)$~(\cref{def:sk_cert_del}),
a PKE scheme $\Sigma_{\mathsf{pk}}=\PKE.(\keygen,\Enc,\Dec)$ with plaintext space $\bit^\secp$~(\cref{def:pke})
and a hash function $H$ modeled as quantum random oracle.

\begin{description}
    \item[$\keygen(1^{\secp})$:] $ $
    \begin{itemize}
        \item Generate $(\pke.\pk,\pke.\sk)\la\keygen(1^\secp)$.
        \item Output $\pk\seteq \pke.\pk$ and $\sk\seteq\pke.\sk$.
    \end{itemize}
    \item[$\qEnc(\pk,m)$:] $ $
    \begin{itemize}
        \item Parse $\pk=\pke.\pk$.
        \item Generate $\ske.\sk\la\SKE.\keygen(1^\secp)$.
        \item Randomly generate $R\la\{0,1\}^\secp$.
        \item Compute $\pke.\ct\la\PKE.\Enc(\pke.\pk,R)$.
        \item Compute $h\seteq H(R)\oplus \ske.\sk$ and $\ske.\qct\la\SKE.\qEnc(\ske.\sk,m)$.
        \item Output $\qct\seteq (h,\ske.\qct,\pke.\ct)$ and $\vk\seteq \ske.\sk$.
    \end{itemize}
    \item[$\Dec(\sk,\qct)$:] $ $
    \begin{itemize}
        \item Parse $\sk=\pke.\sk$ and $\qct= (h,\ske.\qct,\pke.\ct)$.
        \item Compute $R'\la\PKE.\Dec(\pke.\sk,\pke.\ct)$.
        \item Compute $\ske.\sk'\seteq h\oplus H(R')$.
        \item Compute $m'\la\SKE.\qDec(\ske.\sk',\ske.\qct)$.
        \item Output $m'$.
    \end{itemize}
    \item[$\Delete(\qct)$:] $ $
    \begin{itemize}
        \item Parse $\qct=(h,\ske.\qct,\pke.\ct)$.
        \item Compute $\ske.\cert\la\SKE.\qDelete(\ske.\qct)$.
        \item Output $\cert\seteq\ske.\cert$.
    \end{itemize}
    \item[$\Vrfy(\vk,\cert)$:] $ $
    \begin{itemize}
        \item Parse $\vk=\ske.\sk$ and $\cert=\ske.\cert$.
        \item Compute $b\la\SKE.\Vrfy(\ske.\sk,\ske.\cert)$.
        \item Output $b$.
    \end{itemize}
\end{description}

\paragraph{Correctness:}
Correctness easily follows from those of
$\Sigma_{\mathsf{pk}}$ and $\Sigma_{\skcd}$.
Verifcation correctness with QOTP holds due to that of $\Sigma_{\skcd}$.

\paragraph{Security:}
The following two theorems hold.
Their proofs are similar to those of \cref{thm:sk_comp_security,thm:sk_ever_security}, and therefore we omit them.
\begin{theorem}\label{thm:pk_comp_seucirty}
If $\Sigma_{\mathsf{pk}}$ satisfies the OW-CPA security~(\cref{def:OW-CPA_security_pke}) and $\Sigma_{\skcd}$ satisfies the OT-CD security~(\cref{def:security_sk_cert_del}), 
$\Sigma_{\mathsf{cepk}}$ is IND-CPA secure~(\cref{def:IND-CPA_security_cert_ever_pke}).
\end{theorem}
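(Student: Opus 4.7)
The plan is to prove IND-CPA security of $\Sigma_{\mathsf{cepk}}$ via a standard four-game hybrid argument that mirrors the SKE analogue \cref{thm:sk_comp_security}. Let $H_0$ be the real IND-CPA experiment of \cref{def:IND-CPA_security_cert_ever_pke} with challenge bit $b=0$, so the adversary sees $\qct = (h, \ske.\qct, \pke.\ct)$ generated honestly from $m_0$. In $H_1$, the challenger replaces $h = H(R) \oplus \ske.\sk$ by a uniformly random string $h \gets \bit^\secp$; here $\ske.\sk$ becomes information-theoretically unused by $h$. In $H_2$, the inner SKE ciphertext is switched to $\ske.\qct \gets \SKE.\qEnc(\ske.\sk, m_1)$. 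Finally, $H_3$ reverses the change from $H_0$ to $H_1$ and is exactly the IND-CPA experiment with $b=1$.

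For $H_0 \approx_c H_1$, I would apply the one-way to hiding lemma (\cref{lem:one-way_to_hiding}) with $S = \{R\}$, taking $G$ equal to $H$ everywhere except at $R$ where $G$ is reprogrammed so that $G(R) \oplus \ske.\sk$ reproduces the uniform $h$ used in $H_1$. Any QPT distinguisher $\qA$ between the $H$-world and the $G$-world yields, via the lemma, an extractor $\cB$ that outputs $R$ with non-negligible probability when simulating $\qA$ with oracle $H$ and measuring a uniformly chosen query. A OW-CPA attacker on $\Sigma_{\mathsf{pk}}$ (\cref{def:OW-CPA_security_pke}) with $\ell = 1$ then takes its challenge ciphertext $\pke.\ct^\ast$ (an encryption of a uniform $R$ it does not know), simulates the rest of the environment using its own fresh $\ske.\sk$, uniform $h$, and honestly computed $\ske.\qct$, runs $\cB$, and outputs the measured query as its guess for $R$. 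By OW-CPA hardness, this probability is negligible, so $H_0$ and $H_1$ are computationally close; a symmetric reduction gives $H_2 \approx_c H_3$.

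For $H_1 \approx_c H_2$, the key observation is that in both games $h$ is independent of $\ske.\sk$, so $\ske.\sk$ appears in the adversary's view only through $\ske.\qct$. A distinguisher is converted into an OT-CD attacker against $\Sigma_{\skcd}$ (\cref{def:security_sk_cert_del}) that forwards $(m_0, m_1)$ as its challenge, embeds the returned $\ske.\qct^\ast$ into the simulated $\qct$, submits an arbitrary string as its certificate (it does not matter whether verification passes, since the inner SKE's IND-CPA-style advantage is captured regardless), and outputs the distinguisher's guess. Composing the three indistinguishabilities via the triangle inequality yields IND-CPA security.

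The main obstacle will be the bookkeeping for the O2H reduction, in particular ensuring that the OW-CPA attacker can simulate $\qA$'s quantum random oracle access to $H$ without knowing $R$; this is routine using the compressed-oracle technique of Zhandry, which lets the reduction lazily answer superposition queries while still being able to measure a random query as required by \cref{lem:one-way_to_hiding}. Everything else reduces to computations already carried out in \cref{thm:sk_comp_security,thm:sk_ever_security}.
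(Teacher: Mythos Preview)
Your proposal is correct and follows the same hybrid structure the paper intends: the paper omits this proof, pointing to the analogous SKE result (\cref{thm:sk_comp_security,thm:sk_ever_security}) and ultimately to \cite[Theorem~5.8]{C:HMNY22}, all of which use exactly the O2H-plus-OW-CPA step to randomize $h$ and then invoke the one-time security of $\Sigma_{\skcd}$ to swap the message. Your handling of the $H_1\approx H_2$ step via an OT-CD reduction that submits a dummy certificate is valid since the OT-CD experiment outputs the adversary's bit unconditionally, so OT-CD security in particular bounds the advantage of an adversary who never receives the key.
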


\begin{theorem}\label{thm:pk_ever_security}
If $\Sigma_{\mathsf{pk}}$ satisfies the OW-CPA security~(\cref{def:OW-CPA_security_pke}) and $\Sigma_{\skcd}$ satisfies the OT-CD security~(\cref{def:security_sk_cert_del}),
$\Sigma_{\mathsf{cepk}}$ is certified everlasting IND-CPA secure~(\cref{def:cert_ever_security_cert_ever_pke}).
\end{theorem}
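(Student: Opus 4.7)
The plan is to reduce certified everlasting IND-CPA security of $\Sigma_{\mathsf{cepk}}$ to (i) OW-CPA security of the underlying PKE $\Sigma_{\mathsf{pk}}$ and (ii) OT-CD security of the one-time SKE with certified deletion $\Sigma_{\skcd}$, via a sequence of hybrid experiments, relying crucially on the fact that the deletion certificate is produced by the QPT adversary $\qA_1$ (so computational assumptions on $\Sigma_{\mathsf{pk}}$ apply at that moment), while after the unbounded $\qA_2$ takes over we only invoke the unconditional OT-CD security of $\Sigma_{\skcd}$.

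First I would define Hybrid 0 as $\expd{\Sigma_{\mathsf{cepk}},\qA}{cert}{ever}{ind}{cpa}(\secp,0)$ and Hybrid 3 as the $b=1$ variant. The key intermediate step is Hybrid 1, in which the challenger samples $h$ uniformly at random from $\{0,1\}^\secp$ (independent of $\ske.\sk$ and $R$), runs $\qA_1$ with a fresh random oracle $H$, and only after the certificate is verified does it reprogram $H$ at the point $R$ to the value $h\oplus \ske.\sk$ before handing $\pke.\sk$ to the unbounded $\qA_2$. The first reduction shows that $|\Pr[\mathrm{Hyb}_0=1]-\Pr[\mathrm{Hyb}_1=1]|$ is negligible using the one-way to hiding lemma (\cref{lem:one-way_to_hiding}): the two experiments coincide as long as $\qA_1$ does not query $H$ at $R$, and a semi-classical measurement of one of $\qA_1$'s queries produces $R$ from $\pke.\ct = \PKE.\Enc(\pke.\pk, R)$, which would contradict OW-CPA security of $\Sigma_{\mathsf{pk}}$. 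Note that $\qA_2$'s unbounded queries to $H$ are harmless because, by construction in Hybrid 1, the value $H(R)=h\oplus\ske.\sk$ it recovers is exactly consistent with what the scheme would give, so $\qA_2$'s view across the two hybrids differs only by $\qA_1$'s potential pre-programming queries.

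Next I would introduce Hybrid 2 identical to Hybrid 1 except that $\ske.\qct$ encrypts $m_1$ instead of $m_0$, and reduce $|\Pr[\mathrm{Hyb}_1=1]-\Pr[\mathrm{Hyb}_2=1]|$ to the OT-CD security of $\Sigma_{\skcd}$. The reduction $\qB$ plays the OT-CD game externally: it forwards the pair $(m_0,m_1)$ to its challenger, receives $\ske.\qct$, embeds it inside a ciphertext with a freshly sampled uniform $h$ and an honestly generated $\pke.\ct=\PKE.\Enc(\pke.\pk,R)$, simulates the random oracle $H$ (e.g.\ via the compressed oracle) for $\qA_1$, forwards the deletion certificate $\cert$ produced by $\qA_1$ to its own challenger, and upon receiving $\ske.\sk$ (when verification passes) programs $H(R)\seteq h\oplus\ske.\sk$ before invoking the unbounded $\qA_2$ with $\pke.\sk$; because Hybrid 1 made $h$ information-theoretically independent of $\ske.\sk$ in $\qA_1$'s view, this simulation is perfect, so any non-negligible gap between Hybrid 1 and Hybrid 2 directly breaks OT-CD security. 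Finally, the symmetric argument (Claim 1 in reverse) collapses Hybrid 2 into the $b=1$ experiment, and the triangle inequality concludes the proof.

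The main obstacle I anticipate is carefully handling the interplay between random oracle programming and the unbounded adversary $\qA_2$: we need the programming to be well-defined and consistent in a setting where $\qA_2$ may make unboundedly many (possibly superposition) queries to $H$. The resolution is that the programming happens at a single point $R$ and is performed \emph{before} $\qA_2$ begins; the one-way to hiding lemma, applied only to $\qA_1$'s polynomially many queries, bounds the adversarial distinguishing advantage, and $\qA_2$'s view in Hybrid 1 is then identically distributed to its view in Hybrid 0 conditioned on the programmed value---hence no assumption on $\qA_2$'s query complexity is required. A secondary technical point is that the extractor invoked in the reduction to OW-CPA must output a classical $R$ from a quantum query register; this is handled by the standard measure-a-random-query step in \cref{lem:one-way_to_hiding}.
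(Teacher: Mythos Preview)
Your proposal is correct and follows essentially the same approach as the paper's (which is deferred to \cite[Theorem~5.8]{C:HMNY22}): use one-way to hiding together with OW-CPA security of $\Sigma_{\mathsf{pk}}$ to decouple $h$ from $\ske.\sk$ in $\qA_1$'s view via reprogramming at $R$, then invoke the unconditional OT-CD security of $\Sigma_{\skcd}$ to switch the encrypted message, and finally undo the reprogramming. Your treatment of the unbounded $\qA_2$---reprogramming $H(R)\seteq h\oplus\ske.\sk$ \emph{after} $\qA_1$'s polynomially many queries so that $\qA_2$'s oracle is identical across the two hybrids and only $\qA_1$'s queries matter for the one-way to hiding bound---is exactly the right way to make the argument go through.
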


\subsection{PKE Scheme without QROM}\label{sec:const_pke_wo_rom}
In this section, we construct a certified everlasting secure PKE scheme 
without QROM.
Our construction is similar to that of quantum timed-release encryption presented in \cite{JACM:Unruh15}.
The difference is that we use PKE instead of timed-release encryption.

\paragraph{Our certified everlasting secure PKE scheme without QROM.}
Let $k_1$ and $k_2$ be some constant such that $k_1>k_2$.
Let $p$, $q$, $r$, $s$, $t$ and $u$ be some polynomials.
Let $(C_1,C_2)$ be a CSS code with parameters $q,k_1,k_2,t$.
We construct a certified everlasting secure PKE scheme
$\Sigma_{\mathsf{cepk}}=(\keygen,\qEnc,\qDec,\qDelete,\Vrfy)$, with plaintext space $\Ms=C_1/C_2$ (isomorphic $\bit^{(k_1-k_2)}$), 
ciphertext space $\Cs=\cQ^{\otimes \left(p(\lambda)+q(\lambda)\right)}\times \{0,1\}^{r(\lambda)}\times \bit^{q(\lambda)}/C_1\times C_1/C_2$,
public key space $\mathcal{PK}=\bit^{u(\lambda)}$,
secret key space $\mathcal{SK}= \bit^{s(\lambda)}$,
verification key space $\mathcal{VK}=\{0,1\}^{p(\lambda)}\times [p(\lambda)+q(\lambda)]_{p(\lambda)}\times \bit^{p(\lambda)}$ 
and deletion certificate space $\mathcal{D}=\mathcal{Q}^{\otimes \left(p(\lambda)+q(\lambda)\right)}$
from the following primitive.
\begin{itemize}
    \item A PKE scheme~(\cref{def:pke}) $\Sigma_{\mathsf{pk}}=\PKE.(\keygen,\Enc,\Dec)$ with
    plaintext space $\Ms=\{0,1\}^{p(\lambda)}\times[p(\lambda)+q(\lambda)]_{p(\lambda)}\times \bit^{p(\lambda)} \times C_1/C_2$,
    public key space $\mathcal{PK}=\bit^{u(\lambda)}$,
    secret key space $\mathcal{SK}=\{0,1\}^{s(\secp)}$ and ciphertext space $\Cs=\bit^{r(\lambda)}$.
\end{itemize}
The construction is as follows. (We will omit the security parameter below.)
\begin{description}
\item[$\keygen(1^\secp)$:]$ $
\begin{itemize}
    \item Generate $(\pke.\pk,\pke.\sk)\la\PKE.\keygen(1^\secp)$.
    \item Output $\pk\seteq\pke.\pk$ and $\sk\seteq\pke.\sk$.
\end{itemize}
\item[$\qEnc(\pk,m)$:] $ $
\begin{itemize}
    \item Parse $\pk=\pke.\pk$.
    \item Generate $B\la\bit^{p}$, $Q\la[p+q]_{p}$, $y\la C_1/C_2$, $u\la\bit^q/C_1$, $r\la\bit^p$,
    $x\la C_1/C_2$, $w\la C_2$.
    \item Compute $\pke.\ct\la\PKE.\Enc\left(\pke.\pk,(B,Q,r,y)\right)$.
    \item Let $U_Q$ be the unitary that permutes the qubits in $Q$ into the first half of the system.
    (I.e., $U_Q\ket{x_1x_2\cdots x_{p+q}}=\ket{x_{a_1}x_{a_2}\cdots x_{a_p}x_{b_1}x_{b_2}\cdots x_{b_q}}$ with $Q\seteq \{a_1,a_2,\cdots, a_p\}$ and $\{1,2,\cdots, p+q\}\setminus Q\seteq \{b_1,b_2,\cdots,b_q\}$.)
    \item Generate a quantum state $\ket{\Psi} \seteq U_{Q}^{\dagger}(H^B\otimes I^{\otimes q})(\ket{r}\otimes \ket{x\oplus w\oplus u})$.
    \item Compute $h\seteq m\oplus x\oplus y$.
    \item Output $\qct\seteq (\ket{\Psi},\pke.\ct,u,h)$ and $\vk\seteq (B,Q,r)$.
\end{itemize}
\item[$\qDec(\sk,\qct)$:] $ $
\begin{itemize}
    \item Parse $\sk=\pke.\sk$ and $\qct=(\ket{\Psi},\pke.\ct,u,h)$.
    \item Compute $(B,Q,r,y)\la\PKE.\Dec(\pke.\sk,\pke.\ct)$.
    \item Apply $U_Q$ to $\ket{\Psi}$, measure the last $q$-qubits in the computational basis and obtain the measurement outcome $\gamma$. 
    \item Compute $x\seteq \gamma\oplus u$ mod $C_2$.
    \item Output $m'\seteq h\oplus x\oplus y$.
\end{itemize}
\item[$\qDelete(\qct)$:] $ $
\begin{itemize}
    \item Parse $\qct=(\ket{\Psi},\pke.\ct,u,h)$.
    \item Output $\qcert\seteq\ket{\Psi}$.
\end{itemize}
\item[$\qVrfy(\vk,\qcert)$:] $ $
\begin{itemize}
    \item Parse $\vk=(B,Q,r)$ and $\qcert=\ket{\Psi}$.
    \item Apply $(H^B\otimes I^{\otimes q})U_Q$ to $\ket{\Psi}$, measure the first $p$-qubits in the computational basis and obtain the measurement outcome $r'$.
    \item Output $\top$ if $r=r'$ and output $\bot$ otherwise.
\end{itemize}
\end{description}

\paragraph{Correctness.}
Correctness easily follows from that of $\Sigma_{\pk}$.
Verifcation correctness with QOTP holds since $\Modify$ is the decryption algorithm of QOTP.

\paragraph{Security.}
The following two theorems hold.

\begin{theorem}\label{thm:pk_comp_security_wo_rom}
If $\Sigma_{\mathsf{pk}}$ is IND-CPA secure~(\cref{def:IND-CPA_pke}), then $\Sigma_{\mathsf{cepk}}$ is IND-CPA secure~(\cref{def:IND-CPA_security_cert_ever_pke}).
\end{theorem}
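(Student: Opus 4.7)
}
The plan is a straightforward hybrid argument that exploits the fact that $y \in C_1/C_2$ is used in only two places in the ciphertext: as the last component of the plaintext encrypted by $\Sigma_{\pk}$ inside $\pke.\ct$, and as a one-time pad in $h = m \oplus x \oplus y$. So once $y$ is removed from the $\Sigma_{\pk}$ plaintext, $h$ becomes a uniform element of $C_1/C_2$ independent of the challenge message, and we can freely swap $m_0$ for $m_1$.

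More precisely, I will define the following four hybrids. $\hyb_0$ is the IND-CPA experiment $\expb{\Sigma_{\mathsf{cepk}},\qA}{ind}{cpa}(\secp,0)$, in which the challenger produces $\pke.\ct \la \PKE.\Enc(\pke.\pk,(B,Q,r,y))$ and $h = m_0 \oplus x \oplus y$. In $\hyb_1$, the challenger instead produces $\pke.\ct \la \PKE.\Enc(\pke.\pk,(B,Q,r,0^{k_1 - k_2}))$, but still sets $h = m_0 \oplus x \oplus y$ using the fresh uniform $y \la C_1/C_2$. In $\hyb_2$, the challenger uses the same modified $\pke.\ct$ but sets $h = m_1 \oplus x \oplus y$. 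In $\hyb_3 = \expb{\Sigma_{\mathsf{cepk}},\qA}{ind}{cpa}(\secp,1)$, the challenger reverts to $\pke.\ct \la \PKE.\Enc(\pke.\pk,(B,Q,r,y))$ with $h = m_1 \oplus x \oplus y$.

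The hop $\hyb_0 \approx_c \hyb_1$ (and symmetrically $\hyb_2 \approx_c \hyb_3$) follows by a direct reduction to the IND-CPA security of $\Sigma_{\pk}$: a reduction $\qB$ samples $B, Q, r, y, u, x, w$ and $m$ itself, sends the challenge pair $((B,Q,r,y), (B,Q,r,0^{k_1-k_2}))$ to its PKE challenger, receives $\pke.\ct^*$, and uses it together with the locally prepared $\ket{\Psi}$, $u$, and $h = m_b \oplus x \oplus y$ to simulate the view of $\qA$ perfectly. The hop $\hyb_1 \equiv \hyb_2$ is information-theoretic: in both hybrids $y$ is used only inside $h$, and since $y \la C_1/C_2$ is uniform and independent of $(\ket{\Psi}, \pke.\ct, u, B, Q, r, x)$, the value $h = m_b \oplus x \oplus y$ is uniform over $C_1/C_2$ independently of $b$, so the two distributions over the full view of $\qA$ are identical.

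No serious obstacle is anticipated; the proof is essentially routine once one notices that the quantum part $\ket{\Psi}$ depends only on $(B, Q, r, x, w, u)$ and not on $y$ or $m$, so $y$ really does act as a pure one-time pad once it has been scrubbed from the $\Sigma_{\pk}$ plaintext. The only mild subtlety is to make sure the reduction $\qB$ can sample $\ket{\Psi}$ without knowing $(B,Q,r,y)$ from the PKE challenger; this is fine because $\qB$ itself chooses those values and simply hands the pair to its own challenger as the IND-CPA challenge. Combining the three hops via the triangle inequality yields $\advb{\Sigma_{\mathsf{cepk}},\qA}{ind}{cpa}(\secp) \le 2 \cdot \advb{\Sigma_{\pk},\qB}{ind}{cpa}(\secp) \le \negl(\secp)$, which proves the theorem.
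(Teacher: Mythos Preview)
Your hybrid argument is correct and is exactly the kind of straightforward proof the paper has in mind; the paper itself omits the proof, saying only that it is straightforward. Your key observation---that $\ket{\Psi}$ depends only on $(B,Q,r,x,w,u)$ and not on $y$ or $m$, so that $y$ acts as a one-time pad on $h$ once it is scrubbed from the $\Sigma_{\pk}$ ciphertext---is precisely the point, and the three-hop structure (two IND-CPA hops sandwiching an information-theoretic equality) is the natural way to formalize it.
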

Its proof is straightforward, and therefore we omit it.

\begin{theorem}\label{thm:pk_ever_security_wo_rom}
If $\Sigma_{\mathsf{pk}}$ is IND-CPA secure~(\cref{def:IND-CPA_pke}) and $tp/(p+q)-4(k_1-k_2){\rm ln}2$ is superlogarithmic,
then $\Sigma_{\mathsf{cepk}}$ is certified everlasting IND-CPA secure~(\cref{def:cert_ever_security_cert_ever_pke}).
\end{theorem}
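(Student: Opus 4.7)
The plan is to adapt Unruh's proof strategy for the analogous revocable quantum timed-release encryption (Theorem 3 of \cite{JACM:Unruh15}) to the certified everlasting PKE setting. The argument has two complementary parts: a computational step using IND-CPA security of $\Sigma_{\pk}$ that hides $(B, Q, r, y)$ from the QPT adversary $\qA_1$ during its interaction with $\ket{\Psi}$, and an information-theoretic step invoking a monogamy-of-entanglement (CSS-uncertainty) lemma, in which the parameter condition $tp/(p+q)-4(k_1-k_2)\ln 2=\omega(\log\secp)$ enters, to show that a valid certificate forces $\qA_1$'s residual state to contain essentially no information about $x$.

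First I would handle the subtlety that $\pke.\sk$ is revealed to the \emph{unbounded} $\qA_2$: since $\qA_2$ has unbounded power, giving $\pke.\sk$ in order to decrypt the single ciphertext $\pke.\ct$ is equivalent to providing the plaintext $(B, Q, r, y)$ as auxiliary input to $\qA_2$. I would therefore pass to an equivalent game $G'$ in which $\pke.\sk$ is no longer part of the reveal and $\qA_2$ instead receives $(B, Q, r, y)$ directly. In $G'$ the IND-CPA reduction is clean: I replace the challenge $\pke.\ct=\PKE.\Enc(\pke.\pk,(B,Q,r,y))$ by $\PKE.\Enc(\pke.\pk,0)$, keeping the real $(B,Q,r,y)$ only in the challenger's head to construct $\ket{\Psi}$ and $h$, to verify the certificate, and to hand to $\qA_2$. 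The reduction never needs $\pke.\sk$, so the hybrid transition is justified purely by the QPT adversary $\qA_1$.

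Second, in this IND-CPA-modified hybrid, $\qA_1$ has no classical access to $(B,Q,r,y)$; these parameters appear only inside the state $\ket{\Psi}=U_Q^{\dagger}(H^B\otimes I^{\otimes q})(\ket{r}\otimes\ket{x\oplus w\oplus u})$. I would then purify the challenger's randomness, model $\qA_1$'s operations as a single isometry, and apply Unruh's CSS-code uncertainty lemma to the bipartite split induced by $U_Q$: Hadamard-basis measurement of the $Q$-qubits (the verification test) versus computational-basis measurement mod $C_2$ of the complementary qubits (the value $x$). Under the stated parameter condition, the lemma yields a negligible statistical bound showing that, conditioned on acceptance, $x$ is close to uniform relative to the joint register held by $\qA_2$ together with $(B,Q,r,y)$. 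Since $h=m_b\oplus x\oplus y$, this uniform $x$ acts as a one-time pad on $m_b$ even after $y$ is revealed, so $\qA_2$'s output bit is statistically independent of $b$, giving the claim.

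The main obstacle is the second step: formalizing the CSS-uncertainty lemma so that it tolerates an arbitrarily entangled adversary register and a noisy, non-projective verification event, while tracking parameters tightly enough to reach the advertised asymptotic condition. I plan to follow Unruh's use of Schmidt-style purifications, commuting of measurements across the bipartition, and Serfling-type concentration together with smooth min-entropy estimates for CSS codes; the main verification is that the extra classical components $u$ and $h=m\oplus x\oplus y$ of our ciphertext (absent from the timed-release setting) neither leak $x$ nor interfere with the uncertainty bound, which is straightforward since $u$ is independent of $x$ and $h$ is a one-time pad conditioned on $x$.
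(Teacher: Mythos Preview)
Your overall plan to adapt Unruh's Theorem~3 matches the paper (whose proof is literally ``similar to \cite[Theorem~3]{JACM:Unruh15}''), and your handling of the extra components $u,h$ and of replacing $\sk$ by $(B,Q,r,y)$ for the unbounded $\qA_2$ is fine. But the decomposition you describe has a genuine gap. You propose first to run an IND-CPA hybrid replacing $\pke.\ct=\PKE.\Enc(\pke.\pk,(B,Q,r,y))$ by $\PKE.\Enc(\pke.\pk,0)$, writing that ``the hybrid transition is justified purely by the QPT adversary $\qA_1$,'' and only afterwards to invoke the CSS-uncertainty lemma in the resulting game. The experiment's output bit, however, is computed by the \emph{unbounded} $\qA_2$, so no QPT IND-CPA distinguisher can evaluate it. Concretely, let $\qA_1$ return $\ket{\Psi}$ unchanged as the certificate and forward $\pke.\ct$ inside its residual state; the certificate always verifies, and $\qA_2$ can break the PKE and see whether $\pke.\ct$ encrypts $(B,Q,r,y)$ or $0$, so $\Pr[\text{output}=1]$ differs by $1$ between the two hybrids even though the distinguishing advantage is zero in both---the triangle inequality through your hybrid is vacuous.

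Unruh's argument does not separate the two ingredients in this way: the computational hiding of $(B,Q,r,y)$ is consumed \emph{inside} the uncertainty-relation analysis, at a point where the reduction only needs to evaluate efficiently checkable events (acceptance of the certificate together with a fixed measurement on the returned qubits), and the statistical bound on $x$ then follows directly from that bound without ever simulating $\qA_2$. When you carry out the adaptation, trace through Unruh's proof to see exactly where the TRE/IND-CPA assumption is invoked rather than treating it as a preliminary standalone hybrid on the game's output; your identification of the ``main obstacle'' as the information-theoretic step is inverted---the delicate part is precisely how the computational step is woven into it.
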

Its proof is similar to that of \cite[Theorem~3]{JACM:Unruh15}.
Note that the plaintext space is of constant size in our construction. However,
via the standard hybrid argument, we can extend it to the polynomial size.


\section{Receiver Non-Committing Encryption with Certified Everlasting Deletion}\label{sec:RNCE}
In this section, we define and construct receiver non-committing encryption with certified everlasting deletion.
In \cref{sec:def_rnce}, we define RNCE with certified everlasting deletion.
In \cref{sec:const_rnce_classic}, we construct a certified everlasting RNCE scheme from certified everlasting secure PKE (\cref{sec:cd_ske_pke}).

\subsection{Definition}\label{sec:def_rnce}
\begin{definition}[RNCE with Certified Everlasting Deletion (Syntax)]\label{def:cert_ever_rnce_classic}
Let $\secp$ be the security parameter and let $p$, $q$, $r$, $s$, $t$, $u$, and $v$ be polynomials. 
An RNCE with certified everlasting deletion scheme is a tuple of algorithms $\Sigma=(\Setup,\keygen,\qEnc,\qDec,\qFake,\Reveal,\allowbreak \qDelete,\Vrfy)$
with plaintext space $\Ms\seteq\bit^{n}$,
ciphertext space $\Cs\seteq \cQ^{\otimes p(\lambda)}$,
public key space $\mathcal{PK}\seteq\{0,1\}^{q(\lambda)}$,
master secret key space $\mathcal{MSK}\seteq \bit^{r(\lambda)}$,
secret key space $\mathcal{SK}\seteq \bit^{s(\secp)}$,
verification key space $\mathcal{VK}\seteq \bit^{t(\secp)}$,
deletion certificate space $\mathcal{D}\seteq \cQ^{u(\lambda)}$,
and auxiliary state space $\mathcal{AUX}\seteq\bit^{v(\lambda)}$.
\begin{description}
    \item[$\Setup(1^\secp)\ra(\pk,\MSK)$:]The setup algorithm takes the security parameter $1^\secp$ as input, and outputs a public key $\pk\in\mathcal{PK}$ and a master secret key $\MSK\in\mathcal{MSK}$.
    \item[$\keygen (\MSK) \ra \sk$:]
    The key generation algorithm takes the master secret key $\MSK$ as input,
    and outputs a secret key $\sk\in\mathcal{SK}$.
    \item[$\qEnc(\pk,m) \ra (\vk,\qct)$:]
    The encryption algorithm takes $\pk$ and a plaintext $m\in\Ms$ as input,
    and outputs a verification key $\vk\in\mathcal{VK}$ and a ciphertext $\qct\in \Cs$.
    \item[$\qDec(\sk,\qct) \ra m^\prime~or~\bot$:]
    The decryption algorithm takes $\sk$ and $\ct$ as input, and outputs a plaintext $m^\prime \in \Ms$ or $\bot$. 
    \item[$\qFake(\pk)\ra (\vk,\widetilde{\qct},\aux)$:]
    The fake encryption algorithm takes $\pk$ as input, and outputs a verification key $\vk\in\mathcal{VK}$, a fake ciphertext $\widetilde{\qct}\in \Cs$ and an auxiliary state $\aux\in\mathcal{AUX}$.
    \item[$\Reveal(\pk,\MSK,\aux,m)\ra \widetilde{\sk}$:]
    The reveal algorithm takes $\pk,\MSK,\aux$ and $m$ as input, and outputs a fake secret key $\widetilde{\sk}\in\mathcal{SK}$.
    \item[$\qDelete(\qct) \ra \cert$:]
    The deletion algorithm takes $\qct$ as input and outputs a certification $\cert\in\mathcal{D}$.
    \item[$\Vrfy(\vk,\cert)\ra \top$ or $\bot$:]
    The verification algorithm takes $\vk$ and $\cert$ as input, and outputs $\top$ or $\bot$.
    \end{description}
\end{definition}

We require that an RNCE with certified everlasting deletion scheme satisfies correctness defined below.
\begin{definition}[Correctness for RNCE with Certified Everlasting Deletion]\label{def:correctness_cert_ever_rnce_classic}
There are two types of correctness, namely, decryption correctness and verification correctness.

\paragraph{Decryption Correctness:}
There exists a negligible function $\negl$ such that for any $\secp\in \N$ and $m\in\Ms$,
\begin{align}
\Pr\left[
m'\neq m
\ \middle |
\begin{array}{ll}
(\pk,\MSK)\la\Setup(1^\secp)\\
(\vk,\qct) \lrun \qEnc(\pk,m)\\
\sk\lrun\keygen(\MSK)\\
m'\la\qDec(\sk,\qct)
\end{array}
\right] 
\leq \negl(\lambda).
\end{align}

\paragraph{Verification Correctness:}
There exists a negligible function $\negl$ such that for any $\secp\in \N$ and $m\in\Ms$,
\begin{align}
\Pr\left[
\Vrfy(\vk,\cert)=\bot
\ \middle |
\begin{array}{ll}
(\pk,\MSK)\la\Setup(1^\secp)\\
(\vk,\qct) \lrun \qEnc(\pk,m)\\
\cert \lrun \qDelete(\qct)
\end{array}
\right] 
\leq
\negl(\secp).
\end{align}


\end{definition}

As security, we consider two definitions, 
\cref{def:rec_nc_security_classic} 
and
\cref{def:cert_ever_rec_nc_security_classic} given below.
The former is just the standard receiver non-committing security 
and the latter is the certified everlasting security that we newly define in this paper.
Roughly, the everlasting security guarantees that any QPT adversary cannot distinguish whether the ciphertext and the secret key are properly generated or not even if it becomes computationally unbounded and obtains the master secret key after it issues a valid certificate.

\begin{definition}[Receiver Non-Committing Security for RNCE with Certified Everlasting Deletion]\label{def:rec_nc_security_classic}
Let $\Sigma=(\Setup,\keygen,\allowbreak \qEnc,\qDec,\qFake,\Reveal,\qDelete,\Vrfy)$ be an RNCE with certified everlasting deletion scheme.
We consider the following security experiment $\expb{\Sigma,\qA}{rec}{nc}(\secp,b)$ against a QPT adversary $\qA$.
\begin{enumerate}
    \item The challenger runs $(\pk,\MSK)\la\Setup(1^\secp)$ and sends $\pk$ to $\qA$.
    \item $\qA$ sends $m\in\Ms$ to the challenger.
    \item The challenger does the following:
    \begin{itemize}
        \item If $b=0$, the challenger generates $(\vk,\qct)\la\qEnc(\pk,m)$ and $\sk\la\keygen(\MSK)$, and sends $(\qct,\sk)$ to $\qA$.
        \item If $b=1$, the challenger generates $(\vk,\widetilde{\qct},\aux)\la \qFake(\pk)$ and $\widetilde{\sk}\la \Reveal(\pk,\MSK,\aux,m)$, and sends $(\widetilde{\qct},\widetilde{\sk})$ to $\qA$.
    \end{itemize}
    \item $\qA$ outputs $b'\in\bit$.
\end{enumerate}
We say that $\Sigma$ is receiver non-committing (RNC) secure if, for any QPT $\qA$, it holds that
\begin{align}
    \advb{\Sigma,\qA}{rec}{nc}(\secp) \seteq \abs{\Pr[\expb{\Sigma,\qA}{rec}{nc}(\secp,0)=1]  - \Pr[\expb{\Sigma,\qA}{rec}{nc}(\secp,1)=1]} \leq \negl(\secp).
\end{align}
\end{definition}

\begin{definition}[Certified Everlasting RNC Security for RNCE]\label{def:cert_ever_rec_nc_security_classic}
Let $\Sigma=(\Setup,\keygen,\qEnc,\allowbreak \qDec,\qFake, \Reveal,\allowbreak  \qDelete,\Vrfy)$ be a certified everlasting RNCE scheme.
We consider the following security experiment $\expd{\Sigma,\qA}{cert}{ever}{rec}{nc}(\secp,b)$ against a QPT adversary $\qA_1$ and an unbounded adversary $\qA_2$.
\begin{enumerate}
    \item The challenger runs $(\pk,\MSK)\la\Setup(1^\lambda)$ and sends $\pk$ to $\qA_1$.
    \item $\qA_1$ sends $m\in\Ms$ to the challenger.
    \item The challenger does the following:
    \begin{itemize}
        \item If $b=0$, the challenger generates $(\vk,\qct)\la\qEnc(\pk,m)$ and $\sk\la\keygen(\MSK)$, and sends $(\qct,\sk)$ to $\qA_1$.
        \item If $b=1$, the challenger generates $(\vk,\widetilde{\qct},\aux)\la \qFake(\pk)$ and $\widetilde{\sk}\la\Reveal(\pk,\MSK,\aux,m)$,
        and sends $(\widetilde{\qct},\widetilde{\sk})$ to $\qA_1$.
    \end{itemize}
    \item At some point, $\qA_1$ sends $\cert$ to the challenger and its internal state to $\qA_2$.
    \item The challenger computes $\Vrfy(\vk,\cert)$.
    If the output is $\top$, the challenger outputs $\top$ and sends $\MSK$ to $\qA_2$. 
    If the output is $\bot$, the challenger outputs $\bot$ and sends $\bot$ to $\qA_2$.
    \item $\qA_2$ outputs $b'\in\bit$.
    \item If the challenger outputs $\top$, then the output of the experiment is $b'$.
    Otherwise, the output of the experiment is $\bot$.
\end{enumerate}
We say that $\Sigma$ is certified everlasting RNC secure if for any QPT $\qA_1$ and any unbounded $\qA_2$,
it holds that
\begin{align}
    \advd{\Sigma,\qA}{cert}{ever}{rec}{nc}(\secp) \seteq \abs{\Pr[\expd{\Sigma,\qA}{cert}{ever}{rec}{nc}(\secp,0)=1]  - \Pr[\expd{\Sigma,\qA}{cert}{ever}{rec}{nc}(\secp,1)=1]} \leq \negl(\secp).
\end{align}
\end{definition}


\subsection{Construction}\label{sec:const_rnce_classic}
In this section, we construct a certified everlasting RNCE scheme from a certified everlasting PKE scheme~(\cref{def:cert_ever_pke}).
Our construction is similar to that of the secret-key RNCE scheme presented in \cite{C:KNTY19}.
The difference is that we use a certified everlasting secure PKE scheme instead of a standard SKE scheme.

\paragraph{Our certified everlasting secure RNCE scheme.} 
We construct a certified everlasting secure RNCE scheme
$\Sigma_{\mathsf{cence}}=(\Setup,\keygen,\allowbreak \qEnc,\qDec, \qFake,\Reveal,\qDelete,\Vrfy)$
from a certified everlasting secure PKE scheme $\Sigma_{\mathsf{cepk}}=\PKE.(\keygen,\qEnc,\qDec,\qDelete,\Vrfy)$, which was introduced in~\cref{def:cert_ever_pke}.
    
\begin{description}
    \item[$\Setup(1^\secp)$:]$ $
    \begin{itemize}
    \item Generate $(\pke.\pk_{i,\alpha},\pke.\sk_{i,\alpha})\la\PKE.\keygen(1^\secp)$ for all $i\in[n]$ and $\alpha\in\bit$.
    \item Output $\pk\seteq\{\pke.\pk_{i,\alpha}\}_{i\in[n],\alpha\in\bit}$ and $\MSK\seteq\{\pke.\sk_{i,\alpha}\}_{i\in[n],\alpha\in\bit}$.
    \end{itemize}
    \item[$\keygen(\MSK)$:]$ $
    \begin{itemize}
        \item Parse $\MSK=\{\pke.\sk_{i,\alpha}\}_{i\in[n],\alpha\in\bit}$.
        \item Generate $x\la \bit^n$.
        \item Output $\sk\seteq(x,\{\pke.\sk_{i,x[i]}\}_{i\in[n]})$.
    \end{itemize}
    \item[$\qEnc(\pk,m)$:]$ $
    \begin{itemize}
        \item Parse $\pk= \{\pke.\pk_{i,\alpha}\}_{i\in[n],\alpha\in\bit}$.
        \item Compute $(\pke.\vk_{i,\alpha},\pke.\qct_{i,\alpha})\la\PKE.\qEnc(\pke.\pk_{i,\alpha},m[i])$ for all $i\in[n]$ and $\alpha\in\bit$.
        \item Output $\vk\seteq \{\pke.\vk_{i,\alpha}\}_{i\in[n],\alpha\in\bit}$ and $\qct\seteq\{\pke.\qct_{i,\alpha}\}_{i\in[n],\alpha\in\bit}$.
    \end{itemize}
    \item[$\qDec(\sk,\qct)$:]$ $
    \begin{itemize}
        \item Parse $\sk= (x,\{\pke.\sk_{i}\}_{i\in[n]})$ and $\qct=\{\pke.\qct_{i,\alpha}\}_{i\in[n],\alpha\in\bit}$.
        \item Compute $m[i]\la\PKE.\qDec(\pke.\sk_{i},\pke.\qct_{i,x[i]})$ for all $i\in[n]$.
        \item Output $m\seteq m[1]||m[2]||\cdots|| m[n]$.
    \end{itemize}
    \item[$\qFake(\pk)$:]$ $
    \begin{itemize}
        \item Parse $\pk= \{\pke.\pk_{i,\alpha}\}_{i\in[n],\alpha\in\bit}$.
        \item Generate $x^*\la\bit^n$.
        \item Compute $(\pke.\vk_{i,x^*[i]},\pke.\qct_{i,x^*[i]})\la \PKE.\qEnc(\pke.\pk_{i,x^*[i]},0)$ and  $(\pke.\vk_{i,x^*[i]\oplus 1},\allowbreak\pke.\qct_{i,x^*[i]\oplus 1})\la \PKE.\qEnc(\pke.\pk_{i,x^*[i]\oplus 1},1)$ for all $i\in[n]$.
        \item Output $\vk\seteq \{\pke.\vk_{i,\alpha}\}_{i\in[n],\alpha\in\bit}$,
        $\widetilde{\qct}\seteq \{\pke.\qct_{i,\alpha}\}_{i\in[n],\alpha\in\bit}$ and $\aux=x^*$.
    \end{itemize}
    \item[$\Reveal(\pk,\MSK,\aux,m)$:]$ $
    \begin{itemize}
        \item Parse $\pk= \{\pke.\pk_{i,\alpha}\}_{i\in[n],\alpha\in\bit}$, $\MSK= \{\pke.\sk_{i,\alpha}\}_{i\in[n],\alpha\in\bit}$ and $\aux=x^*$.
        \item Output $\widetilde{\sk}\seteq \left(x^*\oplus m,\{\pke.\sk_{i,x^*[i]\oplus m[i]}\}_{i\in[n]}\right)$. 
    \end{itemize}
    \item[$\qDelete(\qct)$:]$ $
    \begin{itemize}
        \item Parse $\qct=\{\pke.\qct_{i,\alpha}\}_{i\in[n],\alpha\in\bit}$.
        \item Compute $\pke.\cert_{i,\alpha}\la\PKE.\qDelete(\pke.\qct_{i,\alpha})$ for all $i\in[n]$ and $\alpha\in\bit$.
        \item Output $\cert\seteq\{\pke.\cert_{i,\alpha}\}_{i\in[n],\alpha\in\bit}$.
    \end{itemize}
    \item[$\Vrfy(\vk,\cert)$:]$ $
    \begin{itemize}
        \item Parse $\vk=\{\pke.\vk_{i,\alpha}\}_{i\in[n],\alpha\in\bit}$ and $\cert=\{\pke.\cert_{i,\alpha}\}_{i\in[n],\alpha\in\bit}$.
        \item Compute $\top/\bot\la\PKE.\Vrfy(\pke.\vk_{i,\alpha},\pke.\cert_{i,\alpha})$ for all $i\in[n]$ and $\alpha\in\bit$.
        If all results are $\top$, $\Vrfy(\vk,\cert)$ outputs $\top$. Otherwise, it outputs $\bot$. 
    \end{itemize}
\end{description}

\paragraph{Correctness:}
Correctness easily follows from that of $\Sigma_{\mathsf{cepk}}$.

\paragraph{Security:}
The following two theorems hold.
\begin{theorem}\label{thm:comp_security_rnce_classic}
If $\Sigma_{\mathsf{cepk}}$ is IND-CPA secure~(\cref{def:IND-CPA_security_cert_ever_pke}), $\Sigma_{\mathsf{cence}}$ is RNC secure~(\cref{def:rec_nc_security_classic}).
\end{theorem}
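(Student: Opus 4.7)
The strategy is a hybrid argument over the $n$ coordinates of the message, reducing each single-coordinate change to the IND-CPA security of $\Sigma_{\mathsf{cepk}}$. First observe that, conditioned on the randomness, the real and fake experiments differ only in the plaintexts placed at the ``unrevealed'' positions: introduce $y \gets \bit^n$ playing the role of $x$ in $\expb{\Sigma_{\mathsf{cence}},\qA}{rec}{nc}(\secp,0)$ and the role of $x^* \oplus m$ in $\expb{\Sigma_{\mathsf{cence}},\qA}{rec}{nc}(\secp,1)$ (both are uniform, so the revealed index distribution is identical in the two experiments, and in both the ciphertext at the revealed index $y[i]$ encrypts $m[i]$). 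The only genuine difference is at the unrevealed index $y[i]\oplus 1$: in the real experiment it encrypts $m[i]$, while in the fake experiment it encrypts $1\oplus m[i]$.

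Define hybrids $\Hyb_0,\Hyb_1,\ldots,\Hyb_n$, where $\Hyb_i$ samples $y\gets\bit^n$ and, for each coordinate $j\in[n]$, encrypts $m[j]$ at position $y[j]$; at the unrevealed position $y[j]\oplus 1$ it encrypts $1\oplus m[j]$ if $j\le i$ and $m[j]$ otherwise. The revealed secret key is set to $(y,\{\pke.\sk_{j,y[j]}\}_{j\in[n]})$. Clearly $\Hyb_0$ is identical to the real experiment, and $\Hyb_n$ is identical to the fake experiment (taking $x^*=y\oplus m$). Thus it suffices to show $|\Pr[\Hyb_{i-1}=1]-\Pr[\Hyb_i=1]|\le\negl(\secp)$ for every $i\in[n]$.

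For each such $i$, we build a reduction $\qB$ breaking the IND-CPA security of $\Sigma_{\mathsf{cepk}}$ if $\qA$ distinguishes $\Hyb_{i-1}$ from $\Hyb_i$. On input $\pke.\pk^*$ from its IND-CPA challenger, $\qB$ samples $y\gets\bit^n$, generates the remaining $2n-1$ key pairs $(\pke.\pk_{j,\alpha},\pke.\sk_{j,\alpha})$ honestly, and embeds $\pke.\pk^*$ as $\pke.\pk_{i,y[i]\oplus 1}$. It sends $\pk$ to $\qA$, receives $m$, encrypts the first $i-1$ and last $n-i$ coordinates according to the $\Hyb_i$/$\Hyb_{i-1}$ template using the keys it knows, encrypts $m[i]$ at position $y[i]$ under $\pke.\pk_{i,y[i]}$, and at position $y[i]\oplus 1$ forwards the challenge pair $(m[i],\,1\oplus m[i])$ to the IND-CPA challenger and uses the returned ciphertext. $\qB$ then outputs $(\qct,\sk=(y,\{\pke.\sk_{j,y[j]}\}))$; crucially the revealed key at coordinate $i$ is $\pke.\sk_{i,y[i]}$, which $\qB$ knows, and the unknown $\pke.\sk_{i,y[i]\oplus 1}$ is never needed. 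Finally $\qB$ outputs $\qA$'s guess. When the challenger's bit encodes $m[i]$, $\qB$ perfectly simulates $\Hyb_{i-1}$, and when it encodes $1\oplus m[i]$, it simulates $\Hyb_i$, giving the required bound.

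The argument is essentially routine; the only point that needs mild care is the consistency between $y$ and $x^*$, i.e.\ verifying that sampling $y$ uniformly and then defining $x^*\seteq y\oplus m$ in the fake branch produces exactly the distribution of $\qFake$ composed with $\Reveal(\pk,\MSK,\aux,m)$. I do not anticipate a conceptual obstacle, because the intuition (``the RNCE is just $n$ parallel PKE instances where only one of each pair of keys is leaked'') is standard and the IND-CPA security of $\Sigma_{\mathsf{cepk}}$ is directly aligned with what each hybrid step requires.
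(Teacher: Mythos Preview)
Your proposal is correct and follows essentially the same approach as the paper: the paper defers to the proof of \cref{thm:ever_security_rnce_classic}, which also identifies the sole difference between the real and fake views as the plaintexts at the unrevealed positions $y[i]\oplus 1$ and switches them coordinate-by-coordinate via IND-CPA. The only organizational difference is that the paper packages the $n$-step hybrid into a standalone multi-instance lemma (\cref{lem:cut_and_choose_pke}), where the adversary chooses the index set $f$ and the reduction must guess a bit $\beta$ (incurring a factor-$2$ loss per step); since in the RNC game the challenger, not the adversary, samples $y$, your direct reduction that fixes $y$ up front and plants $\pke.\pk^*$ at $(i,y[i]\oplus 1)$ is tighter and perfectly fine.
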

Its proof is similar to that of \cref{thm:ever_security_rnce_classic}, and therefore we omit it.

\begin{theorem}\label{thm:ever_security_rnce_classic}
If $\Sigma_{\mathsf{cepk}}$ is certified everlasting IND-CPA secure~(\cref{def:cert_ever_security_cert_ever_pke}), $\Sigma_{\mathsf{cence}}$ is certified everlasting RNC secure~(\cref{def:cert_ever_rec_nc_security_classic}).
\end{theorem}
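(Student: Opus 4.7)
}
The plan is to reduce everlasting RNC security of $\Sigma_{\mathsf{cence}}$ to everlasting IND-CPA security of $\Sigma_{\mathsf{cepk}}$ via a standard hybrid over the $n$ message positions. The key structural observation is that, after a suitable relabelling, the real and fake experiments produce indistinguishable views provided one can flip the ``non-observable'' ciphertexts. Concretely, in the real experiment let $x$ be the bitstring used by $\keygen$; then the adversary is handed $\pke.\sk_{i,x[i]}$ and, for every $i$, the ciphertext $\pke.\qct_{i,x[i]}$ encrypts $m[i]$ and $\pke.\qct_{i,1\oplus x[i]}$ also encrypts $m[i]$. In the fake experiment, writing $y\seteq x^\ast\oplus m$ (which is uniform over $\bit^n$ like $x$), the adversary is given $\pke.\sk_{i,y[i]}$, and a direct case analysis on $m[i]$ shows that $\pke.\qct_{i,y[i]}$ encrypts $m[i]$ while $\pke.\qct_{i,1\oplus y[i]}$ encrypts $1\oplus m[i]$. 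Thus the real and fake experiments differ \emph{only} in the plaintexts of the $n$ ciphertexts whose decryption keys are withheld from $\qA_1$.

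This suggests defining hybrids $\hyb_{j}$, $j=0,\ldots,n$, where $\hyb_j$ samples a uniform $y\in\bit^n$, sets $\sk=(y,\{\pke.\sk_{i,y[i]}\}_i)$, and encrypts $m[i]$ under $\pke.\pk_{i,y[i]}$ for every $i$, while under $\pke.\pk_{i,1\oplus y[i]}$ it encrypts $1\oplus m[i]$ for $i\le j$ and encrypts $m[i]$ for $i>j$. By the above, $\hyb_0$ equals $\expd{\Sigma_{\mathsf{cence}},\qA}{cert}{ever}{rec}{nc}(\secp,0)$ and $\hyb_n$ equals $\expd{\Sigma_{\mathsf{cence}},\qA}{cert}{ever}{rec}{nc}(\secp,1)$, so it suffices to prove $|\Pr[\hyb_{j-1}=1]-\Pr[\hyb_j=1]|\le\negl(\secp)$ for each $j\in[n]$.

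The reduction $\qB=(\qB_1,\qB_2)$ to everlasting IND-CPA security of $\Sigma_{\mathsf{cepk}}$ proceeds as follows. $\qB_1$ receives $\pke.\pk^\ast$ from its PKE challenger, samples $y\la\bit^n$ and generates honest key pairs for every slot except $(j,1\oplus y[j])$, into which it plants $\pke.\pk^\ast$. Upon receiving $m$ from $\qA_1$, $\qB_1$ submits the PKE challenge pair $(m[j],\,1\oplus m[j])$, receives $\pke.\qct^\ast$, and uses $\pke.\qct^\ast$ to fill the $(j,1\oplus y[j])$ slot; it encrypts the remaining ciphertexts precisely as in $\hyb_{j-1}/\hyb_j$ (which it can do since it knows all other secret keys). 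It forwards $(\qct,\sk=(y,\{\pke.\sk_{i,y[i]}\}_i))$ to $\qA_1$. When $\qA_1$ produces a deletion certificate $\{\pke.\cert_{i,\alpha}\}$, $\qB_1$ uses its own verification keys to check every $\pke.\cert_{i,\alpha}$ with $(i,\alpha)\neq(j,1\oplus y[j])$; if any of these fails, $\qB_1$ sends an \emph{arbitrary invalid} certificate to its challenger (so the PKE game output becomes $\bot$, matching the RNCE game output). Otherwise $\qB_1$ forwards $\pke.\cert_{j,1\oplus y[j]}$. When $\qB_2$ receives either $\pke.\sk^\ast$ or $\bot$, it reconstructs $\MSK$ (if all checks passed) and forwards it to $\qA_2$, else forwards $\bot$; finally it outputs $\qA_2$'s guess. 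A short case analysis on which (if any) verification fails shows that the PKE game output equals the RNCE game output whenever $\qB$ is run, so distinguishing advantage of $\qA$ between $\hyb_{j-1}$ and $\hyb_j$ is bounded by that of $\qB$ against $\Sigma_{\mathsf{cepk}}$.

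The main obstacle I anticipate is this last bookkeeping step: carefully aligning the $\top/\bot$ output of the PKE challenger with the $\top/\bot$ output that the RNCE experiment expects, given that RNCE verification requires \emph{all} $2n$ of the $\pke.\cert_{i,\alpha}$ to pass whereas the PKE reduction only controls the one embedded at $(j,1\oplus y[j])$. The trick of submitting a deliberately invalid certificate on the ``bad'' side resolves this cleanly, but must be written up carefully (ensuring the argument goes through uniformly for both values of the PKE challenge bit). Once this is set, summing the $n$ hybrid gaps yields $|\Pr[\hyb_0=1]-\Pr[\hyb_n=1]|\le n\cdot\negl(\secp)=\negl(\secp)$, finishing the proof.
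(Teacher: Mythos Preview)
Your proposal is correct and follows essentially the same route as the paper: identify the relabelling $y=x^\ast\oplus m$ that makes the real and fake experiments differ only in the plaintexts of the $n$ ``unobserved'' ciphertexts, then hybrid over those positions and reduce each step to certified everlasting IND-CPA of $\Sigma_{\mathsf{cepk}}$. The paper organizes this slightly differently---it first proves a standalone multi-instance lemma (\cref{lem:cut_and_choose_pke}) by hybrid over positions, then invokes it once to pass from $\hyb_0$ to $\hyb_1$, and finally observes a perfect relabelling $\hyb_1=\hyb_2$---whereas you fold the relabelling into the hybrid definitions and reduce each step directly. Your inlined version is in fact a touch tighter: because the reduction chooses $y$ itself, you avoid the guess-and-abort (factor-$2$) step that the paper's lemma needs to handle an adversarially chosen $f$.

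One small point on the $\top/\bot$ bookkeeping: the ``send an arbitrary invalid certificate'' trick is not needed, and it quietly assumes you can efficiently produce a certificate that fails verification under an unknown $\vk$---not guaranteed by the syntax. The cleaner fix (which the paper uses) is to \emph{always} forward $\pke.\cert_{j,1\oplus y[j]}$ to the PKE challenger, and only afterwards have $\qB_2$ locally verify the remaining $2n-1$ certificates; if any of those fail, $\qB_2$ simply outputs $\bot$ regardless of what $\qA_2$ says. Since the RNCE experiment outputs $\bot$ whenever any certificate fails, and outputs $\qA_2$'s bit only when all pass, the event ``output $=1$'' coincides in both games without any need to force the PKE challenger into the $\bot$ branch.
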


\begin{proof}[Proof of \cref{thm:ever_security_rnce_classic}]
To prove the theorem, let us introduce the sequence of hybrids.
\begin{description}
    \item[$\hyb_{0}$:] This is identical to $\expd{\Sigma_{\mathsf{cence}},\qA}{cert}{ever}{rec}{nc}(\secp,0)$.
    For clarity, we describe the experiment against any adversary $\qA=(\qA_1,\qA_2)$, where $\qA_1$ is any QPT adversary and $\qA_2$ is any unbounded adversary. 
\begin{enumerate}
    \item The challenger generates $(\pke.\pk_{i,\alpha},\pke.\sk_{i,\alpha})\la\PKE.\keygen(1^\lambda)$ for all $i\in[n]$ and $\alpha\in\bit$.
    \item The challenger sends $\{\pke.\pk_{i,\alpha}\}_{i\in[n],\alpha\in\bit}$ to $\qA_1$.
    \item $\qA_1$ sends $m\in\Ms$ to the challenger.
    \item\label{step:enc_rnce_classical} The challenger generates $x\la\bit^n$, computes $(\pke.\vk_{i,\alpha},\pke.\qct_{i,\alpha})\la\allowbreak\PKE.\qEnc(\pke.\pk_{i,\alpha},m[i])$ for all $i\in[n]$ and $\alpha\in\bit$, and sends \allowbreak$(\{\pke.\qct_{i,\alpha}\}_{i\in[n],\alpha\in\bit},(x,\{\pke.\sk_{i,x[i]}\}_{i\in[n]}))$ to $\qA_1$.
    \item $\qA_1$ sends $\{\pke.\cert_{i,\alpha}\}_{i\in[n],\alpha\in\bit}$ to the challenger and its internal state to $\qA_2$.
    \item The challenger computes $\PKE.\Vrfy(\pke.\vk_{i,\alpha},\pke.\cert_{i,\alpha})$ for all $i\in[n]$ and $\alpha\in\bit$.
    If all results are $\top$, the challenger outputs $\top$ and sends $\{\pke.\sk_{i,\alpha}\}_{i\in[n],\alpha\in\bit}$ to $\qA_2$. Otherwise, the challenger outputs $\bot$ and sends $\bot$ to $\qA_2$.
    \item $\qA_2$ outputs $b'\in\bit$. 
    \item If the challenger outputs $\top$, then the output of the experiment is $b'$. Otherwise, the output of the experiment is $\bot$.
\end{enumerate}
\item[$\hyb_1$:] 
This is identical to $\hyb_{0}$ except that
the challenger generates $(\pke.\vk_{i,x[i]\oplus 1},\allowbreak\pke.\qct_{i,x[i]\oplus 1})\la\allowbreak\PKE.\qEnc(\pke.\pk_{i,x[i]\oplus 1}, \allowbreak m[i]\oplus 1)$ for all $i\in[n]$ instead of computing $(\pke.\vk_{i,x[i]\oplus 1},\allowbreak\pke.\qct_{i,x[i]\oplus 1})\la\PKE.\qEnc(\pke.\pk_{i,x[i]\oplus 1},m[i])$ for all $i\in[n]$.
\item[$\hyb_{2}$:]
This is identical to $\hyb_{1}$ except for the following three points.
\begin{enumerate}
\item The challenger generates $x^*\la\bit^n$ instead of generating $x\la\bit^n$.
\item For all $i\in[n]$, the challenger generates $(\pke.\vk_{i,x^*[i]},\allowbreak \pke.\qct_{i,x^*[i]})\la \PKE.\qEnc(\pke.\pk_{i,x^*[i]},0)$
and $(\pke.\vk_{i,x^*[i]\oplus 1},\pke.\qct_{i,x^*[i]\oplus 1})\la \PKE.\qEnc(\pke.\pk_{i,x^*[i]\oplus 1},1)$
instead of computing $(\pke.\vk_{i,x[i]},\pke.\qct_{i,x[i]})\allowbreak \la\PKE.\qEnc(\pke.\pk_{i,x[i]},m[i])$
and $(\pke.\vk_{i,x[i]\oplus 1},\pke.\qct_{i,x[i]\oplus 1})\la\PKE.\qEnc(\pke.\pk_{i,x[i]\oplus 1},m[i]\oplus 1)$.
\item The challenger sends $(\{\pke.\qct_{i,\alpha}\}_{i\in[n],\alpha\in\bit},(x^*\oplus m,\{\pke.\sk_{i,x^*[i]\oplus m[i]}\}_{i\in[n]}))$ to $\qA_1$
instead of sending $(\{\pke.\qct_{i,\alpha}\}_{i\in[n],\alpha\in\bit},(x,\{\pke.\sk_{i,x[i]}\}_{i\in[n]}))$ to $\qA_1$.
\end{enumerate}
\end{description}
It is clear that $\hyb_{0}$ is identical to $\expd{\Sigma,\qA}{cert}{ever}{rec}{nc}(\secp,0)$ and $\hyb_{2}$ is identical to $\expd{\Sigma,\qA}{cert}{ever}{rec}{nc}(\secp,1)$.
Hence, \cref{thm:ever_security_rnce_classic} easily follows from the following \cref{prop:hyb_0_hyb_1_rnce_classic,prop:hyb_1_hyb_2_rnce_classic}
(whose proof is given later.).
\end{proof}

\begin{proposition}\label{prop:hyb_0_hyb_1_rnce_classic}
If $\Sigma_{\mathsf{cepk}}$ is certified everlasting IND-CPA secure,
it holds that
$\abs{\Pr[\hyb_{0}=1]-\Pr[\hyb_{1}=1]}\leq\negl(\lambda)$.
\end{proposition}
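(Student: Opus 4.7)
The plan is to reduce indistinguishability of $\hyb_{0}$ and $\hyb_{1}$ to the certified everlasting IND-CPA security of $\Sigma_{\mathsf{cepk}}$ via a hybrid argument over the $n$ coordinates. The only difference between the two experiments is that in $\hyb_{0}$ every ``unused'' ciphertext $\pke.\qct_{i,x[i]\oplus 1}$ encrypts $m[i]$, whereas in $\hyb_{1}$ it encrypts $m[i]\oplus 1$, while the ``used'' ciphertext $\pke.\qct_{i,x[i]}$ still encrypts $m[i]$ in both experiments. Since the adversary is only ever given the secret keys $\{\pke.\sk_{i,x[i]}\}_{i\in[n]}$ during the query phase and recovers the remaining keys $\{\pke.\sk_{i,x[i]\oplus 1}\}_{i\in[n]}$ only after a successful deletion-verification, the ``unused'' slot is exactly the place where certified everlasting IND-CPA security bites.

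First I would define intermediate hybrids $\hyb_{0}=\hyb^{\prime}_{0}, \hyb^{\prime}_{1},\ldots,\hyb^{\prime}_{n}=\hyb_{1}$, where $\hyb^{\prime}_{k}$ generates $(\pke.\vk_{i,x[i]\oplus 1},\pke.\qct_{i,x[i]\oplus 1})\la\PKE.\qEnc(\pke.\pk_{i,x[i]\oplus 1},m[i]\oplus 1)$ for $i\le k$ and $(\pke.\vk_{i,x[i]\oplus 1},\pke.\qct_{i,x[i]\oplus 1})\la\PKE.\qEnc(\pke.\pk_{i,x[i]\oplus 1},m[i])$ for $i>k$, while the ``used'' slot is unchanged. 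The remainder of the proof will show $|\Pr[\hyb^{\prime}_{k-1}=1]-\Pr[\hyb^{\prime}_{k}=1]|\le\negl(\secp)$ for every $k\in[n]$, and then a triangle-inequality over the $n=\poly(\secp)$ hybrids gives the claim.

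For the step from $\hyb^{\prime}_{k-1}$ to $\hyb^{\prime}_{k}$ I would build a reduction $\qB=(\qB_1,\qB_2)$ against $\expd{\Sigma_{\mathsf{cepk}},\qB}{cert}{ever}{ind}{cpa}(\secp,b)$. $\qB_1$ receives an external PKE public key $\pke.\pk^{*}$, samples $x\la\bit^{n}$ on its own, and internally generates every key pair $(\pke.\pk_{i,\alpha},\pke.\sk_{i,\alpha})$ for $(i,\alpha)\ne(k,x[k]\oplus 1)$ honestly, placing $\pke.\pk^{*}$ in the $(k,x[k]\oplus 1)$ slot. It forwards all public keys to $\qA_1$, waits for the message $m$, and then submits $(m[k],m[k]\oplus 1)$ as its own challenge pair to the external challenger, receiving $\pke.\qct^{*}$. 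It constructs the other $2n-1$ ciphertexts following the $\hyb^{\prime}_{k-1}$ recipe (so the first $k-1$ unused slots encrypt the flipped bit and the last $n-k$ encrypt the original bit), and finally sends $(\{\pke.\qct_{i,\alpha}\},(x,\{\pke.\sk_{i,x[i]}\}))$ to $\qA_1$; note that $\qB_1$ knows $\pke.\sk_{k,x[k]}$ because only the $(k,x[k]\oplus 1)$ slot is controlled externally. When $\qA_1$ submits its deletion certificates $\{\pke.\cert_{i,\alpha}\}$, $\qB_1$ forwards $\pke.\cert_{k,x[k]\oplus 1}$ to the external PKE challenger and verifies the remaining $2n-1$ certificates locally using the verification keys it stored. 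If any local verification fails, $\qB$ aborts with $\bot$; otherwise it hands $\qA_1$'s internal state to $\qB_2$, and upon receiving $\pke.\sk^{*}$ from the external challenger (conditioned on the external verification passing), $\qB_2$ assembles the full $\MSK$ by combining $\pke.\sk^{*}$ with the keys $\qB_1$ generated and sends it to $\qA_2$. Finally $\qB_2$ outputs whatever $\qA_2$ outputs. By inspection $\qB$ perfectly simulates $\hyb^{\prime}_{k-1}$ when the external bit is $0$ and $\hyb^{\prime}_{k}$ when it is $1$, and $\qB$'s output is non-$\bot$ exactly when every one of the $2n$ certificates verifies, matching the conditioning rule inside both $\hyb^{\prime}_{k-1}$ and $\hyb^{\prime}_{k}$.

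The main obstacle, in my view, is the bookkeeping around verification: the experiment only outputs $b^{\prime}$ (rather than $\bot$) when \emph{all} $2n$ certificates are valid, whereas in the reduction only one of them is verified by the external challenger while the other $2n-1$ are checked locally by $\qB$. I would need to argue that the joint verification pattern is identical in both worlds, so that the conditional events inside $\expd{\Sigma_{\mathsf{cence}},\qA}{cert}{ever}{rec}{nc}$ and $\expd{\Sigma_{\mathsf{cepk}},\qB}{cert}{ever}{ind}{cpa}$ line up. This is just careful casework but is the one place where a careless reduction would lose a factor. A secondary subtlety is that $\qB_2$ must be given $\qA_1$'s full internal state \emph{and} the revealed $\pke.\sk^{*}$ (which only materializes after the external verification), so the ordering in the IND-CPA game of \cref{def:cert_ever_security_cert_ever_pke} must be respected; as written, that definition supplies $\sk$ to $\qA_2$ only after verification, which matches what we need. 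Once these bookkeeping issues are handled, each hybrid step inherits a negligible advantage from certified everlasting IND-CPA security of $\Sigma_{\mathsf{cepk}}$, and summing over $k\in[n]$ yields the desired bound.
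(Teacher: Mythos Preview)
Your proof is correct and follows essentially the same strategy as the paper: a hybrid over the $n$ coordinates, switching each ``unused'' ciphertext $\pke.\qct_{i,x[i]\oplus 1}$ from an encryption of $m[i]$ to $m[i]\oplus 1$, with each step reduced to the certified everlasting IND-CPA security of $\Sigma_{\mathsf{cepk}}$. The paper packages this hybrid into a separate multi-instance lemma (\cref{lem:cut_and_choose_pke}) and then invokes it, whereas you inline the argument; one minor byproduct is that the paper's lemma, being stated for an adversarially chosen selector string $f$, needs a guess-and-abort on the embedded coordinate (losing a factor of $2$), while in your direct reduction $x$ is sampled by the simulator itself, so no guessing is needed. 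Your identified bookkeeping concern about aligning the ``all $2n$ certificates valid'' event with the external challenger's single verification is exactly the right thing to check, and your handling of it is sound.
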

\begin{proposition}\label{prop:hyb_1_hyb_2_rnce_classic}
$\abs{\Pr[\hyb_{1}=1]-\Pr[\hyb_{2}=1]}\leq\negl(\lambda)$.    
\end{proposition}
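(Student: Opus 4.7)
}
My plan is to show that the two hybrids are in fact \emph{identically distributed}, so the bound holds with probability $0$ rather than merely negligibly. The key observation is that $\hyb_2$ is just a change of variables applied to $\hyb_1$: the bit string $x^\ast$ sampled in $\hyb_2$ plays the role of $x \oplus m$, and once we apply the substitution $x \seteq x^\ast \oplus m$ the two hybrids coincide exactly.

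More concretely, first I would argue that the marginal distribution of the string used to select the revealed secret keys is the same in both hybrids. In $\hyb_1$ the string $x$ is uniform over $\bit^n$; in $\hyb_2$ the revealed string is $x^\ast \oplus m$, which is also uniform since $x^\ast$ is uniform and $m$ is fixed. Next, conditioning on this string (call it $y$ in both cases), I would check that for every position $i\in[n]$ the ciphertexts at positions $(i,y[i])$ and $(i,y[i]\oplus 1)$ encrypt the same plaintexts in the two hybrids. In $\hyb_1$ the ciphertext at $(i,y[i])$ encrypts $m[i]$ and the ciphertext at $(i,y[i]\oplus 1)$ encrypts $m[i]\oplus 1$ by construction. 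In $\hyb_2$ with $y=x^\ast\oplus m$, a straightforward case analysis on $m[i]\in\bit$ shows that the ciphertext at $(i,y[i])=(i,x^\ast[i]\oplus m[i])$ encrypts $m[i]$ (it equals $0$ when $m[i]=0$ and $1$ when $m[i]=1$) and the ciphertext at $(i,y[i]\oplus 1)$ encrypts $m[i]\oplus 1$. Hence the joint distribution of $(\{\pke.\qct_{i,\alpha}\}_{i,\alpha}, y, \{\pke.\sk_{i,y[i]}\}_i)$ given to $\qA_1$ agrees in the two hybrids.

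Finally, since the verification keys $\{\pke.\vk_{i,\alpha}\}$, the entire master secret $\{\pke.\sk_{i,\alpha}\}$, and the public keys $\{\pke.\pk_{i,\alpha}\}$ are generated identically in both hybrids, and the verification algorithm and the decision to send $\MSK$ to $\qA_2$ depend only on these and on the certificate returned by $\qA_1$, the overall view of $(\qA_1,\qA_2)$ is identically distributed. Therefore $\Pr[\hyb_1=1]=\Pr[\hyb_2=1]$, which gives the claimed bound.

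The argument has no real obstacle: the only thing to be careful about is the case analysis tying $y[i]=x^\ast[i]\oplus m[i]$ to the plaintexts $\{0,1\}$ used in $\qFake$, since a sign error there would break the equivalence. No computational assumption is needed for this step; the computational content of the overall proof of \cref{thm:ever_security_rnce_classic} is concentrated in \cref{prop:hyb_0_hyb_1_rnce_classic}, which is where the certified everlasting IND-CPA security of $\Sigma_{\mathsf{cepk}}$ is invoked (on the $n$ ``unrevealed'' ciphertexts whose secret keys are not given out).
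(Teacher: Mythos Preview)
Your proposal is correct and matches the paper's approach exactly: the paper's proof simply states that the joint distribution received by $\qA_1$ in $\hyb_1$ is identical to that in $\hyb_2$ via the substitution $x \leftrightarrow x^\ast \oplus m$, and you have just spelled out this ``obvious'' identification carefully with the case analysis on $m[i]$.
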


\begin{proof}[Proof of \cref{prop:hyb_0_hyb_1_rnce_classic}]

For the proof, we use \cref{lem:cut_and_choose_pke}.
We assume that $\abs{\Pr[\hyb_{0}=1]-\Pr[\hyb_{1}=1]}$ is non-negligible,
and construct an adversary $\qB$ that breaks the security experiment $\expc{\Sigma_{\mathsf{cepk}},\qB}{multi}{cert}{ever}(\secp,b)$ defined in~\cref{lem:cut_and_choose_pke}.
This contradicts the certified everlasting IND-CPA security of $\Sigma_{\mathsf{cepk}}$ from~\cref{lem:cut_and_choose_pke}.
Let us describe how $\qB$ works below.
\begin{enumerate}
    \item $\qB$ receives $\{\pke.\pk_{i,\alpha}\}_{i\in[n],\alpha\in\bit}$ from the challenger of
    $\expc{\Sigma_{\mathsf{cepk}},\qB}{multi}{cert}{ever}(\secp,b)$.
    \item $\qB$ sends $\{\pke.\pk_{i,\alpha}\}_{i\in[n],\alpha\in\bit}$ to $\qA_1$.
    \item $\qA_1$ chooses $m\in\Ms$ and sends $m$ to $\qB$.
    \item $\qB$ generates $x\la\bit^n$ and sends $(x,m[1],\cdots, m[n], m[1]\oplus 1,\cdots,m[n]\oplus 1)$ to the challenger of $\expc{\Sigma_{\mathsf{cepk}},\qB}{multi}{cert}{ever}(\secp,b)$.
    \item $\qB$ receives $(\{\pke.\sk_{i,x[i]}\}_{i\in[n]},\{\pke.\qct_{i,x[i]\oplus 1}\}_{i\in[n]})$ from the challenger of $\expc{\Sigma_{\mathsf{cepk}},\qB}{multi}{cert}{ever}(\secp,b)$.
    \item $\qB$ computes $(\{\pke.\vk_{i,x[i]}\}_{i\in[n]},\{\pke.\qct_{i,x[i]}\}_{i\in[n]})\la\PKE.\qEnc(\pke.\pk_{i,x[i]},m[i])$ for $i\in[n]$. 
    \item $\qB$ sends $(\{\pke.\qct_{i,\alpha}\}_{i\in[n],\alpha\in\bit},(x,\{\pke.\sk_{i,x[i]}\}_{i\in[n]}))$ to $\qA_1$.
    \item $\qA_1$ sends $\{\pke.\cert_{i,\alpha}\}_{i\in[n],\alpha\in\bit}$ to $\qB$ and the internal state to $\qA_2$.
    \item $\qB$ sends $\{\pke.\cert_{i,x[i]\oplus 1}\}_{i\in[n]}$ to the challenger, and receives $\{\pke.\sk_{i,x[i]\oplus 1}\}_{i\in[n]}$ or $\bot$.
    If $\qB$ receives $\bot$, it outputs $\bot$ and aborts.
    \item $\qB$ sends $\{\pke.\sk_{i,\alpha}\}_{i\in[n],\alpha\in\bit}$ to $\qA_2$.
    \item $\qA_2$ outputs $b'$.
    \item $\qB$ computes $\PKE.\Vrfy(\pke.\vk_{i,x[i]},\pke.\cert_{i,x[i]})$ for all $i\in[n]$.
    If all results are $\top$, $\qB$ outputs $b'$.
    Otherwise, $\qB$ outputs $\bot$.
\end{enumerate}
It is clear that $\Pr[1\la\qB \mid b=0]=\Pr[\hyb_{0}=1]$ and $\Pr[1\la\qB \mid b=1]=\Pr[\hyb_{1}=1]$.
By assumption,\\ $\abs{\Pr[\hyb_{0}=1]-\Pr[\hyb_{1}=1]}$ is non-negligible.
Therefore, $\abs{\Pr[1\la\qB \mid b=0]-\Pr[1\la\qB \mid b=1]}$ is also non-negligible,
which contradicts the certified everlasting IND-CPA security of $\Sigma_{\mathsf{cepk}}$ from~\cref{lem:cut_and_choose_pke}.
\end{proof}

\begin{proof}[Proof of \cref{prop:hyb_1_hyb_2_rnce_classic}]
It is obvious that the joint probability distribution that $\qA_1$ receives $(\{\pke.\qct_{i,\alpha}\}_{i\in[n],\alpha\in\bit},\allowbreak(x,\{\pke.\sk_{i,x[i]}\}_{i\in[n]}))$ in $\hyb_{1}$ is identical to the joint probability distribution that $\qA_1$ receives $(\{\pke.\qct_{i,\alpha}\}_{i\in[n],\alpha\in\bit},\allowbreak (x^*\oplus m,\{\pke.\sk_{i,x^*[i]\oplus m[i]}\}_{i\in[n]}))$ in $\hyb_{2}$.
Hence, \cref{prop:hyb_1_hyb_2_rnce_classic} follows.
\end{proof}

We use the following lemma for the proof of \cref{thm:ever_security_rnce_classic} and \cref{thm:garble_ever_security}.
The proof is shown with the standard hybrid argument.
It is also easy to see that a similar lemma holds for IND-CPA security.
\begin{lemma}\label{lem:cut_and_choose_pke}
Let $s$ be some polynomial of the security parameter $\lambda$.
Let $\Sigma\seteq(\keygen,\qEnc,\qDec,\qDelete,\Vrfy)$ be a certified everlasting secure PKE scheme.
Let us consider the following security experiment 
$\expc{\Sigma,\qA}{multi}{cert}{ever}(\secp,b)$ against $\qA$ consisting of any QPT adversary $\qA_1$ and any unbounded adversary $\qA_2$.
\begin{enumerate}
    \item The challenge generates $(\pk_{i,\alpha},\sk_{i,\alpha})\la\keygen(1^\secp)$ for all $i\in[s]$ and $\alpha\in\bit$,
    and sends $\{\pk_{i,\alpha}\}_{i\in[s],\alpha\in\bit}$ to $\qA_1$.
    \item $\qA_1$ chooses $f\in\bit^s$ and $(m_0[1],m_0[2],\cdots, m_0[s],m_1[1],m_1[2], \cdots, m_1[s]) \in\Ms^{2s}$,
    and sends $(f,m_0[1],m_0[2],\\ \cdots ,m_0[s],m_1[1],m_1[2],\cdots, m_1[s] )$ to the challenger.
    \item The challenger computes 
    $
    (\vk_{i,f[i]\oplus 1},\qct_{i,f[i]\oplus 1})\la \qEnc(\pk_{i,f[i]\oplus 1},m_b[i])$ for all $i\in[s]$,
    and sends $(\{\sk_{i,f[i]}\}_{i\in [s]},\\ \{\qct_{i,f[i]\oplus 1}\}_{i\in[s]})$ to $\qA_1$.
    \item At some point, $\qA_1$ sends $\{\cert_{i,f[i]\oplus 1}\}_{i\in[s]}$ to the challenger,
    and sends its internal state to $\qA_2$.
    \item The challenger computes $\Vrfy(\vk_{i,f[i]\oplus 1},\cert_{i,f[i]\oplus 1})$ for every $i\in[s]$.
    If all results are $\top$, the challenger outputs $\top$, and sends $\{\sk_{i,f[i]\oplus 1}\}_{i\in [s]}$ to $\qA_2$.
    Otherwise, the challenger outputs $\bot$, and sends $\bot$ to $\qA_2$.
    \item $\qA_2$ outputs $b'$.
    \item If the challenger outputs $\top$, then the output of the experiment is $b'$. Otherwise, the output of the experiment is $\bot$.
\end{enumerate}
If the $\Sigma$ satisfies the certified everlasting IND-CPA security,
\begin{align}
\advc{\Sigma,\qA}{multi}{cert}{ever}(\secp)
\seteq \abs{\Pr[ \expc{\Sigma,\qA}{multi}{cert}{ever}(\secp, 0)=1] - \Pr[ \expc{\Sigma,\qA}{multi}{cert}{ever}(\secp, 1)=1] }\leq \negl(\secp)
\end{align}
for any QPT adversary $\qA_1$ and any unbounded adversary $\qA_2$.
\end{lemma}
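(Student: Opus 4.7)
\medskip
\noindent\textbf{Proof proposal for \cref{lem:cut_and_choose_pke}.}

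The plan is to prove this via a standard telescoping hybrid argument over the $s$ ciphertexts, where each step is reduced to the single-ciphertext certified everlasting IND-CPA security of $\Sigma$ (\cref{def:cert_ever_security_cert_ever_pke}). Concretely, for $k\in\{0,1,\ldots,s\}$ define hybrid $\hyb_k$ to be identical to $\expc{\Sigma,\qA}{multi}{cert}{ever}(\secp,\cdot)$ except that the challenger computes $\qct_{i,f[i]\oplus 1}\la \qEnc(\pk_{i,f[i]\oplus 1},m_1[i])$ for $i\le k$ and $\qct_{i,f[i]\oplus 1}\la \qEnc(\pk_{i,f[i]\oplus 1},m_0[i])$ for $i>k$. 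Then $\hyb_0$ coincides with the $b=0$ experiment and $\hyb_s$ with the $b=1$ experiment, so by the triangle inequality it suffices to show $|\Pr[\hyb_{k-1}=1]-\Pr[\hyb_k=1]|\le\negl(\secp)$ for every $k\in[s]$.

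For each such $k$, I would construct a reduction $\qB=(\qB_1,\qB_2)$ against the single-ciphertext certified everlasting IND-CPA experiment $\expd{\Sigma,\qB}{cert}{ever}{ind}{cpa}(\secp,b)$ that uses $(\qA_1,\qA_2)$ as follows. $\qB_1$ receives $\pk^{*}$ from its external challenger, generates $(\pk_{i,\alpha},\sk_{i,\alpha})\la\keygen(1^{\secp})$ honestly for all $(i,\alpha)\neq(k,f[k]\oplus 1)$ (where $f$ will only be known after $\qA_1$ speaks, so $\qB_1$ guesses $f[k]$ uniformly and aborts with answer $\bot$ on mismatch, incurring only a constant factor loss; alternatively, embed $\pk^{*}$ at position $k$ on both sides and duplicate appropriately), embeds $\pk_{k,f[k]\oplus 1}\seteq\pk^{*}$, and forwards the resulting tuple $\{\pk_{i,\alpha}\}_{i,\alpha}$ to $\qA_1$. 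When $\qA_1$ outputs $(f,\{m_0[i],m_1[i]\}_i)$, $\qB_1$ sends $(m_0[k],m_1[k])$ as its challenge to the external challenger, receives $\qct^{*}$, locally generates $\qct_{i,f[i]\oplus 1}\la\qEnc(\pk_{i,f[i]\oplus 1},m_1[i])$ for $i<k$ and $\qct_{i,f[i]\oplus 1}\la\qEnc(\pk_{i,f[i]\oplus 1},m_0[i])$ for $i>k$, places $\qct^{*}$ in slot $k$, and returns $(\{\sk_{i,f[i]}\}_{i\in[s]},\{\qct_{i,f[i]\oplus 1}\}_{i\in[s]})$ to $\qA_1$.

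When $\qA_1$ issues $\{\cert_{i,f[i]\oplus 1}\}_{i\in[s]}$ and passes its internal state to $\qA_2$, $\qB_1$ locally runs $\Vrfy(\vk_{i,f[i]\oplus 1},\cert_{i,f[i]\oplus 1})$ for every $i\ne k$ using the verification keys it produced during encryption; if any of those verifications rejects, $\qB_1$ outputs $\bot$ and halts. Otherwise $\qB_1$ forwards $\cert_{k,f[k]\oplus 1}$ to the external challenger, and if it receives $\sk^{*}$ (meaning external verification accepted), $\qB_2$ sets $\sk_{k,f[k]\oplus 1}\seteq\sk^{*}$ and passes $\{\sk_{i,f[i]\oplus 1}\}_{i\in[s]}$ together with $\qA_1$'s final state to $\qA_2$; on receiving $\bot$ from the external challenger, $\qB_2$ outputs $\bot$. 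Finally $\qB_2$ outputs whatever $\qA_2$ outputs. By construction, when the external bit is $0$ the view of $\qA$ is exactly $\hyb_{k-1}$ and when it is $1$ the view is $\hyb_k$, and the event that the experiment outputs $\top$ coincides on both sides because verification of the slot-$k$ certificate is handled externally and the remaining $s-1$ verifications are identical. Hence $|\Pr[\hyb_{k-1}=1]-\Pr[\hyb_k=1]|\le \advd{\Sigma,\qB}{cert}{ever}{ind}{cpa}(\secp)=\negl(\secp)$, and summing over $k$ yields the claimed bound.

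The only subtle point I foresee is handling the fact that the reduction has to commit to embedding $\pk^{*}$ before seeing $\qA_1$'s choice of $f$; the cleanest fix, sketched above, is a standard guess-and-abort (losing a factor of $2$), and an alternative is to embed $\pk^{*}$ at position $k$ on whichever side of the bit $f[k]$ the reduction targets while generating a ``shadow'' honest key on the other side, so that $\qB_1$ can always provide $\sk_{k,f[k]}$ to $\qA_1$ regardless of $f[k]$; either works and preserves negligibility. Everything else is bookkeeping to ensure that aborts on verification failure are perfectly correlated between $\qB$'s output distribution and the hybrid's output distribution so that the reduction advantage really equals the hybrid-to-hybrid gap.
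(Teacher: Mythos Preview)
Your proposal is correct and follows essentially the same approach as the paper: a telescoping hybrid over the $s$ coordinates, with each neighboring pair reduced to the single-ciphertext certified everlasting IND-CPA game via a guess-and-abort on the bit $f[k]$ (incurring a factor-$2$ loss). The paper's reduction differs only in bookkeeping order---it always forwards $\cert_{k}$ to the external challenger and checks the remaining $s-1$ verifications at the very end before deciding whether to output $b'$ or $\bot$---but this is an immaterial implementation detail.
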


\begin{proof}[Proof of \cref{lem:cut_and_choose_pke}]
Let us consider the following hybrids for $j\in\{0,1,\cdots ,s\}$.
\begin{description}
\item[$\hyb_{j}$:] $ $
\begin{enumerate}
    \item The challenger generates $(\pk_{i,\alpha},\sk_{i,\alpha})\la\keygen(1^\secp)$ for every $i\in[s]$ and $\alpha\in\bit$, and sends $\{\pk_{i,\alpha}\}_{i\in[s],\alpha\in\bit}$ to $\qA_1$.
    \item $\qA_1$ chooses $f\in\bit^s$ and $(m_0[1],m_0[2],\cdots, m_0[s],m_1[1],m_1[2],\cdots, m_1[s]) \in\Ms^{2s}$,
    and sends $(f,m_0[1],m_0[2],\cdots, m_0[s],m_1[1],m_1[2],\cdots, m_1[s] )$ to the challenger.
    \item The challenger computes 
    \begin{align}
    (\vk_{i,f[i]\oplus 1},\qct_{i,f[i]\oplus 1})\la \qEnc(\pk_{i,f[i]\oplus 1},m_1[i])
    \end{align}
    for $i\in[j]$ and 
    \begin{align}
    (\vk_{i,f[i]\oplus 1},\qct_{i,f[i]\oplus 1})\la \qEnc(\pk_{i,f[i]\oplus 1},m_0[i])
    \end{align}
    for $i\in\{j+1,j+2,\cdots,s\}$,
    and sends $(\{\sk_{i,f[i]}\}_{i\in [s]}, \{\qct_{i,f[i]\oplus 1}\}_{i\in[s]})$ to $\qA_1$.
    \item At some point, $\qA_1$ sends $\{\cert_{i,f[i]\oplus 1}\}_{i\in[s]}$ to the challenger,
    and sends its internal state to $\qA_2$.
    \item The challenger computes $\Vrfy(\vk_{i,f[i]\oplus 1},\cert_{i,f[i]\oplus 1})$ for every $i\in[s]$.
    If all results are $\top$, the challenger outputs $\top$, and sends $\{\sk_{i,f[i]\oplus 1}\}_{i\in [s]}$ to $\qA_2$.
    Otherwise, the challenger outputs $\bot$, and sends $\bot$ to $\qA_2$.
    \item $\qA_2$ outputs $b'$.
    \item If the challenger outputs $\top$, then the output of the experiment is $b'$. Otherwise, the output of the experiment is $\bot$.
\end{enumerate}
\end{description}
It is clear that $\Pr[\hyb_{0}=1]=\Pr[\expc{\Sigma,\qA}{multi}{cert}{ever}(\secp,0)=1]$ and $\Pr[\hyb_{s}=1]=\Pr[\expc{\Sigma,\qA}{multi}{cert}{ever}(\secp,1)=1]$.
Furthermore, we can show 
\begin{align}
    \abs{\Pr[\hyb_{j}=1]-\Pr[\hyb_{j+1}=1]}\leq\negl(\lambda)
\end{align}
for each $j\in\{0,1,\cdots,s-1\}$.
(Its proof is given below.) From these facts, we obtain \cref{lem:cut_and_choose_pke}.

Let us show the remaining one. To show it, let us assume that $\abs{\Pr[\hyb_{j}=1]-\Pr[\hyb_{j+1}=1]}$ is non-negligible.
Then, we can construct an adversary $\qB$ that can break the certified everlasting IND-CPA security of $\Sigma$ as follows. 

\begin{enumerate}
    \item $\qB$ receives $\pk$ from the challenger of $\expd{\Sigma,\qB}{cert}{ever}{ind}{cpa}(\secp,b)$.
    \item $\qB$ generates $\beta\la\bit$ and sets $\pk_{j+1,\beta}\seteq \pk$. 
    \item $\qB$ generates $(\pk_{i,\alpha},\sk_{i,\alpha})\la\keygen(1^\secp)$ for $i\in\{1,\cdots,j,j+2,\cdots, s\}$ and $\alpha\in\bit$, and $(\pk_{j+1,\beta\oplus 1},\sk_{j+1,\beta\oplus 1})\la\keygen(1^\secp)$.
    \item $\qB$ sends $\{\pk_{i,\alpha}\}_{i\in[s],\alpha\in\bit}$ to $\qA_1$.
    \item $\qA_1$ chooses $f\in\bit^s$ and $(m_0[1],m_0[2],\cdots, m_0[s],m_1[1],m_1[2],\cdots, m_1[s]) \in\Ms^{2s}$,
    and sends $(f,m_0[1],m_0[2],\\ \cdots, m_0[s],m_1[1],m_1[2],\cdots, m_1[s] )$ to the challenger.
    \item If $f[j+1]=\beta$, $\qB$ aborts the experiment, and outputs $\bot$.
    \item $\qB$ computes
    \begin{align}
        (\vk_{i,f[i]\oplus 1},\qct_{i,f[i]\oplus 1})\la \qEnc(\pk_{i,f[i]\oplus 1},m_1[i])
    \end{align}
    for $i\in[j]$ and
    \begin{align}
        (\vk_{i,f[i]\oplus 1},\qct_{i,f[i]\oplus 1})\la \qEnc(\pk_{i,f[i]\oplus 1},m_0[i])
    \end{align}
    for $i\in\{j+2,\cdots, s\}$.
    \item $\qB$ sends $(m_0[j+1],m_1[j+1])$ to the challenger of $\expd{\Sigma,\qB}{cert}{ever}{ind}{cpa}(\secp,b)$.
    The challenger computes $(\vk_{j+1,f[j+1]\oplus1},\qct_{j+1,f[j+1]\oplus 1})\la \qEnc(\pk_{j+1,f[j+1]\oplus 1},m_b[j+1])$ and sends $\qct_{j+1,f[j+1]\oplus 1}$ to $\qB$.
    \item $\qB$ sends $(\{\sk_{i,f[i]}\}_{i\in [s]}, \{\qct_{i,f[i]\oplus 1}\}_{i\in[s]})$ to $\qA_1$.
    \item $\qA_1$ sends $\{\cert_i\}_{i\in[s]}$ to $\qB$, and sends its internal state to $\qA_2$.
    \item $\qB$ sends $\cert_{j+1}$ to the challenger, and receives $\sk_{j+1,f[j+1]\oplus1}$ or $\bot$ from the challenger.
    If $\qB$ receives $\bot$ from the challenger, it outputs $\bot$ and aborts.
    \item $\qB$ sends $\{\sk_{i,f[i]\oplus 1}\}_{i\in[s]}$ to $\qA_2$.
    \item $\qA_2$ outputs $b'$.
    \item $\qB$ computes $\Vrfy$ for all $\cert_i$, and outputs $b'$ if all results are $\top$. Otherwise, $\qB$ outputs $\bot$.
\end{enumerate}
Since $\pk_{j+1,\beta}$ and $\pk_{j+1,\beta\oplus 1}$ are identically distributed, 
it holds that
$ \Pr[f[j+1]=\beta]=\Pr[f[j+1]=\beta\oplus 1]=\frac{1}{2}$.
If $b=0$ and $f[j+1]=\beta\oplus 1$, $\qB$ simulates $\hyb_{j}$.
Therefore, we have
\begin{align}
\Pr[1\la\qB \mid b=0]&=\Pr[1\la\qB \wedge f[j+1]=\beta\oplus 1 \mid b=0]\\
                &=\Pr[1\la\qB \mid b=0, f[j+1]=\beta \oplus 1]\cdot\Pr[f[j+1]=\beta \oplus 1]\\
                &=\frac{1}{2}\Pr[\hyb_{j}=1].
\end{align}
If $b=1$ and $f[j+1]=\beta\oplus 1$, $\qB$ simulates $\hyb_{j+1}$.
Similarly, we have 
$
\Pr[1\la\qB \mid b=1]=\frac{1}{2}\Pr[\hyb_{j+1}=1]
$.
By assumption, $\abs{\Pr[\hyb_{j}=1]-\Pr[\hyb_{j+1}=1]}$ is non-negligible, and therefore
$\abs{\Pr[1\la\qB \mid b=0]-\Pr[1\la\qB \mid b=1]}$ is non-negligible, which contradicts the certified everlasting IND-CPA security of $\Sigma$.
\end{proof}


\section{Garbling Scheme with Certified Everlasting Deletion}\label{sec:garbling_scheme}
In \cref{sec:def_garbled}, we define a garbling scheme with certified everlasting deletion.
In \cref{sec:const_garbling}, we construct a certified everlasting secure garbling scheme from a certified everlasting secure SKE scheme.

\subsection{Definition}\label{sec:def_garbled}
We define a garbling scheme with certified everlasting deletion below. An important difference from a standard classical garbling scheme is that the garbled circuit $\wtl{\qC}$ (i.e., an output of $\Garble$) 
is a quantum state.
\begin{definition}[Garbling Scheme with Certified Everlasting Deletion (Syntax)]\label{def:cert_ever_garbled}
Let $\lambda$ be a security parameter and $p,q,r$ and $s$ be polynomials. 
Let $\Cs_n$ be a family of circuits that take $n$-bit inputs.
A garbling scheme with certified everlasting deletion is a tuple of algorithms $\Sigma=(\Setup,\qGarble,\qEval,\qDelete,\Vrfy)$
with label space $\mathcal{L}\seteq\bit^{p(\lambda)}$, garbled circuit space $\mathcal{C}\seteq \cQ^{\otimes q(\lambda)}$,
verification key space $\mathcal{VK}\seteq \bit^{r(\lambda)}$ and deletion certificate space $\mathcal{D}\seteq\cQ^{\otimes s(\lambda)}$.
\begin{description}
\item[$\Setup(1^\secp)\ra\{L_{i,\alpha}\}_{i\in[n],\alpha\in\bit}$:]
The sampling algorithm takes a security parameter $1^\lambda$ as input, and outputs $2n$ labels $\{L_{i,\alpha}\}_{i\in[n],\alpha\in\bit}$ with $L_{i,\alpha}\in\mathcal{L}$ for each $i\in[n]$ and $\alpha\in\bit$.
\item[$\qGarble(1^{\secp},C,\{L_{i,\alpha}\}_{i\in[n],\alpha\in\{0,1\}})\ra (\wtl{\qC},\vk)$:] 
The garbling algorithm takes $1^\lambda$, a circuit $C\in\Cs_n$ and $2n$ labels $\{L_{i,\alpha}\}_{i\in[n],\alpha\in\bit}$ as input, and outputs a garbled circuit $\wtl{\qC}\in\mathcal{C}$ and a verification key $\vk\in\mathcal{VK}$.
\item[$\qEval(\wtl{\qC},\{L_{i,x_i}\}_{i\in[n]})\ra y$:]
The evaluation algorithm takes $\wtl{\qC}$ and $n$ labels $\{L_{i,x_i}\}_{i\in[n]}$ where $x_i\in\{0,1\}$ as input,
and outputs $y$.
\item[$\qDelete(\wtl{\qC})\ra \cert$:]
The deletion algorithm takes $\wtl{\qC}$ as input, and outputs a certificate $\cert\in\mathcal{D}$.
\item[$\Vrfy(\vk,\cert)\ra \top$ or $\bot$:]
The verification algorithm takes $\vk$ and $\cert$ as input, and outputs $\top$ or $\bot$.
\end{description}
\end{definition}

We require that a garbling scheme with certified everlasting deletion satisfies correctness defined below.
\begin{definition}[Correctness for Garbling Scheme with Certified Everlasting Deletion]\label{def:correctness_cert_ever_garbled}
There are two types of correctness, namely, evaluation correctness and verification correctness.

\paragraph{Evaluation Correctness:}
There exists a negligible function $\negl$ such that for any $\secp\in\N$, $C\in\Cs_n$ and $x\in \bit^n$,
\begin{align}
\Pr\left[
y\neq C(x)
\ \middle |
\begin{array}{ll}
\{L_{i,\alpha}\}_{i\in[n],\alpha\in\bit}\la\Setup(1^\secp)\\
(\wtl{\qC},\vk)\lrun \qGarble(1^\secp,C,\{L_{i,\alpha}\}_{i\in[n],\alpha\in\{0,1\}})\\
y\la\qEval(\wtl{\qC},\{L_{i,x_i}\}_{i\in[n]})
\end{array}
\right] 
\leq\negl(\secp).
\end{align}

\paragraph{Verification Correctness:} There exists a negligible function $\negl$ such that for any $\secp\in\N$,
\begin{align}
\Pr\left[
\Vrfy(\vk,\cert)=\bot
\ \middle |
\begin{array}{ll}
\{L_{i,\alpha}\}_{i\in[n],\alpha\in\bit}\la\Setup(1^\secp)\\
(\wtl{\qC},\vk)\la\qGarble(1^{\secp},C,\{L_{i,\alpha}\}_{i\in[n],\alpha\in\{0,1\}}) \\
\cert\la\qDelete(\wtl{\qC})
\end{array}
\right] 
\leq
\negl(\secp).
\end{align}
\end{definition}

Minimum requirements for correctness are evaluation correctness and verification correctness.
However, we also require verification correctness with QOTP in this work because we need it for the construction of FE in \cref{sec:const_fe_adapt}.

\begin{definition}[Verification Correctness with QOTP]\label{def:cd_gc_ver_correctness_QOTP}
There exists a negligible function $\negl$ and a PPT algorithm $\Modify$ such that for any $\secp\in \N$, 
\begin{align}
\Pr\left[
\Vrfy(\vk,\cert^*)=\bot
\ \middle |
\begin{array}{ll}
\{L_{i,\alpha}\}_{i\in[n],\alpha\in\bit}\la\Setup(1^\secp)\\
(\wtl{\qC},\vk)\la\qGarble(1^{\secp},C,\{L_{i,\alpha}\}_{i\in[n],\alpha\in\{0,1\}}) \\
a,b\la\bit^{q(\lambda)}\\
\wtl{\cert} \lrun \qDelete(Z^bX^a\wtl{\qC} X^aZ^b)\\
\cert^*\lrun \Modify(a,b,\wtl{\cert})  
\end{array}
\right] 
\leq
\negl(\secp).
\end{align}
\end{definition}

As security, we consider two definitions, 
\cref{def:sel_sec_ever_garb} 
and
\cref{def:ever_sec_ever_garb} given below.
The former is just the standard selective security 
and the latter is the certified everlasting security
that we newly define in this paper.
Roughly, the everlasting security guarantees that any QPT adversary with the garbled circuit $\wtl{\qC}$ and the labels $\{L_{i,x[i]}\}_{i\in[n]}$ cannot obtain any information
beyond $C(x)$ even if it becomes  computationally unbounded after it issues a valid certificate.

\begin{definition}[Selective Security for Garbling Scheme with Certified Everlasting Deletion]\label{def:sel_sec_ever_garb}
Let $\Sigma=(\Setup,\qGarble,\allowbreak \qEval, \qDelete,\Vrfy)$ be a garbling scheme with certified everlasting deletion.
We consider the following security experiment $\expb{\Sigma,\qA}{sel}{gbl}(1^\secp,b)$ against a QPT adversary $\qA$. Let $\qSim$ be a QPT algorithm.
\begin{enumerate}
    \item $\qA$ sends a circuit $C\in\Cs_n$ and an input $x\in\bit^n$ to the challenger.
    \item The challenger computes $\{L_{i,\alpha}\}_{i\in[n],\alpha\in\bit}\la\Setup(1^\secp)$.
    \item If $b=0$, the challenger computes $(\wtl{\qC},\vk)\la\qGarble(1^\secp,C,\{L_{i,\alpha}\}_{i\in[n],\alpha\in\bit})$, and returns $(\wtl{\qC},\{L_{i,x_i}\}_{i\in[n]})$ to $\qA$.
    If $b=1$, the challenger computes $\wtl{\qC}\la\qSim(1^\secp,1^{|C|},C(x),\{L_{i,x_i}\}_{i\in[n]})$,
    and returns $(\widetilde{C},\{L_{i,x_i}\}_{i\in[n]})$ to $\qA$.
    \item $\qA$ outputs $b'\in\bit$. The experiment outputs $b'$.
\end{enumerate}
We say that $\Sigma$ is selective secure if there exists a QPT simulator $\qSim$ such that for any QPT adversary $\qA$ it holds that
\begin{align}
    \advb{\Sigma,\qA}{sel}{gbl}(\secp)\seteq \abs{\Pr[\expb{\Sigma,\qA}{sel}{gbl}(1^\secp,0)=1]-\Pr[\expb{\Sigma,\qA}{sel}{gbl}(1^\secp,1)=1]}\leq\negl(\secp).
\end{align}
\end{definition}

\begin{definition}[Selective Certified Everlasting Security for Garbling Scheme]\label{def:ever_sec_ever_garb}
Let $\Sigma=(\Setup,\qGarble,\qEval,\qDelete,\allowbreak \Vrfy)$ be a garbling scheme with certified everlasting deletion.
We consider the following security experiment $\expd{\qA,\Sigma}{cert}{ever}{sel}{gbl}(1^\secp,b)$ against a QPT adversary $\qA_1$ and an unbounded adversary $\qA_2$. Let $\qSim$ be a QPT algorithm.
\begin{enumerate}
    \item $\qA_1$ sends a circuit $C\in\Cs_n$ and an input $x\in\bit^n$ to the challenger.
    \item The challenger computes $\{L_{i,\alpha}\}_{i\in[n],\alpha\in\bit}\la\Setup(1^\secp)$.
    \item If $b=0$, the challenger computes $(\wtl{\qC},\vk)\la\qGarble(1^\secp,C,\{L_{i,\alpha}\}_{i\in[n],\alpha\in\bit})$, and returns $(\wtl{\qC},\{L_{i,x_i}\}_{i\in[n]})$ to $\qA_1$.
    If $b=1$, the challenger computes $(\wtl{\qC},\vk)\la\qSim(1^\secp,1^{|C|},C(x),\{L_{i,x_i}\}_{i\in[n]})$,
    and returns $(\wtl{\qC},\{L_{i,x_i}\}_{i\in[n]})$ to $\qA_1$.
    \item At some point, $\qA_1$ sends $\cert$ to the challenger, and sends the internal state to $\qA_2$.
    \item The challenger computes $\Vrfy(\vk,\cert)$.
    If the output is $\bot$, then the challenger outputs $\bot$, and sends $\bot$ to $\qA_2$.
    Otherwise, the challenger outputs $\top$, and sends $\top$ to $\qA_2$.
    \item $\qA_2$ outputs $b'\in\bit$.
    \item If the challenger outputs $\top$, then the output of the experiment is $b'$. 
    Otherwise, the output of the experiment is $\bot$.
\end{enumerate}
We say that $\Sigma$ is selective certified everlasting secure if there exists a QPT simulator $\qSim$ such that for any QPT $\qA_1$ and any unbounded $\qA_2$ it holds that
\begin{align}
    \advd{\Sigma,\qA}{cert}{ever}{sel}{gbl}(\secp)\seteq
    \abs{\Pr[\expd{\Sigma,\qA}{cert}{ever}{sel}{gbl}(1^\secp,0)=1]-\Pr[\expd{\Sigma,\qA}{cert}{ever}{sel}{gbl}(1^\secp,1)=1]}\leq\negl(\secp).
\end{align}
\end{definition}

\subsection{Construction}\label{sec:const_garbling}
In this section, we construct a certified everlasting secure garbling scheme from a certified everlasting secure SKE scheme~(\cref{def:cert_ever_ske}).
Our construction is similar to Yao's construction of a standard garbling scheme \cite{FOCS:Yao86}, but there are two important differences.
First, we use a certified everlasting secure SKE scheme instead of a standard SKE scheme.
Second, we use XOR secret sharing, although \cite{FOCS:Yao86} used double encryption. The reason why we cannot use double encryption is
that our certified everlasting SKE scheme has quantum ciphertext and classical plaintext.

Before introducing our construction, let us quickly review notations for circuits.
Let $C$ be a boolean circuit. 
A boolean circuit $C$ consists of gates, $\mathsf{gate}_1,\mathsf{gate}_2,\cdots,\mathsf{gate}_q$,
where $q$ is the number of gates in the circuit.
Here, $\mathsf{gate}_i\seteq (g,w_a,w_b,w_c)$, where $g:\bit^2\ra\bit$ is a function, $w_a$, $w_b$ are the incoming wires, and $w_c$ is the outgoing wire.
(The number of outgoing wires is not necessarily one. There can be many outgoing wires, but we use the same label $w_c$ for all outgoing wires.) 
We say $C$ is leveled if each gate has an associated level and any gate at level $\ell$ has incoming wires only from gates at level $\ell-1$ and outgoing wires only to gates at level $\ell+1$. 
Let $\mathsf{out}_1,\mathsf{out}_2,\cdots,\mathsf{out}_m$ be the $m$ output wires.
For any $x\in\bit^n$, $C(x)$ is the output of the circuit $C$ on input $x$.
We consider that $\mathsf{gate_1},\mathsf{gate_2},\cdots,\mathsf{gate}_q$ are arranged in the ascending order of the level.

\paragraph{Our certified secure everlasting garbling scheme.}
We construct a certified everlasting secure garbling scheme $\Sigma_{\mathsf{cegc}}=(\Setup,\qGarble,\allowbreak \qEval,\qDelete,\Vrfy)$
from a certified everlasting secure SKE scheme
$\Sigma_{\mathsf{cesk}}=\SKE.(\keygen,\qEnc,\qDec,\allowbreak \qDelete,\Vrfy)$~(\cref{def:cert_ever_ske}).
Let $\mathcal{K}$ be the key space of $\Sigma_{\mathsf{cesk}}$.
Let $C$ be a leveled boolean circuit.
Let $n,m,q$, and $p$ be the input size,
the output size,
the number of gates,
and
the total number of wires
of $C$, 
respectively.

\begin{description}
\item[$\Setup(1^\secp)$:] $ $
\begin{itemize}
    \item For each $i\in[n]$ and $\sigma\in\bit$, generate $\ske.\sk_{i}^\sigma\la\SKE.\keygen(1^\secp)$.
    \item Output $\{L_{i,\sigma}\}_{i\in[n],\sigma\in\bit}\seteq\{ \ske.\sk_i^\sigma\}_{i\in [n],\sigma\in\bit}$.
\end{itemize}
\item[$\qGarble(1^\secp,C,\{L_{i,\sigma}\}_{i\in[n],\sigma\in\bit})$:] $ $
\begin{itemize}
    \item For each $i\in\{n+1,\cdots, p\}$ and $\sigma\in\bit$, generate $\ske.\sk_{i}^\sigma\la\SKE.\keygen(1^\secp)$.
    \item For each $i\in [q]$, 
    compute 
    \begin{align}
    (\vk_i,\qggate_i)\la \qGateGrbl(\mathsf{gate}_i, \{\ske.\sk_a^\sigma,\ske.\sk_b^\sigma,\ske.\sk_c^\sigma\}_{\sigma\in\bit}), 
    \end{align}
    where $\mathsf{gate}_i=(g,w_a,w_b,w_c)$ and $\qGateGrbl$ is described in Fig~\ref{fig:Garble_circuit}.
    \item For each $i\in[m]$,
    set $\widetilde{d}_i\seteq[(\ske.\sk_{\mathsf{out}_i}^0, 0),(\ske.\sk_{\mathsf{out}_i}^1, 1)]$.
    \item Output $\widetilde{\qC}\seteq (\{\qggate_i\}_{i\in [q]},\{\widetilde{d}_i\}_{i\in [m]})$
    and 
    $\vk\seteq\{\vk_i\}_{i\in[q]}$.
\end{itemize}
\item[$\Eval(\widetilde{C},\{L_{i,x_i}\}_{i\in [n]})$:] $ $
\begin{itemize}
    \item Parse $\widetilde{C}= (\{\qggate_i\}_{i\in [q]},\{\widetilde{d}_i\}_{i\in [m]})$ and $\{L_{i,x_i}\}_{i\in[n]}=\{\ske.\sk'_i\}_{i\in [n]}$.
    \item For each $i\in[q]$, 
    compute $\ske.\sk'_{c}\la\qGateEval(\qggate_i, \ske.\sk'_a,\ske.\sk'_b)$ in the ascending order of the level, where $\qGateEval$ is described in Fig~\ref{fig:Gate_Eval}.
    If $\ske.\sk'_{c}=\bot$, output $\bot$ and abort.
    \item For each $i\in[m]$, set $y[i]=\sigma$ if $\ske.\sk'_{\mathsf{out}_i}=\ske.\sk_{\mathsf{out}_i}^\sigma$. Otherwise, set $y[i]=\bot$, and abort.
    \item Output $y\seteq y[1]||y[2]||\cdots|| y[m]$.
\end{itemize}
\item[$\Delete(\widetilde{\qC})$:] $ $
\begin{itemize}
    \item Parse $\widetilde{\qC}= (\{\qggate_i\}_{i\in [q]},\{\widetilde{d}_i\}_{i\in [m]})$.
    \item For each $i\in[q]$, compute $\cert_i\la\qGateDel(\qggate_i)$, where $\qGateDel$ is described in Fig~\ref{fig:Gate_Del}.
    \item Output $\cert\seteq\{\cert_i\}_{i\in [q]}$.
\end{itemize}
\item[$\Vrfy(\vk,\cert)$:] $ $
\begin{itemize}
    \item Parse
    $\vk=\{\vk_i\}_{i\in[q]}$
    and $\cert= \{\cert_i\}_{i\in[q]}$.
    \item For each $i\in[q]$,
    compute $\bot/\top\la\mathsf{GateVrfy}(\vk_i,\cert_i)$, where $\mathsf{GateVrfy}$ is described in Fig~\ref{fig:Gate_Verify}.
    \item If $\top\la\mathsf{GateVrfy}(\vk_i,\cert_i)$ for all $i\in[q]$, then output $\top$.
    Otherwise, output $\bot$. 
\end{itemize}
\end{description}

\protocol{Gate Garbling Circuit $\qGateGrbl$
}
{The description of $\qGateGrbl$}
{fig:Garble_circuit}
{
\begin{description}
\item[Input:] $\mathsf{gate}_i,\{\ske.\sk_a^\sigma,\ske.\sk_b^\sigma,\ske.\sk_c^\sigma\}_{\sigma\in\bit}$.
\item[Output:] $\qggate_i$ and $\vk_i$.
\end{description}
\begin{enumerate}
\item Parse $\mathsf{gate}_i=(g,w_a,w_b,w_c)$.
\item Sample $\gamma_i\la\mathsf{S}_4$.\footnote{$\mathsf{S}_4$ is the symmetric group of order $4$.}
\item For each $\sigma_a,\sigma_b\in\bit$, sample $p^{\sigma_{a},\sigma_b}_{c}\la\Ks$.
\item For each $\sigma_a,\sigma_b\in\bit$, compute $(\ske.\vk_{a}^{\sigma_a,\sigma_b},\ske.\qct_{a}^{\sigma_a,\sigma_b})\la\SKE.\qEnc(\ske.\sk_a^{\sigma_a},p^{\sigma_a,\sigma_b}_{c})$
and 
$(\ske.\vk_{b}^{\sigma_a,\sigma_b},\ske.\qct_b^{\sigma_a,\sigma_b})\la
\SKE.\qEnc(\ske.\sk_b^{\sigma_b},p^{\sigma_a,\sigma_b}_{c}\oplus \ske.\sk_c^{g(\sigma_a,\sigma_b)})$.
\item Output $\qggate_i\seteq \{\ske.\qct_a^{\sigma_a,\sigma_b},\ske.\qct_b^{\sigma_a,\sigma_b}\}_{\sigma_a,\sigma_b\in\bit}$ in the permutated order of $\gamma_i$ and 
$\vk_i\seteq\{\ske.\vk_{a}^{\sigma_a,\sigma_b},\ske.\vk_{b}^{\sigma_a,\sigma_b}\}_{\sigma_a,\sigma_b\in\bit}$ in the permutated order of $\gamma_i$.
\end{enumerate}
}

\protocol{Gate Evaluating Circuit $\qGateEval$
}
{The description of $\qGateEval$}
{fig:Gate_Eval}
{
\begin{description}
\item[Input:] A garbled gate $\qggate_i$ and $(\ske.\sk'_a,\ske.\sk'_b)$.
\item[Output:] $\ske.\sk_c$ or $\bot$.
\end{description}
\begin{enumerate}
\item Parse $\qggate_i= \{\ske.\qct_a^{\sigma_a,\sigma_b},\ske.\qct_b^{\sigma_a,\sigma_b}\}_{\sigma_a,\sigma_b\in\bit}$.
\item For each $\sigma_a,\sigma_b\in\bit$, compute 
      $q_a^{\sigma_a,\sigma_b}\la\SKE.\qDec(\ske.\sk'_a,\ske.\qct_a^{\sigma_a,\sigma_b})$ and
      $q_b^{\sigma_a,\sigma_b}\la\SKE.\qDec(\ske.\sk'_b,\ske.\qct_b^{\sigma_a,\sigma_b})$.
\item If there exists a unique pair $(\sigma_a,\sigma_b)\in\bit^2$ such that $q_a^{\sigma_a,\sigma_b}\neq \bot$ and $q_b^{\sigma_a,\sigma_b}\neq\bot$, then
      compute $\ske.\sk_{c}^{'\sigma_a,\sigma_b}\seteq q_a^{\sigma_a,\sigma_b}\oplus q_b^{\sigma_a,\sigma_b}$ and output $\ske.\sk'_c\seteq \ske.\sk_c^{'\sigma_a,\sigma_b}$.
      Otherwise, output $\ske.\sk'_c\seteq\bot$.
\end{enumerate}
}

\protocol{Gate Deletion Circuit $\qGateDel$
}
{The description of $\qGateDel$}
{fig:Gate_Del}
{
\begin{description}
\item[Input:] A garbled gate $\qggate_i$.
\item[Output:] $\cert_i$
\end{description}
\begin{enumerate}
\item Parse $\qggate_i= \{\ske.\qct_a^{\sigma_a,\sigma_b},\ske.\qct_b^{\sigma_a,\sigma_b}\}_{\sigma_a,\sigma_b\in\bit}$.
\item For each $\sigma_a,\sigma_b\in\bit$, compute $\ske.\cert_a^{\sigma_a,\sigma_b}\la\SKE.\qDelete(\ske.\qct_a^{\sigma_a,\sigma_b})$.
\item For each $\sigma_a,\sigma_b\in\bit$, compute $\ske.\cert_b^{\sigma_a,\sigma_b}\la\SKE.\qDelete(\ske.\qct_b^{\sigma_a,\sigma_b})$.
\item Output $\cert_i\seteq \{\ske.\cert_a^{\sigma_a,\sigma_b},\ske.\cert_b^{\sigma_a,\sigma_b}\}_{\sigma_a,\sigma_b\in\bit}$.
\end{enumerate}
}

\protocol{Gate Verification Circuit $\mathsf{GateVrfy}$
}
{The description of $\mathsf{GateVrfy}$}
{fig:Gate_Verify}
{
\begin{description}
\item[Input:] $\vk_i$ and $\cert_i$.
\item[Output:] $\top$ or $\bot$.
\end{description}
\begin{enumerate}
\item Parse
$\vk_i=\{\ske.\vk_{a}^{\sigma_a,\sigma_b},\ske.\vk_{b}^{\sigma_a,\sigma_b}\}_{\sigma_a,\sigma_b\in\bit}$ and
$\cert_i= \{\ske.\cert_a^{\sigma_a,\sigma_b},\ske.\cert_b^{\sigma_a,\sigma_b}\}_{\sigma_a,\sigma_b\in\bit}$.
\item For each $\sigma_a,\sigma_b\in\bit$, compute $\top/\bot\la\SKE.\Vrfy(\ske.\vk_{a}^{\sigma_a,\sigma_b},\ske.\cert_a^{\sigma_a,\sigma_b})$.
\item For each $\sigma_a,\sigma_b\in\bit$, compute $\top/\bot\la\SKE.\Vrfy(\ske.\vk_b^{\sigma_a,\sigma_b},\ske.\cert_b^{\sigma_a,\sigma_b})$.
\item If all the outputs are $\top$, then output $\top$.
Otherwise, output  $\bot$.
\end{enumerate}
}

\paragraph{Correctness:}
Correctness easily follows from that of $\Sigma_{\mathsf{cesk}}$.

\paragraph{Security:}
The following two theorems hold.
\begin{theorem}\label{thm:garble_comp_seucirty}
If $\Sigma_{\mathsf{cesk}}$ satisfies the IND-CPA security~(\cref{def:IND-CPA_security_cert_ever_ske}), 
$\Sigma_{\mathsf{cegc}}$ satisfies the selective security~(\cref{def:sel_sec_ever_garb}).
\end{theorem}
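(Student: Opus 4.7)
The plan is to prove \cref{thm:garble_comp_seucirty} by adapting Yao's classical garbling security proof to our XOR-secret-sharing variant, using the IND-CPA security of $\Sigma_{\mathsf{cesk}}$. First I will describe the simulator $\qSim$: on input $(1^\secp, 1^{|C|}, C(x), \{L_{i,x_i}\}_{i\in[n]})$, it first computes the ``active'' bit $\sigma_j^\ast$ carried by every wire $j$ when $C$ is evaluated on $x$ (using $C(x)$ for output wires, the input $x$ for input wires, and fresh consistent choices internally). It then samples fresh internal wire keys $\ske.\sk_j^{\sigma}$ and, for each gate $i=(g,w_a,w_b,w_c)$, produces a table where only the active pair $(\sigma_a^\ast,\sigma_b^\ast)$ is formed exactly as in $\qGateGrbl$ (hence routes to $\ske.\sk_c^{\sigma_c^\ast}$), while the remaining three pairs encrypt dummy values: for the fully inactive pair both ciphertexts encrypt $0$; for each semi-active pair, the ciphertext under the inactive input key encrypts $0$ and the one under the exposed input key encrypts a fresh uniformly random string.

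I will then define hybrids $H_0,H_1,\ldots,H_q$ where $H_0$ is the real experiment $\expb{\Sigma_{\mathsf{cegc}},\qA}{sel}{gbl}(1^\secp,0)$, $H_q$ coincides with $\expb{\Sigma_{\mathsf{cegc}},\qA}{sel}{gbl}(1^\secp,1)$ using $\qSim$, and $H_i$ differs from $H_{i-1}$ only in that gate $i$ (in a fixed topological order) is generated by the simulator rather than by $\qGateGrbl$. The step $H_{i-1}\rightsquigarrow H_i$ proceeds in three sub-hybrids. In the first, I apply (multi-challenge) IND-CPA of $\Sigma_{\mathsf{cesk}}$ under the key $\ske.\sk_a^{1-\sigma_a^\ast}$ to switch the two $a$-side ciphertexts $\ske.\qct_a^{1-\sigma_a^\ast,\sigma_b^\ast}$ and $\ske.\qct_a^{1-\sigma_a^\ast,1-\sigma_b^\ast}$ from encryptions of their real plaintexts to encryptions of $0$. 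In the second sub-hybrid I do the symmetric replacement on the $b$-side using IND-CPA under $\ske.\sk_b^{1-\sigma_b^\ast}$. In the third sub-hybrid I perform an information-theoretic rewriting: each remaining ``real'' plaintext involving a one-time pad $p_c^{\sigma_a,\sigma_b}$ now uses that pad in exactly one place, so I rename $p_c^{\sigma_a,\sigma_b}\oplus\ske.\sk_c^{g(\sigma_a,\sigma_b)}$ (or $p_c^{\sigma_a,\sigma_b}$ itself) as a fresh uniform string, matching $\qSim$'s distribution.

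The main obstacle, and what governs the order of hybrids, is the invariant that the two inactive input keys of gate $i$ are \emph{unexposed} at the moment we invoke IND-CPA on them. For input wires this is immediate, since only the active label $L_{i,x_i}$ is released. For internal wires $a$ that are outputs of some earlier gate $j<i$, the inactive key $\ske.\sk_a^{1-\sigma_a^\ast}$ could in principle appear as a plaintext inside gate $j$'s table; however, by the time we reach hybrid $H_{i-1}$, gate $j$ has already been simulated, and in a simulated gate the inactive output label is not encrypted anywhere. Thus the IND-CPA reduction only needs to serve encryption queries for later real gates whose ciphertexts depend on $\ske.\sk_a^{1-\sigma_a^\ast}$ as a key (answered via its encryption oracle), and a challenge query for the ciphertexts inside gate $i$. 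Verifying this invariant carefully across the topological processing, and unfolding the multi-challenge IND-CPA from the single-challenge definition via a standard inner hybrid, are the technical heart of the argument; everything else is a routine accounting of polynomially many negligible losses.
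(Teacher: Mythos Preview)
Your plan is sound and would go through, but it takes a genuinely different route from the paper's (omitted) argument. The paper defers to the proof of \cref{thm:garble_ever_security}, where the simulator is \emph{not} of the ``one live row plus three dummies'' kind you describe: instead, $\qSim.\qGateGrbl$ has all four entries of every gate XOR-share the \emph{same} output key $\ske.\sk_c^{0}$, and the output decoding table is reprogrammed so that $\ske.\sk_{\mathsf{out}_i}^0\mapsto C(x)_i$. The intermediate hybrids replace gate $j$ by an input-dependent $\mathsf{InputDep}.\qGateGrbl$ in which all four entries route to $\ske.\sk_c^{v(c)}$; the per-gate transition is packaged into an XOR-secret-sharing lemma (\cref{lem:parallel}) that is itself proved by three IND-CPA sub-hybrids, and the final hop $\hyb_q\to$ simulation is a purely statistical relabeling (\cref{prop:garble_hyb_q}). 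Your approach buys a more direct reduction at each gate and avoids the auxiliary lemma; the paper's buys a cleaner separation between the computational step and the statistical step, and its simulator never needs to designate ``active'' labels at all.

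Two small points on your write-up. First, your $\qSim$ is given only $(1^\secp,1^{|C|},C(x),\{L_{i,x_i}\})$ and in particular has neither $x$ nor the gate functions, so it cannot ``compute the active bit $\sigma_j^\ast$ carried by every wire when $C$ is evaluated on $x$''; what it can do (and what suffices) is declare the given input label as active, set $\sigma_{\mathsf{out}_i}^\ast=C(x)_i$, and pick an arbitrary active label on each internal wire. Second, with that fix, $H_q$ does not literally coincide with the simulated experiment: in $H_q$ the active key on an internal wire is $\ske.\sk_c^{v(c)}$, whereas $\qSim$'s choice is arbitrary. You still need one (easy, information-theoretic) step swapping the two i.i.d.\ keys on each internal wire and absorbing the induced entry permutation into the uniform $\gamma_i$; this is the analogue of \cref{prop:garble_hyb_q} and should be stated explicitly.
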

Its proof is similar to that of \cref{thm:garble_ever_security},
and therefore we omit it.

\begin{theorem}\label{thm:garble_ever_security}
If $\Sigma_{\mathsf{cesk}}$ satisfies the certified everlasting IND-CPA security~(\cref{def:cert_ever_security_cert_ever_ske}),
$\Sigma_{\mathsf{cegc}}$ satisfies the selective certified everlasting security~(\cref{def:ever_sec_ever_garb}).
\end{theorem}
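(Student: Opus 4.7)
The plan is to follow the Lindell--Pinkas simulation-based proof strategy for Yao's garbled circuits, adapted to the certified everlasting setting. First I would define the simulator $\qSim$ analogously to the classical Yao simulator: on input $(1^\secp,1^{|C|},C(x),\{L_{i,x_i}\}_{i\in[n]})$, it generates fresh SKE keys $\ske.\sk'_i$ for every non-input wire, fixing the ``active'' key for each wire to be the one that would be obtained during real evaluation (with $\ske.\sk'_i=L_{i,x_i}$ on input wires and the active keys on output wires chosen so that the output decoding table $\widetilde{d}_i$ reproduces $C(x)$). For each gate $(g,w_a,w_b,w_c)$ the simulator produces the four pairs of SKE ciphertexts so that only the single pair matching the active inputs $(\sigma_a^\ast,\sigma_b^\ast)$ secret-shares the active outgoing key, while the three inactive pairs both encrypt $0^{|\ske.\sk|}$ (or any fixed dummy). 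The permutation $\gamma_i$ hides which pair is the active one.

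The main body of the proof will be a sequence of hybrids $\hyb_{0},\hyb_{1},\dots,\hyb_{q}$ in which the garbled gates are rewritten from real to simulated one gate at a time in topological (ascending level) order. In transitioning from $\hyb_{j-1}$ to $\hyb_{j}$, for gate $\mathsf{gate}_j=(g,w_a,w_b,w_c)$, I would flip each of the three inactive pairs of ciphertexts from ``correct secret sharing of $\ske.\sk_c^{g(\sigma_a,\sigma_b)}$'' to ``secret sharing of $0$''. Because the gates are processed in topological order, by the time we arrive at gate $j$, the outgoing active key $\ske.\sk_c^{\sigma_c^\ast}$ of every previously simulated gate has already been fixed (and this is consistent with how $\qSim$ chooses keys), so the simulated and real distributions agree on all ciphertexts outside of the current gate that involve already-rewritten keys.

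The reduction to certified everlasting IND-CPA security of $\Sigma_{\mathsf{cesk}}$ will use (a multi-challenge variant of) the cut-and-choose style lemma analogous to \cref{lem:cut_and_choose_pke}, but for SKE: for any wire $w$ and any inactive bit $1-\sigma^\ast$, the SKE key $\ske.\sk_w^{1-\sigma^\ast}$ is never revealed to the adversary during evaluation (the evaluator only holds $L_{i,x_i}$ on input wires, and the ``inactive'' keys of internal wires are only used inside the garbled gates), hence we can simultaneously change every ciphertext encrypted under $\ske.\sk_w^{1-\sigma^\ast}$ across all gates using this wire as input. The reduction embeds the challenge key into these positions, simulates all remaining encryptions honestly, forwards the adversary's classical deletion certificate (for the relevant ciphertexts) to its own challenger, and upon receiving $\top$ obtains the ``revealed'' key which is needed to simulate the unbounded phase of $\qA_2$; since our verification algorithm $\Vrfy$ checks $\SKE.\Vrfy$ on every component, a valid $\cert$ in $\hyb_{j-1}$ or $\hyb_j$ translates into valid SKE certificates on all required ciphertexts.

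The hard part will be handling the fact that in our construction each inactive SKE key is used to encrypt \emph{several} ciphertexts (one per gate where the wire appears as an input), and, moreover, each ciphertext pair XOR-secret-shares the target key, so switching one half of a pair in isolation would not give indistinguishability. I plan to address this by using the multi-ciphertext/cut-and-choose SKE analogue of \cref{lem:cut_and_choose_pke} (which is proven by a standard hybrid over individual ciphertexts, as in the PKE case), and by switching the two halves of each inactive XOR-share in a small intra-hybrid of two steps: first switch the $\ske.\qct_a^{\sigma_a,\sigma_b}$ component from $p^{\sigma_a,\sigma_b}_c$ to a uniformly random pad $r$, then switch the $\ske.\qct_b^{\sigma_a,\sigma_b}$ component from $p^{\sigma_a,\sigma_b}_c\oplus \ske.\sk_c^{g(\sigma_a,\sigma_b)}$ to $r\oplus 0$, each step justified by the certified everlasting IND-CPA security under the appropriate inactive key. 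Putting all gate-level hybrids together and telescoping via \cref{lem:defference} will yield $|\Pr[\hyb_0=1]-\Pr[\hyb_q=1]|\le \negl(\secp)$, and since $\hyb_0$ coincides with $\expd{\Sigma_{\mathsf{cegc}},\qA}{cert}{ever}{sel}{gbl}(1^\secp,0)$ and $\hyb_q$ with the simulated experiment $\expd{\Sigma_{\mathsf{cegc}},\qA}{cert}{ever}{sel}{gbl}(1^\secp,1)$, this proves the theorem.
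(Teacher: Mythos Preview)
Your high-level strategy---a gate-by-gate hybrid in topological order, reducing each transition to the certified everlasting IND-CPA security of $\Sigma_{\mathsf{cesk}}$---matches the paper's approach. The paper's simulator differs cosmetically: rather than having the three inactive pairs secret-share a dummy $0$, it has \emph{all four} pairs secret-share $\ske.\sk_c^0$, and then spends an extra statistical step (\cref{prop:garble_hyb_q}) relabelling keys to pass from the last hybrid to the simulated experiment. Your variant is fine too.

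The genuine gap is in your two-step intra-hybrid. For two of the three inactive gate entries---namely $(\sigma_a,\sigma_b)=(v(a),v(b){\oplus}1)$ and $(v(a){\oplus}1,v(b))$---only \emph{one} of the two encryption keys $\ske.\sk_a^{\sigma_a},\ske.\sk_b^{\sigma_b}$ is inactive. Your first step (switch $\ske.\qct_a$ from $p$ to a fresh $r$) invokes IND-CPA under $\ske.\sk_a^{\sigma_a}$, and your second (switch $\ske.\qct_b$ from $p\oplus\ske.\sk_c^{g(\sigma_a,\sigma_b)}$ to $r$) invokes IND-CPA under $\ske.\sk_b^{\sigma_b}$; for each of the two ``mixed'' entries one of these keys is the active evaluation key, so that step is not justified. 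The paper handles this with a tailored lemma (\cref{lem:parallel}) that changes all three inactive entries via a \emph{single} IND-CPA reduction per entry, using only the key that is guaranteed to be inactive: since the pad $p$ is uniform, the pair of plaintexts $(p,\,p\oplus s_0)$ is identically distributed to $(p\oplus s_0\oplus s_1,\,p\oplus s_0)$, so switching the shared value from $s_0$ to $s_1$ can always be accomplished by changing only the ciphertext on whichever side uses an inactive key. Once you replace your two-step argument with this one-step re-randomization trick (and fold the other uses of the same inactive key across sibling gates into encryption queries, as \cref{lem:parallel} does), your proof goes through.
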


Let $\widehat{\mathsf{gate}}_1,\widehat{\mathsf{gate}}_2,\cdots ,\widehat{\mathsf{gate}}_q$ be the topology of the gates $\mathsf{gate}_1,\mathsf{gate}_2,\cdots ,\mathsf{gate}_q$ which indicates how gates are connected.
In other words, if $\mathsf{gate}_i=(g,w_a,w_b,w_c)$, then $\widehat{\mathsf{gate}}_i=(\bot,w_a,w_b,w_c)$.
\begin{proof}[Proof of \cref{thm:garble_ever_security}]
First, let us define a simulator $\qSim$ as follows.
\begin{description}
\item[The simulator $\qSim(1^\secp,1^{|C|},C(x),\{L_{i,x_i}\}_{i\in[n]})$:] $ $
\begin{enumerate}
    \item Parse $\{L_{i,x_i}\}_{i\in[n]}\seteq\{\ske.\sk_{i}^{x_i}\}_{i\in[n]}$.
    \item For $i\in[n]$, generate $\ske.\sk_{i}^{x_i\oplus 1}\la \SKE.\keygen(1^{\secp})$. 
    \item For $i\in\{n+1,n+2,\cdots,p\}$ and $\sigma\in\bit$, generate $\ske.\sk_i^{\sigma}\la\SKE.\keygen(1^\secp)$.
    \item For each $i\in[q]$, compute $(\vk_i,\qggate_i)\la \qSim.\qGateGrbl(\widehat{\mathsf{gate}_i}, \{\ske.\sk_a^\sigma,\ske.\sk_b^\sigma,\ske.\sk_c^\sigma\}_{\sigma\in\bit})$, where $\qSim.\qGateGrbl$ is described in Fig~\ref{fig:Sim_Garble_circuit} and $\widehat{\mathsf{gate}_i}=(\bot,w_a,w_b,w_c)$.
    \item For each $i\in[m]$, generate $\widetilde{d_i}\seteq \left[\left(\ske.\sk^{0}_{\mathsf{out}_i}, C(x)_i\right),\left(\ske.\sk^{1}_{\mathsf{out}_i}, C(x)_i\oplus 1 \right)\right]$.
    \item Output $\widetilde{C}\seteq (\{\qggate_i\}_{i\in[q]},\{\widetilde{d_i}\}_{i\in[m]})$ and $\vk\seteq\{\vk_i\}_{i\in[q]}$.
\end{enumerate}
\end{description}

\protocol{Simulation Gate Garbling Circuit $\qSim.\qGateGrbl$
}
{The description of $\qSim.\qGateGrbl$}
{fig:Sim_Garble_circuit}
{
\begin{description}
\item[Input:] $(\widehat{\mathsf{gate}_i},\{\ske.\sk_a^\sigma,\ske.\sk_b^\sigma,\ske.\sk_c^\sigma\}_{\sigma\in\bit})$.
\item[Output:] $\qggate_i$ and $\vk_i$.
\end{description}
\begin{enumerate}
\item For each $\sigma_a,\sigma_b\in\bit$, sample $p^{\sigma_{a},\sigma_b}_{a,b}\la\Ks$.
\item Sample $\gamma_i\la\mathsf{S}_4$.
\item For each $\sigma_a,\sigma_b\in\bit$, compute $(\ske.\vk_{a}^{\sigma_a,\sigma_b},\ske.\qct_{a}^{\sigma_a,\sigma_b})\la\SKE.\qEnc(\ske.\sk_a^{\sigma_a},p^{\sigma_a,\sigma_b}_{c})$
and 
$(\ske.\vk_{b}^{\sigma_a,\sigma_b},\ske.\qct_b^{\sigma_a,\sigma_b})\la
\SKE.\qEnc(\ske.\sk_b^{\sigma_b},p^{\sigma_a,\sigma_b}_{c}\oplus \ske.\sk_c^{0})$.
\item Output $\qggate_i\seteq \{\ske.\qct_a^{\sigma_a,\sigma_b},\ske.\qct_b^{\sigma_a,\sigma_b}\}_{\sigma_a,\sigma_b\in\bit}$ in permutated order of $\gamma_i$ and 
$\vk_i\seteq\{\ske.\vk_{a}^{\sigma_a,\sigma_b},\ske.\vk_{b}^{\sigma_a,\sigma_b}\}_{\sigma_a,\sigma_b\in\bit}$ 
in permutated order of $\gamma_i$.
\end{enumerate}
}

For each $j\in[q]$, we define a QPT algorithm (a simulator) $\mathsf{InputDep}.\qSim_j$ as follows.
\begin{description}
\item[The simulator $\mathsf{InputDep}.\qSim_j(1^\secp,C,x,\{L_{i,x_i}\}_{i\in[n]})$:] $ $
\begin{enumerate}
    \item Parse $\{L_{i,x_i}\}_{i\in[n]}=\{\ske.\sk_{i}^{x_i}\}_{i\in[n]}$. 
    \item For $i\in[n]$, generate $\ske.\sk_{i}^{x_i\oplus 1}\la \SKE.\keygen(1^{\secp})$.
    \item For $i\in\{n+1,n+2,\cdots,p\}$ and $\sigma\in\bit$, generate $\ske.\sk_i^{\sigma}\la\SKE.\keygen(1^\secp)$.
    \item
    For $i\in[j]$, compute $(\vk_i,\qggate_i)\la \mathsf{InputDep}.\qGateGrbl(\mathsf{gate}_i, \{\ske.\sk_a^\sigma,\ske.\sk_b^\sigma,\ske.\sk_c^\sigma\}_{\sigma\in\bit})$, where $\mathsf{InputDep}\qGateGrbl$ is described in Fig.~\ref{fig:Dep_Garble_circuit} and $\mathsf{gate}_i=(g,w_a,w_b,w_c)$
    \item For each $i\in \{j+1,j+2,\cdots,q\}$, compute $(\vk_i,\qggate_i)\la \qGateGrbl(\mathsf{gate}_i,\allowbreak \{\ske.\sk_a^\sigma,\ske.\sk_b^\sigma,\ske.\sk_c^\sigma\}_{\sigma\in\bit})$, where $\qGateGrbl$ is described in Fig~\ref{fig:Garble_circuit} and $\mathsf{gate}_i=(g,w_a,w_b,w_c)$.
    \item For each $i\in[m]$, generate $\widetilde{d_i}\seteq \left[\left(\ske.\sk^{0}_{\mathsf{out}_i}, 0\right),\left(\ske.\sk^{1}_{\mathsf{out}_i}, 1 \right)\right]$.
    \item Output $\widetilde{C}\seteq(\{\qggate_i\}_{i\in[q]},\{\widetilde{d_i}\}_{i\in[m]})$
    and $\vk\seteq\{\vk_i\}_{i\in[q]}$.
\end{enumerate}
\end{description}

\protocol{Input Dependent Gate Garbling Circuit $\mathsf{InputDep}.\qGateGrbl$
}
{The description of $\mathsf{InputDep}.\qGateGrbl$}
{fig:Dep_Garble_circuit}
{
\begin{description}
\item[Input:] $\mathsf{gate}_i,\{\ske.\sk_a^\sigma,\ske.\sk_b^\sigma,\ske.\sk_c^\sigma\}_{\sigma\in\bit}$.
\item[Output:] $\qggate_i$ and $\vk_i$.
\end{description}
\begin{enumerate}
\item For each $\sigma_a,\sigma_b\in\bit$, sample $p^{\sigma_{a},\sigma_b}_{c}\la\Ks$.
\item Sample $\gamma_i\leftarrow \mathsf{S}_4$.
\item For each $\sigma_a,\sigma_b\in\bit$, compute $(\ske.\vk_{a}^{\sigma_a,\sigma_b},\ske.\qct_{a}^{\sigma_a,\sigma_b})\la\SKE.\qEnc(\ske.\sk_a^{\sigma_a},p^{\sigma_a,\sigma_b}_{c})$
and 
$(\ske.\vk_{b}^{\sigma_a,\sigma_b},\ske.\qct_b^{\sigma_a,\sigma_b})\la
\SKE.\qEnc(\ske.\sk_b^{\sigma_b},p^{\sigma_a,\sigma_b}_{c}\oplus \ske.\sk_c^{v(c)})$.
Here, $v(c)$ is the correct value of the bit going over the wire $c$ during the computation of $C(x)$.
\item Output $\qggate_i\seteq \{\ske.\qct_a^{\sigma_a,\sigma_b},\ske.\qct_b^{\sigma_a,\sigma_b}\}_{\sigma_a,\sigma_b\in\bit}$ in permutated order of $\gamma_i$ and 
$\vk_i\seteq\{\ske.\vk_{a}^{\sigma_a,\sigma_b},\ske.\vk_{b}^{\sigma_a,\sigma_b}\}_{\sigma_a,\sigma_b\in\bit}$ in permutated order of $\gamma_i$.
\end{enumerate}
}

For each $j\in\{0,1,\cdots,q\}$, let us define a sequence of hybrid games $\{\mathsf{Hyb}_j\}_{j\in\{0,1,\cdots,q\}}$ against any adversary $\qA\seteq(\qA_1,\qA_2)$, where $\qA_1$ is any QPT adversary and $\qA_2$ is any unbounded adversary.
Note that  
\begin{align}
\mathsf{InputDep}.\qSim_0(1^\secp,C,x,\{L_{i,x_i}\}_{i\in[n]})= \qGarble(1^\secp,C,\{L_{i,\alpha}\}_{i\in[n],\alpha\in\bit}).  
\end{align}

\begin{description}
\item[The hybrid game $\mathsf{Hyb}_j$:] $ $
\begin{enumerate}
    \item $\qA_1$ sends a circuit $C\in\Cs_n$ and an input $x\in\bit^n$ to the challenger.
    \item The challenger computes $\{L_{i,\alpha}\}_{i\in[n],\alpha\in\bit}\la\Samp(1^\secp)$.
    \item  The challenger computes $(\widetilde{\qC},\vk)\la\GC.\mathsf{InputDep}.\qSim_j(1^\secp,C,x,\{L_{i,x_i}\}_{i\in[n]})$,
    and sends $(\widetilde{\qC},\{L_{i,x_i}\}_{i\in[n]})$ to $\qA_1$.
    \item At some point, $\qA_1$ sends $\cert$ to the challenger and the internal state to $\qA_2$.
    \item The challenger computes $\Vrfy(\vk,\cert)\ra\top/\bot$. If the output is $\bot$, then the challenger outputs $\bot$ and sends $\bot$ to $\qA_2$.
    Else, the challenger outputs $\top$ and sends $\top$ to $\qA_2$.
    \item $\qA_2$ outputs $b'\in\bit$.
    \item If the challenger outputs $\top$, then the output of the experiment is $b'$. 
    Otherwise, the output of the experiment is $\bot$.
\end{enumerate}
\end{description}

Note that $\mathsf{Hyb}_0$ is identical to $\expd{\Sigma_{\mathsf{cegc}},\qA}{cert}{ever}{sel}{gbl}(1^\secp,0)$ by definition.
Therefore, \cref{thm:garble_ever_security} easily follows from the following \cref{prop:garble_hyb_j,prop:garble_hyb_q} (whose proofs are given later).
\end{proof}

\begin{proposition}\label{prop:garble_hyb_j}
If $\Sigma_{\mathsf{cesk}}$ satisfies the certified everlasting IND-CPA security, 
it holds that
$
\abs{\Pr[\mathsf{Hyb}_{j-1}=1]-\Pr[\mathsf{Hyb}_{j}=1]} \leq \negl(\secp)
$
for all $j\in[q]$.
\end{proposition}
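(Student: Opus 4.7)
The hybrids $\mathsf{Hyb}_{j-1}$ and $\mathsf{Hyb}_j$ differ only in how gate $j=(g,w_a,w_b,w_c)$ is garbled: the $b$-component at position $(\sigma_a,\sigma_b)$ encrypts $p^{\sigma_a,\sigma_b}_c\oplus\ske.\sk_c^{g(\sigma_a,\sigma_b)}$ under $\ske.\sk_b^{\sigma_b}$ in $\mathsf{Hyb}_{j-1}$ and $p^{\sigma_a,\sigma_b}_c\oplus\ske.\sk_c^{v(c)}$ under the same key in $\mathsf{Hyb}_j$, where $v(c)=g(v(a),v(b))$ is the correct value on wire $c$ during computation of $C(x)$. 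At the position $(v(a),v(b))$ the two plaintexts are identical, so I will introduce at most three sub-hybrids, one for each position $(\sigma_a,\sigma_b)\ne(v(a),v(b))$ for which $g(\sigma_a,\sigma_b)\ne v(c)$.

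The crucial observation is that for those three positions, at least one of the ``wrong'' keys $\ske.\sk_a^{v(a)\oplus 1}$ or $\ske.\sk_b^{v(b)\oplus 1}$ is relevant, and such wrong keys are never given to $\qA_1$ directly and never appear as \emph{plaintexts} in any gate: the already-simulated gates $i<j$ use $\mathsf{InputDep}.\qGateGrbl$ and so put only the ``correct'' keys $\ske.\sk^{v(\cdot)}$ into their $b$-component plaintexts, while the still-honest gates $i>j$ use wrong keys only as SKE \emph{encryption} keys, which is compatible with an IND-CPA-style reduction. I will split the analysis into two cases for each bad position. \textbf{Case (A)}, when $\sigma_b\ne v(b)$: apply the certified everlasting IND-CPA security of $\Sigma_{\mathsf{cesk}}$ under $\ske.\sk_b^{\sigma_b}$ to switch the plaintext of $\ske.\qct_b^{\sigma_a,\sigma_b}$ directly from $p^{\sigma_a,\sigma_b}_c\oplus\ske.\sk_c^{g(\sigma_a,\sigma_b)}$ to $p^{\sigma_a,\sigma_b}_c\oplus\ske.\sk_c^{v(c)}$. \textbf{Case (B)}, when $\sigma_b=v(b)$ and necessarily $\sigma_a\ne v(a)$: apply certified everlasting IND-CPA of $\Sigma_{\mathsf{cesk}}$ under the wrong key $\ske.\sk_a^{\sigma_a}$ to replace the plaintext of $\ske.\qct_a^{\sigma_a,\sigma_b}$ from $p^{\sigma_a,\sigma_b}_c$ by a fresh random value; after this switch the pad $p^{\sigma_a,\sigma_b}_c$ appears only inside $\ske.\qct_b^{\sigma_a,\sigma_b}$ as a one-time pad on $\ske.\sk_c^{(\cdot)}$, so I can rewrite the $b$-side plaintext statistically (the two distributions are identical); finally I re-apply the same IND-CPA step in reverse to restore $p^{\sigma_a,\sigma_b}_c$ inside $\ske.\qct_a^{\sigma_a,\sigma_b}$.

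The reduction $\qB$ for each sub-hybrid generates every SKE key other than the one being attacked, uses its SKE encryption oracle to produce the single ciphertext that is being switched (and, in Case (B), the handful of other ciphertexts encrypted under $\ske.\sk_a^{\sigma_a}$ that arise in later honest gates that take $w_a$ as input), and builds all other garbled gates, labels, verification keys, and auxiliary data of $\Sigma_{\mathsf{cegc}}$ by itself. When $\qA_1$ delivers a certificate $\cert=\{\cert_i\}_{i\in[q]}$ and an internal state, $\qB$ forwards the single sub-certificate for the attacked ciphertext to its SKE challenger and locally runs $\SKE.\Vrfy$ on all remaining sub-certificates using the verification keys it holds; if every sub-certificate verifies (including the one checked by the SKE challenger), $\qB$ passes the SKE challenger's response through to $\qA_2$, otherwise it aborts with $\bot$. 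Because each $\qggate_i$ is verifiable gate-by-gate, this routing succeeds and the SKE's certified everlasting IND-CPA advantage upper bounds $|\Pr[\mathsf{Hyb}_{j-1}=1]-\Pr[\mathsf{Hyb}_j=1]|$ up to a polynomial factor from the hybrid over positions and, in Case (B), an analogous multi-challenge lemma for SKE (the SKE counterpart of \cref{lem:cut_and_choose_pke}, proven by the same cut-and-choose hybrid).

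\textbf{Main obstacle.} The delicate step is the reduction in Case (B): I must argue that the wrong key $\ske.\sk_a^{\sigma_a}$ is safe to target with the SKE challenger despite being used as an encryption key inside several later honest-mode garbled gates (whenever $w_a$ fans out to a gate $i>j$), and simultaneously guarantee that the overall verification key of $\Sigma_{\mathsf{cegc}}$ still behaves consistently so that the everlasting experiment only outputs $\bot$ exactly when the underlying SKE experiment would. Handling the fan-out by using the SKE encryption oracle polynomially many times, and ensuring the deletion-certificate routing remains faithful across all those extra oracle queries, is where the careful book-keeping is needed, but no new idea beyond the standard multi-challenge hybrid is required.
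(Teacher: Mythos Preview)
Your approach is correct and shares the same core insight as the paper: the wrong wire keys $\ske.\sk_a^{v(a)\oplus 1},\ske.\sk_b^{v(b)\oplus 1}$ never occur as plaintexts (the gate producing $w_a$ or $w_b$ has index $<j$ and is already in $\mathsf{InputDep}$ mode), so they can safely play the role of the SKE challenger's key while fan-out is absorbed by encryption queries.

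The organization, however, is genuinely different. The paper does \emph{not} split into your Case~(A)/Case~(B) sub-hybrids. Instead it introduces a tailored multi-challenge experiment (\cref{lem:parallel}) in which the challenger holds both wrong keys simultaneously and produces all three off-diagonal ciphertext pairs at once; the reduction for \cref{prop:garble_hyb_j} is then a single invocation of this lemma. The lemma is in turn proved by three internal hybrids, each a single certified-everlasting IND-CPA step. In particular, for your Case~(B) position $(v(a)\oplus 1,v(b))$ the paper avoids your three-step ``decouple, rename, recouple'' routine: it sends the challenge pair $(m_0,m_1)=(p,\,p\oplus\ske.\sk_c^{g(\cdot)}\oplus\ske.\sk_c^{v(c)})$ directly, so one IND-CPA application plus a renaming of the pad suffices. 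Your route is slightly longer but equally valid; the paper's packaging into \cref{lem:parallel} buys a cleaner single reduction at the cost of stating and proving a bespoke lemma, and removes the need for the separate SKE multi-challenge lemma you allude to (the encryption-query interface of \cref{lem:parallel} already handles fan-out).
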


\begin{proposition}\label{prop:garble_hyb_q}
$\abs{\Pr[\mathsf{Hyb}_q=1]-\Pr[\expd{\Sigma_{\mathsf{cegc}},\qA}{cert}{ever}{sel}{gbl}(1^\secp,1)=1]} \leq \negl(\secp)$.
\end{proposition}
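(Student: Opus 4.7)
The plan is to prove the stronger claim that $\Pr[\mathsf{Hyb}_q = 1] = \Pr[\expd{\Sigma_{\mathsf{cegc}},\qA}{cert}{ever}{sel}{gbl}(1^\secp,1) = 1]$ \emph{exactly}, by exhibiting a measure-preserving bijection $\Phi$ between the randomness of $\mathsf{InputDep}.\qSim_q$ and that of $\qSim$ under which the two views of $\qA_1$ (together with the verification outcome) coincide.

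First I would isolate the only two syntactic differences between $\mathsf{InputDep}.\qSim_q$ and $\qSim$: (i)~the garbled gate at wire $c$ encrypts the XOR-mask plus $\ske.\sk_c^{v(c)}$ under $\mathsf{InputDep}.\qSim_q$ but plus $\ske.\sk_c^0$ under $\qSim$, where $v(c)$ is the correct value carried on wire $c$ when evaluating $C(x)$; and (ii)~the output-decoding table $\widetilde{d}_i$ is $[(\ske.\sk^0_{\mathsf{out}_i}, 0), (\ske.\sk^1_{\mathsf{out}_i}, 1)]$ under $\mathsf{InputDep}.\qSim_q$ but $[(\ske.\sk^0_{\mathsf{out}_i}, C(x)_i), (\ske.\sk^1_{\mathsf{out}_i}, C(x)_i \oplus 1)]$ under $\qSim$. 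Every other ingredient (the master labels, the fresh pads $p_c^{\sigma_a,\sigma_b}$, the gate-level permutations $\gamma_i$, and the SKE encryption randomness) is generated identically.

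The key step is the bijection $\Phi$ on the randomness tape defined as follows: for every wire $c$ with $v(c) = 1$, swap the two freshly sampled SKE keys $\ske.\sk_c^0 \leftrightarrow \ske.\sk_c^1$; simultaneously, for every gate $j$ having $c$ among its incoming wires, relabel the $\sigma$-index of the pads $p^{\sigma_a,\sigma_b}_{\cdot}$ and compose $\gamma_j$ with the corresponding coordinate flip on $\mathsf{S}_4$; and, for every output wire $\mathsf{out}_i$ with $v(\mathsf{out}_i) = C(x)_i = 1$, swap the two entries of $\widetilde{d}_i$. Because the SKE keys are i.i.d., the pads are uniform, and the $\gamma_j$ are uniform on $\mathsf{S}_4$, each component of $\Phi$ is a bijection that preserves the joint distribution on randomness. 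Applying $\Phi$ to a transcript of $\mathsf{InputDep}.\qSim_q$, the $\ske.\sk_c^{v(c)}$-encryption in each garbled gate becomes a $\ske.\sk_c^0$-encryption, while the reshuffled pads and permutation leave the distribution of the four-entry ciphertext unchanged; the entry swap on $\widetilde{d}_i$ converts $[(\cdot,0),(\cdot,1)]$ into $[(\cdot,C(x)_i),(\cdot,C(x)_i \oplus 1)]$ with the keys relabeled consistently. Since the verification key $\vk$ and the behavior of $\qDelete, \Vrfy$ are deterministic functions of these same randomness components, $\Phi$ also preserves the probability that the challenger outputs $\top$ jointly with any event on $\qA_2$'s output. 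Equality of the two experiments, and hence the claimed bound (in fact, zero), follows.

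The main obstacle I anticipate is bookkeeping the consistency of the bijection across the \emph{whole} circuit: each internal wire $c$ plays a dual role as the outgoing wire of one gate and an incoming wire of (possibly several) later gates, so the key swap on $c$ must be absorbed both by the pad/mask structure of the gate producing $c$ and by the permutation/label structure of every gate consuming $c$. I will handle this by processing the wires in topological order and verifying gate-by-gate that the composition of the key-swap with the corresponding $\gamma_j$-flip and pad-relabeling is exactly the identity transformation on the distribution of $\qggate_j$, so that the global bijection $\Phi$ is well defined and leaves the joint view of $\qA$ invariant.
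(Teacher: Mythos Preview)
Your proposal is correct and takes essentially the same approach as the paper's proof: both argue that the two distributions of $\widetilde{\qC}$ are \emph{identical} (not just negligibly close) via the symmetry of the freshly sampled keys $\ske.\sk_c^0$ and $\ske.\sk_c^1$ on internal wires. The paper's proof is terser---it simply observes that ``at any level that is not output, the keys $\ske.\sk_c^0,\ske.\sk_c^1$ are used completely identically in the subsequent level'' and that the output tables are consistently reprogrammed---whereas you make this explicit as a measure-preserving bijection $\Phi$ on the randomness tape. One small clarification worth adding: your swap should be restricted to \emph{non-input} wires (i.e., wires in $\{n+1,\ldots,p\}$), since for input wires one of the two keys is the externally fixed label $L_{i,x_i}$ handed to the adversary and cannot be swapped; fortunately no swap is needed there because input wires are never outgoing wires of any gate, so the difference between $\ske.\sk_c^{v(c)}$ and $\ske.\sk_c^0$ never arises on them.
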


\begin{proof}[Proof of \cref{prop:garble_hyb_j}]
For the proof, we use \cref{lem:parallel} whose statement and proof are given later.
We construct an adversary $\qB$ that breaks the security experiment of $\expc{\Sigma_{\mathsf{cesk}},\qB}{parallel}{cert}{ever}(\secp,b)$,
which is described in \cref{lem:parallel}, assuming that 
$\abs{\Pr[\mathsf{Hyb}_{j-1}=1]-\Pr[\mathsf{Hyb}_{j}=1]}$ is non-negligible.
This contradicts the certified everlasting IND-CPA security of $\Sigma_{\mathsf{cesk}}$
from \cref{lem:parallel}.
Let us describe how $\qB$ works below.
\begin{enumerate}
    \item $\qB$ receives $C\in \Cs_n$ and $x\in\bit^n$ from $\qA_1$. Let $\mathsf{gate_{j}}=(g,w_{\alpha},w_{\beta},w_{\gamma})$.
    \item 
    The challenger of $\expb{\Sigma_{\mathsf{cesk}},\qB}{parallel}{cert}{ever}(\secp,b)$ generates $\ske.\sk_\alpha^{v(\alpha)\oplus 1}\la\SKE.\keygen(1^\secp)$ and $\ske.\sk_\beta^{v(\beta)\oplus 1}\la\SKE.\keygen(1^\secp)$\footnote{Recall that $v(\alpha)$ is the correct value of the bit going over the wire $\alpha$ during the computation of $C(x)$.}.
    \item 
    For each $i\in[p]\setminus\{\alpha,\beta\}$ and $\sigma\in\bit$, $\qB$ generates $\ske.\sk_i^\sigma\la\SKE.\keygen(1^\secp)$. 
    $\qB$ generates $\ske.\sk_\alpha^{v(\alpha)}\la\SKE.\keygen(1^\secp)$ and $\ske.\sk_{\beta}^{v(\beta)}\la\SKE.\keygen(1^\secp)$.
    $\qB$ sets $\{L_{i,x_i}\}_{i\in[n]}\seteq \{\ske.\sk_{i}^{x_i}\}_{i\in[n]}$.
    \item For each $i\in[j-1]$, $\qB$ computes $(\vk_i,\qggate_i)\la \mathsf{InputDep}.\qGateGrbl(\mathsf{gate}_i, \{\ske.\sk_a^\sigma,\allowbreak\ske.\sk_b^\sigma,\ske.\sk_c^\sigma\}_{\sigma\in\bit})$, where $\mathsf{InputDep}.\qGateGrbl$ is described in Fig~\ref{fig:Dep_Garble_circuit} and $\mathsf{gate}_i=(g,w_a,w_b,w_c)$.
    $\qB$ calls the encryption query of $\expc{\Sigma_{\mathsf{cesk}},\qB}{parallel}{cert}{ever}(\secp,b)$ if it needs to use $\ske.\sk_\alpha^{v(\alpha)\oplus 1}$ or $\ske.\sk_\beta^{v(\beta)\oplus 1}$ to run $(\vk_i,\qggate_i)\la \mathsf{InputDep}.\qGateGrbl(\mathsf{gate}_i,\\ \{\ske.\sk_a^\sigma,\ske.\sk_b^\sigma,\ske.\sk_c^\sigma\}_{\sigma\in\bit})$.
    \item 
    $\qB$ samples $p_\gamma^{v(\alpha),v(\beta)}\la \Ks$.
    $\qB$ computes 
    \begin{align}
    &(\ske.\vk_{\alpha}^{v(\alpha),v(\beta)},\ske.\qct_{\alpha}^{v(\alpha),v(\beta)})\la\SKE.\qEnc(\ske.\sk_{\alpha}^{v(\alpha)},p^{v(\alpha),v(\beta)}_{\gamma}),\\
    &(\ske.\vk_{\beta}^{v(\alpha),v(\beta)},\ske.\qct_\beta^{v(\alpha),v(\beta)})\la\SKE.\qEnc(\ske.\sk_\beta^{v(\beta)},p^{v(\alpha),v(\beta)}_{\gamma}\oplus \ske.\sk_\gamma^{v(\gamma)}).
    \end{align}
    \item $\qB$ sets 
    \begin{align}
    &(x_0,y_0,z_0)\seteq (\ske.\sk_{\gamma}^{g(v(\alpha),v(\beta)\oplus 1)},\ske.\sk_{\gamma}^{g(v(\alpha)\oplus 1,v(\beta))},\ske.\sk_{\gamma}^{g(v(\alpha)\oplus 1,v(\beta)\oplus 1)}),\\
    &(x_1,y_1,z_1)\seteq(\ske.\sk_\gamma^{v(\gamma)},\ske.\sk_\gamma^{v(\gamma)},\ske.\sk_\gamma^{v(\gamma)}),
    \end{align}
    and sends $(\ske.\sk_{\alpha}^{v(\alpha)},\ske.\sk_{\beta}^{v(\beta)},\{x_\sigma,y_\sigma,z_\sigma\}_{\sigma\in\bit})$ to the challenger of $\expc{\Sigma_{\mathsf{cesk}},\qB}{parallel}{cert}{ever}(\secp,b)$.
    \item The challenger samples $(x,y,z)\la \Ks^3$ and $(\ske.\sk_\alpha^{v(\alpha)\oplus 1},\ske.\sk_\beta^{v(\beta)\oplus 1})\la\keygen(1^\secp)$.
    The challenger computes 
    \begin{align}
    &(\ske.\vk_{\alpha}^{v(\alpha),v(\beta)\oplus 1},\ske.\qct_{\alpha}^{v(\alpha),v(\beta)\oplus 1})\la\qEnc(\ske.\sk_\alpha^{v(\alpha)},x),\\
    &(\ske.\vk_{\beta}^{v(\alpha),v(\beta)\oplus 1},\ske.\qct_{\beta}^{v(\alpha),v(\beta)\oplus 1})\la\qEnc(\ske.\sk_\beta^{v(\beta)\oplus 1},x\oplus x_b),\\
    &(\ske.\vk_{\alpha}^{v(\alpha)\oplus 1,v(\beta)},\ske.\qct_{\alpha}^{v(\alpha)\oplus 1,v(\beta)})\la\qEnc(\ske.\sk_{\alpha}^{v(\alpha)\oplus 1},y),\\
    &(\ske.\vk_{\beta}^{v(\alpha)\oplus 1,v(\beta)},\ske.\qct_\beta^{v(\alpha)\oplus 1,v(\beta)})\la\qEnc(\ske.\sk_{\beta}^{v(\beta)},y\oplus y_b),\\
    &(\ske.\vk_\alpha^{v(\alpha)\oplus 1,v(\beta)\oplus 1},\ske.\qct_\alpha^{v(\alpha)\oplus 1,v(\beta)\oplus 1})\la \qEnc(\ske.\sk_\alpha^{v(\alpha)\oplus 1},z),\\
    &(\ske.\vk_\beta^{v(\alpha)\oplus 1,v(\beta)\oplus 1},\ske.\qct_\beta^{v(\alpha)\oplus 1,v(\beta)\oplus 1})\la \qEnc(\ske.\sk_\beta^{v(\beta)\oplus 1},z\oplus z_b),
    \end{align}
    and sends
    \ifnum\submission=0
    \begin{align}
    (\ske.\qct_\alpha^{v(\alpha),v(\beta)\oplus 1},
    \ske.\qct_{\beta}^{v(\alpha),v(\beta)\oplus 1},
    \ske.\qct_\alpha^{v(\alpha)\oplus 1,v(\beta)},
    \ske.\qct_{\beta}^{v(\alpha)\oplus 1,v(\beta)},
    \ske.\qct_{\alpha}^{v(\alpha)\oplus 1,v(\beta)\oplus 1},
    \ske.\qct_{\beta}^{v(\alpha)\oplus 1,v(\beta)\oplus 1})
    \end{align}
    \else
    \begin{align}
    &\Big(\ske.\qct_\alpha^{v(\alpha),v(\beta)\oplus 1},
    \ske.\qct_{\beta}^{v(\alpha),v(\beta)\oplus 1},
    \ske.\qct_\alpha^{v(\alpha)\oplus 1,v(\beta)},\\
    &\ske.\qct_{\beta}^{v(\alpha)\oplus 1,v(\beta)},
    \ske.\qct_{\alpha}^{v(\alpha)\oplus 1,v(\beta)\oplus 1},
    \ske.\qct_{\beta}^{v(\alpha)\oplus 1,v(\beta)\oplus 1}\Big)
    \end{align}
    \fi 
    to $\qB$.
    \item $\qB$ samples $\gamma_j\la \mathsf{S}_4$.
    $\qB$ sets $\qggate_j\seteq \{\ske.\qct_\alpha^{\sigma_{\alpha},\sigma_\beta},\ske.\qct_\beta^{\sigma_\alpha,\sigma_\beta}\}_{\sigma_\alpha,\sigma_\beta\in\bit}$ in the permutated order of $\gamma_j$.
    \item For each $i\in\{j+1,j+2,\cdots, q\}$, $\qB$ computes $(\vk_i,\qggate_i)\la \qGateGrbl(\mathsf{gate}_i,\allowbreak \{\ske.\sk_a^\sigma,\ske.\sk_b^\sigma,\ske.\sk_c^\sigma\}_{\sigma\in\bit})$, where $\qB$ calls the encryption query of $\expc{\Sigma_{\mathsf{cesk}},\qB}{parallel}{cert}{ever}(\secp,b)$ if $\qB$ needs to use $\ske.\sk_\alpha^{v(\alpha)\oplus 1}$ or $\ske.\sk_\beta^{v(\beta)\oplus 1}$ to run $(\vk_i,\qggate_i)\la \qGateGrbl(\mathsf{gate}_i,\allowbreak \{\ske.\sk_a^\sigma,\ske.\sk_b^\sigma,\ske.\sk_c^\sigma\}_{\sigma\in\bit})$.
    \item $\qB$ computes $\widetilde{d_i}\seteq [(\ske.\sk^0_{\mathsf{out}_i},0),(\ske.\sk^1_{\mathsf{out}_i},1)]$ for $i\in[m]$,
    sets $\widetilde{\qC}\seteq (\{\qggate_{i}\}_{i\in[q]},\{\widetilde{d_i}\}_{i\in[m]})$,
    and sends $(\widetilde{\qC},\{L_{i,x_i}\}_{i\in[n]})$ to $\qA_1$.
    \item At some point, $\qA_1$ sends $\cert\seteq\{\cert_i\}_{i\in [q]}$ to $\qB$ and the internal state to $\qA_2$, respectively.
    \item 
    $\qB$ re-sorts $\cert_{j}=\{\ske.\cert_{\alpha}^{\sigma_{\alpha},\sigma_{\beta}},\ske.\cert_{\beta}^{\sigma_{\alpha},\sigma_{\beta}}\}_{\sigma_{\alpha},\sigma_{\beta}\in\bit}$ according to $\gamma_j$.
    $\qB$ sends 
    \ifnum\submission=0
    \begin{align}
    (\ske.\cert_{\alpha}^{v(\alpha),v(\beta)\oplus 1},
    \ske.\cert_{\beta}^{v(\alpha),v(\beta)\oplus 1},
    \ske.\cert_\alpha^{v(\alpha)\oplus 1,v(\beta)},
    \ske.\cert_\beta^{v(\alpha)\oplus 1,v(\beta)},
    \ske.\cert_\alpha^{v(\alpha)\oplus 1,v(\beta)\oplus 1},
    \ske.\cert_\beta^{v(\alpha)\oplus 1,v(\beta)\oplus 1})
    \end{align}
    \else
    \begin{align}
    &\Big(\ske.\cert_{\alpha}^{v(\alpha),v(\beta)\oplus 1},
    \ske.\cert_{\beta}^{v(\alpha),v(\beta)\oplus 1},
    \ske.\cert_\alpha^{v(\alpha)\oplus 1,v(\beta)},\\
    &\ske.\cert_\beta^{v(\alpha)\oplus 1,v(\beta)},
    \ske.\cert_\alpha^{v(\alpha)\oplus 1,v(\beta)\oplus 1},
    \ske.\cert_\beta^{v(\alpha)\oplus 1,v(\beta)\oplus 1}\Big)
    \end{align}
    \fi
    to the challenger of $\expc{\Sigma_{\mathsf{cesk}},\qB}{parallel}{cert}{ever}(\secp,b)$ and receives $\bot$ or $(\ske.\sk_\alpha^{'v(\alpha)\oplus 1},\ske.\sk_\beta^{'v(\beta)\oplus 1})$ from the challenger.
    $\qB$ computes $\SKE.\Vrfy(\ske.\vk_{\alpha}^{v(\alpha),v(\beta)},\ske.\cert_{\alpha}^{v(\alpha),v(\beta)})$
    and $\SKE.\Vrfy(\ske.\vk_{\beta}^{v(\alpha),v(\beta)},\ske.\cert_{\beta}^{v(\alpha),v(\beta)})$.
    $\qB$ computes $\mathsf{GateVrfy}(\vk_i,\cert_i)$ for each $i\in\{1,2,\cdots,j-1,j+1,j+2,\cdots,q\}$,
    where $\mathsf{GateVrfy}$ is described in Fig\ref{fig:Gate_Verify}.
    If $\qB$ receives $(\ske.\sk_\alpha^{'v(\alpha)\oplus 1},\ske.\sk_\beta^{'v(\beta)\oplus 1})$ from the challenger,
    $\top\la\SKE.\Vrfy(\ske.\vk_{\alpha}^{v(\alpha),v(\beta)},\ske.\cert_{\alpha}^{v(\alpha),v(\beta)})$,
    $\top\la\SKE.\Vrfy(\ske.\vk_{\beta}^{v(\alpha),v(\beta)},\ske.\cert_{\beta}^{v(\alpha),v(\beta)})$,
    and $\top\la\mathsf{GateVrfy}(\cert_i,\vk_i)$ for all 
    $i\in\{1,2,\cdots,j-1,j+1,j+2,\cdots,q\}$, then $\qB$ sends $\top$ to $\qA_2$.
    Otherwise, $\qB$ sends $\bot$ to $\qA_2$, and aborts.
    \item $\qB$ outputs the output of $\qA_2$.
\end{enumerate}
It is clear that $\Pr[1\la\qB \mid b=0]=\Pr[\hyb_{j-1}=1]$ and $\Pr[1\la\qB \mid b=1]=\Pr[\hyb_{j}=1]$.
Therefore, if for an adversary $\qA$,
$
\abs{\Pr[\mathsf{Hyb}_{j-1}=1]-\Pr[\mathsf{Hyb}_{j}=1]}
$
is non-negligible, then
\[
\left|\Pr[\expc{\Sigma_{\mathsf{cesk}},\qB}{parallel}{cert}{ever}(\secp,0)=1]-\Pr[\expc{\Sigma_{\mathsf{cesk}},\qB}{parallel}{cert}{ever}\allowbreak(\secp,1)=1]\right|
\]
is non-negligible.
From \cref{lem:parallel}, it contradicts the certified everlasting IND-CPA security of $\Sigma_{\mathsf{cesk}}$ , which completes the proof.
\end{proof}

\begin{proof}[Proof of \cref{prop:garble_hyb_q}]
To show \cref{prop:garble_hyb_q}, it is sufficient to prove that the probability distribution of $\widetilde{\qC}$ in $\expc{\Sigma_{\mathsf{cegc}},\qA}{cert}{ever}{select}(1^\secp,1)$ is statistically identical to
that of $\widetilde{\qC}$ in $\mathsf{Hyb}_{q}$.

First, let us remind the difference between $\mathsf{Hyb}_{q}$ and $\expc{\Sigma_{\mathsf{cegc}},\qA}{cert}{ever}{select}(1^\secp,1)$. 
In both experiments , $\widetilde{\qC}$ consists of $\{\qggate_i\}_{i\in[q]}$ and $\{\widetilde{d_i}\}_{i\in[m]}$.
On the other hand the contents of $\{\qggate_i\}_{i\in[q]}$ and $\{\widetilde{d_i}\}_{i\in[m]}$ are different in each experiments.
In $\hyb_{q}$,
$\qggate_i$ consists of $(\ske.\qct_a^{\sigma_a,\sigma_b},\ske.\qct_{b}^{\sigma_a,\sigma_b})$
where 
\begin{align}
&(\ske.\vk_{a}^{\sigma_a,\sigma_b},\ske.\qct_{a}^{\sigma_a,\sigma_b})\la\SKE.\qEnc(\ske.\sk_a^{\sigma_a},p^{\sigma_a,\sigma_b}_{c}),\\
&(\ske.\vk_{b}^{\sigma_a,\sigma_b},\ske.\qct_b^{\sigma_a,\sigma_b})\la\SKE.\qEnc(\ske.\sk_b^{\sigma_b},p^{\sigma_a,\sigma_b}_{c}\oplus \ske.\sk_c^{v(c)}),
\end{align}
and
$\widetilde{d_i}$ is
\begin{align}
    [(\ske.\sk_{\mathsf{out_i}}^0, 0),(\ske.\sk_{\mathsf{out_i}}^1,1)].
\end{align}

In $\expc{\Sigma_{\mathsf{cegc}},\qA}{cert}{ever}{select}(1^\secp,1)$, 
$\qggate_i$ consists of $(\ske.\qct_a^{\sigma_a,\sigma_b},\ske.\qct_{b}^{\sigma_a,\sigma_b})$
where
\begin{align}
&(\ske.\vk_{a}^{\sigma_a,\sigma_b},\ske.\qct_{a}^{\sigma_a,\sigma_b})\la\SKE.\qEnc(\ske.\sk_a^{\sigma_a},p^{\sigma_a,\sigma_b}_{c}),\\
&(\ske.\vk_{b}^{\sigma_a,\sigma_b},\ske.\qct_b^{\sigma_a,\sigma_b})\la\SKE.\qEnc(\ske.\sk_b^{\sigma_b},p^{\sigma_a,\sigma_b}_{c}\oplus \ske.\sk_c^{0}),
\end{align}
and
$\widetilde{d_i}$ is 
\begin{align}
[(\ske.\sk_{\mathsf{out_i}}^0, C(x)_i),(\ske.\sk_{\mathsf{out_i}}^1, C(x)_i\oplus 1)].
\end{align}

The resulting distribution of $(\{\qggate_i\}_{i\in[q]},\{\widetilde{d_i}\}_{i\in[m]})$ in $\mathsf{Hyb}_{q}$ is statistically identical to
the resulting distribution of $(\{\qggate_i\}_{i\in[q]},\{\widetilde{d_i}\}_{i\in[m]})$ in $\expb{\Sigma_{\mathsf{cegc}},\qA}{cert}{ever}{select}(1^\secp,1)$.
This is because, at any level that is not output, the keys $\ske.\sk_c^0,\ske.\sk_c^1$ are used completely identically in the subsequent level so there is no difference
between always encrypting $\ske.\sk_c^{v(c)}$ and $\ske.\sk_c^0$.
At the output level, there is no difference between encrypting $\ske.\sk_c^{v(c)}$ and giving the real mapping $\ske.\sk_c^{v(c)}\ra v(c)$ 
or encrypting $\ske.\sk_c^0$ and giving the programming mapping $\ske.\sk_c^0\ra C(x)_i$, which completes the proof.
\end{proof}

We use the following lemma for the proof of \cref{prop:garble_hyb_j}.
The proof is shown with the standard hybrid argument. It is also easy to see that a similar lemma holds for IND-CPA security.
\begin{lemma}\label{lem:parallel}
Let $\Sigma\seteq(\keygen,\qEnc,\qDec,\qDelete,\Vrfy)$ be a certified everlasting secure SKE scheme.
Let us consider the following security experiment 
$\expc{\Sigma,\qA}{parallel}{cert}{ever}(\secp,b)$ against $\qA$ consisting of any QPT adversary $\qA_1$ and any unbounded adversary $\qA_2$.
\begin{enumerate}
    \item The challenger generates $(\sk'^{0},\sk'^1)\la\keygen(1^\secp)$.
    \item $\qA_1$ can call encryption queries. More formally, it can do the followings: $\qA_1$ chooses $\beta\in\bit$, $\sk\in\mathcal{SK}$ and $m\in\Ms$.
    $\qA_1$ sends $(\beta,\sk,m)$ to the challenger. 
    \begin{itemize}
    \item 
     If $\beta=0$, the challenger generates $m^*\la\Ms$, computes $(\vk^{0}_{m},\qct^{0}_{m})\la\qEnc(\sk'^{0},m^*)$ and $(\vk^{1}_m,\qct^{1}_m)\la\qEnc(\sk,m\oplus m^*)$, and sends $\{\vk^\sigma_m,\qct^\sigma_m\}_{\sigma\in\bit}$ to $\qA_1$.
     \item If $\beta=1$, the challenger generates $m^*\la\Ms$, computes $(\vk^{1}_{m},\qct^{1}_{m})\la\qEnc(\sk'^{1},m\oplus m^*)$ and
     $(\vk^{0}_{m},\qct^{0}_{m})\la\qEnc(\sk,m^*)$, and sends $\{\vk^\sigma_m,\qct^\sigma_m\}_{\sigma\in\bit}$ to $\qA_1$.
    \end{itemize}
    $\qA_1$ can repeat this process polynomially many times.
    \item $\qA_1$ generates $(\sk^0,\sk^1)\la\keygen(1^\secp)$ and chooses two triples of messages $(x_0,y_0,z_0)\in\Ms^3$ and $(x_1,y_1,z_1)\in\Ms^3$, and sends $(\sk^0,\sk^1,\{x_\sigma,y_\sigma,z_\sigma\}_{\sigma\in\bit})$ to the challenger.
    \item The challenger generates $(x,y,z)\la\Ms^3$.
    The challenger computes
    \begin{align}
    &(\vk_{x}^0,\qct_{x}^0)\la\qEnc(\sk^0,x),\,\,\, (\vk_{x}^1,\qct_{x}^1)\la\qEnc(\sk'^1,x\oplus x_b)\\
    &(\vk_{y}^0,\qct_{y}^0)\la\qEnc(\sk'^0,y),\,\,\, (\vk_y^1,\qct_y^1)\la\qEnc(\sk^1,y\oplus y_b)\\
    &(\vk_z^0,\qct_z^0)\la \qEnc(\sk'^0,z), \,\,\, (\vk_z^1,\qct_z^1)\la \qEnc(\sk'^1,z\oplus z_b)
    \end{align}
    and sends $\{\qct_{x}^\sigma,\qct_y^\sigma,\qct_z^\sigma\}_{\sigma\in\bit}$ to $\qA_1$.
    \item $\qA_1$ can call encryption queries. More formally, it can do the followings:
    $\qA_1$ chooses $\beta\in\bit$, $\sk\in\mathcal{SK}$ and $m\in\Ms$.
    $\qA_1$ sends $(\beta,\sk,m)$ to the challenger. 
    \begin{itemize}
    \item 
     If $\beta=0$, the challenger generates $m^*\la\Ms$, computes $(\vk^{0}_{m},\qct^{0}_{m})\la\qEnc(\sk'^{0},m^*)$ and $(\vk^{1}_m,\qct^{1}_m)\la\qEnc(\sk,m\oplus m^*)$, and sends $\{\vk^\sigma_m,\qct^\sigma_m\}_{\sigma\in\bit}$ to $\qA_1$.
     \item If $\beta=1$, the challenger generates $m^*\la\Ms$, computes $(\vk^{1}_{m},\qct^{1}_{m})\la\qEnc(\sk'^{1},m\oplus m^*)$ and
     $(\vk^{0}_{m},\qct^{0}_{m})\la\qEnc(\sk,m^*)$, and sends $\{\vk^\sigma_m,\qct^\sigma_m\}_{\sigma\in\bit}$ to $\qA_1$.
    \end{itemize}
    $\qA_1$ can repeat this process polynomially many times.
    \item $\qA_1$ sends $\{\cert_x^\sigma,\cert_y^\sigma,\cert_z^\sigma\}_{\sigma\in\bit}$ to the challenger, and sends the internal state to $\qA_2$.
    \item The challenger computes $\Vrfy(\vk_x^\sigma,\cert_x^\sigma)$, $\Vrfy(\vk_y^\sigma,\cert_y^\sigma)$ and $\Vrfy(\vk_z^\sigma,\cert_z^\sigma)$ for each $\sigma\in\bit$. If all results are $\top$, then the challenger outputs $\top$, and sends $\{\sk'^\sigma\}_{\sigma\in\{0,1\}}$ to $\qA_2$.
    Otherwise, the challenger outputs $\bot$, and sends $\bot$ to $\qA_2$.
    \item $\qA_2$ outputs $b'\in\bit$.
    \item If the challenger outputs $\top$, then the output of the experiment is $b'$. Otherwise, the output of the experiment is $\bot$.
\end{enumerate}
If the $\Sigma$ satisfies the certified everlasting IND-CPA security,
\begin{align}
\advc{\Sigma,\qA}{parallel}{cert}{ever}(\secp)
\seteq \abs{\Pr[ \expc{\Sigma,\qA}{parallel}{cert}{ever}(\secp, 0)=1] - \Pr[ \expc{\Sigma,\qA}{parallel}{cert}{ever}(\secp, 1)=1] }\leq \negl(\secp)
\end{align}
for any QPT adversary $\qA_1$ and any unbounded adversary $\qA_2$.
\end{lemma}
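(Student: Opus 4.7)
}
The plan is to use a standard hybrid argument that transforms $\expc{\Sigma,\qA}{parallel}{cert}{ever}(\secp,0)$ into $\expc{\Sigma,\qA}{parallel}{cert}{ever}(\secp,1)$ by switching each of the three challenge triples separately. I would introduce three intermediate hybrids: $\hyb_0$ is the original game with $b=0$; $\hyb_1$ changes only the $x$-slot so that $\qct_x^1$ encrypts $x\oplus x_1$ instead of $x\oplus x_0$; $\hyb_2$ further changes the $y$-slot; $\hyb_3$ further changes the $z$-slot; and $\hyb_3$ equals the game with $b=1$. Observing that in each slot exactly one of the two ciphertexts is produced under one of the challenger-owned keys $\sk'^0,\sk'^1$ (the other component uses the adversary-supplied key and is therefore simulatable directly), each hybrid step reduces to the certified everlasting IND-CPA security of $\Sigma$.

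For concreteness, to bound $\abs{\Pr[\hyb_0=1]-\Pr[\hyb_1=1]}$ I would construct a reduction $\qB$ that plays the certified everlasting IND-CPA game of $\Sigma$ against the ``target'' key $\sk'^1$ while sampling $\sk'^0$ internally. All encryption queries with $\beta=0$ are answered using the internally known $\sk'^0$ together with the adversary-supplied $\sk$; queries with $\beta=1$ are answered by picking $m^*\la\Ms$ internally, computing the $\sk$-part locally, and obtaining the $\sk'^1$-part from the encryption oracle (which returns the ciphertext together with its verification key, so we can verify $\cert^1_m$ later on behalf of $\qA_1$). At the challenge phase, $\qB$ samples $x,y,z\la\Ms$, locally computes $\qct_x^0=\qEnc(\sk^0,x)$, $\qct_y^0=\qEnc(\sk'^0,y)$, $\qct_y^1=\qEnc(\sk^1,y\oplus y_0)$ and $\qct_z^0=\qEnc(\sk'^0,z)$; it forwards $(x\oplus x_0,x\oplus x_1)$ as its own IND-CPA challenge to obtain $\qct_x^1$; and it obtains $\qct_z^1$ via the encryption oracle on input $z\oplus z_0$. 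To handle the certificate phase, $\qB$ forwards $\cert_x^1$ to its challenger (receiving either $\sk'^1$ or $\bot$) and verifies the remaining five certificates locally using the verification keys it holds. If everything verifies, it hands $(\sk'^0,\sk'^1)$ to $\qA_2$ and outputs $\qA_2$'s guess. This perfectly interpolates between $\hyb_0$ and $\hyb_1$, so any non-negligible gap breaks certified everlasting IND-CPA.

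The analogous reductions for the remaining two steps are nearly identical, but with a small reparametrisation for the $y$-slot: since $\qct_y^1$ is the component that carries $y_b$ and is encrypted under the adversary-supplied $\sk^1$, I reparametrise by letting $y'\seteq y\oplus y_b$ play the role of a fresh uniform message, so that $\qct_y^1=\qEnc(\sk^1,y')$ is simulatable locally, while $\qct_y^0=\qEnc(\sk'^0,y'\oplus y_b)$ is the ciphertext whose plaintext changes with $b$; this step therefore reduces to certified everlasting IND-CPA of $\sk'^0$, with $\sk'^1$ sampled internally and used to answer $\beta=1$ queries and to produce $\qct_z^1$. The $z$-step can reduce to either key (I would target $\sk'^0$, moving $\qct_z^0$'s plaintext from $z$ to $z\oplus(z_0\oplus z_1)$ after an analogous reparametrisation).

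The main obstacle I anticipate is bookkeeping rather than conceptual: each reduction must (i) answer polynomially many encryption queries that mix the challenger's key with adversary-chosen keys both before and after the challenge, (ii) produce and keep all five non-target verification keys so that it can verify the corresponding certificates locally without the aid of its own challenger, and (iii) correctly output $\bot$ exactly when the parallel-game challenger would, so that the conditional $\top$-restriction on the final output is preserved on both sides of the reduction. Once these details are handled carefully, summing the three hybrid gaps via the triangle inequality yields $\abs{\Pr[\expc{\Sigma,\qA}{parallel}{cert}{ever}(\secp,0)=1]-\Pr[\expc{\Sigma,\qA}{parallel}{cert}{ever}(\secp,1)=1]}\leq \negl(\secp)$, as required.
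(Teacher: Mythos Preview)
Your proposal is correct and follows essentially the same approach as the paper: a three-step hybrid argument that switches one of the $x,y,z$ slots at a time, each step reducing to the certified everlasting IND-CPA security of $\Sigma$ via the same reparametrisation trick (shifting the uniform mask so that the ``changing'' plaintext lands under the challenger's key). The only cosmetic difference is the order of the hybrids---you switch $x$, then $y$, then $z$, whereas the paper switches $z$, then $y$, then $x$---and correspondingly which of $\sk'^0,\sk'^1$ is embedded as the IND-CPA challenger's key in each step.
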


\begin{proof}[Proof of \cref{lem:parallel}]
We define the following hybrid experiment.
\begin{description}
\item[$\hyb_{1}$:] This is identical to $\expc{\Sigma,\qA}{parallel}{cert}{ever}(\secp, 0)$ except that the challenger encrypts $(x_0,y_0,z_1)$ instead of encrypting $(x_0,y_0,z_0)$.
\item[$\hyb_{2}$:] This is identical to $\hyb_{1}$ except that the challenger encrypts $(x_0,y_1,z_1)$ instead of encrypting $(x_0,y_0,z_1)$.
\end{description}

\cref{lem:parallel} easily follows from the following \cref{prop:Exp_Hyb_1_parallel,prop:Hyb_1_Hyb_2_parallel,prop:Hyb_2_Hyb_3_parallel} (whose proof is given later.).
\end{proof}

\begin{proposition}\label{prop:Exp_Hyb_1_parallel}
If $\Sigma$ is certified everlasting IND-CPA secure, it holds that
\[
\abs{\Pr[\expc{\Sigma,\qA}{parallel}{cert}{ever}(\secp, 0)=1]-\Pr[\hyb_{1}=1]}\leq \negl(\lambda).
\]
\end{proposition}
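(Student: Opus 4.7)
The plan is to reduce the indistinguishability of $\expc{\Sigma,\qA}{parallel}{cert}{ever}(\secp,0)$ and $\hyb_{1}$ directly to the certified everlasting IND-CPA security of $\Sigma$ from \cref{def:cert_ever_security_cert_ever_ske}. The two experiments differ only in a single component of the challenge: $\qct_z^1 \gets \qEnc(\sk'^1, z \oplus z_0)$ versus $\qct_z^1 \gets \qEnc(\sk'^1, z \oplus z_1)$, with everything else — including $\qct_z^0 \gets \qEnc(\sk'^0, z)$ — distributed identically. Hence the natural reduction embeds an external certified everlasting IND-CPA challenge under key $\sk'^1$ at the $\qct_z^1$ slot.

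First I would construct $\qB = (\qB_1, \qB_2)$ playing against $\expd{\Sigma,\qB}{cert}{ever}{ind}{cpa}(\secp,b)$, identifying the external challenger's key with $\sk'^1$. $\qB_1$ generates $\sk'^0 \gets \keygen(1^\secp)$ locally and simulates $\qA_1$'s view: for each encryption query $(\beta, \sk, m)$, $\qB_1$ samples $m^* \la \Ms$ and produces the two ciphertext-key pairs using its local $\sk'^0$ when $\beta = 0$, or by forwarding a single query $m \oplus m^*$ to its own encryption oracle under $\sk'^1$ when $\beta = 1$, relying on the fact that the IND-CPA encryption oracle returns both $\vk$ and $\qct$. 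After $\qA_1$ commits to $(\sk^0, \sk^1, \{x_\sigma, y_\sigma, z_\sigma\}_{\sigma})$, $\qB_1$ samples fresh $x, y, z \la \Ms$, computes the five components $\qct_x^0, \qct_x^1, \qct_y^0, \qct_y^1, \qct_z^0$ using $\sk^0$, the external encryption oracle on $x \oplus x_0$, local $\sk'^0$, $\sk^1$, and local $\sk'^0$ respectively, and submits $(z \oplus z_0, z \oplus z_1)$ as its challenge plaintext pair to obtain $\qct_z^1$. It then forwards $\{\qct_x^\sigma, \qct_y^\sigma, \qct_z^\sigma\}_\sigma$ to $\qA_1$ and continues answering post-challenge encryption queries in the same manner.

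When $\qA_1$ returns the certificates $\{\cert_x^\sigma, \cert_y^\sigma, \cert_z^\sigma\}_\sigma$, $\qB_1$ locally verifies all of them except $\cert_z^1$ (whose verification key it never learned, since the IND-CPA challenge reveals only $\qct$, not $\vk$), and forwards $\cert_z^1$ to its own challenger, which replies with either $\sk'^1$ or $\bot$. If every local verification accepts and the external challenger also accepts, $\qB_2$ passes $\{\sk'^\sigma\}_\sigma$ and $\qA_1$'s residual state to $\qA_2$ and outputs $\qA_2$'s guess; otherwise it outputs $\bot$. When the external challenge bit is $0$, $\qB$'s simulation is distributed exactly as $\expc{\Sigma,\qA}{parallel}{cert}{ever}(\secp,0)$; when it is $1$, it is distributed exactly as $\hyb_{1}$. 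The main subtlety is to check that delegating verification of the one missing $\vk_z^1$ to the external challenger — and matching the post-verification key reveal — preserves the joint distribution of $\qA_2$'s view; once this bookkeeping is confirmed, any non-negligible gap in $\qA$'s distinguishing advantage transfers to $\qB$, contradicting the certified everlasting IND-CPA security of $\Sigma$.
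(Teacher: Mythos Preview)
Your reduction is correct, but it differs from the paper's. You embed the external IND-CPA challenge at $\qct_z^1$ under the key $\sk'^1$, with challenge pair $(z\oplus z_0,\,z\oplus z_1)$; this is the most direct choice since $\qct_z^1$ is exactly the ciphertext whose plaintext changes between the two hybrids. The paper instead identifies the external key with $\sk'^0$, fixes $\qct_z^1\la\qEnc(\sk'^1,z\oplus z_0)$ locally, and embeds the challenge at $\qct_z^0$ with pair $(z,\,z\oplus z_0\oplus z_1)$; it then argues that when $b=1$ the resulting pair $(\qct_z^0,\qct_z^1)$ encrypts $(z\oplus z_0\oplus z_1,\,z\oplus z_0)$, which is identically distributed to $(z,\,z\oplus z_1)$ because $z$ is uniform.

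Your approach is cleaner for this particular proposition. The reason the paper takes the indirect route is that the same relabeling trick is \emph{forced} in the next step, \cref{prop:Hyb_1_Hyb_2_parallel}: there the changing ciphertext is $\qct_y^1$, which is encrypted under the adversary-chosen key $\sk^1$, so one cannot plant a challenge there and must instead shift the change to $\qct_y^0$ via the uniformity of $y$. The paper presumably wrote all three propositions in the same style so that ``the proof is very similar'' genuinely applies. If you keep your direct reduction here, just be aware that your argument for \cref{prop:Hyb_1_Hyb_2_parallel} will have to look different and will need the relabeling idea.
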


\begin{proposition}\label{prop:Hyb_1_Hyb_2_parallel}
If $\Sigma$ is certified everlasting IND-CPA secure, it holds that
\[
\abs{\Pr[\hyb_{1}=1]-\Pr[\hyb_{2}=1]}\leq \negl(\lambda).
\]
\end{proposition}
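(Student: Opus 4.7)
The plan is to reduce the indistinguishability of $\hyb_1$ and $\hyb_2$ to the certified everlasting IND-CPA security of $\Sigma$ applied to the challenger-generated key $\sk'^0$, using the encryption oracle in that game to handle the many other ciphertexts produced under $\sk'^0$. The first step is a reparametrization of the $y$-slot randomness: in both hybrids $y\la\Ms$ is uniform, so setting $\tilde{y}\seteq y\oplus y_b$ (with $b=0$ for $\hyb_1$ and $b=1$ for $\hyb_2$) gives an equivalent description in which $\tilde{y}$ is uniform, $\qct_y^1=\qEnc(\sk^1,\tilde{y})$ is distributed identically in both hybrids, and $\qct_y^0=\qEnc(\sk'^0,\tilde{y}\oplus y_b)$ is the single component that differs. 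The $x$- and $z$-slots are distributed identically in $\hyb_1$ and $\hyb_2$, since the switch from $z_0$ to $z_1$ has already happened and $x_0$, $y_0$, $z_1$, $\sk^0$, $\sk^1$, $\sk'^0$, $\sk'^1$ are sampled and used the same way.

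The second step is to construct a reduction $\qB$ that plays $\expd{\Sigma,\qB}{cert}{ever}{ind}{cpa}(\secp,b)$ with the external challenger's secret key playing the role of $\sk'^0$. $\qB$ internally samples $\sk'^1\la\keygen(1^\secp)$. To answer a pre- or post-challenge encryption query $(\beta,\sk,m)$ from $\qA_1$, $\qB$ samples $m^*\la\Ms$ and, if $\beta=0$, forwards $m^*$ to its own encryption oracle to obtain $(\vk_m^0,\qct_m^0)$ and computes $(\vk_m^1,\qct_m^1)\la\qEnc(\sk,m\oplus m^*)$ itself; if $\beta=1$ it uses $\sk'^1$ and $\sk$ entirely on its own. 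When $\qA_1$ sends the tuple $(\sk^0,\sk^1,\{x_\sigma,y_\sigma,z_\sigma\}_{\sigma\in\bit})$, $\qB$ samples $\tilde{y}\la\Ms$, submits the challenge pair $(\tilde{y}\oplus y_0,\tilde{y}\oplus y_1)$ to its own challenger, and receives back $\qct_y^0$; it computes $\qct_y^1=\qEnc(\sk^1,\tilde{y})$ directly. For the $x$-slot, $\qB$ produces $(\vk_x^0,\qct_x^0)\la\qEnc(\sk^0,x)$ and $(\vk_x^1,\qct_x^1)\la\qEnc(\sk'^1,x\oplus x_0)$ honestly; for the $z$-slot, it samples $z\la\Ms$, queries the encryption oracle to obtain $(\vk_z^0,\qct_z^0)$, and computes $(\vk_z^1,\qct_z^1)\la\qEnc(\sk'^1,z\oplus z_1)$ itself. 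All verification keys except the one hidden inside $\qB$'s challenger are stored for later use.

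The third step handles the deletion phase. When $\qA_1$ submits $\{\cert_x^\sigma,\cert_y^\sigma,\cert_z^\sigma\}_{\sigma\in\bit}$ and hands its internal state to $\qA_2$, $\qB$ forwards $\cert_y^0$ to its own challenger as the challenge deletion certificate and locally runs $\Vrfy$ on all five other challenge certificates together with the certificates corresponding to oracle-answered encryption queries, using the stored verification keys. If the external challenger accepts and releases $\sk'^0$, and all locally checked certificates verify, $\qB$ sends $(\sk'^0,\sk'^1)$ to $\qA_2$; otherwise $\qB$ aborts and outputs $\bot$. Because the simulation is perfect in either branch, $\Pr[\qB=1\mid b=0]=\Pr[\hyb_1=1]$ and $\Pr[\qB=1\mid b=1]=\Pr[\hyb_2=1]$, so any non-negligible gap between $\hyb_1$ and $\hyb_2$ contradicts the certified everlasting IND-CPA security of $\Sigma$.

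The main obstacle will be bookkeeping: $\sk'^0$ is used in many places throughout $\expc{\Sigma,\qA}{parallel}{cert}{ever}$ (the challenge $y$-slot, the challenge $z$-slot, and every $\beta=0$ encryption query), and the reduction must route exactly one of these, the $y$-slot, to its single-challenge IND-CPA challenger and the rest to the encryption oracle, while still holding onto the right verification keys to verify the non-challenge deletion certificates after the external challenger reveals $\sk'^0$. Once this routing is in place, the semantic content of the proof is a one-line IND-CPA step on the plaintext shift by $y_0\oplus y_1$ inside $\qct_y^0$.
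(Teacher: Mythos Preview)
Your proposal is correct and follows essentially the same approach the paper intends: the paper simply says the proof is ``very similar to that of \cref{prop:Exp_Hyb_1_parallel}'' and skips it, and your reduction is exactly the natural adaptation of that argument to the $y$-slot, routing the single IND-CPA challenge to $\qct_y^0$ under $\sk'^0$ via the reparametrization $\tilde{y}=y\oplus y_b$. One small cleanup: there are no ``certificates corresponding to oracle-answered encryption queries'' in this experiment---$\qA_1$ only submits the six challenge certificates $\{\cert_x^\sigma,\cert_y^\sigma,\cert_z^\sigma\}_{\sigma\in\bit}$---so you should just say $\qB$ locally verifies the five remaining challenge certificates using the stored verification keys.
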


\begin{proposition}\label{prop:Hyb_2_Hyb_3_parallel}
If $\Sigma$ is certified everlasting IND-CPA secure, it holds that
\[
\abs{\Pr[\hyb_{2}=1]-\Pr[\expc{\Sigma,\qA}{parallel}{cert}{ever}(\secp, 1)=1]}\leq \negl(\lambda).
\]
\end{proposition}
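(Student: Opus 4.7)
The plan is to reduce to the certified everlasting IND-CPA security of $\Sigma$ (\cref{def:cert_ever_security_cert_ever_ske}). The only difference between $\hyb_2$ and $\expc{\Sigma,\qA}{parallel}{cert}{ever}(\secp,1)$ is the message encrypted inside $\qct_x^1$: in $\hyb_2$ it is $x\oplus x_0$, while in the target experiment it is $x\oplus x_1$, where $x\la\Ms$ is fresh. Both ciphertexts are produced under the challenger's key $\sk'^1$. Every other ciphertext in the game is either encrypted under keys that the reduction can simulate itself ($\sk'^0$, self-generated), or under keys that the adversary $\qA_1$ itself produces ($\sk^0,\sk^1$), or under $\sk'^1$ but with a \emph{message that does not depend on the bit} (namely $\qct_z^1$, which encrypts $z\oplus z_1$ in both hybrids, and every oracle-generated ciphertext).

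The reduction $\qB=(\qB_1,\qB_2)$ against $\expd{\Sigma,\qB}{cert}{ever}{ind}{cpa}(\secp,b)$ proceeds as follows. $\qB_1$ implicitly embeds the external secret key as $\sk'^1$, generates $\sk'^0\la\keygen(1^\secp)$ itself, and simulates $\qA_1$. For each encryption query $(\beta,\sk,m)$ of $\qA_1$: if $\beta=0$, it encrypts honestly using $\sk'^0$ and $\sk$; if $\beta=1$, it samples $m^*\la\Ms$, computes $\qct_m^0\la\qEnc(\sk,m^*)$ itself, and forwards $m\oplus m^*$ to its external encryption oracle to obtain $(\vk_m^1,\qct_m^1)$. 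When $\qA_1$ outputs $(\sk^0,\sk^1,\{x_\sigma,y_\sigma,z_\sigma\}_\sigma)$, $\qB_1$ samples fresh $x,y,z\la\Ms$, sends the challenge pair $(x\oplus x_0, x\oplus x_1)$ to its external challenger to receive $\qct_x^1$, computes $\qct_x^0\la\qEnc(\sk^0,x)$, computes $\qct_y^0\la\qEnc(\sk'^0,y)$ and $\qct_y^1\la\qEnc(\sk^1,y\oplus y_1)$ itself, and obtains $\qct_z^0\la\qEnc(\sk'^0,z)$ itself and $\qct_z^1$ via an external encryption query on $z\oplus z_1$. Post-challenge encryption queries are handled exactly as before.

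For the verification phase, when $\qA_1$ submits $\{\cert_x^\sigma,\cert_y^\sigma,\cert_z^\sigma\}_\sigma$, $\qB_1$ verifies $\cert_x^0,\cert_y^0,\cert_y^1,\cert_z^0,\cert_z^1$ locally using the verification keys it already possesses (note that $\vk_z^1$ was returned by the external encryption oracle), forwards $\cert_x^1$ to the external challenger as its certificate, and passes the adversary's state to $\qB_2$. If every local check returns $\top$ and the external challenger returns $\sk'^1$ (as opposed to $\bot$), then $\qB_2$ forwards $\{\sk'^0,\sk'^1\}$ to $\qA_2$; otherwise $\qB_2$ aborts with $\bot$. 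Finally $\qB_2$ outputs whatever $\qA_2$ outputs. By construction $\qB$ perfectly simulates $\hyb_2$ when its challenge bit is $0$ and $\expc{\Sigma,\qA}{parallel}{cert}{ever}(\secp,1)$ when it is $1$, and the experiment output aligns, so any distinguishing advantage of $\qA$ between the two translates directly into an IND-CPA advantage of $\qB$.

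The main subtlety to verify carefully will be that the acceptance condition is faithfully simulated: the parallel experiment outputs $\bot$ unless \emph{all six} certificates verify, while $\qB$ only receives one bit ($\top/\bot$ implicit in whether it gets $\sk'^1$ or $\bot$) from its external challenger corresponding to $\cert_x^1$ alone. Combining this single external bit with the five locally checked bits yields exactly the conjunction needed, so the joint distributions of $(\rho_{\qA_2},\swin)$ in the simulated and real experiments agree. The uniform randomness of $x$ ensures the two IND-CPA challenge messages $x\oplus x_0$ and $x\oplus x_1$ are individually uniformly distributed (as is required for the simulation to match the real distributions of $\hyb_2$ and the target experiment), completing the reduction.
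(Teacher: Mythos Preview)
Your proof is correct and follows the same reduction strategy the paper uses for \cref{prop:Exp_Hyb_1_parallel} (to which the paper defers this proposition): embed one of the two challenger keys as the external IND-CPA key, simulate the remaining ciphertexts either locally or via the encryption oracle, route the single changed ciphertext through the IND-CPA challenge, and combine the external verification bit with the five local verifications to reproduce the parallel experiment's acceptance condition. The only cosmetic difference is that the paper's analogous proof uses a re-randomization trick (shifting the companion ciphertext and relabeling the fresh mask), whereas here you send $(x\oplus x_0,\,x\oplus x_1)$ directly as the challenge pair; since $\qct_x^1$ is itself under the challenger key $\sk'^1$, your direct route is in fact the natural one for this hybrid step.
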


\begin{proof}[Proof of \cref{prop:Exp_Hyb_1_parallel}]
We assume that
$
\abs{\Pr[\expc{\Sigma,\qA}{parallel}{cert}{ever}(\secp, 0)=1]-\Pr[\hyb_{1}(1)=1]}
$
is non-negligible, and construct an adversary $\qB$ that breaks the security experiment of $\expd{\Sigma,\qB}{cert}{ever}{ind}{cpa}(\secp, b)$. 
This contradicts the certified everlasting IND-CPA security of $\Sigma$.
Let us describe how $\qB$ works.
\begin{enumerate}
    \item The challenger of $\expd{\Sigma,\qB}{cert}{ever}{ind}{cpa}(\secp,b)$ generates $\sk'^{0}\la\keygen(1^\secp)$, and $\qB$ generates $\sk'^{1}\la\keygen(1^\secp)$.
    \item $\qA_1$ chooses $\beta\in\bit$, $\sk\in\Ks$ and $m\in\Ms$.
    $\qA_1$ sends $(\beta,\sk,m)$ to $\qB$.
    \begin{itemize} 
    \item If $\beta=0$, $\qB$ generates $m^*\la\Ms$, sends $m^*$ to the challenger, receives $(\vk_m^0,\qct_m^0)$ from the challenger, computes $(\vk^{1}_m,\qct^{1}_m)\la\qEnc(\sk,m\oplus m^*)$, and sends $\{\vk^\sigma_m,\qct^\sigma_m\}_{\sigma\in\bit}$ to $\qA_1$.
    \item If $\beta=1$, $\qB$ generates $m^*\la\Ms$, computes $(\vk^{1}_{m},\qct^{1}_{m})\la\qEnc(\sk'^{1},m\oplus m^*)$ and $(\vk^{0}_{m},\qct^{0}_{m})\la\qEnc(\sk,m^*)$ and sends $\{\vk^\sigma_m,\qct^\sigma_m\}_{\sigma\in\bit}$ to $\qA_1$.
    \end{itemize}
    $\qB$ repeats this process when $(\beta,\sk,m)$ is sent from $\qA_1$.
    \item $\qB$ receives $(\sk^0,\sk^1,\{x_\sigma,y_\sigma,z_\sigma\}_{\sigma\in\bit})$ from $\qA_1$.
    \item $\qB$ generates $(x,y,z)\la\Ms^3$.
          $\qB$ computes 
          \begin{align}
          &(\vk_{x}^0,\qct_{x}^0)\la\qEnc(\sk^0,x), (\vk_{x}^1,\qct_{x}^1)\la\qEnc(\sk'^1,x\oplus x_0),\\
          &\hspace{2cm}(\vk_y^1,\qct_y^1)\la\qEnc(\sk^1,y\oplus y_0),\\
          &\hspace{2cm}(\vk_z^1,\qct_z^1)\la \qEnc(\sk'^1,z\oplus z_0).
          \end{align}
    \item $\qB$ sets $m_0\seteq z$ and $m_1\seteq z\oplus z_0\oplus z_1$. 
          $\qB$ sends $(m_0,m_1)$ to the challenger.
    \item The challenger computes $(\vk_z^0,\qct_z^0)\la \qEnc(\sk'^0,m_{b})$, and sends $\qct_z^0$ to $\qB$.
    \item $\qB$ sends an encryption query $y$ to the challenger, and receives $(\vk_y^0,\qct_y^0)$.
    \item $\qB$ sends $\{\qct_{x}^\sigma,\qct_y^\sigma,\qct_z^\sigma\}_{\sigma\in\bit}$ to $\qA_1$.
    \item $\qA_1$ chooses $\beta\in\bit$, $\sk\in\Ks$ and $m\in\Ms$.
    $\qA_1$ sends $(\beta,\sk,m)$ to $\qB$.
    \begin{itemize} 
    \item If $\beta=0$, $\qB$ generates $m^*\la\Ms$, sends $m^*$ to the challenger, receives $(\vk_m^0,\qct_m^0)$ from the challenger, computes $(\vk^{1}_m,\qct^{1}_m)\la\qEnc(\sk,m\oplus m^*)$, and sends $\{\vk^\sigma_m,\qct^\sigma_m\}_{\sigma\in\bit}$ to $\qA_1$.
    \item If $\beta=1$, $\qB$ generates $m^*\la\Ms$, computes $(\vk^{1}_{m},\qct^{1}_{m})\la\qEnc(\sk'^{1},m\oplus m^*)$ and $(\vk^{0}_{m},\qct^{0}_{m})\la\qEnc(\sk,m^*)$ and sends $\{\vk^\sigma_m,\qct^\sigma_m\}_{\sigma\in\bit}$ to $\qA_1$.
    \end{itemize}
    $\qB$ repeats this process when $(\beta,\sk,m)$ is sent from $\qA_1$.
    \item $\qA_1$ sends $\{\cert^{\sigma}_x,\cert^{\sigma}_y,\cert^{\sigma}_z\}_{\sigma\in\bit}$ to $\qB$, and sends the internal state to $\qA_2$.
    \item $\qB$ sends $\cert_z^0$ to the challenger, and receives $\sk'^0$ or $\bot$ from the challenger.
    If $\qB$ receives $\bot$, it outputs $\bot$ and aborts.
    \item $\qB$ sends $\{\sk'^{\sigma}\}_{\sigma\in\bit}$ to $\qA_2$.
    \item $\qA_2$ outputs $b'$.
    \item $\qB$ computes $\Vrfy(\vk^\sigma_x,\cert_x^{\sigma})$ and $\Vrfy(\vk^\sigma_y,\cert_y^{\sigma})$ for each $\sigma\in\bit$, and $\Vrfy(\vk_z^{1},\cert_z^1)$.
    If all results are $\top$, $\qB$ outputs $b'$.
    Otherwise, $\qB$ outputs $\bot$.
\end{enumerate}
It is clear that $\Pr[1\la\qB \mid b=0]=\Pr[\expc{\Sigma,\qA}{parallel}{cert}{ever}(\secp, 0)=1]$.
Since $z$ is uniformly distributed, $(z,z\oplus z_1)$ and $(z\oplus z_0\oplus z_1,z\oplus z_0)$ are identically distributed. 
Therefore, it holds that $\Pr[1\la\qB \mid b=1]=\Pr[\hyb_{1}=1]$.
By assumption, $\abs{\Pr[\expc{\Sigma,\qA}{parallel}{cert}{ever}(\secp, 0)=1]-\Pr[\hyb_{1}=1]}$ is non-negligible, and therefore
\[
\abs{\Pr[1\la\qB \mid b=0]-\Pr[1\la\qB \mid b=1]}
\] is non-negligible, which contradicts the certified everlasting IND-CPA security of $\Sigma_{\mathsf{cesk}}$.
\end{proof}

\begin{proof}[Proof of \cref{prop:Hyb_1_Hyb_2_parallel}]
The proof is very similar to that of \cref{prop:Exp_Hyb_1_parallel}.
Therefore we skip the proof.
\end{proof}

\begin{proof}[Proof of \cref{prop:Hyb_2_Hyb_3_parallel}]
The proof is very similar to that of \cref{prop:Exp_Hyb_1_parallel}.
Therefore, we skip the proof.
\end{proof}

\else
\if0 
	\newpage
	 	\appendix
 	{
	\noindent
 	\begin{center}
	{\Large SUPPLEMENTAL MATERIALS}
	\end{center}
 	}
	\setcounter{tocdepth}{2}
	 	\ifnum\noaux=1
 	\else
{\color{red}{We attached the full version of this paper as a separated file (auxiliary supplemental material) for readability. It is available from the program committee members.}}
\fi

	\setcounter{tocdepth}{1}
	\tableofcontents

	\fi
 \fi
	\else
\fi

\end{document}